\PassOptionsToPackage{names,dvipsnames}{xcolor} %

\documentclass[acmsmall]{acmart}
\usepackage{booktabs}   %
\usepackage{subcaption} %
\usepackage[english]{babel}
\usepackage[inference]{semantic} 
\usepackage{amsmath}\allowdisplaybreaks
\usepackage{mathpartir}
\usepackage{stmaryrd} 
\usepackage{xspace}
\usepackage{latexsym}
\usepackage{ifthen}
\usepackage{mathtools}
\usepackage[names,dvipsnames]{xcolor}
\usepackage{color}
\usepackage{listings}
\usepackage{verbatim}
\usepackage[colorinlistoftodos]{todonotes}
\usepackage{tikz}
\usetikzlibrary{positioning,shadows,arrows,calc,backgrounds,fit,shapes,snakes,shapes.multipart,decorations.pathreplacing,shapes.misc,patterns}
\usepackage{tikzscale}
\usepackage[T1]{fontenc}
\usepackage[scaled=.83]{beramono}
\usepackage{epigraph}
\usepackage{booktabs}
\usepackage{dashrule}
\usepackage{float}
\usepackage{etoolbox}
\usepackage{wrapfig}
\usepackage{nameref}
\usepackage{scalerel}
\usepackage{hyperref}
\usepackage[capitalize]{cleveref}
\usepackage{enumitem}
\usepackage{bm}
\usepackage{mdframed}
\usepackage{centernot}

\hypersetup{ pdfpagemode=UseOutlines, colorlinks=true, linkcolor=black, citecolor=blue } %

\AtBeginEnvironment{tikzpicture}{\catcode`$=3 } %

\newcommand{\mi}[1]{\ensuremath{\mathit{#1}}}

\newcommand{\mtt}[1]{\ensuremath{\mathtt{#1}}}
\newcommand{\mf}[1]{\ensuremath{\mathbf{#1}}}

\newcommand{\mc}[1]{\ensuremath{\mathcal{#1}}}
\newcommand{\ms}[1]{\ensuremath{\mathsf{#1}}}
\newcommand{\mb}[1]{\ensuremath{\mathbb{#1}}}

\newcommand{\isdef}[0]{\ensuremath{\mathrel{\overset{\makebox[0pt]{\mbox{\normalfont\tiny\sffamily def}}}{=}}}}

\newcommand{\relmiddle}[1]{\mathrel{}\middle#1\mathrel{}}

\newcommand\bnfdef{\ensuremath{\mathrel{::=}}}

\newcommand{\OB}[1]{\ensuremath{\overline{#1}}}

\newcommand{\myset}[2]{\ensuremath{\left\{#1 ~\relmiddle|~ #2\right\}}}

\newcommand{\divrt}[0]{\ensuremath{\trg{\Uparrow}}\xspace}
\newcommand{\divrc}[0]{\ensuremath{\com{\Uparrow}}\xspace}

\newcommand{\term}[0]{\ensuremath{^{\Downarrow}}\xspace}
\newcommand{\termsl}[0]{\ensuremath{^{\src{\Downarrow}}}\xspace}
\newcommand{\termt}[0]{\ensuremath{^{\trg{\Downarrow}}}\xspace}
\newcommand{\termc}[0]{\ensuremath{^{\com{\Downarrow}}}\xspace}

\newcommand*{\QEDA}{\hfill\ensuremath{\blacksquare}}%

\AtEndEnvironment{problem}{\null\hfill\QEDA}
\AtEndEnvironment{example}{}%

\Crefname{lstlisting}{Listing}{Listings}
\Crefname{problem}{Problem}{Problems}

\Crefname{equation}{Rule}{Rules}

\newenvironment{proofsketch}{\trivlist\item[]\emph{Proof Sketch}.\xspace}{\unskip\nobreak\hskip 1em plus 1fil\nobreak$\Box$\parfillskip=0pt\endtrivlist}
\newcommand{\compskel}[3]{\ensuremath{\bl{\left\llbracket \src{#1} \right\rrbracket^{#2}_{#3}}}}

\newcommand{\compap}[1]{\compskel{#1}{\LA}{\LC}}
\newcommand{\compup}[1]{\compskel{#1}{\LU}{\LP}}

\newcommand{\compgen}[1]{\compskel{#1}{\S}{\T}}

\newcommand{\compgenm}[1]{\compskel{\fbox{\fbox{#1}}}{\S}{\T}}
\newcommand{\compai}[1]{\compskel{#1}{\LA}{\LI}}

\newcommand{\compskelbag}[3]{\ensuremath{\bl{\lbag \src{#1} \rbag^{#2}_{#3}}}}
\newcommand{\compfac}[1]{\compskelbag{{#1}}{\LU}{\LP}}

\newcommand{\funname}[1]{\mtt{#1}}
\newcommand{\fun}[2]{\ensuremath{{\bl{\funname{#1}(}#2{\bl{)}}}}\xspace}
\newcommand{\dom}[1]{\fun{dom}{#1}}

\newcommand{\backtrskel}[3]{\ensuremath{\bl{\left\langle\!\left\langle {#1} \right\rangle\!\right\rangle^{#2}_{#3}}}}
\newcommand{\backtr}[1]{\backtrskel{#1}{}{}}
\newcommand{\backtrup}[1]{\backtrskel{#1}{\LP}{\LU}}
\newcommand{\backtrfac}[1]{\backtrskel{\trg{\fbox{\trg{#1}}}}{\LP}{\LU}}

\renewcommand{\S}[0]{\src{{S}}\xspace}
\newcommand{\T}[0]{\trg{{T}}\xspace}

\newcommand{\langlett}[0]{L} %
\newcommand{\LA}[0]{\src{\langlett^{\tau}}\xspace}
\newcommand{\LC}[0]{\trg{\langlett^{\pi}}\xspace}
\newcommand{\LU}[0]{\src{\langlett^{U}}\xspace}
\newcommand{\LP}[0]{\trg{\langlett^{P}}\xspace}

\newcommand{\LI}[0]{\oth{\langlett^{I}}\xspace}

\newcommand{\Cs}[1]{\src{C#1}\xspace} %
\renewcommand{\P}[1]{\com{C#1}\xspace}%

\newcommand{\contextletter}[0]{C}
\newcommand{\ctx}[1]{\ensuremath{\mb{#1}}}
\newcommand{\ctxs}[1]{\src{\ctx{\contextletter}#1}\xspace}
\newcommand{\ctxt}[1]{\trg{\ctx{\contextletter}#1}\xspace}%
\newcommand{\ctxc}[1]{\com{\ctx{\contextletter}#1}\xspace}%
\newcommand{\hole}[1]{\ensuremath{\left[#1\right]}}

\newcommand{\Bools}[0]{\src{{Bool}}\xspace}
\newcommand{\Nats}[0]{\src{{Nat}}\xspace}

\newcommand{\trues}[0]{\src{{true}}\xspace}
\newcommand{\falses}[0]{\src{{false}}\xspace}
\newcommand{\units}[0]{\src{{unit}}\xspace}
\newcommand{\truet}[0]{\trg{{true}}\xspace}
\newcommand{\falset}[0]{\trg{false}\xspace}

\newcommand{\Refs}[1]{\src{Ref~#1}\xspace}
\newcommand{\UNS}[0]{\src{UN}\xspace}
\newcommand{\vdashun}[0]{\vdash_\UNS}
\newcommand{\vdashatts}[0]{\vdash_{\src{att}}}
\newcommand{\vdashattt}[0]{\vdash_{\trg{att}}}
\newcommand{\vdashatto}[0]{\vdash_{\oth{att}}}

\newcommand{\lskip}{\ensuremath{{skip}}}
\newcommand{\skips}{\ensuremath{\src{skip}}}
\newcommand{\skipt}{\ensuremath{\trg{skip}}}
\newcommand{\skipo}{\ensuremath{\oth{skip}}}

\newcommand{\srce}[0]{\src{\emptyset}\xspace}
\newcommand{\trge}[0]{\trg{\emptyset}\xspace}
\newcommand{\othe}[0]{\oth{\emptyset}\xspace}
\newcommand{\come}[0]{\com{\emptyset}\xspace}

\newcommand{\SInit}[1]{\ensuremath{{\Omega_0}\left({#1}\right)}\xspace}
\newcommand{\SInits}[1]{\ensuremath{\src{\Omega_0}\left(\src{#1}\right)}\xspace}
\newcommand{\SInitt}[1]{\ensuremath{\trg{\Omega_0}\left(\trg{#1}\right)}\xspace}
\newcommand{\SInito}[1]{\ensuremath{\oth{\Omega_0}\left(\oth{#1}\right)}\xspace}

\newcommand{\TInitt}[1]{\ensuremath{\trg{\Theta_0}\left(\trg{#1}\right)}\xspace}

\newcommand{\neutcol}[0]{black}
\newcommand{\stlccol}[0]{RoyalBlue}
\newcommand{\ulccol}[0]{RedOrange}
\newcommand{\othercol}[0]{CarnationPink}
\newcommand{\commoncol}[0]{black}    %

\newcommand{\col}[2]{\ensuremath{{\color{#1}{#2}}}}

\newcommand{\src}[1]{\ms{\col{\stlccol}{#1}}}
\newcommand{\trg}[1]{\mf{\col{\ulccol }{#1}}}
\newcommand{\trgb}[1]{\ensuremath{\bm{\col{\ulccol }{#1}}}}
\newcommand{\oth}[1]{\mi{\col{\othercol }{#1}}}
\newcommand{\bl}[1]{\col{\neutcol }{#1}}
\newcommand{\com}[1]{\mi{\col{\commoncol }{#1}}}

\newcommand{\fails}[0]{\src{fail}\xspace}

\newcommand{\lowloc}[1]{\fun{low\text{-}loc}{#1}}
\newcommand{\highloc}[1]{\fun{high\text{-}loc}{#1}}
\newcommand{\highcap}[1]{\fun{high\text{-}cap}{#1}}
\newcommand{\sech}[1]{\fun{secure}{#1}}
\newcommand{\monh}[1]{\fun{mon\text{-}care}{#1}}

\newcommand{\strip}[1]{\relev{#1}}
\newcommand{\relev}[1]{\fun{relevant}{#1}}
\newcommand{\agree}[0]{\mathop{\raisebox{1mm}{$\frown$}}}

\newcounter{typerule}
\crefname{typerule}{rule}{rules}

\newcommand{\typeruleInt}[5]{%
	\def\thetyperule{#1}%
	\refstepcounter{typerule}%
	\label{tr:#4}%
  \ensuremath{\begin{array}{c}#5 \inference{#2}{#3}\end{array}} 
}
\newcommand{\typerule}[4]{%
  \typeruleInt{#1}{#2}{#3}{#4}{\textsf{\scriptsize ({#1})} \\      }
}
\newcommand{\typerulenolabel}[4]{%
	\def\thetyperule{#1}%
	\refstepcounter{typerule}%
  \ensuremath{\begin{array}{c} \inference{#2}{#3}\end{array}} 
}

\pgfdeclarelayer{background}
\pgfdeclarelayer{veryback}
\pgfdeclarelayer{veryback2}
\pgfdeclarelayer{veryback3}
\pgfdeclarelayer{back2}
\pgfdeclarelayer{foreground}
\pgfsetlayers{veryback3,veryback2,veryback,background,back2,main,foreground}

\newcommand{\myfig}[3]{\begin{figure} [!ht]
#1
\caption{\label{fig:#2}#3}
\end{figure}}

\newcommand{\etal}[0]{\textit{et al.}\xspace} 

\newcommand{\BREAK}[0]{
\botrule
\begin{center}$\spadesuit$\end{center}
\botrule}

\newcommand{\mytoprule}[1]{\vspace{1mm}\noindent\hrulefill\ \raisebox{-0.5ex}{\fbox{\ensuremath{#1}}} \hrulefill\hrulefill\hrulefill\vspace{0.5mm}}

\def\botrule{\vspace{0mm}\hrule\vspace{2mm}}

\newcommand{\hl}[1]{\colorbox{yellow}{#1}}
\newcounter{line}

\newcommand{\asm}[1]{\mtt{#1}}

\newcommand{\xto}[1]{\ensuremath{~\mathrel{\xrightarrow{~#1~}}~}}
\newcommand{\Xto}[1]{\ensuremath{~\mathrel{\xRightarrow{~#1~}}~}}

\newcommand{\xtos}[1]{\src{\xto{#1}}}
\newcommand{\Xtos}[1]{\src{\Xto{#1}}}

\newcommand{\xtot}[1]{\trg{\xto{#1}}}
\newcommand{\Xtot}[1]{\trg{\Xto{#1}}}
\newcommand{\nxtot}[1]{\trg{\centernot{\xto{#1}}}}

\newcommand{\Xtoc}[1]{\com{\Xto{#1}}}

\newcommand{\xtoo}[1]{\oth{\xto{#1}}}
\newcommand{\Xtoo}[1]{\oth{\Xto{#1}}}

\newcommand{\xtol}[1]{\ensuremath{\xrightarrow{~#1~}\low}}
\newcommand{\Xtol}[1]{\ensuremath{\xRightarrow{~#1~}\Low}}

\newcommand{\low}[0]{\ensuremath{\!\!\!\!\to} }
\newcommand{\Low}[0]{\ensuremath{\!\!\!\!\Rightarrow} }

\newcommand{\unk}[0]{\ensuremath{unk}}

\newcommand{\stu}[0]{\ensuremath{^{\times}\xspace}}

\newcommand{\stut}[0]{\ensuremath{^{\trg{\times}}\xspace}}

\definecolor{mygreen}{rgb}{0,0.6,0}
\definecolor{mygray}{rgb}{0.5,0.5,0.5}
\definecolor{mymauve}{rgb}{0.58,0,0.82}

\newcommand{\lst}[1]{\ensuremath{\text{\texttt{#1}}}}

\lstdefinelanguage{Java} %
{morekeywords={abstract, all, and, as, assert, but, disj, else, exactly, extends, fact, for, fun, iden, if, iff, implies, in, Int, int, let, lone, module, no, none, not, one, open, or, part, pred, run, seq, set, sig, some, sum, then, univ, package, class, public, private, null, return, new, interface, extern, object, implements, System, static, super, try , catch, throw, throws, Unit, var, val, of, principal, trust},
sensitive=true,
keywordstyle=\bfseries\color{green!40!black},
commentstyle=\itshape\color{purple!40!black},
morecomment=[l][\small\itshape\color{purple!40!black}]{//},
identifierstyle=\color{blue},
stringstyle=\color{orange},
basicstyle=\small,
basicstyle={\small\ttfamily},
numbers=left,
numberstyle=\tiny\color{mygray},
tabsize=2,
numbersep=3pt,
breaklines=true,
lineskip=-2pt,
stepnumber=1,
captionpos=b,
breaklines=true,
breakatwhitespace=false,
showspaces=false,
showtabs=false,
float=!h,
columns=fullflexible,escapeinside={(*@}{@*)},
moredelim=**[is][\color{red!60}]{@}{@},
literate={->}{{$\to$}}1 {^}{{$\mspace{-3mu}\widehat{\quad}\mspace{-3mu}$}}1
{<}{$<$ }2 {>}{$>$ }2 {>=}{$\geq$ }2 {=<}{$\leq$ }2
{<:}{{$<\mspace{-3mu}:$}}2 {:>}{{$:\mspace{-3mu}>$}}2
{=>}{{$\Rightarrow$ }}2 {+}{$+$ }2 {++}{{$+\mspace{-8mu}+$ }}2
{<=>}{{$\Leftrightarrow$ }}2 {+}{$+$ }2 {++}{{$+\mspace{-8mu}+$ }}2
{\~}{{$\mspace{-3mu}\widetilde{\quad}\mspace{-3mu}$}}1
{!=}{$\neq$ }2 {*}{${}^{\ast}$}1 %
{\#}{$\#$}1
}

\lstdefinelanguage{Asm}
{morekeywords={abstract, all, and, as, assert, but, check, disj, else, exactly, extends, fact, for, fun, iden, if, iff, implies, in, Int, int, let, lone, module, no, none, not, one, open, or, part, pred, run, seq, set, sig, some, sum, then, univ, package, class, public, private, null, return, new, interface, extern, object, implements, System, static, super, try , catch, throw, throws, Unit, var, val, principal, trust, label, load, add, addi, into, test},
sensitive=true,
identifierstyle=\color{black},
keywordstyle=\bfseries,
commentstyle=\itshape\color{purple!40!black},
morecomment=[l][\small\itshape\color{purple!40!black}]{//},
stringstyle=\color{orange},
basicstyle=\small,
basicstyle={\small},
numbers=left,
numberstyle=\tiny\color{mygray},
tabsize=2,
numbersep=3pt,
breaklines=true,
lineskip=-2pt,
stepnumber=1,
captionpos=b,
breaklines=true,
breakatwhitespace=false,
showspaces=false,
showtabs=false,
float=!h,
columns=fullflexible,escapeinside={(*@}{@*)},
moredelim=**[is][\color{red!60}]{@}{@},
literate={->}{{$\to$}}1 {^}{{$\mspace{-3mu}\widehat{\quad}\mspace{-3mu}$}}1
{<}{$<$ }2 {>}{$>$ }2 {>=}{$\geq$ }2 {=<}{$\leq$ }2
{<:}{{$<\mspace{-3mu}:$}}2 {:>}{{$:\mspace{-3mu}>$}}2
{=>}{{$\Rightarrow$ }}2 {+}{$+$ }2 {++}{{$+\mspace{-8mu}+$ }}2
{<=>}{{$\Leftrightarrow$ }}2 {+}{$+$ }2 {++}{{$+\mspace{-8mu}+$ }}2
{\~}{{$\mspace{-3mu}\widetilde{\quad}\mspace{-3mu}$}}1
{!=}{$\neq$ }2 {*}{${}^{\ast}$}1 %
{\#}{$\#$}1
}
\lstset{language=Java,numbersep=5pt,frame=single}%

\DeclareMathOperator\ceq{\ensuremath{\mathrel{\simeq_{\mi{ctx}}}}}
\DeclareMathOperator\nceq{\mathrel{\nsimeq_{\mi{ctx}}}}

\DeclareMathOperator\ceqs{\src{\ceq}}
\DeclareMathOperator\ceqt{\trg{\ceq}}

\DeclareMathOperator\nceqs{\src{\nceq}}
\DeclareMathOperator\nceqt{\trg{\nceq}}

\def\teqaux#1{\vcenter{\hbox{\ooalign{\hfil
       \raise6pt \hbox{\scriptsize{T}}\hfil\cr\hfil
       $=$}}}}

\def\ceqwaux#1{\vcenter{\hbox{\ooalign{\hfil
       \raise6pt \hbox{\scriptsize{w-b}}\hfil\cr\hfil
       $\ceq$}}}}

\def\praux#1{\vcenter{\hbox{\ooalign{\hfil
       \raise4pt \hbox{$\subset$}\hfil\cr\hfil
       $\sim$}}}}
\def\pr{\mathrel{\mathpalette\praux{}}}

\DeclareMathOperator\refrel{\pr}

\newcommand{\labelfont}[1]{\ensuremath{\asm{#1}}}
\newcommand{\cl}[2]{\ensuremath{\labelfont{call}~ #1~ #2{?}}}
\newcommand{\cb}[2]{\ensuremath{\labelfont{call}~ #1~ #2{!}}}
\newcommand{\rt}[1]{\ensuremath{\labelfont{ret}~ #1{!}}}
\newcommand{\rb}[1]{\ensuremath{\labelfont{ret}~ #1{?}}}

\newcommand{\clh}[3]{\ensuremath{\labelfont{call}~ #1~ #2~ #3{?}}}
\newcommand{\cbh}[3]{\ensuremath{\labelfont{call}~ #1~ #2~ #3{!}}}
\newcommand{\rth}[2]{\ensuremath{\labelfont{ret}~ #2{!}}}%
\newcommand{\rbh}[2]{\ensuremath{\labelfont{ret}~ #2{?}}}%
\newcommand{\mnh}[1]{\ensuremath{\labelfont{mon}~#1}}

\newcommand{\wrl}[1]{\ensuremath{\labelfont{write}(#1)}}

\newcommand{\traces}[3]{\ensuremath{\ms{TR}^{#2}_{#3}\left(#1\right)}}

\newcommand{\trt}[1]{\trg{\traces{#1}{}{}}}

\DeclareMathOperator\relate{\bl{\approx}}
\DeclareMathOperator\relatebeta{\bl{\approx_{\beta}}}
\DeclareMathOperator\relatephi{\bl{\approx_{\varphi}}}

\DeclareMathOperator\relatet{\bl{\triplesim}}
\DeclareMathOperator\relatetbeta{\bl{\triplesim_{\beta}}}
\DeclareMathOperator\relatetphi{\bl{\triplesim_{\varphi}}}

\newcommand*{\triplesim}{\mathrel{\vcenter{\offinterlineskip\hbox{$\sim$}\vskip-.35ex\hbox{$\sim$}\vskip-.35ex\hbox{$\sim$}}}}

\DeclareMathOperator\altrelatet{\bl{\sim}}
\DeclareMathOperator\altrelatetbeta{\bl{\sim_{\beta}}}
\DeclareMathOperator\notaltrelatetbeta{\bl{\not\sim_{\beta}}}

\newcommand{\alphaseq}[0]{\OB{\alpha}}
\newcommand{\targetaction}[0]{\omega}
\newcommand{\alss}[0]{\src{\alphaseq}\xspace}
\newcommand{\alst}[0]{\trgb{\OB{\targetaction}}\xspace}
\newcommand{\alsc}[0]{\com{\alphaseq}\xspace}

\newcommand{\at}[0]{\trgb{\targetaction}\xspace}

\newcommand{\divt}[0]{\trg{\uparrow}\xspace}
\newcommand{\tert}[0]{\trg{\downarrow}\xspace}

\newcommand{\set}[1]{\setb{#1}}%

\newcommand{\setb}[1]{\ensuremath{\left\{#1\right\}}}

\newcommand{\card}[1]{\ensuremath{|\!|{#1}|\!|}}

\newcommand{\proc}[2]{\ensuremath{(#1)_{#2} }}
\Crefname{corollary}{Corollary}{Corollaries}
\Crefname{informal}{Definition}{Definition}
\Crefname{assumption}{Assumption}{Assumptions}
\crefname{assumption}{Assumption}{Assumptions}
\Crefname{property}{Property}{Properties}
\crefname{property}{Property}{Properties}

\Crefname{paragraph}{Section}{Sections}

\newcommand{\pair}[1]{\ensuremath{\left\langle#1\right\rangle}}
\newcommand{\projone}[1]{\ensuremath{#1.1}}
\newcommand{\projtwo}[1]{\ensuremath{#1.2}}

\newcommand{\Bool}[0]{\mtt{Bool}\xspace}

\newcommand{\wrong}[0]{\trg{wrong}}
\newcommand{\wrongo}[0]{\oth{wrong}}

\newcommand{\newty}[2]{\ensuremath{\src{new}_{#2}~#1}}

\newcommand{\fork}[1]{\ensuremath{\left(\parallel #1\right)}}
\newcommand{\op}[0]{\ensuremath{\oplus}}
\newcommand{\bop}[0]{\ensuremath{\otimes}}

\newcommand{\ret}[0]{{return;}}%
\newcommand{\letin}[3]{{let}~#1=#2~{in}~#3}
\newcommand{\letnew}[3]{{let}~#1=\new{#2}~{in}~#3}
\newcommand{\letnewt}[3]{\trg{let}~#1=\newt{#2}~\trg{in}~#3}
\newcommand{\letnews}[3]{\src{let}~#1=\news{#2}~\src{in}~#3}
\newcommand{\letnewo}[3]{\oth{let}~#1=\newo{#2}~\oth{in}~#3}
\newcommand{\letins}[3]{\src{let}~#1\src{=}#2~\src{in}~#3}
\newcommand{\letint}[3]{\trg{let}~#1\trg{=}#2~\trg{in}~#3}
\newcommand{\letino}[3]{\oth{let}~#1\oth{=}#2~\oth{in}~#3}
\newcommand{\letnewty}[4]{\src{let}~#1=\newty{#2}{#3}~\src{in}~#4}
\newcommand{\letiso}[3]{\oth{let}~#1=\newiso{#2}~\oth{in}~#3}
\newcommand{\call}[1]{{call}~#1}
\newcommand{\calls}[1]{\src{call}~#1}

\newcommand{\ifte}[3]{{if}~#1~{then}~#2~{else}~#3}
\newcommand{\iftes}[3]{\src{if}~#1~\src{then}~#2~\src{else}~#3}
\newcommand{\ifzte}[3]{{ifz}~#1~{then}~#2~{else}~#3}
\newcommand{\ifztet}[3]{\trg{ifz}~#1~\trg{then}~#2~\trg{else}~#3}
\newcommand{\ifzteo}[3]{\oth{ifz}~#1~\oth{then}~#2~\oth{else}~#3}
\newcommand{\new}[1]{{new}~#1}
\newcommand{\news}[1]{\src{new}~#1}
\newcommand{\newt}[1]{\trg{new}~#1}
\newcommand{\newo}[1]{\oth{new}~#1}
\newcommand{\newiso}[1]{\oth{newiso}~#1}
\newcommand{\hide}[1]{\trg{\trg{hide}~#1}}
\newcommand{\lethide}[3]{\trg{let}~#1=\hide{#2}~\trg{in}~#3}
\newcommand{\with}[1]{\trg{~{with}~#1}}
\newcommand{\myendorse}[4]{\src{endorse}~#1~\src{=}~#2~\src{as}~#3~\src{in}~#4}
\newcommand{\destruct}[5]{\trg{destruct}~#1~\trg{=}~#2~\trg{as}~#3~\trg{in}~#4~\trg{or}~#5}
\newcommand{\destructo}[5]{\oth{destruct}~#1~\oth{=}~#2~\oth{as}~#3~\oth{in}~#4~\oth{or}~#5}
\newcommand{\letatom}[3]{\trg{let}~#1~\trg{=}~\trg{newhide}~#2~\trg{in}~#3}

\newcommand{\formatCompilers}[1]{\mf{\mi{#1}}\xspace}
\newcommand{\facomp}[0]{\formatCompilers{FAC}}
\newcommand{\rscomp}[0]{\formatCompilers{RSC}}
\newcommand{\pfrscomp}[0]{\formatCompilers{PF\text{-}RSC}}

\newcommand{\ccomp}[0]{\formatCompilers{CC}\xspace}

\newcommand{\redto}[0]{\ensuremath{\ \hookrightarrow\low\ }}
\newcommand{\red}[0]{\ensuremath{\ \hookrightarrow\ }}

\newcommand{\monred}[0]{\rightsquigarrow}

\newcommand{\redapp}[1]{\ensuremath{\!\!^{#1}\ }}

\DeclareMathOperator\redtos{\src{\redto}}
\DeclareMathOperator\reds{\src{\red}}

\DeclareMathOperator\redtot{\trg{\redto}}
\DeclareMathOperator\redt{\trg{\red}}

\DeclareMathOperator\redtoo{\oth{\redto}}

\AtEndEnvironment{example}{\null\hfill$\boxdot$}

\makeatletter
\xdef\@thefnmark{\@empty}

\newcommand{\Thmref}[1]{\Cref{#1}~(\nameref{#1})}
\makeatother

\newcommand{\subst}[2]{\ensuremath{\bl{\left[#1\relmiddle/#2\right]}}} %
\newcommand{\subs}[2]{\subst{\src{#1}}{\src{#2}}}
\newcommand{\subt}[2]{\subst{\trg{#1}}{\trg{#2}}}
\newcommand{\subo}[2]{\subst{\oth{#1}}{\oth{#2}}}

\renewcommand{\emptyset}[0]{\varnothing}

\newcounter{hps}
\crefname{hps}{}{}

\newcommand{\proven}[1]{\ensuremath{\checkmark}} \renewcommand{\contextletter}[0]{A}
\renewcommand{\ctx}[1]{\ensuremath{{#1}}}
\renewcommand{\set}[1]{\text{\ensuremath{\widehat{#1}}}}
\renewcommand{\set}[1]{\setb{#1\cdots}}
\colorlet{NAVYBLUE}{NavyBlue}%

\theoremstyle{definition}
\newtheorem{property}{Property}

\renewcommand{\citet}[1]{\cite{#1}}

\bibliographystyle{ACM-Reference-Format}
\setcopyright{acmcopyright}
\acmJournal{TOG}
\acmYear{2019}\acmVolume{37}\acmNumber{4}\acmArticle{111}\acmMonth{8}
\acmDOI{10.1145/1122445.1122456}

\begin{document}

\title{Robustly Safe Compilation, an Efficient Form of Secure Compilation} 

\author{Marco Patrignani}
\orcid{0000-0003-3411-9678}             %
\affiliation{
  \position{Visiting Assistant Professor}
  \department{Computer Science}             %
  \institution{Stanford University}
  \city{Stanford}
  \country{USA}
}
\affiliation{
  \position{Junior Research Group Leader}
  \department{}             %
  \institution{CISPA Helmholz Center for Information Security }           %
  \city{Saarbr\"ucken}
  \country{Germany}
}
\email{mp @ cs.stanford.edu}         %

\author{Deepak Garg}
\affiliation{%
  \institution{Max Planck Institute for Software Systems}
  \city{Saarbr\"ucken}
  \country{Germany}
}
\email{dg @ mpi-sws.org}

\begin{CCSXML}
	<ccs2012>
	<concept>
	<concept_id>10002978.10002986</concept_id>
	<concept_desc>Security and privacy~Formal methods and theory of security</concept_desc>
	<concept_significance>500</concept_significance>
	</concept>
	<concept>
	<concept_id>10011007.10011006.10011041</concept_id>
	<concept_desc>Software and its engineering~Compilers</concept_desc>
	<concept_significance>500</concept_significance>
	</concept>
	<concept>
	<concept_id>10011007.10011006.10011008</concept_id>
	<concept_desc>Software and its engineering~General programming languages</concept_desc>
	<concept_significance>300</concept_significance>
	</concept>
	</ccs2012>
\end{CCSXML}

\ccsdesc[500]{Security and privacy~Formal methods and theory of security}
\ccsdesc[500]{Software and its engineering~Compilers}
\ccsdesc[300]{Software and its engineering~General programming languages}

\keywords{secure compilation, robust safety, robustly-safe compilation, fully abstract compilation, formal languages, programming languages}

\begin{abstract}
Security-preserving compilers generate compiled code that withstands target-level attacks such as alteration of control flow, data leaks or memory corruption. 
Many existing security-preserving compilers are proven to be fully abstract, meaning that they reflect and preserve observational equivalence. 
Fully abstract compilation is strong and useful but, in certain cases, comes at the cost of requiring expensive runtime constructs in compiled code.
These constructs may have no relevance for security, but are needed to accommodate differences between the source and target languages that fully abstract compilation necessarily needs.

As an alternative to fully abstract compilation, this paper explores a different criterion for secure compilation called robustly safe compilation or \rscomp. 
Briefly, this criterion means that the compiled code preserves relevant safety properties of the source program against all adversarial contexts interacting with the compiled program. 
We show that \rscomp can be proved more easily than fully abstract compilation and also often results in more efficient code. 
We also present two different proof techniques for establishing that a compiler attains \rscomp and, to illustrate them, develop three illustrative robustly-safe compilers that rely on different target-level protection mechanisms.
We then proceed to turn one of our compilers into a fully abstract one and through this example argue that proving \rscomp can be simpler than proving fully abstraction.

\begin{center}
\small\it 
  To better explain and clarify notions, this paper uses syntax highlighting in a way that colourblind and black-\&-white readers can benefit from~\citet{patrignani2020use}.
  For a better experience, please print or view this paper in colour.%
    \footnote{%
		Specifically, in this paper we use a \src{blue}, \src{sans\text{-}serif} font for elements of the \src{source} language, an \trg{orange}, \trg{bold} font for elements of the \trg{first} \trg{two} \trg{target} languages and a \oth{pink} \oth{italics} font for elements of the \oth{third} \oth{target} language.
		Elements common to all languages are typeset in a \com{\commoncol}, \com{italic} font (to avoid repeating similar definitions twice), thus, \src{C} is a source-level component, \trg{C} and \oth{C} are target-level components and \com{C} is generic notation for either a source-level or a target-level component.
    } 
\end{center}
\end{abstract}

\maketitle

\section{Introduction}\label{sec:intro}

Low-level adversaries, such as those written in C or assembly can
attack co-linked code written in a high-level language in ways that
may not be feasible in the high-level language itself. For example,
such an adversary may manipulate or hijack control flow, cause buffer
overflows, or directly access private memory, all in contravention to
the abstractions of the high-level language.
Specific countermeasures such as Control Flow Integrity~\cite{abadicfi} or Code Pointer Integrity~\cite{cpi} have been devised to address some of these attacks \emph{individually}.
An alternative approach is to devise a \emph{security-preserving compiler}, which seeks to defend against entire \emph{classes} of such attacks.
Security-preserving compilers often achieve security by relying on different protection mechanisms, e.g., cryptographic primitives~\cite{Abadi:2000:APC:325694.325734,Abadi:2002:SIC:570966.570969,Bugliesi:2007:SIT:1190215.1190253,Corin:2008:SCS:1454415.1454419}, types~\cite{ahmedCPS,Ahmed:2008:TCC:1411203.1411227}, address space layout randomisation~\cite{Abadi:2012,Jagadeesan:2011:LMV:2056311.2056556}, protected module architectures~\cite{KULeuven-358154,scoo-j,mfac,scoo} (also know as enclaves~\cite{intel}), tagged architectures~\cite{catalin,catalinRSC}, etc.
Once designed, the question researchers face is how to formalise that such a compiler is indeed secure, and how to prove this. Basically, we want a criterion that specifies secure compilation.
A widely-used criterion for compiler security is fully abstract compilation (\facomp)~\cite{abadiFa,gcFA,faEHM}, which has been shown to preserve many interesting security properties like confidentiality, integrity, invariant definitions, well-bracketed control flow and hiding of local state~\cite{Jagadeesan:2011:LMV:2056311.2056556,scoo-j,KULeuven-358154,scsurvey}.

Informally, a compiler is fully abstract if it preserves and reflects observational equivalence of source-level components (i.e., partial programs) in their compiled counterparts.
Most existing work instantiates observational equivalence with contextual equivalence: co-divergence of two components in any larger context they interact with.
Fully abstract compilation is a very strong property, which preserves \emph{all} source-level abstractions.

Unfortunately, preserving \emph{all} source-level abstractions also has downsides.
In fact, while \facomp preserves many relevant security properties, it also preserves a plethora of other non-security ones, and the latter may force inefficient checks in the compiled code.
For example, when the target is assembly, two observationally equivalent components must compile to code of the same size~\cite{KULeuven-358154,scoo-j}, else full abstraction is trivially violated.
This requirement is security-irrelevant in most cases. %
Additionally, \facomp is not well-suited for source languages with undefined behaviour (e.g., \lst{C} and \lst{LLVM})~\cite{catalin} and, if used na\"ively, it can fail to preserve even simple safety properties~\cite{schp} (though, fortunately, no \emph{existing} work falls prey to this na\"ivety).

Motivated by this, recent work started investigating alternative secure compilation criteria that overcome these limitations.
These security-focussed criteria take the form of preservation of hyperproperties or classes of hyperproperties, such as hypersafety properties or safety properties~\cite{rhpc-arx,rhc}.
This paper investigates one of these criteria, namely, \emph{Robustly Safe Compilation} (\rscomp) which has clear security guarantees and, as we show, can often be attained more efficiently than FAC.

Informally, a compiler attains \rscomp if it is correct and it preserves \emph{robust safety} of source components in the target components it produces. 
Robust safety is an important security notion that has been widely adopted to formalise security, e.g., of communication protocols~\cite{refty-sec-impl,catalin-rs,autysec}.
Before explaining \rscomp, we explain robust safety as a language property.

\paragraph{Robust Safety as a Language Property}
Informally, a program property is a safety property if it encodes that ``bad'' sequences of events do not happen when the program executes~\cite{enfoschneider,AlpernS85}.
A program is \emph{robustly safe} if it has relevant (specified) safety properties \emph{despite} active attacks from adversaries~\cite{dg-rs,davidcaps,catalin-rs}.
As the name suggests, robust safety relies on the notions of safety and robustness which we now explain.

\subparagraph{\mf{Safety}.}
As mentioned, safety asserts that ``no bad sequence of events happens'', so we can specify a safety property by the set of \emph{finite observations} which characterise all bad sequences of events.
A whole program has a safety property if its behaviours exclude these bad observations.
Many security properties can be encoded as safety, including integrity, weak secrecy and functional correctness.
\begin{example}[Integrity]\label{ex:integrity}
  Integrity ensures that an attacker does not tamper with invariants on state.
  For example, consider the function \lst{charge\_account( amount )} in the snippet below, which deducts \lst{amount} from an account as part of an electronic card payment. 
  A card PIN is required if \lst{amount} is larger than 10 euros.
  So the function checks whether \lst{amount} $>10$, requests the PIN if this is the case, and then changes the account balance.
  We expect this function to have a safety (integrity) property on the account balance: a reduction of more than 10 euros to the account balance must be preceded by a call to \lst{request\_pin()}. 
  Here, the relevant observation is a trace (sequence) of account balances and calls to \lst{request\_pin()}.
  Bad observations for this safety property are those where an account balance is at least 10 euros less than the previous one, without a call to \lst{request\_pin()} in between.
  Note that this function seems to have this safety property, but it may not have the safety property \emph{robustly}: a target-level adversary may transfer control directly to  the ``else'' branch of the check \lst{amount} $>10$ after setting \lst{amount} to more than $10$, to violate the safety property.

\begin{minipage}{.9\textwidth}
\begin{lstlisting}
function charge_account( amount : Int ){
  if amount > 10 { request_pin(); }
  charge_account(amount);
  return;
}
\end{lstlisting}
\end{minipage}
\end{example}

\begin{example}[Weak Secrecy]\label{ex:weak-sec}
  Weak secrecy asserts that a program secret never flows \emph{explicitly} to the attacker.
  For example, consider code that manages \lst{network\_h}, a handler (socket descriptor) for a sensitive network interface. %
  This code does not expose \lst{network\_h} directly to external code but it provides an API to use it. This API makes some security checks internally.
  If the handler is directly accessible to outer code, then it can be misused in insecure ways (since the security checks may not be made).
  If the code has weak secrecy with respect to \lst{network\_h} then we know that the handler is never passed to an attacker.
  In this case we can define bad observations as those where \lst{network\_h} is passed to external code (e.g., as a parameter, as a return value on or on the heap).
\end{example}

\begin{example}[Partial Correctness]\label{ex:correct}
  Program correctness can also be formalised as a safety property.
  Consider a program that computes the \lst{n}th Fibonacci number. 
  The program reads \lst{n} from an input source and writes its output to an output source. 
  Correctness of this program is a safety property. 
  Observations here are pairs of an input (read by the program) and the corresponding output (produced by the program) so, for example, outputting \lst{13} is only allowed if \lst{7} were passed as input.
  A bad observation is one where the input is \lst{n} (for some \lst{n}) but the output is different from the \lst{n}th Fibonacci number, e.g., input \lst{4} and output \lst{5} as well as input \lst{3} and output \lst{6} are bad observations.
\end{example}

These examples not only illustrate the expressiveness of safety
properties, but also show that safety properties as we capture here are quite \emph{coarse-grained}, since they are only concerned with (sequences of) relevant events like calls to specific functions, changes to specific heap variables, inputs, and outputs. 
In the model of observable events that we use, we can see that safety properties do not specify or constrain how the program computes between these events, leaving the programmer and the compiler considerable flexibility in optimizations. 
This gives us confidence in the model of events we choose.
However, safety properties are  not a panacea for security, and there are security properties that are not safety. 
For example, noninterference~\cite{TSInfoFlow,zdancewicThesis}, the standard information flow property, is not safety.
Nonetheless, many interesting security properties are safety. In fact, many non-safety properties including noninterference can be conservatively approximated as safety properties~\cite{seinfl}. Hence, safety properties are a meaningful goal to pursue for secure compilation.

\subparagraph{\mf{Robustness.}}
We often want to reason about properties of a component of interest that hold irrespective of any other components the component interacts with.
These other components may be the libraries the component is linked against, or the language runtime. 
Often, these surrounding components are modelled as the \emph{program context} whose hole the component of interest fills.
When the component of interest links to a context, we have a whole program that can run.
A property holds \emph{robustly} for a component if it holds in \emph{any} context that the component of interest can be linked to.

From a security perspective, the context represents the attacker in the threat model we consider.
The implications of this fact are that the attacker's power is limited to what can be expressed by the language semantics. 
Concretely, all of the attackers we consider have no control over the code section of the component and they cannot tamper with the protection mechanisms the compiler uses.

\paragraph{Robust Safety Preservation as a Compiler Property}
A compiler attains robustly safe compilation or {\rscomp} if it maps any source component that has a safety property \emph{robustly} to a compiled component that has the \emph{same} safety property robustly.
Thus, safety has to hold robustly in the target language, which often does not have the powerful abstractions (e.g., typing) that the source language has.
Hence, the compiler must insert enough defensive runtime checks into the compiled code to prevent the more powerful \emph{target} contexts from launching attacks (violations of safety properties) that source contexts could not launch.
This is unlike correct compilation, which either considers only those target contexts that behave like source contexts~\cite{pils,compocompcert,compocompcert2} or considers only whole programs~\cite{Leroy-Compcert-CACM}.

As mentioned, safety properties are usually quite coarse-grained. This means that {\rscomp} still allows the compiler to optimise code internally, as long as the sequence of observable events is not affected.
For example, when compiling the \lst{fibonacci} function of \Cref{ex:correct}, the compiler can do any internal optimisation such as caching intermediate results, as long as the end result is correct.
Crucially, however, cached results must be protected from tampering by a (target-level) attacker, else the output can be incorrect, breaking {\rscomp}.

A {\rscomp}-attaining compiler focuses only on preserving security (as captured by robust safety) instead of contextual equivalence (typically captured by full abstraction).
So, such a compiler can produce code that is more efficient than code compiled with a fully abstract compiler as it does not have to preserve \emph{all} source abstractions (we illustrate this later).

Finally, robust safety scales naturally to thread-based concurrency~\cite{autysec,sec-typ-prot,ot4jc}.
Thus \rscomp also scales naturally to thread-based concurrency (we demonstrate this too).
This is unlike \facomp, where thread-based concurrency can introduce additional undesired observations that also need to be preserved.

\smallskip

{\rscomp} is a very recently proposed criterion for security-preserving compilers.
Recent work~\cite{rhpc-arx,rhc,rc-rel} defines {\rscomp} abstractly in terms of preservation of program behaviours, but the development is limited to the definition only. 
Other recent work~\cite{catalinRSC} defines a form of \rscomp for source languages with undefined behaviour and where attackers are components that become compromised as execution progresses.
Our goal in this paper is to examine how {\rscomp} can be realized and established, and to show that in certain cases it leads to compiled code that is more efficient than what {\facomp} leads to. 
To this end, we consider a specific setting where observations are values in specific (sensitive) heap locations at cross-component calls. 
We define robust safety and \rscomp for this specific setting (\Cref{sec:rsc}). 
Unlike previous work~\cite{rhpc-arx,rhc,catalin-rs} which assumed that the domain of traces (behaviours) is the same in the source and target languages, our \rscomp definition allows for different trace domains in the source and target languages, as long as they can be suitably related. 
This relation is analogous to that found in recent work~\cite{rc-rel} that studied the necessary properties of trace relations in order to preserve security through compilation.
The second contribution of our paper is two proof techniques to establish \rscomp. 
\begin{itemize}
  \item 
	The first technique is an adaption of trace-based backtranslation, an existing technique for proving {\facomp}~\cite{mfac,KULeuven-358154,catalinRSC}.
	To illustrate this technique, we build a compiler from an untyped source language to an untyped target language with support for fine-grained memory protection via so-called capabilities~\cite{cheri,MMach} (\Cref{sec:rsc-instance}). 
	Here, we guarantee that if a source program is robustly safe, then so is its compilation. 
  \item 
	The second proof technique shows that if source programs are \emph{verified} for robust safety, then one can simplify the proof of {\rscomp} so that no backtranslation is needed.
	In this case, we develop a compiler from a \emph{typed} source language where the types already enforce robust safety, to a target language similar to that of the first compiler (\Cref{sec:comp-effi}). 
	In this instance, both languages also support shared-memory concurrency.
	Here, we guarantee that all compiled target programs are robustly safe. 
\end{itemize}
To argue that {\rscomp} is general and is not limited to compilation targets based on capabilities, we also develop a third compiler.
\begin{itemize}
	\item 
		This compiler starts from the same source language as our second compiler but targets an untyped concurrent language with support for \emph{coarse-grained memory isolation}, modelling recent hardware extensions such as Intel's SGX~\cite{intel} (\Cref{sec:rsc-iso}).
\end{itemize}
The final contribution of this paper is a comparison between \rscomp and \facomp (\Cref{sec:facomp}). For this,
\begin{itemize}
	\item 
		We first introduce \facomp and discuss its advantages and limitations.
	\item 
		Then, we present a series of code examples that describe different ways in which a fully abstract compiler introduces inefficiencies in compiled code in order to attain \facomp.
		We then sketch a fourth compiler by turning the first one into a fully abstract one and show how the changes introduced to attain \facomp make compiled code inefficient.
	\item 
		Finally, we argue that this compiler attains \facomp and highlight how the proof is significantly more complex than before.
\end{itemize}
Finally, the paper  discusses related work (\Cref{sec:rw}) and concludes (\Cref{sec:conc}).

This paper supersedes and extends the work of Patrignani and Garg~\citet{PatrignaniG18} by providing full details of the languages and compilers formalisations.
Additionally, it describes how the \rscomp theory scales to different protection mechanisms (\Cref{sec:rsc-iso}) and it presents in much more detail the comparison with \facomp.
For the sake of brevity and clarity, we limit proofs to sketches, the interested reader will find full proofs and additional lemmas in the companion technical report~\cite{rsc-techrep}.

\section{Robustly Safe Compilation}\label{sec:rsc}
This section first discusses robust safety as a language (not a compiler) property (\Cref{sec:rs}) and then presents \rscomp as a compiler property along with an informal discussion of techniques to prove it (\Cref{sec:rsc-theory}).

\subsection{Safety and Robust Safety}\label{sec:rs}
To explain robust safety, we first describe a general
\emph{imperative} programming model that we use.  Programmers write
\emph{components} on which they want to enforce safety properties
robustly. A component is a list of function definitions that can be
linked with other components (the context) in order to obtain a runnable
whole program (functions in ``other'' components are like \lst{extern}
functions in C). Additionally, every component declares a set of
``sensitive'' locations that contain all the data that is
safety-relevant. For instance, in \Cref{ex:integrity} this set may contain the account
balance and  in \Cref{ex:correct} it may contain the
I/O buffers. We explain the relevance of this set after we define safety
properties.

We want safety properties to specify that a component never executes a
``bad'' sequence of events. For this, we first need to fix a notion of
events. We have several choices here, e.g., our events could be inputs
and outputs, all syscalls, all changes to the heap (as in
CompCert~\cite{leroy2}), etc. Here, we make a specific choice motivated by
our interest in robustness: we define events as calls/returns that
cross a component boundary, together with the state of the heap at
that point. Consequently, our safety properties can constrain the
contents of the heap at component boundaries. This choice of component
boundaries as the point of observation is meaningful because, in our
programming model, control transfers to/from an adversary happen only
at component boundaries (more precisely, they happen at cross-component
function call and returns). This allows the compiler complete
flexibility in optimizing code within a component, while not reducing
the ability of safety properties to constrain observations of the
adversary.
In turn, safety properties regarding these kinds of boundary-crossing events are the only one that our criterion can preserve through compilers upholding our criterion.

Concretely, a component behaviour is a \emph{trace}, i.e., a sequence
of \emph{actions} recording component boundary interactions and, in
particular, the heap at these points.  \emph{Actions}, the items on a
trace, have the following grammar (notation-wise, we mainly indicate actions as \com{\alpha}, though to further disambiguate when source and target actions are mentioned, we will also use the \com{\targetaction} notation):
\begin{align*}
  \mi{Actions}~\com{\alpha},\com{\targetaction} \bnfdef&\ \com{\clh{f}{v}{H}} \mid \com{\cbh{f}{v}{H}} \mid \com{\rth{}{H}} \mid \com{\rbh{}{H}}
\end{align*}
These actions respectively capture call and callback to a function $f$ with parameter $v$ when the heap is $H$ as well as return and returnback with a certain heap $H$.
More precisely, a callback is a call from the component to the context, so it generates label \com{\cbh{f}{v}{H}} while a returnback is a return from such a callback, i.e., the context returning to the component, and it generates the label \com{\rbh{}{H}}.
We use ? and ! decorations to indicate whether the control flow of the action goes from the context to the component (?) or from the component to the context (!). 
Well-formed traces have alternations of \com{?} and \com{!} decorated actions, starting with \com{?} since execution starts in the context.
For a sequence of actions \alsc, \strip{\alsc} is the list of heaps \OB{H} mentioned in the actions of \alsc.
In the sequent, we separate list elements with $\cdot$, so $\OB{H}\cdot H$ indicates a non-empty list of heaps with at least one element ($H$).

Next, we need a representation of safety properties.  Generally,
properties are sets of traces, but safety properties specifically can
be specified as automata (or monitors in the sequel)~\cite{enfoschneider}.  We choose
this representation since monitors are less abstract than sets of
traces and they are closer to enforcement mechanisms used for
safety properties, e.g., runtime monitors.  Briefly, a safety property
is a monitor that transitions 
states in response to events of the program trace. At any point, the monitor may
refuse to transition (it gets \emph{stuck}), which encodes property
violation. While a monitor can transition, the property has not been
violated.  Schneider~\citet{enfoschneider} argues that all properties
codable this way are safety properties and that all enforceable safety
properties can be coded this way.

Formally, a monitor \com{M} in our setting consists of a set of abstract states \com{\set{\sigma}}, the transition relation \com{\monred}, an initial state \com{\sigma_0}, the set of heap locations that matter for the monitor, \com{\set{l}}, and the current state \com{\sigma_c} (we indicate a set of elements of class $e$ as \set{e}).
The transition relation \com{\monred} is a set of triples of the form $(\com{\sigma_s},\com{H},\com{\sigma_f})$ consisting of a starting state \com{\sigma_s}, a final state \com{\sigma_f} and a heap \com{H}. 
The transition $(\com{\sigma_s},\com{H},\com{\sigma_f})$ is interpreted as ``\emph{state \com{\sigma_s} transitions to \com{\sigma_f} when the heap is \com{H}}''. 
When determining the monitor transition in response to a program action, we restrict the program's heap to the location set \com{\set{l}}, i.e., to  the set of locations the monitor cares about. This heap restriction is written $H\big|_{\set{l}}$.
We assume determinism of the transition relation: for any \com{\sigma_s} and (restricted heap) \com{H}, there is at most one \com{\sigma_f} such that $(\com{\sigma_s},\com{H},\com{\sigma_f}) \in \com{\monred}$.

Given the behaviour of a program as a trace \alsc and a monitor \com{M} specifying a safety property, $\com{M}\vdash\alsc$ denotes that the trace satisfies the safety property.
Intuitively, to satisfy a safety property, the sequence of heaps in the actions of a trace, \emph{restricted to the locations that the monitor cares about}, must never get the monitor stuck (\Cref{tr:mon-valid}).
Every single restricted heap must allow the monitor to step according to its transition relation (\Cref{tr:mon-us}).
Note that we overload the $\monred$ notation here to also denote an auxiliary relation, the \emph{monitor small-step semantics} (\Cref{tr:ms-t-s-b} and \Cref{tr:ms-t-s}).
\begin{center}
  \typerule{Valid trace}{
    \com{M;\strip{\alsc}\monred M'}
  }{
    \com{M}\vdash\alsc
  }{mon-valid}
  \typerule{Monitor Step-base}{
  }{
  \com{M;\come\monred M}
  }{ms-t-s-b}
  \typerule{Monitor Step-ind}{
  \com{M;\OB{H}\monred M''}
  &
  \com{M'';H\monred M'}
  }{
  \com{M;\OB{H}\cdot H\monred M'}
  }{ms-t-s}
  \typerule{Monitor Step}{
    &
    \com{(\sigma_c,H\big|_{\set{l}},\sigma_f)}\in\com{\monred}
  }{
    \com{(\set{\sigma},\monred,\sigma_0,\set{l},\sigma_c)};
    H
    \monred
    \com{(\set{\sigma},\monred,\sigma_0,\set{l},\sigma_f)}
  }{mon-us}
\end{center}

With this setup in place, we can formalise safety, attackers and robust safety.
In defining (robust) safety for a component, we only admit monitors (safety properties) whose \com{\set{l}} agrees with the sensitive locations declared by the component. 
Making the set of safety-relevant locations explicit in the component and the monitor gives the compiler more flexibility by telling it precisely which locations need to be protected against target-level attacks (the compiler may choose to not protect the rest). 
At the same time, it allows for expressive modelling. 
For instance, in \Cref{ex:correct}, the safety-relevant locations could be the I/O buffers from which the program performs inputs and outputs, and the safety property can constrain the input and output buffers at corresponding call and return actions involving the Fibonacci function.
A whole program $\com{C}$ is safe for a monitor $\com{M}$, written $\com{M}\vdash \com{C} : \com{safe}$, if the monitor accepts any trace the program generates from its initial state (\SInit{C}).

An attacker $\com{A}$ is valid for a component $\com{C}$, written $\com{C}\vdash\com{A} : \com{atk}$, if $\com{A}$'s free locations (denoted \fun{locs}{\com{A}}) are disjoint from the locations that the component cares about (denoted \com{C.\mtt{locs}}).
This is a basic sanity check: if we allow an attacker to mention heap locations that the component cares about, the attacker will be able to modify those locations, causing all but trivial safety properties to \emph{not} hold robustly.

A component \com{C} is robustly safe with respect to monitor \com{M}, written $\com{M}\vdash \com{C} : \com{rs}$, if \com{C} composed with \emph{any} attacker is safe with respect to \com{M}.
As mentioned, for this setup to make sense, the monitor and the component must agree on the locations that are safety-relevant. This agreement is denoted \com{M\agree C}.
\begin{definition}[Safety, attacker and robust safety]\label{def:safe}
  \begin{align*}
    \com{M}\vdash \com{C} : \com{safe} \isdef&\ 
    \text{ if }
    \vdash \com{C}:\com{whole}
    \text{ then }
      \text{if }
      \SInit{\com{C}} \Xtoc{\alsc} \com{\_} 
      \text{ then }
      M\vdash\alsc
    \\
    \com{C}\vdash\com{A} : \com{atk} \isdef&\ \com{C}.\mtt{locs}= \com{\set{l}} \text{ and } \set{l}\cap\fun{locs}{\com{A}}=\come
    \\
    \com{M}\vdash \com{C} : \com{rs} \isdef&\  \forall \com{A}.
    \text{ if }
    M\agree C
    \text{ and }
    \com{C}\vdash \com{A}: \com{atk}
    \text{ then }
    \com{M}\vdash \com{A\hole{C}} : \com{safe}
  \end{align*}
\end{definition}

\subsection{Robustly Safe Compilation}\label{sec:rsc-theory}
Robustly-safe compilation ensures that robust safety properties \emph{and their meanings} are preserved across compilation.
But what does it means to preserve meanings across languages?
If a source safety property says \src{never} \src{write} \src{3} \src{to} \src{a} \src{location}, and we compile to an assembly language by mapping numbers to binary, the corresponding target property should say \trg{never} \trg{write} \trg{0x11} \trg{to} \trg{an} \trg{address}.

In order to relate properties across languages, we assume a relation $\relate : \src{v}\times\trg{v}$ between source and target values that is \emph{total} in the first component, so it maps any source value \src{v} to a target value \trg{v}: $\forall \src{v}. \exists \trg{v}. \src{v}\relate\trg{v}$.
This value relation is used to define a relation between heaps: $\src{H}\relate\trg{H}$, which intuitively holds when related locations point to related values.
This is then used to define a relation between actions: $\src{\alpha}\relate\at$, which holds when the two actions are the ``same'' modulo this relation, i.e., \src{\clh{\cdot}{\cdot}{\cdot}} only relates to \trg{\clh{\cdot}{\cdot}{\cdot}} and the arguments of the action (values and heap) are related.
Next, we require a relation $\src{M}\relate\trg{M}$ between source and target monitors, which means that the source monitor $\src{M}$ and the target monitor $\trg{M}$ enforce the same safety property, modulo the relation $\relate$ on actions (and thus on locations and values too) assumed above. 
The precise definition of this relation depends on the source and target languages; specific instances are shown in \Cref{sec:mon-rel-conc,sec:mon-rel-par}.%
\footnote{Accounting for the difference in the representation of
  safety properties sets us apart from recent work~\citet{rhpc-arx,rhc},
  which assumes that the source and target languages have the same trace alphabet. The latter works only in some settings.}
 
We denote a compiler from language \S to language \T by \compgen{\cdot}.
A compiler \compgen{\cdot} attains \rscomp, if it maps any component \src{C} that is robustly safe with respect to \src{M} to a component \trg{C} that is robustly safe with respect to \trg{M}, provided that $\src{M} \relate \trg{M}$.
\begin{definition}[Robustly Safe Compilation]\label{def:rsc}
\begin{align*}
    \vdash\compgen{\cdot}: \rscomp \isdef&\ 
    \forall \src{C},\src{M},\trg{M}\ldotp 
    \text{ if }
    \src{M}\vdash \src{C} : \src{rs}
    \text{ and }
    \src{M}\relate\trg{M}
    \text{ then }
    \trg{M}\vdash \compgen{\src{C}} : \trg{rs}
  \end{align*}
\end{definition}
A consequence of the universal quantification over monitors here is that the compiler cannot be property-sensitive.
A robustly-safe compiler preserves all robust safety properties, not just a specific one, e.g., it does not just enforce that \lst{fibonacci} is correct.
This seemingly strong goal is sensible as compiler writers will likely not know what safety properties individual programmers will want to preserve.

\paragraph{Remark \#1: Safety Through Assertions}
Some readers may wonder why we do not follow existing work and specify
safety as ``programmer-written assertions never
fail''~\cite{autysec,tydisa,cca,davidcaps}.  Unfortunately, this
approach does not yield a meaningful criterion for specifying a
compiler, since assertions in the compiled program (if any) are generated by the compiler itself.
Thus a compiler could just erase all assertions and the compiled code it generates would be trivially (robustly) safe -- no assertion can fail if there are no assertions in the first place!

\paragraph{Remark \#2: Compiling Monitors}
In our development, we assume that a source and a target monitor are related and do not actually compile a source monitor to obtain a related target monitor.
While such compilation is feasible, it is at odds with our view of monitors as \emph{specifications} of safety properties. 
Compiling monitors and, in particular, compiling monitors with the same compiler that we want to prove security of, leads to a circularity---we must understand the compiler to understand the target safety property, which, in turn, acts as the specification for the compiler! 
Consequently, we choose not to compile monitors and talk only of an abstract, compiler-independent relation between source and target monitors.

\subsubsection{Proving \rscomp}\label{sec:proving-rsc}
Proving that a compiler attains \rscomp can be done either by proving that a compiler satisfies \Cref{def:rsc} or by proving something \emph{equivalent}.
To this end, \Cref{def:rsc-eq} below presents an alternative, equivalent formulation of \rscomp.
We call this characterisation \emph{property-free} as it does not mention monitors explicitly (it mentions the \strip{\cdot} function for reasons we explain below).
\begin{definition}[Property-Free \rscomp]\label{def:rsc-eq}
  \begin{align*}
    \vdash\compgen{\cdot}:&\ \pfrscomp \isdef
    \forall \src{C},
    \trg{A},\alst.~
    \\
    \text{if }
    &\
    \trg{\compgen{C}}\vdash \trg{A}: \trg{atk} \text{ and }
    \vdash\trg{A\hole{\compgen{C}}}:\trg{whole} 
    \text{ and }
    \SInitt{A\hole{\compgen{\src{C}}}} \Xtot{\alst} \trg{\_}
    \\
    \text{ then }
    &\
    \exists\src{A},\alss.~
    \src{C}\vdash \src{A}: \src{atk} \text{ and }
    \vdash\src{A\hole{C}}:\src{whole} 
    \text{ and }
    \SInits{A\hole{C}}\Xtos{\alss} \src{\_}
    \\&\
    \text{ and }
    \strip{\alss}\relate\strip{\alst}
  \end{align*}
\end{definition}
\pfrscomp states that if the compiled code produces a behaviour in a target context, then the source code also produces a related behaviour in \emph{some} source context. In other words, target contexts cannot induce more (bad) behaviours in the compiled code than source contexts can in the source code.

\pfrscomp and \rscomp should, in general, be equivalent (\Cref{thm:rsc-prf-eq}). 
\begin{proposition}[\pfrscomp and \rscomp are equivalent]\label{thm:rsc-prf-eq}
    \begin{align*}
      \forall\compgen{\cdot}, \vdash\compgen{\cdot}: \pfrscomp \iff \vdash\compgen{\cdot}:\rscomp
    \end{align*}
\end{proposition}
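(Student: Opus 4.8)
The plan is to prove both implications, reducing each to two abstract properties of the monitor relation $\src{M}\relate\trg{M}$, which I would first isolate as lemmas. The first is \emph{acceptance transfer}: if $\src{M}\relate\trg{M}$ and $\strip{\src{\OB{\alpha}}}\relate\strip{\trg{\OB{\alpha}}}$, then $\src{M}\vdash\src{\OB{\alpha}}$ iff $\trg{M}\vdash\trg{\OB{\alpha}}$. The second is \emph{location agreement}: $\src{M}\relate\trg{M}$ lets one transport $\trg{M}\agree\compgen{\src{C}}$ to $\src{M}\agree\src{C}$ and back, since related monitors care about corresponding locations and the compiler preserves the sensitive-location structure. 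Both follow from reading $\src{M}\relate\trg{M}$ as ``$\src{M}$ and $\trg{M}$ code the same safety property modulo $\relate$'', and their precise justification is discharged once the relation is instantiated in \Cref{sec:mon-rel-conc,sec:mon-rel-par}.

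For $\pfrscomp \Rightarrow \rscomp$, I fix $\src{C},\src{M},\trg{M}$ with $\src{M}\vdash\src{C}:\src{rs}$ and $\src{M}\relate\trg{M}$ and must show $\trg{M}\vdash\compgen{\src{C}}:\trg{rs}$. Unfolding robust safety and safety, I take an arbitrary target attacker $\trg{A}$ with $\trg{M}\agree\compgen{\src{C}}$ and $\compgen{\src{C}}\vdash\trg{A}:\trg{atk}$, assume $\trg{A}\hole{\compgen{\src{C}}}$ is whole and produces a trace $\trg{\OB{\alpha}}$, and aim for $\trg{M}\vdash\trg{\OB{\alpha}}$. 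Feeding exactly these hypotheses into \pfrscomp yields a source attacker $\src{A}$ and a trace $\src{\OB{\alpha}}$ with $\src{C}\vdash\src{A}:\src{atk}$, $\src{A}\hole{\src{C}}$ whole and producing $\src{\OB{\alpha}}$, and $\strip{\src{\OB{\alpha}}}\relate\strip{\trg{\OB{\alpha}}}$. Location agreement turns $\trg{M}\agree\compgen{\src{C}}$ into $\src{M}\agree\src{C}$, so instantiating $\src{M}\vdash\src{C}:\src{rs}$ at $\src{A}$ gives $\src{M}\vdash\src{A}\hole{\src{C}}:\src{safe}$, hence $\src{M}\vdash\src{\OB{\alpha}}$; acceptance transfer then yields $\trg{M}\vdash\trg{\OB{\alpha}}$, as required.

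For $\rscomp \Rightarrow \pfrscomp$, I argue by contraposition. Assume \pfrscomp fails: there are $\src{C}$, a target attacker $\trg{A}$, and a trace $\trg{\OB{\alpha}}$ such that $\trg{A}\hole{\compgen{\src{C}}}$ is whole and produces $\trg{\OB{\alpha}}$, yet no valid source attacker $\src{A}$ with $\src{A}\hole{\src{C}}$ whole produces any trace whose strip relates to $\strip{\trg{\OB{\alpha}}}$. I build a counterexample to \rscomp: let $\trg{M}$ be the monitor whose location set is the sensitive set of $\compgen{\src{C}}$ and that rejects exactly the stripped traces having $\strip{\trg{\OB{\alpha}}}$ as a prefix (getting stuck there), which is a genuine safety monitor by determinism of $\monred$; let $\src{M}$ be a source monitor with $\src{M}\relate\trg{M}$, whose location set is the sensitive set of $\src{C}$ and which, by acceptance transfer, rejects exactly the source stripped traces relating to those $\trg{M}$ rejects. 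Then $\trg{A}$ witnesses $\trg{M}\not\vdash\compgen{\src{C}}:\trg{rs}$, since $\trg{A}\hole{\compgen{\src{C}}}$ realises the rejected $\trg{\OB{\alpha}}$. Conversely $\src{M}\vdash\src{C}:\src{rs}$: for every valid source attacker, the failure assumption says its trace never has a strip relating to $\strip{\trg{\OB{\alpha}}}$, so $\src{M}$ never gets stuck on a reachable source trace. Thus $\src{M}\vdash\src{C}:\src{rs}$ and $\src{M}\relate\trg{M}$ but $\trg{M}\not\vdash\compgen{\src{C}}:\trg{rs}$, contradicting \rscomp.

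The main obstacle is the monitor construction in the reverse direction together with the exact behaviour of $\relate$ on monitors that it presupposes: from an arbitrary bad target trace I must exhibit a related pair $\src{M}\relate\trg{M}$ whose location sets match the sensitive locations of $\src{C}$ and $\compgen{\src{C}}$ and whose stuck-sets are $\relate$-images of one another. This is precisely the content of ``$\relate$ relates monitors coding the same property modulo the value relation'', so the step only closes once that relation is pinned down per instance; totality of $\relate$ on values guarantees the source monitor exists, and determinism of $\monred$ makes ``reject exactly this stripped trace'' well defined. Given the two lemmas, the forward direction is entirely routine.
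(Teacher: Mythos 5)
Your proposal is correct, and its forward direction is essentially the paper's: the paper also reduces $\pfrscomp \Rightarrow \rscomp$ to transferring acceptance of related traces across the monitor relation (clauses 3(a)/3(b) of the bisimulation in \Cref{def:mon-rel}), run inline as a proof by contradiction rather than isolated as your ``acceptance transfer'' lemma; your explicit ``location agreement'' step (transporting $\trg{M}\agree\compgen{\src{C}}$ to $\src{M}\agree\src{C}$, via the compiler mapping $\src{\ell_{root}}$ to $\trg{\pair{0,k_{root}}}$) is silently skipped in the paper but is indeed needed to fire the source robust-safety hypothesis.

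Where you genuinely diverge is the reverse implication. You contrapose the failure of \pfrscomp into a concrete counterexample: a target monitor stuck exactly at the bad stripped prefix $\strip{\trg{\OB{\alpha}}}$ together with a related source monitor, i.e.\ the standard ``safety $=$ absence of bad prefixes'' witness construction. This realises formally the informal gloss the paper gives right after the proposition, whereas the paper's written proof of this direction merely ``switches HP and TH'' and argues by contradiction inside the bisimulation, asserting that acceptance of both traces plus clause 3(b) forces $\strip{\src{\OB{\alpha}}}\relate\strip{\trg{\OB{\alpha}}}$ --- a step that presupposes a given monitor pair and source trace, and reads clause 3(b) backwards (it propagates monitor relatedness along \emph{already related} heaps; it does not derive heap relatedness from acceptance). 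Your route makes the reverse direction actually carry its weight, at the cost of the two existence obligations you correctly flag: that the monitor class can express ``reject exactly this prefix'' (unproblematic: finitely many states tracking the prefix plus an accept-all sink, with determinism preserved), and that a related source monitor exists, which requires backtranslating target heaps to source heaps --- totality of $\relate$ from source to target does not by itself give this, though in the \LU/\LP instance it holds. This dependence on the instantiated relation is exactly why the paper defers the formal statement to \Cref{thm:rsc-prf-eq2}. Two small points to patch in your write-up: the failure of \pfrscomp only excludes source traces whose \emph{full} strip relates to $\strip{\trg{\OB{\alpha}}}$, so to conclude that your $\src{M}$ never gets stuck you should appeal to prefix-closure of the trace semantics (a longer source run passing through the rejected prefix would itself exhibit the shorter related trace); and your monitors should be phrased as rejecting on \emph{restricted} heaps $H\big|_{\set{l}}$, since that is all the monitor semantics observes.
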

As mentioned in \Cref{sec:intro}, a property is safety if it asserts that ``no bad sequence of events happens'', so a safety property specifies the set of \emph{bad prefixes} (i.e., finite traces) which characterise all bad sequences of events.
As such, a safety property implies that programs do not have any trace prefix from the set of bad prefixes. 
Hence, \emph{not} having a safety property robustly amounts to some context being able to induce a bad prefix. 
Consequently, preserving \emph{all} robust safety properties (\rscomp) amounts to ensuring that all target prefixes can be generated (by some context) in the source too (\pfrscomp). 
Formally, since \Cref{def:rsc} relies on the monitor relation, we can prove \Cref{thm:rsc-prf-eq} only after such a relation is finalised.
We give such a monitor relation and proof in Section~\ref{sec:comp-up} (see Theorem~\ref{thm:rsc-prf-eq2}).
However, in general this result should hold for any cross-language monitor relation that correctly relates safety properties.
If the proposition does not hold, then the relation does not capture how safety in one language is represented in the other.

Assuming \Cref{thm:rsc-prf-eq}, we can prove \pfrscomp for a compiler in place of \rscomp.
\pfrscomp can be proved with a \emph{backtranslation} technique.
This technique has been often used to prove full abstraction~\cite{max-embed,mfac,catalin,scoo-j,KULeuven-358154,rhpc-arx,catalinRSC,scsurvey,rhc} and it aims at building a source context starting from a target one.
In fact \pfrscomp, leads directly to a backtranslation-based proof technique since it can be rewritten (eliding irrelevant details) as:
\begin{itemize}
  \item[] $\forall\alst$ if $\exists\trg{A}.\ \trg{\SInitt{A\hole{\compgen{C}}}\Xtot{\alst}\_}$
  \item[] then $\exists\src{A},\alss.\ \src{\SInits{A\hole{{C}}}\Xtos{\alss}\_}$ and $\strip{\alss}\relate\strip{\alst}$
\end{itemize}
Essentially, given a target context $\trg{A}$, a compiled program $\compgen{C}$ and a target trace $\alst$ that $\trg{A}$ causes $\compgen{C}$ to have, we need to construct, or \emph{backtranslate} to, a source context $\src{A}$ that will cause the source program $\src{C}$ to simulate $\alst$. 
Such backtranslation based proofs can be quite difficult, depending on the features of the languages and the compiler. 
However, backtranslation for \rscomp (as we show in \Cref{sec:compup-proof}) is not as complex as backtranslation for \facomp (\Cref{sec:facomp-instance}).

A simpler proof strategy is also viable for \rscomp when we compile only those source programs that have been \emph{verified} to be robustly safe (e.g., using a type system). 
The idea is this: from the verification of the source program, we can find an invariant which is always maintained by the target code, and which, in turn, implies the robust safety of the target code. 
For example, if the safety property is that values in the heap always have their expected types, then the invariant can simply be that values in the target heap are always related to the source ones (which have their expected types). 
This is tantamount to proving type preservation in the target in the presence of an active adversary. 
This is harder than standard type preservation (because of the active adversary) but is still much easier than backtranslation as there is no need to map target constructs to source contexts syntactically.
We illustrate this proof technique in \Cref{sec:comp-effi}.

\subsubsection{\rscomp Implies Compiler Correctness}\label{sec:rsc-impl-cc}
As stated in \Cref{sec:intro}, \rscomp implies (a form of) compiler correctness.
While this may not be apparent from \Cref{def:rsc}, it is more apparent from its equivalent characterization in \Cref{def:rsc-eq}.
We elaborate this here.

Whether concerned with whole programs or partial programs, compiler correctness states that the behaviour of compiled programs \emph{refines} the behaviour of source programs~\cite{leroy2,compocompcert,compocompcert2,pils,kripke,realizability}.
So, if \trgb{\set{\alst}} and \src{\set{\alss}} are the sets of compiled and source behaviours, then a compiler should force $\trgb{\set{\alst}} \refrel \src{\set{\alss}}$, where $\refrel$ is the composition of $\subseteq$ and of the relation $\relate^{-1}$.

If we consider a source component $\src{C}$ that is whole, then it can only link against empty contexts, both in the source and in the target. 
Hence, in this special case, \pfrscomp simplifies to standard refinement of traces, i.e., whole program compiler correctness. 
Hence, assuming that the correctness criterion for a compiler is concerned with the same observations as safety properties (values in safety-relevant heap locations at component crossings in our illustrative setting), \pfrscomp implies whole program compiler correctness.

However, \pfrscomp (or, equivalently, \rscomp) does not imply, nor is implied by, any form of \emph{compositional compiler correctness} (CCC)~\cite{compocompcert,compocompcert2,pils}. 
CCC requires that the behaviours produced by a compiled component  linked against a target context that is related (in behaviour) to a source context can also be produced by the source component linked against the \emph{related} source context. 
In contrast, \pfrscomp allows picking \emph{any} source context to simulate the behaviours. 
Hence, \pfrscomp does not imply CCC. 
On the other hand, \pfrscomp universally quantifies over all target contexts, while CCC only quantifies over target contexts related to a source context, so CCC does not imply \pfrscomp either. 
Hence, compositional compiler correctness, if desirable, must be imposed in addition to \pfrscomp. 

We could remedy this and generalise our criterion even more by adding an additional parameter, a relation between source and target contexts that binds the quantified target and source contexts.
Our criterion chooses the weakest of these relations, where all source contexts are related to all target ones, in order to not impose any constraints on \trg{A} thus making the attackers in our threat model as powerful as possible.
Existing compositional compiler correctness criteria would instantiate this relation e.g., between a source context and its compilation~\cite{compocompcert,compocompcert2} or between a source context and something that behaves like its compilation~\cite{pils}.
As we focus on security, we choose not to pollute our definition with an additional parameter and leave this relation out.

Note that the lack of implications between \pfrscomp and CCC is unsurprising: the two criteria capture two very different aspects of compilation: security (against all contexts) and compositional preservation of behaviour (against well-behaved contexts).

\paragraph{Remark} Compiler correctness composes `vertically', that is, given two compilers (or, compiler passes) one from a source language to an intermediate one, and one from the intermediate to a target language, the compiler resulting of the composition of the two passes is still correct.
Like compiler correctness, \pfrscomp also composes vertically, i.e., if several compiler passes are all \pfrscomp, then the compiler resulting of the composition of those passes is also \pfrscomp.
Studying how \pfrscomp compiler passes interact with other passes (e.g., compiler passes that may be \facomp or any other criterion from \citet{rhc}) is an open research question.

\section{\rscomp via Trace-based Backtranslation}\label{sec:rsc-instance}

This section illustrates how to prove that a compiler attains \rscomp by means of a trace-based backtranslation technique~\cite{mfac,scoo-j,catalinRSC}.
To present such a proof, we first introduce our source language \LU, an untyped, first-order imperative language with abstract references and hidden local state (\Cref{sec:src}).
Then, we present our target language \LP, an untyped imperative target language with a concrete heap, whose locations are natural numbers that the context can compute.
\LP provides hidden local state via a fine-grained capability mechanism on heap accesses  (\Cref{sec:trg}).
Finally, we present the compiler \compup{\cdot} and prove that it attains \rscomp (\Cref{sec:comp-up}) by means of a trace-based backtranslation.
The section concludes with an example detailing why \rscomp preserves security (\Cref{ex:request}).

To avoid focussing on mundane details, we deliberately use source and target languages that are fairly similar.
However, they differ substantially in one key point: the heap model. 
This affords the target-level adversary attacks like guessing private locations and writing to them that do not obviously exist in the source (and makes our proofs nontrivial). 
We believe that (with due effort) the ideas here will generalize to languages with larger gaps and more features.

\subsection{The Source Language \LU}\label{sec:src} 
\myfig{
	\vspace{-1em}
  \begin{gather*}
  \begin{aligned}
  \mi{Components}~\src{C} \bnfdef&\ \src{\src{\ell_{root}} ; \OB{F} ; \OB{I}}
  &
  \mi{Contexts}~\src{A} \bnfdef&\ \src{ H ; \OB{F}\hole{\cdot}}
  \\
  \mi{Interfaces}~\src{I} \bnfdef&\ \src{f}
  &
  \mi{Functions}~\src{F} \bnfdef&\ \src{f(x)\mapsto s;\ret}
  \\
  \mi{Heaps}~\src{H} \bnfdef&\ \srce \mid \src{H ; \ell\mapsto v}
  &
  \mi{Values}~\src{v} \bnfdef&\ \src{\units} \mid 
  \src{\trues} %
  \mid
  \src{\falses}
  \mid \src{n}\in\mb{N} \mid \src{\pair{v,v}} \mid \src{\ell}  
  \end{aligned}
  \\
  \begin{aligned}
  \mi{Expressions}~\src{e} \bnfdef&\ \src{x} \mid \src{v} \mid \src{e \op e} \mid \src{e \bop e} \mid \src{\pair{e,e}} \mid \src{\projone{e}} \mid \src{\projtwo{e}} \mid \src{!e} 
  \\
  \mi{Statements}~\src{s} \bnfdef&\ \skips \mid \src{s;s} \mid \src{\letin{x}{e}{s}} \mid \src{\ifte{e}{s}{s}} 
  \mid \src{\call{f}~e} \mid \src{\letnew{x}{e}{s}} \mid \src{x := e} 
  \\
  \mi{Eval.\ Ctxs.}~\src{E} \bnfdef&\ \src{\hole{\cdot}} \mid \src{n \op E} \mid \src{E \op e} \mid \src{n \bop E} \mid \src{E \bop e} \mid \src{\pair{v,E}} \mid \src{\pair{E,e}} \mid \src{\projone{E}} \mid \src{\projtwo{E}} \mid \src{!E}
  \end{aligned}
  \\
  \begin{aligned}
  \mi{Mon.\ States}~\src{\sigma} \in&\ \src{\mc{S}}
  &
  \mi{Monitors}~\src{M} \bnfdef&\ \src{(\set{\sigma},\monred,\sigma_0,\ell_{root},\sigma_c)}
  \\
  \mi{Mon.\ Reds.}~\src{\monred} \bnfdef&\ \srce \mid \src{\monred;(s,H,s)}
  &
  \mi{Prog.\ States}~\src{\Omega}\bnfdef&\ \src{C, H\triangleright \proc{s}{\OB{f}} } 
  \\
  \mi{Labels}~\src{\lambda} \bnfdef&\ \src{\epsilon} \mid \src{\alpha}
  &
  \mi{Actions}~\src{\alpha} \bnfdef&\ \src{\clh{f}{v}{H}} \mid \src{\cbh{f}{v}{H}} \mid \src{\rth{v}{H}} \mid \src{\rbh{v}{H}}
  \end{aligned}
  \end{gather*}
}{lu-syn}{Syntax of \LU. We indicate a list of elements \com{e_1,\cdots,e_n} as \OB{e}.}
\LU is an untyped imperative while language~\cite{booknielson}. 
Its syntax is presented in \Cref{fig:lu-syn}. 
Components \src{C} are triples of function definitions, interfaces and a special location written $\src{\ell_{root}}$, which defines the locations that are monitored for safety, as explained below. %
We use a mnemonic `dot' notation to access sub-parts of elements that are tuples (such as components), so when accessing the functions sub-part of a component \src{C}, we will write $\src{C}.\mtt{funs}$.
Each function definition maps a function name and a formal argument to a body~$\src{s}$. %
An interface is a list of functions that the component relies on the context to provide (similar to C's \texttt{extern} declarations).
Attackers \src{A} (program contexts) are function definitions (and their heap) that represent untrusted code that a component interacts with.
A function body is a statement. 
Statements \src{s} are rather standard but their treatment is not. 
For example, statements define local variables but these are substitute and not looked up in an environment as in while languages~\cite{vasco}.
Additionally, statements manipulate the heap, do recursive function calls and branch conditionally.
Statements use effect-free expressions $\src{e}$, which contain arithmetic and comparison operations, pairing and projections, and location dereference.
Heaps \src{H} are maps from abstract locations \src{\ell} to values \src{v}.

We use a number of auxiliary functions to access parts of \LU that we now explain.
Function \fun{locs}{\cdot} returns the set of locations that are free in the argument of the function; that argument can be an expression, a list of functions or an attacker.
Function \fun{names}{\cdot} returns the names of the functions defined in the argument, which can be a list of functions or a list of interfaces.
Function \fun{fv}{\src{\OB{F}}} returns the free variables in the bodies of the list of functions supplied as argument.

As explained in \Cref{sec:rs}, safety properties are specified by monitors. 
Note that in place of the set \com{\set{l}} of safety-relevant locations,
the description of a monitor here (as well as a component above)
contains a \emph{single} location $\src{\ell_{root}}$. The
interpretation is that any location \emph{reachable} in the heap
starting from $\src{\ell_{root}}$ is relevant for safety. This set of
locations can change as the program executes, and hence this is more
flexible than statically specifying all of \com{\set{l}} upfront. This
representation of the set by a single location is made explicit in the
following monitor rule:

\begin{center}
	\typerule{\LU-Monitor Step}{
	    \src{M}= \src{(\set{\sigma},\monred,\sigma_0,\ell_{root},\sigma_c)}
	    &
	    \src{M'}= \src{(\set{\sigma},\monred,\sigma_0,\ell_{root},\sigma_f)}
	    \\
	    \src{(\sigma_c,H',\sigma_f)}\in\src{\monred}
	    &
	    \src{H'}\subseteq\src{H}
	    &
	    \dom{\src{H'}}=\fun{reach}{\src{\ell_{root}},\src{H}}
	}{
		\src{M;H\monred M'}
	}{ms-us}
\end{center}
\begin{align*}
  \fun{reach}{\src{\ell'},\src{H}} =&\ \myset{\src{\ell}}{\exists\src{e}.~ \src{H\triangleright e\redtos ~\ell} \wedge \src{\ell}\in\dom{\src{H}} \wedge \fun{locs}{\src{e}}=\src{\ell'}}
\end{align*}

Other than this small point, monitors, safety, robust safety and
\rscomp are defined as in \Cref{sec:rsc}. In particular, a monitor and
a component agree if they mention the same \src{\ell_{root}} and an attacker is valid for a component \src{C} if its code and heap do not mention the \src{\ell_{root}} location of \src{C}.
Note that checking this condition is sufficient because whether \src{A} is a valid attacker is a \emph{static condition}, checked before programs run. 
At this stage, \src{\ell_{root}} does not point to any other location, so the check is sufficient.
\src{\ell_{root}} may grow to point to other locations, but these will be dynamically-generated, and thus the attacker cannot possibly mention them statically in its code.
\begin{align*}
  \src{M\agree C}\isdef
    &\
      (\src{M} = \src{(\set{\sigma},\monred,\sigma_0,\ell_{root},\sigma_c))} \mbox{ and } (\src{C}=\src{(\ell_{root} ; \OB{F} ; \OB{I})})       %
  \\
    \src{C}\vdash\src{A}:\src{atk} \isdef&\ \src{C}=\src{(\ell_{root} ; \OB{F} ; \OB{I})}, \src{A}=\src{ H ; \OB{F'}} \text{ and }
    \src{\ell_{root}}\notin(\fun{locs}{\src{A}})
\end{align*}

The semantics of \LU relies on some auxiliary functions that we present in \Cref{fig:inout} before presenting the semantics itself.
\myfig{
	\begin{center}
		\typerule{\LU-Jump-Internal}{
		((\src{f'}\in\src{\OB{I}} \wedge \src{f}\in\src{\OB{I}}) \vee
		(\src{f'}\notin\src{\OB{I}} \wedge \src{f}\notin\src{\OB{I}}))
		}{
			\src{\OB{I}}\vdash\src{f,f'}:\src{internal}
		}{us-aux-intern}
		\typerule{\LU-Jump-IN}{
			\src{f}\in\src{\OB{I}} \wedge \src{f'}\notin\src{\OB{I}}
		}{
			\src{\OB{I}}\vdash\src{f,f'}:\src{in}
		}{us-aux-in}
		\typerule{\LU-Jump-OUT}{
			\src{f}\notin\src{\OB{I}} \wedge \src{f'}\in\src{\OB{I}}
		}{
			\src{\OB{I}}\vdash\src{f,f'}:\src{out}
		}{us-aux-out}

    \smallskip
    \hrule

    \typerule{\LU-Plug}{
    \src{A} \equiv \src{H ; \OB{F}\hole{\cdot}}
    &
    \src{C}\equiv\src{\src{\ell_{root}} ; \OB{F'} ; \OB{I}} 
    \\
    \vdash\src{C,\OB{F}}:\src{whole}
    &
    \src{main}\in\fun{names}{\src{\OB{F}}}
  }{
    \src{A\hole{C}} = \src{\src{\ell_{root}} ; H;\ell_{root}\mapsto0 ; \OB{F;F'}; \OB{I}}
  }{plug-us}
  \typerule{\LU-Whole}{
    \src{C}\equiv\src{\src{\ell_{root}} ; \OB{F'} ; \OB{I}} 
    &
    \fun{fv}{\src{\OB{F}}}\cup\fun{fv}{\src{\OB{F'}}}=\srce
    \\
    \fun{names}{\src{\OB{F}}}\cap\fun{names}{\src{\OB{F'}}}=\emptyset
    \\
    \fun{names}{\src{\OB{I}}}\subseteq \fun{names}{\src{\OB{F}}}\cup\fun{names}{\src{\OB{F'}}}
  }{
    \vdash\src{C,\OB{F}}:\src{whole}
  }{whole-us}
  \typerule{\LU-Initial State}{
    \src{P}\equiv\src{\src{\ell_{root}} ; H ; \OB{F} ; \OB{I}}
    &
    \src{C}\equiv\src{\src{\ell_{root}} ; \OB{F} ; \OB{I}}
    &
    \src{main(x)\mapsto s;\ret}\in\src{\OB{F}}
  }{
    \SInits{P} = \src{C ; H,\src{\ell_{root}\mapsto 0} \triangleright \proc{s\subs{0}{x}}{main}}
  }{ini-us}
	\end{center}
}{inout}{Auxiliary rules.
The first batch determines the direction of calls and returns.
The second batch defines plugging a component with an attacker, when a whole program is whole and how to calculate the starting state of a program, which starts computing from the main function. }

A program state \src{\Omega} includes the function bodies \src{C}, the heap \src{H}, a statement \src{s} being executed and a stack of function calls \src{\OB{f}} (often omitted in the semantics rules for simplicity and explained in \Cref{ex:fstack}). 
The initial state of a whole program is generated according to \Cref{tr:ini-us}.
A program consisting of a component \src{C} and attacker-provided functions \src{\OB{F}} is whole (as defined by \Cref{tr:whole-us}) if no function definition has free variables, if no function names is duplicated and if all import functions of the components are resolved.
Whole programs are typically the result of plugging a component and an attacker together as in \Cref{tr:plug-us}.
The stack of function calls is used to populate judgements of the form $\src{\OB{I}}\vdash\src{f,f'}:\src{internal/in/out}$ (\Cref{fig:inout}, top).
These judgements determine whether calls and returns are \src{internal} (within the attacker or within the component), directed from the attacker to the component (\src{in}) or directed from the component to the attacker (\src{out}).
This information is used to determine whether the semantics should generate a label (as in \Cref{tr:eus-call,tr:eus-ret,tr:eus-callback,tr:eus-retb}) or no label (as in \Cref{tr:eus-call-i,tr:eus-ret-i}) since internal calls should not be observable.

\myfig{
	\begin{center}
		\typerule{E\LU-ctx}{
			\src{H \triangleright e \redtos e'}
		}{
			\src{H \triangleright E\hole{e}} \redtos \src{E\hole{e'}}
		}{eus-cth}
		\typerule{E\LU-val}{
		}{
			\src{H \triangleright v} \redtos \src{v}
		}{eus-val}
		\typerule{E\LU-dereference}{
			\src{\ell\mapsto v } \in \src{H}
		}{
			\src{H \triangleright !\ell \redtos v }
		}{eus-de}
		\typerule{E\LU-op}{
			n\op n'=n''
		}{
			\src{H \triangleright n \op n' \redtos n''}
		}{eus-op}
		\typerule{E\LU-comp}{
			n\bop n'=b
		}{
			\src{H \triangleright n \bop n' \redtos b}
		}{eus-bop}
		\typerule{E\LU-p1}{
		}{
			\src{H \triangleright \projone{\pair{v,v'}} \redtos v}
		}{eus-p1}
		\typerule{E\LU-p2}{
		}{
			\src{H \triangleright \projtwo{\pair{v,v'}} \redtos v'}
		}{eus-p2}
	\end{center}
\hrule
	\begin{center}
	\typerule{E\LU-sequence}{
	}{
		\src{C, H \triangleright \skips;s} \xtos{\epsilon} \src{C, H \triangleright s}
	}{eus-seq}
	\typerule{E\LU-step}{
		\src{C, H \triangleright s} \xtos{\lambda} \src{C, H' \triangleright s'}
	}{
		\src{C, H \triangleright s;s''} \xtos{\lambda} \src{C, H' \triangleright s';s''}
	}{eus-step}
	\typerule{E\LU-if}{
		\src{H \triangleright e\redtos v}
		\\
		\src{v}\equiv\trues \Rightarrow \src{s''}=\src{s}
		&
		\src{v}\equiv\falses \Rightarrow \src{s''}=\src{s'}
	}{
		\src{C, H \triangleright \ifte{e}{s}{s'}} \xtos{\epsilon} \src{C, H \triangleright s''}
	}{eus-ift}
	\typerule{E\LU-letin}{
		\src{H \triangleright e\redtos v}
	}{
		\src{C, H \triangleright \letin{x}{e}{s}} \xtos{\epsilon} \src{C, H \triangleright s\subs{v}{x}}
	}{eus-letin}
	\typerule{E\LU-update}{
		\src{H \triangleright e\redtos v}
		\\
		\src{H}=\src{H_1; \ell\mapsto v' ; H_2}
		&
		\src{H'}=\src{H_1; \ell\mapsto v ; H_2}
	}{
		\src{C, H \triangleright \ell:= e} \xtos{\epsilon} \src{C, H' \triangleright \skips }
	}{eus-up}
    \typerule{E\LU-alloc}{
      \src{H \triangleright e\redtos v}
      &
      \src{\ell}\notin\dom{\src{H}}
    }{
      \begin{multlined}
        \src{C, H \triangleright \letnew{x}{e}{s}} \xtos{} 
        \\
        \src{C, H; \ell\mapsto v \triangleright s\subs{\ell}{x} }
      \end{multlined}
    }{eus-al}
    \typerule{E\LU-call}{
      \src{\OB{f'}} = \src{\OB{f''};f'}
      &
      \src{f(x)\mapsto s;\ret}\in\src{C}.\mtt{funs}
      \\
      \src{\OB{C}}.\mtt{intfs}\vdash\src{f',f}:\src{in}
      &
      \src{H \triangleright e\redtos v}
    }{
      \begin{multlined}
        \src{C, H \triangleright \proc{{\call{f}~e}}{\OB{f'}}} \xtos{\clh{f}{v}{H}} 
        \\
        \src{C, H \triangleright \proc{{s;\ret\subs{v}{x}}}{\OB{f'};f}}
      \end{multlined}
    }{eus-call}
  \typerule{E\LU-callback}{
    \src{\OB{f'}} = \src{\OB{f''};f'}
    &
    \src{f(x)\mapsto s;\ret}\in\src{\OB{F}}
    \\
    \src{\OB{C}}.\mtt{intfs}\vdash\src{f',f}:\src{out}
    &
    \src{H \triangleright e\redtos v}
  }{
    \begin{multlined}
    \src{C, H \triangleright \proc{{\call{f}~e}}{\OB{f'}}} \xtos{\cbh{f}{v}{H}} 
    \\
    \src{C, H \triangleright \proc{{s;\ret\subs{v}{x}}}{\OB{f'};f}}
    \end{multlined}
  }{eus-callback}
  \typerule{E\LU-return}{
      \src{\OB{f'}} = \src{\OB{f''};f'}
      &
      \src{\OB{C}}.\mtt{intfs}\vdash\src{f,f'}:\src{out}
    }{
        \src{C, H \triangleright \proc{{\ret}}{\OB{f'};f}} \xtos{\rth{v}{H}} 
        \src{C, H \triangleright \proc{\skips}{\OB{f'}}}
    }{eus-ret}
  \typerule{E\LU-retback}{
    \src{\OB{f'}} = \src{\OB{f''};f'}
    &
    \src{\OB{C}}.\mtt{intfs}\vdash\src{f,f'}:\src{in}
  }{
    \src{C, H \triangleright \proc{{\ret}}{\OB{f'};f}} \xtos{\rbh{v}{H}} 
    \src{C, H \triangleright \proc{\skips}{\OB{f'}}}
  }{eus-retb}
    \typerule{E\LU-call-internal}{
      \src{\OB{C}}.\mtt{intfs}\vdash\src{f,f'}:\src{internal}
      &
      \src{\OB{f'}} = \src{\OB{f''};f'}
       \\
      \src{f(x)\mapsto s;\ret}\in\src{C}.\mtt{funs}
      &
      \src{H \triangleright e\redtos v}
    }{
      \begin{multlined}
        \src{C, H \triangleright \proc{{\call{f}~e}}{\OB{f'}}} \xtos{\epsilon}
        \\
        \src{C, H \triangleright \proc{{s;\ret\subs{v}{x}}}{\OB{f'};f}}
      \end{multlined}
    }{eus-call-i}
	\typerule{E\LU-ret-internal}{
		\src{\OB{f'}} = \src{\OB{f''};f'}
		&
		\src{\OB{C}}.\mtt{intfs}\vdash\src{f,f'}:\src{internal}
	}{
		\src{C, H \triangleright \proc{{\ret}}{\OB{f'};f}} \xtos{\epsilon} \src{C, H \triangleright \proc{\skips}{\OB{f'}}}
	}{eus-ret-i}
	\end{center}
\hrule
	\begin{center}
		\typerule{E\LU-single}{
			\src{\Omega}\xtos{\alpha}\src{\Omega'}
		}{
			\src{\Omega}\Xtos{\alpha}\src{\Omega'}
		}{eus-tr-sin}
		\typerule{E\LU-silent}{
			\src{\Omega}\xtos{\epsilon}\src{\Omega'}
		}{
			\src{\Omega}\Xtos{}\src{\Omega'}
		}{eus-tr-silent}
		\typerule{E\LU-transitive}{
			\src{\Omega}\Xtos{\alss}\src{\Omega''}
			&
			\src{\Omega''}\Xtos{\alss'}\src{\Omega'}
		}{
			\src{\Omega}\Xtos{\alss\cdot\alss'}\src{\Omega'}
		}{eus-tr-trans}
	\end{center}
}{lu-sem}{Semantics of \LU. $\op$ includes $+,-,\times$. $\bop$ includes $==,<,>$ etc; \subs{v}{x} substitutes value \src{v} for variable \src{x}.}
\LU has a big-step semantics for expressions that relies on evaluation contexts, a small-step semantics for statements that has labels \src{\lambda} and a semantics that accumulates labels in traces by omitting silent actions \src{\epsilon} and concatenating the rest.
These semantics follow the judgements below. The rules defining these judgments are presented in \Cref{fig:lu-sem}:
\begin{align*}
	&\text{Expressions }\src{H\triangleright e\redtos v}
	&
	&\text{Statements }\src{\Omega\xtos{\lambda}\Omega'}
	&
	&\text{Traces }\src{\Omega\Xtos{\alss}\Omega'}
\end{align*}
Unlike existing work on compositional compiler correctness which only relies on having the component~\cite{compocompcert2}, our semantics relies on having both the component and the context (i.e., a whole program).

\begin{example}[Call semantics]\label{ex:fstack}
  To provide further insights on the semantics of this (and the following) language, this example shows the reduction for component \src{C_{base}} below plugged with an attacker \src{A_{base}} defining only function \src{main} (still below).
  \begin{align*}
    \src{C_{base}} =&\ \src{\ell_{root};}
    \
    \src{skipten (x) \mapsto} \src{ \ifte{x>=10}{ \skips }{\call{skipten}\ x+1};\ret}
    \
    \src{main}
    \\
    \src{A_{base}} =&\ \srce ; \src{main (x) \mapsto} \src{\call{skipten}~9;\ret}
  \end{align*}
  In the following, we indicate with \src{C} the component resulting by adding function \src{main} to the list of functions of \src{C_{base}} and leaving the rest unmodified (as according to \Cref{tr:plug-us}).  
    \begin{align*}
    &
    \src{C;\emptyset;\ell_{root} \triangleright \proc{\call{skipten}~9;\ret}{main}}
    \\
    \xtos{\clh{skipten}{9}{\emptyset}}
    &
    \src{C;\emptyset;\ell_{root} \triangleright \proc{\ifte{9>=10}{ \skips }{\call{skipten}\ 9+1};\ret\ret}{main\cdot skipten}}
    \\
    &
    \text{since } \src{\emptyset \triangleright 9>=10 \redtos \falses}
    \\
    \xtos{}
    &
    \src{C;\emptyset;\ell_{root} \triangleright \proc{{\call{skipten}\ 9+1};\ret\ret}{main\cdot skipten}}
    \\
    &
    \text{since } \src{\emptyset \triangleright 9+1 \redtos 10}
    \\
    &
    \text{(this call is internal to the component, so there is no label)}
    \\
    \xtos{}
    &
    \src{
      C;\emptyset;\ell_{root} \triangleright 
      \left(\begin{aligned}
          &
          \src{\ifte{10>=10}{ \skips }{\call{skipten}\ 10+1};}
          \\
          &\ 
          \src{\ret\ret\ret}
        \end{aligned}\right)_{main\cdot skipten\cdot skipten}
    }
    \\
    &
    \text{since } \src{\emptyset \triangleright 10>=10 \redtos \trues}
    \\
    \xtos{}
    &
    \src{
      C;\emptyset;\ell_{root} \triangleright 
      \proc{
          \skips;\ret\ret\ret
        }{main\cdot skipten\cdot skipten}
    }
    \\
    \xtos{}
    &
    \src{
      C;\emptyset;\ell_{root} \triangleright 
      \proc{
          \ret\ret\ret
        }{main\cdot skipten\cdot skipten}
    }
    \\
    &
    \text{(this return is internal to the component, so there is no label)}
    \\
    \xtos{}
    &
    \src{
      C;\emptyset;\ell_{root} \triangleright 
      \proc{
          \skips;\ret\ret
        }{main\cdot skipten}
    }
    \\
    \xtos{}
    &
    \src{
      C;\emptyset;\ell_{root} \triangleright 
      \proc{
          \ret\ret
        }{main\cdot skipten}
    }
    \\
    \xtos{\rth{v}{\emptyset}{}}
    &
    \src{
      C;\emptyset;\ell_{root} \triangleright 
      \proc{
          \skips;\ret
        }{main}
    }
    \\
    \xtos{}
    &
    \src{
      C;\emptyset;\ell_{root} \triangleright 
      \proc{
          \ret
        }{main}
    }
    \\
    \xtos{}
    &
    \src{
      C;\emptyset;\ell_{root} \triangleright 
          \skips;
    }
  \end{align*}
\end{example}

\subsection{The Target Language \LP}\label{sec:trg}
\myfig{
  \begin{gather*}
  	\begin{aligned}
    \mi{Components}~\trg{C} \bnfdef&\ \trg{k_{root} ; \OB{F} ; \OB{I}}
    &
    \mi{Heaps}~\trg{H} \bnfdef&\ \trge \mid \trg{H ; n\mapsto v:\trgb{\eta}} \mid \trg{H ; k}
    \\
    \mi{Expressions}~\trg{e} \bnfdef&\ \cdots \mid \trg{!e \with{e}}
    &
    \mi{Values}~\trg{v} \bnfdef&\ \trg{n}\in\mb{N} \mid \trg{\pair{v,v}} \mid \trg{k} %
  	\end{aligned}
    \\
    \mi{Statements}~\trg{s} \bnfdef\ \cdots \mid \trg{\lethide{x}{e}{s}} \mid \trg{\ifzte{e}{s}{s}}  \mid \trg{x := e \with{e}}
    \\
    \begin{aligned}
    \mi{Monitors}~\trg{M} \bnfdef&\ \trg{(\trgb{\set{\sigma}},\monred,\trgb{\sigma}_0,k_{root},\trgb{\sigma}_c)}
    &
    \mi{Tags}~\trg{\eta} \bnfdef&\ \trg{\bot} \mid \trg{k}
    \end{aligned}
    \\
    \mi{Actions}~\at \bnfdef\ \trg{\clh{f}{v}{H}} \mid \trg{\cbh{f}{v}{H}} \mid \trg{\rth{v}{H}} \mid \trg{\rbh{v}{H}}
  \end{gather*}
}{lp-syn}{Syntax of \LP. Elided bits ($\cdots$) and omitted ones are the same as in \LU (\Cref{fig:lu-syn}).}

\LP is an untyped, imperative language that follows the structure of \LU and it has similar expressions and statements (\Cref{fig:lp-syn}).  
However, there are critical differences (that make the compiler interesting).
The main difference is that heap locations in \LP are concrete natural numbers.
Upfront, an adversarial context can guess locations used as private state by a component and clobber them. 
To support hidden local state, a location can be ``hidden'' explicitly via the statement \trg{\lethide{x}{e}{s}}, which allocates a new \emph{capability} \trg{k}, an abstract token that grants access to the location \trg{n} to which \trg{e} points~\cite{nucalc}. 
Subsequently, all reads and writes to \trg{n} must be authenticated with the capability, so reading and writing a location take another parameter, the capability, as in \trg{!e \with{e}} and \trg{x:=e \with{e}}. 
In both cases, the \trg{e} after the \with{} is the capability.
Unlike locations, capabilities cannot be guessed. 
To make a location private, the compiler can make the capability of the location private. 
To bootstrap this hiding process, we assume that a component has one location that can only be accessed by it, a priori in the semantics (in our formalisation, we always focus on only one component and we assume that, for this component, this special location is at address \trg{0}).

\LP stores capabilities on the heap alongside locations, so a heap \trg{H} contain both capabilities \trg{k} as well as maps from natural numbers (locations) \trg{n} to values \trg{v} and a tag \trgb{\eta}. %
The tag \trgb{\eta} can be \trgb{\bot}, which means that \trg{n} is globally available (not protected) or a capability \trg{k}, which protects \trg{n}. 
A globally available location can be freely read and written but one that is protected by a capability requires the same capability to be supplied at the time of read/write (\Cref{tr:et-ac-t}, \Cref{tr:et-de-t}).

\LP has a big-step semantics for expressions, a labelled small-step semantics and a semantics that accumulates traces. These judgments follow similar judgements in the semantics of \LU (\Cref{fig:lp-sem}).

\myfig{
\begin{center}
  \typerule{E\LP-deref}{
    \trg{n\mapsto v:\trgb{\eta}}\in\trg{H} 
    &
    (\trgb{\eta}=\trgb{\bot})\text{ or }(\trgb{\eta}=\trg{k} \text{ and } \trg{v'}=\trg{k})
  }{
    \trg{H\triangleright {!n} \with{v'}} \redtot \trg{H\triangleright v}
  }{et-de-t}
\end{center}
\smallskip
  \hrule
\begin{center}
  	\typerule{E\LP-if}{
		\trg{H \triangleright e\redtot n}
		&
		\trg{n}\equiv\trg{0}\Rightarrow\trg{s''}=\trg{s}
		&
		\trg{n}\nequiv\trg{0}\Rightarrow\trg{s''}=\trg{s'}
	}{
		\trg{C, H \triangleright \ifzte{e}{s}{s'}} \xtot{\epsilon} \trg{C, H \triangleright s''}
	}{et-ift}
  \typerule{E\LP-new}{
    \trg{H} = \trg{H_1;n\mapsto (v',\trgb{\eta})}
    &
    \trg{H \triangleright e\redtot v}
    &
    \trg{H'}=\trg{H; n+1\mapsto v:\trgb{\bot} }
  }{
      \trg{C, H \triangleright \letnew{x}{e}{s}} \xtot{} 
      \trg{C,H'\triangleright s\subt{n+1}{x}}
  }{et-nu}
  \typerule{E\LP-hide}{
    \trg{H \triangleright e\redtot n}
    &
    \trg{H} = \trg{H_1;n\mapsto v:\trgb{\bot};H_2}
    &
    \trg{k}\notin\dom{\trg{H}}
    &
    \trg{H'}=\trg{H_1;n\mapsto v:k;H_2;k}
  }{
  		\trg{C, H \triangleright \lethide{x}{e}{s}} \xtot{}
  		\trg{C, H' \triangleright s\subt{k}{x}}
  }{et-hi}
  \typerule{E\LP-assign}{
    \trg{H \triangleright e\redtot v}
    &
    \trg{H \triangleright e'\redtot v'}
    &
    \trg{H} = \trg{H_1;n\mapsto \_:\trgb{\eta};H_2}
    &
    \trg{H'} = \trg{H_1;n\mapsto v:\trgb{\eta};H_2}
    \\
    (\trgb{\eta}=\trgb{\bot})\text{ or }(\trgb{\eta}=\trg{k} \text{ and } \trg{v'}=\trg{k})
  }{
  		\trg{C, H\triangleright n:=e \with{e'}} \xtot{}
  		\trg{C, H'\triangleright \skipt}
  }{et-ac-t}
\end{center}
}{lp-sem}{Expression and state semantics of \LP. Omitted rules are the same as in \LU (\Cref{fig:lu-sem}).}

A second difference between \LP and \LU is that \LP has no booleans, while \LU has them.
This makes the compiler and the related proofs interesting, as discussed in the proof of \Cref{thm:comp-up-rsc}.

In \LP, the locations of interest to a monitor are all those that can be reached from the address \trg{0}.
Location \trg{0} itself is protected with a capability \trg{k_{root}} that is assumed to occur only in the code of the component in focus, so a component is defined as $\trg{C} \bnfdef\ \trg{k_{root} ; \OB{F} ; \OB{I}}$.
We can now give a precise definition of component-monitor agreement for \LP as well as a precise definition of attacker, which must care about the \trg{k_{root}} capability.
In the following, we use auxiliary function \fun{caps}{\cdot} to return the capabilities of the argument (analogously to how \fun{locs}{\cdot} returned the location of its argument).
\begin{align*}
    \trg{M\agree C}\isdef&\
      (\trg{M} = \trg{(\trgb{\set{\sigma}},\monred,\trgb{\sigma}_0,k_{root},\trgb{\sigma}_c)}) \mbox{ and } (\trg{C} = \trg{(k_{root} ; \OB{F} ; \OB{I})})
    \\
  \trg{C}\vdash\trg{A}:\trg{atk} \isdef&\ \trg{C}.\trg{k_{root}} \notin \fun{caps}{\trg{A}}
  \end{align*}
A monitor and compiler agree if they agree on \trg{k_{root}}. An attacker is valid if it does not contain \trg{k_{root}} in its codebase a priori, though it may obtain \trg{k_{root}} during its interaction with the component, if the component is not carefully written.

\paragraph{Remark}
This language uses what is commonly referred to as `data' capabilities, i.e., capabilities only for heap-allocated resources.
Another kind of capabilities exist in the literature: `code' capabilities, which grant the permission to jump to certain functions and execute their code.
Since our programs do not have function pointers (nor higher-order functions), the code capabilities used and required by a program can be tracked statically, so we omit them entirely.
If we extended our language with function pointers or higher-order functions, we would have to introduce code capabilities and pass said capabilities around in order to invoke the right function.
We leave such an extension for future work.

\subsection{Compiler from \LU to \LP}\label{sec:comp-up}
We now present \compup{\cdot}, the compiler from \LU to \LP, detailing how it uses the capabilities of \LP to achieve \rscomp. 
Then, we prove that \compup{\cdot} attains \rscomp. 

\smallskip
\myfig{
	\begin{align*}
		\compup{
			\src{\ell_{root} ; \OB{F} ; \OB{I}}
		} &= \trg{
			k_{root} ;
			\compup{\OB{F}}; \compup{\OB{I}}
		}
		\\
		\compup{
			\src{f(x) \mapsto s;\ret}
		} &= \trg{
			f(x)\mapsto\compup{\src{s}};\ret
		}
		&
		\compup{f} &= \trg{f}
	\end{align*}
	\hrule
	\begin{minipage}[t]{.45\textwidth}
	\begin{align*}
		\compup{
			\trues
		}
		=&\ \trg{0}  
		\\
		\compup{
			\falses
		}
		=&\ \trg{1} 
		\\
		\compup{
			\src{n}
		}
		=&\ \trg{n}
		\\
		\compup{
			\src{x}
		}
		=&\  \trg{x}
		\\
		\compup{
			\src{\ell}
		}
		=&\  
				\trg{\pair{n,v}}
		\\
		\compup{
			\src{!e}
		}
		=&\ \trg{
			!\projone{\compup{e}} \with{\projtwo{\compup{e}}}
		}
		\end{align*}
	\end{minipage}
	\begin{minipage}[t]{.45\textwidth}
		\begin{align*}
		\compup{
			\src{\pair{e_1,e_2}}
		}
		=&\ \trg{\pair{\compup{\src{e_1}},\compup{\src{e_2}}}}
		\\
		\compup{
			\src{\projone{e}}
		}
		=&\ \trg{\projone{\compup{\src{e}}}}
		\\
		\compup{
			\src{\projtwo{e}}
		}
		=&\ \trg{\projtwo{\compup{\src{e}}}}
		\\
		\compup{
			\src{e \op e'}
		}
		=&\ \trg{\compup{\src{e}} \op \compup{\src{e'}}}
		\\
		\compup{
			\src{e \bop e'}
		}
		=&\ 
			\trg{
				\compup{\src{e}} \bop \compup{\src{e'}}
			}
		\end{align*}
	\end{minipage}
  \smallskip
	\hrule

	\begin{minipage}[t]{.40\textwidth}
		\begin{align*}
		\compup{
			\skips
		}
		=&\ \skipt  
		\\
		\compup{
			\src{s_u;s}
		}
		=&\ \trg{\compup{\src{s_u}} ; \compup{\src{s}}}
		\\
		\compup{
			\src{\call{f}~e}
		}
		=&\ \trg{\call{f}~\compup{\src{e}}}
		\\
		\compup{
			\src{\letin{x}{e}{s}}
		}
		=&\ \trg{
				\letin{\trg{x}}{\compup{ \src{e}} }{\compup{ \src{s}}}
			}
		\\
		\compup{
			\src{
				\begin{aligned}
					&
					\iftes{
						\src{e}
					}{
						\src{s_t}
					\\
					&\
					}{\src{s_e}}
				\end{aligned}
			}
		}
		=&\ \trg{
				\begin{aligned}
					&
					\ifztet{\compup{\src{e}}
					}{
					\compup{\src{s_t}}
					\\
					&\
					}{\compup{\src{s_e}}}
				\end{aligned}
		}
		\end{align*}
	\end{minipage}
	\begin{minipage}[t]{.45\textwidth}
		\begin{align*}
		\compup{
			\src{
				\begin{aligned}
					&
					\letnews{\src{x}
					}{
						\src{e}
						\\
						&\
					}{\src{s}}
				\end{aligned}
			}
		}
		=&\
		\trg{
			\begin{aligned}
				&
				\letnewt{\trg{x_{loc}}}{\compup{\src{e}}}{
				\\
				&\
					\lethide{\trg{x_{cap}}}{x_{loc}}{
					\\
					&\ \ 
						\letint{\trg{x} ~}{~ \trg{\pair{x_{loc},x_{cap}}}}{\compup{s}}
					}
				}
			\end{aligned}
		}
		\\
		\compup{
			\src{x := e}
		}
		=&\ \trg{
			\begin{aligned}[t]
				&
				\letint{\trg{x1}~}{~\trg{\projone{x}}}{
				\\
				&\
					\letint{\trg{x2}~}{~\trg{\projtwo{x}}}{
					\\
					&\ \ 
						\trg{x1 :=}\ \compup{e} \with{x2}
					}
				}
			\end{aligned}
		}
		\end{align*}
	\end{minipage}
}{compup}{\compup{\cdot}, compilation of components and functions, expressions and statements from \LU to \LP.}

\compup{\cdot} takes as input a \LU component \src{C} and returns a \LP component (\Cref{fig:compup}).
The compiler performs a simple pass on the structure of functions, expressions and statements, using the information of the intended cross-language relation ($\beta$) to compile values.
The only non-straightforward cases are the compilation of booleans and locations.
Concerning the former, the compiler codes source booleans \trues to \trg{0} and \falses to \trg{1}.
Concerning the latter, each \LU location is encoded as a pair of a \LP location and the capability to access the location.
Location update and dereference are compiled accordingly and thus project each pair to the location and the capability in order to use each part.

This compiler solely relies on the capability abstraction of the target language as a defence mechanism to attain \rscomp.
Unlike existing security-preserving compilers, \compup{\cdot} needs neither dynamic checks nor other constructs that introduce runtime overhead to attain \rscomp~\cite{fstar2js,mfac,scoo-j,KULeuven-358154,catalin}.

\subsubsection{Proof of \rscomp}\label{sec:compup-proof}
\compup{\cdot} attains \rscomp (\Cref{thm:comp-up-rsc}).
In order to set up this theorem, we need to instantiate the cross-language relation for values, which we write as $\relatebeta$ here.
The relation is parametrised by a partial bijection $\beta : \src{location}\times\trg{natural\ number}\times\trg{tag}$ from source heap locations to target heap locations such that:
\begin{itemize}
  \item if $(\src{\ell_1},\trg{n},\trgb{\eta})\in\beta$ and $(\src{\ell_2},\trg{n},\trgb{\eta})\in\beta$ then $\src{\ell_1}=\src{\ell_2}$;
  \item if $(\src{\ell},\trg{n_1},\trg{\trgb{\eta}_1})\in\beta$ and $(\src{\ell},\trg{n_2},\trg{\trgb{\eta}_2})\in\beta$ then $\trg{n_1}=\trg{n_2}$ and $\trg{\trgb{\eta}_1}=\trg{\trgb{\eta}_2}$.
\end{itemize}
The bijection determines 
when a source location and a target location (and its capability) are related.

On values, $\relatebeta$ is defined as follows: 
\begin{itemize}
	\item $\trues\relatebeta\trg{0}$;  
	\item $\falses\relatebeta\trg{n}$ for any $\trg{n}\neq\trg{0}$; 
	\item $\src{n}\relatebeta\trg{n}$; 
	\item $\src{\ell}\relatebeta\trg{\pair{n,k}}$ if $(\src{\ell},\trg{n},\trg{k})\in\beta$;
	\item $\src{\ell}\relatebeta\trg{\pair{n,\_}}$ if $(\src{\ell},\trg{n},\trgb{\bot})\in\beta$;
	\item $\src{\pair{v_1,v_2}}\relatebeta\trg{\pair{v_1,v_2}}$ if $\src{v_1}\relatebeta\trg{v_1}$ and $\src{v_2}\relatebeta\trg{v_2}$.
\end{itemize}
This relation is then used to define the heap, monitor state and action relations (\Cref{fig:mon-st-act-rel}).
Heaps are related, written $\src{H}\relatebeta\trg{H}$, when locations related in $\beta$ point to related values.
States are related, written $\src{\Omega}\relatebeta\trgb{\Omega}$, when they have related heaps.
The action relation $\src{\alpha}\relatebeta\at$ is defined following the intuition of \Cref{sec:rsc-theory}.
\myfig{
	\begin{center}
		\typerule{Heap relation}{
			\src{H}\relatebeta\trg{H_1;H_2}
			&
			\src{\ell}\relatebeta\trg{\pair{n,\eta}}
			&
			\src{v}\relatebeta\trg{v}
			&
			\trg{H} = \trg{H_1;n\mapsto v:\trgb{\eta};H_2}
		}{
			\src{H;\ell\mapsto v} \relatebeta \trg{H}
		}{hrel-i}
		\typerule{Empty relation}{
		}{
			\srce \relatebeta \trg{\OB{k}}
		}{hrel-b}
	\end{center}
	\hrule
	\begin{center}
		\typerule{Related states -- Whole}{
			\src{\Omega}=\src{M ; \OB{F},\OB{F'} ; \OB{I} ; H \triangleright s}
			&
			\trgb{\Omega}=\trg{M ; \OB{F},\compup{\OB{F'}} ; \OB{I} ; H \triangleright s}
			&
			\src{M}\relatebeta\trg{M}
			&
			\src{H}\relatebeta\trg{H}
		}{
			\src{\Omega}\relatebeta\trgb{\Omega}
		}{state-rel-whole}
	\end{center}
	\hrule
	\begin{center}
		\typerule{Call relation}{
			\src{f}\relate\trg{f}
			&
			\src{v}\relatebeta\trg{v}
			&
			\src{H}\relatebeta\trg{H}
		}{
			\src{\clh{f}{v}{H}}\relatebeta\trg{\clh{f}{v}{H}}
		}{rel-cl}
		\typerule{Callback relation}{
			\src{f}\relate\trg{f}
			&
			\src{v}\relatebeta\trg{v}
			&
			\src{H}\relatebeta\trg{H}
		}{
			\src{\cbh{f}{v}{H}}\relatebeta\trg{\cbh{f}{v}{H}}
		}{rel-cb}
		\typerule{Return relation}{
			\src{H}\relatebeta\trg{H}
		}{
			\src{\rth{}{H}}\relatebeta\trg{\rth{}{H}}
		}{rel-rt}
		\typerule{Returnback relation}{
			\src{H}\relatebeta\trg{H}
		}{
			\src{\rbh{}{H}}\relatebeta\trg{\rbh{}{H}}
		}{rel-rb}
		\typerule{Epsilon relation}{
		}{
			\src{\epsilon}\relatebeta\trgb{\epsilon}
		}{rel-ep}
	\end{center}
}{mon-st-act-rel}{Heap, state and action relations.}

With this relation we state a backwards simulation lemma (\Cref{thm:back-sim}) that is necessary for the \rscomp proof (and that can also yield whole program compiler correctness).
Technically, since the semantics is deterministic, this lemma is derived from \emph{forward} simulation, which is the same statement but with the source and target reductions swapped~\cite{leroy2}.
\begin{lemma}[Backward simulation]\label{thm:back-sim}
  \begin{align*}
    \text{if }
    &\
    \trg{C, H \triangleright \compup{s}}\xtot{\trgb{\lambda}}\trg{C, H \triangleright \compup{s'}} 
    \text{ and }
    \src{C, H \triangleright s} \relatebeta \trg{C, H \triangleright \compup{s}}
    \text{ and }
    \src{\lambda}\relatebeta\trgb{\lambda}
    \\
    \text{ then }
    &\
    \src{C, H \triangleright s}\xtos{\lambda}\src{C, H \triangleright s'} 
    \text{ and }
    \exists \beta'\supseteq\beta\ldotp
    \src{C, H \triangleright s'}\relate_{\beta'}\trg{C, H \triangleright \compup{s}}
  \end{align*}
\end{lemma}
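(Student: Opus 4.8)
The plan is to prove this by \emph{backward simulation}: from each target reduction I read off a matching source reduction and maintain the state relation $\relatebeta$ throughout, extending the partial bijection $\beta$ whenever new locations are created. Because $\compup{\cdot}$ is essentially a structural homomorphism, source and target control flow stay in tight correspondence; the only real mismatches come from (i)~the heap models (abstract source locations versus concrete target addresses paired with capabilities), (ii)~the encoding of source booleans as $\trg{0}$ and nonzero naturals, and (iii)~source constructs whose compilation expands into several target statements, namely allocation and assignment. First I would use the trace rules to split the derivation $\trg{C, H \triangleright \compup{s}}\Xtot{\trgb{\alpha!}}\trgb{\Omega'}$ into a block of \emph{silent} target steps followed by the single action-emitting step that produces $\trgb{\alpha!}$.

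The technical core consists of two auxiliary results. The first is an \emph{expression-correspondence} lemma: whenever $\src{H}\relatebeta\trg{H}$ and $\compup{e}$ evaluates in $\trg{H}$ to a target value $\trg{v}$, the source expression $\src{e}$ evaluates in $\src{H}$ to some $\src{v}$ with $\src{v}\relatebeta\trg{v}$; it is proved by induction on $\src{e}$. Its interesting case is dereference, where $\compup{\cdot}$ supplies the second projection of the compiled pointer as the capability: the heap relation (\Cref{fig:mon-st-act-rel}) guarantees this capability matches the tag protecting the target cell, so the target read does not get stuck. The second result is a \emph{silent-step matching} lemma: the block of silent target steps is mirrored by silent source steps into a related state. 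Here I proceed by case analysis on the top-level shape of $\src{s}$. For sequencing, \src{let}, and conditionals the target takes one silent step matched by the corresponding source step via the expression lemma; the $\src{if}$/$\trg{ifz}$ case uses $\trues\relatebeta\trg{0}$ and $\falses\relatebeta\trg{n}$ for $\trg{n}\neq\trg{0}$ so that both languages select the same branch. For allocation, the single source step for $\src{\letnew{x}{e}{s}}$ matches the multi-statement target block of \Cref{fig:compup}; I \emph{extend} $\beta$ with a triple relating the fresh source location to the freshly allocated and freshly hidden target cell, re-establishing $\src{H}\relatebeta\trg{H}$ under some $\beta'\supseteq\beta$. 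Assignment is analogous: its compiled block only projects the location/capability pair and writes, matching one source update.

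For the final action-emitting step I match the source call, callback, return, or returnback rule to the target rule of the same shape. The function name is preserved because $\src{f}\relate\trg{f}$, the argument value is related by the expression lemma, and the heap carried by the action is related because the underlying heaps are related under the current bijection; this is precisely what the action relations of \Cref{fig:mon-st-act-rel} demand, so the emitted source action is related to $\trgb{\alpha!}$ (consistent with the hypothesis $\src{\alpha!}\relatebeta\trgb{\alpha!}$) and the successor states satisfy $\src{\Omega'}\relate_{\beta'}\trgb{\Omega'}$.

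The main obstacle is the bookkeeping around the multi-step expansions and the partial bijection. I must verify that no observable (``!''-decorated) action can be emitted in the \emph{middle} of a compiled allocation or assignment block---which holds because those intermediate target steps are purely internal heap operations---so that the source may legitimately collapse each block into a single step. I must also check that extending $\beta$ for a hidden location preserves both functionality conditions required of the bijection and that the newly allocated capability is genuinely fresh, so that the extended heap relation holds and all previously related values stay related. A secondary subtlety is that $\relatebeta$ is not injective from target to source (for instance $\trg{0}$ relates to both $\trues$ and $\src{0}$); this is harmless because the source \emph{state} determines its own next action, and I only need to confirm afterwards that this action is related to $\trgb{\alpha!}$.
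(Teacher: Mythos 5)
Your overall architecture is the same as the paper's: the paper proves this statement through \Cref{thm:action-det} (induction on the reduction, splitting silent steps from the action-emitting step) together with the expression lemma \Cref{thm:action-det-expr}, extending $\beta$ exactly at allocation as you do. The genuine gap is in your final paragraph, where you dismiss the non-injectivity of $\relatebeta$ as harmless because ``the source state determines its own next action.'' In the backward direction this is precisely where the argument breaks: the given source state may be \emph{stuck} where the target steps. Concretely, since both $\trues\relatebeta\trg{0}$ and $\src{0}\relatebeta\trg{0}$, a source state related to the target one can contain $\iftes{\src{0}}{\src{s_1}}{\src{s_2}}$ where the target holds $\ifztet{\trg{0}}{\cdots}{\cdots}$: the target steps by \Cref{tr:et-ift}, but no source rule applies to an $\src{if}$ on a numeral, so the source cannot emit anything at all. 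Nothing in the hypotheses excludes such states --- the state relation constrains only monitors and heaps, and $\src{\alpha!}\relatebeta\trgb{\alpha!}$ constrains only the emitted action, not the values embedded in $\src{s}$. The same problem recurs at dereference: a source value related to a target location $\trg{\pair{n,k}}$ need not itself be a location $\src{\ell}$ (it can be a pair), in which case the source $\src{!e}$ is stuck while the target read succeeds.

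The paper resolves this by proving a statement with a different shape: \Cref{thm:action-det} quantifies existentially over the \emph{set} of source actions and substitutions related to the target ones, and its proof explicitly picks the right representative (in the \Cref{tr:et-ift} case it observes that both $\src{0}\relatebeta\trg{0}$ and $\trues\relatebeta\trg{0}$ and ``picks the second''); correspondingly, the backtranslation returns a set of source contexts, guaranteeing that the choice that keeps the source running is actually realized in the \rscomp\ proof. To repair your proposal you must either build this existential choice into the statement you prove --- choosing the source values/substitution rather than taking the source state as given --- or strengthen the cross-language state relation so that source values occurring in boolean (respectively, location) positions really are booleans (respectively, locations); the latter is what typing provides for the second compiler \compap{\cdot}, but it is unavailable in the untyped setting of \compup{\cdot}.
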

The partial bijection $\beta$ grows as we consider successive steps of program execution in our proof.
For example, if executing \src{\letnew{x}{e}{s}} creates some source  location \src{\ell}, then executing its compiled counterpart will create some target location \trg{n} and then protect that location with a fresh capability \trg{k}.
At this point we add $(\src{\ell}, \trg{n}, \trg{k})$ to $\beta$.

\paragraph{Monitor Relation}\label{sec:mon-rel-conc}
In \Cref{sec:rsc-theory}, we left the monitor relation abstract. Here, we define it for our two languages.
Two monitors are related when they can \emph{simulate} each other on related heaps.
Given a monitor-specific relation $\src{\sigma} \relate \trgb{\sigma}$ on monitor states, we say that a relation \mc{R} on source and target monitors is a \emph{bisimulation} if the following hold whenever $\src{M}=\src{(\set{\sigma},\monred,\sigma_0,\ell_{root},\sigma_c)}$ and $\trg{M} = \trg{(\trgb{\set{\sigma}},\monred,\trgb{\sigma}_0,k_{root},\trgb{\sigma}_c)}$ are related by \mc{R}:
\begin{enumerate}
  \item $\src{\sigma_0} \relate \trg{\trgb{\sigma}_0}$, and $\src{\sigma_c} \relate \trg{\trgb{\sigma}_c}$, and
  \item For all $\beta$ containing $(\src{\ell_{root}}, \trg{0}, \trg{k_{root}})$ and all $\src{H}, \trg{H}$ with $\src{H} \relatebeta \trg{H}$: 
  \begin{enumerate}
    \item $(\src{\sigma_c}, \src{H}, \_) \in \src{\monred}$ iff $(\trg{\trgb{\sigma}_c}, \trg{H}, \_) \in \trg{\monred}$, and 
    \item $(\src{\sigma_c}, \src{H}, \src{\sigma'}) \in \src{\monred}$ and $(\trg{\trgb{\sigma}_c}, \trg{H}, \trg{\trgb{\sigma}'}) \in \trg{\monred}$ 
    imply 
    
    \noindent $\src{(\set{\sigma}, \monred, \sigma_0, \ell_{root}, \sigma')} \mc{R} \trg{(\trgb{\set{\sigma}}, \monred, \trgb{\sigma}_0, k_{root}, \trgb{\sigma}')}$.
  \end{enumerate}
\end{enumerate}
In words, \mc{R} is a bisimulation only if $\src{M} \mc{R} \trg{M}$ implies that $\src{M}$ and $\trg{M}$ simulate each other on heaps related by \emph{any} $\beta$ that relates $\src{\ell_{root}}$ to $\trg{0}$.
In particular, this means that neither $\src{M}$ nor $\trg{M}$ can be sensitive to the \emph{specific} addresses allocated during the run of the program. 
However, they can be sensitive to the ``shape'' of the heap or the values stored in the heap. 
Note that the union of any two bisimulations is a bisimulation. 
Hence, there is a largest bisimulation, which we denote as $\relate$.
Intuitively, $\src{M}\relate\trg{M}$ implies that $\src{M}$ and $\trg{M}$ encode the same safety property (up to the relation $\relatebeta$). 
With all the boilerplate for \rscomp in place, we state our main theorem.

\begin{theorem}[\compup{\cdot} attains \rscomp]\label{thm:comp-up-rsc}
  $\vdash\compup{\cdot} : \rscomp$
\end{theorem}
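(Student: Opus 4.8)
The plan is to not prove \Cref{def:rsc} directly but to establish the equivalent property-free characterisation $\pfrscomp$ (\Cref{def:rsc-eq}) and then conclude via \Cref{thm:rsc-prf-eq}; this frees the argument from reasoning about monitors and reduces the goal to a statement purely about traces and contexts. So I would first fix an arbitrary source component $\src{C}$, a valid target attacker $\trg{A}$ with $\compup{C}\vdash\trg{A}:\trg{atk}$ and $\vdash\trg{A\hole{\compup{C}}}:\trg{whole}$, and a target trace $\trgb{\OB{\alpha}}$ with $\SInitt{A\hole{\compup{C}}}\Xtot{\trgb{\OB{\alpha}}}\trg{\_}$. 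The obligation becomes: exhibit a source attacker $\src{A}$ with $\src{C}\vdash\src{A}:\src{atk}$ and $\vdash\src{A\hole{C}}:\src{whole}$, together with a source trace $\src{\OB{\alpha}}$ produced from $\SInits{A\hole{C}}$, such that $\strip{\src{\OB{\alpha}}}\relate\strip{\trgb{\OB{\alpha}}}$.

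The heart of the argument is a \emph{trace-based} backtranslation turning $\trgb{\OB{\alpha}}$ into the source context $\src{A}$. Since a well-formed trace strictly alternates context-to-component ($\com{?}$) and component-to-context ($\com{!}$) actions, the context controls only the $\com{?}$ actions — calls into the component and returns from callbacks. I would therefore build $\src{A}$ as source functions that, using a private counter recording how often control has entered the context, replay on the $i$-th activation the $i$-th $\com{?}$ action of $\trgb{\OB{\alpha}}$: each activation issues the corresponding call or return carrying the source value related by $\relatebeta$ to the recorded target argument, after performing on its own locations any heap updates the action's heap demands. Decoding the target arguments uses $\relatebeta$ in the backward direction; crucially, attacker validity ($\trg{k_{root}}\notin\fun{fn}{\trg{A}}$) means the backtranslation never needs to mention $\src{\ell_{root}}$, so the resulting $\src{A}$ satisfies $\src{C}\vdash\src{A}:\src{atk}$.

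It then remains to show that $\src{A\hole{C}}$ emits a source trace $\src{\OB{\alpha}}$ with $\src{\OB{\alpha}}\relatebeta\trgb{\OB{\alpha}}$, from which $\strip{\src{\OB{\alpha}}}\relate\strip{\trgb{\OB{\alpha}}}$ follows by the heap relation rules of \Cref{fig:mon-st-act-rel}. I would argue by induction on $\trgb{\OB{\alpha}}$, maintaining the invariant that the current source and target states are related at some partial bijection $\beta$ that grows monotonically as locations are allocated. On each $\com{?}$ segment the backtranslated context reproduces the action by construction and keeps the states related; on each $\com{!}$ segment — the component's response — I would appeal to the Backward Simulation Lemma (\Cref{thm:back-sim}), which, for a compiled body $\compup{s}$ running from a state related to $\src{s}$, transfers each emitted $\com{!}$ action back to a related source step and preserves relatedness at some $\beta'\supseteq\beta$. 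Determinism of both semantics pins the component's responses to exactly the $\com{!}$ actions of $\trgb{\OB{\alpha}}$, so splicing the two kinds of segments yields the full related source trace.

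The main obstacle is the mismatch in heap models compounded by the lossy boolean encoding. A target context manipulates concrete addresses and may fabricate values such as a pair $\trg{\pair{n,v}}$ naming an address it cannot legitimately access, or pass a numeral where the source expects a boolean (recall $\falses\relatebeta\trg{n}$ for every $\trg{n}\neq\trg{0}$, so target-to-source decoding is not even a function). The delicate part is to show these moves never let the adversary observe or perturb a safety-relevant, root-reachable location in a way the backtranslated source adversary cannot mirror: the capability discipline confines access to protected locations, so any address the context can genuinely act upon is recorded in $\beta$ and hence has a source image, while the untyped source semantics together with the deliberately permissive boolean clause of $\relatebeta$ absorb the numeral-versus-boolean ambiguity, ensuring the source $\src{if}$ and the target $\trg{ifz}$ take matching branches. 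Establishing that the state relation $\relatebeta$ survives \emph{arbitrary} adversarial steps, and not merely compiled-component steps, is the crux on which the whole proof turns.
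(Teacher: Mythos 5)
Your overall route coincides with the paper's: reduce the goal to \pfrscomp via the equivalence of \Cref{thm:rsc-prf-eq} (instantiated for this compiler as \Cref{thm:rsc-prf-eq2}), build the source attacker by a trace-based backtranslation driven by a step counter, and discharge the \com{!}-segments with the backward simulation (\Cref{thm:back-sim}) while growing the partial bijection $\beta$ monotonically. The genuine gap is in your final paragraph, exactly where you declare the numeral-versus-boolean ambiguity ``absorbed'' by the untyped source semantics. It is not: although \LU is untyped, its conditional steps only on genuine booleans (\Cref{tr:eus-ift}), whereas the target \trg{ifz} steps on any numeral. So if the trace records a call $\trg{\clh{f}{0}{H}}$ and your backtranslation deterministically decodes $\trg{0}$ as $\src{0}$, a component whose body of $\src{f}$ begins with $\src{\ifte{x}{s}{s'}}$ gets \emph{stuck} in the source while its compilation proceeds in the target; your inductive invariant breaks and the required source trace is never produced. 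You correctly observe that target-to-source decoding ``is not even a function,'' but you then build a single context from ``the'' related source value, and no single uniform choice of decoding works for all components.

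The paper's fix is structural rather than semantic: because $\relatebeta$ is not injective from target to source ($\trg{0}$ relates to both $\src{0}$ and $\trues$), the backtranslation $\backtrup{\cdot}$ returns a \emph{set} of source contexts, one per choice of decoding, and its correctness statement (\Cref{thm:backtr-corr}) claims only that \emph{some} member of the set, linked with $\src{C}$, reproduces a related trace. The existence of a successful choice is itself a lemma (\Cref{thm:action-det}): whenever the compiled component makes progress after a \com{?}-action, there exists a backtranslated value and substitution under which the related source component makes the related step without getting stuck. Your proposal needs this existential layer; an induction carrying one deterministically-built context cannot be closed. Everything else in your sketch — attacker validity ensuring $\src{\ell_{root}}$ is never mentioned, the capability discipline confining adversarial writes to locations recorded in $\beta$, and preservation of the state relation under arbitrary adversarial steps — matches the paper's argument.
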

We outline our proof of \Cref{thm:comp-up-rsc}, which relies on a backtranslation we denote \backtrup{\cdot}.
Intuitively, \backtrup{\cdot} takes a target trace \alst and builds a \emph{set} of source contexts such that \emph{one} of them when linked with the source program \src{C}, produces a related trace \alss in the source (\Cref{thm:backtr-corr}).
In prior work, backtranslations return a single context~\cite{scoo-j,mfac,ahmedCPS,nonintfree,max-embed,Ahmed:2008:TCC:1411203.1411227,popl-backtrans-j}. 
This is because they all, explicitly or implicitly, assume that $\relate$ is injective from source to target. 
Under this assumption, the backtranslation is unique: a target value \trg{v} will be related to at most one source value \src{v}.
We do away with this assumption (e.g., the target value \trg{0} is related to both source values \src{0} and \trues) and thus there can be multiple source values related to any given target value. 
This results in a set of backtranslated contexts, of which at least one will reproduce the trace as we need it as presented in \Cref{ex:manyc}.
\begin{example}[Backtranslating a single context into a set]\label{ex:manyc}
  Consider a source component defining a single function \src{succ(x)\mapsto \letin{y}{x+1}{\skips}} and a target context linking against the compilation of that component. 
  Assume the context defines \trg{main(y)\mapsto \call{succ}\ 0}, which means that the trace semantics of the compiled component contains traces of the form \trg{\clh{succ}{0}{\_}}.
  Simply by reasoning at the target level, we cannot know whether \trg{0} will be used as a boolean (e.g., in an \trg{\ifzte{e}{e'}{e''}}) or as a natural number (e.g., in a \trg{x+1}).
  Thus, the backtranslation generates two contexts that call \src{succ} with both values that relate to \trg{0}:
  \begin{align*}
    \{
      \src{main(y)\mapsto \call{succ}\ \trues} ;
      \src{main(y)\mapsto \call{succ}\ 0} ;
    \}
  \end{align*}
  When the first \src{main} is linked with \src{succ} and they execute, the execution gets stuck inside \src{succ}: the \src{x+1} expression is effectively \src{\trues+1}, which does not reduce.
  Since we know that the compiled program emits a !-decorated action, the same must be true in the source too.
  Thus, it is not possible that the execution gets stuck while executing code of the component and there must be another context that does not make the execution get stuck.
  In this case, that context is the one with the \emph{second} implementation of \src{main}.
   
\end{example}

We bypass the lengthy technical setup for this proof and provide an informal description of why the backtranslation achieves what it is supposed to using an example (\Cref{ex:bt}).
We refer the interested reader to the accompanying technical report for full details on the backtranslation~\cite{rsc-techrep}.

\paragraph{Notation}
\Cref{ex:bt} needs to reason about lists of finite length whose elements are pairs, which are not a base type in our language.
However, they can be easily encoded as sequences of pairs, with \units being the empty list.
Thus, list \src{\pair{1,1}::\pair{2,2}::\pair{3,3}} is \src{\pair{\pair{1,1},\pair{\pair{2,2},\pair{\pair{3,3},\units}}}}.
To maintain a lightweight notation, we use some syntactic sugar to encode lists of finite length to our language.
We use metavariable \src{L} to indicate a pointer to a heap-allocated list of that form.
Adding element \src{v} to list \src{L} is denoted as \src{L::v}; given that the content of \src{L} was some list \src{list}, adding an element amounts to making \src{L} point to \src{\pair{v,list}}.
Given an element \src{\pair{v,n}}, we use notation \src{L(n)} to look that element up and return \src{v} (or \units if no element \src{\pair{v,n}} is in \src{L}).
We can easily encode this lookup as a series of projections on the the list and then on each element of the list (note that our language is untyped, so this is possible).

\begin{example}[Trace backtranslation]\label{ex:bt}
\backtrup{\cdot} first generates empty method bodies for all context methods called by the compiled component.
Then it backtranslates each \emph{action} on the given trace, generating code blocks that mimic that action and places that code inside the appropriate method body.
The figure below shows an example trace on the left and the code blocks generated for each action in the trace on the right.
\begin{center}%
    \begin{tikzpicture}[remember picture]
      \node[align=left](trace){ 
        ${\scriptstyle (1)}\left. \trg{\clh{f}{0}{ (\bl{\overbrace{\trg{1\mapsto 4:\bot}}^{\tikz\node(h1){};}}, \bl{\overbrace{\trg{2\mapsto 3:\bot}}^{\tikz\node(h2){};}} )} } \right. $
        \\
        ${\scriptstyle (2)}\left. \trg{\rth{}{( 1\mapsto4:\bot,2\mapsto\pair{3,k}:\bot, \bl{\overbrace{\trg{3\mapsto 11:k}}^{\tikz\node(h3){};}} )}{}} \right. $
        \\
        ${\scriptstyle (3)}\left. \trg{\clh{f}{2}{( \bl{\underbrace{\trg{1\mapsto 55:\bot}}_{\tikz\node(h4){};}}, 2\mapsto\pair{3,k}:\bot, \bl{\underbrace{\trg{3\mapsto 15:k}}_{\tikz\node(h5){};}} )}} \right. $
        };

      \node[align=left, right of=trace, xshift = 17em](code){
        \src{main(z)\mapsto}
        \\
        $\left.
        \begin{aligned}[c]
          &\src{\tikz\node(ch1){}; \letnew{x}{4}{L::\pair{x,1}};}\quad
          \\
          &\src{\tikz\node(ch2){}; \letnew{x}{3}{L::\pair{x,2}};}
          \\
          &\src{\tikz\node(zzz){}; \call{f}~0;}
        \end{aligned}
        \right]{\scriptstyle (1)}$
        \\
        $\left.\src{\tikz\node(ch3){};\, \letin{x}{!L(2)}{L::\pair{x,3}};} \qquad \right]{\scriptstyle (2)}$
        \\
        $\left.
        \begin{aligned}[c]
          &\src{\tikz\node(ch4){}; \letin{x}{L(1)}{x:=55};}\quad
          \\
          &\src{\tikz\node(ch5){}; \letin{x}{L(3)}{x:=15};}
          \\
          &\src{\tikz\node(zz){}; \call{f}~2;}
        \end{aligned}
        \right]{\scriptstyle (3)}$
      };

      \draw[dotted] (h1.-90) |- ([yshift=1em]h1.-90) -| ([xshift=-3em]ch1.0) -- (ch1.0);
      \draw[dotted] (h2.-90) |- ([yshift=.5em]h2.-90) -| ([xshift=-4em]ch2.0) -- (ch2.0);
      \draw[dotted] (h3.-90) |- ([yshift=.5em]h3.-90) -| ([xshift=-3em]ch3.0) -- (ch3.0);
      \draw[dotted] (h4.90) |- ([yshift=-.5em]h4.-90) -| ([xshift=-1em]ch4.0) -- (ch4.0);
      \draw[dotted] (h5.90) |-  ([xshift=-3em]ch5.0) -- (ch5.0);
    \end{tikzpicture}
  \end{center}%
Backtranslated code maintains a support data structure at runtime, a list of locations denoted \src{L} that are known to the target.  Locations are looked up in this list based on their second field \src{n}, which is their target-level address.
Since we have access to the whole trace, we know how many locations we will add to \src{L} so we know its length.
In order to backtranslate the first call, we need to set up the heap with the right values and then perform the call.
In the diagram, dotted lines describe which source statement generates which part of the heap.
The return only generates code that will update the list \src{L} to ensure that the context has access to all the locations it knows in the target too.
In order to backtranslate the last call we look up the locations to be updated in \src{L} so we can ensure that when the \src{\call{f}~2} statement is executed, the heap is in the right state.	
\end{example}

For the backtranslation to be used in the proof we need to prove its correctness, i.e., that \backtrup{\alst} generates a context \src{A} that, together with \src{C}, generates a trace \alss related to the given target trace \alst.
As before, the relatedness of actions (and of states) is stated with respect to a partial bijection $\beta$ between source and target locations (and capabilities) that grows as the execution progresses.
\begin{theorem}[\backtrup{\cdot} is correct]\label{thm:backtr-corr}
  \begin{align*}
    \text{if }
    &
    \trg{A\hole{\compup{C}}} \Xtot{\alst} \trgb{\Omega}
    \text{ then }
    \exists\src{A}\in\backtrup{\alst}\ldotp 
    \src{A\hole{C}}\relatebeta\trg{A\hole{\compup{C}}}
    \text{ and }
    \src{A\hole{C}} \Xtos{\alss} \src{\Omega}
    \\
    \text{ and }
    &
    \exists \beta'\supseteq\beta \ldotp
    \alss\relate_{\beta'}\alst
    \text{ and }
    \src{\Omega}\relate_{\beta'}\trgb{\Omega}.
  \end{align*}
\end{theorem}
This theorem immediately implies that $\vdash\compup{\cdot}
: \pfrscomp$, which, by \Cref{thm:rsc-prf-eq2} below, implies that
$\vdash\compup{\cdot} : \rscomp$.

\begin{theorem}[\pfrscomp and \rscomp are equivalent for \compup{\cdot}]\label{thm:rsc-prf-eq2}
    \begin{align*}
      \vdash\compup{\cdot}: \pfrscomp \iff \vdash\compup{\cdot}:\rscomp
    \end{align*}
\end{theorem}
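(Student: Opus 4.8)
The plan is to prove the two implications separately, with a single \emph{monitor-transfer lemma} as the engine in both directions. That lemma states: if $\src{M}\relate\trg{M}$ (the monitors lie in the largest bisimulation of \Cref{sec:mon-rel-conc}) and $\strip{\src{\OB{\alpha}}}\relatebeta\strip{\trgb{\OB{\alpha}}}$, then $\src{M}\vdash\src{\OB{\alpha}}$ iff $\trg{M}\vdash\trgb{\OB{\alpha}}$. I would prove it by induction on trace length, replaying the monitor small-step semantics (\Cref{tr:ms-t-s-b,tr:ms-t-s}) on both sides in lock-step: at each position the two restricted heaps are related by some $\beta$ containing $(\src{\ell_{root}},\trg{0},\trg{k_{root}})$, so bisimulation clause~(2a) forces the source monitor to get stuck exactly when the target one does (using \Cref{tr:ms-us} and its target analogue for the restriction to reachable locations), while clause~(2b) keeps the reached states related for the next step.

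For the forward direction ($\pfrscomp\Rightarrow\rscomp$) I would argue directly. Fix $\src{C},\src{M},\trg{M}$ with $\src{M}\vdash\src{C}:\src{rs}$ and $\src{M}\relate\trg{M}$, and let $\trg{A}$ be any valid target attacker, with $\trg{M}\agree\compup{\src{C}}$, producing a trace $\trgb{\OB{\alpha}}$ from $\SInitt{A\hole{\compup{\src{C}}}}$. By $\pfrscomp$ there exist $\src{A},\src{\OB{\alpha}}$ with $\SInits{A\hole{C}}\Xtos{\OB{\alpha}}\src{\_}$, $\src{C}\vdash\src{A}:\src{atk}$, and $\strip{\src{\OB{\alpha}}}\relatebeta\strip{\trgb{\OB{\alpha}}}$. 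Robust safety of $\src{C}$ gives $\src{M}\vdash\src{A\hole{C}}:\src{safe}$, hence $\src{M}\vdash\src{\OB{\alpha}}$ by \Cref{tr:mon-valid}; the transfer lemma then yields $\trg{M}\vdash\trgb{\OB{\alpha}}$. Since $\trg{A}$ and $\trgb{\OB{\alpha}}$ were arbitrary, $\trg{M}\vdash\compup{\src{C}}:\trg{rs}$.

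For the backward direction ($\rscomp\Rightarrow\pfrscomp$) I would use the contrapositive. Suppose $\pfrscomp$ fails: there are $\src{C},\trg{A},\trgb{\OB{\alpha}}$ with $\trg{A}$ a valid attacker driving $\compup{\src{C}}$ to produce $\trgb{\OB{\alpha}}$, yet \emph{no} valid source attacker and source trace whose stripped heaps are $\relatebeta$-related to $\strip{\trgb{\OB{\alpha}}}$. From $\trgb{\OB{\alpha}}$ I would build a safety monitor violated precisely by (extensions of) its sequence of restricted heaps, together with a related source monitor $\src{M}$ rejecting every source trace whose restricted heaps are related, position by position, to $\strip{\trgb{\OB{\alpha}}}$. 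Two facts close the argument. First, $\src{M}\vdash\src{C}:\src{rs}$: if some source attacker drove $\src{C}$ into a trace hitting the source bad set, a prefix of that trace would be a producible source trace related to $\strip{\trgb{\OB{\alpha}}}$ (prefix-closure of $\Xtos{}$ follows from \Cref{tr:eus-tr-trans}), contradicting the failure hypothesis. Second, by construction the target monitor $\trg{M}$ gets stuck on $\trgb{\OB{\alpha}}$, so $\trg{M}\not\vdash\compup{\src{C}}:\trg{rs}$. As $\src{M}\relate\trg{M}$, the triple $(\src{C},\src{M},\trg{M})$ refutes $\rscomp$.

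The hard part is the monitor construction in the backward direction, and in particular making the two monitors a \emph{genuine} bisimulation pair. The obstruction is exactly the non-injectivity of $\relatebeta$ that already forces the set-valued backtranslation (e.g.\ $\trues\relatebeta\trg{0}$ and $\src{0}\relatebeta\trg{0}$, and $\falses\relatebeta\trg{n}$ for every $\trg{n}\neq\trg{0}$): a single source restricted heap can relate to several target restricted heaps and conversely, so the naive ``reject exactly $\strip{\trgb{\OB{\alpha}}}$'' monitor may demand incompatible step/stuck behaviour on a source heap that is related both to a bad and to a good target heap, and then it admits no bisimilar partner. Resolving this requires choosing the bad set at a relation-respecting granularity and then re-checking $\src{M}\vdash\src{C}:\src{rs}$ for that coarser set; the delicate point is that the coarsening must not pull in any heap that $\src{C}$ can actually produce, which is where the failure hypothesis---read as disjointness between $\src{C}$'s producible restricted heaps and those related to $\strip{\trgb{\OB{\alpha}}}$---together with the precise shape of $\relatebeta$ on the monitored (reachable-from-root) region must be used. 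Discharging bisimulation clauses~(2a)--(2b) for the resulting pair, and verifying that the per-state restriction to locations reachable from the root interacts correctly with every admissible $\beta$, are the remaining routine but careful obligations.
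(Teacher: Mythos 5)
Your monitor-transfer lemma and your forward direction are, in substance, exactly the paper's proof: the paper establishes precisely that related monitor states either both step or both get stuck on related heaps, via the invariant that the monitor states remain related along the trace (clauses (2a) and (2b) of the bisimulation of \Cref{sec:mon-rel-conc}), and then obtains $\pfrscomp\Rightarrow\rscomp$ by feeding the $\pfrscomp$-provided related source trace through source robust safety, just as you do. That half of your proposal is sound and essentially the paper's argument.

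The gap is in your backward direction, at exactly the point you flagged but then dismissed as ``routine but careful.'' Clause (2a) quantifies over \emph{all} related heap pairs, so the stuck sets of any bisimilar monitor pair must be saturated under the composite $\relatebeta^{-1};\relatebeta$, and this saturation strictly outgrows what your failure hypothesis excludes. Concretely: suppose $\strip{\trgb{\OB{\alpha}}}$ contains $\trg{1}$ at a monitored location. Since $\falses\relatebeta\trg{n}$ for every $\trg{n}\neq\trg{0}$, the source stuck set must contain heaps with $\falses$ there; clause (2a) then forces the target monitor to be stuck on heaps with $\trg{2},\trg{3},\dots$ there; and since $\src{n}\relatebeta\trg{n}$, pulling back once more forces the source monitor to be stuck on heaps with $\src{2},\src{3},\dots$ there. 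Your failure hypothesis only says no producible source trace is related to $\strip{\trgb{\OB{\alpha}}}$ itself, i.e.\ it excludes $\falses$ or $\src{1}$ at that location; it says nothing about $\src{2}$. If $\src{C}$ can produce such a trace, your constructed $\src{M}$ rejects a producible trace, $\src{M}\nvdash\src{C}:\src{rs}$ fails, and $\rscomp$ yields no contradiction. So the obligation ``the coarsening must not pull in any producible heap'' is not discharged by the failure hypothesis plus the shape of $\relatebeta$; closing it requires knowing which target traces compiled code can actually emit---in this development that is precisely the backtranslation result (\Cref{thm:backtr-corr}), which proves $\pfrscomp$ outright. Consistently, the paper's development only ever \emph{uses} the forward implication (backtranslation gives $\pfrscomp$, then $\Rightarrow$ gives $\rscomp$), and its own backward direction is a terse contradiction argument pitched at the level of the bisimulation clauses for given related monitors; it never constructs a bad-prefix monitor, and so never meets the saturation problem your construction runs into.
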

The intuition behind the proof of \Cref{thm:rsc-prf-eq2} follows the intuition we gave after \Cref{thm:rsc-prf-eq}.
The only missing element in the proof is to demonstrate that related monitor states either both step or both get stuck on related source and target actions. 
We prove this by showing an invariant, namely that the monitor states always remain related. This follows from the rules of 
\Cref{fig:mon-st-act-rel}.
Finally, recall that the function \strip{\alpha} returns just the heap
of the action.  This, with the relatedness of heaps, ensures that
either both monitors step or both get stuck on related actions.

\begin{example}[Compiling a secure program]\label{ex:request}
To illustrate \rscomp at work, consider the following source component \src{C_a}, which manages an account whose balance is security-relevant. The balance is stored in a location (\src{\ell_{root}} that is tracked by the monitor).
\src{C_a} provides functions to deposit to the account as well as to print the account balance.
\begin{align*}
  \src{deposit(x)\mapsto}
    &\ 
    \src{\letins{q}{abs(x)}{\letin{amt}{\,!\ell_{root}}{\ell_{root}:=amt+q}}}
  \\
  \src{balance(x)\mapsto}
    &\
    \src{\letins{tmp}{!\ell_{root}}{\skips}}
\end{align*}

\src{C_a} never leaks the sensitive location \src{\ell_{root}} to an attacker.
Additionally, an attacker has no way to decrement the amount of the balance since deposit only adds the absolute value \src{abs(x)} of its input \src{x} to the existing balance.

By compiling \src{C_a} with \compup{\cdot}, we obtain the following target program.
For simplicity of reading, we provide a simplification of the compiled programs below the output of the compiler.
\begin{align*}
  \trg{deposit(x)\mapsto}
    &\
    \letint{\trg{q}}{\trg{abs(x)}}{
      \\&\ \
      \letint{\trg{amt}}{\trg{!\projone{\pair{0,k_{root}}} \with{\projtwo{\pair{0,k_{root}}}}}}{
      \\&\ \ \ 
        \letint{\trg{l1}}{\trg{\projone{\pair{0,k_{root}}}}}{
        \\&\ \ \ \ 
          \letint{\trg{l2}}{\trg{\projtwo{\pair{0,k_{root}}}}}{
          \\&\ \ \ \ \
            \trg{l1 := amt+q \with{l2}}
          }
        }
      }
    }
  \\
  \trg{balance(x)\mapsto}
    &\
    \trg{\letint{tmp}{!\projone{\pair{0,k_{root}}}\with{\projtwo{\pair{0,k_{root}}}}}{\skipt}}
  \\
  &\ \text{ simplified as }
  \\
  \trg{deposit(x)\mapsto}
    &\
      \letint{\trg{q}}{\trg{abs(x)}}{
      \\
        &\ \ 
        \letint{\trg{amt}}{\trg{!0}\with{k_{root}}}{
        \\&\ \ \ 
        \trg{0:=amt+q}\with{k_{root}}
        }
      }
  \\
  \trg{balance(x)\mapsto}
    &\
    \trg{\letint{tmp}{!0\with{k_{root}}}{\skipt}}
\end{align*}

Recall that location \src{\ell_{root}} is mapped to location \trg{0} and protected by the \trg{k_{root}} capability.
In the compiled code, while location \trg{0} is freely computable by a target attacker, capability \trg{k_{root}} is not.
Since that capability is not leaked to an attacker by the code above, an attacker will not be able to tamper with the balance stored in location \trg{0}, even though it has direct access to \trg{0}.
\end{example}

\paragraph{Failing to Attain \rscomp}
We now provide an insight of how could we detect whether a compiler is not \rscomp.
In fact, if we had made a security-relevant mistake in the compiler, we would like one of the proofs to fall apart.
Two ways to fail at attaining \rscomp come to mind: not using capabilities or leaking them; we explore the former for simplicity.
Let us assume that the compiler does not protect compiled code with capabilities.
Intuitively, compiled code is insecure because location \trg{0} is freely accessible to the attacker, who can alter its content at will.
In terms of proofs, this vulnerability manifests itself \Cref{thm:backtr-corr}, whose proof would fail.
Let us consider a valid target trace emitted by code compiled with the vulnerable compiler we discuss here.
Such a trace would be one where the attacker changes the value of \trg{0} as indicated below.
There, the content of that location is reset to \trg{3} between the return and the call (i.e., when the attacker executes).
\begin{align*}
  \clh{f}{v}{0\mapsto 3:\bot} \cdot \rth{v}{0\mapsto 1:\bot} \cdot \clh{f}{v}{0\mapsto 3:\bot}
\end{align*}
No source attacker can do that, because location \src{\ell_{root}} is not accessible to them.
This would make the proof of \Cref{thm:backtr-corr} fail and reveal a hint that something that was supposed to be secure is now vulnerable.

\section{\rscomp via Bisimulation}\label{sec:comp-effi}
If the source language has a verification system that enforces robust safety, proving that a compiler attains \rscomp \emph{may} be simpler than that of \Cref{sec:rsc-instance} in some cases, as a backtranslation may not be needed at all.
To demonstrate this, we consider a specific class of monitors, namely those that enforce type invariants on a specific set of locations. 
Our source language, \LA, is similar to \LU but it has a type system that accepts only source programs whose traces the source monitor never rejects. 
Our target language is mostly unchanged. Our compiler \compap{\cdot} is directed by typing derivations, and its proof of \rscomp relies on a cross-language invariant of program execution rather than a backtranslation.
A second, independent goal of this section is to show that \rscomp is compatible with concurrency. 
Consequently, our source and target languages include constructs for forking threads.

\subsection{The Source Language \LA}\label{sec:src-a}
\myfig{
  \begin{gather*}
  	\begin{aligned}
    \mi{Components}~\src{C} \bnfdef&\ \src{\Delta ; \OB{F} ; \OB{I}}
    &
    \mi{Heaps}~\src{H} \bnfdef&\ \srce \mid \src{H ; \ell\mapsto v:\tau}
    \\
    \mi{Types}~\src{\tau} \bnfdef&\ \src{\Bools} \mid \src{\Nats} \mid \src{\tau\times\tau} \mid \src{\Refs{\tau}} \mid \UNS
    &
    \mi{Envs.}~\src{\Gamma} \bnfdef&\ \srce \mid \src{\Gamma; (x:\tau)}
    \end{aligned}
    \\
    \begin{aligned}
    \mi{Superf.\ Types}~\src{\varphi} \bnfdef&\ \src{\Bools} \mid \src{\Nats} \mid \src{\UNS\times\UNS} \mid \src{\Refs{\UNS}}
    \\
    \mi{Statements}~\src{s} \bnfdef&\ \cdots
    \mid \src{\fork{s}}
    \mid \src{\myendorse{x}{e}{\varphi}{s}}
    \end{aligned}
    \\
    \begin{aligned}
    \mi{Monitors}~\src{M} \bnfdef&\ \src{(\set{\sigma},\monred,\sigma_0,\Delta,\sigma_c)}
    &
    \mi{Mon.\ Trans.}~\src{\monred} \bnfdef&\ \srce \mid \src{\monred;(\sigma,\sigma)}
    \\
    \mi{Store\ Env.}~\src{\Delta} \bnfdef&\ \srce \mid \src{\Delta; (\ell:\tau)}
    &
    \mi{Processes}~ \src{\pi} \bnfdef&\ \src{\proc{s}{}}
    \\
    \mi{Soups}~\src{\Pi} \bnfdef&\ \srce \mid \src{\Pi \parallel \pi}
    &
    \mi{Prog.\ States}~\src{\Omega}\bnfdef&\ \src{C, H\triangleright \Pi} %
  	\end{aligned}
  \end{gather*}
}{la}{Syntax of \LA. Elided and omitted elements are the same as in \LU (\Cref{fig:lu-syn}).}
\LA extends \LU with concurrency, so it has a fork statement \src{\fork{s}}, processes and process soups~\cite{cham} and an extensive type system (\Cref{fig:la}).
Components define a set of safety-relevant locations \src{\Delta}, so %
and heaps carry type information. %
\src{\Delta} also specifies a type for each safety-relevant location. %

\LA has an unconventional type system that enforces \emph{robust type safety}~\cite{autysec,catalin-rs,sec-typ-prot,tydisa,cca,ot4jc}, which means that no context can cause the static types of sensitive heap locations to be violated at runtime. 
Using a special type $\UNS$ that is described below, a program component statically partitions heap locations it deals with into those it cares about (sensitive or ``trusted'' locations) and those it does not care about (``untrusted'' locations). 
Call a value \emph{shareable} if only untrusted locations can be extracted from it using the language's elimination constructs. 
The type system then ensures that a program component only ever shares shareable values with the context. 
This ensures that the context cannot violate any invariants (including static types, which is what we care about in this section) of the trusted locations, since it can never gets direct access to them. 
\myfig{
  \begin{align*}
  &\vdash \src{C}:\UNS
  &&\text{Component \src{C} is well-typed.}
  &
  &\src{C}\vdash\src{F} : \src{\tau} 
  &&\text{Function \src{F} takes arguments of type \src{\tau}.}
  \\
  &\src{\Delta,\Gamma}\vdash\diamond 
  &&\text{Well-formed environments.}
  &
  &\src{\Delta,\Gamma}\vdash \src{e} : \src{\tau}
  &&\text{Expression \src{e} has type \src{\tau} in \src{\Gamma} and \src{\Delta}.}
  \\
  &\src{\tau}\vdash\circ 
  && \text{Type \src{\tau} is insecure.}
  &
  &\src{C,\Delta,\Gamma}\vdash \src{s} %
  &&\text{Statement \src{s} is well-typed in \src{C}, \src{\Gamma} and \src{\Delta}.}
  \end{align*}

  \hrule
  \begin{center}
    \typerule{T\LA-bool-pub}{
    }{
      \src{\Bools}\vdash\circ
    }{ts-ts-b}
    \typerule{T\LA-nat-pub}{
    }{
      \src{\Nats}\vdash\circ
    }{ts-ts-b}
    \typerule{T\LA-pair-pub}{
      \src{\tau}\vdash\circ
      &
      \src{\tau'}\vdash\circ
    }{
      \src{\tau\times\tau'}\vdash\circ
    }{ts-ts-p}
    \typerule{T\LA-un-pub}{
    }{
      \UNS\vdash\circ
    }{ts-tp-f}
    \typerule{T\LA-references-pub}{
    }{
      \src{\Refs{\UNS}}\vdash\circ
    }{ts-tp-f}
  \end{center}

  \hrule
  \begin{center}
	\typerule{T\LA-true}{
		\src{\Delta,\Gamma}\vdash\diamond
	}{
		\src{\Delta,\Gamma}\vdash\trues:\Bools
	}{ts-true}
	\typerule{T\LA-false}{
		\src{\Delta,\Gamma}\vdash\diamond
	}{
		\src{\Delta,\Gamma}\vdash\falses:\Bools
	}{ts-false}
	\typerule{T\LA-nat}{
		\src{\Delta,\Gamma}\vdash\diamond
	}{
		\src{\Delta,\Gamma}\vdash\src{n}:\Nats
	}{ts-nat}
	\typerule{T\LA-var}{
		\src{x:\tau}\in\src{\Gamma}
	}{
		\src{\Delta,\Gamma}\vdash\src{x}:\src{\tau}
	}{ts-var}
	\typerule{T\LA-loc}{
		\src{l:\tau}\in\src{\Delta}
	}{
		\src{\Delta,\Gamma}\vdash\src{l}:\src{\Refs{\tau}}
	}{ts-loc}
	\typerule{T\LA-pair}{
		\src{\Delta,\Gamma}\vdash \src{e_1} : \src{\tau}
		\\
		\src{\Delta,\Gamma}\vdash \src{e_2} : \src{\tau'} 
	}{
		\src{\Delta,\Gamma}\vdash \src{\pair{e_1,e_2}} : \src{\tau\times\tau'}
	}{ts-pair}
	\typerule{T\LA-proj-1}{
		\src{\Delta,\Gamma}\vdash \src{e} : \src{\tau\times\tau'}
	}{
		\src{\Delta,\Gamma}\vdash \src{\projone{e}} : \src{\tau}
	}{ts-p1}
	\typerule{T\LA-proj-2}{
		\src{\Delta,\Gamma}\vdash \src{e} : \src{\tau\times\tau'}
	}{
		\src{\Delta,\Gamma}\vdash \src{\projtwo{e}} : \src{\tau'}
	}{ts-p2}
	\typerule{T\LA-dereference}{
		\src{\Delta,\Gamma}\vdash \src{e} : \src{\Refs{\tau}} 
	}{
		\src{\Delta,\Gamma}\vdash \src{!e} : \src{\tau}  
	}{ts-deref}
	\typerule{T\LA-op}{
		\src{\Delta,\Gamma}\vdash \src{e} : \Nats
		&
		\src{\Delta,\Gamma}\vdash \src{e'} : \Nats
	}{
		\src{\Delta,\Gamma}\vdash \src{e \op e'} : \src{\Nats}  
	}{ts-op}
	\typerule{T\LA-cmp}{
		\src{\Delta,\Gamma}\vdash \src{e} : \Nats
		&
		\src{\Delta,\Gamma}\vdash \src{e'} : \Nats
	}{
		\src{\Delta,\Gamma}\vdash \src{e \bop e'} : \Bools
	}{ts-bop}
	\typerule{T\LA-coercion}{
		\src{C,\Delta,\Gamma}\vdash \src{e} : \src{\tau} 
		&
		\src{\tau} \vdash \circ
	}{
		\src{C,\Delta,\Gamma}\vdash \src{e} : \UNS 
	}{ts-coe}
\end{center}
\hrule
\begin{center}
	\typerule{T\LA-skip}{
	}{
		\src{C,\Delta,\Gamma}\vdash \skips
	}{ts-skip}	
	\typerule{T\LA-function-call}{
		((\src{f}\in\dom{\src{C}.\mtt{funs}})\vee(\src{f}\in\dom{\src{C}.\mtt{intfs}}))
		\\
		\src{\Delta,\Gamma}\vdash \src{e} : \src{\UNS}
	}{
		\src{\Delta,\Gamma}\vdash \src{\call{f}~e} %
	}{ts-fun}
	\typerule{T\LA-sequence}{
		\src{C,\Delta,\Gamma}\vdash \src{s_u} %
		\\
		\src{C,\Delta,\Gamma}\vdash \src{s} %
	}{
		\src{C,\Delta,\Gamma}\vdash \src{s_u;s} %
	}{ts-seq}
	\typerule{T\LA-letin}{
		\src{\Delta,\Gamma}\vdash \src{e} : \src{\tau} 
		\\
		\src{C,\Gamma;x:\tau}\vdash \src{s} %
	}{
		\src{C,\Delta,\Gamma}\vdash \src{\letin{x:\tau}{e}{s}} %
	}{ts-vardef}
	\typerule{T\LA-assign}{
		\src{\Delta,\Gamma}\vdash \src{x} : \src{\Refs{\tau}}
		\\
		\src{\Delta,\Gamma}\vdash \src{e'} : \src{\tau} 
	}{
		\src{C,\Delta,\Gamma}\vdash \src{x := e'} %
	}{ts-ass}
	\typerule{T\LA-new}{
		\src{\Delta,\Gamma}\vdash \src{e} : \src{\tau} 
		\\
		\src{C,\Gamma;x:\Refs{\tau}}\vdash \src{s} %
	}{
		\src{C,\Delta,\Gamma}\vdash \src{\letnewty{x}{e}{\tau}{s}} %
	}{ts-new}
	\typerule{T\LA-if}{
		\src{\Delta,\Gamma}\vdash \src{e} : \src{\Bool}
		\\
		\src{C,\Delta,\Gamma}\vdash \src{s_t} %
		&
		\src{C,\Delta,\Gamma}\vdash \src{s_e} %
	}{
		\src{C,\Delta,\Gamma}\vdash \src{\ifte{e}{s_t}{s_e}} %
	}{ts-if}
	\typerule{T\LA-fork}{
		\src{C,\Delta,\Gamma}\vdash \src{s} %
	}{
		\src{C,\Delta,\Gamma}\vdash \src{\fork{s}} %
	}{ts-fork}
	\typerule{T\LA-endorse}{
		\src{\Delta,\Gamma}\vdash\src{e}:\UNS
		\\
		\src{C,\Delta,\Gamma;(x:\varphi)} \vdash\src{s} %
	}{
		\src{C,\Delta,\Gamma}\vdash \src{\myendorse{x}{e}{\varphi}{s}} %
	}{ts-end}
\end{center}
}{la-ts}{Typing judgements and rules of \LA.}

Type \UNS stands for ``untrusted'' or ``shareable'' and contains all values that can be passed to the context. 
Every type that is not a subtype of \UNS is implicitly trusted and cannot be passed to the context. 
Untrusted locations are explicitly marked \UNS at their allocation points in the program. 
Other types are deemed shareable via subtyping. 
Intuitively, a type is safe if values in it can only yield locations of type \UNS by the language's elimination constructs. 
For example, $\src{\UNS\times\UNS}$ is a subtype of \UNS. 
We write $\src{\tau}\vdash\src{\circ}$ to mean that $\src{\tau}$ is a subtype of \UNS.

Further, \LA contains an \emph{endorsement} statement (\src{\myendorse{x}{e}{\varphi}{s}}) that dynamically checks the top-level constructor of a value of type \UNS  and gives it a more precise superficial type $\src{\varphi}$~\cite{chong-thesis}. %
This allows a program to safely inspect values coming from the context. 
It is similar to existing type casts~\cite{nonpapa} but it only inspects one structural layer of the value (this simplifies the compilation).

\myfig{
  \begin{center}
    \typerule{E\LA-endorse}{
      \src{H \triangleright e \redtos v}
      &
      \src{\Delta,\srce}\vdash\src{v}:\src{\varphi}
      \\
      \src{\Delta}= \myset{\src{\ell:\tau}}{\src{\ell\mapsto v':\tau}\in\src{H}}
    }{
    		\src{C, H \triangleright \myendorse{x}{e}{\varphi}{s} } \xtos{} 
    		\src{C, H \triangleright s\subs{v}{x}}
    }{es-end}
    \typerule{E\LA-fork}{
      \src{\Pi}=\src{\Pi_1 \parallel \fork{s};s' \parallel \Pi_2}
      \\
      \src{\Pi'}=\src{\Pi_1 \parallel \skips;s' \parallel \Pi_2 \parallel s}
    }{
      \src{C, H \triangleright \Pi} \xtos{} \src{C, H \triangleright \Pi'}
    }{es-fork}
  \end{center}
}{la-sem}{Semantics of \LA. Omitted elements are the same as in \LU (\Cref{fig:lu-sem}).}
The operational semantics of \LA updates that of \LU to deal with concurrency and endorsement (\Cref{fig:la-sem}). 
For concurrency, the program state $\src{\Omega}$ contains a soup (i.e., a multiset) \src{\Pi} of processes, where each process is a statement executing as in the program state for \LU, and a soup takes a step if any process in it does.
The latter performs a runtime check on the endorsed value~\cite{dpdec}, which performs a syntactic check on a value given some superficial type \src{\varphi}.
Superficial types \src{\varphi} only allow checking types ``on the surface'', so pairs and references are not nested; in order to endorse a nested pair, multiple endorse statements must be used.

Monitors \src{M} %
check at runtime that the set of trusted heap locations \src{\Delta} have values of their intended static types. 
Accordingly, the description of the monitor includes a list of trusted locations and their expected types (in the form of an environment $\src{\Delta}$). 
The type $\src{\tau}$ of any location in $\src{\Delta}$ must be trusted, so $\src{\tau} \not\vdash \src{\circ}$. 
To facilitate the monitor's checks, every heap location carries a type at runtime (in addition to a value). 
The monitor transitions should, therefore, be of the form \src{(\sigma,\Delta,\sigma')}, but since \src{\Delta} never changes (it maps trusted locations to \emph{static} types), we write the transitions as pairs of states only. %

A monitor and a component agree if they have the same $\src{\Delta}$:
\begin{align*}
  \src{M\agree C}\isdef\ \src{M} = \src{(\set{\sigma},\monred,\sigma_0,\Delta,\sigma_c)} \text{ and } \src{C}=\src{(\Delta ; \OB{F} ; \OB{I})}
\end{align*}
Other definitions (safety, robust safety and actions) are as in \Cref{sec:rsc}.

Importantly, we show that a well-typed component generates traces that are always accepted by an agreeing monitor, so every component typed at \UNS is robustly safe.

\begin{theorem}[Typability Implies Robust Safety in \LA]\label{thm:src-ty-impl-safe}
  \begin{align*}
    \text{If }
    \vdash\src{C}:\UNS
    \text{ and }
    \src{C\agree M}
    \text{ then }
    \src{M}\vdash\src{C}:\src{rs}
  \end{align*}
\end{theorem}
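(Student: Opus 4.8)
The plan is to reduce the statement to a single invariant-preservation argument and then prove that invariant by a robust form of subject reduction. Unfolding \Cref{def:safe}, it suffices to fix an arbitrary attacker \src{A} with $\src{C}\vdash\src{A}:\src{atk}$ and $\vdash\src{A\hole{C}}:\src{whole}$, and to show that every trace $\SInits{A\hole{C}}\Xtos{\OB{\alpha}}\src{\_}$ satisfies $\src{M}\vdash\OB{\alpha}$. By construction the \LA monitor fires on an action exactly when the heap restricted to the trusted locations \src{\Delta} is well-typed at the declared types, and gets stuck otherwise; since \strip{\OB{\alpha}} records precisely the heaps at the cross-component actions, the theorem collapses to the claim that at every reachable configuration emitting an action the current heap \src{H} is well-typed at \src{\Delta}, i.e.\ $(\src{\ell:\tau})\in\src{\Delta}$ implies $\src{\ell\mapsto v:\tau}\in\src{H}$ for a value \src{v} of type \src{\tau}.

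To obtain this I would carry a global invariant $\mc{I}(\src{\Omega})$ on whole-program states $\src{\Omega}=\src{C,H\triangleright\Pi}$ asserting: (i) there is a store typing $\src{\Delta'}\supseteq\src{\Delta}$ under which \src{H} is well-typed and which agrees with \src{\Delta} on the trusted locations; (ii) every process in the soup \src{\Pi} under the component's control is well-typed against \src{\Delta'}, so that subject reduction applies to it; and (iii) every attacker-controlled process, together with every value the attacker currently holds (on its stack, in its continuations, or reachable through the untrusted cells it can access), is \emph{shareable}, i.e.\ typable at \UNS. Clause (iii) is the robustness clause: it records that the adversary only ever manipulates \UNS values and can therefore never hold a reference of a trusted type $\src{\Refs{\tau}}$ with $\src{\tau}\not\vdash\src{\circ}$. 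For the base case, $\SInits{A\hole{C}}$ begins in the attacker with a heap containing only the trusted locations of \src{\Delta} at their declared values, and $\src{C}\vdash\src{A}:\src{atk}$ guarantees that \src{A}'s free names do not mention those locations; hence (i)--(iii) hold initially.

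The inductive step is a case analysis on the small-step rules of \Cref{fig:la-sem} together with the inherited \LU rules of \Cref{fig:lu-sem}. When a component process steps, clauses (i) and (ii) are re-established by the type-preservation lemma for \LA's type system (\Cref{fig:la-ts}); the informative cases are allocation (\src{\letnews{x}{e}{s}}), which extends \src{\Delta'} with either a fresh trusted location at its declared type or a fresh \UNS location, assignment, where the stored value matches the cell's type, and \src{\fork{s}}, where the spawned process inherits the typing of its parent. When an attacker process steps, clause (iii) is preserved because \UNS is closed under the language's elimination forms and yields only \UNS locations, so the attacker can neither read nor write any trusted cell and thus cannot disturb clause (i). At a boundary, a component-to-context call or return passes a value that the typing rules force to be \UNS, consistent with (iii), while a context-to-component call or return hands the component a \UNS value that its code may use only after $\src{\myendorse{x}{e}{\varphi}{s}}$ re-establishes a precise superficial type through the runtime check of \Cref{tr:es-end}, restoring (ii).

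I expect the main obstacle to be clause (iii) and its interaction with concurrency, not the ordinary subject-reduction cases. The heart of the argument is a dedicated shareability sub-lemma stating that no run of a well-typed component ever exports a value from which a trusted reference can be projected, and that any value re-entering the component from the context must be endorsed before any trusted use; this is precisely what rules out the adversary forging or acquiring access to a \src{\Delta}-location. Concurrency compounds the difficulty: since component and attacker processes interleave over one shared heap, $\mc{I}$ must be a genuine whole-soup invariant, and I must show that attacker writes -- which by (iii) only ever touch untrusted cells -- cannot invalidate the well-typedness of the trusted fragment relied upon in (i). Once $\mc{I}$ is proved invariant, clause (i) yields that \src{H} is well-typed at \src{\Delta} at every action, so the monitor never gets stuck; hence $\src{M}\vdash\src{A\hole{C}}:\src{safe}$ for every valid \src{A}, which is exactly $\src{M}\vdash\src{C}:\src{rs}$.
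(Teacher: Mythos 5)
Your proposal is correct and follows essentially the same route as the paper's proof: an induction over the trace with the invariant that the monitor-relevant heap fragment stays well-typed at \src{\Delta}, discharged by splitting each step into component reductions (subject reduction, cf.\ \Cref{thm:epsi-red-resp-rel}) and attacker reductions (the attacker is \UNS-typed and can only reach \UNS locations, cf.\ \Cref{thm:atk-reach-uns-loc}), with the base case obtained from the plugging rule and the \UNS-typing of attackers. Your three-clause whole-soup invariant with a growing store typing $\src{\Delta'}\supseteq\src{\Delta}$ and your ``shareability sub-lemma'' are merely a more explicit packaging of the paper's corresponding lemmas, so there is no gap.
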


\paragraph{Richer Source Monitors}
In \LA, source language monitors only enforce the property of type safety on specific memory locations (robustly). 
This can be generalized substantially to enforce arbitrary invariants other than types on locations. 
The only requirement is to find a type system (e.g., based on refinements or Hoare logics) that can enforce robust safety in the source (for example, as in the work of Swasey \etal~\cite{davidcaps}). 
Our compilation and proof strategy should work with little modification. 
Another easy generalization is allowing the set of locations considered by the monitor to grow over time, as in \Cref{sec:rsc-instance}.

\subsection{The Target Language \LC}\label{sec:src-p}
Our target language, \LC, extends the previous target language \LP, with support for concurrency (forking, processes and process soups), atomic co-creation of a protected location and its protecting capability and for examining the top-level construct of a value according to a pattern $\trg{B}$ (\Cref{fig:lc}).
\myfig{
\begin{align*}
  \mi{Statements}~\trg{s} \bnfdef&\ \cdots \mid \trg{\fork{s}} 
  \mid \trg{\letatom{x}{e}{s}}
  \mid \trg{\destruct{x}{e}{B}{s}{s}}
  \\
  \mi{Patterns}~\trg{B} \bnfdef&\ \trg{nat} \mid \trg{pair}
  \qquad\qquad
  \mi{Monitors}~\trg{M} \bnfdef\ \trg{(\trgb{\set{\sigma}},\monred,\trgb{\sigma}_0,H_0,\trgb{\sigma}_c)}
\end{align*}

\hrule
\begin{center}
  \typerule{E\LC-destruct-nat}{
    \trg{H\triangleright e \redtot n}
  }{
      \trg{C, H \triangleright \destruct{x}{e}{nat}{s}{s'} } 
      \xtot{} \trg{C, H \triangleright s\subt{n}{x}}
  }{et-end-n}
  \typerule{E\LC-new}{
    \trg{H} = \trg{H_1;n\mapsto (v',\trgb{\eta})}
    &
    \trg{H \triangleright e\redtot v}
    &
    \trg{k}\notin\dom{\trg{H}}
    &
    \trg{s'}=\trg{s}\subt{\pair{n+1,k}}{x}
  }{
      \trg{C, H \triangleright \letatom{x}{e}{s}} 
      \xtot{} \trg{C,H; n+1\mapsto v:k;k \triangleright s'}
  }{et-atom}
\end{center}
}{lc}{Syntax and semantics of \LC. Elided elements are either the same as \LP (\Cref{fig:lp-syn,fig:lp-sem}) or \LA (\Cref{fig:la}).}

Monitors are also updated to consider a fixed set of locations (a heap part \trg{H_0}).
Atomic co-creation of locations and capabilities is provided to match modern security architectures such as Cheri~\cite{cheri} (which implement capabilities at the hardware level).
This atomicity is not strictly necessary and we prove that \rscomp is attained both by a compiler relying on it and by one that allocates a location and then protects it non-atomically. 
The former compiler (with this atomicity in the target) is a bit easier to describe, so we start with it (\Cref{sec:compap}) before moving to the non-atomic one (\Cref{sec:nonatom-alloc}).

\subsection{Compiler from \LA to \LC}\label{sec:compap}
The high-level structure of the compiler, \compap{\cdot}, is similar to that of our earlier compiler \compup{\cdot} (\Cref{sec:comp-up}). 
However, \compap{\cdot} is defined by induction on the type derivation of the source component to be compiled.
Most cases are a straightforward adaptation of the analogous cases from \Cref{fig:compup}, so we omit them. We show only a few instructional cases in \Cref{fig:compap,fig:compap2}.
When compiling a component we ensure that monitor-sensitive locations from \src{\Delta} are allocated to related locations and initialised to valid values, i.e., values that respect the cross-language relation. 
Such a set up of heaps is denoted with $\src{\Delta}\vdash_{\beta_0}\trg{H_0}$, whose details are in \Cref{tr:ini-heap} (\Cref{fig:details-1}).  
Intuitively, the domain of the safety-relevant heap must be related to the domain of \src{\Delta}.
Additionally, the heap is populated with values \trg{v} whose source-level counterparts have type \src{\tau}; for locations, we do not allow cycles in memory for simplicity (\Cref{tr:ini-val}).
The most interesting cases of the compiler are are allocation and endorsement.
The former explicitly uses type information to achieve security efficiently, protecting only those locations whose type is not \UNS.
The latter performs a 1-level de-structuring of the value to be endorsed according to the expected superficial type \src{\varphi}.
\myfig{
	\begin{align*}
		\compap{
			\typerulenolabel{T\LA-component}{
				\src{C}\equiv\src{\Delta ; \OB{F} ; \OB{I}} 
				&
				\src{C}\vdash\src{\OB{F}}:\UNS
				\\
				\fun{names}{\src{\OB{F}}}\cap\fun{names}{\src{\OB{I}}}=\emptyset
				&
				\src{\Delta}\vdash\src{ok}
			}{
				\vdash \src{C}:\UNS
			}{compap-co}
		} &= \trg{%
		\trg{H_0} ;
		\compap{\OB{F}}; \compap{\OB{I}}}
		\qquad
		\text{if }\src{\Delta}\vdash_{\beta_0}\trg{H_0}
		\\
		\compap{
			\typerulenolabel{T\LA-function}{
				\src{F}\equiv \src{f(x:\UNS)\mapsto s;\ret}
				&
				\src{C},\src{\Delta;x:\UNS}\vdash \src{s} %
				\\
				\forall\src{f}\in\fun{fn}{s}, \src{f}\in\dom{\src{C}.\mtt{funs}} 
				\vee \src{f}\in\dom{\src{C}.\mtt{intfs}}
			}{
				\src{C}\vdash\src{F}:\src{\UNS} 
			}{compap-fu}
		} &= \trg{f(x)\mapsto \compap{\src{C};\src{\Delta;x:\UNS}\vdash \src{s}};\ret}
	\end{align*}
	\hrule
	\begin{gather*}
		\begin{aligned}
		\compap{
			\typerulenolabel{T\LA-nat}{
				\src{\Delta,\Gamma}\vdash\diamond
			}{
				\src{\Delta,\Gamma}\vdash\src{n}:\Nats
			}{compap-nat}
		}
		=&\ 
			\trg{n}
		&
		\compap{
			\typerulenolabel{T\LA-var}{
				\src{x:\tau}\in\src{\Gamma}
			}{
				\src{\Delta,\Gamma}\vdash\src{x}:\src{\tau}
			}{compap-var}
		}
		=&\  \trg{x}
		&
		\compap{
			\typerulenolabel{T\LA-loc}{
				\src{\ell:\tau}\in\src{\Delta}
			}{
				\src{\Delta,\Gamma}\vdash\src{\ell}:\src{\tau}
			}{compap-loc}
		}
		=&\  
				\trg{\pair{n,v}}
		\end{aligned}
		\\
		\begin{aligned}
		\compap{
			\typerulenolabel{T\LA-coercion}{
				\src{\Delta,\Gamma}\vdash \src{e} : \src{\tau} 
				&
				\src{\tau} \vdash \circ
			}{
				\src{\Delta,\Gamma}\vdash \src{e} : \UNS 
			}{compap-coe}
		}
		=&\
				\trg{
					\compap{\src{\Delta,\Gamma}\vdash \src{e} : \src{\tau} }
				}
	\\
		\compap{
			\typerulenolabel{T\LA-op}{
				\src{\Delta,\Gamma}\vdash \src{e} : \Nats
				&
				\src{\Delta,\Gamma}\vdash \src{e'} : \Nats
			}{
					\src{\Delta,\Gamma}\vdash 
					\src{e \op e'} : \src{\Nats}  
			}{compap-op}
		}
		=&\ \trg{
					\compap{\src{\Delta,\Gamma}\vdash \src{e} : \Nats} 
					\op \compap{\src{\Delta,\Gamma}\vdash \src{e'} : \Nats}
			}
		\\
		\compap{
			\typerulenolabel{T\LA-cmp}{
				\src{\Delta,\Gamma}\vdash \src{e} : \Nats
				&
				\src{\Delta,\Gamma}\vdash \src{e'} : \Nats
			}{
					\src{\Delta,\Gamma}\vdash 
					\src{e \bop e'} : \Bools
			}{compap-bop}
		}
		=&\ 
			\trg{
					\compap{\src{\Delta,\Gamma}\vdash \src{e} : \Nats} 
					\bop \compap{\src{\Delta,\Gamma}\vdash \src{e'} : \Nats}	
			}
		\\
		\compap{
			\typerulenolabel{T\LA-dereference}{
				\src{\Delta,\Gamma}\vdash \src{e} : \src{\Refs{\tau}} 
			}{
				\src{\Delta,\Gamma}\vdash \src{!e} : \src{\tau}  
			}{compap-deref}
		}
		=&\ 
		\trg{
			!\projone{\compap{\src{\Delta,\Gamma}\vdash \src{e} : \src{\Refs{\tau}}}}
			\with{\projtwo{\compap{\src{\Delta,\Gamma}\vdash \src{e} : \src{\Refs{\tau}}}}} %
		}
		\end{aligned}
	\end{gather*}
}{compap}{\compap{\cdot}, compilation of components, functions and expressions from \LA to \LC (excerpts, omitted elements are analogous to their counterparts in \Cref{fig:compup}).}
\myfig{
	\begin{gather*}
	\begin{aligned}
    \compap{
      \typerulenolabel{T\LA-new}{
        \src{\Delta,\Gamma}\vdash \src{e} : \src{\tau} 
        &
        \src{C,\Delta,\Gamma;x:\Refs{\tau}}\vdash \src{s}
      }{
          \src{C,\Delta,\Gamma}\vdash 
          \src{\letnewty{x}{e}{\tau}{s}} 
      }{compap-new} 
    }
    =&\
    \begin{cases}
      \trg{
      \begin{aligned}
        &
        \letnewt{\trg{xo}~}{\compap{\src{\Delta,\Gamma}\vdash \src{e} : \src{\tau}} 
        \\
        &\
        }{ 
          \letint{\trg{x}~}{~\trg{\pair{xo,0}}
          \\
          &\ \ 
          }{
            \compap{\src{C,\Delta,\Gamma;x:\Refs{\tau}}\vdash \src{s}} 
          }
        }
      \end{aligned}
      }
      &
      \text{if }\src{\tau}=\UNS %
      \\
      \vspace{-1em}
      \\
      \trg{
      \begin{aligned}
        &
        \letatom{\trg{x}~}{ \compap{\src{\Delta,\Gamma}\vdash \src{e} : \src{\tau}} 
        \\
        &\
        }{ 
          \compap{\src{C,\Delta,\Gamma;x:\Refs{\tau}}\vdash \src{s}} 
        }
      \end{aligned}
      }
      &
      \text{otherwise} 
    \end{cases}
    \\
		\compap{
			\typerulenolabel{T\LA-if}{
				\src{\Delta,\Gamma}\vdash \src{e} : \src{\Bool}
				\\
				\src{C,\Delta,\Gamma}\vdash \src{s_t} 
				&
				\src{C,\Delta,\Gamma}\vdash \src{s_e} 
			}{
				\src{C,\Delta,\Gamma}\vdash \src{\ifte{e}{s_t}{s_e}} 
			}{compap-if}
		}
		=&\ \trg{
			\begin{aligned}
				&\ifztet{\compap{\src{\Delta,\Gamma}\vdash \src{e} : \src{\Bool}}
				\\
				&}
				{\compap{\src{C,\Delta,\Gamma}\vdash \src{s_t} }
				}{\compap{\src{C,\Delta,\Gamma}\vdash \src{s_e} }}
			\end{aligned}
		}
		\\
		\compap{
			\typerulenolabel{T\LA-sequence}{
				\src{C,\Delta,\Gamma}\vdash \src{s_u} 
				&
				\src{C,\Delta,\Gamma}\vdash \src{s} 
			}{
				\src{C,\Delta,\Gamma}\vdash \src{s_u;s} 
			}{compap-seq}
		}
		=&\ \trg{\compap{\src{C,\Delta,\Gamma}\vdash \src{s_u} } ; \compap{\src{C,\Delta,\Gamma;\Gamma'}\vdash \src{s} }}
		\\
		\compap{
			\typerulenolabel{T\LA-letin}{
				\src{\Delta,\Gamma}\vdash \src{e} : \src{\tau} 
				&
				\src{C,\Delta,\Gamma;x:\tau}\vdash \src{s}
			}{
				\src{C,\Delta,\Gamma}\vdash \src{\letin{x:\tau}{e}{s}}
			}{compap-vardef}
		}
		=&\ \trg{
				\letint{\trg{x}}{\compap{\src{\Delta,\Gamma}\vdash \src{e} : \src{\tau}}
				}{\compap{\src{C,\Delta,\Gamma;x:\tau}\vdash \src{s} }}
			}
		\\
		\compap{
			\typerulenolabel{T\LA-assign}{
				\src{\Delta,\Gamma}\vdash \src{x} : \src{\Refs{\tau}}
				&
				\src{\Delta,\Gamma}\vdash \src{e} : \src{\tau} 
			}{
				\src{C,\Delta,\Gamma}\vdash \src{x := e} 
			}{compap-ass}
		}
		=&\ \trg{
			\begin{aligned}
				&
				\letint{\trg{x_l}~}{~\trg{\projone{x}}
				}{
					\letint{\trg{x_c}~}{~\trg{\projtwo{x}}
					\\
					&\ \ 
					}{
						\trg{x_l := \compap{\src{\Delta,\Gamma}\vdash \src{e} : \src{\tau}} \with{x_c}}
					}
				}
			\end{aligned}
		}
		\\
		\compap{
			\typerulenolabel{T\LA-fork}{
				\src{C,\Delta,\Gamma}\vdash \src{s} 
			}{
				\src{C,\Delta,\Gamma}\vdash \src{\fork{s}} 
			}{compap-fork}
		}
		=&\ \trg{\fork{\compap{\src{C,\Delta,\Gamma}\vdash \src{s} }}}
		\end{aligned}
	\\
	\begin{aligned}
		&
		\compap{
			\typerulenolabel{T\LA-endorse}{
				\src{\Delta,\Gamma}\vdash\src{e}:\UNS
				&
				\src{C,\Delta,\Gamma;(x:\varphi)} \vdash\src{s} %
			}{
				\src{C,\Delta,\Gamma}\vdash \src{\myendorse{x}{e}{\varphi}{s}} 
			}{compap-end}
		}
		=
		\\
		&\
			\begin{cases}
				\trg{
					\left|\begin{aligned}
						&
						\destruct{\trg{x}~}{\compap{\src{\Delta,\Gamma}\vdash \src{e} : \UNS}}{\trg{nat}}{
							\\
							&\ \
							\ifztet{\trg{x}}{
							\compap{ \src{C,\Delta,\Gamma;(x:\varphi)} \vdash\src{s} }
							}{
							\\
							&\quad
								\ifztet{\trg{x-1}}{
								\compap{ \src{C,\Delta,\Gamma;(x:\varphi)} \vdash\src{s} }
								}{\wrong}
							}
							\\
							&
						}{\wrong}
					\end{aligned}\right.
				}
				&
				\text{ if }\src{\varphi}=\src{\Bools}
				\\
				\trg{
					\left|\begin{aligned}
						&
						\destruct{\trg{x}~}{\compap{\src{\Delta,\Gamma}\vdash \src{e} : \UNS}}{\trg{nat}}{
							\compap{ \src{C,\Delta,\Gamma;(x:\varphi)} \vdash\src{s} }
							\\
							&
						}{\wrong}
					\end{aligned}\right.
				}
				&
				\text{ if }\src{\varphi}=\src{\Nats}
				\\
				\trg{
					\left|\begin{aligned}
						&
						\destruct{\trg{x}~}{\compap{\src{\Delta,\Gamma}\vdash \src{e} : \UNS}}{\trg{pair}}{
							\compap{ \src{C,\Delta,\Gamma;(x:\varphi)} \vdash\src{s} }
							\\
							&
						}{\wrong}
					\end{aligned}\right.
				}
				&
				\text{ if }\src{\varphi}=\src{\UNS\times\UNS}
				\\
				\trg{
					\left|\begin{aligned}
						&
						\destruct{\trg{x}~}{\compap{\src{\Delta,\Gamma}\vdash \src{e} : \UNS}}{\trg{pair}}{
							\trg{!\projone{x} \with{\projtwo{x}} ;}
							\compap{ \src{C,\Delta,\Gamma;(x:\varphi)} \vdash\src{s} }
							\\
							&
						}{\wrong}
					\end{aligned}\right.
				}
				&
				\text{ if }\src{\varphi}=\src{\Refs{\UNS}}
			\end{cases}
		\end{aligned}
  \end{gather*}
}{compap2}{\compap{\cdot}, compilation of statements from \LA to \LC (excerpts, omitted elements are analogous to their counterparts in \Cref{fig:compup}).}

\myfig{
		\begin{center}
	\typerule{Initial-heap}{
		\src{\Delta}\vdash\trg{H}
		&
		\src{\Delta},\trg{H}\vdash_{\beta}\trg{v}\src{:\tau}
		&
		\src{\ell}\relatebeta\trg{\pair{n,k}}
	}{
		\src{\Delta,\ell:\tau}\vdash_{\beta}\trg{H;n\mapsto v:k}
	}{ini-heap}
	\typerule{Initial-value}{
			(\src{\tau}\equiv\Bools \wedge \trg{v}\equiv\trg{0}) 
			&
			\vee
			&
			(\src{\tau}\equiv\Nats \wedge \trg{v}\equiv\trg{0}) 
			&
			\vee
			\\
			(\src{\tau}\equiv\src{\Refs{\tau}} \wedge \trg{v}\equiv\trg{n'} \wedge \trg{n'\mapsto v':k'}\in\trg{H} \wedge \src{\ell'}\relatebeta\trg{\pair{n',k'}} \wedge \src{\ell:\tau}\in\src{\Delta}, 
			\src{\Delta},\trg{H}\vdash\trg{v'}\src{:\tau}
			) 
			&
			\vee
			\\
			(\src{\tau}\equiv\src{\tau_1\times\tau_2} \wedge \trg{v}\equiv\trg{\pair{v_1,v_2}} \wedge 
			\src{\Delta},\trg{H}\vdash\trg{v_1}\src{:\tau_1}\wedge
			\src{\Delta},\trg{H}\vdash\trg{v_2}\src{:\tau_2}
			)
	}{
		\src{\Delta},\trg{H}\vdash_{\beta}\trg{v}\src{:\tau}
	}{ini-val}
\end{center}
}{details-1}{Initialisation of the safety-relevant target heap based on the source typing environment.}

\paragraph{New Monitor Relation}\label{sec:mon-rel-par}
As monitors have changed, we also need a new monitor relation $\src{M}\relate\trg{M}$.
Informally, a source and a target monitor are related if the target monitor can always step whenever the target heap satisfies the types specified in the source monitor's $\Delta$ (up to renaming by the partial bijection $\beta_0$).

We write $\vdash\src{H}:\src{\Delta}$ to mean that for each location
$\src{\ell} \in \src{\Delta}$, we have that $\src{\Delta};\srce\vdash \src{H}(\src{\ell}):
\src{\Delta}(\src{\ell})$: i.e., the contents of \src{H} are well-typed according to \src{\Delta}.
Given a partial bijection $\beta$ from
source to target locations, we say that a target monitor $\trg{M} =
\trg{(\trgb{\set{\sigma}},\monred,\trgb{\sigma}_0,H_0,\trgb{\sigma}_c)}$ is good, written
$\vdash \trg{M}: \beta, \src{\Delta}$, if for all $\trgb{\sigma} \in
\trgb{\set{\sigma}}$ and all $\src{H} \relatebeta \trg{H}$ such that
$\vdash \src{H}: \src{\Delta}$, there is a $\trgb{\sigma'}$ such that
$(\trgb{\sigma}, \trg{H}, \trgb{\sigma'}) \in \trg{\monred}$.
For a fixed partial bijection $\beta_0$ between the domains of
$\src{\Delta}$ and $\trg{H_0}$, we say that the source monitor
$\src{M}$ and the target monitor $\trg{M}$ are related, written
$\src{M} \relate \trg{M}$, if $\vdash \trg{M}: \beta_0, \src{\Delta}$
for the $\src{\Delta}$ in $\src{M}$. With this setup, we define
\rscomp as in \Cref{sec:rsc}. Our main theorem is that \compap{\cdot}
attains \rscomp under this definition.

\begin{theorem}[Compiler \compap{\cdot} attains \rscomp]\label{thm:comp-ap-rsc}
  $\vdash\compap{\cdot} : \rscomp$
\end{theorem}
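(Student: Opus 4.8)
Because \compap{\cdot} is defined by induction on typing derivations, every component it compiles is well typed, so I may assume $\vdash\src{C}:\UNS$. Unfolding \Cref{def:rsc}, I fix arbitrary $\src{C}$, $\src{M}$, $\trg{M}$ with $\src{M}\vdash\src{C}:\src{rs}$ and $\src{M}\relate\trg{M}$ (by \Cref{thm:src-ty-impl-safe} the first premise is in fact automatic once $\src{M}\agree\src{C}$, so the real content lives entirely on the target side). I must show $\trg{M}\vdash\compap{\src{C}}:\trg{rs}$. Unfolding \Cref{def:safe}, this amounts to: for every valid target attacker $\trg{A}$ with $\compap{\src{C}}\vdash\trg{A}:\trg{atk}$ (and $\trg{M}\agree\compap{\src{C}}$, else we are done vacuously), and for every trace $\trg{\OB{\alpha}}$ with $\SInitt{A\hole{\compap{\src{C}}}}\Xtot{\trg{\OB{\alpha}}}\trg{\_}$, the monitor $\trg{M}$ never gets stuck on $\trg{\OB{\alpha}}$. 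So the whole theorem reduces to establishing that $\trg{M}$ can always take its step (\Cref{tr:mon-us}) at each action of the run. Crucially, no backtranslation is needed: I will instead exhibit a cross-language invariant on the \emph{target} execution and show it suffices.

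\textbf{The invariant.} I will define a predicate $\mathcal{I}$ on target runtime configurations, parametrised by $\src{\Delta}$ (from $\src{M}$) and a partial bijection $\beta\supseteq\beta_0$ that grows as execution proceeds, asserting three things. \emph{(a) Typed contents:} every monitored location $\trg{n}$ that is the $\beta$-image of a trusted location $\src{\ell}\mathbin{:}\src{\tau}\in\src{\Delta}$ (so $\src{\tau}\not\vdash\circ$) is protected by a capability $\trg{k}$ and stores a value $\trg{v}$ that is $\relatebeta$-related, at type $\src{\tau}$, to some source value; consequently the restriction of the target heap to the monitored locations is related by $\relatebeta$ to a source heap $\src{H}$ with $\vdash\src{H}:\src{\Delta}$. \emph{(b) Capability confinement:} the capabilities protecting trusted locations occur neither in $\trg{A}$'s code nor in any attacker-reachable ($\UNS$-typed) position of the heap. \emph{(c) Shareability:} every value sitting in, or flowing across a boundary through, an $\UNS$ position is shareable, i.e.\ does not expose any trusted capability. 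Note that $\mathcal{I}$ is phrased purely about the shared heap and attacker knowledge, so it is insensitive to thread scheduling -- this is what lets the argument survive concurrency.

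\textbf{Preservation and conclusion.} I first check $\mathcal{I}$ holds in the initial configuration: the compiler lays out the monitored part via $\src{\Delta}\vdash_{\beta_0}\trg{H_0}$ with related, correctly typed initial values, and validity of $\trg{A}$ (no $\trg{k_{root}}$, and more generally no protecting capability a priori) gives confinement at start. I then show $\mathcal{I}$ is preserved by every small step, by case analysis on whether the active thread is compiled component code or attacker code. For \emph{component} steps I use that the code is the image under \compap{\cdot} of a well-typed source statement: the compilation of a typed assignment writes only a type-correct, related value and supplies the matching capability, and compiled code passes only $\UNS$-typed, hence shareable, values across calls and returns; the only points where attacker-supplied values enter are the compiled \emph{endorse}/\emph{destruct} constructs, which re-check one structural layer before use, so no trusted capability is ever emitted -- this is precisely the target analogue of source robust type safety carried across the compiler. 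For \emph{attacker} steps the attacker may compute addresses and freely read/write unprotected locations, but reading or writing a trusted location requires its protecting capability (\Cref{tr:et-de-t,tr:et-ac-t}), which by clause (b) it does not possess; hence trusted contents are untouched and clauses (a)--(c) are maintained, fork merely adding a thread over the same heap. Finally, at each action of $\trg{\OB{\alpha}}$ clause (a) says the monitored heap is related to a source heap well typed at $\src{\Delta}$; since $\src{M}\relate\trg{M}$ means exactly that $\trg{M}$ is \emph{good} ($\vdash\trg{M}:\beta_0,\src{\Delta}$, \Cref{sec:mon-rel-par}), $\trg{M}$ can step on every such heap, so it never gets stuck. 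As $\trg{A}$ and $\trg{\OB{\alpha}}$ were arbitrary, $\trg{M}\vdash\compap{\src{C}}:\trg{rs}$.

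\textbf{Main obstacle.} The delicate part is the capability-confinement clause (b) under an \emph{active, untyped} adversary together with concurrency: I must prove that capabilities guarding trusted locations never leak, which requires a precise notion of which target values are attacker-knowable (those related at $\UNS$) and a proof that compiled code respects this boundary across calls, returns, heap writes, and especially through \emph{destruct}, where attacker-chosen $\UNS$ values are inspected and promoted. Making $\mathcal{I}$ simultaneously strong enough to be inductive, stable under arbitrary thread interleavings and both directions of boundary crossing, and still strong enough to force the monitor's non-stuckness, is where the real work lies; everything else is a routine progress-and-preservation bookkeeping over the compiler clauses of \Cref{fig:compap}.
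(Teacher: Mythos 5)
Your proposal is correct and takes essentially the same route as the paper: the paper's proof is likewise backtranslation-free, establishing a cross-language invariant (its relation $\relatetbeta$, whose clauses on the trusted/untrusted heap partition, relatedness of capability-protected trusted locations to source locations of non-shareable type, and non-occurrence of trusted capabilities in attacker code or untrusted heap cells correspond to your clauses (a)--(c)) that is preserved separately under compiled-code steps (by induction on the source typing derivation, leaning on \Cref{thm:src-ty-impl-safe}) and under attacker steps (by unforgeability and unguessability of capabilities). The conclusion is also identical: the invariant keeps the monitored target heap related to a source heap well typed at $\src{\Delta}$, and the monitor relation $\vdash\trg{M}:\beta_0,\src{\Delta}$ (monitor goodness) then guarantees the target monitor can always step, hence never rejects any trace.
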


To prove that \compap{\cdot} attains \rscomp we do not rely on a backtranslation and we show \rscomp as of \Cref{def:rsc} instead of the property-free version. 
Here, we know statically which locations can be monitor-sensitive: they must all be trusted, i.e., must have a type $\src{\tau}$ satisfying $\src{\tau}\nvdash\src{\circ}$. 
Using this, we set up a simple cross-language relation (later indicated as $\relatetbeta$) and show it to be an invariant on runs of source and compiled target states. 
Like previous relations, this relation is also indexed by a partial bijection $\beta$.
The relation captures the following:
\begin{itemize}
\item Heaps (both \src{source} and \trg{target}) can be partitioned into two parts, a \emph{trusted} part and an \emph{untrusted} part;
\item The trusted \src{source\ heap} contains only locations whose type is trusted ($\src{\tau}\nvdash\src{\circ}$);
\item The \trg{trusted} \trg{target\ heap} contains only locations related to \src{trusted} \src{source} \src{locations} and these point to related values; more importantly, every \trg{trusted} \trg{target} \trg{location} is protected by a capability;
\item In the \trg{target}, any capability protecting a trusted location does not occur in attacker code, nor is it stored in an untrusted heap location.
\end{itemize}

We need to prove that this relation is preserved by reductions both in compiled and in attacker code.
The former follows from the proof of source robust safety (\Cref{thm:src-ty-impl-safe}).

The latter is formalised in \Cref{thm:att-act-pres-relatet} below and it is simple to prove.
Since all trusted locations are protected with capabilities, attackers have no access to trusted locations, and capabilities are unforgeable and unguessable (by the semantics of \LC).
At this point, knowing that the monitors at hand are related, and that source traces are always accepted by the considered source monitors, we can conclude that target traces are always accepted by the considered target monitors too.
Note that this kind of an argument requires all compilable source programs to be robustly safe and is, therefore, impossible for our first compiler \compup{\cdot}. 
Overall, avoiding the backtranslation results in a proof much simpler than that of \Cref{sec:rsc-instance}.

In order to state the lemma discussed above, we rely on notation $\src{C}\vdashatts\src{\Pi}\xtos{\lambda}\src{\Pi'}$, which states that the reduction taking place occurs in attacker code.
This can be easily defined by observing the stack of functions (explained in \Cref{ex:fstack}) and controlling whether the executing function is defined by the attacker or not.
Note that unlike before, we do not need to grow the partial bijection here because it is used to track locations as used by the compiled code and not by the attacker.
\begin{lemma}[Attacker actions preserve relatedness]\label{thm:att-act-pres-relatet}
  \begin{align*}
    \text{ if }
    &\
    \src{C, H\triangleright \Pi} \xtos{\lambda} \src{C, H'\triangleright \Pi'} 
    \text{ and }
    \trg{C,H\triangleright \trgb{\Pi}} \xtot{\trgb{\lambda}} \trg{C,H'\triangleright \trgb{\Pi'}}
    \text{ and }
    \src{C, H \triangleright \Pi} \relatetbeta \trg{C,H\triangleright \trgb{\Pi}}
    \\
    \text{ and }
    &
    \src{C}\vdashatts\src{\Pi}\xtos{\lambda}\src{\Pi'}
    \text{ and }
    \trg{C}\vdashattt\trgb{\Pi}\xtot{\trgb{\lambda}}\trgb{\Pi'}
    \text{ then }
    \src{C, H'\triangleright \Pi'} \relatetbeta \trg{C, H'\triangleright \trgb{\Pi'}}
  \end{align*}
\end{lemma}

\subsection{Non-Atomic Allocation of Capabilities}\label{sec:nonatom-alloc}
The compiler of \Cref{fig:compap} uses a new target language construct, \trg{newhide}, that simultaneously allocates a new location and protects it with a capability. 
This atomic construct is what certain capability machines provide, and it simplifies our proof of security in the concurrent setting at hand.
If allocation and protection were not atomic, then a concurrent adversary thread could protect a location that had just been allocated and acquire the capability to it before the allocating thread could do so. 
This would break the cross-language relation we use in our proof. 
However, note that this is not really an attack since it does not give the adversary any additional power to violate the safety property enforced by the monitor.
The reason is that the thread allocating the location gets stuck when it tries to acquire the capability itself (since the adversary obtained the capability), and, by the design of our compiler, it will not try to use the location before obtaining the capability.
Consequently, it is possible to do away with the \trg{newhide} construct for compiling allocation. \Cref{fig:nonat} shows how compilation can also be done using the $\lethide{\trg{x_k}~}{\trg{x}} \trg{\ldots}$ construct of \Cref{sec:rsc-instance}.
\myfig{
 \begin{align*}
 \compap{
   \typerulenolabel{T\LA-new-nonatomic}{
     \src{\Delta,\Gamma}\vdash \src{e} : \src{\tau} 
     &
     \src{C,\Delta,\Gamma;x:\Refs{\tau}}\vdash \src{s} %
   }{
     \src{C,\Delta,\Gamma}\vdash \src{\letnewty{x}{e}{\tau}{s}} %
   }{compap-new-nonat} 
 }
 &=
 \begin{cases}
   \trg{
     \begin{aligned}
       &
       \letnewt{\trg{x_t}~}{\trg{0}}{
       \\
       &\
         \lethide{\trg{x_k}~}{\trg{x_t}}{
         \\
         &\ \ 
           \letint{\trg{x_c}~}{~ \compap{\src{\Delta,\Gamma}\vdash \src{e} : \src{\tau}} 
           }{
           \\
           &\ \ \ 
             \trg{x_t := x_c \with{x_k};}
           \\
           &\ \ \ 
            \letint{\trg{x}~}{~\trg{\pair{x_t,x_k}}}{
            \\
            &\ \ \ \
             \compap{\src{C,\Delta,\Gamma;x:\Refs{\tau}}\vdash \src{s}}
             }
           }
         }
       }
     \end{aligned}
   }
   &
   \text{if }\src{\tau}\neq\UNS %
 \end{cases}
 \end{align*}
}{nonat}{Non-atomic implementation of capability allocation, only the interesting case for $\src{\tau}\neq\UNS$ is reported, the other is analogous to the one in \protect\Cref{fig:compap}.}

The price to pay is a slightly more involved cross-language relation, which must relate states where either (i) the heaps are partitioned as before, or (ii) the target execution is stuck trying to acquire a capability for a location that should be trusted.

We refer the interested reader to the accompanying technical report for details of the new relation and the proof that this alternative compiler also attains \rscomp~\cite{rsc-techrep}.

\section{\rscomp Relying on Target Memory Isolation}\label{sec:rsc-iso}

Both compilers presented so far used a capability-based target
language. To avoid giving the false impression that \rscomp is only
useful for this kind of a target, we show here how to attain \rscomp
when the protection mechanism in the target is completely
different. We consider a new target language, \LI, which does not have
capabilities, but instead offers \emph{coarse-grained} memory isolation based
on \emph{enclaves} (\Cref{sec:target-iso}). This mechanism is supported (in hardware) in
mainstream x86-64 and ARM CPUs (Intel calls this SGX~\cite{intel}; ARM
calls it TrustZone~\cite{arm}). It is also straightforward to
implement purely in software using any VM-based,
process-based, or in-process isolation technique. We present a
compiler \compai{\cdot} from our last source language \LA to \LI and
prove that it attains \rscomp (\Cref{sec:comp-ai}).

\subsection{\LI, a Target Language with Coarse-Grained Memory Isolation}
\label{sec:target-iso}

Language \LI replaces \LC's capabilities with a simple security
abstraction called an enclave. An enclave is a collection of code and
memory locations, with the properties that: (a) only code within the
enclave can access the memory locations of the enclave, and (b) code
from outside can transfer control only to designated entry points in
the enclave's code. For simplicity, \LI supports only one
enclave. Generalizing this to many enclaves is straightforward, but not necessary for our purposes.

To model the enclave, \LI components carry additional information
\oth{\OB{E}}, the list of functions that reside in the enclave. 
Only functions that are listed in \oth{\OB{E}}
can create%
, read %
and write
locations in the enclave, using statements and expressions. Locations in \LI are
\emph{integers} (not natural numbers). By convention, non-negative
locations are outside the enclave (accessible from any function),
while negative locations are inside the enclave (accessible only from
functions in \oth{\OB{E}}).
\myfig{
	\begin{align*}
		\mi{Components}~\oth{C} \bnfdef&\ \oth{H_0 ; \OB{F} ; \OB{I} ; \OB{E}}
		\qquad\qquad
		\mi{Enclave\ funcs.}~\oth{E} \bnfdef\ \oth{f}
		\qquad\qquad
		\mi{Heaps}~\oth{H} \bnfdef\ \othe \mid \oth{H ; n\mapsto v}
		\\
		\mi{Values}~\oth{v} \bnfdef&\ \oth{n}\in\mb{Z} \mid \oth{\pair{v,v}} \mid \oth{k} %
		\qquad\qquad
		\mi{Expressions}~\oth{e} \bnfdef\ \cdots \mid \oth{!e}
		\\
		\mi{Statements}~\oth{s} \bnfdef&\ \cdots \mid \oth{x := e} \mid \oth{\letnew{x}{e}{s}} \mid \oth{\letiso{x}{e}{s}}			
	\end{align*}	
}{li-syn}{Syntax of \LI. Elided and omitted elements are the same as in \LP (\Cref{fig:lp-syn}) or \LC (\Cref{fig:lc}).}

The semantics (\Cref{fig:li-sem}) are almost those of \LC, but
the expression semantics change to $\oth{C;H;f\triangleright e \redtoo v}$, recording which function \oth{f} is currently
executing. The operational rule for any memory operation checks that
either the access is to a location outside the enclave or that
$\oth{f} \in \oth{\OB{E}}$ (formalised by
$\oth{C}\vdash\oth{f}:\oth{prog}$).
Allocating a protected location (\Cref{tr:eo-hi}) is done with respect to location \oth{n}, which is the smallest allocated location.
This way, the new location \oth{n-1} will be for sure in the domain of the enclave.
Monitors of \LI are the same as those of \LC.

\myfig{
	\begin{center}
	\typerule{E\LI-deref}{
	    \oth{n\mapsto v}\in\oth{H}
	    &
	    (\oth{n}\geq \oth{0}) \text{ or } (\oth{n}< \oth{0} \text{ and } \oth{C}\vdash\oth{f}:\oth{prog})
	  }{
	    \oth{C; H ; f \triangleright !n } \redtoo \oth{v}
	  }{eo-de-t}
  \end{center}
	\hrule
  \begin{center}
	  \typerule{E\LI-isolate}{
	    \oth{H} = \oth{n\mapsto \_;H_1}
	    &
	    \oth{C; H ; f \triangleright e\redtoo v}
	    &
	    \oth{C}\vdash\oth{f}:\oth{prog}
	  }{
	    \oth{C, H \triangleright \proc{\letiso{x}{e}{s}}{\OB{f};f}} \xtoo{\epsilon} \oth{C, n-1\mapsto v;H \triangleright \proc{s\subo{n-1}{x}}{\OB{f};f}}
	  }{eo-hi}
	  \typerule{E\LI-assign}{
	    \oth{C; H ; f \triangleright e\redtoo v}
	    &
	    \oth{H} = \oth{H_1;n\mapsto \_;H_2}
	    &
	    \oth{H'} = \oth{H_1;n\mapsto v;H_2}
	    \\
	    (\oth{n}\geq\oth{0}) \text{ or } (\oth{n}<\oth{0} \text{ and } \oth{C}\vdash\oth{f}:\oth{prog})
	  }{
	    \oth{C, H\triangleright \proc{n:=e}{\OB{f};f}} \xtoo{\epsilon} \oth{C, H'\triangleright \proc{\skipo}{\OB{f};f}}
	  }{eo-ac-t}\hfill
	\end{center}
}{li-sem}{Semantics of \LI. Omitted rules are as in \LP (\Cref{fig:lp-sem}) and \LA (\Cref{fig:la-sem}).}

\subsection{Compiler from \LA to \LI}\label{sec:comp-ai}
\myfig{
  \begin{align*}
    \compai{
      \typerulenolabel{T\LA-component}{
        \src{C}\equiv\src{\Delta ; \OB{F} ; \OB{I}} 
        &
        \src{C}\vdash\src{\OB{F}}:\UNS
        &
        \src{\Delta}\vdash\src{ok}
        \\
        \fun{names}{\src{\OB{F}}}\cap\fun{names}{\src{\OB{I}}}=\emptyset
      }{
        \vdash \src{C}:\UNS
      }{compai-co}
    } &= \oth{%
    \oth{H_0} ;
    \compai{\OB{F}}; \compai{\OB{I}}} ; 
    \dom{\OB{\src{F}}}
    \qquad
    \text{if }\src{\Delta}\vdash_{\varphi}\oth{H_0}
  \end{align*}
  \begin{align*}
    \compai{
      \typerulenolabel{T\LA-new}{
        \src{C,\Delta,\Gamma}\vdash \src{e} : \src{\tau} 
        \\
        \src{C,\Delta,\Gamma;x:\Refs{\tau}}\vdash \src{s} %
      }{
        \begin{multlined}
          \src{C,\Delta,\Gamma}\vdash 
          \\
          \src{\letnewty{x}{e}{\tau}{s}} %
        \end{multlined}
      }{compai-new} 
    }
    =&\
    \begin{cases}
      \oth{
      \begin{aligned}
        &
        \letnewo{\oth{x}~}{\compai{\src{\Delta,\Gamma}\vdash \src{e} : \src{\tau}} 
        \\
        &\
        }{ 
          \compai{\src{C,\Delta,\Gamma;x:\Refs{\tau}}\vdash \src{s}} 
        }
      \end{aligned}
      }
      &
      \text{if }\src{\tau}=\UNS %
      \\
      \vspace{-1em}
      \\
      \oth{
      \begin{aligned}
        &
        \letiso{\oth{x}~}{ \compai{\src{\Delta,\Gamma}\vdash \src{e} : \src{\tau}} 
        \\
        &\
        }{ 
          \compai{\src{C,\Delta,\Gamma;x:\Refs{\tau}}\vdash \src{s}} 
        }
      \end{aligned}
      }
      &
      \text{else}
    \end{cases}
  \end{align*}
}{compai}{\compai{\cdot}, compilation of components and statements from \LA to \LI (excerpts, missing bits are adaptations of those bits from \Cref{fig:compup,fig:compap}).}
The high-level structure of the compiler \compai{\cdot} is similar to that of \compap{\cdot} from \Cref{sec:compap} (\Cref{fig:compai}). 
Compiler \compai{\cdot} ensures that all the (and only the) functions of the (trusted) component we write are part of the enclave, i.e., constitute \oth{\OB{E}}.  
Additionally, the compiler populates the safety-relevant heap \oth{H_0} based on the information in \src{\Delta} according to bijection $\varphi$.
This is captured by the judgement $\src{\Delta}\vdash_\varphi\oth{H_0}$, whose details are in \Cref{fig:details} and they follow the same intuition of \Cref{fig:details-1}.
Compiler, \compai{\cdot} also ensures that trusted locations are stored in the enclave.  
As before, the compiler relies on typing information for this.  
Locations whose types are shareable (subtypes of \UNS) are placed outside the enclave while those that are trusted (not subtypes of \UNS) are placed inside (second rule in \Cref{fig:compai}).
\myfig{
		\begin{center}
	\typerule{Initial-heap}{
		\src{\Delta}\vdash\oth{H}
		&
		\src{\Delta},\oth{H}\vdash\oth{v}\src{:\tau}
		&
		\src{\ell}\relatephi\oth{n}
	}{
		\src{\Delta,\ell:\tau}\vdash_{\varphi}\oth{H;n\mapsto v}
	}{ini-heap-o}
	\typerule{Initial-value}{
			(\src{\tau}\equiv\Bools \wedge \oth{v}\equiv\oth{0}) 
			&
			\vee
			&
			(\src{\tau}\equiv\Nats \wedge \oth{v}\equiv\oth{0}) 
			&
			\vee
			\\
			(\src{\tau}\equiv\src{\Refs{\tau}} \wedge \oth{v}\equiv\oth{n'} \wedge \oth{n'\mapsto v'}\in\oth{H} \wedge \src{\ell'}\relatephi\oth{n'} \wedge \src{\ell:\tau}\in\src{\Delta}, 
			\src{\Delta},\oth{H}\vdash\oth{v'}\src{:\tau}
			) 
			&
			\vee
			\\
			(\src{\tau}\equiv\src{\tau_1\times\tau_2} \wedge \oth{v}\equiv\oth{\pair{v_1,v_2}} \wedge 
			\src{\Delta},\oth{H}\vdash\oth{v_1}\src{:\tau_1}\wedge
			\src{\Delta},\oth{H}\vdash\oth{v_2}\src{:\tau_2}
			)
	}{
		\src{\Delta},\oth{H}\vdash_\varphi\oth{v}\src{:\tau}
	}{ini-val-o}
\end{center}
}{details}{Initialisation of the safety-relevant target heap based on the source typing environment.}

For this compiler we need a different partial bijection that drops capabilities and considers integers instead of natural numbers.
We indicate such a bijection with $\varphi$. Its type is $\src{\ell}\times\oth{n}$, but it has the same properties as $\beta$ in \Cref{sec:compup-proof}.

The cross-language relation $\relate$ is mostly unchanged. The only change is for relating locations, as defined below:
\begin{itemize}
	\item $\src{\ell}\relatephi\oth{n}$ if $(\src{\ell},\oth{n})\in\varphi$
\end{itemize}

Our main theorem is that \compai{\cdot} attains \rscomp.
\begin{theorem}[Compiler \compai{\cdot} attains \rscomp]\label{thm:compai-rsc}
  $\vdash\compai{\cdot}:\rscomp$  
\end{theorem}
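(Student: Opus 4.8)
The plan is to reuse the bisimulation-style, backtranslation-free strategy that established \Cref{thm:comp-ap-rsc} for \compap{\cdot}, adapting only the cross-language invariant from capability protection to enclave isolation. The entry point is again \Cref{thm:src-ty-impl-safe}: because \compai{\cdot} is defined on typing derivations and hence compiles only well-typed components, every source component it accepts is robustly safe, so each trace it generates (under any source attacker) is accepted by an agreeing source monitor. It then suffices to transport this acceptance across compilation by exhibiting an invariant that keeps the trusted parts of the source and target heaps $\relatephi$-related, and by arguing that related monitors accept $\relatephi$-related traces in lockstep.

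First I would rework the invariant of \Cref{sec:compap} for \LI. As there, both heaps split into a trusted and an untrusted part; the trusted \src{source} heap contains exactly the locations whose type is not shareable ($\src{\tau}\nvdash\src{\circ}$), and these are related through $\varphi$ to the trusted \oth{target} locations, which store $\relatephi$-related values. The clause that changes is the protection clause: in place of ``the protecting capability occurs neither in attacker code nor in an untrusted location'', I require that every trusted target location is an \emph{enclave} location (a negative integer) and that the enclave function set \oth{\OB{E}} is precisely $\dom{\OB{\src{F}}}$, the compiled component's own functions. This is exactly what \compai{\cdot} arranges (the second rule of \Cref{fig:compai} allocates non-shareable references inside the enclave via \Cref{tr:eo-hi}, and the component rule of \Cref{fig:compai} sets $\oth{\OB{E}}=\dom{\OB{\src{F}}}$).

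Next I would prove this relation is an invariant of paired execution. For reductions \emph{inside compiled code} the argument is inherited almost verbatim from the \compap{\cdot} development and ultimately from type preservation through compilation: the compiled component only ever writes type-correct values into trusted locations, so their declared types are maintained. For \emph{attacker} reductions the argument is strictly simpler than in the capability case: by \Cref{tr:eo-de-t,tr:eo-ac-t}, any access to a negative location requires $\oth{C}\vdash\oth{f}:\oth{prog}$, which fails for attacker functions since they lie outside \oth{\OB{E}}; hence the adversary can neither read nor write the trusted heap and the invariant is preserved trivially. Note that there is no protection token to leak here, so the delicate ``capability never escapes'' reasoning of \Cref{sec:compap} disappears, and the concurrent race that motivated \Cref{sec:nonatom-alloc} cannot occur because enclave allocation (\Cref{tr:eo-hi}) is a single atomic step that no non-enclave thread can perform. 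Finally, the new monitor relation $\src{M}\relate\trg{M}$ guarantees the target monitor steps on any target heap whose trusted part is $\relatephi$-related to a well-typed source heap; combined with relatedness of the trusted heaps, this makes the target monitor step whenever the source monitor does, and source robust safety then yields $\trg{M}\vdash\compai{\src{C}}:\trg{rs}$.

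I expect the main obstacle to be the compiled-code case of invariant preservation together with the \LI-specific bookkeeping, not the attacker case. Concretely, I must confirm that the enclave's code-integrity assumption---that only functions in \oth{\OB{E}} ever execute with enclave privileges---is actually maintained by the operational semantics, i.e.\ that control re-enters compiled code only at its declared entry points, so that an attacker can never run as an enclave function. I must also re-examine the cases that interact with shareable data (endorsement, the split between isolating and ordinary allocation, and the fork semantics) to check that no usable name of an enclave location is ever exported to the context. Once this entry-point discipline is verified, the remainder tracks the \compap{\cdot} argument, and the security conclusion follows from \Cref{thm:src-ty-impl-safe} and the monitor relation as above.
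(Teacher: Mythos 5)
Your proposal is correct and follows essentially the same route as the paper: a backtranslation-free proof that adapts the $\relatetbeta$ invariant of \compap{\cdot} to a relation $\relatetphi$ in which trusted locations are enclave (negative) locations with $\oth{\OB{E}}=\dom{\OB{\src{F}}}$, shows the invariant is preserved by compiled-code and attacker reductions (the latter via the $\oth{C}\vdash\oth{f}:\oth{prog}$ checks in \Cref{tr:eo-de-t,tr:eo-ac-t,tr:eo-hi} together with name disjointness of attacker and component functions), and then concludes from \Cref{thm:src-ty-impl-safe} and the monitor relation. Even the points you flag as obstacles (entry-point discipline, disappearance of the capability-leak and non-atomicity concerns) match the paper's treatment in \Cref{thm:att-act-pres-relatet-o} and the surrounding discussion.
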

The intuition behind the proof is
simple: all trusted locations (including safety-relevant locations)
are in the enclave and adversarial code cannot tamper with them.  The
proof follows the idea of the proof of \Cref{thm:comp-ap-rsc}: we build a
cross-language relation, which we show to be an invariant on
executions of source and corresponding compiled programs.  The only
change is that every location in the \oth{trusted} \oth{target}
\oth{heap} is isolated in the enclave.

\section{Fully Abstract Compilation}\label{sec:facomp}
Our next goal is to compare \rscomp to fully abstract compilation (or \facomp) at an intuitive
level. We first define \facomp
(\Cref{sec:facomp-def}).  
Then, we present a series of examples of how \facomp
may result in inefficiencies in compiled code (\Cref{sec:fac-short}).
Relying on these examples, we present what is needed to write a fully abstract compiler from \LU to \LP, the languages of our first compiler (\Cref{sec:facomp-instance}). 
We use this compiler to compare \rscomp and \facomp concretely, showing that, at least on this example, \rscomp permits more efficient code and affords simpler proofs than {\facomp} (\Cref{sec:fac-inst-proof}).

\paragraph{Remark}
This does not imply that one should always prefer \rscomp to \facomp blindly. 
In some cases, one may want to establish full abstraction for reasons other than security, so there \facomp is preferable. 
Also,  when the target language is typed~\cite{ahmedCPS,Ahmed:2008:TCC:1411203.1411227,nonintfree,max-embed} or has abstractions similar to those of the source, full abstraction may have no downsides (in terms of efficiency of compiled code and simplicity of proofs) relative to \rscomp.
However, in many settings, including those we consider, target languages are not typed, and often differ significantly from the source in their abstractions. 
In such cases, \rscomp is a worthy alternative.

\subsection{Formalising Fully Abstract Compilation}\label{sec:facomp-def}
As stated in \Cref{sec:intro}, \facomp requires the preservation and reflection of observational equivalence, and most existing work instantiates observational equivalence with contextual equivalence ($\ceq$).
Contextual equivalence and \facomp are defined below.
Informally, two components \P{_1} and \P{_2} are contextually equivalent if no context \ctxc{} interacting with them can tell them apart, i.e., they are \emph{indistinguishable}.
Contextual equivalence can encode security properties such as confidentiality, integrity, invariant maintenance and non-interference~\cite{scoo-j,schp,KULeuven-358154,Abadi:2012}.
We do not explain this well-known observation here, but refer the interested reader to the survey of Patrignani \etal~\cite{scsurvey}.
Informally, a compiler \compgen{\cdot} is fully abstract if it translates (only) contextually-equivalent source components into contextually-equivalent target ones.
\begin{definition}[Contextual equivalence and fully abstract compilation]\label{def:ceq}
\begin{align*}
  \P{_1}\ceq\P{_2} \isdef&\ \forall\ctxc{}\ldotp \ctxc{}\hole{\P{_1}}\divrc \iff \ctxc{}\hole{\P{_2}}\divrc,
  \text{ where $\divrc$ means execution divergence}
  \\
    \vdash\compgen{\cdot}:\facomp \isdef&\
        \forall\Cs{_1}, \Cs{_2}\ldotp \Cs{_1}\ceqs\Cs{_2}\iff\compgen{\Cs{_1}}\ceqt\compgen{\Cs{_2}}
  \end{align*}
\end{definition}

The security-relevant direction of \facomp is $\Rightarrow$~\cite{popl-backtrans}.
This direction is security-relevant because the proof thesis concerns target contextual equivalence ($\ceqt$).
Unfolding the definition of $\ceqt$ on the right of the implication yields a universal quantification over all possible target contexts \ctxt{}, which captures malicious attackers.
In fact, there may be target contexts \ctxt{} that can interact with compiled code in ways that are impossible in the source language.
Compilers that attain \facomp with untyped target languages often insert checks in compiled code that detect such interactions and respond to them securely~\cite{schp}, often by halting the execution~\cite{KULeuven-358154,scoo-j,popl-backtrans,catalin,Jagadeesan:2011:LMV:2056311.2056556,Abadi:2012,adriaanaplas,scsurvey}. 
These checks are often inefficient, but must be performed even if the interactions are not security-relevant. We now present examples of this.

\subsection{\facomp and Inefficient Compiled Code}\label{sec:fac-short}
We illustrate various ways in which \facomp forces inefficiencies in compiled code via a running example.
Consider a password manager written in an object-oriented language that is compiled to an assembly-like language.
We elide most code details and focus only on the relevant aspects.

\begin{lstlisting}[mathescape]
private db: Database;

public testPwd( user: Char[8], pwd: BitString): Bool{
  if( db.contains( user )){ return db.get( user ).getPassword() == pwd; }
}
...
private class Database{ ... }
\end{lstlisting}

The source program exports the function \lst{testPwd} to check whether a \lst{user}'s stored password matches a given password \lst{pwd}. The stored password is in a local database, which is represented by a piece of \emph{local state} in the variable \lst{db}. The details of \lst{db} are not important here, but the database is marked private, so it is not directly accessible to the context of this program in the source language.

\begin{example}[Extensive checks]\label{ex:checks}
A fully-abstract compiler for the program above must generate code that checks that the arguments passed to \lst{testPwd} by the context are of the right type~\cite{KULeuven-358154,scoo-j,fstar2js,catalin,mfac}. 
The code expects an array of characters of length 8. A parameter of a different type (e.g., an array of objects) cannot be passed in the source, so it must also be prevented in the target. Since the target is untyped, code must be inserted to check the argument.
Specifically, a fully abstract compiler will generate code similar to the following (we assume that arrays are passed as pointers into the heap).
\lstset{language=Asm,numbersep=5pt,frame=single}
\begin{lstlisting}[mathescape]
label testpwd
  for i = 0; i<8; i++  // 8 is the legth of the $\lst{user}$ field in the previous snippet
    load the memory word stored at address r0+i into r1
    test that r1 is a valid char encoding
  ...
\end{lstlisting} 
\lstset{language=Java,numbersep=5pt,frame=single}
Basically, this code dynamically checks that the first argument is a character array of length 8 because a type mismatch could lead to a violation of \facomp.
Such a check can be very inefficient when the length is very long.
\end{example}
The problem here is that \facomp forces these checks on all arguments, even those that have no security relevance.
In contrast, \rscomp does not need these checks. 
Indeed, none of our earlier compilers (\compup{\cdot}, \compap{\cdot} and \compai{\cdot}) insert them. 
Note that any robustly safe source program will already have programmer-inserted checks for all parameters that are relevant to the safety property of interest, and these checks will be compiled to the target. 
For other parameters, the checks are irrelevant, both in the source and the target, so there is no need to insert them.

\begin{example}[Component size in memory]\label{ex:size}
Let us now consider two different ways to implement the \lst{Database} class: as a \lst{List} and as a \lst{RedBlackTree}.
As the class is \lst{private}, its internal behaviour and representation of the database is invisible to the outside.
Let  \Cs{_{list}} be the program with the \lst{List} implementation and \Cs{_{tree}} be the program with the \lst{RedBlackTree} implementation; in the source language, these are equivalent.

However, a subtlety arises when considering the assembly-level, compiled counterparts of \Cs{_{list}} and \Cs{_{tree}}: the \emph{code} of a \lst{RedBlackTree} implementation consumes more memory than the code of a \lst{List} implementation.
Thus, a target-level context can distinguish \Cs{_{list}} from \Cs{_{tree}} by just inspecting the sizes of the code segments.
So, in order for the compiler to be fully abstract, it must produce code of a fixed size~\cite{KULeuven-358154,scoo-j}. This wastes memory and makes it impossible to compile some components.
An alternative would be to spread the components in an overly-large memory at random places i.e., use address-space layout randomization or ASLR, so that detecting different code sizes has a negligible chance of success~\cite{Abadi:2012,Jagadeesan:2011:LMV:2056311.2056556}. However, ASLR is now known to be broken~\cite{break-aslr2,break-aslr1}.
\end{example}
Again, we see that \facomp introduces an inefficiency in compiled code (pointless code memory consumption) even though this has no security implication here.
In contrast, \rscomp does not require this unless the safety property(ies) of interest care about the size of the code (which is very unlikely in a security context, since security by code obscurity is a strongly discouraged practice). 

\begin{example}[Wrappers for heap resources]\label{ex:wrap}
Assume that the \lst{Database} class is implemented as a \lst{List}.
Shown below are two implementations of the \lst{newList} method inside \lst{List} which we call \Cs{_{one}} and \Cs{_{two}}.
The only difference between \Cs{_{one}} and \Cs{_{two}} is that \Cs{_{two}} allocates two lists internally; one of these (\lst{shadow}) is used for internal purposes only.

\noindent\begin{minipage}{.45\textwidth}
\begin{lstlisting}[mathescape]
public newList(): List{

  ell = new List();
  return ell;
} 
\end{lstlisting} 
\end{minipage}\hfill
\begin{minipage}{.45\textwidth}
\begin{lstlisting}[mathescape]
public newList(): List{
  shadow = new List();
  ell = new List();
  return ell;
}
\end{lstlisting} 
\end{minipage}

Again, \Cs{_{one}} and \Cs{_{two}} are equivalent in a source language that does not allow pointer comparison. To attain {\facomp} when the target allows pointer comparisons, the pointers returned by \lst{newList} in the two implementations must be the same, but this is very difficult to ensure since the second implementation does more allocations.
A simple solution to this problem is to wrap \lst{ell} in a proxy object and return the proxy~\cite{scoo-j,mfac,KULeuven-358154,Morris:1973:PPL:361932.361937}.
Compiled code needs to maintain a lookup table mapping the proxy to the original object. 
Proxies must have allocation-independent addresses.
Proxies work but they are inefficient due to the need to look up the table on every object access.

Another way to attain \facomp is to weaken the source language, introducing an operation to distinguish object identities in the source~\cite{gcFA}.
However, this is a widely discouraged practice, as it changes the source language from what it really is and the implication of such a change may be difficult to fathom for programmers and verifiers alike.
\end{example}
In this example, \facomp forces all privately allocated locations to be wrapped in proxies, but \rscomp does not require this. 
Our target languages \LP, \LC and \LI support address comparison (addresses are natural numbers or integers in their heaps) but \compup{\cdot} and \compap{\cdot} just use capabilities to attain security efficiently, while \compai{\cdot} just relies on enclaves.
On the other hand, for attaining \facomp, capabilities or enclaves would be insufficient since they do not hide addresses; proxies would still be required (this point is concretely demonstrated in \Cref{sec:facomp-instance}).

\begin{example}[Strict termination vs divergence]\label{ex:termina}
Consider a  source language that is strictly terminating while a target language that is not.
Below is an extension of the password manager to allow database encryption via an externally-defined function.
As the database is not directly accessible from external code, the two implementations below \Cs{_{enc}} (which does the encryption) and \Cs{_{skip}} which skips the encryption are equivalent in the source.

\noindent\begin{minipage}{.45\textwidth}
\begin{lstlisting}[mathescape]
public encryptDB( func : Database -> Bitstring) : void {
  func( this.db );
  return;
}
\end{lstlisting} 
\end{minipage}\hfill
\begin{minipage}{.45\textwidth}
\begin{lstlisting}[mathescape]
public encryptDB( func : Database -> Bitstring) : void {

  return;
}
\end{lstlisting} 
\end{minipage}

If we compile \Cs{_{enc}} and \Cs{_{skip}} to an assembly language, the compiled counterparts \emph{cannot} be equivalent, since the target-level context can detect which function is compiled by passing a \lst{func} that diverges. Calling the compilation of \Cs{_{enc}} with such a \lst{func} will cause divergence, while calling the compilation of \Cs{_{skip}} will immediately return.
\end{example}
This case presents a situation where \facomp is outright \emph{impossible}. 
The only way to get \facomp is to make the source language artificially non-terminating. (See the work of Devriese \etal~\citet{lambda-seal} for more details of this particular problem.)
On the other hand, \rscomp can be easily attained even in such settings since it is completely independent of termination in the languages (unless the safety properties of interest are termination-sensitive, which is usually not the case). 
For the specific examples we have considered, even if our source languages \LU and \LA were restricted to terminating programs only, the same compilers and the same proofs of \rscomp would still work.

\paragraph{Remark}
It is worth noting that many of the inefficiencies above might be resolved by just replacing contextual equivalence with a different equivalence in the statement of \facomp.
However, it is not known how to do this generally for arbitrary sources of inefficiency and, further, it is unclear what the security consequences of such instantiations of \facomp would be. 
On the other hand, {\rscomp} is \emph{uniform} and it does address all these inefficiencies.

An issue that can normally not be addressed just by tweaking equivalences is side-channel leaks, as they are, by definition, not expressible in the language.
Neither \facomp nor \rscomp deals with side channels, but recent results describe how to account for side channels in security-preserving compilers~\cite{BartheGL18}.

\subsection{Towards a Fully Abstract Compiler from \LU to \LP}\label{sec:facomp-instance}

In this section, we describe what it would take to build a fully abstract compiler from \LU to \LP. Along the way, we note how this compiler would be less efficient than the \rscomp compiler we described earlier. In fact, to get a fully abstract compiler, we need to adjust the languages. We describe these language changes first.

\subsubsection{Language Extensions to \LU and \LP}\label{sec:fac-langs}
This section lists the language extensions required for a fully-abstract compiler from \LU to \LP. It is not possible to motivate all the language changes before explaining the details of the compiler, so some of the justification is postponed to \Cref{sec:fac-comp}.

A first concern for full abstraction is that a target context can always determine the memory consumption of two compiled components, analogously to \Cref{ex:size}. 
To ensure that this does not break full abstraction, we add a source expression \src{size} that returns the number of locations \src{\ell} allocated in the heap.

In the target language \LP, we need to know whether an expression is a pair, whether it is a location, and we need to be able to compare two capabilities. 
Accordingly, we add the operations \trg{isloc(e)}, \trg{ispair(e)} and \trg{eqcap(e,e)}, respectively.

Finally, compiled code needs private functions for its runtime checks that must not be visible to the context.
\LP does not have this functionality: all functions defined by a component can be called by the context.
Accordingly, we modify \LP so that all functions \trg{\OB{F}} defined in a component are private to it by default.
Each component explicitly includes the list of functions it exports; only these functions can be called by the context.

\subsubsection{The \compfac{\src{\cdot}} Compiler}\label{sec:fac-comp}
The fully abstract compiler \compfac{\src{\cdot}} is similar to the \rscomp attaining compiler \compup{\cdot}, but with critical differences.
We know that fully abstract compilation preserves all source abstractions in the target language. Here, the only abstraction that distinguishes \LP from \LU is that locations are abstract in \LU, but concrete natural numbers in \LP.
Thus, locations allocated by compiled code must not be passed directly to the context as this would reveal the allocation order (as seen in \Cref{ex:wrap}).
Instead of passing the location \trg{\pair{n,k}} to the context, the compiler arranges for an opaque handle \trg{\pair{n',k_{com}}} (that cannot be used to access any location directly) to be passed.
Such an opaque handle is often  called a \emph{mask} or \emph{seal} in the literature and this technique is often called dynamic sealing~\cite{sumii}.

To ensure that masking is done properly, \compfac{\src{\cdot}} inserts code at entry points and at exit points to compiled code (i.e., at function calls, before returning, before and after callbacks), \emph{wrapping} the compiled code in a way that enforces masking.
This notion of wrapping is standard in literature on fully abstract compilation~\cite{fstar2js,mfac}.
The wrapper keeps a list \trg{\OB{L}} of component-allocated locations that are shared with the context in order to know their masks.
When a component-allocated location is shared, it is added to the list \trg{\OB{L}}. The mask of a location is its index in this list. If the same location is shared again it is not added again but its previous index is used.
So if \trg{\pair{n,k}} is the 4th element of \trg{\OB{L}}, its mask is \trg{\pair{4,k_{com}}}.
To implement lookup in \trg{\OB{L}} we must compare capabilities too, for we  rely on the newly added operation \trg{eqcap}.
To ensure capabilities do not leak to the context, the second field of the pair is a constant capability \trg{k_{com}}, which protects a dummy location the compiled code does not actually use.
Technically speaking, this is exactly how existing fully abstract compilers operate (e.g., as in the work of Patrignani \etal~\cite{scoo-j}).

As should be clear, this kind of masking is very inefficient at runtime. 
However, even this masking is not sufficient for full abstraction. Next, we explain additional things the compiler must do.

\paragraph{Determining when a Location is Passed to the Context.}
A component-allocated location can be passed to the context not just as a function argument but on the heap.
So before passing control to the context the compiled code needs to scan the whole heap where a location can be passed and mask any component-allocated locations it finds.
Dually, when receiving control the compiled code must scan the heap to unmask all masked locations.
The problem now is determining what parts of the heap to scan and how.
Specifically, the compiled code needs to keep track of all the locations (and related capabilities) that are shared, i.e., (i) passed from the context to the component and (ii) passed from the component to the context.
These are the locations through which possible communication of locations can happen.
Compiled code keeps track of these shared locations in a list \trg{\OB{S}}.
Intuitively, on the first function call from the context to the compiled component, assuming the parameter is a location, the compiled code will register that location and all other locations reachable from it in \trg{\OB{S}}.
On subsequent ? (incoming) actions, the compiled code will register all new locations available as parameters or reachable from \trg{\OB{S}}.
Then, on any ! (outgoing) action, the compiled code must scan whatever locations (that the compiled code has created) are now reachable from \trg{\OB{S}} and add them to \trg{\OB{S}}. We need the new instructions \trg{isloc} and \trg{ispair} in \LP to compute these reachable locations.
Of course, this kind of scanning of locations reachable from \trg{\OB{S}} at every call/return between components can be extremely costly.

\paragraph{Enforcing the Masking of Locations}
The functions \trg{mask} and \trg{unmask} are added by the compiler to the compiled code. 
The first function takes a location (which intuitively contains a value \trg{v}) and replaces (in \trg{v}) any pair \trg{\pair{n,k}} of a location protected with a component-created capability \trg{k} with its index in the masking list \trg{\OB{L}}.
The second function replaces any pair \trg{\pair{n,k_{com}}} with the $n$th element of the masking list \trg{\OB{L}}.
These functions should not be directly accessible to the context (else it can \trg{unmask} any \trg{mask}ed location and break full abstraction). 
This is why \LP needs private functions.

\paragraph{Letting the Context use Masked Locations}
Masked locations cannot be used directly by the context for reading and writing.
Thus, compiled code must provide a \trg{read} and a \trg{write} function (both of which are public) that implement reading and writing for masked locations.

\smallskip

As should be clear, code compiled through \compfac{\src{\cdot}} has a lot of runtime overhead in calculating the heap reachable from \trg{\OB{S}} and in \trg{mask}ing and \trg{unmask}ing locations.
Additionally, it also has code memory overhead: the functions \trg{read}, \trg{write}, \trg{mask}, \trg{unmask} and list manipulation code must be included.
Finally, there is data overhead in maintaining  \trg{\OB{S}}, \trg{\OB{L}} and other supporting data structures to implement the runtime checks described above. 
In contrast, the code compiled through \compup{\cdot} (which is just robustly safe and not fully abstract) has none of these overheads.

\subsection{Proving that \compfac{\src{\cdot}} is a Fully Abstract Compiler}\label{sec:fac-inst-proof}
Using \compfac{\src{\cdot}} as a concrete example, we now discuss why \emph{proving}  \facomp can be harder than proving \rscomp.
Consider the hard part of \facomp, the forward implication,
  $\Cs{_1}\ceqs\Cs{_2}\Rightarrow\compgen{\Cs{_1}}\ceqt\compgen{\Cs{_2}}$. The contrapositive of this statement is
  $\compgen{\Cs{_1}}\nceqt\compgen{\Cs{_2}}\Rightarrow\Cs{_1}\nceqs\Cs{_2}$.
By unfolding the definition of $\nceqs$ we see that, given a target context \ctxt{} that distinguishes \compgen{\Cs{_1}} from \compgen{\Cs{_2}}, it is necessary to show that there exists a source context \ctxs{} that distinguishes \Cs{_1} from \Cs{_2}.
That source context \ctxs{} must be built (backtranslated) starting from the already given target context~\ctxt{} that differentiates \compgen{\Cs{_1}} from \compgen{\Cs{_2}}.

A backtranslation directed by the syntax of the target context \ctxt{} is hopeless here since  the target expressions \trg{iscap} and \trg{isloc} cannot be directly backtranslated to valid source expressions.
Hence, we resort to another well-known technique~\cite{KULeuven-358154,mfac}. 
First, we define a \emph{fully abstract (labeled) trace semantics} for the target language. 
A trace semantics is fully abstract when  two components are contextually inequivalent iff their trace semantics differ in at least one trace. 
So if we write \trt{\trg{C}} to denote the traces of the component \trg{C}, we can formally state full abstraction of the trace semantics as: $\trt{\compfac{\src{C_1}}}=\trt{\compfac{\src{C_2}}}\iff\compfac{\src{C_1}}\ceqt\compfac{\src{C_2}}$.
Given this trace semantics, the statement of the forward implication of full abstraction reduces to: 
\begin{align*}
  \trt{\compfac{\src{C_1}}}\neq\trt{\compfac{\src{C_2}}}\Rightarrow\Cs{_1}\nceqs\Cs{_2}.
\end{align*}
The advantage of this formulation over the original one is that now we can construct a distinguishing source context for \src{C_1} and \src{C_2} using the \emph{trace} on which \trt{\compfac{\src{C_1}}} and \trt{\compfac{\src{C_2}}} disagree. 
While this proof strategy of constructing a source context from a trace is similar to our proof of \rscomp, %
it is fundamentally much harder and much more involved. There are two reasons for this.

First, fully abstract trace semantics are much more complex than our simple trace semantics of \LP from earlier sections. 
The reason is that our earlier trace semantics include the entire heap in every action, but this breaks full abstraction of the trace semantics: such trace semantics also distinguish contextually equivalent components that differ in their internal private state. 
In a fully abstract trace semantics, the trace actions must record \emph{only} those heap locations that are shared between the component and the context. Consequently, the definition of the trace semantics must inductively track what has been shared in the past. In particular, the definition must account for locations reachable indirectly from explicitly shared locations. This complicates both the definition of traces and the proofs that build on the definition.

Second, the source context that the backtranslation constructs from a target trace must simulate the shared part of the heap at every context switch. Since locations in the target may be masked now, the source context must maintain a map with the source locations corresponding to the target masked ones, which complicates the source context substantially.
Call this map \src{B}.
Now, this affects two patterns of target traces that need to be handled in a special way: \trg{\clh{read}{v}{H}\cdot\rth{}{H'}{}} and \trg{\clh{write}{v}{H}\cdot\rth{}{H'}{}}.
Normally, these patterns would be translated to source-level calls to the same functions (\src{read} and \src{write}), but this is not possible.
In fact, the source code has no \src{read} or \src{write} function, and the target-level calls to these functions need to be backtranslated to the corresponding source constructs (\src{!} and \src{:=}, respectively).
The locations used by these constructs must be looked up from \src{B} as these are reads and writes to masked locations.
Moreover, calls and returns to \trg{read}  can be simply ignored since the effects of reads are already captured by later actions in traces.
Calls and returns to \trg{write} cannot be ignored as they set up a component location (albeit masked) in a certain way and that affects the behaviour of the component.
We show in \Cref{ex:backtr-fatr} how to backtranslate calls and returns to \trg{write}.
\begin{example}[Backtranslation of traces]\label{ex:backtr-fatr}
Consider the trace below and its backtranslation.
\begin{center}
	\begin{tikzpicture}[remember picture]
		\node[align=left](trace){ 
			${\scriptstyle (1)}\left. \trg{~~\clh{f}{0}{ \trg{1\mapsto4} }} \right. $
			\\
			${\scriptstyle (2)}\left. \trg{~~\rth{}{1\mapsto\pair{1,k_{com}} }{}} \right. $
			\\
			${\scriptstyle (3)}\left[
				\begin{aligned}
					&\trg{\clh{write}{\pair{\pair{1,k_{com}},5}}{ 1\mapsto\pair{1,k_{com}} }}
					\\
					&\trg{\rth{}{ 1\mapsto\pair{1,k_{com}} }{} }
				\end{aligned}
			\right.$
			};

		\node[align=left, right of=trace, xshift = 17em](code){
			\src{main(x)\mapsto}
			\\
			$\left.
			\begin{aligned}[c]
				&\src{\tikz\node(ch1){}; \letnew{x}{4}{L::\pair{x,1}}}\quad
				\\
				&\src{\tikz\node(zzz){}; \call{f}~0}
			\end{aligned}
			\right]{\scriptstyle (1)}$
			\\
			$\left.\src{\tikz\node(ch3){};\, \letin{x}{!L(1)}{B::\pair{x,1}}} \qquad \right]{\scriptstyle (2)}$
			\\
			$\left.
			\begin{aligned}[c]
				&\src{\tikz\node(ch4){}; !B(1) := 5 }\quad
			\end{aligned}
			\right]{\scriptstyle (3)}$
		};

	\end{tikzpicture}
\end{center}

	The first action, where the context registers the first location in the list \src{L}, is as before. %
	Then in the second action the compiled component passes to the context (in location \trg{1}) a masked location with index \trg{1} and, later, the context writes \trg{5} to it.
	The backtranslated code must recognise this pattern and store the location that, in the source, corresponds to the mask \trg{1} in the list \src{B} (action 2).
	In action 3, when it is time to write \src{5} to that location, the code looks up the location to write to from \src{B}.
\end{example}
It should be clear that this proof of \facomp is substantially harder than our corresponding proof of \rscomp, which needed neither fully abstract traces, nor tracking any mapping in the backtranslated source contexts.

\section{Related Work}\label{sec:rw}
Recent work~\citet{rhpc-arx,rhc} presents new criteria for secure compilation that ensure preservation of subclasses of hyperproperties. 
Hyperproperties~\cite{hyperproperties} are a formal representation of predicates on programs, i.e., they are predicates on sets of traces.
Hyperproperties capture many security-relevant properties including not just conventional safety and liveness, which are predicates on traces, but also properties like non-interference, which is a predicate on pairs of traces. 
Modulo technical differences, our definition of \rscomp coincides with the criterion of ``robust safety property preservation'' in~\citet{rhpc-arx,rhc}. 
We show, through concrete instances, that this criterion can be easily realized by compilers, and develop two proof techniques for establishing it. 
We further show that the criterion leads to more efficient compiled code than does \facomp.
Additionally, the criteria in~\cite{rhpc-arx,rhc} assume that behaviours in the source and target are represented using the same alphabet. 
Hence, the definitions (somewhat unrealistically or ideally) do not require a translation of source properties to target properties. 
That line of work has been extended to consider criteria that preserve hyperproperties between languages with different trace models which are connected by a trace relation similar to ours~\cite{rc-rel}.
Like this last work, we consider differences in the representation of behaviour in the source and in the target and this is accounted for in our monitor relation $\src{M} \relate \trg{M}$.
Unlike this last work, we provide different instances where the relation is instantiated in order to show how the theory scales to different protection mechanisms.
A slightly different account of the difference between traces across languages is presented by Patrignani and Garg~\citet{schp} in the context of reactive black-box programs.
Abate \etal~\citet{catalinRSC} define a variant of robustly-safe compilation called RSCC specifically tailored to the case where (source) components can perform undefined behaviour.
RSCC does not consider attacks from arbitrary target contexts but from compiled components that can become compromised and behave in arbitrary ways.
To demonstrate RSCC, Abate \etal~\citet{catalinRSC} rely on two backends for their compiler: software fault isolation and tag-based monitors. On the other hand, we rely on capability machines and memory isolation.
RSCC also preserves (a form of) safety properties and can be achieved by relying on a trace-based backtranslation; it is unclear whether proofs can be simplified when the source is verified and concurrent, as in our second compiler.

ASLR~\cite{Abadi:2012,Jagadeesan:2011:LMV:2056311.2056556}, protected module architectures~\cite{KULeuven-358154,scoo-j,mfac,adriaanaplas}, tagged architectures~\cite{catalin}, capability machines~\cite{cheri-c-fcs} and cryptographic primitives~\cite{Abadi:2000:APC:325694.325734,Abadi:2002:SIC:570966.570969,Bugliesi:2007:SIT:1190215.1190253,Corin:2008:SCS:1454415.1454419} have been used as targets for \facomp.
We believe all of these can also be used as targets of \rscomp-attaining compilers. In fact, some targets such as capability machines seem to be better suited to \rscomp than \facomp, as we demonstrated.

Ahmed \etal prove full abstraction for several compilers between typed languages~\cite{ahmedCPS,Ahmed:2008:TCC:1411203.1411227,max-embed}. As compiler intermediate languages are often typed, and as these types often serve as the basis for complex static analyses, full abstraction seems like a reasonable goal for (fully typed) intermediate compilation steps.
In the last few steps of compilation, where the target languages are unlikely to be typed, one could establish robust safety preservation and combine the two properties (vertically) to get an end-to-end security guarantee.

There are  three other criteria for secure compilation that we would like to mention: securely compartmentalised compilation (SCC)~\cite{catalin}, trace-preserving compilation (TPC)~\cite{schp} and non-interference-preserving compilation (NIPC)~\cite{Barthe:2007:STP:1223678.1223704,shao,jasmin,BartheGL18,kedar}.
SCC is a re-statement of the ``hard'' part of full abstraction (the forward implication), but adapted to languages with undefined behaviour and a static notion of components. 
Thus, SCC suffers from much of the same efficiency drawbacks as \facomp.
TPC is a stronger criterion than \facomp, that most existing fully abstract compilers also attain.
Again, compilers attaining TPC also suffer from the drawbacks of compilers attaining \facomp.

NIPC preserves a single property: noninterference (NI). 
However, this line of work does not consider active target-level adversaries yet. Instead, the focus is on compiling whole programs.
Since noninterference is not a safety property, it is difficult to compare NIPC to \rscomp directly. However, noninterference can also be approximated as a safety property~\cite{seinfl}. So, in principle, \rscomp (with adequate massaging of observations) can be applied to stronger end-goals than NIPC. 

Swamy \etal~\cite{embed} embed an F$^*$ model of a gradually and robustly typed variant of JavaScript into an F$^*$ model of JavaScript. Gradual typing supports constructs similar to our endorsement construct in \LA. Their type-directed compiler is proven to attain memory isolation as well as static and dynamic memory safety. However, they do not consider general safety properties, nor a general criterion for compiler security.

Two of our target languages rely on capabilities for restricting access to sensitive locations from the context. Although capabilities are not mainstream in any processor, fully functional research prototypes such as Cheri exist~\cite{cheri}. Capability machines have previously been advocated as a target for efficient secure compilation~\cite{akram} and preliminary work on compiling C-like languages to them exists, but the criterion applied is \facomp~\cite{cheri-c-fcs,SkorstengaardDB19,SkorstengaardDB18}.

On the other hand, one of our target languages relies on coase-grained isolation, a feature that is being increasingly supported in hardware (Intel calls this SGX~\cite{intel}; ARM
calls it TrustZone~\cite{arm}).
Coarse-grained isolation has also been advocated as a target for secure compilation~\cite{tome-secure-compilation,adriaanaplas,KULeuven-358154}. 
The criterion applied in these works is \facomp, which is what lets us draw a starker comparison in \Cref{sec:facomp}.

\section{Conclusion}\label{sec:conc}
This paper has examined robustly safe compilation (\rscomp), a soundness criterion for compilers with direct relevance to security.
We have shown that the criterion is easily realizable and may lead to more efficient code than does fully abstract compilation. 
We have also presented two techniques for establishing that a compiler attains \rscomp. 
One is an adaptation of an existing technique, backtranslation, and the other is based on inductive invariants. 

\begin{acks}
The authors would like to thank Dominique Devriese, Akram El-Korashy, C\u{a}t\u{a}lin Hri\c{t}cu, Frank Piessens, David Swasey and the anonymous reviewers of ESOP'19 and of TOPLAS for useful feedback and discussions on an earlier draft. 

This work was partially supported by the German Federal Ministry of Education and Research (BMBF) through funding for the CISPA-Stanford Center for Cybersecurity (FKZ: 13N1S0762) 
and 
    by the Office of Naval Research for support through grant N00014-18-1-2620, Accountable Protocol Customization.

\end{acks}

\appendix 
\section{The Untyped Source Language: \LU}\label{sec:src}
This is a sequential while language with monitors.

\subsection{Syntax}
\begin{align*}
	\mi{Whole\ Programs}~\src{P} \bnfdef&\ \src{\src{\ell_{root}} ; H ; \OB{F} ; \OB{I}}
	\\
	\mi{Components}~\src{C} \bnfdef&\ \src{\src{\ell_{root}} ; \OB{F} ; \OB{I}}
	\\
	\mi{Contexts}~\src{A} \bnfdef&\ \src{ H ; \OB{F}\hole{\cdot}}
	\\
	\mi{Interfaces}~\src{I} \bnfdef&\ \src{f}
	\\
	\mi{Functions}~\src{F} \bnfdef&\ \src{f(x)\mapsto s;\ret}
	\\
	\mi{Operations}~\src{\op} \bnfdef&\ \src{+} \mid \src{-}
	\\
	\mi{Comparison}~\src{\bop} \bnfdef&\ \src{==} \mid \src{<} \mid \src{>}
	\\
	\mi{Values}~\src{v} \bnfdef&\ %
	\src{b}\in\{\src{\trues},\src{\falses}\} \mid \src{n}\in\mb{N} \mid \src{\pair{v,v}} \mid \src{\ell}  
	\\
	\mi{Expressions}~\src{e} \bnfdef&\ \src{x} \mid \src{v} \mid \src{e \op e} \mid \src{e \bop e} \mid \src{\pair{e,e}} \mid \src{\projone{e}} \mid \src{\projtwo{e}} \mid \src{!e} 
	\\
	\mi{Statements}~\src{s} \bnfdef&\ \skips \mid \src{s;s} \mid \src{\letin{x}{e}{s}} \mid \src{\ifte{e}{s}{s}} 
	\\
	\mid&\ \src{\call{f}~e} \mid \src{\letnew{x}{e}{s}} \mid \src{x := e} 
	\\
	\mi{Eval.\ Ctxs.}~\src{E} \bnfdef&\ \src{\hole{\cdot}} \mid \src{e \op E} \mid \src{E \op n} \mid \src{e \bop E} \mid \src{E \bop n} 
	\\
	\mid&\ \src{\pair{e,E}} \mid \src{\pair{E,v}} \mid \src{\projone{E}} \mid \src{\projtwo{E}} \mid \src{!E}
	\\
	\mi{Heaps}~\src{H} \bnfdef&\ \srce \mid \src{H ; \ell\mapsto v}
	\\
	\mi{Monitors}~\src{M} \bnfdef&\ \src{(\set{\sigma},\monred,\sigma_0,\ell_{root},\sigma_c)}
	\\
	\mi{Mon.\ States}~\src{\sigma} \in&\ \src{\mc{S}}
	\\
	\mi{Mon.\ Reds.}~\src{\monred} \bnfdef&\ \srce \mid \src{\monred;(s,H,s)}
	\\
	\mi{Substitutions}~\src{\rho} \bnfdef&\ \srce \mid \src{\rho}\subs{v}{x}
	\\
	\mi{Prog.\ States}~\src{\Omega}\bnfdef&\ \src{C, H\triangleright \proc{s}{\OB{f}} } 
	\\
	\mi{Labels}~\src{\lambda} \bnfdef&\ \src{\epsilon} \mid \src{\alpha}
	\\
	\mi{Actions}~\src{\alpha} \bnfdef&\ \src{\clh{f}{v}{H}} \mid \src{\cbh{f}{v}{H}} \mid \src{\rth{v}{H}} \mid \src{\rbh{v}{H}}
	\\
	\mi{Traces}~\src{\OB{\alpha}} \bnfdef&\ \srce \mid \src{\OB{\alpha}\cdot\alpha}
\end{align*}

\subsection{Dynamic Semantics}
\Cref{tr:us-aux-intern,tr:us-aux-in,tr:us-aux-out} dictate the kind of a jump between two functions: if internal to the component/attacker, in(from the attacker to the component) or out(from the component to the attacker).
\Cref{tr:plug-us} tells how to obtain a whole program from a component and an attacker.
\Cref{tr:whole-us} tells when a program is whole.
\Cref{tr:ini-us} tells the initial state of a whole program.
\Cref{tr:ms-us} tells when a monitor makes a single step given a heap.

\mytoprule{\text{Helpers}}
\begin{center}
	\typerule{\LU-Jump-Internal}{
		((\src{f'}\in\src{\OB{I}} \wedge \src{f}\in\src{\OB{I}}) \vee
				\\
		(\src{f'}\notin\src{\OB{I}} \wedge \src{f}\notin\src{\OB{I}}))
	}{
		\src{\OB{I}}\vdash\src{f,f'}:\src{internal}
	}{us-aux-intern}
	\typerule{\LU-Jump-IN}{
		\src{f}\in\src{\OB{I}} \wedge \src{f'}\notin\src{\OB{I}}
	}{
		\src{\OB{I}}\vdash\src{f,f'}:\src{in}
	}{us-aux-in}
	\typerule{\LU-Jump-OUT}{
		\src{f}\notin\src{\OB{I}} \wedge \src{f'}\in\src{\OB{I}}
	}{
		\src{\OB{I}}\vdash\src{f,f'}:\src{out}
	}{us-aux-out}
	\typerule{\LU-Plug}{
		\src{A} \equiv \src{H ; \OB{F}\hole{\cdot}}
		&
		\src{C}\equiv\src{\src{\ell_{root}} ; \OB{F'} ; \OB{I}} 
		&
		\vdash\src{C,\OB{F}}:\src{whole}
		\\
		\src{main}\in\fun{names}{\src{\OB{F}}}
	}{
		\src{A\hole{C}} = \src{\src{\ell_{root}} ; H;\ell_{root}\mapsto0 ; \OB{F;F'}; \OB{I}}
	}{plug-us}
	\typerule{\LU-Whole}{
		\src{C}\equiv\src{\src{\ell_{root}} ; \OB{F'} ; \OB{I}} 
		&
		\fun{names}{\src{\OB{F}}}\cap\fun{names}{\src{\OB{F'}}}=\emptyset
		\\
		\fun{names}{\src{\OB{I}}}\subseteq \fun{names}{\src{\OB{F}}}\cup\fun{names}{\src{\OB{F'}}}
		&
		\fun{fv}{\src{\OB{F}}}\cup\fun{fv}{\src{\OB{F'}}}=\srce
	}{
		\vdash\src{C,\OB{F}}:\src{whole}
	}{whole-us}
	\typerule{\LU-Initial State}{
		\src{P}\equiv\src{\src{\ell_{root}} ; H ; \OB{F} ; \OB{I}}
		&
		\src{C}\equiv\src{\src{\ell_{root}} ; \OB{F} ; \OB{I}}
		&
	    \src{main(x)\mapsto s;\ret}\in\src{\OB{F}}
  	}{
    	\SInits{P} = \src{C ; H,\src{\ell_{root}\mapsto 0} \triangleright \proc{s\subs{0}{x}}{main}}
	}{ini-us}
\end{center}
\botrule

\subsubsection{Component Semantics}\label{src:src-sem-com}
\begin{align*}
	&\src{H\triangleright e \redtos e'} 
	&&\text{Expression \src{e} reduces to \src{e'}.}
	\\
	&\src{C, H \triangleright s} \xtos{\epsilon} \src{C', H' \triangleright s'} 
	&&\text{Statement \src{s} reduces to \src{s'} and evolves the rest accordingly,}
	\\
	&&&\text{emitting label \src{\lambda}.}
	\\
	&\src{\Omega} \Xtos{\OB{\alpha}} \src{\Omega'}
	&& \text{Program state \src{\Omega} steps to \src{\Omega'} emitting trace \src{\OB{\alpha}}.}
\end{align*}

\mytoprule{\src{H}\triangleright \src{e} \redtos \src{e'} }
\begin{center}
	\typerule{E\LU-ctx}{
		\src{H \triangleright e \redtos e'}
	}{
		\src{H \triangleright E\hole{e}} \redtos \src{E\hole{e'}}
	}{eus-cth}
	\typerule{E\LU-val}{
	}{
		\src{H \triangleright v} \redtos \src{v}
	}{eus-val}
	\typerule{E\LU-p1}{
	}{
		\src{H \triangleright \projone{\pair{v,v'}} \redtos v}
	}{eus-p1}
	\typerule{E\LU-p2}{
	}{
		\src{H \triangleright \projtwo{\pair{v,v'}} \redtos v'}
	}{eus-p2}
	\typerule{E\LU-op}{
		n\op n'=n''
	}{
		\src{H \triangleright n \op n' \redtos n''}
	}{eus-op}
	\typerule{E\LU-comp}{
		n\bop n'=b
	}{
		\src{H \triangleright n \bop n' \redtos b}
	}{eus-bop}
	\typerule{E\LU-dereference}{
		\src{\ell\mapsto v } \in \src{H}
	}{
		\src{H \triangleright !\ell \redtos v }
	}{eus-de}
\end{center}
\botrule

\mytoprule{\src{C, H \triangleright s} \xtos{\lambda} \src{C', H' \triangleright s'} }
\begin{center}
	\typerule{E\LU-sequence}{
	}{
		\src{C, H \triangleright \skips;s} \xtos{\epsilon} \src{C, H \triangleright s}
	}{eus-seq}
	\typerule{E\LU-step}{
		\src{C, H \triangleright s} \xtos{\lambda} \src{C, H' \triangleright s'}
	}{
		\src{C, H \triangleright s;s''} \xtos{\lambda} \src{C, H' \triangleright s';s''}
	}{eus-step}
	\typerule{E\LU-if-true}{
		\src{H \triangleright e\redtos\trues}
	}{
		\src{C, H \triangleright \ifte{e}{s}{s'}} \xtos{\epsilon} \src{C, H \triangleright s}
	}{eus-ift}
	\typerule{E\LU-if-false}{
		\src{H \triangleright e\redtos\falses}
	}{
		\src{C, H \triangleright \ifte{e}{s}{s'}} \xtos{\epsilon} \src{C, H \triangleright s}
	}{eus-iff}
	\typerule{E\LU-letin}{
		\src{H \triangleright e\redtos v}
	}{
		\src{C, H \triangleright \letin{x}{e}{s}} \xtos{\epsilon} \src{C, H \triangleright s\subs{v}{x}}
	}{eus-letin}
	\typerule{E\LU-alloc}{
		\src{H \triangleright e\redtos v}
		&
		\src{\ell}\notin\dom{\src{H}}
	}{
		\src{C, H \triangleright \letnew{x}{e}{s}} \xtos{\epsilon} \src{C, H; \ell\mapsto v \triangleright s\subs{\ell}{x} }
	}{eus-al}
	\typerule{E\LU-update}{
		\src{H \triangleright e\redtos v}
		\\
		\src{H}=\src{H_1; \ell\mapsto v' ; H_2}
		&
		\src{H'}=\src{H_1; \ell\mapsto v ; H_2}
	}{
		\src{C, H \triangleright \ell:= e} \xtos{\epsilon} \src{C, H' \triangleright \skips }
	}{eus-up}
	\typerule{E\LU-call-internal}{
		\src{\OB{C}.\mtt{intfs}}\vdash\src{f,f'}:\src{internal}
		&
		\src{\OB{f'}} = \src{\OB{f''};f'}
		\\
		\src{f(x)\mapsto s;\ret}\in\src{C}.\mtt{funs}
		&
		\src{H \triangleright e\redtos v}
	}{
		\src{C, H \triangleright \proc{{\call{f}~e}}{\OB{f'}}} \xtos{\epsilon} \src{C, H \triangleright \proc{{s;\ret\subs{v}{x}}}{\OB{f'};f}}
	}{eus-call-i}
	\typerule{E\LU-callback}{
		\src{\OB{f'}} = \src{\OB{f''};f'}
		&
		\src{f(x)\mapsto s;\ret}\in\src{\OB{F}}
		\\
		\src{\OB{C}.\mtt{intfs}}\vdash\src{f',f}:\src{out}
		&
		\src{H \triangleright e\redtos v}
	}{
		\src{C, H \triangleright \proc{{\call{f}~e}}{\OB{f'}}} \xtos{\cbh{f}{v}{H}} \src{C, H \triangleright \proc{{s;\ret\subs{v}{x}}}{\OB{f'};f}}
	}{eus-callback}
	\typerule{E\LU-call}{
		\src{\OB{f'}} = \src{\OB{f''};f'}
		&
		\src{f(x)\mapsto s;\ret}\in\src{C}.\mtt{funs}
		\\
		\src{\OB{C}.\mtt{intfs}}\vdash\src{f',f}:\src{in}
		&
		\src{H \triangleright e\redtos v}
	}{
		\src{C, H \triangleright \proc{{\call{f}~e}}{\OB{f'}}} \xtos{\clh{f}{v}{H}} \src{C, H \triangleright \proc{{s;\ret\subs{v}{x}}}{\OB{f'};f}}
	}{eus-call}
	\typerule{E\LU-ret-internal}{
		\src{\OB{f'}} = \src{\OB{f''};f'}
		&
		\src{\OB{C}.\mtt{intfs}}\vdash\src{f,f'}:\src{internal}
	}{
		\src{C, H \triangleright \proc{{\ret}}{\OB{f'};f}} \xtos{\epsilon} \src{C, H \triangleright \proc{\skips}{\OB{f'}}}
	}{eus-ret-i}
	\typerule{E\LU-retback}{
		\src{\OB{f'}} = \src{\OB{f''};f'}
		&
		\src{\OB{C}.\mtt{intfs}}\vdash\src{f,f'}:\src{in}
	}{
		\src{C, H \triangleright \proc{{\ret}}{\OB{f'};f}} \xtos{\rbh{v}{H}} \src{C, H \triangleright \proc{\skips}{\OB{f'}}}
	}{eus-retb}
	\typerule{E\LU-return}{
		\src{\OB{f'}} = \src{\OB{f''};f'}
		&
		\src{\OB{C}.\mtt{intfs}}\vdash\src{f,f'}:\src{out}
	}{
		\src{C, H \triangleright \proc{{\ret}}{\OB{f'};f}} \xtos{\rth{v}{H}} \src{C, H \triangleright \proc{\skips}{\OB{f'}}}
	}{eus-ret}
	\end{center}
\botrule

\mytoprule{ \src{\Omega} \Xtos{\OB{\alpha}} \src{\Omega'} }
\begin{center}
	\typerule{E\LU-single}{
		\src{\Omega}\xtos{\alpha}\src{\Omega'}
	}{
		\src{\Omega}\Xtos{\alpha}\src{\Omega'}
	}{eus-tr-sin}
	\typerule{E\LU-silent}{
		\src{\Omega}\xtos{\epsilon}\src{\Omega'}
	}{
		\src{\Omega}\Xtos{}\src{\Omega'}
	}{eus-tr-silent}
	\typerule{E\LU-trans}{
		\src{\Omega}\Xtos{\OB{\alpha}}\src{\Omega''}
		\\
		\src{\Omega''}\Xtos{\OB{\alpha'}}\src{\Omega'}
	}{
		\src{\Omega}\Xtos{\OB{\alpha}\cdot\OB{\alpha'}}\src{\Omega'}
	}{eus-tr-trans}
\end{center}
\botrule

\subsection{Monitor Semantics}
Let \fun{reach}{\src{\ell_o},\src{H}} return a set of locations \src{\set{\ell}} in \src{H} such that it is possible to reach any $\src{\ell}\in\src{\set{\ell}}$ from \src{\ell_o} just by expression evaluation.
\begin{align*}
	\fun{reach}{\src{\ell},\src{H}} =&\ \myset{\src{\ell}}{\exists\src{e}.~ \src{H\triangleright e\redtos ~\ell} \wedge \src{\ell}\in\dom{\src{H}} }
\end{align*}

To ensure monitor transitions have a meaning, they are assumed to be closed under bijective renaming of locations.

\mytoprule{\src{M;H\monred M'}}
\begin{center}
	\typerule{\LU-Monitor Step}{
		\src{M}= \src{(\set{\sigma},\monred,\sigma_0,\ell_{root},\sigma_c)}
		&
		\src{M'}= \src{(\set{\sigma},\monred,\sigma_0,\ell_{root},\sigma_f)}
		\\
		\src{(\sigma_c,H',\sigma_f)}\in\src{\monred}
		\src{H'}\subseteq\src{H}
		&
		\dom{\src{H'}}=\fun{reach}{\src{\ell_{root}},\src{H}}
	}{
		\src{M;H\monred M'}
	}{ms-us}
	\typerule{\LU-Monitor Step Trace Base}{
	}{
		\src{M;\srce\monred M}
	}{ms-t-s-b}
	\typerule{\LU-Monitor Step Trace}{
		\src{M;\OB{H}\monred M''}
		&
		\src{M'';H\monred M'}
	}{
		\src{M;\OB{H}\cdot H\monred M'}
	}{ms-t-s}
	\typerule{\LU-valid trace}{
		\src{M;\OB{H}\monred M'}
		&
		\strip{\src{\OB{\alpha}}}=\src{\OB{H}}
	}{
		\src{M}\vdash\src{\OB{\alpha}}
	}{ms-valid}
\end{center}
\botrule

Monitor actions are the only part of traces that matter for safety, so we define function \strip{\cdot} that takes a general trace and elides all but the heap of actions.
This function is used by both languages so we typeset it in black.
\begin{align*}
	\strip{\come} =&\ \come
	\\
	\strip{\clh{f}{v}{H}\cdot\OB{\alpha}} =&\ H\cdot\strip{\OB{\alpha}}
	\\
	\strip{\cbh{f}{v}{H}\cdot\OB{\alpha}} =&\ H\cdot\strip{\OB{\alpha}}
	\\
	\strip{\rth{}{H}{}\cdot\OB{\alpha}} =&\ H\cdot\strip{\OB{\alpha}}
	\\
	\strip{\rbh{}{H}{}\cdot\OB{\alpha}} =&\ H\cdot\strip{\OB{\alpha}}
\end{align*}
 \newpage
\section{The Target Language: \LP}\label{sec:trg}
\subsection{Syntax}\label{sec:trg-syn}
\begin{align*}
	\mi{Whole\ Programs}~\trg{P} \bnfdef&\ \trg{k_{root} ; \OB{F} ;\OB{I}}
	\\
	\mi{Components}~\trg{C} \bnfdef&\ \trg{k_{root} ; \OB{F} ; \OB{I}}
	\\
	\mi{Contexts}~\trg{A} \bnfdef&\ \trg{\OB{F}\hole{\cdot}}
	\\
	\mi{Interfaces}~\trg{I} \bnfdef&\ \trg{f}
	\\
	\mi{Functions}~\trg{F} \bnfdef&\ \trg{f(x)\mapsto s;\ret}
	\\
	\mi{Operations}~\trg{\op} \bnfdef&\ \trg{+} \mid \trg{-}
	\\
	\mi{Comparison}~\trg{\bop} \bnfdef&\ \trg{==} \mid \trg{<} \mid \trg{>}
	\\
	\mi{Values}~\trg{v} \bnfdef&\ \trg{n}\in\mb{N} \mid \trg{\pair{v,v}} \mid \trg{k} %
	\\
	\mi{Expressions}~\trg{e} \bnfdef&\ \trg{x} \mid \trg{v} \mid \trg{e \op e} \mid \trg{e \bop e} \mid \trg{\pair{e,e}} 
	\mid \trg{\projone{e}} \mid \trg{\projtwo{e}} \mid \trg{!e \with{e}}
	\\
	\mi{Statements}~\trg{s} \bnfdef&\ \skipt \mid \trg{s;s} \mid \trg{\letin{x}{e}{s}} \mid \trg{\ifzte{e}{s}{s}} 
	\mid \trg{\call{f}~e} 
	\\
	\mid&\ \trg{x := e \with{e}} \mid \trg{\letnew{x}{e}{s}}  \mid \trg{\lethide{x}{e}{s}}
	\\
	\mi{Eval.\ Ctxs.}~\trg{E} \bnfdef&\ \trg{\hole{\cdot}} \mid \trg{e \op E} \mid \trg{E \op n} \mid \trg{e \bop E} \mid \trg{E \bop n} 
	\mid \trg{!E \with{v}} \mid \trg{!e \with{E}} 
	\\
	\mid&\ \trg{\pair{e,E}} \mid \trg{\pair{E,v}} \mid \trg{\projone{E}} \mid \trg{\projtwo{E}} 
	\\
	\mi{Heaps}~\trg{H} \bnfdef&\ \trge \mid \trg{H ; n\mapsto v:\eta} \mid \trg{H ; k}
	\\
	\mi{Tag}~\trg{\eta} \bnfdef&\ \trg{\bot} \mid \trg{k}
	\\
	\mi{Monitors}~\trg{M} \bnfdef&\ \trg{(\set{\sigma},\monred,\sigma_0,k_{root},\sigma_c)}
	\\
	\mi{Mon.\ States}~\trg{\sigma} \in&\ \trg{\mc{S}}
	\\
	\mi{Mon.\ Reds.}~\trg{\monred} \bnfdef&\ \trge \mid \trg{\monred;(s,H,s)}
	\\
	\mi{Substitutions}~\trg{\rho} \bnfdef&\ \trge \mid \trg{\rho}\subt{v}{x}
	\\
	\mi{Prog.\ States}~\trg{\Omega}\bnfdef&\ \trg{C, H\triangleright \proc{s}{\OB{f}} }
	\\
	\mi{Labels}~\trg{\lambda} \bnfdef&\ \trg{\epsilon} \mid \trg{\alpha}
	\\
	\mi{Actions}~\trg{\alpha} \bnfdef&\ \trg{\clh{f}{v}{H}} \mid \trg{\cbh{f}{v}{H}} \mid \trg{\rth{v}{H}} \mid \trg{\rbh{v}{H}}
	\\
	\mi{Traces}~\trg{\OB{\alpha}} \bnfdef&\ \trge \mid \trg{\OB{\alpha}\cdot\alpha}
\end{align*}

\subsection{Operational Semantics of \LP}\label{sec:trg-sem}
\mytoprule{\text{Helpers}}
\begin{center}
	\typerule{\LP-Jump-Internal}{
		((\trg{f'}\in\trg{\OB{I}} \wedge \trg{f}\in\trg{\OB{I}}) \vee
				\\
		(\trg{f'}\notin\trg{\OB{I}} \wedge \trg{f}\notin\trg{\OB{I}}))
	}{
		\trg{\OB{I}}\vdash\trg{f,f'}:\trg{internal}
	}{t-aux-intern}
	\typerule{\LP-Jump-IN}{
		\trg{f}\in\trg{\OB{I}} \wedge \trg{f'}\notin\trg{\OB{I}}
	}{
		\trg{\OB{I}}\vdash\trg{f,f'}:\trg{in}
	}{t-aux-in}
	\typerule{\LP-Jump-OUT}{
		\trg{f}\notin\trg{\OB{I}} \wedge \trg{f'}\in\trg{\OB{I}}
	}{
		\trg{\OB{I}}\vdash\trg{f,f'}:\trg{out}
	}{t-aux-out}
	\typerule{\LP-Plug}{
		\trg{A} \equiv \trg{\OB{F}\hole{\cdot}}
		&
		\trg{C}\equiv\trg{k_{root} ; \OB{F'} ; \OB{I} } 
		\\
		\vdash\trg{C,\OB{F}}:\trg{whole}
		&
		\trg{main(x)\mapsto s;\ret}\in\trg{\OB{F}}
	}{
		\trg{A\hole{C}} = \trg{k_{root}; \OB{F;F'}; \OB{I}}
	}{plug-t}
	\typerule{\LP-Whole}{
		\trg{C}\equiv\trg{k_{root} ; \OB{F'} ; \OB{I}} 
		\\
		\fun{names}{\trg{\OB{F}}}\cap\fun{names}{\trg{\OB{F'}}}=\emptyset
		&
		\fun{names}{\trg{\OB{I}}}\subseteq \fun{names}{\trg{\OB{F}}}
	}{
		\vdash\trg{C,\OB{F}}:\trg{whole}
	}{whole-t}
	\typerule{\LP-Initial State}{
		\trg{P}\equiv\trg{k_{root}; \OB{F} ; \OB{I}}
		&
		\trg{C}\equiv\trg{k_{root} ; \OB{F} ; \OB{I}} 
		&
		\trg{main(x)\mapsto s;\ret}\in\trg{\OB{F}}
  	}{
		\SInitt{P} = \trg{C, k_{root}; 0\mapsto 0:k_{root} \triangleright \proc{s\subt{0}{x}}{main}}
	}{ini-t}
\end{center}
\botrule

\subsubsection{Component Semantics}\label{src:trg-sem-com}
\begin{align*}
	&\trg{H \triangleright e \redtot e'} 
	&&\text{Expression \trg{e} reduces to \trg{e'}.}
	\\
	&\trg{C, H \triangleright s} \xtot{\epsilon} \trg{C', H' \triangleright s'} 
	&&\text{Statement \trg{s} reduces to \trg{s'} and evolves the rest accordingly,}
	\\
	&&&\text{emitting label \trg{\lambda}.}
	\\
	&\trg{\Omega} \Xtot{\OB{\alpha}} \trg{\Omega'}
	&& \text{Program state \trg{\Omega} steps to \trg{\Omega'} emitting trace \trg{\OB{\alpha}}.}
\end{align*}

\mytoprule{\trg{H \triangleright e \redtot e'} }
\begin{center}
	\typerule{E\LP-val}{
	}{
		\trg{H \triangleright v} \redtot \trg{v}
	}{et-val}
	\typerule{E\LP-p1}{
	}{
		\trg{H\triangleright \projone{\pair{v,v'}} \redtot v}
	}{et-p1}
	\typerule{E\LP-p2}{
	}{
		\trg{H\triangleright \projone{\pair{v,v'}} \redtot v'}
	}{et-p2}
	\typerule{E\LP-op}{
		n\op n'=n''
	}{
		\trg{H\triangleright n \op n' \redtot n''}
	}{et-op}
	\typerule{E\LP-comp}{
		\text{if } n\bop n'= \text{true} \text{ then } \trg{n''}=\trg{0} \text{ else } \trg{n''}=\trg{1}
	}{
		\trg{H \triangleright n \bop n' \redtot n''}
	}{et-bop}
	\typerule{E\LP-deref-top}{
		\trg{n\mapsto v:\bm{\bot}}\in\trg{H}
	}{
		\trg{H\triangleright !n \with{\_}} \redtot \trg{ v}
	}{et-de-t}
	\typerule{E\LP-deref-k}{
		\trg{n\mapsto (v,k)}\in\trg{H}
	}{
		\trg{H\triangleright !n \with{k}} \redtot \trg{ v}
	}{et-de-k}
	\typerule{E\LP-ctx}{
		\trg{H\triangleright e \redtot e'}
	}{
		\trg{H \triangleright E\hole{e}} \redtot \trg{ E\hole{e'}}
	}{et-cth}
\end{center}
\botrule

\mytoprule{\trg{C; H \triangleright s} \xtot{\lambda} \trg{C'; H' \triangleright s'} }
\begin{center}
	\typerule{E\LP-sequence}{
	}{
		\trg{C, H \triangleright \skipt;s} \xtot{\epsilon} \trg{C, H \triangleright s}
	}{et-seq}
	\typerule{E\LP-step}{
		\trg{C, H \triangleright s} \xtot{\lambda} \trg{C, H \triangleright s'}
	}{
		\trg{C, H \triangleright s;s''} \xtot{\lambda} \trg{C, H \triangleright s';s}
	}{et-step}
	\typerule{E\LP-if-true}{
		\trg{H \triangleright e\redtot 0}
	}{
		\trg{C, H \triangleright \ifzte{e}{s}{s'}} \xtot{\epsilon} \trg{C, H \triangleright s}
	}{et-ift}
	\typerule{E\LP-if-false}{
		\trg{H \triangleright e\redtot n}
		&
		\trg{n}\not\equiv\trg{0}
	}{
		\trg{C, H \triangleright \ifzte{e}{s}{s'}} \xtot{\epsilon} \trg{C, H \triangleright s'}
	}{et-iff}
	\typerule{E\LP-letin}{
		\trg{H \triangleright e\redtot v}
	}{
		\trg{C, H \triangleright \letin{x}{e}{s}} \xtot{\epsilon} \trg{C, H \triangleright s\subt{v}{x}}
	}{et-letin}
	\typerule{E\LP-new}{
    \trg{H} = \trg{H_1;n\mapsto (v',\trgb{\eta})}
    &
    \trg{H \triangleright e\redtot v}
    &
    \trg{H'}=\trg{H; n+1\mapsto v:\trgb{\bot} }
  }{
      \trg{C, H \triangleright \letnew{x}{e}{s}} \xtot{} 
      \trg{C,H'\triangleright s\subt{n+1}{x}}
  }{et-nu}
  \typerule{E\LP-hide}{
		\trg{H \triangleright e\redtot n}
		&
		\trg{k}\notin\dom{\trg{H}}
		\\
		\trg{H} = \trg{H_1;n\mapsto v:\bm{\bot};H_2}
		&
		\trg{H'}=\trg{H_1;n\mapsto v:k;H_2;k}
	}{
		\trg{C, H \triangleright \lethide{x}{e}{s}} \xtot{\epsilon} \trg{C, H' \triangleright s\subt{k}{x}}
	}{et-hi}
	\typerule{E\LP-assign-top}{
		\trg{H \triangleright e\redtot v}
		\\
		\trg{H} = \trg{H_1;n\mapsto \_:\bm{\bot};H_2}
		&
		\trg{H'} = \trg{H_1;n\mapsto v:\bm{\bot};H_2}
	}{
		\trg{C, H\triangleright n:=e \with{\_}} \xtot{\epsilon} \trg{C, H'\triangleright \skipt}
	}{et-ac-t}
	\typerule{E\LP-assign-k}{
		\trg{H \triangleright e\redtot v}
		&
		\trg{H \triangleright e'\redtot k}
		\\
		\trg{H} = \trg{H_1;n\mapsto \_:k;H_2}
		&
		\trg{H'} = \trg{H_1;n\mapsto v:k;H_2}
	}{
		\trg{C, H\triangleright n:=e \with{e'}} \xtot{\epsilon} \trg{C, H'\triangleright \skipt}
	}{et-ac-k}
	\typerule{E\LP-call-internal}{
		\trg{\OB{C}.\mtt{intfs}}\vdash\trg{f,f'}:\trg{internal}
		&
		\trg{\OB{f'}} = \trg{\OB{f''};f'}
		\\
		\trg{f(x)\mapsto s;\ret}\in\trg{C}.\mtt{funs}
		&
		\trg{H \triangleright e\redtot v}
	}{
		\trg{C, H \triangleright \proc{{\call{f}~e}}{\OB{f'}}} \xtot{\epsilon} \trg{C, H \triangleright \proc{{s;\ret\subt{v}{x}}}{\OB{f'};f}}
	}{et-call-i}
	\typerule{E\LP-callback}{
		\trg{\OB{f'}} = \trg{\OB{f''};f'}
		&
		\trg{f(x)\mapsto s;\ret}\in\trg{\OB{F}}
		\\
		\trg{\OB{C}.\mtt{intfs}}\vdash\trg{f',f}:\trg{out}
		\trg{H \triangleright e\redtot v}
	}{
		\trg{C, H \triangleright \proc{{\call{f}~e}}{\OB{f'}}} \xtot{\cbh{f}{v}{H}} \trg{C, H \triangleright \proc{{s;\ret\subt{v}{x}}}{\OB{f'};f}}
	}{et-callback}
	\typerule{E\LP-call}{
		\trg{\OB{f'}} = \trg{\OB{f''};f'}
		&
		\trg{f(x)\mapsto s;\ret}\in\trg{C}.\mtt{funs}
		\\
		\trg{\OB{C}.\mtt{intfs}}\vdash\trg{f',f}:\trg{in}
		&
		\trg{H \triangleright e\redtot v}
	}{
		\trg{C, H \triangleright \proc{{\call{f}~e}}{\OB{f'}}} \xtot{\clh{f}{v}{H}} \trg{C, H \triangleright \proc{{s;\ret\subt{v}{x}}}{\OB{f'};f}}
	}{et-call}
	\typerule{E\LP-ret-internal}{
		\trg{\OB{C}.\mtt{intfs}}\vdash\trg{f,f'}:\trg{internal}
		&
		\trg{\OB{f'}} = \trg{\OB{f''};f'}
	}{
		\trg{C, H \triangleright \proc{{\ret}}{\OB{f'};f}} \xtot{\epsilon} \trg{C, H \triangleright \proc{\skipt}{\OB{f'}}}
	}{et-ret-i}
	\typerule{E\LP-retback}{
		\trg{\OB{C}.\mtt{intfs}}\vdash\trg{f,f'}:\trg{in}
		&
		\trg{\OB{f'}} = \trg{\OB{f''};f'}
	}{
		\trg{C, H \triangleright \proc{{\ret}}{\OB{f'};f}} \xtot{\rbh{}{H}} \trg{C, H \triangleright \proc{\skipt}{\OB{f'}}}
	}{et-retb}
	\typerule{E\LP-return}{
		\trg{\OB{C}.\mtt{intfs}}\vdash\trg{f,f'}:\trg{out}
		&
		\trg{\OB{f'}} = \trg{\OB{f''};f'}
	}{
		\trg{C, H \triangleright \proc{{\ret}}{\OB{f'};f}} \xtot{\rth{}{H}} \trg{C, H \triangleright \proc{\skipt}{\OB{f'}}}
	}{et-ret}
	\end{center}
\botrule
\mytoprule{ \trg{\Omega} \Xtot{\OB{\alpha}} \trg{\Omega'} }
\begin{center}
	\typerule{E\LP-single}{
		\trg{\Omega}\xtot{\alpha}\trg{\Omega'}
	}{
		\trg{\Omega}\Xtot{\alpha}\trg{\Omega'}
	}{et-tr-sin}
	\typerule{E\LP-silent}{
		\trg{\Omega}\xtot{\epsilon}\trg{\Omega'}
	}{
		\trg{\Omega}\Xtot{}\trg{\Omega'}
	}{et-tr-silent}
	\typerule{E\LP-trans}{
		\trg{\Omega}\Xtot{\OB{\alpha}}\trg{\Omega''}
		\\
		\trg{\Omega''}\Xtot{\OB{\alpha'}}\trg{\Omega'}
	}{
		\trg{\Omega}\Xtot{\OB{\alpha}\cdot\OB{\alpha'}}\trg{\Omega'}
	}{et-tr-trans}
\end{center}
\botrule

\subsection{Monitor Semantics}
Define \fun{reach}{\trg{n_r},\trg{k_r},\trg{H}} as the set of locations \trg{\set{n}} such that it is possible to reach any $\trg{n}\in\trg{\set{n}}$ from \trg{n_r} using any expression and relying on capability \trg{k_r} as well as any capability reachable from \trg{n_r}.
Formally:
\begin{align*}
	\fun{reach}{\trg{n_r},\trg{k_r},\trg{H}} =&\ \myset{ \trg{n} }{ 
		\begin{aligned}
			&
			\trg{H \triangleright e \redtot !n \with{v} \redtot v'}
			\\
			&
			\fun{fv}{\trg{e}}= \trg{n_r}\cup\trg{k_r}
		\end{aligned}
	}
\end{align*}

\mytoprule{\trg{M;H\monred M'}}
\begin{center}
	\typerule{\LP-Monitor Step}{
		\trg{M}= \trg{(\set{\sigma},\monred,\sigma_0,k_{root},\sigma_c)}
		&
		\trg{M'}= \trg{(\set{\sigma},\monred,\sigma_0,k_{root},\sigma_f)}
		\\
		\trg{(s_c,H',s_f)}\in\trg{\monred}
		\trg{H'}\subseteq\trg{H}
		&
		\dom{\trg{H'}}=\fun{reach}{\trg{0},\trg{k_{root}},\trg{H}}
	}{
		\trg{M;H\monred M'}
	}{ms-t}
	\typerule{\LP-Monitor Step Trace Base}{
	}{
		\trg{M;\trge\monred M}
	}{mt-t-s-b}
	\typerule{\LP-Monitor Step Trace}{
		\trg{M;\OB{H}\monred M''}
		&
		\trg{M'';H\monred M'}
	}{
		\trg{M;\OB{H}\cdot H\monred M'}
	}{mt-t-s}
	\typerule{\LP-valid trace}{
		\trg{M;\OB{H}\monred M'}
		&
		\strip{\trg{\OB{\alpha}}}=\trg{\OB{H}}
	}{
		\trg{M}\vdash\trg{\OB{\alpha}}
	}{mt-valid}
\end{center}
\botrule

 \newpage
\section{Language and Compiler Properties}\label{sec:lang-prop}

\subsection{Safety, Attackers and Robust Safety}
These properties hold for both languages are written in black and only once.

\begin{definition}[Safety]\label{def:safe}
	\begin{align*}
		M\vdash C : \com{safe} \isdef&\ 
		\ldotp 
		\\
		\text{if }
		&
		\vdash \com{C}:\com{whole}
		\\
		\text{then }
		&
		\begin{aligned}[t]
			\text{if }
			&
			\SInit{\com{C}} \Xtoc{\OB{\alpha}} \com{\_} 
			\\
			\text{then }
			&
			M\vdash{\OB{\alpha}}
		\end{aligned}
	\end{align*}
\end{definition}
A program is safe for a monitor if the monitor accepts any trace the program generates.

For now, we give an informal definition for function \fun{locs}{\com{A}}, which returns all the locations statically bound in the heap and code of attacker \com{A}.

\begin{definition}[(Informal) Attacker]\label{def:att}
	\begin{align*}
		\com{C}\vdash\com{A} : \com{attacker} \isdef&\ \text{ no location the component cares about }\in\fun{locs}{\com{A}}
	\end{align*}
\end{definition}
An attacker is valid if it does not refer to the locations the component cares about.
We leave the notion of \emph{location the component cares about} abstract and instantiate it on a per-language basis later on.

\begin{definition}[Robust Safety]\label{def:rs}
	\begin{align*}
		\com{M}\vdash \com{C} : \com{rs} \isdef&\  \forall \com{A}.
		\\
		\text{ if }
		& 
		M\agree C
		\\
		&
		\com{C}\vdash \com{A}: \com{attacker}
		\\
		\text{ then }
		&
		\com{M}\vdash \com{A\hole{C}} : \com{safe}
	\end{align*}
\end{definition}
A program is robustly safe if it is safe for any attacker it is composed with.

The definition of $M\agree C$ is to be specified on a language-specific basis, as the next section does for \LU and \LP.

\subsection{Monitor Agreement and Attacker for \LP and \LU}\label{sec:mon-agr}
\begin{definition}[\LU: \src{M\agree C}]\label{def:lu-agree}
	\begin{align*}
		\src{(\set{\sigma},\monred,\sigma_0,\ell_{root},\sigma_c) \agree (\ell_{root} ; \OB{F} ; \OB{I})}
	\end{align*}
\end{definition}
A monitor and a component agree if they focus on the same initial location \src{\ell_{root}}.

\begin{definition}[\LP: \trg{M\agree C}]\label{def:lp-agree}
	\begin{align*}
		\trg{(\set{\sigma},\monred,\sigma_0,k_{root},\sigma_c) \agree (k_{root} ; \OB{F} ; \OB{I})}
	\end{align*}
\end{definition}
A monitor and a component agree if they use the same capabilty \trg{k_{root}} to protect the initial location \trg{0}.

To define attackers, we define functions \fun{locs}{\src{A}}, which returns all the locations bound in the code and heap of \src{A} and function \fun{caps}{\trg{A}}, which returns all the capabilities bound in the code of \trg{A}.
Intuitively, the former is defined inductively on the structure of functions, then statements, then expressions, where it collects all \src{\ell} in an expression \src{e}, and also on the structure of heaps, where it collects all \src{\ell} in a binding of the form \src{\ell'\mapsto\ell}.
The latter is defined inductively on the structure of functions, then statements, then expressions, where it collects all \trg{k} in an expression \trg{e}.

\begin{definition}[\LU attacker]\label{def:lu-att}
	\begin{align*}
		\src{C}\vdash\src{A}:\src{attacker} \isdef&\ \src{C}=\src{(\ell_{root} ; \OB{F} ; \OB{I})}, \src{A}=\src{ H ; \OB{F'}}
		\\
		&
		\src{\ell_{root}}\notin\fun{locs}{\src{A}}
	\end{align*}
\end{definition}

\begin{definition}[\LP attacker]\label{def:lp-att}
	\begin{align*}
		\trg{C}\vdash\trg{A}:\trg{attacker} \isdef&\ \trg{C}=\trg{(k_{root} ; \OB{F} ; \OB{I})}, \trg{A}=\trg{ \OB{F'}}
		\\
		&
		\trg{k_{root}}\notin\fun{caps}{\trg{\OB{F'}}}
	\end{align*}
\end{definition}

\subsection{Cross-language Relations}\label{sec:cr-rel}
Assume a partial bijection $\beta : \src{\ell}\times\trg{n}\times\trg{\eta}$ from source to target heap locations such that
\begin{itemize}
	\item if $(\src{\ell_1},\trg{n},\trg{\eta})\in\beta$ and $(\src{\ell_2},\trg{n},\trg{\eta})$ then $\src{\ell_1}=\src{\ell_2}$;
	\item if $(\src{\ell},\trg{n_1},\trg{\eta_1})\in\beta$ and $(\src{\ell},\trg{n_2},\trg{\eta_2})$ then $\trg{n_1}=\trg{n_2}$ and $\trg{\eta_1}=\trg{\eta_2}$.
\end{itemize}
we use this bijection to parametrise the relation so that we can relate meaningful locations.

For compiler correctness we rely on a $\beta_0$ which relates initial locations of monitors.

Assume a relation $\relatebeta : \src{v}\times\beta\times\trg{v}$ that is total so it maps any source value to a target value \trg{v}.
\begin{itemize}
	\item $\forall \src{v}. \exists \trg{v}. \src{v}\relatebeta\trg{v}$.
\end{itemize}

This relation is used for defining compiler correctness.
By inspecting the semantics of \LU, \Cref{tr:et-seq,tr:et-ift} let us derive that 
\begin{itemize}
	\item $\trues\relatebeta\trg{0}$;
	\item $\falses\relatebeta\trg{n}$ where $\trg{n}\neq\trg{0}$;
	\item $\src{\ell}\relatebeta\trg{\pair{n,v}}$ where 
		$\begin{cases}
			\trg{v} = \trg{k}	&\text{if } (\src{\ell},\trg{n},\trg{k})\in\beta
			\\
			\trg{v} \neq \trg{k} &\text{otherwise, so } (\src{\ell},\trg{n},\trg{\bot})\in\beta
		\end{cases}$
	\item $\src{\pair{v_1,v_2}}\relatebeta\trg{\pair{v_1,v_2}}$ iff $\src{v_1}\relatebeta\trg{v_1}$ and $\src{v_2}\relatebeta\trg{v_2}$.
\end{itemize}

We overload the notation and use the same notation to indicate the (assumed) relation between monitor states: $\src{\sigma}\relate\trg{\sigma}$.

We lift this relation to sets of states point-wise and indicate it as follows: $\src{\set{\sigma}}\relate\trg{\set{\sigma}}$.
In these cases the bijection $\beta$ is not needed as states do not have locations inside.

Function names are related when they are the same: $\src{f}\relatebeta\trg{f}$.

Variables names are related when they are the same: $\src{x}\relatebeta\trg{x}$.

Substitutions are related when they replace related values for related variables: $\subs{v}{x}\relatebeta\subt{v}{x}$ iff $\src{v}\relatebeta\trg{v}$ and $\src{x}\relatebeta\trg{x}$.

\mytoprule{\src{\alpha}\relatebeta\trg{\alpha}}
\begin{center}
	\typerule{Call relation}{
		\src{f}\relate\trg{f}
		&
		\src{v}\relatebeta\trg{v}
		&
		\src{H}\relatebeta\trg{H}
	}{
		\src{\clh{f}{v}{H}}\relatebeta\trg{\clh{f}{v}{H}}
	}{rel-cl}
	\typerule{Callback relation}{
		\src{f}\relate\trg{f}
		&
		\src{v}\relatebeta\trg{v}
		&
		\src{H}\relatebeta\trg{H}
	}{
		\src{\cbh{f}{v}{H}}\relatebeta\trg{\cbh{f}{v}{H}}
	}{rel-cb}
	\typerule{Return relation}{
		\src{H}\relatebeta\trg{H}
	}{
		\src{\rth{}{H}}\relatebeta\trg{\rth{}{H}}
	}{rel-rt}
	\typerule{Returnback relation}{
		\src{H}\relatebeta\trg{H}
	}{
		\src{\rbh{}{H}}\relatebeta\trg{\rbh{}{H}}
	}{rel-rb}
	\typerule{Epsilon relation}{
	}{
		\src{\epsilon}\relatebeta\trg{\epsilon}
	}{rel-ep}
\end{center}
\botrule

\begin{definition}[$\src{M}\mc{R}\trg{M}$]\label{tr:mon-rel-sing}\label{def:mon-rel}
Given a monitor-specific relation $\src{\sigma} \relate \trg{\sigma}$ on monitor states, we say that a relation $\mc{R}$ on source and target monitors is a \emph{bisimulation} if the following hold whenever $\src{M}=\src{(\set{\sigma},\monred,\sigma_0,\ell_{root},\sigma_c)}$ and $\trg{M} = \trg{(\set{\sigma},\monred,\sigma_0,k_{root},\sigma_c)}$ are related by $\mc{R}$:
\begin{enumerate}
	\item $\src{\sigma_0} \relate \trg{\sigma_0}$, and
	\item $\src{\sigma_c} \relate \trg{\sigma_c}$, and
	\item For all $\beta$ containing $(\src{\ell_{root}}, \trg{0}, \trg{k_{root}})$ and all $\src{H}, \trg{H}$ with $\src{H} \relatebeta \trg{H}$ the following hold: 
	\begin{enumerate}
	\item $(\src{\sigma_c}, \src{H}, \_) \in \src{\monred}$ iff $(\trg{\sigma_c}, \trg{H}, \_) \in \trg{\monred}$, and 
    \item $(\src{\sigma_c}, \src{H}, \src{\sigma'}) \in \src{\monred}$ and $(\trg{\sigma_c}, \trg{H}, \trg{\sigma'}) \in \trg{\monred}$ imply $\src{(\set{\sigma}, \monred, \sigma_0, \ell_{root}, \sigma')} \mc{R} \trg{(\set{\sigma}, \monred, \sigma_0, k_{root}, \sigma')}$.
	\end{enumerate}
\end{enumerate}
\end{definition}

\begin{definition}[$\src{M}\relate\trg{M}$]\label{tr:mon-rel}
	$\src{M}\relate\trg{M}$ is the union of all bisimulations $\src{M}\mc{R}\trg{M}$, which is also a bisimulation.
\end{definition}

\mytoprule{\src{H}\relatebeta\trg{H}}
\begin{center}
	\typerule{Heap relation}{
		\src{H}\relatebeta\trg{H_1;H_2}
		&
		\src{\ell}\relatebeta\trg{\pair{n,\eta}}
		&
		\src{v}\relatebeta\trg{v}
		\\
		\trg{H} = \trg{H_1;n\mapsto v:\eta;H_2}
	}{
		\src{H;\ell\mapsto v} \relatebeta \trg{H}
	}{hrel-i}
	\typerule{Empty relation}{
	}{
		\srce \relatebeta \trg{\OB{k}}
	}{hrel-b}
\end{center}
\botrule

The heap relation is crucial.
A source heap \src{H} is related to a target heap \trg{H} if for any location pointing to a value in the former, a related location points to a related value in the target (\Cref{tr:hrel-i}).
The base case (\Cref{tr:hrel-b}) considers that in the target heap we may have keys, which are not related to source elements.

As additional notation for states, we define when a state is stuck as follows
\begin{center}
	\typerule{Stuck state}{
	\com{\Omega}=\com{M ; \OB{F} ; \OB{I} ; H \triangleright s}
	&
	\com{s}\not\equiv \com{\lskip}
	&
	\nexists \com{\Omega'},\com{\lambda}. \com{\Omega}\xtol{\lambda}\com{\Omega'}
}{
	\com{\Omega}\stu
}{state-stuck}
\end{center}
A state that terminated is defined as follows; this definition is given for a concurrent version of the language too (this is relevant for languages defined later):
\begin{center}
	\typerule{Terminated state}{
		\com{\Omega}=\com{M ; \OB{F} ; \OB{I} ; H \triangleright \lskip}
	}{
		\com{\Omega}\termc
	}{state-term}
	\typerule{Terminated soup}{
		\com{\Omega}=\com{M ; \OB{F} ; \OB{I} ; H \triangleright \Pi}
		&
		\forall \com{\pi}\in\com{\Pi}\ldotp \com{M ; \OB{F} ; \OB{I} ; H \triangleright \pi}\term
	}{
		\com{\Omega}\termc
	}{state-term}
\end{center}

To define compiler correctness, we rely on a cross-language relation for program states.
Two states are related if their monitors are related and if their whole heap is related (\Cref{tr:state-rel-whole}). %

\mytoprule{\src{\Omega}\relatebeta\trg{\Omega}}
\begin{center}
	\typerule{Related states -- Whole}{
		\src{\Omega}=\src{M ; \OB{F},\OB{F'} ; \OB{I} ; H \triangleright s}
		\\
		\trg{\Omega}=\trg{M ; \OB{F},\compup{\OB{F'}} ; \OB{I} ; H \triangleright s}
		\\
		\src{M}\relatebeta\trg{M}
		&
		\src{H}\relatebeta\trg{H}
	}{
		\src{\Omega}\relatebeta\trg{\Omega}
	}{state-rel-whole}
\end{center}
\botrule

\subsection{Correct and Robustly-safe Compilation}
Consider a compiler to be a function of this form: $\compgen{\cdot} : \src{C}\to \trg{C}$, taking a source component and producing a target component.

\begin{definition}[Correct Compilation]\label{def:comp-corr}
	\begin{align*}
		\vdash\compgen{\cdot} : \ccomp \isdef&\
		\forall \src{C},
		\exists \beta.
		\\
		\text{ if }&\
		\SInitt{\compgen{\src{C}}} \Xtot{\OB{\alpha}} \trg{\Omega}
		\\
		&\
		\trg{\Omega}\termt%
		\\
		&\
		\SInits{C} \relate_{\beta_0}\SInitt{\compgen{\src{C}}}
		\\
		\text{ then }&\
		\SInits{C}\Xtos{\OB{\alpha}} \src{\Omega} 
		\\
		&\
		\beta_0\subseteq\beta
		\\
		&\
		\src{\Omega}\relatebeta\trg{\Omega}
		\\
		&\
		\src{\OB{\alpha}}\relatebeta\trg{\OB{\alpha}}
		\\
		&\
		\src{\Omega}\termsl%
	\end{align*}
\end{definition}
Technically, any sequence {\OB{\alpha}} above is empty, as \src{\OB{I}} is empty (the program is whole).

\begin{definition}[Robust Safety Preserving Compilation]\label{def:rsc-pres}
	\begin{align*}
		\vdash\compgen{\cdot}: \rscomp \isdef&\ 
		\forall \src{C},\src{M},\trg{M}\ldotp 
		\\
		\text{if }&\
		\src{M}\vdash \src{C} : \src{rs}
		\\
		&\
		\src{M}\relate\trg{M}
		\\
		\text{ then }&\
		\trg{M}\vdash \compgen{\src{C}} : \trg{rs}
	\end{align*}
\end{definition}

\subsubsection{Alternative definition for RSC}\label{sec:other-def-rsc}
\begin{definition}[Property-Free RSC]\label{def:rsc-eq}
	\begin{align*}
		\vdash\compgen{\cdot}: \pfrscomp \isdef&\ 
		\forall \src{C}\ldotp 
		\\
		\text{if }&\
		\forall\trg{A},\trg{\OB{\alpha}}.~
		\\
		&\
		\compgen{C}\vdash\trg{A}:\trg{attacker}
		\\
		&\
		\vdash\trg{A\hole{\compgen{C}}}:\trg{whole} 
		\\
		&\
		\SInitt{\compgen{\src{C}}} \Xtot{\OB{\alpha}} \trg{\_}
		\\
		\text{then }&\
		\exists\src{A},\src{\OB{\alpha}}.~
		\\
		&\
		\src{C}\vdash\src{A}:\src{attacker}
		\\
		&\
		\vdash\src{A\hole{C}}:\src{whole} 
		\\
		&\
		\SInits{C}\Xtos{\OB{\alpha}} \src{\_} 
		\\
		&\
		\strip{\src{\OB{\alpha}}}\relatebeta\strip{\trg{\OB{\alpha}}}
	\end{align*}
\end{definition}

The property-free characterisation of RSC is equivalent to its original characterisation.
\begin{theorem}[\pfrscomp and \rscomp are equivalent]\label{thm:rsc-prf-eq}
	\begin{align*}
		\forall\compgen{\cdot}, \vdash\compgen{\cdot}: \pfrscomp \iff \vdash\compgen{\cdot}:\rscomp
	\end{align*}
\end{theorem}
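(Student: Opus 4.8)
The plan is to prove the two implications separately, both resting on a single \emph{lockstep lemma}: if $\src{M}\relate\trg{M}$ and $\src{\OB{H}}$, $\trg{\OB{H}}$ are element-wise related sequences of monitored heaps, then $\src{M}$ accepts $\src{\OB{H}}$ iff $\trg{M}$ accepts $\trg{\OB{H}}$. This follows by induction on the length of the sequences from the bisimulation conditions of \Cref{def:mon-rel}: related current states step on related heaps in lockstep (clause (a)) and land again in related states (clause (b)), so neither monitor can get stuck strictly before the other. Since $\strip{\cdot}$ extracts exactly the heaps of a trace and the action relations of \Cref{fig:mon-st-act-rel} relate actions only when their heaps are related, $\src{\OB{\alpha}}\relatebeta\trg{\OB{\alpha}}$ entails that $\strip{\src{\OB{\alpha}}}$ and $\strip{\trg{\OB{\alpha}}}$ are related element-wise, which is the form the lemma consumes. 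I would also record the routine, language-specific fact that agreement transfers: if $\trg{M}\agree\compgen{\src{C}}$ and $\src{M}\relate\trg{M}$, then $\src{M}\agree\src{C}$, because both agreements pin down the distinguished root location/capability that the compiler and the relation map to one another.

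For $\pfrscomp \Rightarrow \rscomp$, fix $\src{C},\src{M},\trg{M}$ with $\src{M}\vdash\src{C}:\src{rs}$ and $\src{M}\relate\trg{M}$, assume $\trg{M}\agree\compgen{\src{C}}$, and take any valid target attacker $\trg{A}$ with $\vdash\trg{A}\hole{\compgen{\src{C}}}:\trg{whole}$. To establish safety I must show that every trace with $\SInitt{A\hole{\compgen{\src{C}}}}\Xtot{\OB{\alpha}}\trg{\_}$ satisfies $\trg{M}\vdash\trg{\OB{\alpha}}$. Applying $\pfrscomp$ to this trace yields a valid source attacker $\src{A}$ and a source trace with $\SInits{A\hole{C}}\Xtos{\OB{\alpha}}\src{\_}$ and $\strip{\src{\OB{\alpha}}}\relate\strip{\trg{\OB{\alpha}}}$. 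By agreement transfer, $\src{M}\agree\src{C}$, so $\src{M}\vdash\src{C}:\src{rs}$ gives $\src{M}\vdash\src{A\hole{C}}:\src{safe}$, hence $\src{M}\vdash\src{\OB{\alpha}}$. The lockstep lemma then transports acceptance across the related strips to conclude $\trg{M}\vdash\trg{\OB{\alpha}}$, as required.

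For $\rscomp \Rightarrow \pfrscomp$ I would argue by contraposition and build a \emph{witness monitor} from a single bad prefix. Suppose $\pfrscomp$ fails: there is a source $\src{C}$, a valid target attacker $\trg{A}$, and a trace with $\SInitt{A\hole{\compgen{\src{C}}}}\Xtot{\OB{\alpha}}\trg{\_}$ such that no valid source attacker produces a source trace whose strip is $\relate$-related to $\trg{\OB{H}}\isdef\strip{\trg{\OB{\alpha}}}$. I take $\trg{M}$ to be the safety property whose only bad prefixes are (the closure of) $\trg{\OB{H}}$ and its extensions: a finite automaton that walks along $\trg{\OB{H}}$ and gets stuck exactly when the whole sequence is matched, stepping to a universally-accepting sink on any non-matching heap, and $\src{M}$ the corresponding source property over the matching source heaps. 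By construction $\trg{M}$ rejects $\trg{\OB{\alpha}}$, so $\trg{M}\not\vdash\compgen{\src{C}}:\trg{rs}$; and since prefixes of realizable source traces are themselves realizable, the negated hypothesis shows no realizable source strip lies in the source bad set, giving $\src{M}\vdash\src{C}:\src{rs}$. If $\src{M}\relate\trg{M}$ holds, $\rscomp$ forces $\trg{M}\vdash\compgen{\src{C}}:\trg{rs}$, a contradiction.

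The main obstacle is precisely the monitor construction in this second direction, and it is caused by the non-injectivity of $\relatebeta$ (e.g.\ $\trg{0}$ relates to both $\src{0}$ and $\trues$, and $\falses$ to every nonzero target numeral). For $\src{M}\relate\trg{M}$ to be a bisimulation, clause (a) of \Cref{def:mon-rel} demands that on \emph{every} related pair $\src{H}\relatebeta\trg{H}$ the two monitors step or stick together; hence the bad sets on both sides must be closed under the equivalence generated by $\relate$, not merely under $\relate$ or its inverse. I therefore have to (i) define both automata over these $\relate$-saturated heap classes, so that bisimilarity holds by construction, and (ii) discharge the genuinely delicate point that the $\relate$-closure of $\trg{\OB{H}}$ still contains no realizable source strip --- the negation of $\pfrscomp$ only directly forbids the $\relate$-preimage, so I must exploit that the monitor observes only the reachable, monitored sub-heap, on which the bijection $\beta$ makes the relevant portion of $\relate$ behave functionally, to rule out spurious alternating $\relate$/inverse chains from a realizable strip to $\trg{\OB{H}}$. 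This is exactly the content that depends on the monitor relation ``correctly relating safety properties''; as the paper notes, the fully concrete discharge for a specific $\relate$ is carried out in \Cref{thm:rsc-prf-eq2}, and the argument above is the template any adequate monitor relation must satisfy.
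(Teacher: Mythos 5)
Your proposal is correct, and the two directions relate to the paper's proof differently. For $\pfrscomp \Rightarrow \rscomp$ you follow essentially the paper's argument: the paper also transports acceptance of related strips across the monitor bisimulation of \Cref{def:mon-rel} (its clauses 3(a)/3(b) are exactly your lockstep lemma, with determinism of $\monred$ making the induction go through), so there is nothing genuinely new there beyond your welcome explicit treatment of agreement transfer, which the paper silently elides. For $\rscomp \Rightarrow \pfrscomp$ your route is genuinely different from, and more complete than, what the paper writes down. The paper's appendix proof of this direction merely ``switches hypothesis and thesis'' and asserts, via clause 3(b), that the heaps of the two accepted traces are related --- a step that does not follow for arbitrary monitors (a trivially accepting monitor pair accepts everything and constrains no heaps); it only becomes meaningful once a \emph{discriminating} monitor is fixed. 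You supply exactly that missing object: the bad-prefix automaton walking $\strip{\trgb{\OB{\alpha}}}$, with the stuck transition at the end and an accepting sink elsewhere, which is also the construction implicit in the paper's informal discussion following the proposition (``not having a safety property robustly amounts to some context being able to induce a bad prefix''). Your analysis of the crux is likewise accurate: clause 3(a) of \Cref{def:mon-rel}, quantifying over \emph{all} related heap pairs, forces the stuck-sets of any bisimilar pair to be saturated under the equivalence generated by $\relatebeta$, and because $\relatebeta$ is non-injective in both directions (e.g.\ $\trg{0}$ relates to both $\src{0}$ and $\trues$, while $\falses$ relates to every nonzero $\trg{n}$) this saturation can in principle enlarge the bad class; whether the saturated class still excludes all realizable source strips is precisely the content the paper packs into the phrase ``correctly relates safety properties'' and discharges only for the finalised relation in \Cref{thm:rsc-prf-eq2}. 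What each approach buys: the paper's version is short but leaves the $\Leftarrow$ direction incomplete at the abstract level, whereas your witness construction makes explicit the exact condition on the monitor relation (existence of $\relate$-saturated discriminating monitors whose saturated bad sets remain unrealizable in the source) under which the abstract equivalence is a theorem rather than a desideratum --- matching the paper's own hedge that the proposition ``should hold'' for any adequate relation and can only be proved once the relation is fixed.
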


\subsubsection{Compiling Monitors}\label{sec:rsc-comp-mon}
We can change the definition of compiler to also compile the monitor so we are not given a target monitor related to the source one, but the compiler gives us that monitor.
Consider this compiler to have this type and this notation: $\compgenm{\src{\cdot}} : \src{C}\to \trg{C}$.

\begin{definition}[Robustly-safe Compilation with Monitors]\label{def:rsc-m}
	\begin{align*}
		\vdash\compgenm{\src{\cdot}}: \lst{rs-pres(M)} \isdef&\ 
		\forall \src{C},\src{M}\ldotp 
		\\
		\text{if }&\
		\src{M}\vdash \src{C} : \src{rs}
		\\
		\text{ then }&\
		\compgenm{\src{M}}\vdash \compgenm{\src{C}} : \trg{rs}
	\end{align*}
\end{definition}

 \newpage

\section{Compiler from \LU to \LP}\label{sec:compup}

\begin{definition}[Compiler \LU to \LP]\label{def:comp-lu-lp}
	$\compup{\cdot} : \src{C}\to \trg{C}$

	\compup{\src{C}} is defined as follows:
	\begin{align*}
		\tag{\compup{\cdot}-Comp}
		\compup{
			\src{\ell_{root} ; \OB{F} ; \OB{I}}
			,
			\trg{M}
		} &= \trg{
			k_{root} ;
			\compup{\OB{F}}; \compup{\OB{I}}
		}
		\\
		\tag{\compup{\cdot}-Function}
		\compup{
			\src{f(x) \mapsto s;\ret}
		} &= \trg{
			f(x)\mapsto\compup{\src{s}};\ret
		}
		\\
		\tag{\compup{\cdot}-Interfaces}
		\compup{f} &= \trg{f}
		\\
	\end{align*}

	\mytoprule{Expressions}
	\begin{align*}
		\tag{\compup{\cdot}-True} \label{tr:compup-true}
		\compup{
			\trues
		}
		=&\ \trg{0}  
		\\
		\tag{\compup{\cdot}-False} \label{tr:compup-false}
		\compup{
			\falses
		}
		=&\ \trg{1} 
		\\
		\tag{\compup{\cdot}-nat} \label{tr:compup-nat}
		\compup{
			\src{n}
		}
		=&\ \trg{n}
		\\
		\tag{\compup{\cdot}-Var}  \label{tr:compup-var}
		\compup{
			\src{x}
		}
		=&\  \trg{x}
		\\
		\tag{\compup{\cdot}-Loc} \label{tr:compup-loc}
		\compup{
			\src{\ell}
		}
		=&\  
			\begin{aligned}
				&
				\trg{\pair{n,v}}
			\end{aligned}
		\\
		\tag{\compup{\cdot}-Pair}  \label{tr:compup-pair}
		\compup{
			\src{\pair{e_1,e_2}}
		}
		=&\ \trg{\pair{\compup{\src{e_1}},\compup{\src{e_2}}}}
		\\
		\tag{\compup{\cdot}-P1}  \label{tr:compup-p1}
		\compup{
			\src{\projone{e}}
		}
		=&\ \trg{\projone{\compup{\src{e}}}}
		\\
		\tag{\compup{\cdot}-P2}  \label{tr:compup-p2}
		\compup{
			\src{\projtwo{e}}
		}
		=&\ \trg{\projtwo{\compup{\src{e}}}}
		\\
		\tag{\compup{\cdot}-Deref}  \label{tr:compup-deref}
		\compup{
			\src{!e}
		}
		=&\ \trg{
			!\projone{\compup{e}} \with{\projtwo{\compup{e}}}
		}
		\\
		\tag{\compup{\cdot}-op} \label{tr:compup-op}
		\compup{
			\src{e \op e'}
		}
		=&\ \trg{\compup{\src{e}} \op \compup{\src{e'}}}
		\\
		\tag{\compup{\cdot}-cmp} \label{tr:compup-bop}
		\compup{
			\src{e \bop e'}
		}
		=&\ 
				\trg{
					\compup{\src{e}} \bop \compup{\src{e'}}
				}
	\end{align*}

	\mytoprule{Statements}
	\begin{align*}
		\tag{\compup{\cdot}-Skip}\label{tr:compup-skip}
		\compup{
			\skips
		}
		=&\ \skipt  
		\\
		\tag{\compup{\cdot}-Seq}  \label{tr:compup-seq}
		\compup{
			\src{s_u;s}
		}
		=&\ \trg{\compup{\src{s_u}} ; \compup{\src{s}}}
		\\
		\tag{\compup{\cdot}-Letin}  \label{tr:compup-letin}
		\compup{
			\src{\letin{x}{e}{s}}
		}
		=&\ \trg{
				\letin{\trg{x}}{\compup{ \src{e}} }{\compup{ \src{s}}}
			}
		\\
		\tag{\compup{\cdot}-If}  \label{tr:compup-if}
		\compup{
			\src{\ifte{e}{s_t}{s_e}}
		}
		=&\ \trg{
				\ifzte{\compup{\src{e}}
				}
				{\compup{\src{s_t}}
				}{\compup{\src{s_e}}}
		}
		\\
		\tag{\compup{\cdot}-New}  \label{tr:compup-new}
		\compup{
			\src{\letnew{x}{e}{s}}
		}
		=&\
		\trg{
			\begin{aligned}[t]
				&
				\letnewt{\trg{x_{loc}}}{\compup{\src{e}}}{
				\\
				&\
					\lethide{\trg{x_{cap}}}{x_{loc}}{
					\\
					&\ \ 
						\letint{\trg{x} ~}{~ \trg{\pair{x_{loc},x_{cap}}}}{\compup{s}}
					}
				}
			\end{aligned}
		}
		\\
		\tag{\compup{\cdot}-Assign} \label{tr:compup-ass}
		\compup{
			\src{x := e'}
		}
		=&\ \trg{
			\begin{aligned}[t]
				&
				\letint{\trg{x1}~}{~\trg{\projone{x}}}{
				\\
				&\
					\letint{\trg{x2}~}{~\trg{\projtwo{x}}}{
					\\
					&\ \ 
						\trg{x1 :=}\ \compup{e} \with{x2}
					}
				}
			\end{aligned}
		}
		\\
		\tag{\compup{\cdot}-call} \label{tr:compup-call}
		\compup{
			\src{\call{f}~e}
		}
		=&\ \trg{\call{f}~\compup{\src{e}}}
	\end{align*}
\end{definition}
Note that the case for \Cref{tr:compup-new} only works because we are in a sequential setting.
In a concurrent setting an adversary could access \trg{x_{loc}} before it is hidden, so the definition would change.
See \Cref{tr:compap-new} for a concurrent correct implementation.

\begin{align*}
	\compup{\subs{v}{x}} =&\ \subt{\compup{v}}{x}
\end{align*}

\paragraph{Optimisation}\label{sec:compup-opt-deref}
We could optimise \Cref{tr:compup-deref} as follows:
\begin{itemize}
	\item rename the current expressions except dereferencing to \com{b};
	\item reform expressions both in \LA and \LP as $\com{e} ::= \com{b} \mid \com{\letin{x}{b}{e}} \mid \com{!b}$. 
	In the case of \LP it would be $\cdots \mid \trg{!b \with{b}}$.

	This allows expressions to compute e.g., pairs and projections.
	\item rewrite the \Cref{tr:compup-deref} case for compiling \src{!b} into: 

	$\trg{\letin{x}{\compup{b}}{\letin{x1}{\projone{x}}{\letin{x2}{\projtwo{x}}{ !x1 \with{x2} }}}}$.
	\item as expressions execute atomically, this would also scale to the compiler for concurrent languages defined in later sections.
\end{itemize}
We do not use this approach to avoid nonstandard constructs.

\subsection{Properties of the \compup{\cdot} Compiler}

\begin{theorem}[Compiler \compup{\cdot} is \ccomp]\label{thm:comp-up-cc}
	$\vdash\compup{\cdot} : \ccomp$
\end{theorem}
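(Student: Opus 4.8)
The plan is to prove \Cref{thm:comp-up-cc} by exhibiting the cross-language state relation $\relatebeta$ of \Cref{fig:mon-st-act-rel} as a backward simulation between compiled and source whole programs, transporting the terminating target run assumed in \Cref{def:comp-corr} back to a terminating source run. A crucial simplification is that, per the remark following \Cref{def:comp-corr}, the component $\src{C}$ under consideration is \emph{whole}: its interface $\src{\OB{I}}$ is empty, so the jump classification makes every call and return \emph{internal}, and these emit the silent label $\src{\epsilon}$ (\Cref{tr:eus-call-i,tr:eus-ret-i}, and their target counterparts). Consequently the observed trace $\trg{\OB{\alpha}}$ is empty, the obligation $\src{\OB{\alpha}}\relatebeta\trg{\OB{\alpha}}$ degenerates to $\srce\relatebeta\trge$, and the whole run $\SInitt{\compup{\src{C}}}\Xtot{}\trg{\Omega}$ consists of silent steps only. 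What remains is to match this silent target run by a silent source run reaching a $\relatebeta$-related, terminated state.

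First I would discharge the base case: the initial states are related, $\SInits{C}\relate_{\beta_0}\SInitt{\compup{\src{C}}}$. This is the stated hypothesis, and it holds because $\compup{\cdot}$ relates $\src{\ell_{root}}$ to $\trg{\pair{0,k_{root}}}$ and the initial heaps $\src{\ell_{root}\mapsto 0}$ and $\trg{0\mapsto 0:k_{root}}$ are related by \Cref{tr:hrel-i,tr:hrel-b}, while the function bodies agree by construction of the compiler. Next I would prove two auxiliary lemmas. The first is a \emph{backward expression} lemma: if $\src{H}\relatebeta\trg{H}$ and $\trg{H}\triangleright\compup{\src{e}}\redtot\trg{v}$, then $\src{H}\triangleright\src{e}\redtos\src{v}$ for some $\src{v}\relatebeta\trg{v}$, proved by induction on $\src{e}$; the instructive cases are booleans (the target split on $\trg{0}$ versus nonzero recovers $\src{\trues}$/$\src{\falses}$ through $\relatebeta$), dereference (the compiled form carries $\projtwo{}$ as the matching capability, so the target access rule \Cref{tr:et-de-t} never blocks where the source succeeds), and locations. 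The second is a \emph{stuttering backward step} lemma, in the spirit of \Cref{thm:back-sim} but for silent steps: from related states $\src{\Omega}\relatebeta\trg{\Omega}$, any silent target step is matched by zero or more silent source steps yielding $\src{\Omega'}\relate_{\beta'}\trg{\Omega'}$ with $\beta\subseteq\beta'$. I would dispatch this by cases on the compiler clause producing the current statement, grouping the several target reductions that implement a single source construct: allocation (\Cref{tr:compup-new}) matches $\trg{new};\trg{hide};\trg{let}$ against one source $\src{new}$, extending $\beta$ with the freshly related triple $(\src{\ell},\trg{n},\trg{k})$; assignment (\Cref{tr:compup-ass}) matches two projections plus a tagged write against one source update; and conditionals use the backward expression lemma to select the correct branch.

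With these lemmas in hand, I would conclude by induction on the (finite, since terminating) length of the silent target run, threading a monotone chain of bijections to obtain $\SInits{C}\Xtos{}\src{\Omega}$ with $\beta_0\subseteq\beta$ and $\src{\Omega}\relatebeta\trg{\Omega}$. Termination then transfers: $\trg{\Omega}\termt$ forces the target control statement to be $\skipt$, and the state relation of \Cref{fig:mon-st-act-rel} together with the compiler clause $\compup{\skips}=\skipt$ forces the matched source statement to be $\skips$, giving $\src{\Omega}\termsl$. This establishes all conclusions of \Cref{def:comp-corr}.

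The hard part will be the stuttering step lemma: the granularity mismatch between one source statement and the multi-step target expansions of allocation, assignment, and dereference means target steps are \emph{not} in one-to-one correspondence with source steps, so the bijection $\beta$ must be grown precisely at the $\trg{hide}$/$\trg{new}$ points while the intermediate target states (which have no source counterpart) are absorbed as stuttering. This interacts delicately with the non-injective boolean encoding, where $\src{\falses}$ is related to \emph{every} nonzero target value although the compiler emits $\trg{1}$, and with the capability discipline, where one must verify that the capabilities carried in compiled location-pairs always satisfy the target's access-control side conditions (\Cref{tr:et-de-t,tr:et-ac-t}) so that the target can never get stuck on a read or write that the source performs. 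Finiteness of the terminating run rules out unbounded stuttering and keeps the induction well-founded.
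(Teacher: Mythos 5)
Your decomposition is the mirror image of the paper's. The paper proves a \emph{forward} simulation, \Cref{thm:gen-cc-up}, by induction on the source statement using the forward expression lemma \Cref{thm:expr-rel-up}, and then discharges \Cref{thm:comp-up-cc} in one line: for $\beta_0=(\src{\ell_{root}},\trg{0},\trg{k_{root}})$ the initial states are related by definition, and since both languages are deterministic the terminating target run assumed in \Cref{def:comp-corr} must coincide with the image of the source run. Your observations that the trace is empty for whole programs, that $\beta$ grows exactly at allocation with $(\src{\ell},\trg{n},\trg{k})$, and that the multi-step target expansions of allocation, assignment and dereference must be absorbed as stuttering all match the actual content of \Cref{thm:gen-cc-up}; the direction of the simulation is where you part ways, and that is where the problem lies.

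The gap is that your stuttering backward step lemma (and the backward expression lemma feeding it) is false for \LU, because compilation can make stuck source programs progress. \LU is untyped, so consider a \src{main} that runs $\src{\letnew{y}{0}{\letin{z}{\projone{y}}{\skips}}}$: in the source, \src{y} is bound to a location \src{\ell}, and $\src{\projone{\ell}}$ is stuck since \Cref{tr:eus-p1} fires only on pairs; under \compup{\cdot}, \trg{y} is bound to $\trg{\pair{n,k}}$, the projection succeeds, and the compiled program terminates silently. Likewise $\iftes{\src{5}}{\skips}{\skips}$ is stuck (the source if-rules fire only on \trues and \falses), while its compilation $\ifztet{\trg{5}}{\skipt}{\skipt}$ steps by \Cref{tr:et-iff}; note that $\falses\relatebeta\trg{5}$ does not rescue you, since the fixed source expression evaluates to \src{5}, not to \falses. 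In both cases the target run reaches a terminated state ($\trg{\Omega}\termt$) and no source run of any length reaches a related state with $\termsl$, so the target step is matched by no sequence of source steps whatsoever. Your own risk analysis covers only the forward hazards --- that the target never blocks on an access the source performs --- and misses this converse one. To close the hole you must either add a source-progress hypothesis (the source run neither blocks nor gets stuck), which is exactly the side condition under which a forward simulation plus target determinism can be flipped into the backward statement of \Cref{def:comp-corr}, or retreat to the paper's route and prove the forward lemma, where stuck source states impose no proof obligation. (To be fair, the paper's one-line appeal to determinism quietly relies on the same source-progress assumption; your proposal makes that hole visible but, as written, does not close it.)
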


\begin{theorem}[Compiler \compup{\cdot} is \rscomp]\label{thm:comp-up-rsc}
	$\vdash\compup{\cdot} : \rscomp$
\end{theorem}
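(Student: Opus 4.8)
The plan is to reduce the statement to the property-free characterisation $\pfrscomp$ (\Cref{def:rsc-eq}) and to establish that via a trace-based backtranslation. By \Cref{thm:rsc-prf-eq2} we have $\vdash\compup{\cdot}:\pfrscomp \iff \vdash\compup{\cdot}:\rscomp$, so it suffices to prove $\vdash\compup{\cdot}:\pfrscomp$. Unfolding its definition, the obligation is: for any target attacker $\trg{A}$ with $\vdash\trg{A\hole{\compup{C}}}:\trg{whole}$ emitting a trace $\trgb{\OB{\alpha}}$, exhibit a source attacker $\src{A}$ and a source trace $\src{\OB{\alpha}}$ such that $\src{A\hole{C}}$ emits $\src{\OB{\alpha}}$ with $\strip{\src{\OB{\alpha}}}\relate\strip{\trgb{\OB{\alpha}}}$. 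This is exactly the content of the backtranslation correctness theorem (\Cref{thm:backtr-corr}), which in fact yields the stronger $\src{\OB{\alpha}}\relatebeta\trgb{\OB{\alpha}}$ and $\src{\Omega}\relatebeta\trgb{\Omega}$; hence the entire argument rests on constructing $\backtrup{\cdot}$ and proving it correct, after which $\pfrscomp$ follows by instantiating $\src{A}$ with the witness context.

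First I would fix the cross-language machinery — the partial bijection $\beta$ on locations, the value relation $\relatebeta$, and its lifts to heaps, states, and actions (\Cref{fig:mon-st-act-rel}) — and then prove the backward simulation lemma (\Cref{thm:back-sim}) by induction on the target reduction sequence. Each step of compiled code must be matched by a corresponding source step preserving state-relatedness, extending $\beta$ whenever an allocation creates a fresh related pair of locations. The delicate cases are the compiled allocation $\compup{\src{\letnew{x}{e}{s}}}$, where one adds $(\src{\ell},\trg{n},\trgb{\bot})$ and later the protecting capability to $\beta$, and the compiled dereference and assignment, where the projections on the compiled pair $\trg{\pair{n,k}}$ must recover the right location and capability so that the capability-guarded reads and writes of \LP mirror the abstract heap operations of \LU.

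Next I would build $\backtrup{\trgb{\OB{\alpha}}}$ as in \Cref{ex:bt}: emit empty bodies for all context functions, then walk the trace action by action, placing in the appropriate method body a code block that (i) reconstructs the heap fragment of that action using a support list $\src{L}$ of known target addresses, (ii) updates $\src{L}$ on incoming returns, and (iii) performs the call or return that emits the matching source action. The crucial subtlety, and the point where this departs from standard full-abstraction backtranslations, is that $\relatebeta$ is not injective from source to target (for instance $\trg{0}$ relates to both $\src{0}$ and $\trues$), so a target value does not determine its source preimage. Consequently $\backtrup{\cdot}$ must return a \emph{set} of candidate source contexts, one per choice of preimage, and correctness asserts only that \emph{some} member reproduces a related source trace.

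I expect the main obstacle to be precisely this set-based correctness argument: showing that at least one backtranslated context drives $\src{C}$ through a run whose trace is $\relatebeta$-related to $\trgb{\OB{\alpha}}$, \emph{and} that the reconstructed heap at every component boundary matches under the growing bijection, despite the \LP adversary's ability to guess and clobber concrete addresses that have no a priori source counterpart. I would discharge this by induction on trace length, threading a single coherent $\beta$ through the run: at each incoming action the candidate's chosen source values must be consistent with how $\src{C}$ actually consumes them, and I would argue that the branch dictated by the component's own (deterministic) behaviour is always among the enumerated candidates, so the corresponding context in the set succeeds. Once \Cref{thm:backtr-corr} is established, $\pfrscomp$ is immediate, and $\rscomp$ follows from \Cref{thm:rsc-prf-eq2}.
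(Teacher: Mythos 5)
Your proposal is correct and follows essentially the same route as the paper: reduce \rscomp to \pfrscomp via \Cref{thm:rsc-prf-eq2}, discharge \pfrscomp by the trace-based backtranslation \backtrup{\cdot} and its correctness theorem (\Cref{thm:backtr-corr}), supported by the backward simulation lemma (\Cref{thm:back-sim}) and the growing partial bijection $\beta$. Even the subtlety you flag as the main obstacle --- that the non-injectivity of $\relatebeta$ forces a \emph{set} of backtranslated contexts, of which the component's deterministic behaviour selects a successful member --- is exactly how the paper handles it (formalised in its lemma that compiled-code steps imply existence of source steps).
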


\subsection{Back-translation from \LP to \LU}\label{sec:general-backtr-trace-based}

\subsubsection{Values Backtranslation}\label{sec:val-bt}

Here is how values are back translated.

\mytoprule{\backtrup{\cdot}:\trg{v}\to\src{v}}
\begin{align*}
	\backtrup{\trg{0}} 
	=&\
	\trues
	\\
	\backtrup{\trg{n}} 
	=&\
	\falses
	&
	\text{if }\trg{n}\neq\trg{0}
	\\
	\backtrup{\trg{n}} 
	=&\
	\src{n}
	&
	\text{where }\src{n}\relatebeta\trg{n}
	\\
	\backtrup{\trg{k}} 
	=&\
	\src{0}
	\\
	\backtrup{\trg{\pair{v,v'}}}
	=&\ 
	\src{\pair{\backtrup{\trg{v}},\backtrup{\trg{v'}}}}
	\\
	\backtrup{\trg{\pair{n,v}}}
	=&\
	\src{\ell}
	&
	\text{where }\src{\ell}\relatebeta\trg{\pair{n,v}}
\end{align*}
\botrule

The backtranslation is nondeterministic, as $\relatebeta$ is not injective.
In this case we cannot make it injective (in the next compiler we can index it by types and make it so but here we do not have them).
This is the reason why the backtranslation algorithm returns a set of contexts, as backtranslating an action that performs \trg{\cl{f}{v~H}} could result in either \src{\cl{f}{\trues~H}} or \src{\cl{f}{0~H}}.
Now depending on \src{f}'s body, which is the component to be compiled, supplying \trues or \src{0} may have different outcomes.
Let us assume that the compilation of \src{f}, when receiving \trg{\cl{f}{v~H}} does not get stuck.
If \src{f} contains \src{\ifte{x}{s}{s'}}, supplying \src{0} will make it stuck.
However, because we generate all possible contexts, we know that we generate also the context that will not cause \src{f} to be stuck.
This is captured in \Cref{thm:action-det} below.

\begin{lemma}[Compiled code steps imply existence of source steps]\label{thm:action-det}
	\begin{align*}
		&\
		\forall
		\\
		\text{if }
		&\
		\src{\Omega''}\relatebeta\trg{\Omega''}
		\\
		&\
		\trg{\Omega''}\Xtot{\alpha?}\trg{C, H \triangleright \compup{s};s'\rho}
		\\
		&\
		\trg{C, H \triangleright \compup{s};s'\rho}\Xtot{\alpha!}\trg{\Omega'} 
		\\	
		&\
		\set{\src{\alpha?}}=\myset{\src{\alpha?}}{\src{\alpha?}\relatebeta\trg{\alpha?}}
		\\
		&\
		\set{\src{\rho}}=\myset{\src{\rho}}{\src{\rho}\relatebeta\trg{\rho}}
		\\
		\text{then }
		&\
		\exists \src{\alpha_j?}\in\set{\src{\alpha?}}, \src{\rho_y}\in\set{\src{\rho}}, \src{C_j, H_j, s_j;s_j'\rho'}.
		\\
		\text{if }
		&\
		\src{\Omega''}\Xtos{\alpha_j?}\src{C_j, H_j \triangleright s_j;s_j'\rho'}
		\\
		\text{then }
		&\
		\src{C_j, H_j \triangleright s_j;s_j'\rho_y} \relatebeta \trg{C, H \triangleright \compup{s};s'\rho}
		\\
		&\
		\src{C_j, H_j \triangleright s_j;s_j'\rho_y}\Xtos{\alpha!}\src{\Omega'} 
		\\
		&\
		\src{\alpha!}\relatebeta\trg{\alpha!}
		\\
		&\
		\src{\Omega'}\relatebeta\trg{\Omega'}
	\end{align*}
\end{lemma}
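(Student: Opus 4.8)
The plan is to reduce \Cref{thm:action-det} to the backward simulation lemma (\Cref{thm:back-sim}) by first choosing, among all source values related to the incoming target value, the unique one that lets us establish the state relation $\relatebeta$ between the source body and the running compiled body. Concretely, the incoming target action carries a single value $\trg{v}$ that is substituted for the formal parameter via $\trg{\rho}$; the non-injectivity of $\relatebeta$ (e.g.\ $\trg{0}$ relates to both $\trues$ and $\src{0}$) means there are several candidate source substitutions in $\set{\src{\rho}}$, and the entire difficulty is to show that at least one of them reproduces the compiled component's subsequent behaviour. Once the correct substitution $\src{\rho_y}$ is identified and the related starting state is in hand, \Cref{thm:back-sim} delivers the matching source outgoing action and the related final state directly, so this lemma is really a ``bridge'' that resolves the value ambiguity before backward simulation can be applied.

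First I would unfold the hypothesis $\src{\Omega''}\relatebeta\trg{\Omega''}$ together with the incoming action. Since that action is a call (or return-back), taking it only substitutes the argument and leaves the heap untouched, so from \Cref{tr:state-rel-whole} I obtain related heaps $\src{H_j}\relatebeta\trg{H}$ and related monitors at the point where control enters the compiled body $\trg{\compup{s}}$. The remaining obligation is to pick $\src{\rho_y}\in\set{\src{\rho}}$ so that $\src{C_j, H_j \triangleright s_j;s_j'\rho_y} \relatebeta \trg{C, H \triangleright \compup{s};s'\rho}$, which by the state relation amounts to choosing a source value $\src{v_j}\relatebeta\trg{v}$. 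Here I would argue from the structure of $\compup{\cdot}$ and determinism of the \LP semantics: the compiled body is deterministic and, by hypothesis, runs to the outgoing action without getting stuck, so the \emph{shape} of $\trg{v}$ (a number, a pair, or a location–capability pair) is exactly the shape that the source body $\src{s}$ expects at its first elimination of the parameter. Because $\relatebeta$ is total and relates each target shape to a source value of the matching shape, such a $\src{v_j}$ exists; the only residual ambiguity — a number read either as a boolean or as a natural — is resolved by whichever elimination form the source performs first, an $\src{\ifte{e}{s_t}{s_e}}$ forcing the boolean reading (picking $\trues$ when $\trg{v}=\trg{0}$ and $\falses$ otherwise) and arithmetic forcing the numeric reading. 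If the parameter is never eliminated, any related value works. With $\src{v_j}$ fixed I set $\src{\alpha_j?}$ and $\src{\rho_y}$ accordingly, establish the state relation, and conclude by applying \Cref{thm:back-sim} to transport the target outgoing action and the relatedness of final states back to the source, enlarging $\beta$ as needed for freshly allocated locations.

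The main obstacle will be precisely the existence argument in the middle step: showing that the non-injectivity of $\relatebeta$ never forces a situation in which the compiled code makes progress but no related source value can. This is exactly the technical justification for why the backtranslation must return a \emph{set} of candidate contexts rather than a single one, and why that set suffices to reproduce the trace. I expect to discharge it by a case analysis on the first operation the compiled body performs on the substituted parameter, using determinism of \LP to pin down that operation and using the fact that a non-stuck target step constrains the shape of $\trg{v}$ enough to guarantee a source value of the expected type in $\set{\src{\alpha?}}$. The remaining obligations — preservation of the heap relation across the silent component steps and the matching of the outgoing action $\src{\alpha!}\relatebeta\trg{\alpha!}$ with $\src{\Omega'}\relatebeta\trg{\Omega'}$ — then follow routinely from \Cref{thm:back-sim}.
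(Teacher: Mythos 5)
Your core idea---that the non-injectivity of $\relatebeta$ must be resolved by looking at which elimination forms the compiled code applies to the substituted value, that totality of $\relatebeta$ guarantees candidates exist, and that the rest is a backward simulation---is the same insight the paper's proof rests on, and you correctly identify why the backtranslation must return a \emph{set} of contexts. However, the reduction to \Cref{thm:back-sim} does not go through as that lemma is stated. \Cref{thm:back-sim} relates $\src{C, H \triangleright s}$ to the \emph{literal} compilation image $\trg{C, H \triangleright \compup{s}}$, with no substitution in its statement; in \Cref{thm:action-det} the target state is $\trg{C, H \triangleright \compup{s};s'\rho}$ where $\trg{\rho}$ is supplied by the context and generally lies outside the image of $\compup{\cdot}$. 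For example, the context may pass $\trg{5}$, whose boolean reading is $\falses$, yet $\compup{\falses}=\trg{1}$, so $\trg{\compup{s}}\subt{5}{x}$ is not $\compup{\src{s}\subs{\falses}{x}}$ but only $\relatebeta$-related to it. To execute your plan you would need a backward simulation generalized to open terms closed under related substitutions, and proving that generalization is precisely the work the paper does instead: it proves \Cref{thm:action-det} directly, by induction on $\Xtot{\alpha!}$, with a case analysis on every \LP{} reduction rule (sequencing, branching, allocation, hiding, assignment, internal calls and returns, then the emitting cases for callbacks and returns), discharging expression evaluation through the auxiliary \Cref{thm:action-det-expr}. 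In other words, \Cref{thm:action-det} essentially \emph{is} the substitution-aware backward simulation, so invoking \Cref{thm:back-sim} to prove it either begs the question or defers all of the actual work to an unproven strengthening.

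Relatedly, fixing $\src{\rho_y}$ once, from ``the first elimination of the parameter,'' is too coarse to carry the existence argument. The substituted value can be paired, stored into the heap, and eliminated several times in different positions during the silent steps preceding $\trg{\alpha!}$, and choices also arise at steps not tied to the parameter at all (e.g.\ the \trg{hide} following a \trg{new}, which the paper handles by observing it needs no source counterpart). The paper therefore makes each pick \emph{locally, inside the induction}, at the step where it is forced: in the \Cref{tr:et-ift} case it observes $\trg{H \triangleright \compup{e}\rho \redtot 0}$, notes that both $\src{0}\relatebeta\trg{0}$ and $\trues\relatebeta\trg{0}$, and selects $\trues$ so that \Cref{tr:eus-ift} applies in the source. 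Structuring the choice per reduction step is what lets the claim be verified rule by rule; your up-front selection would additionally require an argument that a single reading remains consistent across \emph{all} later uses of the value in the run, and your sketch does not supply one.
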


\subsubsection{Skeleton}\label{sec:backtr-skel}

\mytoprule{\backtrup{\cdot}:\trg{I}\to\src{F}}
\begin{align*}
	\tag{\backtrup{\cdot}-fun}
	\label{tr:backtr-fun}
	\backtrup{\trg{f}}
	=&\
	\src{f(x)\mapsto incrementCounter(); \ret}
\end{align*}
\botrule
Functions call \src{incrementCounter()} before returning to ensure that when a returnback is modelled, the counter is incremented right before returning and not beforehand, as doing so would cause the possible execution of other bactranslated code blocks.
Its implementation is described below.

\mytoprule{\backtrup{\cdot}:\trg{\OB{I}}\to\src{A}}
\begin{align*}
	\tag{\backtrup{\cdot}-skel}
	\label{tr:backtr-skel}
	\backtrup{\trg{\OB{I}}}
	=&\
	\src{
		\begin{aligned}[t]
			&
			\ell_{i} \mapsto 1 ;
			\\
			&
			\ell_{glob} \mapsto 0
			\\
			&
			\src{
				main(x)\mapsto incrementCounter();\ret
			}
			\\
			&
			\src{
				incrementCounter()\mapsto \bl{\text{see below}}
			}
			\\
			&
			\src{
				register(x)\mapsto \bl{\text{see below}}
			}
			\\
			&
			\src{
				update(x)\mapsto \bl{\text{see below}}
			}
			\\
			&
			\backtrup{\trg{f}}
			& 
			\forall \trg{f}\in\trg{\OB{I}}
		\end{aligned}
	}
\end{align*}
\botrule
We assume compiled code does not implement functions \src{incrementCounter}, \src{register} and \src{update}, they could be renamed to not generate conflicts if they were.

The skeleton sets up the infrastructure.
It allocates global locations \src{\ell_i}, which is used as a counter to count steps in actions, and \src{\ell_{glob}}, which is used to keep track of attacker knowledge, as described below.
Then it creates a dummy for all functions expected in the interfaces \trg{\OB{I}} as well as a dummy for the \src{main}.
Dummy functions return their parameter variable and they increment the global counter before that for reasons explained later.

\subsubsection{Single Action Translation}\label{sec:backtr-single-action}
We use the shortcut \trg{ak} to indicate a list of pairs of locations and tag to access them \trg{\OB{\pair{n,\eta}}} that is what the context has access to.
We use functions $.\mtt{loc}$ to access obtain all locations of such a list and $.\mtt{cap}$ to obtain all the capabilities (or \trg{0} when $\trg{\eta}=\trg{\bot}$) of the list.

We use function \src{incrementCounter} to increment the contents of \src{\ell_i} by one.
\begin{align*}
	\src{
		\begin{aligned}
			&
			\src{incrementCounter( ~)} \mapsto
			\\
			&\
			\letins{\src{c}~}{~!\src{\ell_i}}{
				\letins{\src{l}~}{~\ell_{i}}{\src{l:=c+1}}
			}
		\end{aligned}
	}
\end{align*}

Starting from location \src{\ell_{g}} we keep a list whose elements are pairs locations-numbers, we indicate this list as \src{L_{glob}}.

We use function \src{register(\pair{\ell, n})} which adds the pair \src{\pair{\ell,n}} to the list \src{L_{glob}}.
Any time we use this we are sure we are adding a pair for which no other existing pair in \src{L_{glob}} has a second projection equal to \src{n}.
This function can be defined as follows:
\begin{align*}
	\src{
		\begin{aligned}
			&
			\src{register(x)} \mapsto
			\\
			&\
			\letins{\src{xl}~}{~\projone{x}
			\\
			&\
			}{
				\letins{\src{xn}~}{~\projtwo{x}
				\\
				&\ \ 
				}{
					\src{L_{glob} :: \pair{xl,xn}}
				}
			}
		\end{aligned}
	}
\end{align*}
\src{L_{glob}} is a list of pair elements, so it is implemented as a pair whose first projection is an element (a pair) and its second projection is another list; the empty list being \src{0}.
Where \src{::} is a recursive function that starts from \src{\ell_{glob}} and looks for its last element (i.e., it performs second projections until it hits a \src{0}), then replaces that second projection with \src{\pair{\pair{xl,xn},0}}
\begin{lemma}[\src{register(\ell,n)} does not add duplicates for \src{n}]\label{thm:reg-no-dup}
	For \src{n} supplied as parameter by \backtrup{\cdot},
	$\src{C;H\triangleright register(\ell,n)}\xtos{\epsilon}\src{C;H'\triangleright \skips}$ and $\src{\pair{\_,n}}\notin\src{L_{glob}}$
\end{lemma}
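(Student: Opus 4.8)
The plan is to prove the two conjuncts of \Cref{thm:reg-no-dup} separately, relying throughout on an invariant that ties the contents of the bookkeeping list \src{L_{glob}} to the set of target addresses that \backtrup{\cdot} has already processed. The operational conjunct is a routine termination argument about the body of \src{register}, while the no-duplicate conjunct reduces to a \emph{freshness} discipline that the backtranslation enforces by construction.

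First I would establish $\src{C;H\triangleright register(\ell,n)}\xtos{\epsilon}\src{C;H'\triangleright \skips}$ by unfolding the body of \src{register} defined above. The body performs two projections on its pair argument, binds the results with \src{let}, and then appends via \src{::} to \src{L_{glob}}. Since \src{::} walks \src{L_{glob}} taking second projections until it reaches the empty-list marker \src{0} and writes back a single updated cell, the only obligation is that \src{L_{glob}} is always a \emph{well-formed} finite list (a right-nested sequence of pairs terminated by \src{0}). This is itself an invariant: \src{register} is the sole mutator of \src{L_{glob}}, it is initialised to \src{0} by the skeleton (\Cref{tr:backtr-skel}), and each mutation appends exactly one well-formed pair, preserving the shape. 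Given well-formedness, every projection and dereference in the body is defined, the recursion in \src{::} terminates because the list is finite, and the whole call reduces silently to \skips.

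For the second conjunct I would maintain the following invariant across the backtranslated execution: at every point \src{L_{glob}} contains exactly one entry \src{\pair{\ell,n}} for each target address \src{n} that \backtrup{\cdot} has \emph{already} emitted a \src{register} call for, and no others. The crux is then the freshness discipline of the backtranslation: by construction (see \Cref{ex:bt} and the full algorithm in the technical report), \backtrup{\cdot} processes the target trace action by action, tracking the addresses the context has learned so far, and emits a \src{register(\ell,n)} call \emph{only} for addresses \src{n} appearing for the first time — those that were not available in any earlier action. Hence whenever \backtrup{\cdot} supplies an \src{n} to \src{register}, that \src{n} has not been supplied before; by the invariant no entry \src{\pair{\_,n}} is yet present in \src{L_{glob}}, which is exactly the claim.

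I expect the principal difficulty to lie not in the local reasoning about \src{register} but in formally pinning down this freshness condition, i.e., proving that \backtrup{\cdot} never re-supplies an address. This requires an induction over the structure of the backtranslation of a whole trace, showing that the set of registered addresses stays in lockstep with the set of addresses made available by the prefix processed so far, and that the new addresses introduced by each action are disjoint from that set. Care is needed because addresses can become known both through action arguments and through locations reachable on the heap carried by each action; the argument must therefore track reachability, mirroring the \fun{reach}{\cdot} machinery used elsewhere, to guarantee that an \src{n} counted as ``new'' genuinely did not occur in any earlier action's heap either.
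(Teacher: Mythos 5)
Your proposal is correct and follows essentially the same route as the paper: the paper's proof is a terse ``simple analysis'' of the four rules \Cref{tr:backtr-call,tr:backtr-callback-loc,tr:backtr-retback,tr:backtr-ret-loc}, which is precisely the freshness-by-construction argument you elaborate --- each rule registers only addresses that are newly allocated (in $\trg{H}\setminus\trg{H_{pre}}$) or reachable but not yet in the attacker knowledge \trg{ak}, and every registered address is immediately folded into \trg{ak'}, so no address is ever supplied twice. Your additional care about well-formedness of \src{L_{glob}} and termination of \src{::} merely makes explicit what the paper leaves implicit.
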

\begin{proof}
	Simple analysis of \Cref{tr:backtr-call,tr:backtr-callback-loc,tr:backtr-retback,tr:backtr-ret-loc}.
\end{proof}

We use function \src{update(n, v)} which accesses the elements in the \src{L_{glob}} list, then takes the second projection of the element: if it is \src{n} it updates the first projection to \src{v}, otherwise it continues its recursive call.
If it does not find an element for \src{n}, it gets stuck
\begin{lemma}[\src{update(n,v)} never gets stuck]\label{thm:update-no-stuck}
	$\src{C;H\triangleright update(n,v)}\xtos{\epsilon}\src{C;H'\triangleright \skips}$ for \src{n} and \src{v} supplied as parameters by \backtrup{\cdot} and \src{H'}=\src{H}\subs{\ell\mapsto v}{\ell\mapsto\_} for $\src{\ell}\relatebeta\trg{\pair{n,\_}}$.
\end{lemma}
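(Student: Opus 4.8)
The plan is to reduce the claim to a single invariant about \src{L_{glob}} and then read that invariant off from the backtranslation rules. First I would make explicit how \src{update(n,v)} can fail at all: it recurses down \src{L_{glob}}, taking the second projection of each element and comparing it against \src{n}, and it becomes stuck only if it reaches the empty-list sentinel \src{0} without ever finding a matching element. Hence the whole lemma follows once we know that, for every \src{n} that \backtrup{\cdot} actually passes as the first argument of \src{update}, the list \src{L_{glob}} already contains an element \src{\pair{\_,n}}. Given such an element, \Cref{thm:reg-no-dup} guarantees it is the \emph{only} one with second projection \src{n}; the traversal therefore locates it, overwrites its first projection with \src{v}, and returns \skips. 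Since that first projection is the source location \src{\ell} with $\src{\ell}\relatebeta\trg{\pair{n,\_}}$, the resulting heap is exactly $\src{H'}=\src{H}\subs{\ell\mapsto v}{\ell\mapsto\_}$, as required.

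The crux is therefore the invariant: \emph{whenever \backtrup{\cdot} emits a call} \src{update(n,v)}, \emph{the address} \src{n} \emph{has already been added to} \src{L_{glob}} \emph{by an earlier} \src{register} \emph{call}. I would prove this by induction on the prefix of the target trace being backtranslated, with a case analysis on the action-translation rules \Cref{tr:backtr-call,tr:backtr-callback-loc,tr:backtr-retback,tr:backtr-ret-loc} --- the same rules invoked for \Cref{thm:reg-no-dup}. The key design fact is that \backtrup{\cdot} generates \src{update(n,v)} only for a location whose target address \src{n} is one the context already knows, and every such address is registered into \src{L_{glob}} (via \src{register}) at the point the location first becomes known to the context, i.e.\ when it appears as an argument of, or is reachable in the heap of, an incoming (\com{?}) action. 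Each per-action code block is sequenced by the step counter \src{\ell_i} through \src{incrementCounter}, so the registration emitted for an earlier action executes before the \src{update} emitted for a later one; well-formedness of traces (strict alternation of \com{?} and \com{!} actions) guarantees that the action introducing a location precedes any action updating it.

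The main obstacle I anticipate is pinning down this ordering across the interplay of incoming and outgoing actions: I must verify, rule by rule, that \backtrup{\cdot} never emits an \src{update} for an address it has not yet registered, including the subtler cases where a location becomes known only \emph{indirectly} (reachable through an already-shared location rather than passed directly as an argument). Once the registration-before-update ordering is established for all four rules and shown to be respected by the counter-driven sequencing of the generated code blocks, the remainder is the routine observation that a finite recursive walk over a list containing the sought key terminates with the intended in-place update, yielding the stated \src{H'}. As with \Cref{thm:reg-no-dup}, I expect the final write-up to be a short structural analysis rather than a lengthy calculation.
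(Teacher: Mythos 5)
Your proposal is correct and follows the same route as the paper, whose entire proof is the one-liner ``Simple analysis of \Cref{tr:backtr-call,tr:backtr-retback}'' --- i.e., inspection of exactly the two rules that emit \src{update}, since the addresses passed there come from the previously-seen heap and have already been \src{register}ed (in earlier \com{?}-actions for context-allocated locations, in \com{!}-actions via \fun{reachable}{\cdot} for component ones). Your registered-before-updated invariant, proved by induction on the trace prefix and combined with \Cref{thm:reg-no-dup} for uniqueness of the matching entry, is precisely that analysis spelled out in full.
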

\begin{proof}
	Simple analysis of \Cref{tr:backtr-call,tr:backtr-retback}.
\end{proof}

We use the meta-level function \fun{reachable}{\trg{H}, \trg{v}, \trg{ak}} that returns a set of pairs $\pair{\trg{n\mapsto v:\eta},\src{e}}$ such that all locations in are reachable from \trg{H} starting from any location in $\trg{ak}\cup\trg{v}$ and that are not already in \trg{ak} and such that \src{e} is a sequence of source-level instructions that evaluate to \src{\ell} such that $\src{\ell}\relatebeta\trg{\pair{n,\_}}$.
\begin{definition}[Reachable]\label{def:reachable}
	\begin{align*}
		\fun{reachable}{\trg{H}, \trg{v}, \trg{ak}} =&\ 
			\myset{ \pair{\trg{n\mapsto v:k},\src{e}} }{
				\begin{aligned}
					&
					\trg{n} \in \fun{reach}{ \trg{n_{st}}, \trg{k_{st}}, \trg{H} } 
					\\
					&
					\text{where } \trg{n_{st}}\in\trg{v}\cup\trg{ak}.\mtt{loc} 
					\\
					&
					\text{and } \trg{k_{st}}\in\trg{k_{root}}\cup\trg{ak}.\mtt{cap}
					\\
					&
					\text{and } \trg{n\mapsto v:k}\in\trg{H}
					\\
					&
					\text{and } \trg{H \triangleright !e \redtot !n \with{k}}
					\\
					&
					\text{and } \forall\src{H}\ldotp \src{H}\relatebeta\trg{H}
					\\
					&
					\src{H \triangleright \backtrup{\trg{e}} \redtos \ell}
					\\
					&
					\text{and } (\src{\ell},\trg{n,k})\in\beta
				\end{aligned}
			}
	\end{align*}
\end{definition}

Intuitively, \fun{reachable}{\cdot} finds out which new locations have been allocated by the compiled component and that are now reachable by the attacker (the first projection of the pair, \trg{n\mapsto v:\eta}).
Additionally, it tells how to reach those locations in the source so that we can \src{register(\cdot)} them for the source attacker (the backtranslated context) to access.

In this case we know by definition that \trg{e} can only contain one \trg{!} and several \trg{\projone{\cdot}} or \trg{\projtwo{\cdot}}.
The base case for values is as before.

\mytoprule{\backtrup{\cdot}:\trg{e}\to\src{e}}
\begin{align*}
	\backtrup{\trg{!e}} &= \src{!\backtrup{\trg{e}}}
	\\
	\backtrup{\trg{\projone{e}}} &= \src{\projone{\backtrup{\trg{e}}}}
	\\
	\backtrup{\trg{\projtwo{e}}} &= \src{\projtwo{\backtrup{\trg{e}}}}
\end{align*}
\botrule

The next function takes the following inputs: an action, its index, the previous function's heap, the previous attacker knowledge and the stack of functions called so far.
It returns a set of: code, the new attacker knowledge, its heap, the stack of functions called and the function where the code must be put.
In the returned parameters, the attacker knowledge, the heap and the stack of called functions serve as input to the next call.

\mytoprule{ \backtrup{\cdot}:\trg{\alpha}\times n\in\mb{N} \times \trg{H} \times \trg{\OB{n\times\eta}} \times \src{\OB{f}} \to \set{\src{s} \times \trg{\OB{n\times\eta}} \times \trg{H} \times \src{\OB{f}} \times \src{f}} }
\begin{align*}
	\tag{\backtrup{\cdot}-call}
	\label{tr:backtr-call}
	\backtrup{
		\begin{aligned}
			&
			\trg{\cl{f}{v~ H}},
			\\
			&
			n, \trg{H_{pre}}, \trg{ak}, \src{\OB{f}}	
		\end{aligned}
	}
	=&\
	\myset{
		\begin{aligned}
			&
			\src{
				\left(\begin{aligned}
					&
					\iftes{!\ell_i == n
					}{
						\\
						&\ \ 
						\src{incrementCounter()}
						\\
						&\ \ 
						\letnews{\src{x1}~}{~\src{v_1}}{\src{register(\pair{x1,n_1})}}
						\\
						&\ \ 
						\cdots
						\\
						&\ \ 
						\letnews{\src{xj}~}{~\src{v_j}}{\src{register(\pair{xj,n_j})}}
						\\
						&\ \ 
						\src{update(m_1, u_1) }
						\\
						&\ \ 
						\cdots
						\\
						&\ \ 
						\src{update(m_l, u_l) }
						\\
						&\ \ 
						\src{\calls{f}{~v}}
						\\
						&
					}{
					\skips
					}
				\end{aligned}\right)
			} 
			\\
			&
			, \trg{ak'} , \trg{H}, \src{f;\OB{f}}, \src{f'}
		\end{aligned}
	}{
		\begin{aligned}
			&
			\forall 
			\\
			&
			\src{v_1}=\backtrup{\trg{v_1}}
			\\
			&
			\cdots
			\\
			&
			\src{v_j}=\backtrup{\trg{v_j}}
			\\
			&
			\src{u_1}=\backtrup{\trg{u_1}}
			\\
			&
			\cdots
			\\
			&
			\src{u_l}=\backtrup{\trg{u_l}}
			\\
			&
			\src{v}=\backtrup{\trg{v}}
		\end{aligned}
	}
	\\
	\text{where }
	&
	\trg{H}\setminus\trg{H_{pre}} = \trg{H_{n}}
	\\
	& 
	\trg{H_n} = \trg{n_1\mapsto v_1:\eta_1, \cdots, n_j\mapsto v_j:\eta_j}
	\\
	\text{and }
	&
	\trg{H}\cap\trg{H_{pre}} = \trg{H_c}
	\\
	&
	\trg{H_c} = \trg{m_1\mapsto u_1:\eta_1', \cdots, m_l\mapsto u_l:\eta_l'}
	\\
	\text{and }
	&
	\trg{ak'} = \trg{ak} , \trg{\pair{n_1,\eta_1}, \cdots, \pair{n_j,\eta_j}}
	\\
	\text{and }
	&
	\src{\OB{f}} = \src{f'\OB{f'}}
	\\
	\\
	\tag{\backtrup{\cdot}-callback-loc}
	\label{tr:backtr-callback-loc}
	\backtrup{
		\begin{aligned}
			&
			\trg{\cb{f}{v~ H}},
			\\
			&
			n, \trg{H_{pre}}, \trg{ak}, \src{\OB{f}}
		\end{aligned}
	}
	=&\
	\set{
		\src{
			\left(\begin{aligned}
				&
				\iftes{!\ell_i == n
				}{
					\\
					&\ \ 
					\src{incrementCounter()}
					\\
					&\ \ 
					\letins{\src{l1}~}{~\src{e_1}}{\src{register(\pair{l1, n_1})}}
					\\
					&\ \ 
					\cdots
					\\
					&\ \ 
					\letins{\src{lj}~}{~\src{e_j}}{\src{register(\pair{lj, n_j})}}
					\\
					&
				}{
				\skips
				}
			\end{aligned}\right)}
		, \trg{ak'} , \trg{H}, \src{f;\OB{f}}, \src{f}
	}
	\\
	\text{if }
	&
	\fun{reachable}{\trg{H},\trg{v},\trg{ak}} = \pair{\trg{n_1\mapsto v_1:\eta_1},\src{e_1}},\cdots,\pair{\trg{n_j\mapsto v_j:\eta_j},\src{e_j}}
	\\
	\text{and }
	&
	\trg{ak'} = \trg{ak} , \trg{\pair{n_1,\eta_1}, \cdots, \pair{n_j,\eta_j}}
	\\
	\\
	\tag{\backtrup{\cdot}-retback}
	\label{tr:backtr-retback}
	\backtrup{
		\begin{aligned}
			&
			\trg{\rb{H}},
			\\
			&
			n, \trg{H_{pre}}, \trg{ak}, \src{f;\OB{f}}
		\end{aligned}}
	=&\
	\myset{
		\begin{aligned}
			&
			\src{
				\left(\begin{aligned}
					&
					\iftes{!\ell_i == n
					}{
						\\
						&\ \ 
						\bl{\small// \text{\textit{no \src{incrementCounter()} as explained}}}
						\\
						&\ \ 
						\letnews{\src{x1}~}{~\src{v_1}}{\src{register(\pair{x1,n_1})}}
						\\
						&\ \ 
						\cdots
						\\
						&\ \ 
						\letnews{\src{xj}~}{~\src{v_j}}{\src{register(\pair{xj,n_j})}}
						\\
						&\ \ 
						\src{update(m_1, u_1) }
						\\
						&\ \ 
						\cdots
						\\
						&\ \ 
						\src{update(m_l, u_l) }
						\\
						&
					}{
					\skips
					}
				\end{aligned}\right)
			} 
			\\
			&
			, \trg{ak'} , \trg{H}, \src{\OB{f}}, \src{f}
		\end{aligned}
	}{
		\begin{aligned}
			&
			\forall 
			\\
			&
			\src{v_1}=\backtrup{\trg{v_1}}
			\\
			&
			\cdots
			\\
			&
			\src{v_j}=\backtrup{\trg{v_j}}
			\\
			&
			\src{u_1}=\backtrup{\trg{u_1}}
			\\
			&
			\cdots
			\\
			&
			\src{u_l}=\backtrup{\trg{u_l}}
		\end{aligned}
	}
	\\
	\text{where }
	&
	\trg{H}\setminus\trg{H_{pre}} = \trg{H_{n}}
	\\
	& 
	\trg{H_n} = \trg{n_1\mapsto v_1:\eta_1, \cdots, n_j\mapsto v_j:\eta_j}
	\\
	\text{and }
	&
	\trg{H}\cap\trg{H_{pre}} = \trg{H_c}
	\\
	&
	\trg{H_c} = \trg{m_1\mapsto u_1:\eta_1', \cdots, m_l\mapsto u_l:\eta_l'}
	\\
	\text{and }
	&
	\trg{ak'} = \trg{ak} , \trg{\pair{n_1,\eta_1}, \cdots, \pair{n_j,\eta_j}}
	\\
	\\
	\tag{\backtrup{\cdot}-ret-loc}
	\label{tr:backtr-ret-loc}
	\backtrup{
		\begin{aligned}
			&
			\trg{\rt{H}},
			\\
			&
			n, \trg{H_{pre}}, \trg{ak}, \src{f;\OB{f}}		
		\end{aligned}
	}
	=&\
	\set{
		\src{
			\left(\begin{aligned}
				&
				\iftes{!\ell_i == n
				}{
					\\
					&\ \ 
					\src{incrementCounter()}
					\\
					&\ \ 
					\letins{\src{l1}~}{~\src{e_1}}{\src{register(\pair{l1, n_1})}}
					\\
					&\ \ 
					\cdots
					\\
					&\ \ 
					\letins{\src{lj}~}{~\src{e_j}}{\src{register(\pair{lj, n_j})}}
					\\
					&
				}{
				\skips
				}
			\end{aligned}\right)}
		, \trg{ak'} , \trg{H}, \src{\OB{f}}, \src{f'}
	}
	\\
	\text{if }
	&
	\fun{reachable}{\trg{H},\trg{0},\trg{ak}} = \pair{\trg{n_1\mapsto v_1:\eta_1},\src{e_1}},\cdots,\pair{\trg{n_j\mapsto v_j:\eta_j},\src{e_j}}
	\\
	\text{and }
	&
	\trg{ak'} = \trg{ak} , \trg{\pair{n_1,\eta_1}, \cdots, \pair{n_j,\eta_j}}
	\\
	\text{and }
	&
	\src{\OB{f}} = \src{f'\OB{f'}}
\end{align*}
\botrule

This is the back-translation of functions.
Each action is wrapped in an if statement checking that the action to be mimicked is that one (the same function may behave differently if called twice and we need to ensure this).
After the if, the counter checking for the action index \src{\ell_i} is incremented.
This is not done in case of a return immediately, but only just before the return itself, so the increment is added in the skeleton already.
(there could be a callback to the same function after the return and then we wouldn't return but execute the callback code instead)

When back-translating a ?-decorated, we need to set up the heap correctly before the call itself.
That means calculating the new locations that this action allocated (\trg{H_n}), allocating them and registering them in the \src{L_{glob}} list via the \src{register(\cdot)} function.
These locations are also added to the attacker knowledge \trg{ak'}.
Then we need to update the heap locations we already know of.
These locations are \trg{H_c} and as we know them already, we use the \src{update(\cdot)} function.

When back-translating a !-decorated action we need to calculate what part of the heap we can reach from there, and so we rely on the \fun{reachable}{\cdot} function to return a list of pairs of locations \trg{n} and expressions \src{e}.
We use \trg{n} to expand the attacker knowledge \trg{ak'} as these locations are now reachable.
We use \trg{e} to reach these locations in the source heap so that we can \src{register} them and ensure they are accessible through \src{L_{glob}}.

Finally, we use parameter \src{\OB{f}} to keep track of the call stack, so making a call to \src{f} pushes \src{f} on the stack (\src{f;\OB{f}}) and making a return pops a stack \src{f;\OB{f}} to \src{\OB{f}}.
That stack carries the information to instantiate the \src{f} in the return parameters, which is the location where the code needs to be allocated.

\mytoprule{ \backtrup{ \cdot} : \trg{\OB{\alpha}} \times n\in\mb{N} \times \trg{H} \times \trg{\OB{n\times\eta}} \times \src{\OB{f}} \to \set{\src{\OB{s,f}}} }
\begin{align*}
	\tag{\backtrup{\cdot}-listact-b}
	\label{tr:backtr-listact-b}
	\backtrup{\trge}
	=&\
	\srce
	\\
	\tag{\backtrup{\cdot}-listact-i}
	\label{tr:backtr-listact}
	\backtrup{\trg{\alpha \OB{\alpha}}, n, \trg{H_{pre}}, \trg{ak}, \src{\OB{f}} }
	=&\ 
	\myset{\src{s,f};\src{\OB{s,f}}}{
		\begin{aligned}
			&
			\src{s},\trg{ak'},\trg{H'},\src{\OB{f'}},\src{f}=%
			\backtrup{\trg{\alpha}, n, \trg{H_{pre}}, \trg{ak}, \src{\OB{f}} } 
			\\
			&\
			\src{\OB{s,f}} \in %
			\backtrup{\trg{\OB{\alpha}}, n+1, \trg{H'}, \trg{ak'}, \src{\OB{f'}} }		
		\end{aligned}
	}
\end{align*}
\botrule
This recursive call ensures the parameters are passed around correctly.
Note that each element in a set returned by the single-action back-translation has the same \trg{ak'}, \trg{H} and \src{f'}, the only elements that change are in the code \src{s} due to the backtranslation of values.
Thus the recursive call can pass those parameters taken from any element of the set.

\subsubsection{The Back-translation Algorithm \backtrup{\cdot}}
\mytoprule{ \backtrup{\cdot}: \trg{\OB{I}} \times \trg{\OB{\alpha}} \to \set{\src{A}} }
\begin{align*}
	\tag{\backtrup{\cdot}-main}
	\label{tr:backtr-main}
	\backtrup{ \trg{\OB{I}}, \trg{\OB{\alpha}} }
	=&\
	\myset{
		\src{A}
	}{ 
		\begin{aligned}
			&
			\src{A} = \src{A_{skel}}\Join\src{\OB{s,f}} 
			\\
			&
			\text{for all } \src{\OB{s,f}} \in \set{\src{\OB{s,f}}} 
			\\
			&
			\text{where } \set{\src{\OB{s,f}}} = \backtrup{ \trg{\OB{\alpha}}, 1, \trg{H_0}, \trge, \src{main} }
			\\
			&
			\trg{H_0} = \trg{0\mapsto0:k_{root}}
			\\
			&
			\src{A_{skel}} = \backtrup{ \trg{\OB{I}}}
		\end{aligned}
	}
\end{align*}
\botrule
This is the real back-translation algorithm: it calls the skeleton and joins it with each element of the set returned by the trace back-translation.

\mytoprule{ \Join : \src{A} \times \src{\OB{s,f}} \to \src{A}}
\begin{align*}
	\tag{\backtrup{\cdot}-join}
	\label{tr:backtr-join}
	\src{A} \Join \srce
	=&\ 
	\src{A}
	\\
	\src{H ; F_1;\cdots;F;\cdots;F_n} \Join \src{\OB{s,f};s,f }
	=&\
	\src{H ; F_1;\cdots;F';\cdots;F_n} \Join \src{\OB{s,f}}
	\\
	\text{where }
	&
	\src{F} = \src{f(x)\mapsto s';\ret}
	\\
	&
	\src{F'} = \src{f(x)\mapsto s;s';\ret}
\end{align*}
\botrule
When joining we add from the last element of the list so that the functions we create have the concatenation of if statements (those guarded by the counter on \src{\ell_i}) that are sorted (guards with a test for \src{\ell_i = 4} are before those with a test \src{\ell_i = 5}).

\subsubsection{Correctness of the Back-translation}\label{sec:corr-backtrans}
\begin{theorem}[\backtrup{\cdot} is correct]\label{thm:backtr-corr}
	\begin{align*}
		&
		\forall
		\\
		\text{if }
		&
		\trg{\SInitt{A\hole{\compup{C}}}} \Xtot{\OB{\alpha}} \trg{\Omega}
		\\
		&\ 
		\trg{\Omega} \Xtot{\epsilon}\trg{\Omega'}
		\\
		&
		\trg{\OB{I}} = \fun{names}{\trg{A}}
		\\
		&
		\trg{\OB{\alpha}}\equiv\trg{\OB{\alpha'}\cdot\alpha?}
		\\
		&
		\src{\ell_i;\ell_{glob}}\notin\beta
		\\
		\text{ then }
		&
		\exists \src{A}\in \backtrup{\trg{I},\trg{\OB{\alpha}}}
		\\
		\text{ such that }
		&\
		\src{\SInits{A\hole{C}}}\Xtos{\OB{\alpha}} \src{\Omega}
		\\
		\text{ and }
		&
		\src{\OB{\alpha}}\relatebeta\trg{\OB{\alpha}}
		\\
		&
		\src{\Omega}\relatebeta\trg{\Omega}
		\\
		&
		\src{\Omega}.\src{H}.\src{\ell_i}=\card{\trg{\OB{\alpha}}}+1
	\end{align*}
\end{theorem}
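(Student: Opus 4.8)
The plan is to prove \Cref{thm:backtr-corr} by induction on the length of the target trace $\trg{\OB{\alpha}}$, following the recursive structure of the per-action backtranslation (\Cref{tr:backtr-listact}) and the skeleton produced by \Cref{tr:backtr-skel,tr:backtr-main}. Since the single-action clauses thread the attacker knowledge $\trg{ak}$, the accumulated heap, the call stack $\src{\OB{f}}$, and the action index through the recursion, the natural move is to strengthen the statement into an invariant maintained at every context switch. Concretely, I would carry along: (i) a growing partial bijection $\beta_i \supseteq \beta_0$ such that the reached source state is related to the target state by $\relate_{\beta_i}$, so that \Cref{thm:back-sim} applies to the component's own steps; (ii) the fact that $\trg{ak}$ exactly enumerates the locations the context can reach in the target, mirrored in the source by the list $\src{L_{glob}}$ reachable from $\src{\ell_{glob}}$; and (iii) the bookkeeping equality that after mimicking $i$ actions the source counter satisfies $\src{!\ell_i} = i+1$, which is precisely the final conclusion $\src{\Omega}.\src{H}.\src{\ell_i} = \card{\trg{\OB{\alpha}}}+1$.

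The inductive step splits on the decoration of the last action. For a $?$-decorated action (\Cref{tr:backtr-call,tr:backtr-retback}) control is about to enter the compiled component, so the backtranslated block must first reconstruct the heap: it allocates the freshly appearing locations $\trg{H}\setminus\trg{H_{pre}}$ and \src{register}s them, then \src{update}s the already-known locations $\trg{H}\cap\trg{H_{pre}}$. Here \Cref{thm:reg-no-dup} guarantees that registration never duplicates an address and \Cref{thm:update-no-stuck} guarantees that every update finds its target and yields exactly the intended heap, so after these helper calls the source heap is related to $\trg{H}$ under the extended bijection. The guard testing $\src{!\ell_i} == \src{n}$ ensures that precisely the block for index $\src{n}$ fires, and the subsequent source call or return reproduces the action. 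For a $!$-decorated action (\Cref{tr:backtr-callback-loc,tr:backtr-ret-loc}) the action was \emph{emitted by the component}; the backward simulation \Cref{thm:back-sim} supplies the matching source emission, and the backtranslated block merely extends attacker knowledge by \src{register}ing the locations returned by $\fun{reachable}{\trg{H},\trg{v},\trg{ak}}$ (\Cref{def:reachable}), re-establishing invariant (ii).

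The delicate point, and the one I expect to be the main obstacle, is the \emph{nondeterminism} of value backtranslation: because $\relatebeta$ is not injective (e.g.\ $\trg{0}$ relates to both $\trues$ and $\src{0}$), a single trace backtranslates to a \emph{set} of contexts, and I must exhibit one member of this set that actually reproduces the trace rather than getting stuck inside the compiled component (for instance at a compiled conditional that branches differently on $\trues$ versus $\src{0}$). This is exactly what \Cref{thm:action-det} provides: for the compiled component's run following any incoming action, at least one choice of backtranslated argument and substitution lets the component take the same steps and emit the same next outgoing action with a related resulting state. The proof therefore interleaves, at each incoming action, an application of \Cref{thm:action-det} to fix the correct value choices with the heap-reconstruction argument above; and since \Cref{tr:backtr-listact} passes the common $\trg{ak}$, heap, and stack to the recursive call while only the value-dependent code varies across the set, the recursive instance of the invariant applies uniformly.

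Finally, I would assemble the pieces through \Cref{tr:backtr-main,tr:backtr-join}: the skeleton allocates $\src{\ell_i}$ and $\src{\ell_{glob}}$ and installs the per-index guarded blocks, in sorted order, inside the appropriate function bodies, so that executing $\src{\SInits{A\hole{C}}}$ steps through the blocks in the order dictated by the counter. Discharging the induction then yields $\src{\SInits{A\hole{C}}}\Xtos{\OB{\alpha}}\src{\Omega}$ with $\src{\OB{\alpha}}\relatebeta\trg{\OB{\alpha}}$ and $\src{\Omega}\relatebeta\trg{\Omega}$, and the counter invariant (iii) gives the final equality. A small amount of care is needed for the placement of $\src{incrementCounter()}$, which is performed eagerly on calls but deferred to just before a return (as noted after \Cref{tr:backtr-ret-loc}); this must be checked to preserve invariant (iii) across the returnback case, where the block itself emits no increment.
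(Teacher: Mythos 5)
Your proposal is correct and follows essentially the same route as the paper's proof: induction on the trace with a case split on the $?$/$!$ decoration of actions, heap reconstruction via \src{register}/\src{update} discharged by \Cref{thm:reg-no-dup} and \Cref{thm:update-no-stuck}, resolution of the non-injectivity of $\relatebeta$ via \Cref{thm:action-det}, attacker-knowledge tracking matching \Cref{thm:atk-has-all-caps}, and the same counter invariant including the deferred \src{incrementCounter()} in the returnback case. The only cosmetic difference is that you cite \Cref{thm:back-sim} for the $!$-actions where the paper uses \Cref{thm:action-det} uniformly (which packages the backward simulation together with the choice among backtranslated contexts), so this is a presentational rather than substantive divergence.
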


The back-translation is correct if it takes a target attacker that will reduce to a state together with a compiled component and it produces a set of source attackers such that one of them, that together with the source component will reduce to a related state performing related actions.
Also it needs to ensure the step is incremented correctly.

\subsubsection{Remark on the Backtranslation}\label{sec:remark-bt}
Some readers may wonder whether the hassle of setting up a source-level representation of the whole target heap is necessary.
Indeed for those locations that are allocated by the context, this is not.
If we changed the source semantics to have an oracle that predicts what a \src{\letnew{x}{e}{s}} statement will return as the new location, we could simplify this.
In fact, currently the backtranslation stores target locations in the list \src{L_{glob}} and looks them up based on their target name, as it does not know what source name will be given to them.
The oracle would obviate this problem, so we could hard code the name of these locations, knowing exactly the identifier that will be returned by the allocator.
For the functions to be correct in terms of syntax, we would need to pre-emptively allocate all the locations with that identifier so that their names are in scope and they can be referred to.

However, the problem still persists for locations created by the component, as their names cannot be hard coded, as they are not in scope.
Thus we would still require \src{reach} to reach these locations, \src{register} to add them to the list and \src{update} to update their values in case the attacker does so.

Thus we simplify the scenario and stick to a more standard, oracle-less semantics and to a generalised approach to location management in the backtranslation.

 \newpage
\section{The Source Language: \LA}\label{sec:src}
This is an imperative, concurrent while language with monitors.

\begin{align*}
	\mi{Whole\ Programs}~\src{P} \bnfdef&\ \src{\Delta ; H ; \OB{F} ; \OB{I}}
	\\
	\mi{Components}~\src{C} \bnfdef&\ \src{\Delta ; \OB{F} ; \OB{I}}
	\\
	\mi{Contexts}~\src{A} \bnfdef&\ \src{H ; \OB{F}\hole{\cdot}}
	\\
	\mi{Interfaces}~\src{I} \bnfdef&\ \src{f}
	\\
	\mi{Functions}~\src{F} \bnfdef&\ \src{f(x:\tau)\mapsto s;\ret}
	\\
	\mi{Operations}~\src{\op} \bnfdef&\ \src{+} \mid \src{-}
	\\
	\mi{Comparison}~\src{\bop} \bnfdef&\ \src{==} \mid \src{<} \mid \src{>}
	\\
	\mi{Values}~\src{v} \bnfdef&\ %
	\src{b}\in\{\src{\trues},\src{\falses}\} \mid \src{n}\in\mb{N} \mid \src{\pair{v,v}} \mid \src{\ell}
	\\
	\mi{Expressions}~\src{e} \bnfdef&\ \src{x} \mid \src{v} \mid \src{e \op e} \mid \src{e \bop e} \mid \src{!e} \mid \src{\pair{e,e}} \mid \src{\projone{e}} \mid \src{\projtwo{e}}
	\\
	\mi{Statements}~\src{s} \bnfdef&\ \skips \mid \src{s;s} \mid \src{\letin{x:\tau}{e}{s}}  \mid \src{\ifte{e}{s}{s}} 
	\\
	\mid&\
	\src{x := e} \mid \src{\letnewty{x}{e}{\tau}{s}}\mid \src{\call{f}~e}
	\\
	\mid&\ \hl{ \src{\fork{s}} }
	\mid \hl{ \src{\myendorse{x}{e}{\varphi}{s}} }
	\\
	\mi{Types}~\hl{\src{\tau}} \bnfdef&\ %
	\src{\Bools} \mid \src{\Nats} \mid \src{\tau\times\tau} \mid \src{\Refs{\tau}} \mid \UNS
	\\
	\mi{Superficial\ Types}~\hl{\src{\varphi}} \bnfdef&\ %
	\src{\Bools} \mid \src{\Nats} \mid \src{\UNS\times\UNS} \mid \src{\Refs{\UNS}}
	\\
	\mi{Eval.\ Ctxs.}~\src{E} \bnfdef&\ \src{\hole{\cdot}} \mid \src{e \op E} \mid \src{E \op n} \mid \src{e \bop E} \mid \src{E \bop n} 
	\\
	\mid&\ \src{!E} \mid \src{\pair{e,E}} \mid \src{\pair{E,v}} \mid \src{\projone{E}} \mid \src{\projtwo{E}}
	\\
	\mi{Heaps}~\src{H} \bnfdef&\ \srce \mid \hl{\src{H ; \ell\mapsto v:\tau}}
	\\
	\mi{Monitors}~\hl{\src{M}} \bnfdef&\ \src{(\set{\sigma},\monred,\sigma_0,\Delta,\sigma_c)}
	\\
	\mi{Mon.\ States}~\src{\sigma} \in&\ \src{\mc{S}}
	\\
	\mi{Mon.\ Reds.}~\hl{\src{\monred}} \bnfdef&\ \srce \mid \src{\monred;(s,s)} %
	\\
	\mi{Environments}~\hl{\src{\Gamma}},\src{\Delta} \bnfdef&\ \srce \mid \src{\Gamma; (x:\tau)} %
	\\
	\mi{Store\ Env.}~\hl{\src{\Delta}} \bnfdef&\ \srce \mid \src{\Delta; (\ell:\tau)}
	\\
	\mi{Substitutions}~\src{\rho} \bnfdef&\ \srce \mid \src{\rho}\subs{v}{x}
	\\
	\mi{Processes}~ \hl{\src{\pi}} \bnfdef&\ \src{\proc{s}{\OB{f}}}
	\\
	\mi{Soups}~\hl{\src{\Pi}} \bnfdef&\ \srce \mid \src{\Pi \parallel \pi} 
	\\
	\mi{Prog.\ States}~\src{\Omega}\bnfdef&\ \src{C, H\triangleright \Pi} 
	\\
	\mi{Labels}~\src{\lambda} \bnfdef&\ \src{\epsilon} \mid \src{\alpha}%
	\\
	\mi{Actions}~\hl{\src{\alpha}} \bnfdef&\ \src{\cl{f}{v}} \mid \src{\cb{f}{v}} \mid \src{\rt{}} \mid \src{\rb{}}
	\\
	\mi{Traces}~\src{\OB{\alpha}} \bnfdef&\ \srce \mid \src{\OB{\alpha}\cdot\alpha}
\end{align*}
We \hl{highlight} elements that have changed from \LU.

\subsection{Static Semantics of \LA}\label{sec:src-types}
The static semantics follows these typing judgements.
\begin{align*}
	&\vdash \src{C}:\UNS
	&&\text{Component \src{C} is well-typed.}
	\\
	&\src{C}\vdash\src{F} : \src{\tau} 
	&&\text{Function \src{F} takes arguments of type \src{\tau} under component \src{C}.}
	\\
	&\src{\Delta,\Gamma}\vdash\diamond 
	&&\text{Environments \src{\Gamma} and \src{\Delta} are well-formed.}
	\\
	&\src{\Delta}\vdash\src{ok} 
	&&\text{Environment \src{\Delta} is safe.}
	\\
	&\src{\tau}\vdash\circ 
	&& \text{Type \src{\tau} is insecure.}
	\\
	&\src{\Delta,\Gamma}\vdash \src{e} : \src{\tau}
	&&\text{Expression \src{e} has type \src{\tau} in\src{\Gamma}.}
	\\
	&\src{C,\Delta,\Gamma}\vdash \src{s} %
	&&\text{Statement \src{s} is well-typed in \src{C} and \src{\Gamma}.}
	\\
	&\src{C,\Delta,\Gamma}\vdash \src{\pi} %
	&&\text{Single process \src{\pi} is well-typed in \src{C} and \src{\Gamma}.}
	\\
	&\src{C,\Delta,\Gamma}\vdash \src{\Pi} 
	&&\text{Soup \src{\Pi} is well-typed in \src{C} and \src{\Gamma}.}
	\\
	&\vdash\src{H}:\src{\Delta} 
	&& \text{Heap \src{H} respects the typing of \src{\Delta}.}
	\\
	&\vdash\src{M}
	&& \text{Monitor \src{M} is valid.}
\end{align*}

\subsubsection{Auxiliary Functions}
We rely on these standard auxiliary functions: \fun{names}{\cdot} extracts the defined names (e.g., function and interface names).
\fun{fv}{\cdot} returns free variables while \fun{fn}{\cdot} returns free names (i.e., a call to a defined function).
\fun{dom}{\cdot} returns the domain of a particular element (e.g., all the allocated locations in a heap).
We denote access to the parts of \src{C} and \src{P} via functions $\mtt{.funs}$, $\mtt{.intfs}$ and $\mtt{.mon}$.
We denote access to parts of \src{M} with a dot notation, so $\src{M}.\src{\Delta}$ means \src{\Delta} where $\src{M}=\src{(\set{\sigma},\monred,\sigma_0,\Delta,\sigma_c)}$.

\subsubsection{Typing Rules}
\mytoprule{\vdash \src{C}}
\begin{center}
	\typerule{T\LA-component}{
		\src{C}\equiv\src{\Delta ; \OB{F} ; \OB{I}} 
		&
		\src{C}\vdash\src{\OB{F}}:\src{\UNS}
		&
		\fun{names}{\src{\OB{F}}}\cap\fun{names}{\src{\OB{I}}}=\srce
		&
		\src{\Delta}\vdash\src{ok}
	}{
		\vdash \src{C}:\UNS
	}{ts-co}
\end{center}
\botrule
\mytoprule{\src{C}\vdash\src{F} : \src{\UNS}}
\begin{center}
	\typerule{T\LA-function}{
		\src{F}\equiv \src{f(x:\UNS)\mapsto s;\ret}
		&
		\src{C},\src{\Delta;x:\UNS}\vdash \src{s} %
		\\
		\src{C}\equiv\src{\Delta ; \OB{F} ; \OB{I}} 
		&
		\forall\src{f}\in\fun{fn}{s}, \src{f}\in\dom{\src{C}.\mtt{funs}} \vee \src{f}\in\dom{\src{C}.\mtt{intfs}}
	}{
		\src{C}\vdash\src{F}:\src{\UNS} 
	}{ts-fu}
\end{center}
\botrule
\mytoprule{\src{\Delta,\Gamma}\vdash\diamond}
\begin{center}
	\typerule{T\LA-env-e}{
	}{
		\srce;\srce\vdash\diamond
	}{ts-e-e}
	\typerule{T\LA-env-var}{
		\src{\Delta,\Gamma}\vdash\diamond
		&
		\src{x}\notin \dom{\src{\Gamma}}
	}{
		\src{\Delta,\Gamma;(x:\tau)}\vdash\diamond
	}{ts-e-v}
	\typerule{T\LA-env-loc}{
		\src{\Delta,\Gamma}\vdash\diamond
		&
		\src{l}\notin \dom{\src{\Delta}}
	}{
		\src{\Delta;(l:\tau);\Gamma}\vdash\diamond
	}{ts-e-l}
\end{center}
\botrule
\mytoprule{\src{\Delta,\Gamma}\vdash\src{ok}}
\begin{center}
	\typerule{T\LA-safe-e}{
	}{
		\srce\vdash\src{ok}
	}{ts-s-e}
	\typerule{T\LA-safe-loc}{
		\src{\Delta}\vdash\src{ok}
		&
		\src{l}\notin \dom{\src{\Gamma}}
		&
		\UNS\notin\src{\tau}
	}{
		\src{\Gamma;(l:\tau)}\vdash\src{ok}
	}{ts-s-l}
\end{center}
\botrule
\mytoprule{\src{\Delta,\Gamma}\vdash\UNS}
\begin{center}
	\typerule{T\LA-env-e}{
	}{
		\srce\vdash\UNS
	}{ts-e-e}
	\typerule{T\LA-env-var}{
		\src{\Delta,\Gamma}\vdash\UNS
		&
		\src{x}\notin \dom{\src{\Gamma}}
	}{
		\src{\Gamma,(x:\UNS)}\vdash\UNS
	}{ts-e-v}
	\typerule{T\LA-env-loc}{
		\src{\Delta,\Gamma}\vdash\UNS
		&
		\src{l}\notin \dom{\src{\Gamma}}
	}{
		\src{\Gamma,(l:\UNS)}\vdash\UNS
	}{ts-e-l}
\end{center}
\botrule
\mytoprule{\src{\tau}\vdash\circ}
\begin{center}
	\typerule{T\LA-bool-pub}{
	}{
		\src{\Bools}\vdash\circ
	}{ts-ts-b}
	\typerule{T\LA-nat-pub}{
	}{
		\src{\Nats}\vdash\circ
	}{ts-ts-b}
	\typerule{T\LA-pair-pub}{
		\src{\tau}\vdash\circ
		&
		\src{\tau'}\vdash\circ
	}{
		\src{\tau\times\tau'}\vdash\circ
	}{ts-ts-p}
	\typerule{T\LA-un-pub}{
	}{
		\UNS\vdash\circ
	}{ts-tp-f}
	\typerule{T\LA-references-pub}{
	}{
		\src{\Refs{\UNS}}\vdash\circ
	}{ts-tp-f}
\end{center}
\botrule
\mytoprule{\src{\Delta,\Gamma}\vdash \src{e} : \src{\tau}}
\begin{center}
	\typerule{T\LA-true}{
		\src{\Delta,\Gamma}\vdash\diamond
	}{
		\src{\Delta,\Gamma}\vdash\trues:\Bools
	}{ts-true}
	\typerule{T\LA-false}{
		\src{\Delta,\Gamma}\vdash\diamond
	}{
		\src{\Delta,\Gamma}\vdash\falses:\Bools
	}{ts-false}
	\typerule{T\LA-nat}{
		\src{\Delta,\Gamma}\vdash\diamond
	}{
		\src{\Delta,\Gamma}\vdash\src{n}:\Nats
	}{ts-nat}
	\typerule{T\LA-var}{
		\src{x:\tau}\in\src{\Gamma}
	}{
		\src{\Delta,\Gamma}\vdash\src{x}:\src{\tau}
	}{ts-var}
	\typerule{T\LA-loc}{
		\src{l:\tau}\in\src{\Delta}
	}{
		\src{\Delta,\Gamma}\vdash\src{l}:\src{\Refs{\tau}}
	}{ts-loc}
	\typerule{T\LA-pair}{
		\src{\Delta,\Gamma}\vdash \src{e_1} : \src{\tau}
		\\
		\src{\Delta,\Gamma}\vdash \src{e_2} : \src{\tau'} 
	}{
		\src{\Delta,\Gamma}\vdash \src{\pair{e_1,e_2}} : \src{\tau\times\tau'}
	}{ts-pair}
	\typerule{T\LA-proj-1}{
		\src{\Delta,\Gamma}\vdash \src{e} : \src{\tau\times\tau'}
	}{
		\src{\Delta,\Gamma}\vdash \src{\projone{e}} : \src{\tau}
	}{ts-p1}
	\typerule{T\LA-proj-2}{
		\src{\Delta,\Gamma}\vdash \src{e} : \src{\tau\times\tau'}
	}{
		\src{\Delta,\Gamma}\vdash \src{\projtwo{e}} : \src{\tau'}
	}{ts-p2}
	\typerule{T\LA-dereference}{
		\src{\Delta,\Gamma}\vdash \src{e} : \src{\Refs{\tau}} 
	}{
		\src{\Delta,\Gamma}\vdash \src{!e} : \src{\tau}  
	}{ts-deref}
	\typerule{T\LA-op}{
		\src{\Delta,\Gamma}\vdash \src{e} : \Nats
		&
		\src{\Delta,\Gamma}\vdash \src{e'} : \Nats
	}{
		\src{\Delta,\Gamma}\vdash \src{e \op e'} : \src{\Nats}  
	}{ts-op}
	\typerule{T\LA-cmp}{
		\src{\Delta,\Gamma}\vdash \src{e} : \Nats
		&
		\src{\Delta,\Gamma}\vdash \src{e'} : \Nats
	}{
		\src{\Delta,\Gamma}\vdash \src{e \bop e'} : \Bools
	}{ts-bop}
	\typerule{T\LA-coercion}{
		\src{C,\Delta,\Gamma}\vdash \src{e} : \src{\tau} 
		&
		\src{\tau} \vdash \circ
	}{
		\src{C,\Delta,\Gamma}\vdash \src{e} : \UNS 
	}{ts-coe}
\end{center}
\botrule
\mytoprule{\src{C,\Delta,\Gamma}\vdash \src{s} }
\begin{center}
	\typerule{T\LA-skip}{
	}{
		\src{C,\Delta,\Gamma}\vdash \skips
	}{ts-skip}	
	\typerule{T\LA-function-call}{
		((\src{f}\in\dom{\src{C}.\mtt{funs}})\vee(\src{f}\in\dom{\src{C}.\mtt{intfs}}))
		\\
		\src{\Delta,\Gamma}\vdash \src{e} : \src{\UNS}
	}{
		\src{\Delta,\Gamma}\vdash \src{\call{f}~e} %
	}{ts-fun}
	\typerule{T\LA-sequence}{
		\src{C,\Delta,\Gamma}\vdash \src{s_u} %
		\\
		\src{C,\Delta,\Gamma}\vdash \src{s} %
	}{
		\src{C,\Delta,\Gamma}\vdash \src{s_u;s} %
	}{ts-seq}
	\typerule{T\LA-letin}{
		\src{\Delta,\Gamma}\vdash \src{e} : \src{\tau} 
		\\
		\src{C,\Gamma;x:\tau}\vdash \src{s} %
	}{
		\src{C,\Delta,\Gamma}\vdash \src{\letin{x:\tau}{e}{s}} %
	}{ts-vardef}
	\typerule{T\LA-assign}{
		\src{\Delta,\Gamma}\vdash \src{x} : \src{\Refs{\tau}}
		\\
		\src{\Delta,\Gamma}\vdash \src{e'} : \src{\tau} 
	}{
		\src{C,\Delta,\Gamma}\vdash \src{x := e'} %
	}{ts-ass}
	\typerule{T\LA-new}{
		\src{\Delta,\Gamma}\vdash \src{e} : \src{\tau} 
		\\
		\src{C,\Gamma;x:\Refs{\tau}}\vdash \src{s} %
	}{
		\src{C,\Delta,\Gamma}\vdash \src{\letnewty{x}{e}{\tau}{s}} %
	}{ts-new}
	\typerule{T\LA-if}{
		\src{\Delta,\Gamma}\vdash \src{e} : \src{\Bool}
		\\
		\src{C,\Delta,\Gamma}\vdash \src{s_t} %
		&
		\src{C,\Delta,\Gamma}\vdash \src{s_e} %
	}{
		\src{C,\Delta,\Gamma}\vdash \src{\ifte{e}{s_t}{s_e}} %
	}{ts-if}
	\typerule{T\LA-fork}{
		\src{C,\Delta,\Gamma}\vdash \src{s} %
	}{
		\src{C,\Delta,\Gamma}\vdash \src{\fork{s}} %
	}{ts-fork}
	\typerule{T\LA-endorse}{
		\src{\Delta,\Gamma}\vdash\src{e}:\UNS
		&
		\src{C,\Delta,\Gamma;(x:\varphi)} \vdash\src{s} %
	}{
		\src{C,\Delta,\Gamma}\vdash \src{\myendorse{x}{e}{\varphi}{s}} %
	}{ts-end}
\end{center}
\botrule

\mytoprule{\src{C,\Delta,\Gamma}\vdash \src{\pi} }
\begin{center}
	\typerule{T\LA-process}{
		\src{C,\Delta,\Gamma}\vdash \src{s} %
	}{
		\src{C,\Delta,\Gamma}\vdash \src{\proc{s}{\OB{f}}} %
	}{ts-proc}
\end{center}
\botrule
\mytoprule{\src{C,\Delta,\Gamma}\vdash \src{\Pi} }
\begin{center}
	\typerule{T\LA-soup}{
		\src{C,\Delta,\Gamma}\vdash \src{\pi} %
		&
		\src{C,\Delta,\Gamma}\vdash \src{\Pi}
	}{
		\src{C,\Delta,\Gamma}\vdash \src{\pi\parallel\Pi}
	}{ts-soup}
\end{center}
\botrule
\mytoprule{\vdash\src{H}:\src{\Delta}}
	\begin{center}
	\typerule{\LA-Heap-ok-i}{
		\src{\ell:\mapsto v:\tau}\in\src{H}
		&
		\src{\ell:\tau}\in\src{\Delta}
		\\
		\vdash \src{H}:\src{\Delta}
		&
		\src{\Delta}\srce\vdash\src{v}:\src{\tau}
	}{
		\vdash\src{H}:\src{\Delta;\ell:\tau}
	}{heap-ok}
	\typerule{\LA-Heap-ok-b}{
	}{
		\vdash\src{H}:\srce
	}{heap-ok-b}
	\end{center}
\botrule
\mytoprule{\vdash\src{M}}
	\begin{center}
	\typerule{\LA-Monitor}{
		\src{M}\equiv\src{(\set{s},\monred,s_0,\Delta,s_c)}
		&
		\forall \src{s} \exists {\src{s'}}. \src{(s,s')}\in\src{\monred}
	}{
		\vdash\src{M}
	}{mon-ok-s}
	\end{center}
\botrule

\paragraph{Notes}

Monitor typing just ensures that the monitor is coherent and that it can't get stuck for no good reason.

\subsubsection{\UNS Typing}\label{sec:un-ty}
Attackers cannot have \src{\newty{t}{\tau}} terms where \src{\tau} is different from \UNS. %

\mytoprule{\src{\Delta,\Gamma}\vdashun \src{e} : \UNS}
\begin{center}
	\typerule{TU\LA-base}{
		\src{A}= \src{H ; \OB{F}\hole{\cdot}}
		&
		\src{\Delta}\vdashun\src{\OB{F}}	
		\\
		\dom{\src{H}}\cap \dom{\src{\Delta}} = \srce
		&
		\dom{\src{\Delta}}\cap(\fun{fv}{\src{\OB{F}}} \cup \fun{fv}{\src{H}})=\srce
	}{
		\src{\Delta}\vdashun\src{A}
	}{tsu-base}

	\typerule{TU\LA-true}{
		\src{\Delta,\Gamma}\vdash\diamond
	}{
		\src{\Delta,\Gamma}\vdashun\trues:\UNS
	}{tsu-true}
	\typerule{TU\LA-false}{
		\src{\Delta,\Gamma}\vdash\diamond
	}{
		\src{\Delta,\Gamma}\vdashun\falses:\UNS
	}{tsu-false}
	\typerule{TU\LA-nat}{
		\src{\Delta,\Gamma}\vdash\diamond
	}{
		\src{\Delta,\Gamma}\vdashun\src{n}:\UNS
	}{tsu-nat}
	\typerule{TU\LA-var}{
		\src{x:\tau}\in\src{\Gamma}
	}{
		\src{\Delta,\Gamma}\vdashun\src{x}:\UNS
	}{tsu-var}
	\typerule{TU\LA-loc}{
		\src{l:\UNS}\notin\src{\Delta}
	}{
		\src{\Delta,\Gamma}\vdashun\src{l}:\UNS
	}{tsu-loc}
	\typerule{TU\LA-pair}{
		\src{\Delta,\Gamma}\vdashun \src{e_1} : \UNS
		\\
		\src{\Delta,\Gamma}\vdashun \src{e_2} : \UNS
	}{
		\src{\Delta,\Gamma}\vdashun \src{\pair{e_1,e_2}} : \UNS
	}{tsu-pair}
	\typerule{TU\LA-proj-1}{
		\src{\Delta,\Gamma}\vdashun \src{e} : \UNS
	}{
		\src{\Delta,\Gamma}\vdashun \src{\projone{e}} : \UNS
	}{tsu-p1}
	\typerule{TU\LA-proj-2}{
		\src{\Delta,\Gamma}\vdashun \src{e} : \UNS
	}{
		\src{\Delta,\Gamma}\vdashun \src{\projtwo{e}} : \UNS
	}{tsu-p2}
	\typerule{TU\LA-dereference}{
		\src{\Delta,\Gamma}\vdashun \src{e} : \UNS
	}{
		\src{\Delta,\Gamma}\vdashun \src{!e} : \UNS
	}{tsu-deref}
	\typerule{TU\LA-op}{
		\src{\Delta,\Gamma}\vdashun \src{e} :\UNS
		&
		\src{\Delta,\Gamma}\vdashun \src{e'} : \UNS
	}{
		\src{\Delta,\Gamma}\vdashun \src{e \op e'} : \UNS
	}{tsu-op}
	\typerule{TU\LA-cmp}{
		\src{\Delta,\Gamma}\vdashun \src{e} : \UNS
		&
		\src{\Delta,\Gamma}\vdashun \src{e'} : \UNS
	}{
		\src{\Delta,\Gamma}\vdashun \src{e \bop e'} : \UNS
	}{tsU-bop}
\end{center}
\botrule
\mytoprule{\src{C,\Delta,\Gamma}\vdashun \src{s} }
\begin{center}
	\typerule{TU\LA-skip}{
	}{
		\src{C,\Delta,\Gamma}\vdashun \skips
	}{tsu-skip}	
	\typerule{TU\LA-function-call}{
		((\src{f}\in\dom{\src{C}.\mtt{funs}})\vee(\src{f}\in\dom{\src{C}.\mtt{intfs}}))
		\\
		\src{\Delta,\Gamma}\vdashun \src{e} : \src{\UNS}
	}{
		\src{\Delta,\Gamma}\vdashun \src{\call{f}~e} %
	}{tsu-fun}
	\typerule{TU\LA-sequence}{
		\src{C,\Delta,\Gamma}\vdashun \src{s_u} %
		\\
		\src{C,\Delta,\Gamma}\vdashun \src{s} %
	}{
		\src{C,\Delta,\Gamma}\vdashun \src{s_u;s} %
	}{tsu-seq}
	\typerule{TU\LA-letin}{
		\src{\Delta,\Gamma}\vdashun \src{e} : \UNS
		\\
		\src{C,\Gamma;x:\UNS}\vdashun \src{s} %
	}{
		\src{C,\Delta,\Gamma}\vdashun \src{\letin{x:\UNS}{e}{s}} %
	}{tsu-vardef}
	\typerule{TU\LA-assign}{
		\src{\Delta,\Gamma}\vdashun \src{x} : \UNS
		\\
		\src{\Delta,\Gamma}\vdashun \src{e'} : \UNS
	}{
		\src{C,\Delta,\Gamma}\vdashun \src{x := e'} %
	}{tsu-ass}
	\typerule{TU\LA-new}{
		\src{\Delta,\Gamma}\vdashun \src{e} : \UNS
		\\
		\src{C,\Gamma;x:\UNS}\vdashun \src{s} %
	}{
		\src{C,\Delta,\Gamma}\vdashun \src{\letnewty{x}{e}{\UNS}{s}} %
	}{tsu-new}
	\typerule{TU\LA-if}{
		\src{\Delta,\Gamma}\vdashun \src{e} : \src{\Bool}
		\\
		\src{C,\Delta,\Gamma}\vdashun \src{s_t} %
		&
		\src{C,\Delta,\Gamma}\vdashun \src{s_e} %
	}{
		\src{C,\Delta,\Gamma}\vdashun \src{\ifte{e}{s_t}{s_e}} %
	}{tsu-if}
	\typerule{TU\LA-fork}{
		\src{C,\Delta,\Gamma}\vdashun \src{s} %
	}{
		\src{C,\Delta,\Gamma}\vdashun \src{\fork{s}} %
	}{tsu-fork}
\end{center}
\botrule

\subsection{Dynamic Semantics of \LA}\label{sec:src-sem}
Function \monh{\cdot} returns the part of a heap the monitor cares for (\Cref{tr:h-mon-s}).
\Cref{tr:s-aux-intern,tr:s-aux-in,tr:s-aux-out} dictate the kind of a jump between two functions: if internal to the component/attacker, in(from the attacker to the component) or out(from the component to the attacker).
\Cref{tr:plug-s} tells how to obtain a whole program from a component and an attacker.
\Cref{tr:ini-s} tells the initial state of a whole program.
\Cref{tr:ini-heap-s} produces a heap that satisfies a \src{\Delta}, initialised with base values. %
\Cref{tr:mts-s} tells when a monitor makes a single step given a heap.

\mytoprule{\monh{\cdot}}
\begin{center}
	\typerule{\LA-Monitor-related heap}{
		\src{H'}=\myset{\src{\ell\mapsto v:\tau}}{\src{\ell\mapsto v:\tau}\in\src{H}}
	}{
	\vdash\monh{\src{H},\src{\Delta}} = \src{H'}
	}{h-mon-s}
	\end{center}
\botrule
\mytoprule{\text{Helpers}}
\begin{center}
	\typerule{\LA-Jump-Internal}{
		((\src{f'}\in\src{\OB{I}} \wedge \src{f}\in\src{\OB{I}}) \vee
				\\
		(\src{f'}\notin\src{\OB{I}} \wedge \src{f}\notin\src{\OB{I}}))
	}{
		\src{\OB{I}}\vdash\src{f,f'}:\src{internal}
	}{s-aux-intern}
	\typerule{\LA-Jump-IN}{
		\src{f}\in\src{\OB{I}} \wedge \src{f'}\notin\src{\OB{I}}
	}{
		\src{\OB{I}}\vdash\src{f,f'}:\src{in}
	}{s-aux-in}
	\typerule{\LA-Jump-OUT}{
		\src{f}\notin\src{\OB{I}} \wedge \src{f'}\in\src{\OB{I}}
	}{
		\src{\OB{I}}\vdash\src{f,f'}:\src{out}
	}{s-aux-out}
	\typerule{\LA-Plug}{
		\src{A} \equiv \src{H ; \OB{F}\hole{\cdot}}
		&
		\src{C}\equiv\src{\Delta ; \OB{F'} ; \OB{I}} 
		\\
		\vdash\src{C,\OB{F}}:\src{whole}
		&
		\src{\Delta}\vdash\src{H_0}
		&
		\src{main(x:\UNS)\mapsto s;\ret}\in\src{\OB{F}}
	}{
		\src{A\hole{C}} = \src{\Delta ; H\cup H_0; \OB{F;F'}; \OB{I}}
	}{plug-s}
	\typerule{\LA-Whole}{
		\src{C}\equiv\src{\Delta ; \OB{F'} ; \OB{I}} 
		\\
		\fun{names}{\src{\OB{F}}}\cap\fun{names}{\src{\OB{F'}}}=\emptyset
		\\
		\fun{names}{\src{\OB{I}}}\subseteq \fun{names}{\src{\OB{F}}}\cup\fun{names}{\src{\OB{F'}}}
	}{
		\vdash\src{C,\OB{F}}:\src{whole}
	}{whole-s}
	\typerule{\LA-Initial State}{
		\src{P}\equiv\src{\Delta ; H ; \OB{F} ; \OB{I}}
		\\
		\src{C}\equiv\src{\Delta ; \OB{F} ; \OB{I}}
		&
		\src{main(x)\mapsto s;\ret}\in\src{\OB{F}}
	}{
		\SInits{P} = \src{C, H \triangleright \proc{s\subs{0}{x}}{main}}
	}{ini-s}
\end{center}
\botrule
\mytoprule{\src{\Delta}\vdash\src{H_0}}
\begin{center}
	\typerule{\LA-Initial-heap}{
		\src{\Delta}\vdash\src{H}
		&
		\srce\vdash\src{v}:\src{\tau}
	}{
		\src{\Delta,\ell:\tau}\vdash\src{H;\ell\mapsto v:\tau}
	}{ini-heap-s}
\end{center}
\botrule
\mytoprule{\src{M;H\monred M'}}
\begin{center}
	\typerule{\LA-Monitor Step}{
		\src{M}= \src{(\set{\sigma},\monred,\sigma_0,\Delta,\sigma_c)}
		&
		\src{M'}= \src{(\set{\sigma},\monred,\sigma_0,\Delta,\sigma_f)}
		\\
		\src{(\sigma_c,\sigma_f)}\in\src{\monred}
		&
		\vdash\src{H}:\src{\Delta}
	}{
		\src{M;H\monred M'}
	}{mts-s}
	\typerule{\LA-Monitor Step Trace Base}{
	}{
		\src{M;\srce\monred M}
	}{mts-t-s-b}
	\typerule{\LA-Monitor Step Trace}{
		\src{M;\OB{H}\monred M''}
		&
		\src{M'';H\monred M'}
	}{
		\src{M;\OB{H}\cdot H\monred M'}
	}{mts-t-s}
	\typerule{\LA-valid trace}{
		\src{M;\OB{H}\monred M'}
		&
		\strip{\src{\OB{\alpha}}}=\src{\OB{H}}
	}{
		\src{M}\vdash\src{\OB{\alpha}}
	}{mts-valid}
\end{center}
\botrule

\subsubsection{Component Semantics}\label{src:src-sem-com}
\begin{align*}
	&\src{H\triangleright e \redtos e'} 
	&&\text{Expression \src{e} reduces to \src{e'}.}
	\\
	&\src{C, H \triangleright \pi} \xtos{\lambda} \src{C', H' \triangleright \pi} 
	&&\text{Process \src{\pi} reduces to \src{\pi'} and evolves the rest accordingly.}
	\\
	&\src{C, H \triangleright \Pi} \xtos{\lambda} \src{C', H' \triangleright \Pi'} 
	&&\text{Soup \src{\Pi} reduce to \src{\Pi'} and evolve the rest accordingly.}
	\\
	&\src{\Omega} \Xtos{\OB{\alpha}} \src{\Omega'}
	&& \text{Program state \src{\Omega} steps to \src{\Omega'} emitting trace \src{\OB{\alpha}}.}
\end{align*}

\mytoprule{\src{H\triangleright e \redtos e'} }
\begin{center}
	\typerule{E\LA-val}{
	}{
		\src{H \triangleright v} \redtos \src{v}
	}{es-val}
	\typerule{E\LA-p1}{
	}{
		\src{H\triangleright \projone{\pair{v,v'}}} \redtos \src{v}
	}{es-p1}
	\typerule{E\LA-p2}{
	}{
		\src{H\triangleright \projone{\pair{v,v'}}} \redtos \src{v'}
	}{es-p2}
	\typerule{E\LA-op}{
		n\op n'=n''
	}{
		\src{H\triangleright n \op n'} \redtos \src{n''}
	}{es-op}
	\typerule{E\LA-comp}{
		n\bop n'=b
	}{
		\src{H\triangleright n \bop n'} \redtos \src{b}
	}{es-op}
	\typerule{E\LA-dereference}{
		\src{H \triangleright e \redtos \ell}
		&
		\src{\ell\mapsto v:\tau } \in \src{H}
	}{
		\src{H \triangleright !\ell} \redtos \src{ v }
	}{es-de}
	\typerule{E\LA-ctx}{
		\src{H\triangleright e} \redtos \src{e'}
	}{
		\src{H \triangleright E\hole{e}} \redtos \src{E\hole{e'}}
	}{es-cth}
\end{center}
\botrule

\mytoprule{\src{C, H \triangleright \pi} \xtos{\epsilon} \src{C', H' \triangleright \pi'} }
\begin{center}
	\typerule{E\LA-sequence}{
	}{
		\src{C, H \triangleright \skips;s} \xtos{\epsilon} \src{C, H \triangleright s}
	}{es-seq}
	\typerule{E\LA-step}{
		\src{C, H \triangleright s} \xtos{\lambda} \src{C, H \triangleright s'}
	}{
		\src{C, H \triangleright s;s''} \xtos{\lambda} \src{C, H \triangleright s';s}
	}{es-step}
	\typerule{E\LA-if-true}{
		\src{H \triangleright e \redtos \trues}
	}{
		\src{C, H \triangleright \ifte{e}{s}{s'}} \xtos{\epsilon} \src{C, H \triangleright s}
	}{es-ift}
	\typerule{E\LA-if-false}{
		\src{H \triangleright e \redtos \falses}
	}{
		\src{C, H \triangleright\ifte{e}{s}{s'}} \xtos{\epsilon} \src{C, H \triangleright s'}
	}{es-iff}
	\typerule{E\LA-letin}{
		\src{H \triangleright e \redtos v}
	}{
		\src{C, H \triangleright \letin{x:\tau}{e}{s}} \xtos{\epsilon} \src{C, H \triangleright s\subs{v}{x}}
	}{es-letin}
	\typerule{E\LA-alloc}{
		\src{\ell}\notin\dom{\src{H}}
		&
		\src{H \triangleright e \redtos v}
	}{
		\src{C, H \triangleright \letnewty{x}{e}{\tau}{s}} \xtos{\epsilon} \src{C, H; \ell\mapsto v:\tau \triangleright s\subs{\ell}{x} }
	}{es-al}
	\typerule{E\LA-update}{
		\src{H}=\src{H_1; \ell\mapsto v':\tau ; H_2}
		\\
		\src{H'}=\src{H_1; \ell\mapsto v:\tau ; H_2}
	}{
		\src{C, H \triangleright \ell:=v} \xtos{\epsilon} \src{C, H' \triangleright \skips }
	}{es-up}
	\typerule{E\LA-endorse}{
		\src{H \triangleright e \redtos v}
		&
		\src{\Delta,\srce}\vdash\src{v}:\src{\varphi}
		&
		\src{\Delta}= \myset{\src{\ell:\tau}}{\src{\ell\mapsto v:\tau}\in\src{H}}
	}{
		\src{C, H \triangleright \myendorse{x}{e}{\varphi}{s} } \xtos{\epsilon} \src{C, H \triangleright s\subs{v}{x}}
	}{es-end}
	\typerule{E\LA-call-internal}{
		\src{\OB{C}.\mtt{intfs}}\vdash\src{f,f'}:\src{internal}
		&
		\src{\OB{f'}} = \src{\OB{f''};f'}
		\\
		\src{f(x:\tau):\tau'\mapsto s;\ret}\in\src{C}.\mtt{funs}
		&
		\src{H \triangleright e \redtos v}
	}{
		\src{C, H \triangleright \proc{{\call{f}~e}}{\OB{f'}}} \xtos{\epsilon} \src{C, H \triangleright \proc{{s;\ret\subs{v}{x}}}{\OB{f'};f}}
	}{es-call-i}
	\typerule{E\LA-callback}{
		\src{\OB{f'}} = \src{\OB{f''};f'}
		&
		\src{f(x:\tau):\tau'\mapsto s;\ret}\in\src{\OB{F}}
		\\
		\src{\OB{C}.\mtt{intfs}}\vdash\src{f',f}:\src{out}
		&
		\src{H \triangleright e \redtos v}
	}{
		\src{C, H \triangleright \proc{{\call{f}~e}}{\OB{f'}}} \xtos{\cb{f}{v}} \src{C, H \triangleright \proc{{s;\ret\subs{v}{x}}}{\OB{f'};f}}
	}{es-callback}
	\typerule{E\LA-call}{
		\src{\OB{f'}} = \src{\OB{f''};f'}
		&
		\src{f(x:\tau):\tau'\mapsto s;\ret}\in\src{C}.\mtt{funs}
		\\
		\src{\OB{C}.\mtt{intfs}}\vdash\src{f',f}:\src{in}
		&
		\src{H \triangleright e \redtos v}
	}{
		\src{C, H \triangleright \proc{{\call{f}~e}}{\OB{f'}}} \xtos{\cl{f}{v}} \src{C, H \triangleright \proc{{s;\ret\subs{v}{x}}}{\OB{f'};f}}
	}{es-call}
	\typerule{E\LA-ret-internal}{
		\src{\OB{C}.\mtt{intfs}}\vdash\src{f,f'}:\src{internal}
		&
		\src{\OB{f'}} = \src{\OB{f''};f'}
	}{
		\src{C, H \triangleright \proc{{\ret}}{\OB{f'};f}} \xtos{\epsilon} \src{C, H \triangleright \proc{{\skips}}{\OB{f'}}}
	}{es-ret-i}
	\typerule{E\LA-retback}{
		\src{\OB{C}.\mtt{intfs}}\vdash\src{f,f'}:\src{in}
		&
		\src{\OB{f'}} = \src{\OB{f''};f'}
	}{
		\src{C, H \triangleright \proc{{\ret}}{\OB{f'};f}} \xtos{\rb{}} \src{C, H \triangleright \proc{\skips}{\OB{f'}}}
	}{es-retb}
	\typerule{E\LA-return}{
		\src{\OB{C}.\mtt{intfs}}\vdash\src{f,f'}:\src{out}
		&
		\src{\OB{f'}} = \src{\OB{f''};f'}
		&
		\src{H \triangleright e \redtos v}
	}{
		\src{C, H \triangleright \proc{{\ret}}{\OB{f'};f}} \xtos{\rt{}} \src{C, H \triangleright \proc{\skips}{\OB{f'}}}
	}{es-ret}
\end{center}
\botrule
\mytoprule{\src{C, H \triangleright \Pi} \xtos{\lambda} \src{C', H' \triangleright \Pi'}}
\begin{center}
	\typerule{E\LA-par}{
		\src{\Pi}=\src{\Pi_1 \parallel \proc{s}{\OB{f}}\parallel \Pi_2}
		\\
		\src{\Pi'}=\src{\Pi_1 \parallel \proc{s'}{\OB{f'}} \parallel \Pi_2}
		\\
		\src{C, H \triangleright \proc{s}{\OB{f}}} \xtos{\lambda} \src{C', H'\triangleright \proc{s'}{\OB{f'}}}
	}{
		\src{C, H \triangleright \Pi} \xtos{\lambda} \src{C', H' \triangleright \Pi'}
	}{es-par}
	\typerule{E\LA-fail}{
		\src{\Pi}=\src{\Pi_1 \parallel \proc{s}{\OB{f}}\parallel \Pi_2}
		\\
		\src{C, H \triangleright \proc{s}{\OB{f}}} \xtos{\epsilon} \fails
	}{
		\src{C, H \triangleright \Pi} \xtos{\epsilon} \fails
	}{es-fail}
	\typerule{E\LA-fork}{
		\src{\Pi}=\src{\Pi_1 \parallel \proc{\fork{s};s'}{\OB{f}} \parallel \Pi_2}
		\\
		\src{\Pi'}=\src{\Pi_1 \parallel \proc{\skips;s'}{\OB{f}} \parallel \Pi_2 \parallel \proc{s}{\srce}}
	}{
		\src{C, H \triangleright \Pi} \xtos{\epsilon} \src{C, H \triangleright \Pi'}
	}{es-fork}
\end{center}
\botrule 
\mytoprule{ \src{\Omega} \Xtos{\OB{\alpha}} \src{\Omega'} }
\begin{center}
	\typerule{E\LA-single}{
		\src{\Omega}\xtos{\alpha}\src{\Omega'}
	}{
		\src{\Omega}\Xtos{\alpha}\src{\Omega'}
	}{es-tr-sin}
	\typerule{E\LA-silent}{
		\src{\Omega}\xtos{\epsilon}\src{\Omega'}
	}{
		\src{\Omega}\Xtos{}\src{\Omega'}
	}{es-tr-silent}
	\typerule{E\LA-trans}{
		\src{\Omega}\Xtos{\OB{\alpha}}\src{\Omega''}
		\\
		\src{\Omega''}\Xtos{\OB{\alpha'}}\src{\Omega'}
	}{
		\src{\Omega}\Xtos{\OB{\alpha}\cdot\OB{\alpha'}}\src{\Omega'}
	}{es-tr-trans}
\end{center}
\botrule

 \newpage

\section{\LC: Extending \LP with Concurrency and Informed Monitors}

\subsection{Syntax}
This extends the syntax of \Cref{sec:trg-syn} with concurrency and a memory allocation instruction that atomically hides the new location.
\begin{align*}
	\mi{Whole\ Programs}~\trg{P} \bnfdef&\ \trg{H_0 ; \OB{F} ;\OB{I}}
	\\
	\mi{Components}~\trg{C} \bnfdef&\ \trg{H_0 ; \OB{F} ; \OB{I}}
	\\
	\mi{Statements}~\trg{s} \bnfdef&\ \cdots \mid \trg{\fork{s}} 
	\mid \trg{\destruct{x}{e}{B}{s}{s}}
	\\
	\mid&\ \trg{\letatom{x}{e}{s}}
	\\
	\mi{Patterns}~\trg{B} \bnfdef&\ \trg{nat} \mid \trg{pair}
	\\
	\mi{Monitors}~\trg{M} \bnfdef&\ \trg{(\set{\sigma},\monred,\sigma_0,H_0,\sigma_c)}
	\\
	\mi{Single\ Process}~ \trg{\pi} \bnfdef&\ \trg{\proc{s}{\OB{f}}}
	\\
	\mi{Processes}~\trg{\Pi} \bnfdef&\ \trge \mid \trg{\Pi \parallel \pi}
	\\
	\mi{Prog.\ States}~\trg{\Omega}\bnfdef&\ \trg{C, H\triangleright \Pi} 
\end{align*}

\subsection{Dynamic Semantics}

Following is the definition of the \monh{\cdot} function for \LC.

\mytoprule{\monh{\cdot}}
\begin{center}
	\typerule{\LC-Monitor-related heap}{
          \trg{H'} = \{\trg{n\mapsto v:\eta} ~|~ \trg{n} \in \dom{\trg{H_0}} \text{ and } \trg{n\mapsto v:\eta}\in\trg{H}\}
	}{
		\monh{\trg{H},\trg{H_0}} = \trg{H'}
	}{th-pub-t}
\end{center}
\botrule

\mytoprule{\text{Helpers}}
\begin{center}
	\typerule{\LC-Plug}{
		\trg{A} \equiv \trg{\OB{F}\hole{\cdot}}
		&
		\trg{C}\equiv\trg{H_0 ; \OB{F'} ; \OB{I}} 
		\\
		\vdash\trg{C,\OB{F}}:\trg{whole}
		&
		\trg{main(x)\mapsto \ret{s}}\in\trg{\OB{F}}
		\\
		\trg{C}\vdashattt\trg{A}
		&
		\forall \trg{n\mapsto v:k}\in\trg{H_0}, \trg{k}\in\trg{H_0}
	}{
		\trg{A\hole{C}} = \trg{H_0; \OB{F;F'}; \OB{I}}
	}{plug-t-2}
	\typerule{\LC-Initial State}{
		\trg{P}\equiv\trg{H_0; \OB{F} ; \OB{I}}
		&
		\trg{main(x)\mapsto s;\ret}\in\trg{\OB{F}}
  	}{
		\SInitt{P} = \trg{P, H_0 \triangleright \proc{s\subt{0}{x}}{main}}
	}{ini-t-2}
\end{center}
\botrule
\mytoprule{\trg{M;H\monred M'}}
\begin{center}
	\typerule{\LC-Monitor Step}{
		\trg{M}= \trg{(\set{\sigma},\monred,\sigma_0,H_0,\sigma_c)}
		&
		\trg{M'}= \trg{(\set{\sigma},\monred,\sigma_0,H_0,\sigma_f)}
		\\
		\trg{(s_c,\monh{\trg{H},\trg{H_0}},s_f)}\in\trg{\monred}
	}{
		\trg{M;H\monred M'}
	}{ms-t-2}
	\typerule{\LC-Monitor Step Trace Base}{
	}{
		\trg{M;\trge\monred M}
	}{mtt-t-s-b}
	\typerule{\LC-Monitor Step Trace}{
		\trg{M;\OB{H}\monred M''}
		&
		\trg{M'';H\monred M'}
	}{
		\trg{M;\OB{H}\cdot H\monred M'}
	}{mtt-t-s}
	\typerule{\LC-valid trace}{
		\trg{M;\OB{H}\monred M'}
	}{
		\trg{M}\vdash\trg{\OB{\mnh{H}}}
	}{mtt-valid}
\end{center}
\botrule

\subsubsection{Component Semantics}
\begin{align*}
	&\trg{C, H \triangleright \Pi} \xtot{\epsilon} \trg{C', H' \triangleright \Pi'} 
	&&\text{Processes \trg{\Pi} reduce to \trg{\Pi'} and evolve the rest accordingly.}
\end{align*}

\mytoprule{\trg{C, H \triangleright s} \xtot{\epsilon} \trg{C', H' \triangleright s'} }
\begin{center}
	\typerule{E\LC-destruct-nat}{
		\trg{H\triangleright e \redtot n}
	}{
		\trg{C, H \triangleright \destruct{x}{e}{nat}{s}{s'} } \xtot{\epsilon} \trg{C, H \triangleright s\subt{n}{x}}
	}{et-end-n}
	\typerule{E\LC-destruct-pair}{
		\trg{H\triangleright e \redtot \pair{v,v'}}
	}{
		\trg{C, H \triangleright \destruct{x}{e}{pair}{s}{s'} } \xtot{\epsilon} \trg{C, H \triangleright s\subt{\pair{v,v'}}{x}}
	}{et-end-p}
	\typerule{E\LC-destruct-not}{
		\text{ otherwise }
	}{
		\trg{C, H \triangleright \destruct{x}{e}{B}{s}{s'} } \xtot{\epsilon} \trg{C, H \triangleright s'}
	}{et-end-nope}
	\typerule{E\LC-new}{
		\trg{H} = \trg{H_1;n\mapsto (v,\eta)}
		&
		\trg{H \triangleright e\redtot v}
		&
		\trg{k}\notin\dom{\trg{H}}
	}{
		\trg{C, H \triangleright \letatom{x}{e}{s}} \xtot{\epsilon} \trg{C,H; n+1\mapsto v:k;k \triangleright s\subt{\pair{n+1,k}}{x}}
	}{et-atom}
\end{center}
\botrule

\mytoprule{\trg{C, H \triangleright \Pi} \redtot \trg{C', H' \triangleright \Pi'}}
\begin{center}
	\typerule{E\LC-par}{
		\trg{\Pi}=\trg{\Pi_1 \parallel \proc{s}{\OB{f}}\parallel \Pi_2}
		\\
		\trg{\Pi'}=\trg{\Pi_1 \parallel \proc{s'}{\OB{f'}} \parallel \Pi_2}
		\\
		\trg{C, H \triangleright \proc{s}{\OB{f}}} \redtot \trg{C', H'\triangleright \proc{s'}{\OB{f'}}}
	}{
		\trg{C, H \triangleright \Pi} \redtot \trg{C', H' \triangleright \Pi'}
	}{et-par}
	\typerule{E\LC-fork}{
		\trg{\Pi}=\trg{\Pi_1 \parallel \proc{{\fork{s}}}{\OB{f}} \parallel \Pi_2}
		\\
		\trg{\Pi'}=\trg{\Pi_1 \parallel \proc{{0}}{\OB{f}} \parallel \Pi_2 \parallel \proc{s}{\trge}}
	}{
		\trg{C, H \triangleright \Pi} \redtot \trg{C, H \triangleright \Pi'}
	}{et-fork}
\end{center}
\botrule

\newpage

\section{Extended Language Properties and Necessities}

\subsection{Monitor Agreement for \LA and \LC}
\begin{definition}[\LA: \src{M\agree C}]\label{def:la-agree}
	\begin{align*}
		\src{(\set{\sigma},\monred,\sigma_0,\Delta,\sigma_c) \agree (\Delta ; \OB{F} ; \OB{I})}
	\end{align*}
\end{definition}
A monitor and a component agree if they focus on the same set of locations \src{\Delta}.

\begin{definition}[\LC: \trg{M\agree C}]\label{def:lc-agree}
	\begin{align*}
		\trg{(\set{\sigma},\monred,\sigma_0,H_0,\sigma_c) \agree (H_0 ; \OB{F} ;\OB{I})}
	\end{align*}
\end{definition}
A monitor and a component agree if they focus on the same set of locations, protected with the same capabilities \trg{H_0}

\subsection{Properties of \LA}\label{sec:src-prop}

\begin{definition}[\LA Semantics Attacker]\label{def:src-att-original}
	\begin{align*}
		\src{C}\vdash_{\src{attacker}}\src{A} \isdef&\ 
			\begin{cases}
				\forall \src{\ell}\in\dom{\src{C}.\src{\Delta}}, \src{\ell}\notin\fun{locs}{\src{A}}
				\\
				\text{ no } \src{\letnewty{x}{e}{\tau}{s}} \text{ in } \src{A} \text{ such that } \src{\tau}\neq\UNS
			\end{cases}
	\end{align*}
\end{definition}
This semantic definition of an attacker is captured by typing below, which allows for simpler reasoning.

\begin{definition}[\LA Attacker]\label{def:src-att}
	\begin{align*}
		\src{C}\vdashatts\src{A} \isdef&\ \src{C}= \src{\Delta ; \OB{F} ; \OB{I}},
		\src{\Delta}\vdash_\UNS \src{A}
		\\
		\src{C}\vdashatts\src{\pi} \isdef&\ \src{\pi}=\src{\proc{s}{\OB{f};f}} \text{ and } \src{f}\in\src{C}.\mtt{itfs}
		\\
		\src{C}\vdashatts\src{\Pi}\xtos{}\src{\Pi'} \isdef&\ \src{\Pi}=\src{\Pi_1 \parallel \pi \parallel \Pi_2} \text { and } \src{\Pi'}=\src{\Pi_1 \parallel \pi' \parallel \Pi_2}
		\\
		&\ \text{and } \src{C}\vdashatts\src{\pi} \text{ and } \src{C}\vdashatts\src{\pi'}
	\end{align*}
\end{definition}

The two notions of attackers coincide.
\begin{lemma}[Semantics and typed attackers coincide]\label{thm:atk-src-coincide}
	\begin{align*}
		\src{C}\vdash_{\src{attacker}}\src{A} \iff (\src{C}\vdashatts\src{A} )
	\end{align*}
\end{lemma}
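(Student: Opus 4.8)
The plan is to prove the two implications of the biconditional separately, exploiting the fact that the $\UNS$ typing judgement of \Cref{sec:un-ty} is deliberately \emph{maximally permissive}: every syntactic form has a matching rule that assigns type $\UNS$, and the only genuine restrictions the rules impose are (i) the allocation rule $\mathrm{TU}\LA\text{-new}$ accepts only $\letnewty{x}{e}{\UNS}{s}$, and (ii) the location rule $\mathrm{TU}\LA\text{-loc}$, together with the disjointness side-conditions of the base rule $\mathrm{TU}\LA\text{-base}$, restricts which locations may be named. These line up with the two clauses of the semantic attacker definition (\Cref{def:src-att-original}), so the coincidence should fall out of a rule-by-rule comparison rather than any deep argument.

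For the direction $\src{C}\vdash_{\src{attacker}}\src{A}\Rightarrow\src{C}\vdashatts\src{A}$, I would proceed by structural induction on $\src{A}=\src{H ; \OB{F}\hole{\cdot}}$ (simultaneously over its expressions and statements), constructing a derivation of $\src{\Delta}\vdashun\src{A}$. For each construct I apply the corresponding $\mathrm{TU}\LA$ rule and discharge its premises by the induction hypothesis. The only cases that use the semantic hypotheses are the allocation case, where clause (ii) gives $\src{\tau}=\UNS$ and lets me apply $\mathrm{TU}\LA\text{-new}$, and the location case, where clause (i) guarantees the referenced $\src{\ell}$ lies outside $\dom{\src{\Delta}}$, so in particular $\src{\ell{:}\UNS}\notin\src{\Delta}$ and $\mathrm{TU}\LA\text{-loc}$ applies. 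The disjointness premises of $\mathrm{TU}\LA\text{-base}$, namely $\dom{\src{H}}\cap\dom{\src{\Delta}}=\srce$ and $\dom{\src{\Delta}}\cap(\fun{fv}{\src{\OB{F}}}\cup\fun{fv}{\src{H}})=\srce$, follow likewise from clause (i).

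For the converse $\src{C}\vdashatts\src{A}\Rightarrow\src{C}\vdash_{\src{attacker}}\src{A}$, I would invert and induct on the derivation of $\src{\Delta}\vdashun\src{A}$, reading off the two semantic conditions. Condition (ii) is immediate: $\mathrm{TU}\LA\text{-new}$ is the unique rule for $\letnewty{}{}{}{}$ and it forces the annotation to $\UNS$, so no sub-term of $\src{A}$ allocates at a non-$\UNS$ type. Condition (i) is obtained by a straightforward structural induction using that the only way to type a bare location is $\mathrm{TU}\LA\text{-loc}$, together with the base rule's domain-disjointness premises.

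The main obstacle I expect is precisely in this last step: reconciling the $\mathrm{TU}\LA\text{-loc}$ side-condition with clause (i) of \Cref{def:src-att-original}. That side-condition only excludes $\src{l{:}\UNS}\in\src{\Delta}$, so on its own it does \emph{not} forbid a trusted location $\src{\ell}\in\dom{\src{\Delta}}$ from being named and typed at $\UNS$. To close the gap one must appeal to the standing well-formedness and plugging discipline (\Cref{tr:plug-s}), under which the trusted $\src{\Delta}$-locations are freshly allocated and initialised by the component and hence never lie in the attacker's syntactic namespace; combined with the disjointness conditions of $\mathrm{TU}\LA\text{-base}$ on the attacker heap $\src{H}$, this yields $\src{\ell}\notin\fun{fn}{\src{A}}$ for every $\src{\ell}\in\dom{\src{\Delta}}$. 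Some care is also needed to reconcile the free-name ($\fun{fn}{\cdot}$) and free-variable ($\fun{fv}{\cdot}$) notions used across the two definitions. The remaining cases, including the pointwise lift of $\vdashatts$ to processes $\src{\pi}$ and soups $\src{\Pi}$ in \Cref{def:src-att}, are routine once the statement for $\src{A}$ is established.
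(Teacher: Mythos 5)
Your proposal is correct and takes essentially the same route as the paper, whose proof is precisely a structural induction on the syntax of \src{A} using the rules of \Cref{sec:un-ty}, with the condition on \src{\Delta}-locations discharged by appeal to \Cref{tr:plug-s} --- the same move you make to close the \Cref{tr:tsu-loc} gap. Your additional care about the weak side-condition of \Cref{tr:tsu-loc} and the $\fun{fv}{\cdot}$/$\fun{fn}{\cdot}$ mismatch supplies detail the paper's two-line sketch elides, but it is a refinement of, not a departure from, the paper's argument.
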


\begin{theorem}[Typability Implies Robust Safety in \LA]\label{thm:src-ty-impl-safe}
	\begin{align*}
		&
		\forall\src{C},\src{M}
		\\
		\text{if }
		&
		\vdash\src{C}:\UNS
		\\
		&
		\src{C\agree M}
		\\
		\text{ then }
		&
		\src{M}\vdash\src{C}:\src{rs}
	\end{align*}
\end{theorem}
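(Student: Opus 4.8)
The plan is to unfold robust safety and reduce the monitor-acceptance condition to a purely heap-typing invariant, which I then establish by a robust type-preservation argument. Concretely, fix an arbitrary attacker $\src{A}$ with $\src{C}\vdashatts\src{A}$; by \Cref{thm:atk-src-coincide} this coincides with the semantic notion of attacker, so I may treat $\src{A}$ as $\UNS$-typed code that cannot a priori name the locations in $\src{\Delta}$. Unfolding \Cref{def:safe}, I must show that every trace $\src{\OB{\alpha}}$ with $\SInits{A\hole{C}}\Xtos{\OB{\alpha}}\src{\_}$ satisfies $\src{M}\vdash\src{\OB{\alpha}}$. Under the standing well-formedness of the monitor (the totality premise of \ref{tr:mon-ok-s}, $\forall\src{s}\,\exists\src{s'}.\,\src{(s,s')}\in\src{\monred}$), the monitor-step rule \ref{tr:mts-s} shows that the monitor can take a step on a heap $\src{H}$ exactly when $\vdash\src{H}:\src{\Delta}$: the state transition always exists, so the only way to get stuck is a failure of the heap-typing premise. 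Hence $\src{M}\vdash\src{\OB{\alpha}}$ holds iff $\vdash\src{H}:\src{\Delta}$ for every heap $\src{H}$ appearing in $\strip{\src{\OB{\alpha}}}$, i.e. at every boundary-crossing action. The theorem therefore reduces to showing that $\vdash\src{H}:\src{\Delta}$ holds at every point of execution of $\src{A\hole{C}}$ at which an action is emitted.

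To prove that claim I would set up a global invariant $\mc{I}$ on whole-program states $\src{C,H\triangleright\Pi}$ and show it is preserved by every reduction step, by induction on the length of the run. The invariant records: (i) $\vdash\src{H}:\src{\Delta}$, i.e. all trusted locations carry their declared types; (ii) the heap splits into a trusted region (the locations of $\src{\Delta}$ together with component-allocated locations of non-shareable type) and an untrusted region in which every location is typeable at $\UNS$; (iii) each thread of the soup $\src{\Pi}$ is either component code well-typed under $\src{C,\Delta,\Gamma}$ for a suitable $\src{\Gamma}$, or attacker code typeable at $\UNS$ under $\src{\Delta}$; and (iv) no trusted location is reachable from any value currently held by an attacker thread or stored in the untrusted region. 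The initial state satisfies $\mc{I}$ because $\src{C}$ is well-typed at $\UNS$ (\ref{tr:ts-co}) and the plugged initial heap respects $\src{\Delta}$ (\ref{tr:plug-s}, \ref{tr:ini-heap-s}).

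The preservation step splits into two kinds of reductions. For component steps the argument is essentially standard type preservation for the well-typed component: a well-typed assignment (\ref{tr:ts-ass}) writes a value of the location's declared type, so clause (i) is maintained; endorsement is safe because its operational rule checks $\src{\Delta,\srce}\vdash\src{v}:\src{\varphi}$ before giving the value its superficial type; and every value communicated across the boundary at a call, callback, return, or returnback is forced by the typing rules (function arguments are $\UNS$ by \ref{tr:ts-fu}, and any exported value is passed through the coercion-to-$\UNS$ rule) to be \emph{shareable}, i.e. of a type $\src{\tau}$ with $\src{\tau}\vdash\src{\circ}$. For attacker steps the argument is the confinement core of the proof: an attacker thread is typeable at $\UNS$, so by the $\UNS$-typing rules it can only allocate $\UNS$-locations (\ref{tr:tsu-new}) and only read or write locations it can name, and rule \ref{tr:tsu-loc} forbids it from naming any location declared in $\src{\Delta}$ with a non-$\UNS$ type. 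Consequently an attacker step can only touch the untrusted region, leaving the trusted heap — and thus $\vdash\src{H}:\src{\Delta}$ — intact, while extending the untrusted region only with $\UNS$ data, which preserves clauses (ii)–(iv).

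The main obstacle is clause (iv), the robust confinement property that no trusted location is ever reachable from a value held by the attacker. Establishing it requires an auxiliary lemma connecting the static judgment $\src{\tau}\vdash\src{\circ}$ to a semantic reachability statement: from any value of a shareable type, only locations of type $\UNS$ can be extracted using the language's elimination forms (projection and dereference). This lemma, proved by induction on the derivation of $\src{\tau}\vdash\src{\circ}$, is what guarantees that the shareable values crossing the boundary never smuggle a reference to a trusted location into the attacker's reach, so that (iv) is re-established after each boundary crossing. Concurrency complicates only the bookkeeping: because $\mc{I}$ is stated over the whole soup and the trusted/untrusted partition of the heap is global, each interleaved step — a fork, an endorsement, or an ordinary heap operation — must be checked to preserve the global partition; but since every step acts locally on one thread and the partition grows monotonically, the interleaving introduces no genuinely new difficulty. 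Combining the invariant with the reduction of the first paragraph yields $\vdash\src{H}:\src{\Delta}$ at every emitted action, hence $\src{M}\vdash\src{\OB{\alpha}}$ for all traces, hence $\src{M}\vdash\src{A\hole{C}}:\src{safe}$; as $\src{A}$ was arbitrary, $\src{M}\vdash\src{C}:\src{rs}$.
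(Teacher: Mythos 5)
Your proof is correct and follows essentially the same route as the paper's: you reduce monitor acceptance to $\vdash\src{H}:\src{\Delta}$ at every boundary action and then establish a heap-typing invariant preserved by induction on the run, splitting component steps (standard type preservation, with endorsement justified by its runtime check) from attacker steps (confinement of $\UNS$-typed code), where your auxiliary lemma that shareable types ($\src{\tau}\vdash\src{\circ}$) yield only $\UNS$ locations is exactly the paper's \Cref{thm:atk-reach-uns-loc}, and your two preservation cases match \Cref{thm:epsi-red-resp-rel,thm:epsi-atk-red-pres-rel} feeding into \Cref{thm:alpha-red-ok}. The only differences are presentational: you bundle the trusted/untrusted heap partition and attacker-reachability into one explicit global invariant (the paper distributes this across its auxiliary lemmas, carrying only $\vdash\monh{\src{H},\src{\Delta}}:\src{\Delta}$), and you usefully make explicit the monitor-totality premise of \Cref{tr:mon-ok-s} that the paper's appeal to \Cref{tr:mts-s} leaves implicit.
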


\subsection{Properties of \LC}
\begin{definition}[\LC Attacker]\label{def:trg-att}
	\begin{align*}
		\trg{C}\vdashattt\trg{A} \isdef&\ \trg{C}= \trg{H_0 ; \OB{F} ; \OB{I}}, \forall\trg{k}\in\trg{H_0}. \trg{k}\notin\fun{fv}{\trg{A}} %
		\\
		\trg{C}\vdashattt\trg{\pi}  \isdef&\ \trg{\pi}=\trg{\proc{s}{\OB{f};f}} \text{ and } \trg{f}\in\trg{C}.\mtt{itfs}
		\\
		\trg{C}\vdashattt\trg{\Pi}\xtot{}\trg{\Pi'} \isdef&\ \trg{\Pi}=\trg{\Pi_1 \parallel \pi \parallel \Pi_2} \text { and } \trg{\Pi'}=\trg{\Pi_1 \parallel \pi' \parallel \Pi_2}
		\\
		&\ \text{and } \trg{C}\vdashattt\trg{\pi}  \text{ and } \trg{C}\vdashattt\trg{\pi'} 
	\end{align*}
\end{definition}
 \newpage
\section{Compiler from \LA to \LC}\label{sec:comp2}

\subsection{Assumed Relation between \LA and \LC Elements}\label{sec:relate-elems}

We can scale the $\relatebeta$ relation to monitors, heaps, actions and processes as follows.

\mytoprule{\hl{$\src{M}\relate\trg{M} $} }
\begin{center}
	\typerule{Ok Mon}{
		\trg{M} = \trg{(\set{\sigma},\monred,\sigma_0,H_0,\sigma_c)}
		\\
		\forall\trg{\sigma} \in \trg{\set{\sigma}},
		\monh{\src{H};\src{\Delta}} \relatebeta \monh{\trg{H},\trg{H_0}}.
		\\
		\text{ if } \vdash \src{H}: \src{\Delta}
		\text{ then } \exists \trg{\sigma'}. (\trg{\sigma}, \monh{\trg{H},\trg{H_0}}, \trg{\sigma'}) \in \trg{\monred}
	}{
		\beta, \src{\Delta} \vdash \trg{M}
	}{ok-mon}
	\typerule{
		Monitor relation
	}{
		\src{M} = \src{(\set{\sigma},\monred,\sigma_0,\Delta,\sigma_c)}
		&
		\trg{M} = \trg{(\set{\sigma},\monred,\sigma_0,H_0,\sigma_c)}
		\\
		\beta_0, \src{\Delta}\vdash \trg{M}
		&
		\beta_0=(\dom{\src{\Delta}},\dom{\trg{H_0}},\trg{H_0}.\trg{\eta})
	}{
		\src{M} \relate \trg{M}
	}{mon-rel-ap}
\end{center}
\botrule

\mytoprule{\src{\Delta}\vdash_{\beta}\trg{H_0} ~~ \src{\Delta, \trg{H} }\vdash\trg{v}: \src{\tau} }
\begin{center}
	\typerule{Initial-heap}{
		\src{\Delta}\vdash\trg{H}
		&
		\src{\Delta},\trg{H}\vdash_{\beta}\trg{v}\src{:\tau}
		\\
		\src{\ell}\relatebeta\trg{\pair{n,k}}
	}{
		\src{\Delta,\ell:\tau}\vdash_{\beta}\trg{H;n\mapsto v:k}
	}{ini-heap}
	\typerule{Initial-value}{
			(\src{\tau}\equiv\Bools \wedge \trg{v}\equiv\trg{0}) 
			&
			\vee
			&
			(\src{\tau}\equiv\Nats \wedge \trg{v}\equiv\trg{0}) 
			&
			\vee
			\\
			(\src{\tau}\equiv\src{\Refs{\tau}} \wedge \trg{v}\equiv\trg{n'} \wedge \trg{n'\mapsto v':k'}\in\trg{H} \wedge \src{\ell'}\relatebeta\trg{\pair{n',k'}} \wedge \src{\ell:\tau}\in\src{\Delta}, 
			\src{\Delta},\trg{H}\vdash\trg{v'}\src{:\tau}
			) 
			&
			\vee
			\\
			(\src{\tau}\equiv\src{\tau_1\times\tau_2} \wedge \trg{v}\equiv\trg{\pair{v_1,v_2}} \wedge 
			\src{\Delta},\trg{H}\vdash\trg{v_1}\src{:\tau_1}\wedge
			\src{\Delta},\trg{H}\vdash\trg{v_2}\src{:\tau_2}
			)
	}{
		\src{\Delta},\trg{H}\vdash_{\beta}\trg{v}\src{:\tau}
	}{ini-val}
\end{center}
\botrule

\mytoprule{\src{\Pi}\relatebeta\trg{\Pi}}
\begin{center}
	\typerule{Single process relation}{
		\src{\OB{f}}\relate\trg{\OB{f}}
	}{
		\src{\proc{\skips}{\OB{f}}}\relatebeta\trg{\proc{\skipt}{\OB{f}}}
	}{sing-proc-rel}
	\typerule{Process relation}{
		\src{\Pi}\relatebeta\trg{\Pi}
		&
		\src{\pi}\relatebeta\trg{\pi}
	}{
		\src{\Pi\parallel\pi}\relatebeta\trg{\Pi\parallel\pi}
	}{proc-rel}
\end{center}
\botrule

\subsection{Compiler Definition}\label{sec:comp-def}

\begin{definition}[Compiler \LA to \LC]\label{def:comp-la-lp}
	$\compap{\cdot} : \src{C}\to \trg{C}$%

	Given that $\src{C}=\src{\Delta ; \OB{F} ; \OB{I}}$ %
	if $\vdash \src{C :\UNS}$ %
	then \compap{\src{C}} is defined as follows:
	\begin{align*}
		\tag{\compap{\cdot}-Component}
		\compap{
			\typerule{T\LA-component}{
				\src{C}\equiv\src{\Delta ; \OB{F} ; \OB{I}} 
				\\
				\src{C}\vdash\src{\OB{F}}:\UNS
				\\
				\fun{names}{\src{\OB{F}}}\cap\fun{names}{\src{\OB{I}}}=\emptyset
				\\
				\src{\Delta}\vdash\src{ok}
			}{
				\vdash \src{C}:\UNS
			}{compap-co}
		} &= \trg{%
		\trg{H_0} ;
		\compap{\OB{F}}; \compap{\OB{I}}}
		\qquad\qquad
		\text{if }\src{\Delta}\vdash_{\beta_0}\trg{H_0}
		\\
		\tag{\compap{\cdot}-Function}
		\compap{
			\typerule{T\LA-function}{
				\src{F}\equiv \src{f(x:\UNS)\mapsto s;\ret}
				\\
				\src{C},\src{\Delta;x:\UNS}\vdash \src{s} %
				\\
				\forall\src{f}\in\fun{fn}{s}, \src{f}\in\dom{\src{C}.\mtt{funs}} 
				\\
				\vee \src{f}\in\dom{\src{C}.\mtt{intfs}}
			}{
				\src{C}\vdash\src{F}:\src{\UNS} 
			}{compap-fu}
		} &= \trg{f(x)\mapsto \compap{\src{C};\src{\Delta;x:\UNS}\vdash \src{s}};\ret}
		\\
		\tag{\compap{\cdot}-Interfaces}
		\compap{f} &= \trg{f}
	\end{align*}

	\mytoprule{Expressions}
	\begin{align*}
		\\
		\tag{\compap{\cdot}-True}
		\compap{
			\typerule{T\LA-true}{
				\src{\Delta,\Gamma}\vdash\diamond
			}{
				\src{\Delta,\Gamma}\vdash\trues:\Bools
			}{compap-true}
		}
		=&\ \trg{0}  
		\\
		\tag{\compap{\cdot}-False}
		\compap{
			\typerule{T\LA-false}{
				\src{\Delta,\Gamma}\vdash\diamond
			}{
				\src{\Delta,\Gamma}\vdash\falses:\Bools
			}{compap-false}
		}
		=&\ \trg{1} 
		\\
		\tag{\compap{\cdot}-Nat}
		\compap{
			\typerule{T\LA-nat}{
				\src{\Delta,\Gamma}\vdash\diamond
			}{
				\src{\Delta,\Gamma}\vdash\src{n}:\Nats
			}{compap-nat}
		}
		=&\ \trg{n}
		\\
		\tag{\compap{\cdot}-Var} 
		\compap{
			\typerule{T\LA-var}{
				\src{x:\tau}\in\src{\Gamma}
			}{
				\src{\Delta,\Gamma}\vdash\src{x}:\src{\tau}
			}{compap-var}
		}
		=&\  \trg{x}
		\\
		\tag{\compap{\cdot}-Loc} 
		\compap{
			\typerule{T\LA-loc}{
				\src{\ell:\tau}\in\src{\Delta}
			}{
				\src{\Delta,\Gamma}\vdash\src{\ell}:\src{\tau}
			}{compap-loc}
		}
		=&\  
				\trg{\pair{n,v}}
		\\
		\tag{\compap{\cdot}-Pair} 
		\compap{
			\typerule{T\LA-pair}{
				\src{\Delta,\Gamma}\vdash \src{e_1} : \src{\tau}
				\\
				\src{\Delta,\Gamma}\vdash \src{e_2} : \src{\tau'} 
			}{
				\src{\Delta,\Gamma}\vdash \src{\pair{e_1,e_2}} : \src{\tau\times\tau'}
			}{compap-pair}
		}
		=&\ \trg{\pair{\compap{\src{\Delta,\Gamma}\vdash \src{e_1} : \src{\tau}},\compap{\src{\Delta,\Gamma}\vdash \src{e_2} : \src{\tau'}}}}
		\\
		\tag{\compap{\cdot}-P1} 
		\compap{
			\typerule{T\LA-proj-1}{
				\src{\Delta,\Gamma}\vdash \src{e} : \src{\tau\times\tau'}
			}{
				\src{\Delta,\Gamma}\vdash \src{\projone{e}} : \src{\tau}
			}{compap-p1}
		}
		=&\ \trg{\projone{\compap{\src{\Delta,\Gamma}\vdash \src{e} : \src{\tau\times\tau'}}}}
		\\
		\tag{\compap{\cdot}-P2} 
		\compap{
			\typerule{T\LA-proj-2}{
				\src{\Delta,\Gamma}\vdash \src{e} : \src{\tau\times\tau'}
			}{
				\src{\Delta,\Gamma}\vdash \src{\projtwo{e}} : \src{\tau'}
			}{compap-p2}
		}
		=&\ \trg{\projtwo{\compap{\src{\Delta,\Gamma}\vdash \src{e} : \src{\tau\times\tau'}}}}
		\\
		\tag{\compap{\cdot}-Deref} 
		\compap{
			\typerule{T\LA-dereference}{
				\src{\Delta,\Gamma}\vdash \src{e} : \src{\Refs{\tau}} 
			}{
				\src{\Delta,\Gamma}\vdash \src{!e} : \src{\tau}  
			}{compap-deref}
		}
		=&\ \trg{
			!\projone{\compap{\src{\Delta,\Gamma}\vdash \src{e} : \src{\Refs{\tau}}}} \with{\projtwo{\compap{\src{\Delta,\Gamma}\vdash \src{e} : \src{\Refs{\tau}}}}} %
		}
		\\
		\tag{\compap{\cdot}-op}
		\compap{
			\typerule{T\LA-op}{
				\src{\Delta,\Gamma}\vdash \src{e} : \Nats
				\\
				\src{\Delta,\Gamma}\vdash \src{e'} : \Nats
			}{
				\src{\Delta,\Gamma}\vdash \src{e \op e'} : \src{\Nats}  
			}{compap-op}
		}
		=&\ \trg{\compap{\src{\Delta,\Gamma}\vdash \src{e} : \Nats} \op\compap{\src{\Delta,\Gamma}\vdash \src{e'} : \Nats}}
		\\
		\tag{\compap{\cdot}-cmp}
		\compap{
			\typerule{T\LA-cmp}{
				\src{\Delta,\Gamma}\vdash \src{e} : \Nats
				\\
				\src{\Delta,\Gamma}\vdash \src{e'} : \Nats
			}{
				\src{\Delta,\Gamma}\vdash \src{e \bop e'} : \Bools
			}{compap-bop}
		}
		=&\ 
				\trg{
					\compap{\src{\Delta,\Gamma}\vdash \src{e} : \Nats} \bop \compap{\src{\Delta,\Gamma}\vdash \src{e'} : \Nats}
				}
		\\
		\tag{\compap{\cdot}-Coerce} 
		\compap{
			\typerule{T\LA-coercion}{
				\src{\Delta,\Gamma}\vdash \src{e} : \src{\tau} 
				&
				\src{\tau} \vdash \circ
			}{
				\src{\Delta,\Gamma}\vdash \src{e} : \UNS 
			}{compap-coe}
		}
		=&\
				\trg{
					\compap{\src{\Delta,\Gamma}\vdash \src{e} : \src{\tau} }
				}
	\end{align*}

	\mytoprule{Statements}
	\begin{align*}
		\tag{\compap{\cdot}-Skip} 
		\compap{
			\typerule{T\LA-skip}{
			}{
				\src{C,\Delta,\Gamma}\vdash \skips
			}{compap-skip}	
		}
		=&\
		\skipt
		\\
		\tag{\compap{\cdot}-New} 
		\compap{
			\typerule{T\LA-new}{
				\src{C,\Delta,\Gamma}\vdash \src{e} : \src{\tau} 
				\\
				\src{C,\Delta,\Gamma;x:\Refs{\tau}}\vdash \src{s} %
			}{
				\src{C,\Delta,\Gamma}\vdash \src{\letnewty{x}{e}{\tau}{s}} %
			}{compap-new}	
		}
		=&\
		\begin{cases}
			\trg{
			\begin{aligned}
				&
				\letnewt{\trg{xo}~}{\compap{\src{\Delta,\Gamma}\vdash \src{e} : \src{\tau}} 
				\\
				&\
				}{ 
					\letint{\trg{x}~}{~\trg{\pair{xo,0}}
					\\
					&\ \ 
					}{
						\compap{\src{C,\Delta,\Gamma;x:\Refs{\tau}}\vdash \src{s}} 
					}
				}
			\end{aligned}
			}
			\\
			\text{if }\src{\tau}=\UNS %
			\\
			\\
			\trg{
			\begin{aligned}
				&
				\letatom{\trg{x}~}{ \compap{\src{\Delta,\Gamma}\vdash \src{e} : \src{\tau}} 
				\\
				&\
				}{ 
					\compap{\src{C,\Delta,\Gamma;x:\Refs{\tau}}\vdash \src{s}} 
				}
			\end{aligned}
			}
			\\ 
			\text{else} 
		\end{cases}
		\\
		\tag{\compap{\cdot}-call}
		\compap{
			\typerule{T\LA-function-call}{
				((\src{f}\in\dom{\src{C}.\mtt{funs}})
				\\
				\vee(\src{f}\in\dom{\src{C}.\mtt{intfs}}))
				\\
				\src{\Delta,\Gamma}\vdash \src{e} : \src{\UNS}
			}{
				\src{\Delta,\Gamma}\vdash \src{\call{f}~e}
			}{compap-fun}
		}
		=&\ \trg{\call{f}~\compap{\src{\Delta,\Gamma}\vdash \src{e}: \UNS}}
		\\
		\tag{\compap{\cdot}-If} 
		\compap{
			\typerule{T\LA-if}{
				\src{\Delta,\Gamma}\vdash \src{e} : \src{\Bool}
				\\
				\src{C,\Delta,\Gamma}\vdash \src{s_t} 
				\\
				\src{C,\Delta,\Gamma}\vdash \src{s_e} 
			}{
				\src{C,\Delta,\Gamma}\vdash \src{\ifte{e}{s_t}{s_e}} 
			}{compap-if}
		}
		=&\ \trg{
			\begin{aligned}
				&\ifztet{\compap{\src{\Delta,\Gamma}\vdash \src{e} : \src{\Bool}}
				\\
				&}
				{\compap{\src{C,\Delta,\Gamma}\vdash \src{s_t} }
				\\
				&
				}{\compap{\src{C,\Delta,\Gamma}\vdash \src{s_e} }}
			\end{aligned}
		}
		\\
		\tag{\compap{\cdot}-Seq} 
		\compap{
			\typerule{T\LA-sequence}{
				\src{C,\Delta,\Gamma}\vdash \src{s_u} 
				\\
				\src{C,\Delta,\Gamma}\vdash \src{s} 
			}{
				\src{C,\Delta,\Gamma}\vdash \src{s_u;s} 
			}{compap-seq}
		}
		=&\ \trg{\compap{\src{C,\Delta,\Gamma}\vdash \src{s_u} } ; \compap{\src{C,\Delta,\Gamma;\Gamma'}\vdash \src{s} }}
		\\
		\tag{\compap{\cdot}-Letin} 
		\compap{
			\typerule{T\LA-letin}{
				\src{\Delta,\Gamma}\vdash \src{e} : \src{\tau} 
				\\
				\src{C,\Delta,\Gamma;x:\tau}\vdash \src{s}
			}{
				\src{C,\Delta,\Gamma}\vdash \src{\letin{x:\tau}{e}{s}}
			}{compap-vardef}
		}
		=&\ \trg{
			\begin{aligned}
				&
				\letint{\trg{x}}{\compap{\src{\Delta,\Gamma}\vdash \src{e} : \src{\tau}}
				\\
				&
				}{\compap{\src{C,\Delta,\Gamma;x:\tau}\vdash \src{s} }}
			\end{aligned}
			}
		\\
		\tag{\compap{\cdot}-Assign}
		\compap{
			\typerule{T\LA-assign}{
				\src{\Delta,\Gamma}\vdash \src{x} : \src{\Refs{\tau}}
				\\
				\src{\Delta,\Gamma}\vdash \src{e} : \src{\tau} 
			}{
				\src{C,\Delta,\Gamma}\vdash \src{x := e} 
			}{compap-ass}
		}
		=&\ \trg{
			\begin{aligned}
				&
				\letint{\trg{x1}~}{~\trg{\projone{x}}
				\\
				&\
				}{
					\letint{\trg{x2}~}{~\trg{\projtwo{x}}
					\\
					&\ \ 
					}{
						\trg{x1 := \compap{\src{\Delta,\Gamma}\vdash \src{e} : \src{\tau}} \with{x2}}
					}
				}
			\end{aligned}
		}
		\\
		\tag{\compap{\cdot}-Fork} 
		\compap{
			\typerule{T\LA-fork}{
				\src{C,\Delta,\Gamma}\vdash \src{s} 
			}{
				\src{C,\Delta,\Gamma}\vdash \src{\fork{s}} 
			}{compap-fork}
		}
		=&\ \trg{\fork{\compap{\src{C,\Delta,\Gamma}\vdash \src{s} }}}
		\\
		\tag{\compap{\cdot}-Proc} 
		\compap{
			\typerule{T\LA-process}{
				\src{C,\Delta,\Gamma}\vdash \src{s} 
			}{
				\src{C,\Delta,\Gamma}\vdash \src{\proc{s}{\OB{f}}}
			}{compap-proc}
		}
		=&\ 
			\begin{aligned}
				&
				\trg{\proc{\compap{\src{C,\Delta,\Gamma}\vdash \src{s} }}{\compap{\OB{f}}}}
			\end{aligned}
		\\
		\tag{\compap{\cdot}-Soup} 
		\compap{
			\typerule{T\LA-soup}{
				\src{C,\Delta,\Gamma}\vdash \src{\pi} 
				\\
				\src{C,\Delta,\Gamma}\vdash \src{\Pi}
			}{
				\src{C,\Delta,\Gamma}\vdash \src{\pi\parallel\Pi}
			}{compap-soup}
		}
		=&\
			\trg{\compap{\src{C,\Delta,\Gamma}\vdash \src{\pi} }\parallel\compap{\src{C,\Delta,\Gamma}\vdash \src{\Pi} }}
		\\
		\tag{\compap{\cdot}-Endorse} 
		\compap{
			\typerule{T\LA-endorse}{
				\src{\Delta,\Gamma}\vdash\src{e}:\UNS
				\\
				\src{C,\Delta,\Gamma;(x:\varphi)} \vdash\src{s} %
			}{
				\src{C,\Delta,\Gamma}\vdash \src{\myendorse{x}{e}{\varphi}{s}} 
			}{compap-end}
		}
		=&\
			\begin{cases}
				\trg{
					\begin{aligned}
						&
						\destruct{\trg{x}~}{\compap{\src{\Delta,\Gamma}\vdash \src{e} : \UNS}}{\trg{nat}}{
							\\
							&\ 
							\ifztet{\trg{x}}{
							\\
							&\ \ 
							\compap{ \src{C,\Delta,\Gamma;(x:\varphi)} \vdash\src{s} }
							\\
							&\ \ 
							}{
								\ifztet{\trg{x-1}}{
								\\
								&\ \ \ 
								\compap{ \src{C,\Delta,\Gamma;(x:\varphi)} \vdash\src{s} }
								\\
								& \ \ \
								}{\wrong}
							}
							\\
							&
						}{\wrong}
					\end{aligned}
				}
				\\
				\text{ if }\src{\varphi}=\src{\Bools}
				\\
				\\
				\trg{
					\begin{aligned}
						&
						\destruct{\trg{x}~}{\compap{\src{\Delta,\Gamma}\vdash \src{e} : \UNS}}{\trg{nat}}{
							\\
							&\ 
							\compap{ \src{C,\Delta,\Gamma;(x:\varphi)} \vdash\src{s} }
							\\
							&
						}{\wrong}
					\end{aligned}
				}
				\\
				\text{ if }\src{\varphi}=\src{\Nats}
				\\
				\\
				\trg{
					\begin{aligned}
						&
						\destruct{\trg{x}~}{\compap{\src{\Delta,\Gamma}\vdash \src{e} : \UNS}}{\trg{pair}}{
							\\
							&\ 
							\compap{ \src{C,\Delta,\Gamma;(x:\varphi)} \vdash\src{s} }
							\\
							&
						}{\wrong}
					\end{aligned}
				}
				\\
				\text{ if }\src{\varphi}=\src{\UNS\times\UNS}
				\\
				\\
				\trg{
					\begin{aligned}
						&
						\destruct{\trg{x}~}{\compap{\src{\Delta,\Gamma}\vdash \src{e} : \UNS}}{\trg{pair}}{
							\\
							&\ 
							\trg{!\projone{x} \with{\projtwo{x}} ;}
							\\
							&\ \ 
							\compap{ \src{C,\Delta,\Gamma;(x:\varphi)} \vdash\src{s} }
							\\
							&
						}{\wrong}
					\end{aligned}
				}
				\\
				\text{ if }\src{\varphi}=\src{\Refs{\UNS}}
			\end{cases}
	\end{align*}
\end{definition}
We write \wrong~ as a shortcut for a failign expression like \trg{3+\truet}.

The remark about optimisation for \compup{\cdot} in \Cref{sec:compup-opt-deref} is also valid for the \Cref{tr:compap-deref} case above.
As expressions are executed atomically, we are sure that albeit inefficient, dereferencing will correctly succeed.

We can add reference to superficial types and check this dynamically in the source, as we have the heap there.
But how do we check this in the target? %
We only assume that reference must be passed as a pair: location- key from the attacker.
Thus the last case of \Cref{tr:compap-end}, where we check that we can access the location, otherwise we'd get stuck.

\paragraph{NonAtomic Implementation of New-Hide}\label{sec:nonatom-new-hide}
We can also implement \Cref{tr:compap-new} using non-atomic instructions are defined in \Cref{tr:compap-new-nonat} below.
\begin{align*}
	\tag{\compap{\cdot}-New-nonat} 
	\compap{
		\typerule{T\LA-new}{
			\src{\Delta,\Gamma}\vdash \src{e} : \src{\tau} 
			\\
			\src{C,\Delta,\Gamma;x:\Refs{\tau}}\vdash \src{s} %
		}{
			\src{C,\Delta,\Gamma}\vdash \src{\letnewty{x}{e}{\tau}{s}} %
		}{compap-new-nonat}	
	}
	=&\
	\begin{cases}
		\trg{
		\begin{aligned}
			&
			\letnewt{\trg{xo}~}{\compap{\src{\Delta,\Gamma}\vdash \src{e} : \src{\tau}} 
			\\
			&\
			}{ 
				\letint{\trg{x}~}{~\trg{\pair{xo,0}}
				\\
				&\ \ 
				}{
					\compap{\src{C,\Delta,\Gamma;x:\Refs{\tau}}\vdash \src{s}} 
				}
			}
		\end{aligned}
		}
		\\
		\text{if }\src{\tau}=\UNS %
		\\
		\\
		\trg{
			\begin{aligned}
       &
       \letnewt{\trg{x_t}~}{\trg{0}}{
       \\
       &\
         \lethide{\trg{x_k}~}{\trg{x_t}}{
         \\
         &\ \ 
           \letint{\trg{x_c}~}{~ \compap{\src{\Delta,\Gamma}\vdash \src{e} : \src{\tau}} 
           }{
           \\
           &\ \ \ 
             \trg{x_t := x_c \with{x_k};}
           \\
           &\ \ \ 
            \letint{\trg{x}~}{~\trg{\pair{x_t,x_k}}}{
            \\
            &\ \ \ \
             \compap{\src{C,\Delta,\Gamma;x:\Refs{\tau}}\vdash \src{s}}
             }
           }
         }
       }
     \end{aligned}
		}
		\\
		\text{otherwise}
	\end{cases}
\end{align*}

\subsection{Properties of the \LA-\LC Compiler}\label{sec:prop-comp}

\begin{theorem}[Compiler \compap{\cdot} is \ccomp]\label{thm:comp-ap-cc}
	$\vdash\compap{\cdot} : \ccomp$
\end{theorem}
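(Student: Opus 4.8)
The plan is to establish \Cref{thm:comp-ap-cc} by a cross-language backward simulation that matches each step of the source whole program with the block of target micro-steps emitted by \compap{\cdot} for the corresponding construct, while maintaining the state relation $\relatebeta$ of \Cref{tr:state-rel-whole} (lifted to process soups via \Cref{tr:sing-proc-rel,tr:proc-rel}) and growing the partial bijection $\beta$ at every allocation. Since the program is whole, the emitted trace is empty and the obligation of \Cref{def:comp-corr} reduces to showing that the compiled program terminates iff the source one does, in a related final heap. First I would fix $\beta_0$ from the initial-heap set-up $\src{\Delta}\vdash_{\beta_0}\trg{H_0}$ (\Cref{tr:ini-heap}) and check that the initial states $\SInits{C}$ and $\SInitt{\compap{C}}$ are related under $\beta_0$: both start from $\src{\call{main}~0}$, and $\src{H_0}\relate_{\beta_0}\trg{H_0}$ holds by construction.

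Next I would prove two auxiliary lemmas. The first is an \emph{expression} lemma: whenever $\src{H}\relatebeta\trg{H}$ and $\src{\Delta,\Gamma}\vdash\src{e}:\src{\tau}$ with $\src{H\triangleright e\redtos v}$, the compiled expression evaluates atomically to a related value, $\trg{H\triangleright \compap{e}\redtot v'}$ with $\src{v}\relatebeta\trg{v'}$. This discharges the encoding of booleans ($\trues/\falses$ to $\trg{0}/\trg{1}$), of locations (as pairs $\trg{\pair{n,k}}$ carrying their capability), and the use of capabilities in the compilation of dereference. The second is the core \emph{single-thread backward simulation}: if the target block $\compap{\src{s}}$ runs to a configuration, then $\src{s}$ takes a matching step to a $\relatebeta$-related configuration, possibly extending $\beta$. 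This proceeds by case analysis on the source typing derivation driving the compiler. The interesting cases are allocation, where \compap{\cdot} emits $\trg{newhide}$ for trusted types (\Cref{tr:et-atom}) and a plain $\trg{new}$ with dummy capability $\trg{0}$ for $\UNS$; here I extend $\beta$ with $(\src{\ell},\trg{n},\trg{k})$ (resp. $(\src{\ell},\trg{n},\trg{\bot})$) and re-establish heap relatedness; assignment and endorsement, which expand to several administrative let-bindings plus one atomic effect; and internal function calls (all calls are internal in a whole program).

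The main obstacle is \emph{concurrency}: each source step compiles to a \emph{block} of several target steps, and another thread may interleave in the middle of such a block. I would therefore phrase the simulation as a \emph{stuttering} backward simulation in which the administrative target micro-steps of a block (projections, let-bindings binding thread-local variables) correspond to no source step, and argue that these steps are harmless under interleaving because they are purely thread-local and read-only on the shared heap: the only heap-mutating or allocation-effecting step in each block is a single atomic reduction ($\trg{new}$, the atomic $\trg{newhide}$ of \Cref{tr:et-atom}, or assignment), so administrative steps commute with the steps of other threads and can be postponed until just before the atomic effect. This lets me relate every reachable target soup to a source soup in which the pending thread is poised to perform exactly that source statement, using \Cref{tr:et-par} and \Cref{tr:es-par} to lift the per-thread argument to soups.

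Finally I would assemble the result by induction on the length of the target execution, appealing to the stuttering simulation at each step, and observe that a terminated target soup (\Cref{tr:state-term}, every process at $\skipt$) can only arise from a source soup in which every process is at $\skips$, i.e. a terminated source state; relatedness of the final heaps then yields $\src{\Omega}\relatebeta\trg{\Omega}$ with $\beta_0\subseteq\beta$ and the empty trace, discharging \Cref{def:comp-corr}. I expect the commutation/stuttering argument for interleaved administrative steps to be the technically heaviest part; everything else mirrors the backward-simulation reasoning already used for \compup{\cdot} in \Cref{thm:back-sim}.
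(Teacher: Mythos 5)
Your proposal is sound, but it takes the opposite direction from the paper. The paper's proof of \Cref{thm:comp-ap-cc} is two lines: it establishes $\SInits{C}\relate_{\beta_0}\SInitt{\compap{\src{C}}}$ for $\beta_0$ induced by $\src{\Delta}\vdash_{\beta_0}\trg{H_0}$, and then invokes a \emph{forward} simulation lemma (\Cref{thm:gen-cc}: if the source soup steps, the compiled soup takes matching steps to a $\relate_{\beta'}$-related state with $\beta\subseteq\beta'$, proved by induction on the typing derivation of $\src{\Pi}$ and then of $\src{s}$, with your auxiliary expression lemma appearing verbatim as \Cref{thm:expr-rel-ap} and the same $\beta$-extension at the two allocation subcases), flipping to the backward statement demanded by \Cref{def:comp-corr} via the remark that ``the languages have no notion of internal nondeterminism.'' You instead prove the backward direction head-on, as a stuttering simulation from target micro-step blocks to single source steps, discharging interleavings by a commutation argument: the administrative lets and projections are thread-local and each compiled block has exactly one heap-effecting atomic reduction (a point that genuinely depends on the atomic $\trg{newhide}$ of \Cref{tr:et-atom}, as the paper itself concedes when it weakens the relation for the non-atomic variant of \Cref{sec:nonatom-alloc}). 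The trade-off: the paper's route makes the case analysis lighter, since forward simulation follows the compiler's structure on typing derivations, but it leaves the determinacy flip implicit, which is the delicate point in a concurrent language where the scheduler is a source of nondeterminism; your route proves exactly the refinement that \ccomp states and makes the interleaving reasoning explicit, at the price of the commutation bookkeeping you correctly flag as the heaviest part. Both arguments go through; yours is arguably the more rigorous treatment of the concurrent case, while the paper's is the more economical.
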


\begin{theorem}[Compiler \compap{\cdot} is \rscomp]\label{thm:comp-ap-rsc}
	$\vdash\compap{\cdot} : \rscomp$
\end{theorem}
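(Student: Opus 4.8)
The plan is to establish \Cref{thm:comp-ap-rsc} directly, \emph{without} any backtranslation, by exhibiting a cross-language invariant that every target execution maintains and that is strong enough to keep the (good) target monitor from ever getting stuck. First I would unfold \Cref{def:rsc}: fix $\src{C}$, $\src{M}$ and $\trg{M}$ with $\src{M}\vdash\src{C}:\src{rs}$ and $\src{M}\relate\trg{M}$, the goal being $\trg{M}\vdash\compap{\src{C}}:\trg{rs}$. Because $\compap{\cdot}$ is defined by recursion on a typing derivation (\Cref{fig:compap}), the very existence of $\compap{\src{C}}$ forces $\vdash\src{C}:\UNS$, so the robust-type-safety discipline underlying \Cref{thm:src-ty-impl-safe} is available. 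Unfolding $\src{M}\relate\trg{M}$ (\Cref{tr:mon-rel-ap}) gives a fixed bijection $\beta_0$ between $\dom{\src{\Delta}}$ and $\dom{\trg{H_0}}$ together with the goodness property $\vdash\trg{M}:\beta_0,\src{\Delta}$ of \Cref{tr:ok-mon}: $\trg{M}$ can step on any trusted target heap that is $\relatebeta$-related to a source heap well-typed at $\src{\Delta}$. Finally, unfolding $\trg{rs}$, it remains to show that for every valid target attacker $\trg{A}$ (with $\trg{M}\agree\compap{\src{C}}$) and every trace $\SInitt{A\hole{\compap{\src{C}}}}\Xtot{\trgb{\OB{\alpha}}}\trg{\_}$ we have $\trg{M}\vdash\trgb{\OB{\alpha}}$.

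\textbf{The invariant.} Next I would fix an invariant $I$ on target runtime states, parametrised by a growing bijection $\beta\supseteq\beta_0$, asserting exactly the four clauses listed informally in the body: (i) the target heap splits into a \emph{trusted} part, whose locations lie in the target range of $\beta$ and are each protected by a capability, and an \emph{untrusted} part; (ii) there is a source heap $\src{H}$ with $\vdash\src{H}:\src{\Delta}$ and $\src{H}\relatebeta\monh{\trg{H},\trg{H_0}}$; and (iii) no capability protecting a trusted location occurs in attacker code or in any untrusted heap cell. Establishing $I$ at $\SInitt{A\hole{\compap{\src{C}}}}$ is routine: the trusted heap is exactly $\trg{H_0}$, which by $\src{\Delta}\vdash_{\beta_0}\trg{H_0}$ (\Cref{tr:ini-heap}, \Cref{tr:ini-val}) is capability-protected and $\relatebeta$-related to the initial $\src{\Delta}$-heap, which is well-typed by construction, while $\compap{\src{C}}\vdash\trg{A}:\trg{atk}$ guarantees that $\trg{A}$ contains no trusted capability a priori.

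\textbf{Preservation and conclusion.} I would then prove that $I$ is preserved by every reduction step, splitting on whether the stepping thread runs compiled component code or attacker code. For component steps, the compiler protects every trusted allocation with a fresh capability (\Cref{tr:et-atom}), routes every trusted write through that capability, and only ever exposes $\UNS$-typed values across the boundary, so such steps preserve the types of trusted locations and never store a trusted capability into an untrusted cell; this is the robust-type-preservation content of \Cref{thm:src-ty-impl-safe} transported across $\relatebeta$, which I would obtain as a subject-reduction argument directed by the same typing derivation that directs $\compap{\cdot}$. For attacker steps, clause (iii) tells me $\trg{A}$ holds no capability for any trusted location, so \Cref{tr:et-de-t} and \Cref{tr:et-ac-t} forbid it from reading or writing trusted locations; it can only touch untrusted cells, which lie outside $\monh{\trg{H},\trg{H_0}}$ and cannot receive a trusted capability, so clauses (ii)--(iii) survive. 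With $I$ thus invariant, at every action of $\trgb{\OB{\alpha}}$ clause (ii) supplies a $\src{\Delta}$-well-typed source heap related to $\monh{\trg{H},\trg{H_0}}$; goodness of $\trg{M}$ then lets the monitor step rather than get stuck, giving $\trg{M}\vdash\trgb{\OB{\alpha}}$ and hence $\trg{M}\vdash\compap{\src{C}}:\trg{rs}$.

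\textbf{Main obstacle.} The hard part will be the non-leakage clause (iii) under concurrency. I must show that the source $\UNS$-subtyping discipline ($\src{\tau}\vdash\circ$) really does guarantee that any value flowing to the context carries no trusted capability --- which rests on the fact that eliminating a shareable value yields only $\UNS$-typed, hence unprotected, locations --- and that arbitrary interleavings of an adversary thread with a component thread cannot break the heap partition. The most delicate point is the benign race of the non-atomic allocation scheme of \Cref{sec:nonatom-alloc}, where an adversary thread may seize a capability before the allocating component thread does; handling it requires enlarging $I$ with the ``stuck-trying-to-acquire-a-capability'' disjunct described there, so that the component thread safely blocks instead of continuing with an unprotected trusted location.
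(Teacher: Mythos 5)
Your proposal is correct and follows essentially the same route as the paper: your invariant $I$ is precisely the paper's cross-language relation $\relatetbeta$ of \Cref{tr:state-rel-proof} (trusted/untrusted heap partition, related capability-protected trusted locations, non-leakage of high capabilities), preserved step-by-step via the same case split into compiled-code steps (\Cref{thm:comp-act-pres-relatet}, resting on \Cref{thm:src-ty-impl-safe}) and attacker steps (\Cref{thm:att-act-pres-relatet}), with monitor goodness plus relatedness of the monitored heaps (\Cref{thm:relatet-impl-high-heap-rel}) closing the argument. You even anticipate the paper's treatment of the non-atomic allocation race, which it handles exactly as you suggest, by weakening the relation with the stuck-on-\trg{hide} disjunct of \Cref{tr:state-nonat-st}.
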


\subsection{Cross-language Relation $\relatetbeta$}\label{sec:relatet-def}
We define a more lenient relation on states $\relatetbeta$ analogous to $\relatebeta$ (\Cref{tr:state-rel-whole}) but that ensures that all target locations that are related to secure source ones only vary accordingly: i.e., the attacker cannot change them.

\mytoprule{\src{\Omega}\relatetbeta\trg{\Omega}}
\begin{center}
	\typerule{\LA-Secure heap}{
		\src{H'}=\myset{\src{\ell\mapsto v:\tau}}{\src{\ell\mapsto v:\tau}\in\src{H} \text{ and } \src{\tau}\nvdash\circ}
	}{
	\vdash\sech{\src{H}} = \src{H'}
	}{h-sec-s}
	\typerule{\LC-Low Location}{
		\nexists\src{\ell}\in\sech{\src{H}}
		&
		\src{\ell}\relatebeta\trg{\pair{n,\_}}
		&
		\trg{n}\in\dom{\trg{H}}
	}{
		\src{H},\trg{H}\vdash\lowloc{\trg{n}}
	}{lowloc}
	\typerule{\LC-High Location}{
		\src{\ell}\in\sech{\src{H}}
		&
		\src{\ell}\relatebeta\trg{\pair{n,k}}
		&
		\trg{n\mapsto \_:k}\in\trg{H}
	}{
		\src{H},\trg{H}\vdash\highloc{\trg{n}} = \src{\ell},\trg{k}
	}{highloc}
	\typerule{\LC-High Capability}{
		\src{\ell}\in\sech{\src{H}}\ldotp
		\src{\ell}\relatebeta\trg{\pair{n,k}}
		&
		\trg{n\mapsto \_:k}\in\trg{H}
	}{
		\src{H},\trg{H}\vdash\highcap{\trg{k}}
	}{highcap}
	\typerule{Related states -- Secure}{
		\src{\Omega}=\src{\Delta ; \OB{F},\OB{F'} ; \OB{I} ; H \triangleright \Pi}
		&
		\trg{\Omega}=\trg{H_0 ; \OB{F},\compap{\OB{F'}} ; \OB{I} ; H \triangleright \Pi}
		&
		\src{\Delta}\vdash_{\beta}\trg{H_0}
		\\
		\forall \trg{k}, \trg{n}, \src{\ell} \ldotp \text{ if } \src{H},\trg{H}\vdash\highloc{\trg{n}} = \src{\ell},\trg{k} \text{ then }
			\\
			(1)~ \forall \trg{\pi}\in\trg{\Pi} \text{ if } \src{C}\vdash\trg{\pi}:\trg{attacker} \text{ then } \trg{k}\notin\fun{fv}{\trg{\pi}}
			\\
			(2)~ \forall \trg{n'\mapsto v:\eta} \in \trg{H},
			\\
			(2a) \text{ if } 
				\trg{\eta}=\trg{k}
			\text{ then } 
				\trg{n}=\trg{n'} \text{ and }
				\src{\ell}\relatebeta\trg{\pair{n,k}} \text{ and } \src{\ell\mapsto v:\tau}\in\src{H} \text{ and } \src{v}\relatebeta\trg{v} 
			\\
			(2b) \text{ if }
				\trg{\eta}\neq\trg{k}
			\text{ then } 
				\src{H},\trg{H}\vdash\lowloc{\trg{n'}} \text{ and } \forall \trg{k'}\ldotp\src{H},\trg{H}\vdash\highcap{\trg{k'}}, \trg{v}\neq\trg{k'} 
	}{
		\src{\Omega}\relatetbeta\trg{\Omega}
	}{state-rel-proof}
\end{center}
\botrule
There is no \sech{\cdot} function for the target because they would be all locations that are related to a source location that itself is secure in the source.
An alternative is to define \sech{\cdot} as all locations protected by a key \trg{k} but the point of \sech{\cdot} is to setup the invariant to ensure the proof hold, so this alternative would be misleading.

\Cref{tr:lowloc} tells when a target location is not secure.
That is, when there is no secure source location that is related to it.
This can be because the source location is not secure or because the relation does not exist, as in order for it to exist the triple must be added to $\beta$ and we only add the triple for secure locations.

The intuition behind \Cref{tr:state-rel-proof} is that two states are related if the set of locations they monitor is related and then:
for any target location \trg{n} that is high (i.e., it has a related source counterpart \src{\ell} whose type is secure and that is protected with a capability \trg{k} that we call a high capability), then we have:
(1) the capability \trg{k} used to lock it is not in in any attacker code;
(2) for any target level location \trg{n'}: 
(2a) either it is locked with a high capability \trg{k} (i.e., a capability used to hide a high location) thus \trg{n'} is also high, in which case it is related to a source location \src{\ell} and the values \src{v}, \trg{v} they point to are related;
or (2b) it is not locked with a high capability, so we can derive that \trg{n'} is a low location and its content \trg{v} is not any high capability \trg{k'}.

\BREAK

\begin{lemma}[A target location is either high or low]\label{thm:targ-loc-h-or-l}
	\begin{align*}
		&
		\forall 
		\\
		\text{if }
		&\
		\src{H}\relatebeta\trg{H}
		\\
		&\
		\trg{n\mapsto v:\eta}\in\trg{H}
		\\
		\text{ then either }
		&\
		\src{H},\trg{H}\vdash\lowloc{\trg{n}}
		\\
		\text{ or }
		&\
		\exists \src{\ell}\in\dom{\src{H}}.
		\\
		&\
		\src{H},\trg{H}\vdash\highloc{\trg{n}} = \src{\ell},\trg{\eta}
	\end{align*}
\end{lemma}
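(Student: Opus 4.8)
The plan is to establish the dichotomy by the law of excluded middle applied to the existence of a \emph{secure} source pre-image of the address $\trg{n}$ under the bijection $\beta$. Concretely, I would case-split on the proposition ``there is $\src{\ell}\in\sech{\src{H}}$ with $\src{\ell}\relatebeta\trg{\pair{n,\_}}$''. These two cases line up exactly with the premises that distinguish \Cref{tr:lowloc} from \Cref{tr:highloc}, so each branch will be discharged by applying the corresponding rule.

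In the negative branch (no secure $\src{\ell}$ is related to $\trg{n}$), the conclusion $\lowloc{\trg{n}}$ follows at once from \Cref{tr:lowloc}: its first premise is precisely the case hypothesis, and its side condition $\trg{n}\in\dom{\trg{H}}$ is immediate from $\trg{n\mapsto v:\eta}\in\trg{H}$. In the positive branch, I would fix a witness $\src{\ell}\in\sech{\src{H}}$ with $\src{\ell}\relatebeta\trg{\pair{n,\_}}$ and aim at $\highloc{\trg{n}}=\src{\ell},\trg{\eta}$ via \Cref{tr:highloc}. Two of the three premises are already available: $\src{\ell}\in\sech{\src{H}}$ is the case assumption, and $\trg{n\mapsto\_:\eta}\in\trg{H}$ is weakened from the membership hypothesis. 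The remaining obligation is $\src{\ell}\relatebeta\trg{\pair{n,\eta}}$, i.e.\ to show that the heap tag $\trg{\eta}$ stored at $\trg{n}$ is in fact a capability and coincides with the capability that $\beta$ records for $\src{\ell}$.

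The hard part will be this last matching step, because the bare location relation $\src{\ell}\relatebeta\trg{\pair{n,\_}}$ does not by itself constrain the heap tag $\trg{\eta}$. I would recover it by induction on the derivation of $\src{H}\relatebeta\trg{H}$ (\Cref{tr:hrel-i}): peeling off the source cell $\src{\ell\mapsto v}$ exposes a target cell $\trg{n'\mapsto v':\eta'}$ with $\src{\ell}\relatebeta\trg{\pair{n',\_}}$, and the functionality of $\beta$ in its source component (\Cref{sec:compup-proof}) forces $\trg{n'}=\trg{n}$ and pins the recorded tag, so the exposed heap tag is exactly $\trg{\eta}$. Finally, since $\src{\ell}$ is secure (its type satisfies $\src{\tau}\nvdash\circ$), its $\beta$-triple cannot carry tag $\trgb{\bot}$ -- unprotected cells only ever back \UNS-typed (insecure) locations under the allocation discipline of \Cref{fig:compap} -- so the triple is $(\src{\ell},\trg{n},\trg{k})\in\beta$ with $\trg{\eta}=\trg{k}$, yielding $\src{\ell}\relatebeta\trg{\pair{n,\eta}}$ and closing the case. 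I expect the only genuinely delicate point to be stating precisely the invariant ``secure locations are capability-protected with the tag recorded in $\beta$'', which I would either carry as a standing well-formedness condition on $\src{H}\relatebeta\trg{H}$ or re-derive from the compiler's use of \letatom{\cdot}{\cdot}{\cdot} for non-\UNS\ allocations.
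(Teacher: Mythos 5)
Your excluded-middle split is exactly the paper's argument: the paper's entire proof of \Cref{thm:targ-loc-h-or-l} is the single line ``Trivial, as \Cref{tr:lowloc} and \Cref{tr:highloc} are duals'', i.e., the dichotomy on whether some $\src{\ell}\in\sech{\src{H}}$ satisfies $\src{\ell}\relatebeta\trg{\pair{n,\_}}$, with each branch closed by the matching rule. Where you go beyond the paper is the high branch, and your instinct there is sound: \Cref{tr:highloc} additionally demands that the stored tag equal the capability that $\beta$ records for $\src{\ell}$, and this does not follow from the dichotomy alone. One caveat on your repair, though: induction on the derivation of $\src{H}\relatebeta\trg{H}$ cannot by itself ``pin the exposed heap tag'', because \Cref{tr:hrel-i} relates $\src{\ell}$ to $\trg{\pair{n,\_}}$ with a wildcard and places no constraint tying the cell's tag $\trgb{\eta}$ to the third component of the $\beta$-triple; functionality of $\beta$ fixes what $\beta$ records, not what the heap stores. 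So the burden really does land on the invariant you name---secure locations are protected with exactly the $\beta$-recorded capability---and your proposal to carry it as a standing condition is the right call: in the paper this slack is absorbed not in the lemma but at its use sites, since \Cref{thm:targ-loc-h-or-l} is only invoked for states related by $\relatetbeta$, whose clauses (2a)/(2b) in \Cref{tr:state-rel-proof} state precisely that invariant (high capabilities protect exactly their related locations, and everything else is low). Your alternative of re-deriving the invariant from the \trg{newhide} allocation discipline of \Cref{fig:compap} would amount to re-proving part of \Cref{thm:comp-act-pres-relatet}, so the standing-condition route is both cheaper and faithful to how the development actually works.
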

\begin{proof}
	Trivial, as \Cref{tr:lowloc} and \Cref{tr:highloc} are duals.
\end{proof}

\newpage

 \newpage
\section{\rscomp: Third Instance with Target Memory Isolation}\label{sec:rsc-iso}

Both compilers presented so far used a capability-based target
language. To avoid giving the false impression that \rscomp is only
useful for this kind of a target, we show here how to attain \rscomp
when the protection mechanism in the target is completely
different. We consider a new target language, \LI, which does not have
capabilities, but instead offers coarse-grained memory isolation based
on \emph{enclaves}. This mechanism is supported (in hardware) in
mainstream x86-64 and ARM CPUs (Intel calls this SGX~\cite{intel}; ARM
calls it TrustZone~\cite{arm}). This is also straightforward to
implement purely in software using any physical, VM-based,
process-based, or in-process isolation technique. 
This section provides a high-level discussion on how to devise
compiler \compai{\cdot} from our source language \LA to \LI and why it attains \rscomp.
Full formal details are presented in subsequent sections.

\subsection{\LI, a Target Language with Memory Isolation}\label{sec:target-iso}

Language \LI replaces \LC's capabilities with a simple security
abstraction called an enclave. An enclave is a collection of code and
memory locations, with the properties that: (a) only code within the
enclave can access the memory locations of the enclave, and (b) Code
from outside can transfer control only to designated entry points in
the enclave's code. For simplicity, \LI supports only one
enclave. Generalizing this to many enclaves is straightforward.

To model the enclave, a \LI program has an additional component
\oth{\OB{E}}, the list of functions that reside in the enclave. A
component thus has the form $\oth{C} \bnfdef\ \oth{H_0 ; \OB{F} ;
  \OB{I} ; \OB{E}}$. Only functions that are listed in \oth{\OB{E}}
can create (\oth{\letiso{x}{e}{s}}), read (\oth{!e}) and write
(\oth{x:=e}) locations in the enclave. Locations in \LI are
\emph{integers} (not natural numbers). By convention, non-negative
locations are outside the enclave (accessible from any function),
while negative locations are inside the enclave (accessible only from
functions in \oth{\OB{E}}). The semantics are almost those of \LC, but
the expression semantics change to $\trg{C;H;f\triangleright e
  \redtoo v}$, recording which function \oth{f} is currently
executing. The operational rule for any memory operation checks that
either the access is to a location outside the enclave or that
$\oth{f} \in \oth{\OB{E}}$ (formalized by
$\oth{C}\vdash\oth{f}:\oth{prog}$).
Monitors of \LI are the same as those of \LC.

\subsection{Compiler from \LA to \LI}\label{sec:comp-ai}
The high-level structure of the compiler \compai{\cdot} is similar to
that of \compap{\cdot}.  \compai{\cdot} ensures
that all the (and only the) functions of the (trusted) component we
write are part of the enclave, i.e., constitute \oth{\OB{E}} (first
rule below).  Additionally, the compiler populates the safety-relevant
heap \oth{H_0} based on the information in \src{\Delta} (captured by
the judgement $\src{\Delta}\vdash\oth{H_0}$, whose details we elide
here).  Importantly, \compai{\cdot} also ensures that trusted
locations are stored in the enclave.  As before, the compiler relies
on typing information for this.  Locations whose types are shareable
(subtypes of \UNS) are placed outside the enclave while those that
trusted (not subtypes of \UNS) are placed inside.

As mentioned, \compai{\cdot} also attains \rscomp. The intuition is
simple: all trusted locations (including safety-relevant locations)
are in the enclave and adversarial code cannot tamper with them.  The
proof follows the proof of the previous compiler: We build a
cross-language relation, which we show to be an invariant on
executions of source and corresponding compiled programs.  The only
change is that every location in the \oth{trusted} \oth{target}
\oth{heap} is isolated in the enclave.

\section{The Second Target Language: \LI}\label{sec:trg-iso}
For clarity, we use a \oth{pink, italics} font for \LI.

\subsection{Syntax}\label{sec:iso-syn}
\begin{align*}
	\mi{Whole\ Programs}~\oth{P} \bnfdef&\ \oth{H_0 ; \OB{F} ; \OB{I} ; \OB{E}}
	\\
	\mi{Components}~\oth{C} \bnfdef&\ \oth{H_0 ; \OB{F} ; \OB{I} ; \OB{E}}
	\\
	\mi{Contexts}~\oth{A} \bnfdef&\ \oth{\OB{F}\hole{\cdot}}
	\\
	\mi{Interfaces}~\oth{I} \bnfdef&\ \oth{f}
	\\
	\mi{Enclave\ functions}~\oth{E} \bnfdef&\ \oth{f}
	\\
	\mi{Functions}~\oth{F} \bnfdef&\ \oth{f(x)\mapsto s;\ret}
	\\
	\mi{Operations}~\oth{\op} \bnfdef&\ \oth{+} \mid \oth{-}
	\\
	\mi{Comparison}~\oth{\bop} \bnfdef&\ \oth{==} \mid \oth{<} \mid \oth{>}
	\\
	\mi{Values}~\oth{v} \bnfdef&\ \oth{n}\in\mb{Z} \mid \oth{\pair{v,v}} \mid \oth{k} %
	\\
	\mi{Expressions}~\oth{e} \bnfdef&\ \oth{x} \mid \oth{v} \mid \oth{e \op e} \mid \oth{e \bop e} \mid \oth{\pair{e,e}} 
	\mid \oth{\projone{e}} \mid \oth{\projtwo{e}} \mid \oth{!e}
	\\
	\mi{Statements}~\oth{s} \bnfdef&\ \skipo \mid \oth{s;s} \mid \oth{\letin{x}{e}{s}} \mid \oth{\ifzte{e}{s}{s}} 
	\mid \oth{\call{f}~e} 
	\\
	\mid&\
	\mid \oth{\fork{s}} \mid \oth{\destructo{x}{e}{B}{s}{s}}
	\\
	\mid&\ \oth{x := e} \mid \oth{\letnew{x}{e}{s}} \mid \oth{\letiso{x}{e}{s}}
	\\
	\mi{Patterns}~\oth{B} \bnfdef&\ \oth{nat} \mid \oth{pair}
	\\
	\mi{Eval.\ Ctxs.}~\oth{E} \bnfdef&\ \oth{\hole{\cdot}} \mid \oth{e \op E} \mid \oth{E \op n} \mid \oth{e \bop E} \mid \oth{E \bop n} 
	\mid \oth{!E}
	\\
	\mid&\ \oth{\pair{e,E}} \mid \oth{\pair{E,v}} \mid \oth{\projone{E}} \mid \oth{\projtwo{E}} 
	\\
	\mi{Heaps}~\oth{H} \bnfdef&\ \othe \mid \oth{H ; n\mapsto v}
	\\
	\mi{Monitors}~\oth{M} \bnfdef&\ \oth{(\set{\sigma},\monred,\sigma_0,H_0,\sigma_c)}
	\\
	\mi{Mon.\ States}~\oth{\sigma} \in&\ \oth{\mc{S}}
	\\
	\mi{Mon.\ Reds.}~\oth{\monred} \bnfdef&\ \othe \mid \oth{\monred;(s,H,s)}
	\\
	\mi{Substitutions}~\oth{\rho} \bnfdef&\ \othe \mid \oth{\rho}\subo{v}{x}
	\\
	\mi{Single\ Process}~ \oth{\pi} \bnfdef&\ \oth{\proc{s}{\OB{f}}}
	\\
	\mi{Processes}~\oth{\Pi} \bnfdef&\ \othe \mid \oth{\Pi \parallel \pi}
	\\
	\mi{Prog.\ States}~\oth{\Omega}\bnfdef&\ \oth{C, H\triangleright \Pi}  %
	\\
	\mi{Labels}~\oth{\lambda} \bnfdef&\ \oth{\epsilon} \mid \oth{\alpha}
	\\
	\mi{Actions}~\oth{\alpha} \bnfdef&\ \oth{\clh{f}{v}{H}} \mid \oth{\cbh{f}{v}{H}} \mid \oth{\rth{v}{H}} \mid \oth{\rbh{v}{H}}
	\\
	\mi{Traces}~\oth{\OB{\alpha}} \bnfdef&\ \othe \mid \oth{\OB{\alpha}\cdot\alpha}
\end{align*}

\subsection{Operational Semantics of \LI}\label{sec:iso-sem}
\mytoprule{\text{Helpers}}
\begin{center}
	\typerule{\LI-Jump-Internal}{
		((\oth{f'}\in\oth{\OB{I}} \wedge \oth{f}\in\oth{\OB{I}}) \vee
				\\
		(\oth{f'}\notin\oth{\OB{I}} \wedge \oth{f}\notin\oth{\OB{I}}))
	}{
		\oth{\OB{I}}\vdash\oth{f,f'}:\oth{internal}
	}{o-aux-intern}
	\typerule{\LI-Jump-IN}{
		\oth{f}\in\oth{\OB{I}} \wedge \oth{f'}\notin\oth{\OB{I}}
	}{
		\oth{\OB{I}}\vdash\oth{f,f'}:\oth{in}
	}{o-aux-in}
	\typerule{\LI-Jump-OUT}{
		\oth{f}\notin\oth{\OB{I}} \wedge \oth{f'}\in\oth{\OB{I}}
	}{
		\oth{\OB{I}}\vdash\oth{f,f'}:\oth{out}
	}{o-aux-out}
	\typerule{\LI-prog-execs}{
		 \oth{C} = \oth{H_0 ; \OB{F} ; \OB{I} ; \OB{E}}
		 &
		 \oth{f}\in\oth{\OB{E}}
	}{
		\oth{C}\vdash\oth{f}:\oth{prog}
	}{o-aux-prog}
	\typerule{\LI-Plug}{
		\oth{A} \equiv \oth{\OB{F}\hole{\cdot}}
		&
		\oth{C}\equiv\oth{H_0 ; \OB{F'} ; \OB{I} ; \OB{E} } 
		\\
		\vdash\oth{C,\OB{F}}:\oth{whole}
		&
		\oth{main(x)\mapsto s;\ret}\in\oth{\OB{F}}
	}{
		\oth{A\hole{C}} = \oth{H_0; \OB{F;F'}; \OB{I} ; \OB{E}}
	}{plug-o}
	\typerule{\LI-Whole}{
		\oth{C}\equiv\oth{H_0 ; \OB{F'} ; \OB{I} ; \OB{E}} 
		\\
		\fun{names}{\oth{\OB{F}}}\cap\fun{names}{\oth{\OB{F'}}}=\emptyset
		&
		\fun{names}{\oth{\OB{I}}}\subseteq \fun{names}{\oth{\OB{F}}}
		&
		\forall \oth{n\mapsto v}\in\oth{H_0}, \oth{n}<\oth{0}
	}{
		\vdash\oth{C,\OB{F}}:\oth{whole}
	}{whole-o}
	\typerule{\LI-Initial State}{
		\oth{P}\equiv\oth{H_0 ; \OB{F} ; \OB{I} ; \OB{E}}
		&
		\oth{main(x)\mapsto s;\ret}\in\oth{\OB{F}}
	}{
		\SInito{P} = \oth{P; H_0 \triangleright \proc{s\subo{0}{x}}{main}}
	}{ini-o}
\end{center}
\botrule

\subsubsection{Component Semantics}\label{src:trg-sem-com}
\begin{align*}
	&\oth{C; H ; \OB{f} \triangleright e \redtoo e'} 
	&&\text{Expression \oth{e} reduces to \oth{e'}.}
	\\
	&\oth{C, H \triangleright \Pi} \xtoo{\epsilon} \oth{C', H' \triangleright \Pi'} 
	&&\text{Processes \oth{\Pi} reduce to \oth{\Pi'} and evolve the rest accordingly.}
	\\
	&&&\text{emitting label \oth{\lambda}.}
	\\
	&\oth{\Omega} \Xtoo{\OB{\alpha}} \oth{\Omega'}
	&& \text{Program state \oth{\Omega} steps to \oth{\Omega'} emitting trace \oth{\OB{\alpha}}.}
	\\
\end{align*}

\mytoprule{\oth{C; H ; \OB{f} \triangleright e \redtoo e'} }
\begin{center}
	\typerule{E\LI-val}{
	}{
		\oth{C; H ; \OB{f} \triangleright v} \redtoo \oth{v}
	}{eo-val}
	\typerule{E\LI-p1}{
	}{
		\oth{C; H ; \OB{f} \triangleright \projone{\pair{v,v'}} \redtoo v}
	}{eo-p1}
	\typerule{E\LI-p2}{
	}{
		\oth{C; H ; \OB{f} \triangleright \projone{\pair{v,v'}} \redtoo v'}
	}{eo-p2}
	\typerule{E\LI-op}{
		n\op n'=n''
	}{
		\oth{C; H ; \OB{f} \triangleright n \op n' \redtoo n''}
	}{eo-op}
	\typerule{E\LI-comp}{
		\text{if } n\bop n'= \text{true} \text{ then } \oth{n''}=\oth{0} \text{ else } \oth{n''}=\oth{1}
	}{
		\oth{C; H ; \OB{f} \triangleright n \bop n' \redtoo n''}
	}{eo-bop}
	\typerule{E\LI-deref}{
		\oth{n\mapsto v}\in\oth{H}
		&
		\oth{n}\geq \oth{0}
	}{
		\oth{C; H ; \OB{f} \triangleright !n } \redtoo \oth{v}
	}{eo-de-t}
	\typerule{E\LI-deref-iso}{
		\oth{n\mapsto v}\in\oth{H}
		&
		\oth{n}< \oth{0}
		&
		\oth{C}\vdash\oth{f}:\oth{prog}
	}{
		\oth{C; H ; \OB{f;f} \triangleright !n } \redtoo \oth{v}
	}{eo-de-k}
	\typerule{E\LI-ctx}{
		\oth{C; H ; \OB{f} \triangleright e \redtoo e'}
	}{
		\oth{C; H ; \OB{f} \triangleright E\hole{e} \redtoo E\hole{e'}}
	}{eo-cth}
\end{center}
\botrule

\mytoprule{\oth{C; H \triangleright s} \xtoo{\lambda} \oth{C'; H' \triangleright s'} }

We elide the suffix with the stack of functions when obvious.
\begin{center}
	\typerule{E\LI-sequence}{
	}{
		\oth{C, H \triangleright \skipo;s} \xtoo{\epsilon} \oth{C, H \triangleright s}
	}{eo-seq}
	\typerule{E\LI-step}{
		\oth{C, H \triangleright s} \xtoo{\lambda} \oth{C, H \triangleright s'}
	}{
		\oth{C, H \triangleright s;s''} \xtoo{\lambda} \oth{C, H \triangleright s';s}
	}{eo-step}
	\typerule{E\LI-if-true}{
		\oth{C; H ; \OB{f} \triangleright e\redtoo 0}
	}{
		\oth{C, H \triangleright \ifzte{e}{s}{s'}} \xtoo{\epsilon} \oth{C, H \triangleright s}
	}{eo-ift}
	\typerule{E\LI-if-false}{
		\oth{C; H ; \OB{f} \triangleright e\redtoo n}
		&
		\oth{n}\not\equiv\oth{0}
	}{
		\oth{C, H \triangleright \ifzte{e}{s}{s'}} \xtoo{\epsilon} \oth{C, H \triangleright s'}
	}{eo-iff}
	\typerule{E\LI-letin}{
		\oth{C; H ; \OB{f} \triangleright e\redtoo v}
	}{
		\oth{C, H \triangleright \letin{x}{e}{s}} \xtoo{\epsilon} \oth{C, H \triangleright s\subo{v}{x}}
	}{eo-letin}
	\typerule{E\LI-new}{
		\oth{H} = \oth{H_1;n\mapsto \_}
		&
		\oth{C; H ; \OB{f} \triangleright e\redtoo v}
	}{
		\oth{C, H \triangleright \letnew{x}{e}{s}} \xtoo{\epsilon} \oth{C,H; n+1\mapsto v \triangleright s\subo{n+1}{x}}
	}{eo-nu}
	\typerule{E\LI-isolate}{
		\oth{H} = \oth{n\mapsto \_;H_1}
		&
		\oth{C; H ; \OB{f} \triangleright e\redtoo v}
		\\	
		\oth{\OB{f}} = \oth{\OB{f'}\cdot f}
		&
		\oth{C}\vdash\oth{f}:\oth{prog}
	}{
		\oth{C, H \triangleright \proc{\letiso{x}{e}{s}}{\OB{f}}} \xtoo{\epsilon} \oth{C, n-1\mapsto v;H \triangleright \proc{s\subo{n-1}{x}}{\OB{f}}}
	}{eo-hi}
	\typerule{E\LI-assign}{
		\oth{C; H ; \OB{f} \triangleright e\redtoo v}
		\\
		\oth{H} = \oth{H_1;n\mapsto \_;H_2}
		&
		\oth{H'} = \oth{H_1;n\mapsto v;H_2}
		&
		\oth{n}\geq\oth{0}
	}{
		\oth{C, H\triangleright n:=e} \xtoo{\epsilon} \oth{C, H'\triangleright \skipo}
	}{eo-ac-t}
	\typerule{E\LI-assign-iso}{
		\oth{C; H ; \OB{f} \triangleright e\redtoo v}
		&
		\oth{\OB{f}} = \oth{\OB{f'}\cdot f}
		&
		\oth{C}\vdash\oth{f}:\oth{prog}
		\\
		\oth{H} = \oth{H_1;n\mapsto \_;H_2}
		&
		\oth{H'} = \oth{H_1;n\mapsto v;H_2}
		&
		\oth{n}<\oth{0}
	}{
		\oth{C, H\triangleright \proc{n:=e}{\OB{f}} } \xtoo{\epsilon} \oth{C, H'\triangleright \proc{\skipo}{\OB{f}}}
	}{eo-ac-k}
	\typerule{E\LI-call-internal}{
		\oth{\OB{C}.\mtt{intfs}}\vdash\oth{f,f'}:\oth{internal}
		&
		\oth{\OB{f'}} = \oth{\OB{f''};f'}
		\\
		\oth{f(x)\mapsto s;\ret}\in\oth{C}.\mtt{funs}
		&
		\oth{C; H ; \OB{f} \triangleright e\redtoo v}
	}{
		\oth{C, H \triangleright \proc{{\call{f}~e}}{\OB{f'}}} \xtoo{\epsilon} \oth{C, H \triangleright \proc{{s;\ret\subo{v}{x}}}{\OB{f'};f}}
	}{eo-call-i}
	\typerule{E\LI-callback}{
		\oth{\OB{f'}} = \oth{\OB{f''};f'}
		&
		\oth{f(x)\mapsto s;\ret}\in\oth{\OB{F}}
		\\
		\oth{\OB{C}.\mtt{intfs}}\vdash\oth{f',f}:\oth{out}
		&
		\oth{C; H ; \OB{f} \triangleright e\redtoo v}
	}{
		\oth{C, H \triangleright \proc{{\call{f}~e}}{\OB{f'}}} \xtoo{\cbh{f}{v}{H}} \oth{C, H \triangleright \proc{{s;\ret\subo{v}{x}}}{\OB{f'};f}}
	}{eo-callback}
	\typerule{E\LI-call}{
		\oth{\OB{f'}} = \oth{\OB{f''};f'}
		&
		\oth{f(x)\mapsto s;\ret}\in\oth{C}.\mtt{funs}
		\\
		\oth{\OB{C}.\mtt{intfs}}\vdash\oth{f',f}:\oth{in}
		&
		\oth{C; H ; \OB{f} \triangleright e\redtoo v}
	}{
		\oth{C, H \triangleright \proc{{\call{f}~e}}{\OB{f'}}} \xtoo{\clh{f}{v}{H}} \oth{C, H \triangleright \proc{{s;\ret\subo{v}{x}}}{\OB{f'};f}}
	}{eo-call}
	\typerule{E\LI-reo-internal}{
		\oth{\OB{C}.\mtt{intfs}}\vdash\oth{f,f'}:\oth{internal}
		&
		\oth{\OB{f'}} = \oth{\OB{f''};f'}
	}{
		\oth{C, H \triangleright \proc{{\ret}}{\OB{f'};f}} \xtoo{\epsilon} \oth{C, H \triangleright \proc{\skipo}{\OB{f'}}}
	}{eo-reo-i}
	\typerule{E\LI-retback}{
		\oth{\OB{C}.\mtt{intfs}}\vdash\oth{f,f'}:\oth{in}
		&
		\oth{\OB{f'}} = \oth{\OB{f''};f'}
	}{
		\oth{C, H \triangleright \proc{{\ret}}{\OB{f'};f}} \xtoo{\rbh{}{H}} \oth{C, H \triangleright \proc{\skipo}{\OB{f'}}}
	}{eo-retb}
	\typerule{E\LI-return}{
		\oth{\OB{C}.\mtt{intfs}}\vdash\oth{f,f'}:\oth{out}
		&
		\oth{\OB{f'}} = \oth{\OB{f''};f'}
	}{
		\oth{C, H \triangleright \proc{{\ret}}{\OB{f'};f}} \xtoo{\rth{}{H}} \oth{C, H \triangleright \proc{\skipo}{\OB{f'}}}
	}{eo-ret}
	\typerule{E\LI-destruct-nat}{
		\oth{C; H ; \OB{f} \triangleright e \redtot n}
	}{
		\oth{C, H \triangleright \destructo{x}{e}{nat}{s}{s'} } \xtoo{\epsilon} \oth{C, H \triangleright s\subo{n}{x}}
	}{eo-end-n}
	\typerule{E\LI-destruct-pair}{
		\oth{C; H ; \OB{f} \triangleright e \redtot \pair{v,v'}}
	}{
		\oth{C, H \triangleright \destructo{x}{e}{pair}{s}{s'} } \xtoo{\epsilon} \oth{C, H \triangleright s\subo{\pair{v,v'}}{x}}
	}{eo-end-p}
	\typerule{E\LI-destruct-not}{
		\text{ otherwise }
	}{
		\oth{C, H \triangleright \destructo{x}{e}{B}{s}{s'} } \xtoo{\epsilon} \oth{C, H \triangleright s'}
	}{eo-end-nope}
	\end{center}
\botrule

\mytoprule{\oth{C, H \triangleright \Pi} \redtoo \oth{C', H' \triangleright \Pi'}}
\begin{center}
	\typerule{E\LI-par}{
		\oth{\Pi}=\oth{\Pi_1 \parallel \proc{s}{\OB{f}}\parallel \Pi_2}
		\\
		\oth{\Pi'}=\oth{\Pi_1 \parallel \proc{s'}{\OB{f'}} \parallel \Pi_2}
		\\
		\oth{C, H \triangleright \proc{s}{\OB{f}}} \redtoo \oth{C', H'\triangleright \proc{s'}{\OB{f'}}}
	}{
		\oth{C, H \triangleright \Pi} \redtoo \oth{C', H' \triangleright \Pi'}
	}{eo-par}
	\typerule{E\LI-fork}{
		\oth{\Pi}=\oth{\Pi_1 \parallel \proc{{\fork{s}}}{\OB{f;f}} \parallel \Pi_2}
		\\
		\oth{\Pi'}=\oth{\Pi_1 \parallel \proc{{0}}{\OB{f;f}} \parallel \Pi_2 \parallel \proc{s}{f}}
	}{
		\oth{C, H \triangleright \Pi} \redtoo \oth{C, H \triangleright \Pi'}
	}{eo-fork}
\end{center}
\botrule

\mytoprule{ \oth{\Omega} \Xtoo{\OB{\alpha}} \oth{\Omega'} }
\begin{center}
	\typerule{E\LI-single}{
		\oth{\Omega}\xtoo{\alpha}\oth{\Omega'}
	}{
		\oth{\Omega}\Xtoo{\alpha}\oth{\Omega'}
	}{eo-tr-sin}
	\typerule{E\LI-silent}{
		\oth{\Omega}\xtoo{\epsilon}\oth{\Omega'}
	}{
		\oth{\Omega}\Xtoo{}\oth{\Omega'}
	}{eo-tr-silent}
	\typerule{E\LI-trans}{
		\oth{\Omega}\Xtoo{\OB{\alpha}}\oth{\Omega''}
		\\
		\oth{\Omega''}\Xtoo{\OB{\alpha'}}\oth{\Omega'}
	}{
		\oth{\Omega}\Xtoo{\OB{\alpha}\cdot\OB{\alpha'}}\oth{\Omega'}
	}{eo-tr-trans}
\end{center}
\botrule

\subsection{Monitor Semantics}
\mytoprule{\monh{\cdot}}
\begin{center}
	\typerule{\LI-Monitor-related heap}{
          \oth{H'} = \{\oth{n\mapsto v:\eta} ~|~ \oth{n} \in \dom{\oth{H_0}} \text{ and } \oth{n\mapsto v:\eta}\in\oth{H}\}
	}{
		\monh{\oth{H},\oth{H_0}} = \oth{H'}
	}{oh-pub-t}
\end{center}
\botrule

\mytoprule{\oth{M;H\monred M'}}
\begin{center}
	\typerule{\LI-Monitor Step}{
		\oth{M}= \oth{(\set{\sigma},\monred,\sigma_0,H_0,\sigma_c)}
		&
		\oth{M'}= \oth{(\set{\sigma},\monred,\sigma_0,H_0,\sigma_f)}
		\\
		\oth{(s_c,\monh{\oth{H},\oth{H_0}},s_f)}\in\oth{\monred}
	}{
		\oth{M;H\monred M'}
	}{mo-t-2}
	\typerule{\LI-Monitor Step Trace Base}{
	}{
		\oth{M;\othe\monred M}
	}{mo-t-s-b}
	\typerule{\LI-Monitor Step Trace}{
		\oth{M;\OB{H}\monred M''}
		&
		\oth{M'';H\monred M'}
	}{
		\oth{M;\OB{H}\cdot H\monred M'}
	}{mo-t-s}
	\typerule{\LI-valid trace}{
		\oth{M;\OB{H}\monred M'}
	}{
		\oth{M}\vdash\oth{\OB{\mnh{H}}}
	}{mo-valid}
\end{center}
\botrule

\subsection{Monitor Agreement for \LI}
\begin{definition}[\LI: \oth{M\agree C}]\label{def:li-agree}
	\begin{align*}
		\oth{(\set{\sigma},\monred,\sigma_0,H_0,\sigma_c) \agree (H_0 ; \OB{F} ;\OB{I} ; \OB{E})}
	\end{align*}
\end{definition}
A monitor and a component agree if they focus on the same set of locations \oth{H_0}.

\subsection{Properties of \LI}
\begin{definition}[\LI Attacker]\label{def:oth-att}
	\begin{align*}
		\oth{C}\vdashatto\oth{A} \isdef&\ \oth{C}= \oth{H_0 ; \OB{F} ; \OB{I} ; \OB{E}}, \oth{A} = \oth{\OB{F'}}, \fun{names}{\oth{\OB{F}}}\cap\fun{names}{\oth{\OB{F'}}}=\emptyset
		\\
		\oth{C}\vdashatto\oth{\pi}  \isdef&\ \oth{\pi}=\oth{\proc{s}{\OB{f};f}} \text{ and } \oth{f}\in\oth{C}.\mtt{itfs}
		\\
		\oth{C}\vdashatto\oth{\Pi}\xtoo{}\oth{\Pi'} \isdef&\ \oth{\Pi}=\oth{\Pi_1 \parallel \pi \parallel \Pi_2} \text { and } \oth{\Pi'}=\oth{\Pi_1 \parallel \pi' \parallel \Pi_2}
		\\
		&\ \text{and } \oth{C}\vdashatto\oth{\pi}  \text{ and } \oth{C}\vdashatto\oth{\pi'} 
	\end{align*}
\end{definition}

\section{Second Compiler from \LA to \LI}\label{sec:comp3}

For this compiler we need a different partial bijection, which we indicate with $\varphi$ and its type is $\src{\ell}\times\oth{n}$.
It has the same properties of $\beta$ listed in \Cref{sec:cr-rel}.

The cross-language relation $\relate$ is unchanged but for the relation of locations, as they are no longer compiled as pairs:
\begin{itemize}
	\item $\src{\ell}\relatephi\oth{n}$ if $(\src{\ell},\oth{n})\in\varphi$
\end{itemize}

Actions relation is unchanged from \Cref{tr:rel-cl} etc.

Heaps relation is unchanged (modulo the elision of capabilities) from \Cref{tr:hrel-i}.

Process relation is unchanged from \Cref{tr:sing-proc-rel} etc.

State relation is unchanged from \Cref{tr:state-rel-whole}.

The monitor relation $\src{M}\relate\oth{M}$ is defined as in \Cref{tr:mon-rel-ap}.

Some auxiliary functions are changed:

\mytoprule{\src{\Delta}\vdash_{\varphi}\oth{H_0} ~~ \src{\Delta, \oth{H} }\vdash\oth{v}: \src{\tau} }
\begin{center}
	\typerule{Initial-heap}{
		\src{\Delta}\vdash\oth{H}
		&
		\src{\Delta},\oth{H}\vdash\oth{v}\src{:\tau}
		\\
		\src{\ell}\relatephi\oth{n}
	}{
		\src{\Delta,\ell:\tau}\vdash_{\varphi}\oth{H;n\mapsto v}
	}{ini-heap-o}
	\typerule{Initial-value}{
			(\src{\tau}\equiv\Bools \wedge \oth{v}\equiv\oth{0}) 
			&
			\vee
			&
			(\src{\tau}\equiv\Nats \wedge \oth{v}\equiv\oth{0}) 
			&
			\vee
			\\
			(\src{\tau}\equiv\src{\Refs{\tau}} \wedge \oth{v}\equiv\oth{n'} \wedge \oth{n'\mapsto v'}\in\oth{H} \wedge \src{\ell'}\relatephi\oth{n'} \wedge \src{\ell:\tau}\in\src{\Delta}, 
			\src{\Delta},\oth{H}\vdash\oth{v'}\src{:\tau}
			) 
			&
			\vee
			\\
			(\src{\tau}\equiv\src{\tau_1\times\tau_2} \wedge \oth{v}\equiv\oth{\pair{v_1,v_2}} \wedge 
			\src{\Delta},\oth{H}\vdash\oth{v_1}\src{:\tau_1}\wedge
			\src{\Delta},\oth{H}\vdash\oth{v_2}\src{:\tau_2}
			)
	}{
		\src{\Delta},\oth{H}\vdash_\varphi\oth{v}\src{:\tau}
	}{ini-val-o}
\end{center}
\botrule

\begin{definition}[Compiler \LA to \LI]\label{def:comp-la-li}
	$\compai{\cdot} : \src{C}\to \oth{C}$

	Given that $\src{C}=\src{\Delta ; \OB{F} ; \OB{I}}$
	if $\vdash \src{C :\UNS}$ 
	then \compai{\src{C}} is defined as follows:
	\begin{align*}
		\tag{\compai{\cdot}-Component}
		\compai{
			\typerule{T\LA-component}{
				\src{C}\equiv\src{\Delta ; \OB{F} ; \OB{I}} 
				\\
				\src{C}\vdash\src{\OB{F}}:\UNS
				\\
				\fun{names}{\src{\OB{F}}}\cap\fun{names}{\src{\OB{I}}}=\emptyset
				\\
				\src{\Delta}\vdash\src{ok}
			}{
				\vdash \src{C}:\UNS
			}{compai-co}
		} &= \oth{%
		\oth{H_0} ;
		\compai{\OB{F}}; \compai{\OB{I}}} ; 
		\dom{\OB{\src{F}}}
		\qquad\qquad
		\text{if }\src{\Delta}\vdash_{\varphi_0}\oth{H_0}
		\\
		\tag{\compai{\cdot}-Function}
		\compai{
			\typerule{T\LA-function}{
				\src{F}\equiv \src{f(x:\UNS)\mapsto s;\ret}
				\\
				\src{C},\src{\Delta;x:\UNS}\vdash \src{s} %
				\\
				\forall\src{f}\in\fun{fn}{s}, \src{f}\in\dom{\src{C}.\mtt{funs}} 
				\\
				\vee \src{f}\in\dom{\src{C}.\mtt{intfs}}
			}{
				\src{C}\vdash\src{F}:\src{\UNS} 
			}{compai-fu}
		} &= \oth{f(x)\mapsto \compai{\src{C};\src{\Delta;x:\UNS}\vdash \src{s}};\ret}
		\\
		\tag{\compai{\cdot}-Interfaces}
		\compai{f} &= \oth{f}
	\end{align*}

	\mytoprule{Expressions}
	\begin{align*}
		\\
		\tag{\compai{\cdot}-True}
		\compai{
			\typerule{T\LA-true}{
				\src{\Delta,\Gamma}\vdash\diamond
			}{
				\src{\Delta,\Gamma}\vdash\trues:\Bools
			}{compai-true}
		}
		=&\ \oth{0}  
		\\
		\tag{\compai{\cdot}-False}
		\compai{
			\typerule{T\LA-false}{
				\src{\Delta,\Gamma}\vdash\diamond
			}{
				\src{\Delta,\Gamma}\vdash\falses:\Bools
			}{compai-false}
		}
		=&\ \oth{1} 
		\\
		\tag{\compai{\cdot}-Nat}
		\compai{
			\typerule{T\LA-nat}{
				\src{\Delta,\Gamma}\vdash\diamond
			}{
				\src{\Delta,\Gamma}\vdash\src{n}:\Nats
			}{compai-nat}
		}
		=&\ \oth{n}
		\\
		\tag{\compai{\cdot}-Var} 
		\compai{
			\typerule{T\LA-var}{
				\src{x:\tau}\in\src{\Gamma}
			}{
				\src{\Delta,\Gamma}\vdash\src{x}:\src{\tau}
			}{compai-var}
		}
		=&\  \oth{x}
		\\
		\tag{\compai{\cdot}-Loc} 
		\compai{
			\typerule{T\LA-loc}{
				\src{\ell:\tau}\in\src{\Delta}
			}{
				\src{\Delta,\Gamma}\vdash\src{\ell}:\src{\tau}
			}{compai-loc}
		}
		=&\  
			\oth{n}
		\\
		\tag{\compai{\cdot}-Pair} 
		\compai{
			\typerule{T\LA-pair}{
				\src{\Delta,\Gamma}\vdash \src{e_1} : \src{\tau}
				\\
				\src{\Delta,\Gamma}\vdash \src{e_2} : \src{\tau'} 
			}{
				\src{\Delta,\Gamma}\vdash \src{\pair{e_1,e_2}} : \src{\tau\times\tau'}
			}{compai-pair}
		}
		=&\ \oth{\pair{\compai{\src{\Delta,\Gamma}\vdash \src{e_1} : \src{\tau}},\compai{\src{\Delta,\Gamma}\vdash \src{e_2} : \src{\tau'}}}}
		\\
		\tag{\compai{\cdot}-P1} 
		\compai{
			\typerule{T\LA-proj-1}{
				\src{\Delta,\Gamma}\vdash \src{e} : \src{\tau\times\tau'}
			}{
				\src{\Delta,\Gamma}\vdash \src{\projone{e}} : \src{\tau}
			}{compai-p1}
		}
		=&\ \oth{\projone{\compai{\src{\Delta,\Gamma}\vdash \src{e} : \src{\tau\times\tau'}}}}
		\\
		\tag{\compai{\cdot}-P2} 
		\compai{
			\typerule{T\LA-proj-2}{
				\src{\Delta,\Gamma}\vdash \src{e} : \src{\tau\times\tau'}
			}{
				\src{\Delta,\Gamma}\vdash \src{\projtwo{e}} : \src{\tau'}
			}{compai-p2}
		}
		=&\ \oth{\projtwo{\compai{\src{\Delta,\Gamma}\vdash \src{e} : \src{\tau\times\tau'}}}}
		\\
		\tag{\compai{\cdot}-Deref} 
		\compai{
			\typerule{T\LA-dereference}{
				\src{\Delta,\Gamma}\vdash \src{e} : \src{\Refs{\tau}} 
			}{
				\src{\Delta,\Gamma}\vdash \src{!e} : \src{\tau}  
			}{compai-deref}
		}
		=&\ \oth{
			!\projone{\compai{\src{\Delta,\Gamma}\vdash \src{e} : \src{\Refs{\tau}}}}
		}
		\\
		\tag{\compai{\cdot}-op}
		\compai{
			\typerule{T\LA-op}{
				\src{\Delta,\Gamma}\vdash \src{e} : \Nats
				\\
				\src{\Delta,\Gamma}\vdash \src{e'} : \Nats
			}{
				\src{\Delta,\Gamma}\vdash \src{e \op e'} : \src{\Nats}  
			}{compai-op}
		}
		=&\ \oth{\compai{\src{\Delta,\Gamma}\vdash \src{e} : \Nats} \op\compai{\src{\Delta,\Gamma}\vdash \src{e'} : \Nats}}
		\\
		\tag{\compai{\cdot}-cmp}
		\compai{
			\typerule{T\LA-cmp}{
				\src{\Delta,\Gamma}\vdash \src{e} : \Nats
				\\
				\src{\Delta,\Gamma}\vdash \src{e'} : \Nats
			}{
				\src{\Delta,\Gamma}\vdash \src{e \bop e'} : \Bools
			}{compai-bop}
		}
		=&\ 
			\oth{
					\compai{\src{\Delta,\Gamma}\vdash \src{e} : \Nats} \bop \compai{\src{\Delta,\Gamma}\vdash \src{e'} : \Nats}
				}
		\\
		\tag{\compai{\cdot}-Coerce} 
		\compai{
			\typerule{T\LA-coercion}{
				\src{\Delta,\Gamma}\vdash \src{e} : \src{\tau} 
				&
				\src{\tau} \vdash \circ
			}{
				\src{\Delta,\Gamma}\vdash \src{e} : \UNS 
			}{compai-coe}
		}
		=&\
				\oth{
					\compai{\src{\Delta,\Gamma}\vdash \src{e} : \src{\tau} }
				}
	\end{align*}

	\mytoprule{Statements}
	\begin{align*}
		\tag{\compai{\cdot}-Skip} 
		\compai{
			\typerule{T\LA-skip}{
			}{
				\src{C,\Delta,\Gamma}\vdash \skips
			}{compai-skip}	
		}
		=&\
		\skipo
		\\
		\tag{\compai{\cdot}-New} 
		\compai{
			\typerule{T\LA-new}{
				\src{C,\Delta,\Gamma}\vdash \src{e} : \src{\tau} 
				\\
				\src{C,\Delta,\Gamma;x:\Refs{\tau}}\vdash \src{s} %
			}{
				\src{C,\Delta,\Gamma}\vdash \src{\letnewty{x}{e}{\tau}{s}} %
			}{compai-new}	
		}
		=&\
		\begin{cases}
			\oth{
			\begin{aligned}
				&
				\letnewo{\oth{x}~}{\compai{\src{\Delta,\Gamma}\vdash \src{e} : \src{\tau}} 
				\\
				&\
				}{ 
					\compai{\src{C,\Delta,\Gamma;x:\Refs{\tau}}\vdash \src{s}} 
				}
			\end{aligned}
			}
			\\
			\text{if }\src{\tau}=\UNS %
			\\
			\\
			\oth{
			\begin{aligned}
				&
				\letiso{\oth{x}~}{ \compai{\src{\Delta,\Gamma}\vdash \src{e} : \src{\tau}} 
				\\
				&\
				}{ 
					\compai{\src{C,\Delta,\Gamma;x:\Refs{\tau}}\vdash \src{s}} 
				}
			\end{aligned}
			}
			\\ 
			\text{else}
		\end{cases}
		\\
		\tag{\compai{\cdot}-call}
		\compai{
			\typerule{T\LA-function-call}{
				((\src{f}\in\dom{\src{C}.\mtt{funs}})
				\\
				\vee(\src{f}\in\dom{\src{C}.\mtt{intfs}}))
				\\
				\src{\Delta,\Gamma}\vdash \src{e} : \src{\UNS}
			}{
				\src{\Delta,\Gamma}\vdash \src{\call{f}~e}
			}{compai-fun}
		}
		=&\ \oth{\call{f}~\compai{\src{\Delta,\Gamma}\vdash \src{e}: \UNS}}
		\\
		\tag{\compai{\cdot}-If} 
		\compai{
			\typerule{T\LA-if}{
				\src{\Delta,\Gamma}\vdash \src{e} : \src{\Bool}
				\\
				\src{C,\Delta,\Gamma}\vdash \src{s_t} 
				\\
				\src{C,\Delta,\Gamma}\vdash \src{s_e} 
			}{
				\src{C,\Delta,\Gamma}\vdash \src{\ifte{e}{s_t}{s_e}} 
			}{compai-if}
		}
		=&\ \oth{
			\begin{aligned}
				&\ifzteo{\compai{\src{\Delta,\Gamma}\vdash \src{e} : \src{\Bool}}
				\\
				&}
				{\compai{\src{C,\Delta,\Gamma}\vdash \src{s_t} }
				\\
				&
				}{\compai{\src{C,\Delta,\Gamma}\vdash \src{s_e} }}
			\end{aligned}
		}
		\\
		\tag{\compai{\cdot}-Seq} 
		\compai{
			\typerule{T\LA-sequence}{
				\src{C,\Delta,\Gamma}\vdash \src{s_u} 
				\\
				\src{C,\Delta,\Gamma}\vdash \src{s} 
			}{
				\src{C,\Delta,\Gamma}\vdash \src{s_u;s} 
			}{compai-seq}
		}
		=&\ \oth{\compai{\src{C,\Delta,\Gamma}\vdash \src{s_u} } ; \compai{\src{C,\Delta,\Gamma;\Gamma'}\vdash \src{s} }}
		\\
		\tag{\compai{\cdot}-Letin} 
		\compai{
			\typerule{T\LA-letin}{
				\src{\Delta,\Gamma}\vdash \src{e} : \src{\tau} 
				\\
				\src{C,\Delta,\Gamma;x:\tau}\vdash \src{s}
			}{
				\src{C,\Delta,\Gamma}\vdash \src{\letin{x:\tau}{e}{s}}
			}{compai-vardef}
		}
		=&\ \oth{
			\begin{aligned}
				&
				\letino{\oth{x}}{\compai{\src{\Delta,\Gamma}\vdash \src{e} : \src{\tau}}
				\\
				&
				}{\compai{\src{C,\Delta,\Gamma;x:\tau}\vdash \src{s} }}
			\end{aligned}
			}
		\\
		\tag{\compai{\cdot}-Assign}
		\compai{
			\typerule{T\LA-assign}{
				\src{\Delta,\Gamma}\vdash \src{x} : \src{\Refs{\tau}}
				\\
				\src{\Delta,\Gamma}\vdash \src{e} : \src{\tau} 
			}{
				\src{C,\Delta,\Gamma}\vdash \src{x := e} 
			}{compai-ass}
		}
		=&\ \oth{
			\compai{\src{\Delta,\Gamma}\vdash \src{x} : \Refs{\tau}} := \compai{\src{\Delta,\Gamma}\vdash \src{e} : \src{\tau}}
		}
		\\
		\tag{\compai{\cdot}-Fork} 
		\compai{
			\typerule{T\LA-fork}{
				\src{C,\Delta,\Gamma}\vdash \src{s} 
			}{
				\src{C,\Delta,\Gamma}\vdash \src{\fork{s}} 
			}{compai-fork}
		}
		=&\ \oth{\fork{\compai{\src{C,\Delta,\Gamma}\vdash \src{s} }}}
		\\
		\tag{\compai{\cdot}-Proc} 
		\compai{
			\typerule{T\LA-process}{
				\src{C,\Delta,\Gamma}\vdash \src{s} 
			}{
				\src{C,\Delta,\Gamma}\vdash \src{\proc{s}{\OB{f}}}
			}{compai-proc}
		}
		=&\ 
			\begin{aligned}
				&
				\oth{\proc{\compai{\src{C,\Delta,\Gamma}\vdash \src{s} }}{\compai{\OB{f}}}}
			\end{aligned}
		\\
		\tag{\compai{\cdot}-Soup} 
		\compai{
			\typerule{T\LA-soup}{
				\src{C,\Delta,\Gamma}\vdash \src{\pi} 
				\\
				\src{C,\Delta,\Gamma}\vdash \src{\Pi}
			}{
				\src{C,\Delta,\Gamma}\vdash \src{\pi\parallel\Pi}
			}{compai-soup}
		}
		=&\
			\oth{\compai{\src{C,\Delta,\Gamma}\vdash \src{\pi} }\parallel\compai{\src{C,\Delta,\Gamma}\vdash \src{\Pi} }}
		\\
		\tag{\compai{\cdot}-Endorse} 
		\compai{
			\typerule{T\LA-endorse}{
				\src{\Delta,\Gamma}\vdash\src{e}:\UNS
				\\
				\src{C,\Delta,\Gamma;(x:\varphi)} \vdash\src{s} %
			}{
				\src{C,\Delta,\Gamma}\vdash \src{\myendorse{x}{e}{\varphi}{s}} 
			}{compai-end}
		}
		=&\
			\begin{cases}
				\oth{
					\begin{aligned}
						&
						\destructo{\oth{x}~}{\compai{\src{\Delta,\Gamma}\vdash \src{e} : \UNS}}{\oth{nat}}{
							\\
							&\ 
							\ifzteo{\oth{x}}{
							\\
							&\ \ 
							\compai{ \src{C,\Delta,\Gamma;(x:\varphi)} \vdash\src{s} }
							\\
							&\ \ 
							}{
								\ifzteo{\oth{x-1}}{
								\\
								&\ \ \ 
								\compai{ \src{C,\Delta,\Gamma;(x:\varphi)} \vdash\src{s} }
								\\
								& \ \ \
								}{\wrongo}
							}
							\\
							&
						}{\wrongo}
					\end{aligned}
				}
				\\
				\text{ if }\src{\varphi}=\src{\Bools}
				\\
				\\
				\oth{
					\begin{aligned}
						&
						\destructo{\oth{x}~}{\compai{\src{\Delta,\Gamma}\vdash \src{e} : \UNS}}{\oth{nat}}{
							\\
							&\ 
							\compai{ \src{C,\Delta,\Gamma;(x:\varphi)} \vdash\src{s} }
							\\
							&
						}{\wrongo}
					\end{aligned}
				}
				\\
				\text{ if }\src{\varphi}=\src{\Nats}
				\\
				\\
				\oth{
					\begin{aligned}
						&
						\destructo{\oth{x}~}{\compai{\src{\Delta,\Gamma}\vdash \src{e} : \UNS}}{\oth{pair}}{
							\\
							&\ 
							\compai{ \src{C,\Delta,\Gamma;(x:\varphi)} \vdash\src{s} }
							\\
							&
						}{\wrongo}
					\end{aligned}
				}
				\\
				\text{ if }\src{\varphi}=\src{\UNS\times\UNS}
				\\
				\\
				\oth{
					\begin{aligned}
						&
						\destructo{\oth{x}~}{\compai{\src{\Delta,\Gamma}\vdash \src{e} : \UNS}}{\oth{nat}}{
							\\
							&\ 
							\oth{!x ;}
							\\
							&\ \ 
							\compai{ \src{C,\Delta,\Gamma;(x:\varphi)} \vdash\src{s} }
							\\
							&
						}{\wrongo}
					\end{aligned}
				}
				\\
				\text{ if }\src{\varphi}=\src{\Refs{\UNS}}
			\end{cases}
	\end{align*}
\end{definition}
We use \wrongo\ as before for \wrong.

\subsection{Properties of the \LA-\LI Compiler}\label{sec:prop-comp-o}

\begin{theorem}[Compiler \compai{\cdot} is \ccomp]\label{thm:comp-ai-cc}
	$\vdash\compai{\cdot} : \ccomp$
\end{theorem}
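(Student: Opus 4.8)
The plan is to establish $\vdash\compai{\cdot}:\ccomp$ by a backward simulation argument on whole-program executions, closely mirroring the correctness proof of the capability-based compiler $\compap{\cdot}$ (\Cref{thm:comp-ap-cc}) and the backward-simulation lemma used for $\compup{\cdot}$ (\Cref{thm:back-sim}). Since \ccomp (\Cref{def:comp-corr}) quantifies only over whole programs, the interface list $\src{\OB{I}}$ is empty, no cross-component action is ever emitted, and the trace $\OB{\alpha}$ is empty; consequently $\src{\OB{\alpha}}\relatephi\oth{\OB{\alpha}}$ holds trivially and the entire content of the theorem is (i) reconstructing a source run that matches the given terminating target run step-for-step under the cross-language relation, and (ii) transferring termination back to the source. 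We are already handed $\SInits{C}\relate_{\varphi_0}\SInitt{\compai{C}}$ as a hypothesis ($\varphi_0$ playing the role of $\beta_0$), and this is consistent with the compiler's heap set-up $\src{\Delta}\vdash_{\varphi_0}\oth{H_0}$ (\Cref{tr:ini-heap-o}) together with the fact that $\compai{\cdot}$ installs exactly the compiled function bodies of $\src{C}$; so the relation holds at the start and the remaining task is to propagate it.

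The core is a single-step backward-simulation lemma: if $\src{\Omega}\relatephi\oth{\Omega}$ and $\oth{\Omega}\xtoo{}\oth{\Omega'}$, then there exist $\varphi'\supseteq\varphi$ and $\src{\Omega'}$ with $\src{\Omega}\Xtos{}\src{\Omega'}$ (zero or more silent source steps) and $\src{\Omega'}\relate_{\varphi'}\oth{\Omega'}$. I would prove this by case analysis, first on which process of the soup steps (via \Cref{tr:eo-par}, \Cref{tr:eo-fork}) and then on the target reduction rule, using as a sublemma that related heaps and source expression evaluation imply evaluation of the compiled expression to a $\relatephi$-related value. Most \LA constructs compile one-to-one under $\compai{\cdot}$ — skip, sequencing, conditional (\Cref{tr:compai-if}), let-binding (\Cref{tr:compai-vardef}), assignment (\Cref{tr:compai-ass}, which unlike the capability case needs no projection wrappers since $\LI$ locations are plain integers), call (\Cref{tr:compai-fun}) and fork (\Cref{tr:compai-fork}) — so each target step matches a single source step preserving the heap and value relations. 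Allocation (\Cref{tr:compai-new}) creates a fresh source location $\src{\ell}$ and a fresh target location $\oth{n}$; I extend $\varphi$ with $(\src{\ell},\oth{n})$, choosing an enclave-internal (negative) address via \Cref{tr:eo-hi} when $\src{\tau}\neq\UNS$ and a public (non-negative) address via \Cref{tr:eo-nu} when $\src{\tau}=\UNS$, and re-establish the heap relation. The single genuinely multi-step case is endorse (\Cref{tr:compai-end}), whose compilation performs a $\oth{destruct}$ followed by guard tests; here a terminating target run can only have taken a non-$\wrongo$ branch, which means the scrutinee had the shape demanded by $\src{\varphi}$, so its source correspondent satisfies $\src{\Delta,\srce}\vdash\src{v}:\src{\varphi}$ and \Cref{tr:es-end} fires, matching the whole block by one source step. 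Throughout, the monitor components stay related because the monitor relation (\Cref{tr:mon-rel-ap}) is preserved: the $\src{\Delta}$ and $\oth{H_0}$ parts never change and only the (related) current states advance.

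With the single-step lemma in hand I would lift it to the full execution $\SInitt{\compai{C}}\Xtot{}\oth{\Omega}$ by induction on the number of target steps, threading the growing bijection $\varphi\supseteq\varphi_0$, to obtain $\SInits{C}\Xtos{}\src{\Omega}$ with $\src{\Omega}\relatephi\oth{\Omega}$ and empty, hence trivially related, traces. Termination is then transferred using the process and soup relations (\Cref{tr:sing-proc-rel,tr:proc-rel}): a terminated target configuration (\Cref{tr:state-term}, every process at $\skipo$) can only be related to a source soup in which every process is at $\skips$, which is exactly $\src{\Omega}\termsl$; conversely, any target process stopped partway through a compiled block leaves its source partner at a non-$\skips$ statement, so a fully terminated target configuration forces every chunk to have run to completion, aligning the source at $\skips$ as well.

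The main obstacle I anticipate is the interaction of concurrency with the non-atomicity of compiled blocks: a single source statement — endorse above all, but in general any statement whose compilation is more than one $\LI$ instruction — expands to several target steps, during which another thread may interleave, so the target soup exposes more scheduling points than the source soup. The remedy is to note that these intermediate steps only read and write thread-local let-bound variables and perform expression evaluation (which is atomic), touching the shared heap only at the single point that corresponds to the source heap effect; hence they commute with steps of other processes. I would therefore phrase the state relation so that each target process is related to a source process up to an administrative prefix of its current compiled block, and argue that any terminating schedule can be reordered into one in which each block executes contiguously, reducing to the chunked simulation above. Making this commutation/independence argument precise — that the internal configurations of a compiled block are invisible to, and independent of, the rest of the soup — is the crux of the proof; the remaining cases are a routine adaptation of the $\compap{\cdot}$ correctness argument with capabilities replaced by enclave membership.
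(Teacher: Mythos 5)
Your proposal is correct in substance, but it inverts the direction of the paper's argument. The paper dispatches this theorem in one line---``analogous to the proof of \Cref{thm:comp-ap-cc}''---and that proof goes \emph{forward}: it establishes the initial-state relation $\SInits{C}\relate_{\varphi_0}\SInito{\compai{\src{C}}}$ and then invokes a generalised correctness lemma (the \compai{\cdot} analogue of \Cref{thm:gen-cc}), stating that every \emph{source} reduction is matched by target reductions to a related state, proved by induction on the typing derivation of the process soup; the backward reading demanded by \Cref{def:comp-corr} (a terminating \emph{target} run yields a matching source run) is then obtained by appealing to the claim that the languages ``have no notion of internal nondeterminism.'' You instead run a \emph{backward} simulation in the style of \Cref{thm:back-sim}, threading the growing bijection $\varphi'\supseteq\varphi$ through each target step and handling multi-instruction compiled blocks (endorse, non-\UNS{} allocation) with an up-to-administrative-prefix state relation plus a commutation argument; your observation that the trace $\OB{\alpha}$ is empty for whole programs matches the paper's own remark after \Cref{def:comp-corr}. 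The trade-off is instructive: the forward direction makes the one-source-step-to-many-target-steps cases trivial, because the simulation gets to \emph{construct} the target schedule, but its concluding determinism appeal is exactly where concurrency bites---scheduler interleaving is a form of nondeterminism, and justifying the direction flip for a soup of processes requires precisely the reordering-of-administrative-steps argument that you make explicit (and that is sound here because, as you note, intermediate states of a compiled block touch the shared heap only at the single instruction mirroring the source effect, expressions being atomic). Your route pays that cost up front inside the simulation relation, and is, if anything, more careful about the concurrent case than the paper's sketch; the case analysis itself (enclave membership replacing capabilities, no projection wrappers in \Cref{tr:compai-ass}, fresh negative versus non-negative addresses in \Cref{tr:eo-hi} and \Cref{tr:eo-nu}) agrees with what the paper's induction would produce.
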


\begin{theorem}[Compiler \compai{\cdot} is \rscomp]\label{thm:comp-ai-rsc}
	$\vdash\compai{\cdot} : \rscomp$
\end{theorem}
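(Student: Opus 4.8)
The plan is to follow the bisimulation-based strategy used for \Cref{thm:comp-ap-rsc} rather than a backtranslation, exploiting that \compai{\cdot} only compiles well-typed source components (it is defined by induction on typing derivations) and that, by \Cref{thm:src-ty-impl-safe}, every such component is already robustly safe in the source. Unfolding \Cref{def:rsc}, I must show that for any $\src{C}$ with $\vdash\src{C}:\UNS$, any valid target attacker $\oth{A}$, and monitors with $\src{M}\relate\oth{M}$, every trace $\oth{\OB{\alpha}}$ produced by $\oth{A\hole{\compai{\src{C}}}}$ from its initial state is accepted by $\oth{M}$. The entire argument reduces to exhibiting a cross-language invariant on the parallel runs of $\src{A\hole{C}}$ and $\oth{A\hole{\compai{\src{C}}}}$ whose heaps agree, on the monitored locations, up to the partial bijection $\varphi$.

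First I would define the invariant as the \LI analogue of $\relatetbeta$ (\Cref{tr:state-rel-proof}): the source and target heaps each split into a trusted and an untrusted part; every trusted source location (those with $\src{\tau}\nvdash\circ$) is related by $\varphi$ to a negative, ``in-enclave'' target location holding a related value; every untrusted source location is related to a non-negative target location; and the component's function names coincide with the enclave set $\oth{\OB{E}}$. Crucially, the per-location capability bookkeeping of the \LC proof---tracking which capability protects which location and proving that no such capability leaks into attacker code or into an untrusted heap cell---collapses here into the single structural fact that trusted data lives at negative addresses that only functions in $\oth{\OB{E}}$ (i.e.\ compiled component functions, via $\oth{C}\vdash\oth{f}:\oth{prog}$) may read or write. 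This makes both the invariant and its preservation argument strictly simpler than in \Cref{sec:relatet-def}.

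Next I would establish the invariant initially (it holds by construction of $\oth{H_0}$ through $\src{\Delta}\vdash_\varphi\oth{H_0}$, \Cref{tr:ini-heap-o,tr:ini-val-o}) and then show it is preserved by every reduction step, splitting on whether the stepping thread runs compiled-component code or attacker code. For attacker steps the argument is immediate: by the \LI access rules (\Cref{tr:eo-de-t,tr:eo-ac-t}) a function outside $\oth{\OB{E}}$ cannot dereference or assign a negative location, so the attacker leaves the entire trusted heap untouched, and any location it allocates is non-negative; the invariant survives trivially. For compiled-component steps I would lean on compiler correctness (\Cref{thm:comp-ai-cc}) together with the robust-type-preservation content of \Cref{thm:src-ty-impl-safe}: well-typed component code keeps trusted locations at their declared types in the source, the compiled code mirrors each such update on the $\varphi$-related enclave location (allocation of a non-\UNS\ reference is placed inside the enclave by the second case of \Cref{tr:compai-new}), and endorsement is compiled to the dynamic shape check of \Cref{tr:compai-end}, which rejects ill-formed adversarial inputs so that no trusted invariant can be broken across a context switch.

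Finally, from the invariant I would close at the observation points. At each cross-component action the monitored heap fragments $\monh{\src{H},\src{\Delta}}$ and $\monh{\oth{H},\oth{H_0}}$ are $\varphi$-related and $\vdash\src{H}:\src{\Delta}$ holds (by the trusted part of the invariant), so by the definition of $\src{M}\relate\oth{M}$ (\Cref{tr:ok-mon,tr:mon-rel-ap}) the target monitor can step exactly when the source one does; since $\src{C}$ is robustly safe the source monitor never gets stuck, hence neither does $\oth{M}$, yielding $\oth{M}\vdash\oth{\OB{\alpha}}$ and thus $\oth{M}\vdash\compai{\src{C}}:\oth{rs}$. I expect the main obstacle to lie in the compiled-component case of invariant preservation under concurrency: carefully threading the source robust-safety invariant through interleaved component and attacker steps and through the endorsement check. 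Here the coarse-grained enclave model actually helps, since the atomic in-enclave allocation of \LI removes the allocate-then-protect race that forced the weakened relation and the non-atomic variant in \Cref{sec:nonatom-alloc}, so no such complication is needed.
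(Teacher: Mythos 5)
Your proposal is correct and takes essentially the same route as the paper's proof: the paper likewise defines the \LI analogue $\relatetphi$ of the cross-language invariant (high locations are exactly the negative, in-enclave addresses related by $\varphi$ to trusted source locations holding related values), proves its preservation separately for compiled-code steps (a correctness-style simulation) and attacker steps (by contradiction, using the $\oth{C}\vdash\oth{f}:\oth{prog}$ checks in \Cref{tr:eo-hi,tr:eo-ac-k} together with the name-disjointness of \Cref{tr:whole-o}), and then closes at the observation points via \Cref{thm:src-ty-impl-safe}, $\vdash\src{H}:\src{\Delta}$, and the monitor relation, exactly as you outline. Your remark that \LI's atomic in-enclave allocation eliminates the allocate-then-protect race, and hence the weakened relation of \Cref{sec:nonatom-alloc}, also matches the paper's stated simplification over the capability-based proof.
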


\subsection{Cross-language Relation $\relatetphi$}\label{sec:relatet-def}
As before, we define a more lenient relation on states $\relatetphi$ 

\mytoprule{\src{\Omega}\relatetphi\oth{\Omega}}
\begin{center}
	\typerule{\LI-Low Location}{
		\nexists\src{\ell}\in\sech{\src{H}}
		&
		\src{\ell}\relatephi\oth{n}
		&
		\oth{n}\geq\oth{0}
	}{
		\src{H},\oth{H}\vdash\lowloc{\oth{n}}
	}{lowloc-o}
	\typerule{\LI-High Location}{
		\src{\ell}\in\sech{\src{H}}
		&
		\src{\ell}\relatephi\oth{n}
		&
		\oth{n}<\oth{0}
	}{
		\src{H},\oth{H}\vdash\highloc{\oth{n}} = \src{\ell}
	}{highloc-o}
	\typerule{Related states -- Secure}{
		\src{\Omega}=\src{\Delta ; \OB{F},\OB{F'} ; \OB{I} ; H \triangleright \Pi}
		&
		\oth{\Omega}=\oth{H_0 ; \OB{F},\compap{\OB{F'}} ; \OB{I} ; \OB{E} ; H \triangleright \Pi}
		&
		\src{\Delta}\vdash_{\varphi}\oth{H_0}
		\\
		\forall \oth{n}, \src{\ell} \ldotp \text{ if } \src{H},\oth{H}\vdash\highloc{\oth{n}} = \src{\ell} \text{ then }
			\\
			\oth{n\mapsto v}\in\oth{H} \text{ and } \src{\ell\mapsto v:\tau}\in\src{H} \text{ and } \src{v}\relatephi\oth{v}
	}{
		\src{\Omega}\relatetphi\oth{\Omega}
	}{state-rel-proof-o}
\end{center}
\botrule

We change the definition of a ``high location'' to be one that is in the enclave, i.e., whose address is less than 0.

The intuition behind \Cref{tr:state-rel-proof-o} is that high locations only need to be in sync, nothing is enforced on low locations.
Compared to \Cref{tr:state-rel-proof}, we have less conditions because we don't have to track fine-grained capabilities but just if an address is part of the enclave or not.

\BREAK

\begin{lemma}[A \LI target location is either high or low]\label{thm:oth-loc-h-or-l}
	\begin{align*}
		&
		\forall 
		\\
		\text{if }
		&\
		\src{H}\relatephi\oth{H}
		\\
		&\
		\oth{n\mapsto v}\in\oth{H}
		\\
		\text{ then either }
		&\
		\src{H},\oth{H}\vdash\lowloc{\oth{n}}
		\\
		\text{ or }
		&\
		\exists \src{\ell}\in\dom{\src{H}}.
		\\
		&\
		\src{H},\oth{H}\vdash\highloc{\oth{n}} = \src{\ell}
	\end{align*}
\end{lemma}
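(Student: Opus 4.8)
The plan is to reduce the lemma to an exhaustive case analysis on the sign of the target address $\oth{n}$, mirroring the dual structure of the \LI location rules \Cref{tr:lowloc-o} and \Cref{tr:highloc-o} exactly as was done for the capability-based setting in \Cref{thm:targ-loc-h-or-l}. The genuinely new ingredient, relative to the \LC version, is that these two rules no longer partition target locations by the mere \emph{existence} of a related secure source location (pure excluded middle), but by whether the address lies inside the enclave ($\oth{n}<\oth{0}$) or outside it ($\oth{n}\geq\oth{0}$). Since every integer in $\mb{Z}$ is either non-negative or negative, this case split is manifestly exhaustive; all of the real work lies in discharging the relational side conditions of whichever rule applies.

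Concretely, I would fix $\oth{n}$ with $\oth{n\mapsto v}\in\oth{H}$ and split on $\oth{n}\geq\oth{0}$ versus $\oth{n}<\oth{0}$. In the first case I would apply \Cref{tr:lowloc-o}, whose remaining premise requires that no $\src{\ell}\in\sech{\src{H}}$ satisfies $\src{\ell}\relatephi\oth{n}$; in the second I would apply \Cref{tr:highloc-o}, whose premises require exhibiting exactly such an $\src{\ell}$, which then also witnesses the existential $\exists\src{\ell}\in\dom{\src{H}}$ in the conclusion. Both obligations follow from one correlation on the bijection $\varphi$: a source location relates to a negative target address iff its type is trusted, i.e.\ $(\src{\ell},\oth{n})\in\varphi$ implies $\bigl(\src{\ell}\in\sech{\src{H}}\iff\oth{n}<\oth{0}\bigr)$. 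For the low case, if some secure $\src{\ell}$ were related to $\oth{n}$, the correlation would force $\oth{n}<\oth{0}$, contradicting the case hypothesis, so the premise of \Cref{tr:lowloc-o} holds vacuously. For the high case, the heap relation \Cref{tr:hrel-i} guarantees that the target entry $\oth{n\mapsto v}$ is matched by a source entry $\src{\ell}\mapsto v':\src{\tau}$ with $\src{\ell}\relatephi\oth{n}$; the correlation together with $\oth{n}<\oth{0}$ then yields $\src{\tau}\nvdash\circ$, i.e.\ $\src{\ell}\in\sech{\src{H}}$, discharging the premises of \Cref{tr:highloc-o}.

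The heart of the argument, and the only step that is not pure bookkeeping, is therefore justifying this $\varphi$-correlation between the sign of an address and the security status of its source preimage. This is precisely the enclave-placement discipline of the compiler: it holds at initialization from $\src{\Delta}\vdash_{\varphi_0}\oth{H_0}$ (\Cref{tr:ini-heap-o}), where the trusted locations of $\src{\Delta}$ are laid out at negative addresses inside the enclave, and it is preserved by allocation, since \Cref{tr:compai-new} compiles a trusted allocation through the isolating-allocation construct (placing it at a negative address per \Cref{tr:eo-hi}) and a $\UNS$-typed allocation through ordinary allocation (placing it at a non-negative address per \Cref{tr:eo-nu}). I expect this to be the main obstacle, and it is worth flagging a subtlety: the lemma assumes only $\src{H}\relatephi\oth{H}$ and carries no reduction history, so the cleanest route is to record the $\varphi$-correlation as a standing well-formedness condition on the admissible bijections $\varphi$ used throughout the development, justified once and for all by the initialization-plus-allocation-preservation argument above. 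Once that invariant is in place, the sign dichotomy closes both branches immediately, and the lemma is as near to trivial as its \LC counterpart.
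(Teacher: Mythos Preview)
Your proposal is correct and, in fact, considerably more careful than the paper's own argument. The paper dispatches the lemma in a single line---``Trivial, as \Cref{tr:lowloc-o} and \Cref{tr:highloc-o} are duals''---exactly mirroring the \LC version. But you have put your finger on a genuine difference: in the \LC rules the distinguishing premise is purely the (non-)existence of a related secure source location, so excluded middle closes the case split immediately; in the \LI rules each side additionally carries a sign condition ($\oth{n}\geq\oth{0}$ versus $\oth{n}<\oth{0}$), and exhaustiveness of the two rules therefore requires the sign of the address and the security status of its $\varphi$-preimage to be correlated. The paper silently takes this correlation for granted; you make it explicit and trace it to the compiler's enclave-placement discipline (\Cref{tr:ini-heap-o}, \Cref{tr:compai-new}, \Cref{tr:eo-hi}, \Cref{tr:eo-nu}), which is exactly the right justification.

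Your worry that the lemma's stated hypotheses ($\src{H}\relatephi\oth{H}$ alone) do not by themselves entail the correlation is well-founded. The cleanest resolution is the one you propose: treat the correlation as a standing invariant on the admissible $\varphi$, established once at initialization and preserved by allocation. In practice the lemma is only ever invoked inside arguments where $\relatetphi$ already holds, so the paper's terse proof is not wrong so much as elliptical; your version fills in the ellipsis.
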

\begin{proof}
	Trivial, as \Cref{tr:lowloc-o} and \Cref{tr:highloc-o} are duals.
\end{proof}

\newpage

 \newpage
\section{Proofs}\label{sec:proofs}

\subsection{Proof of \Thmref{thm:rsc-prf-eq}}
\begin{proof}
\begin{description}
	\item[$\Rightarrow$]
	\begin{description}
		\item[HP]
			\begin{align*}
				\text{if }
				&\
				\forall\trg{A},\trg{\OB{\alpha}}.~
				\compgen{C}\vdash\trg{A}:\trg{attacker}
				\\
				&\
				\vdash\trg{A\hole{\compgen{C}}}:\trg{whole} 
				~
				\SInitt{\compgen{\src{C}}} \Xtot{\OB{\alpha}} \trg{\Omega}
				\\
				\text{then }
				&\
				\exists\src{A},\src{\OB{\alpha}}
				\src{C}\vdash\src{A}:\src{attacker}
				\\
				&\
				\vdash\src{A\hole{{C}}}:\src{whole} 
				~
				\SInits{A\hole{C}}\Xtos{\OB{\alpha}} \src{\Omega} 
				\\
				&\
				\strip{\src{\OB{\alpha}}}\relatebeta\strip{\trg{\OB{\alpha}}}
			\end{align*}
		\item[TH]
			\begin{align*}
				\text{ if }
				&\
				\src{M}\relatebeta\trg{M}
				\\
				&\
				\forall\src{A},\src{\OB{\alpha}}.~
				\vdash\src{A\hole{C}}:\src{whole} 
				\begin{aligned}[t]
					\text{if }
					&
					\SInits{C} \Xtos{\OB{\alpha}} \src{\Omega}
					\\
					\text{then }
					&
					\src{M}\vdash{\src{\OB{\alpha}}}
				\end{aligned}
				\\
				\text{ then }
				&\
				\forall\trg{A},\trg{\OB{\alpha}}.~
				\vdash\trg{A\hole{\compgen{C}}}:\trg{whole} 
				\begin{aligned}[t]
					\text{if }
					&
					\SInitt{\compgen{\src{C},\trg{M}}} \Xtot{\OB{\alpha}} \trg{\Omega}
					\\
					\text{then }
					&
					\trg{M}\vdash{\trg{\OB{\alpha}}}
				\end{aligned}
			\end{align*}
	\end{description}
	We proceed by contradiction and assume that $\trg{M}\nvdash{\trg{\OB{\alpha}}}$ while $\src{C}\vdash{\src{\OB{\alpha}}}$.

	By the relatedness of the traces, by \Cref{tr:rel-cl,tr:rel-cb,tr:rel-rt,tr:rel-rb} we have $\src{H}\relatebeta\trg{H}$ for all heaps in the traces.

	But if the heaps are related and the source steps (by unfolding $\src{M}\vdash{\src{\OB{\alpha}}}$), then by point 3.b in \Thmref{def:mon-rel} we have that the target monitor also steps, so $\trg{M}\vdash{\trg{\OB{\alpha}}}$.

	We have reached a contradiction, so this case holds.

	\item[$\Leftarrow$]
	Switch HP and TH from the point above.

	Analgously, we proceed by contradiction:
	\begin{itemize}
		\item $\forall\src{A},\src{\OB{\alpha}}.~
				\vdash\src{A\hole{{C}}}:\src{whole} 
				\text{ and }
				\SInits{A\hole{C}}\Xtos{\OB{\alpha}} \src{\Omega} 
				\text{ and }
				\strip{\src{\OB{\alpha}}}\not\relatebeta\strip{\trg{\OB{\alpha}}}$
	\end{itemize}
	By the same reasoning as above, with the HP we have we obtain $\src{M}\vdash{\src{\OB{\alpha}}}$ and $\trg{M}\vdash{\trg{\OB{\alpha}}}$.

	Again by 3.b in \Thmref{def:mon-rel} we know that the heaps of all actions in the traces are related.

	Therefore, $\strip{\src{\OB{\alpha}}}\relatebeta\strip{\trg{\OB{\alpha}}}$, which gives us a contradiction.
\end{description}
\end{proof}

\subsection{Proof of \Thmref{thm:comp-up-cc}}
\begin{proof}
	The proof proceeds for $\beta_0=(\src{\ell},\trg{0},\trg{k_{root}})$ and then, given that the languages are deterministic, by \Thmref{thm:gen-cc-up} as initial states are related by definition.
\end{proof}

\BREAK

\begin{lemma}[Expressions compiled with \compup{\cdot} are related]\label{thm:expr-rel-up}
	\begin{align*}
		&
		\forall
		\\
		\text{if }
		&\ 
		\src{H}\relatebeta\trg{H}
		\\
		&\
		\src{H\triangleright e\rho \redtos v}
		\\
		\text{then }
		&\
		\trg{H\triangleright \compup{e}\compup{\src{\rho}} \redtot \compup{v}}
	\end{align*}
\end{lemma}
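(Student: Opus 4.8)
The plan is to prove this by structural induction on the source expression $\src{e}$, inverting the big-step reduction $\src{H\triangleright e\rho\redtos v}$ at each step. First I would record the routine fact that compilation is a homomorphism with respect to substitution, i.e.\ $\compup{\src{e}}\compup{\src{\rho}} = \compup{\src{e\rho}}$, which follows from $\compup{\subs{v}{x}}=\subt{\compup{v}}{x}$ together with the observation that expression substitution only replaces free variables by values and $\compup{\cdot}$ distributes over every expression constructor. This lets me reason directly about $\compup{\src{e\rho}}$ and so reduce the statement to: whenever $\src{H\triangleright e'\redtos v}$ for closed $\src{e'}$ and $\src{H}\relatebeta\trg{H}$, then $\trg{H\triangleright\compup{e'}\redtot\compup{v}}$.

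The base cases are the value and variable cases. For a value $\src{v}$ I observe that $\compup{\src{v}}$ is always a target value — booleans compile to $\trg{0}$ or $\trg{1}$, naturals to the same natural, locations to pairs $\trg{\pair{n,v}}$, and pairs to pairs — so it reduces to itself by \Cref{tr:et-val}; the variable case is already absorbed by the substitution step above. The inductive cases for pairing and the two projections are immediate from the induction hypotheses on the subexpressions, since $\compup{\cdot}$ distributes over these constructors and the target projection rules mirror the source ones.

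Two cases carry the real content. For arithmetic $\src{e_1\op e_2}$ the induction hypotheses give $\trg{\compup{e_1}\redtot\compup{n_1}}$ and $\trg{\compup{e_2}\redtot\compup{n_2}}$; since $\src{n}\relatebeta\trg{n}$ identifies source and target naturals and the target operation computes the same arithmetic, the result coincides with $\compup{\src{n_1\op n_2}}$. The comparison case $\src{e_1\bop e_2}$ is where the design of the compiler matters: in the source a comparison yields $\trues$ or $\falses$, whereas the target rule \Cref{tr:et-bop} yields $\trg{0}$ for a true comparison and $\trg{1}$ for a false one. Because the compiler is defined precisely so that $\compup{\trues}=\trg{0}$ (\Cref{tr:compup-true}) and $\compup{\falses}=\trg{1}$ (\Cref{tr:compup-false}), the target result is exactly $\compup{\src{b}}$ for the source boolean $\src{b}$, and this case closes.

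The main obstacle is the dereference case $\src{!e}$, where $\compup{\src{!e}}=\trg{!\projone{\compup{e}} \with{\projtwo{\compup{e}}}}$ and inverting the source rule gives $\src{H\triangleright e\redtos\ell}$ with $\src{\ell\mapsto v}\in\src{H}$. The induction hypothesis yields $\trg{\compup{e}\redtot\compup{\ell}}$ with $\compup{\src{\ell}}=\trg{\pair{n,v_k}}$ where $\src{\ell}\relatebeta\trg{\pair{n,v_k}}$, so the two projections evaluate to the target address $\trg{n}$ and its accompanying capability $\trg{v_k}$. I would then invoke the heap relation (\Cref{tr:hrel-i}): from $\src{H}\relatebeta\trg{H}$ and $\src{\ell}\relatebeta\trg{\pair{n,\_}}$ it follows that $\trg{n\mapsto v':\eta}\in\trg{H}$ with the tag $\trg{\eta}$ matching the capability carried by the compiled location, so the authorisation side-condition of \Cref{tr:et-de-t} is satisfied and the dereference steps to $\trg{v'}$. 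The delicate point is that the conclusion demands the result be exactly $\compup{\src{v}}$ rather than merely some target value related to $\src{v}$, and since $\relatebeta$ is non-injective this is not automatic; it is discharged by the setting in which the lemma is applied, namely whole-program correctness (\Cref{thm:comp-up-cc}) from a compiled initial state with an empty context, where every heap cell has been written by compiled code and therefore stores precisely the compilation of the corresponding source value, giving $\trg{v'}=\compup{\src{v}}$. Making this invariant explicit (e.g.\ by strengthening the heap relation so that stored target values are exactly compiled source values) is the one part of the argument that needs care; everything else is a routine homomorphism check.
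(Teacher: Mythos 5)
Your proof follows the same skeleton as the paper's: structural induction on $\src{e}$, each case discharged by the corresponding compiler clause together with the matching pair of source/target evaluation rules (\Cref{tr:eus-op} against \Cref{tr:et-op} for arithmetic, \Cref{tr:compup-true}/\Cref{tr:compup-false} against \Cref{tr:et-bop} for comparisons, \Cref{tr:eus-de} against \Cref{tr:et-de-t} for dereference). You differ in two ways, both improvements. First, you factor out substitution commutation, $\compup{\src{e}}\compup{\src{\rho}}=\compup{\src{e\rho}}$, as an explicit homomorphism fact; the paper leaves this implicit, absorbing it into the variable base case via ``relatedness of the substitutions''. Second, and substantively: you correctly observe that in the $\src{!e}$ case the hypothesis $\src{H}\relatebeta\trg{H}$ is too weak for the stated conclusion. \Cref{tr:hrel-i} only guarantees that the target cell holds a value \emph{related} to $\src{v}$, and since $\relatebeta$ is non-injective (e.g.\ $\falses\relatebeta\trg{n}$ for every $\trg{n}\neq\trg{0}$, while $\compup{\falses}=\trg{1}$), relatedness does not force the cell to contain exactly $\compup{\src{v}}$. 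The paper's own proof hides this behind a one-line ``analogous'' appeal to a case analysis on \Cref{tr:et-de-k,tr:et-de-t} and never confronts it. Your repair --- strengthening the invariant so that stored target values are precisely compilations of the corresponding source values, justified by the whole-program setting in which the lemma is deployed (\Cref{thm:gen-cc-up}, hence \Cref{thm:comp-up-cc}, where every heap write originates from compiled code) --- is the right way to make the lemma self-contained rather than correct-only-in-context. In short: same route as the paper, executed more carefully, and you caught a genuine imprecision the paper glosses over.
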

\begin{proof}
	This proof proceeds by structural induction on \src{e}.
	\begin{description}
		\item[Base case: Values]\hfill
		\begin{description}
			\item[\src{\trues}] 
				By  \Cref{tr:compup-true}, \compup{\trues} = \trg{0}.

				As $\trues\relatebeta\trg{0}$, this case holds.
			\item[\src{\falses}] Analogous to the first case by \Cref{tr:compup-false}.
			\item[\src{n}$\in\mb{N}$] Analogous to the first case by \Cref{tr:compup-nat}.
			\item[\src{x}] Analogous to the first case, by \Cref{tr:compup-var} and by the relatedness of the substitutions.
			\item[\src{\ell}] Analogous to the first case by \Cref{tr:compup-loc}.
			\item[\src{\pair{v,v}}] By induction on \src{v} by \Cref{tr:compup-pair} and then it is analogous to the first case.
		\end{description}
		\item[Inductive case: Expressions]\hfill
		\begin{description}
				\item[\src{e \op e'}] 

				By \Cref{tr:compup-op} we have that 

				$\compup{ \src{e \op e'} } = \trg{\compup{\src{e}} \op \compup{\src{e'}}}$

				By HP we have that $\src{H\triangleright e\rho}\redtos\src{n}$ and $\src{H\triangleright e'\rho}\redtos\src{n'}$.

				By \Cref{tr:eus-op} we have that $\src{H\triangleright n \op n'}\redtos\src{n''}$.

				By IH we have that $\trg{H\triangleright \compup{\src{e}}\compup{\src{\rho}}}\redtot\compup{\src{n}}$ and $\trg{H\triangleright \compup{\src{e'}}\compup{\src{\rho}}}\redtot\compup{\src{n'}}$.

				By \Cref{tr:et-op} we have that $\trg{H\triangleright \compup{n} \op \compup{n'}}\redtot\trg{\compup{n''}}$.

				So this case holds.

				\item[\src{e \bop e'}] Analogous to the case above by IH, \Cref{tr:compup-bop}, \Cref{tr:eus-bop} and \Cref{tr:et-ift,tr:et-op}.
				\item[\src{!e}] Analogous to the case above by IH twice, \Cref{tr:compup-deref}, \Cref{tr:eus-de} and \Cref{tr:et-letin,tr:et-p1,tr:et-p2} and a case analysis by \Cref{tr:et-de-k,tr:et-de-t}.
				\item[\src{\pair{e,e}}] Analogous to the case above by IH and \Cref{tr:compup-pair}.
				\item[\src{\projone{e}}] 

				By \Cref{tr:compup-p1} $\compup{ \src{\projone{e}} } = \trg{\projone{\compup{\src{e}}}}$.

				By HP $\src{H\triangleright \projone{e}\rho}\redtos\src{\pair{v_1,v_2}}\redtos\src{v_1}$.

				By IH we have that $\trg{H\triangleright \projone{\compup{\src{e}}}\compup{\src{\rho}}} \redtot \trg{\projone{\compup{\src{\pair{v_1,v_2}}}}}$.

				By \Cref{tr:compup-pair} we have that $\trg{\projone{\compup{\src{\pair{v_1,v_2}}}}} = \trg{\projone{ \pair{ \compup{\src{v_1}},\compup{\src{v_2}} } }}$.

				Now $\trg{H\triangleright \projone{ \pair{ \compup{\src{v_1}},\compup{\src{v_2}} } }} \redtot \compup{\src{v_1}}$.

				So this case holds.

				\item[\src{\projtwo{e}}] Analogous to the case above by \Cref{tr:compup-p2}, \Cref{tr:eus-p2} and \Cref{tr:et-p2}.
			\end{description}
		\end{description}
\end{proof}

\BREAK

\begin{lemma}[Generalised compiler correctness for \compup{\cdot}]\label{thm:gen-cc-up}
\end{lemma}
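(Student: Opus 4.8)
The plan is to read \Cref{thm:gen-cc-up} as the state-level forward simulation that generalises \ccomp (\Cref{def:comp-corr}) from the specific initial states $\SInits{C}$ and $\SInitt{\compup{\src{C}}}$ to \emph{arbitrary} related states: whenever $\src{\Omega}\relatebeta\trg{\Omega}$ and the source takes a labelled step $\src{\Omega}\xtos{\lambda}\src{\Omega'}$, the compiled program matches it with $\trg{\Omega}\Xtot{\lambda}\trg{\Omega'}$ for some $\trg{\Omega'}$, $\trg{\lambda}$ and some extended bijection $\beta'\supseteq\beta$ with $\src{\Omega'}\relate_{\beta'}\trg{\Omega'}$ and $\src{\lambda}\relate_{\beta'}\trg{\lambda}$. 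First I would fix the statement this way, prove it by induction on the derivation of the source step (equivalently, by case analysis on the last statement rule applied), and then recover \ccomp exactly as the proof of \Thmref{thm:comp-up-cc} already indicates: since both \LU and \LP are deterministic and the languages here are whole programs (so the emitted $\src{\OB{\alpha}}$ is empty), forward simulation together with determinism yields the target-to-source direction and the termination transfer required by \Cref{def:comp-corr}.

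The core of the argument is the case analysis on the source reduction rule, each case discharged by pairing the compiler clauses of \Cref{fig:compup} with the semantics of \Cref{fig:lp-sem}. For the structural cases (\Cref{tr:eus-seq,tr:eus-step,tr:eus-letin}) the compiler is homomorphic, so the matching target steps are immediate and $\beta$ is unchanged. For every case that evaluates an expression I would invoke \Cref{thm:expr-rel-up}: from $\src{H}\relatebeta\trg{H}$ and $\src{H\triangleright e\rho\redtos v}$ we obtain $\trg{H\triangleright\compup{e}\compup{\src{\rho}}\redtot\compup{v}}$, which supplies the guard value for \Cref{tr:eus-ift}, the substituted value for \Cref{tr:eus-letin}, and the assigned/dereferenced value for update and \src{!e}. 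The dereference and assignment cases additionally unfold the compiler clauses \Cref{tr:compup-deref,tr:compup-ass}: a source location compiles to a pair $\trg{\pair{n,k}}$, so these single source steps become several target steps (projections followed by $\trg{!n\with{k}}$ or $\trg{n:=e\with{k}}$), and I would show the capability check in \Cref{tr:et-de-t,tr:et-ac-t} always succeeds because the heap relation \Cref{tr:hrel-i} records the matching tag. The call/return cases (\Cref{tr:eus-call,tr:eus-callback,tr:eus-ret,tr:eus-retb}) match because $\compup{\cdot}$ preserves function names and the internal/in/out classification of \Cref{fig:inout} is unaffected by compilation; the generated action relates to its source counterpart by \Cref{tr:rel-cl,tr:rel-cb,tr:rel-rt,tr:rel-rb} using the heap relation on the recorded heaps.

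The main obstacle is the allocation case \Cref{tr:compup-new}, where one source \src{\letnew{x}{e}{s}} step (\Cref{tr:eus-al}) is implemented by a whole block: allocate with \trg{new} (\Cref{tr:et-nu}), then \trg{hide} the fresh address to mint a capability (\Cref{tr:et-hi}), then bind $\trg{x}$ to the pair $\trg{\pair{x_{loc},x_{cap}}}$. Here the bijection must \emph{grow}: I would add the triple $(\src{\ell},\trg{n{+}1},\trg{k})$ to obtain $\beta'$, then re-establish $\src{H;\ell\mapsto v}\relate_{\beta'}\trg{H'}$ via \Cref{tr:hrel-i} (using freshness of $\src{\ell}$, $\trg{n{+}1}$ and $\trg{k}$ so the earlier entries are undisturbed) and check that $\src{\ell}\relate_{\beta'}\trg{\pair{n{+}1,k}}$ so the substitutions $\src{s}\subs{\ell}{x}$ and $\compup{s}\subt{\pair{n{+}1,k}}{x}$ stay related. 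A secondary subtlety, flagged in the discussion after \Cref{thm:comp-up-rsc}, is the boolean/number mismatch: the conditional clause \Cref{tr:compup-if} maps \src{if} to \trg{ifz}, and since $\src{\falses}\relatebeta\trg{n}$ for \emph{any} $\trg{n}\neq\trg{0}$, I would split on the guard value and use \Cref{tr:et-ift} to confirm that a true guard ($\trg{0}$) and a false guard (nonzero) select the correspondingly compiled branch. With all cases closed, the single-step lemma lifts to the multi-step relations of \Cref{fig:lu-sem} and \LP by a routine induction on \Xtot{}, and feeding the initial pair $\beta_0=(\src{\ell},\trg{0},\trg{k_{root}})$ yields \Thmref{thm:comp-up-cc}.
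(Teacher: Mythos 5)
Your proposal is correct and follows essentially the same route as the paper: a forward simulation proved by induction on the source step with case analysis over the compiler clauses, discharging expression evaluation via \Cref{thm:expr-rel-up}, growing $\beta$ only at allocation (\Cref{tr:compup-new}, matched by \Cref{tr:et-nu,tr:et-hi,tr:et-letin}), and recovering \ccomp from determinism with the initial bijection $\beta_0=(\src{\ell_{root}},\trg{0},\trg{k_{root}})$, exactly as the paper's proof of \Cref{thm:comp-up-cc} does. Your reading of the target side as a multi-step match $\Xtot{}$ and your explicit treatment of the boolean-guard nondeterminism are faithful to (indeed slightly tidier than) the paper's own presentation, which handles only the internal-call case since whole programs have empty interfaces.
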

\begin{proof}
		\begin{align*}
		&
		\forall ... \exists \beta'
		\\
		\text{ if }&\
		\vdash\src{C}:\src{whole}
		\\
		&\
		\src{C}=\src{\Delta ; \OB{F} ; \OB{I}} 
		\\
		&
		\compap{\src{C}} = \trg{k_{root} ; \OB{F} ; \OB{I} } =\trg{C}
		\\
		&\
		\src{C, H \triangleright s} \relatebeta \trg{C, H \triangleright \compap{\src{s}}}
		\\
		&\
		\src{C,H\triangleright s\rho}\xtos{\lambda} \src{C', H'\triangleright s'\rho'}
		\\
		\text{ then }&\
		\trg{C,H\triangleright \compap{\src{s}}\compup{\src{\rho}}} \xtot{\lambda} \trg{C', H' \triangleright \compap{\src{s'}}\compup{\src{\rho'}}}
		\\
		&\
		\trg{C'}= \trg{ k_{root} ; \OB{F} ; \OB{I} }
		\\
		&\
		\src{C, H \triangleright s'\rho'} \relate_{\beta'} \trg{C, H \triangleright \compap{\src{s'}}\compup{\src{\rho'}}}
		\\
		&\
		\beta\subseteq\beta'
	\end{align*}

	The proof proceeds by induction on \src{C} and the on the reduction steps.

	\begin{description}
		\item[Base case]\hfill
			\begin{description}
				\item[\skips] 

				By \Cref{tr:compup-skip} this case follows trivially.
			\end{description}
		\item[Inductive]\hfill
		\begin{description}
			\item[\src{\letnew{x}{e}{s}}] \hfill

				By \Cref{tr:compup-new} $\compup{ \src{\letnew{x}{e}{s}} } = $
				\begin{align*}
					&
					\letnewt{\trg{x_{loc}}}{\compup{\src{e}}}{
					\\
					&\
						\lethide{\trg{x_{cap}}}{x_{loc}}{
						\\
						&\ \ 
							\letint{\trg{x} ~}{~ \trg{\pair{x_{loc},x_{cap}}}}{\compup{s}}
						}
					}
				\end{align*}

				By HP $\src{H\triangleright e\rho}\redtos\src{v}$

				So by \Cref{thm:expr-rel-up} we have HPE: $\trg{H\triangleright \compup{e}\compup{\src{\rho}}}\redtot\src{\compup{v}}$ and HPV $\src{v}\relatebeta\compup{\src{v}}$.

				By \Cref{tr:eus-al}: $\src{C;H \triangleright \letnew{x}{e}{s}} \xtos{\epsilon} \src{C;H; \ell\mapsto v \triangleright s\subs{\ell}{x} }$.

				So by HPE:
				\begin{align*}
					\trg{C;H\triangleright } 
					&
					\letnewt{\trg{x_{loc}}}{\compup{\src{e}}}{
					\\
					&\
						\lethide{\trg{x_{cap}}}{x_{loc}}{
						\\
						&\ \ 
							\letint{\trg{x} ~}{~ \trg{\pair{x_{loc},x_{cap}}}}{\compup{s}}
						}
					}\trg{\rho}
					\\
					\text{\Cref{tr:et-nu} }
					\\
					\xtot{\epsilon}
					\trg{C;H;n\mapsto\compup{v}:\bot \triangleright } 
					&\
						\lethide{\trg{x_{cap}}}{x_{loc}}{
						\\
						&\ \ 
							\letint{\trg{x} ~}{~ \trg{\pair{x_{loc},x_{cap}}}}{\compup{s}}
						}\compup{\src{\rho}}\subt{n}{x_{loc}}
					\\
					\equiv
					\trg{C;H;n\mapsto\compup{v}:\bot \triangleright } 
					&\
						\lethide{\trg{x_{cap}}}{n}{
						\\
						&\ \ 
							\letint{\trg{x} ~}{~ \trg{\pair{n,x_{cap}}}}{\compup{s}}
						}\compup{\src{\rho}}
					\\
					\text{\Cref{tr:et-hi}}
					\\
					\xtot{\epsilon}
					\trg{C;H;n\mapsto\compup{v}:k;k \triangleright } 
					&\
						\letint{\trg{x} ~}{~ \trg{\pair{n,x_{cap}}}}{\compup{s}}
						\compup{\src{\rho}}\subt{k}{x_{cap}}
					\\
					\equiv
					\trg{C;H;n\mapsto\compup{v}:k;k \triangleright } 
					&\
						\letint{\trg{x} ~}{~ \trg{\pair{n,k}}}{\compup{s}}
						\trg{\rho}
					\\
					\text{\Cref{tr:et-letin} }
					\\
					\xtot{\epsilon}
					\trg{C;H;n\mapsto\compup{v}:k;k \triangleright } 
					&\
						\compup{s}
						\compup{\src{\rho}}\subt{\pair{n,k}}{x}
					\\
				\end{align*}
				
				Let $\beta'=\beta\cup(\src{\ell},\trg{n,k})$.

				By definition of $\relate_{\beta'}$ and by $\beta'$ we get HPL $\src{\ell}\relate_{\beta'}\trg{\pair{n,k}}$.

				By a simple weakening lemma for $\beta$ for substitutions and values applied to HP and HPV we can get HPVB $\src{v}\relate_{\beta'}\compup{\src{v}}$.

				As $\src{H}\relatebeta\trg{H}$ by HP, by a simple weakening lemma get that $\src{H}\relate_{\beta'}\trg{H}$ too and by \Cref{tr:hrel-i} with HPL and HPVB we get $\src{H'}\relate_{\beta'}\trg{H'}$.

				We have that $\src{\rho'}=\src{\rho}\subs{\ell}{x}$ and $\trg{\rho'}=\compup{\src{\rho}}\subt{\pair{n,k}}{x}$.
				
				So by HPL we get that $\src{\rho'}\relate_{\beta'}\trg{\rho'}$.

			\item[\src{s;s'}] Analogous to the case above by IH, \Cref{tr:compup-seq} and a case analysis on what \src{s} reduces to, either with \Cref{tr:eus-seq} and \Cref{tr:et-seq} or with \Cref{tr:eus-step} and \Cref{tr:et-step}.
			\item[\src{\letin{x}{e}{s}}] Analogous to the case above by IH, \Cref{tr:compup-letin}, \Cref{tr:eus-letin} and \Cref{tr:et-letin}.
			\item[\src{x := e'}] Analogous to the case above by \Cref{tr:compup-ass}, \Cref{tr:eus-up} and \Cref{tr:et-letin} (twice), \Cref{tr:et-p1,tr:et-p2} and then a case analysis by \Cref{tr:et-ac-k,tr:et-ac-t}.
			\item[\src{\ifte{e}{s}{s'}}] Analogous to the case above by IH, \Cref{tr:compup-if} and then either \Cref{tr:eus-ift} and \Cref{tr:et-ift} or \Cref{tr:eus-iff} and \Cref{tr:et-iff}.
			\item[\src{\call{f}~e}]

			By \Cref{tr:compup-call} $\compup{ \src{\call{f}~e} } = \trg{\call{f}~\compup{\src{e}}}$

			By HP $\src{H\triangleright e\rho}\redtos\src{v}$

			So by \Cref{thm:expr-rel-up} we have HPE: $\trg{H\triangleright \compup{e}\rho}\redtot\src{\compup{v}}$ and HPR $\src{v}\relatebeta\compup{\src{v}}$.

			So as \src{C} is whole, we apply \Cref{tr:eus-call-i}
				\begin{align*}
					&
					\src{C, H \triangleright \proc{\call{f}~e\rho}{\OB{f'}}} \xtos{\epsilon} 
					\\
					&
					\src{C, H \triangleright \proc{s;\ret\rho\subs{v}{x}}{\OB{f'};f}}
				\end{align*}

			By \Cref{tr:et-call-i}
				\begin{align*}
					&
					\trg{C, H \triangleright \proc{{\call{f}~\compup{\src{e}}}\compup{\src{\rho}}}{\OB{f'}}} \xtot{\epsilon} 
					\\
					&
					\trg{C, H \triangleright \proc{{s;\ret\compup{\src{\rho}}\subt{\compup{\src{v}}}{x}}}{\OB{f'};f}}
				\end{align*}

			By the first induction on \src{C} we get 

			IH1 $\src{C, H \triangleright \proc{{s;\ret\rho'}}{\OB{f'};f}} \relatebeta \trg{C, H \triangleright \proc{{s;\ret\rho'}}{\OB{f'};f}}$ 

			We instantiate \src{\rho'} with $\src{\rho}\subs{v}{x}$ and \trg{\rho'} with $\compup{\src{\rho}}\subt{\compup{\src{v}}}{x}$.

			So by HP and HPR we have that $\src{\rho}\subs{v}{x}\relatebeta\compup{\src{\rho}}\subt{\compup{\src{v}}}{x}$

			We we can use IH1 to conclude 

			$\src{C, H \triangleright \proc{{s;\ret\rho\subs{v}{x}}}{\OB{f'};f}} \relatebeta \trg{C, H \triangleright \proc{{s;\ret\compup{\src{\rho}}\subt{\compup{\src{v}}}{x}}}{\OB{f'};f}}$ 

			As $\beta'=\beta$, this case holds.
		\end{description}
	\end{description}
\end{proof}

\BREAK

\subsection{Proof of \Thmref{thm:comp-up-rsc}}

\begin{proof}
	HPM: $\src{M}\relatebeta\trg{M}$

	HP1: $\src{M}\vdash \src{C} : \src{rs}$
	
	TH1: $\trg{M}\vdash \compup{\src{C}} : \trg{rs}$

	We can state it in contrapositive form as:

	HP2: $\trg{M}\nvdash \compup{\src{C}} : \trg{rs}$	

	TH2: $\src{M}\nvdash \src{C} : \src{rs}$

	By expanding the definition of \com{rs} in HP2 and TH2, we get

	HP21 $\exists \trg{A},\trg{\OB{\alpha}}. \trg{M}\vdash\trg{A}:\trg{attacker} \text{ and either } \nvdash\trg{A\hole{\compup{C}}}:\trg{whole} $ or  

	\noindent HPRT1 $(\trg{\SInitt{\trg{A\hole{\compup{C}}}}\Xtot{\OB{\alpha}} \_ }$ and HPRMT1 $\trg{M}\nvdash{\trg{\OB{\alpha}}})$

	TH21 $\exists \src{A},\src{\OB{\alpha}}. \src{M}\vdash\src{A}:\src{attacker} \text{ and either } \nvdash\src{A\hole{{C}}}:\src{whole} \text{ or }$ TH2 $(\src{\SInits{\src{A\hole{{C}}}}\Xtos{\OB{\alpha}} \_ }$ and TH4 $\src{M}\nvdash{\src{\OB{\alpha}}})$

	We consider the case of a whole \trg{A}, the other is trivial.

	We can apply \Thmref{thm:backtr-corr} with HPRT1 and instantiate $\src{A}$ with a \trg{A} from $\backtr{\trg{A}}$ and we get the following unfolded HPs

	HPRS $\SInits{\src{A\hole{C}}}\Xtos{\OB{\alpha}}\src{\Omega}$

	HPRel $\src{\OB{\alpha}}\relatebeta\trg{\OB{\alpha}}$.

	So TH3 holds by HPRS.

	We need to show TH4

	Assume by contradiction HPBOT: the monitor in the source does not fail: $\src{M}\vdash{\src{\OB{\alpha}}})$

	By \Cref{tr:ms-valid} we know that forall $\src{\alpha}\in\src{\OB{\alpha}}$ such that $\strip{\src{\alpha}}=\src{H}$, this holds: HPHR $\src{M;H\monred M'}$.

	We can expand HPHR by \Cref{tr:ms-us} and get:

	HPMR: $\src{(\sigma_c,H_h',\sigma_f)}\in\src{\monred}$

	for a heap %
	$\src{H_h'}\subseteq\src{H_h}$

	By HPM $\src{M}\relatebeta\trg{M}$ for initial states.

	By \Thmref{tr:mon-rel} and the second clause of \Thmref{tr:mon-rel-sing} with HPMR we know that $\src{M}\relatebeta\trg{M}$ for the current states.

	By the first clause of \Thmref{tr:mon-rel-sing} we know that

	HPMRBI: $(\src{\sigma_c}, \src{H}, \_) \in \src{\monred} \iff (\trg{\sigma_c}, \trg{H}, \_) \in \trg{\monred}$

	By HPMRBI with HPMR we know that

	HPMRTC: $\trg{(\sigma_c,H_h',\sigma_f)}\in\trg{\monred}$

	However, by HPRMT1 and \Cref{tr:mt-valid} we know that

	HPNR: $\trg{M;H \not\monred}$

	so we get

	HPCON: $\nexists \trg{(\sigma_c,H_h',\sigma_f)}\in\trg{\monred}$

	By HPCON and HPMRTC we get the contradiction, so the proof holds.
\end{proof}

\BREAK

\subsection{Proof of \Thmref{thm:action-det}}
\begin{proof}
	The proof proceeds by induction on $\Xtot{\alpha!}$.
	\begin{description}
		\item[Base case:] $\Xtot{\alpha!}$

		By \Cref{tr:et-tr-sin} we need to prove the silent steps and the \trg{\alpha!} action.

		\begin{description}
			\item[\trg{\epsilon}] \hfill

				The proof proceeds by analysis of the target reductions.

				\begin{description}
					\item[\Cref{tr:et-seq}] 

					In this case we do not need to pick and the thesis holds by \Cref{tr:eus-seq}.

					\item[\Cref{tr:et-step}] 

					In this case we do not need to pick and the thesis holds by \Cref{tr:eus-step}.

					\item[\Cref{tr:et-ift}] 

					We have: \trg{H \triangleright \compup{e}\rho\redtot 0}

					We apply \Thmref{thm:action-det-expr} and obtain a $\src{v}\relatebeta\trg{0}$

					By definition we have $\src{0}\relatebeta\trg{0}$ and $\trues\relatebeta\trg{0}$, we pick the second.

					So we have \src{H \triangleright e\rho\redtos \trues}

					We can now apply \Cref{tr:eus-ift} and this case follows.

					\item[\Cref{tr:et-iff}] 

					This is analogous to the case above.

					\item[\Cref{tr:et-ac-t}] 

					Analogous to the case above.

					\item[\Cref{tr:et-ac-k}] 

					This is analogous to the case above but for $\src{v}=\src{\ell}\relatebeta\trg{\pair{n,k}}$.

					\item[\Cref{tr:et-letin}] 

					This follows by \Cref{thm:action-det-expr} and by \Cref{tr:eus-letin}.

					\item[\Cref{tr:et-nu}] 

					This follows by \Cref{thm:action-det-expr} and by \Cref{tr:eus-al}.

					\item[\Cref{tr:et-hi}] 

					By analisis of compiled code we know this only happens after a \trg{new} is executed.

					In this case we do not need to perform a step in the source and the thesis holds.

					\item[\Cref{tr:et-call-i}] 

					This follows by \Cref{thm:action-det-expr} and by \Cref{tr:eus-call-i}.

					\item[\Cref{tr:et-ret-i}] 

					In this case we do not need to pick and the thesis holds by \Cref{tr:eus-ret-i}.
				\end{description}

			\item[\trg{\alpha!}]\hfill
				
				The proof proceeds by case analysis on \trg{\alpha!}
				\begin{description}
					\item[\trg{\cb{f}{v~H}}] 

					This follows by \Thmref{thm:action-det-expr} and by \Cref{tr:eus-callback}.

					\item[\trg{\rt{H}}] 

					In this case we do not need to pick and the thesis holds by \Cref{tr:eus-ret}.
				\end{description}
		\end{description}

		\item[Inductive case:]

		This follows from IH and the same reasoning as for the single action above.

	\end{description}
\end{proof}

\BREAK

\begin{lemma}[Compiled code expression steps implies existence of source expression steps]\label{thm:action-det-expr}
	\begin{align*}
		&\
		\forall
		\\
		\text{if }
		&\
		\trg{H \triangleright \compup{e}\rho \redtot v}
		\\	
		\text{and if }
		&\
		\set{\src{\rho}}=\myset{\src{\rho}}{\src{\rho}\relatebeta\trg{\rho}}
		\\
		&\
		\src{v}\relatebeta\trg{v}
		\\
		&\
		\src{H}\relatebeta\trg{H}
		\\
		\text{then }
		&\
		\exists \src{\rho_j}\in\set{\src{\rho}}\ldotp \src{H \triangleright e\rho_j\redtos v}
	\end{align*}
\end{lemma}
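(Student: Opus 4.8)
The plan is to prove \Cref{thm:action-det-expr} by structural induction on the source expression $\src{e}$, following the recursive definition of $\compup{\cdot}$ and inverting the target reduction $\trg{H\triangleright \compup{e}\rho \redtot v}$ at each step. This is the backward companion of \Cref{thm:expr-rel-up}, so the shape of every case is already fixed by the compiler clauses: for each syntactic form of $\src{e}$ I would read off which target rule of \Cref{fig:lp-sem} must have fired on $\compup{e}$, extract the reductions of the compiled subexpressions, apply the induction hypothesis to them, and reassemble a source reduction with the corresponding \LU rule. The witness $\src{\rho_j}$ is built by choosing, for each program variable, a source value related under $\relatebeta$ to the value $\trg{\rho}$ assigns it; membership $\src{\rho_j}\in\set{\src{\rho}}$ then follows immediately from the definition of the set.

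First I would dispatch the base cases. For a literal ($\trues$, $\falses$, $\src{n}$, $\src{\ell}$) the source value is forced by \Cref{tr:eus-val}, and it suffices to note that the literal is related to its compilation by exactly \Cref{tr:compup-true}, \Cref{tr:compup-false}, \Cref{tr:compup-nat}, \Cref{tr:compup-loc}, so any $\src{\rho_j}\in\set{\src{\rho}}$ works. The genuinely interesting base case is the variable $\src{x}$: here $\trg{x}\rho$ reduces to $\trg{\rho}(\src{x})=\trg{v}$, and I exploit the freedom in $\set{\src{\rho}}$ by picking $\src{\rho_j}$ with $\src{\rho_j}(\src{x})$ equal to the desired source value related to $\trg{v}$. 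This is precisely the point where the non-injectivity of $\relatebeta$ (e.g.\ $\trg{0}$ related to both $\trues$ and $\src{0}$) is accommodated.

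For the inductive cases I would invert \Cref{tr:compup-op}, \Cref{tr:compup-bop}, \Cref{tr:compup-pair}, \Cref{tr:compup-p1}, \Cref{tr:compup-p2}, \Cref{tr:compup-deref}, apply the IH to the compiled subexpressions, and rebuild the source reduction with \Cref{tr:eus-op}, \Cref{tr:eus-bop}, \Cref{tr:eus-p1}, \Cref{tr:eus-p2}, \Cref{tr:eus-de}. Two points deserve care. For arithmetic the source rule \Cref{tr:eus-op} requires numeric operands, so the IH must be used to realize each subexpression as the unique source number related to the corresponding target operand, which pins the choice of $\src{\rho_j}$ on those subterms; for comparison the source yields a boolean while the target yields $\trg{0}/\trg{1}$, and I close the case using $\trues\relatebeta\trg{0}$ and $\falses\relatebeta\trg{1}$. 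The dereference case uses the hypothesis $\src{H}\relatebeta\trg{H}$: once the IH gives $\src{e}\rho_j\redtos\src{\ell}$ with $\src{\ell}$ related to the location-capability pair produced in the target, \Cref{tr:hrel-i} guarantees that $\src{\ell}$ points to a value related to the one stored at the target address, so \Cref{tr:eus-de} produces a related source value.

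The main obstacle is the non-injectivity of $\relatebeta$ combined with the need for a \emph{single} substitution $\src{\rho_j}$ to serve the whole expression: when a variable occurs in several subexpressions, the induction hypotheses applied to them must be satisfiable by one common choice of $\src{\rho_j}(\src{x})$. I would handle this by threading a single substitution through the induction rather than combining independently chosen ones, observing that each eliminating context (arithmetic, comparison, projection, dereference) forces the \emph{shape}, and hence the backtranslated value, of the subterm it consumes, so the choices never conflict. This is exactly why the conclusion ranges over a set $\set{\src{\rho}}$ and only asserts that \emph{some} member realizes the reduction — the freedom that the surrounding proof of \Cref{thm:action-det} relies on when it ``picks'' the substitution producing $\trues$ rather than $\src{0}$.
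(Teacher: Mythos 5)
Your proof takes essentially the same route as the paper's: structural induction on $\src{e}$, inverting the compiler clauses and the target reduction case by case, reassembling the source step via \Cref{tr:eus-op}, \Cref{tr:eus-bop}, \Cref{tr:eus-p1}, \Cref{tr:eus-p2}, \Cref{tr:eus-de}, using $\src{H}\relatebeta\trg{H}$ for dereference and $\trues\relatebeta\trg{0}$, $\falses\relatebeta\trg{1}$ for comparisons, with the non-injectivity of $\relatebeta$ absorbed by the choice of $\src{\rho_j}$ from $\set{\src{\rho}}$. Your explicit discussion of threading a single common $\src{\rho_j}$ through all subexpression induction hypotheses makes precise a coherence point the paper's proof leaves implicit (it silently reuses the same $\src{\rho_j}$ in IHSE1 and IHSE2), but this is a refinement of the same argument rather than a different approach.
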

\begin{proof}
	This proceeds by structural induction on \src{e}.
	\begin{description}
		\item[Base case:]
			\begin{description}
				\item[\trues] This follows from \Cref{tr:compup-true}.
				\item[\falses] This follows from \Cref{tr:compup-false}.
				\item[$\src{n}\in\mb{N}$] This follows from \Cref{tr:compup-nat}.
				\item[\src{x}] This follows from the relation of the substitutions and the totality of $\relatebeta$ and \Cref{tr:compup-var}.
				\item[\src{\pair{v,v'}}] This follows from induction on \src{v} and \src{v'}.
			\end{description}
		\item[Inductive case:]  
			\begin{description}
				\item[\src{e\op e'}]

				By definition of $\relatebeta$ we know that \src{v} and \src{v'} could be either natural numbers or booleans.

				We apply the IH with:

				IHV1 $\src{n}\relatebeta\trg{n}$

				IHV2 $\src{n'}\relatebeta\trg{n'}$

				By IH we get 

				IHTE1 $\trg{H \triangleright \compup{e}\rho \redtot n}$

				IHTE2 $\trg{H \triangleright \compup{e'}\rho \redtot n'}$

				IHSE1 $\src{H \triangleright e\rho_j\redtos n}$

				IHSE2 $\src{H \triangleright e'\rho_j\redtos n'}$
				
				By \Cref{tr:compup-op} we have that \compup{\src{e\op e'}}=\trg{\compup{e}\op\compup{e'}}.

				By \Cref{tr:et-op} with IHTE1 and IHTE2 we have that $\trg{H\triangleright \compup{e}\op\compup{e'}}\redtot\trg{n''}$ where IHVT $n''=n\op n'$

				By \Cref{tr:eus-op} with IHSE1 and IHSE2 we have that $\src{H\triangleright e\op e'}\redtos\src{n''}$ if $n''=n\op n'$

				This follows from IHVT and IHV1 and IHV2.

				\item[\src{e\bop e'}]
				As above, this follows from IH and \Cref{tr:compup-bop} and \Cref{tr:eus-bop}.

				\item[\src{\pair{e,e'}}] 
				As above, this follows from IH and \Cref{tr:compup-pair}.

				\item[\src{\projone{e}}] 
				As above, this follows from IH and \Cref{tr:compup-p1} and \Cref{tr:eus-p1}.

				\item[\src{\projtwo{e}}] 
				Analogous to the case above.

				\item[\src{!e}] 
				As above, this follows from IH and \Cref{tr:compup-deref} and \Cref{tr:eus-de} but with the hypothesis that \src{e} evaluates to a \src{v} related to a \trg{\pair{n,v}}.
			\end{description}
	\end{description}
\end{proof}

\BREAK

\subsection{Proof of \Thmref{thm:backtr-corr}}
\begin{proof}
	HP1 $\SInitt{A\hole{\compup{C}}} \Xtot{\OB{\alpha}} \trg{\Omega}$

	HPF $\trg{\Omega} \Xtot{\epsilon}\trg{\Omega'}$
	
	HPN	$\trg{\OB{I}} = \fun{names}{\trg{A}}$
		
	HPT $\trg{\OB{\alpha}}\equiv\trg{\OB{\alpha'}\cdot\alpha?}$
		
	HPL $\src{\ell_i;\ell_{glob}}\notin\beta$

	THE $\exists \src{A}\in \backtrup{\trg{I},\trg{\OB{\alpha}}}$
		
	TH1 $\SInits{\src{A\hole{C}}} \Xtos{\OB{\alpha}} \src{\Omega}$
		
	THA $\src{\OB{\alpha}}\relatebeta\trg{\OB{\alpha}}$
		
	THS $\src{\Omega}\relatebeta\trg{\Omega}$

	THC $\src{\Omega}.\src{H}.\src{\ell_i}=\card{\trg{\OB{\alpha}}}+1$

	The proof proceeds by induction on \trg{\OB{\alpha'}}.

	\begin{description}
		\item[Base case:]

		We perform a case analysis on \trg{\alpha?}
		\begin{description}
			\item[\trg{\clh{f}{v}{H}}] 	\hfill

			Given

			HP1 $\SInitt{A\hole{\compup{C}}} \Xtot{\clh{f}{v}{H}} \trg{\Omega}$

			We need to show that

			THE $\exists \src{A}\in \backtrup{\trg{I},\trg{\OB{\alpha}}}$
		
			TH1 $\SInits{\src{A\hole{C}}} \Xtos{\OB{\alpha}} \src{\Omega}$

			THA $\src{\clh{f}{v}{H}}\relatebeta\trg{\clh{f}{v}{H}}$
		
			THS $\src{\Omega}\relatebeta\trg{\Omega}$

			THC $\src{\Omega}.\src{H}.\src{\ell_i}=\card{\trg{\OB{\alpha}}}+1$

			By \Cref{tr:backtr-call} the back-translated context executes this code inside \src{main}:
			
			\src{
			\begin{aligned}
					&
					\iftes{!\ell_i == n
					}{
						\\
						&\ \ 
						\src{incrementCounter()}
						\\
						&\ \ 
						\letnews{\src{x1}~}{~\src{v_1}}{\src{register(\pair{x1,n_1})}}
						\\
						&\ \ 
						\cdots
						\\
						&\ \ 
						\letnews{\src{xj}~}{~\src{v_j}}{\src{register(\pair{xj,n_j})}}
						\\
						&\ \ 
						\src{\calls{f}{~v}}
						\\
						&
					}{
					\skips
					}
				\end{aligned}
			}

			As \trg{H_{pre}} is \trge, no \src{update}s are added.

			Given that \src{\ell_i} is initialised to 1 in \Cref{tr:backtr-skel}, this code is executed and it generates action \src{\clh{f}{v}{H}} where \src{H}=\src{\ell_1\mapsto v_1;\cdots;\ell_n\mapsto v_n} for all $\trg{n_i}\in\dom{\trg{H}}$ such that $\src{\ell_i}\relatebeta\trg{\pair{n_i,\_}}$ and:

			HPHR $\src{H}\relatebeta\trg{H}$

			By HPHR, \Thmref{thm:backtr-vals} and \Thmref{thm:action-det} with HPF we get THA, THE and TH1. 

			By \Cref{tr:state-rel-proof}, THS holds too.

			Execution of \src{incrementCounter()} satisfies THC.

			\item[\trg{\rbh{v}{H}}]
			This cannot happen as by \Cref{tr:et-retb} there needs to be a running process with a non-empty stack and by \Cref{tr:ini-t} the stack of initial states is empty and the only way to add to the stack is performing a call via \Cref{tr:et-call}, which would be a different label.

		\end{description}
		\item[Inductive case:] \hfill

		We know that (eliding conditions HP that are trivially satisfied):

		IHP1 $\SInitt{A\hole{\compup{C}}} \Xtot{\OB{\alpha}} \trg{\Omega'} \Xtot{\alpha!} \trg{\Omega''} \Xtot{\alpha?} \trg{\Omega}$

		And we need to prove:
		
		ITH1 $\SInits{\backtrup{\trg{I},\trg{\OB{\alpha}}}\src{\hole{C}}} \Xtos{\OB{\alpha}} \src{\Omega'} \Xtos{\alpha!} \src{\Omega''} \Xtos{\alpha?} \src{\Omega}$
			
		ITHA $\src{\OB{\alpha}\alpha!\alpha?}\relatebeta\trg{\OB{\alpha}\alpha!\alpha?}$
			
		ITHS $\src{\Omega}\relatebeta\trg{\Omega}$

		And the inductive HP is (for $\emptyset\subseteq\beta'$):

		\begin{mdframed}[hidealllines=true,backgroundcolor=blue!20]

		IH-HP1 $\SInitt{A\hole{\compup{C}}} \Xtot{\OB{\alpha}} \trg{\Omega'}$

		IH-TH1 $\SInits{\backtrup{\trg{I},\trg{\OB{\alpha}}}\src{\hole{C}}} \Xtos{\OB{\alpha}} \src{\Omega'}$

		IH-THA $\src{\OB{\alpha}}\relate_{\beta'}\trg{\OB{\alpha}}$

		IH-THS $\src{\Omega'}\relate_{\beta'}\trg{\Omega'}$

		\end{mdframed}

		By IHP1 and HPF we can apply \Thmref{thm:action-det} and so we can apply the IH to get IH-TH1, IH-THA and IH-THS.

		We perform a case analysis on \trg{\alpha!}, and show that the back-translated code performs \src{\alpha!}.

		By IH we have that the existing code is generated by \Cref{tr:backtr-listact}: $\backtrup{\trg{\OB{\alpha}}, n, \trg{H_{pre}}, \trg{ak}, \src{\OB{f}} }$.
		
		The next action $\trg{\alpha!}$ produces code according to: 

		HPF $\backtrup{\trg{\alpha!}, n, \trg{H_{pre}}, \trg{ak}, \src{\OB{f}} } $.

		By \Cref{tr:backtr-join}, code of this action is the first \src{if} statement executed.

		\begin{description}
			\item[\trg{\cbh{f}{v}{H}}] 	

			By \Cref{tr:backtr-callback-loc} this code is placed at function \src{f} so it is executed when compiled code jumps there
			
			\src{
				\begin{aligned}
					&
					\iftes{!\ell_i == n
					}{
						\\
						&\ \ 
						\src{incrementCounter()}
						\\
						&\ \ 
						\letins{\src{l1}~}{~\src{e_1}}{\src{register(\pair{l1, n_1})}}
						\\
						&\ \ 
						\cdots
						\\
						&\ \ 
						\letins{\src{lj}~}{~\src{e_j}}{\src{register(\pair{lj, n_j})}}
						\\
						&
					}{
					\skips
					}
				\end{aligned}
			}

			By IH we have that $\src{\ell_i \mapsto n}$, so we get

			IHL $\src{\ell_i\mapsto n+1}$

			By \Thmref{def:reachable} we have for $i\in1..j$ that a reachable location $\trg{n_i}\in\dom{\trg{H}}$ has a related counterpart in $\src{\ell_i}\in\dom{\src{H}}$ such that \src{H \triangleright e_i \redtos \ell_i}.

			By \Thmref{thm:atk-has-all-caps} we know all capabilities to access any \trg{n_i} are in \trg{ak}.

			We use \trg{ak} to get the right increment of the reach.
							
			\item[\trg{\rth{v}{H}}]

			In this case from IHF we know that $\src{\OB{f}} = \src{f'\OB{f'}}$.

			This code is placed at \src{f'}, so we identify the last called function and the code is placed there.
			Source code returns to \src{f'} so this code is executed \Cref{tr:backtr-ret-loc}
			
			\src{
				\begin{aligned}
					&
					\iftes{!\ell_i == n
					}{
						\\
						&\ \ 
						\src{incrementCounter()}
						\\
						&\ \ 
						\letins{\src{l1}~}{~\src{e_1}}{\src{register(\pair{l1, n_1})}}
						\\
						&\ \ 
						\cdots
						\\
						&\ \ 
						\letins{\src{lj}~}{~\src{e_j}}{\src{register(\pair{lj, n_j})}}
						\\
						&
					}{
					\skips
					}
				\end{aligned}
			}

		This case now follows the same reasoning as the one above.

		\end{description}

		So we get (for $\beta'\subseteq\beta''$):

		HP-AC! $\src{\alpha!}\relate_{\beta''}\trg{\alpha!}$

		By IH-THS and \Cref{tr:state-rel-whole} and HP-AC! we get HP-OM2: 

		HP-OM2: $\src{\Omega''}\relate_{\beta''}\trg{\Omega''}$

		The next action $\trg{\alpha?}$ produces code according to: 

		IHF1 $\backtrup{\trg{\alpha?}, n+1, \trg{H_{pre}'}, \trg{ak'}, \src{\OB{f'}} } $.

		We perform a case analysis on \trg{\alpha?} and show that the back-translated code performs \src{\alpha?}:
		\begin{description}
			\item[\trg{\rbh{v}{H}}]

			By \Cref{tr:backtr-retback}, after \src{n} actions, we have from IHF1 that \src{\OB{f'}} = \src{f'\OB{f''}} and inside function \src{f'} there is this code:
			
			\src{
				\begin{aligned}
					&
					\iftes{!\ell_i == n
					}{
						\\
						&\ \ 
						\letnews{\src{x1}~}{~\src{v_1}}{\src{register(\pair{x1,n_1})}}
						\\
						&\ \ 
						\cdots
						\\
						&\ \ 
						\letnews{\src{xj}~}{~\src{v_j}}{\src{register(\pair{xj,n_j})}}
						\\
						&\ \ 
						\src{update(m_1, u_1) }
						\\
						&\ \ 
						\cdots
						\\
						&\ \ 
						\src{update(m_l, u_l) }
						\\
						&
					}{
					\skips
					}
				\end{aligned}
				}

			By IHL, \src{\ell_i\mapsto n+1}, so the \src{if} gets executed.

			By definition, forall $\trg{n}\in\dom{\trg{H}}$ we have that $\trg{n}\in\trg{H_n}$ or $\trg{n}\in\trg{H_c}$ (from the case definition).

			By \Thmref{thm:atk-has-all-caps} we know all capabilities to access any \trg{n} are in \trg{ak}.

			We induce on the size of \trg{H}; the base case is trivial and the inductive case follows from IH and the following:
			\begin{description}
				\item[\trg{H_n}:] and \trg{n} is newly allocated.

				In this case when we execute

				$\src{C;H' \triangleright \letnews{\src{x1}~}{~\src{v_1}}{\src{register(\pair{x1,n_1})}}} \xtos{\epsilon} \src{C; H'; \ell''\mapsto \backtrup{\trg{v_1}} \triangleright \src{register(\pair{\ell'',n_1})} }$

				and we create $\beta''$ by adding $\src{\ell''},\trg{n,\eta'}$ to $\beta'$.

				By \Thmref{thm:reg-no-dup} we have that:

				$\src{C; H'; \ell''\mapsto \backtrup{\trg{v_1}} \triangleright \src{register(\pair{\ell'',n_1})} } \xtos{\epsilon} \src{C; H'; \ell''\mapsto \backtrup{\trg{v_1}} \triangleright \skips }$ 

				and we can lookup \src{\ell''} via \src{n}.

				\item[\trg{H_c}:] and \trg{n} is already allocated.

				In this case

				$\src{C;H' \triangleright update(m_1, u_1)} \xtos{\epsilon} \src{C;H''\triangleright \skips}$

				By \Thmref{thm:update-no-stuck} we know that $\src{H''}=\src{H'}\subs{\ell''\mapsto \_}{\ell''\mapsto u_1}$

				and $\src{\ell''}$ such that $(\src{\ell''},\trg{m_1,\eta''})\in\beta'$.

			\end{description}
			By \Thmref{thm:backtr-vals} on the values stored on the heap, let the heap after these reduction steps be \src{H}, we can conclude 

			HPRH $\src{H}\relate_{\beta''}\trg{H}$.

			As no other \src{if} inside \src{f} is executed, eventually we hit its return statement, which by \Cref{tr:backtr-join} and \Cref{tr:backtr-fun} is \src{incrementCounter();\ret}.

			Execution of \src{incrementCounter()} satisfies THC.

			So we have $\src{\Omega''}\Xtos{\rbh{}{H}}\src{\Omega}$ (by \Cref{thm:backtr-vals}) and with HPRH.
			
			\item[\trg{\clh{f}{v}{H}}] 	

			Similar to the base case, only with \src{update} bits, which follow the same reasoning above.
		\end{description}

		So we get (for $\beta''\subseteq\beta$):
			
		HP-AC? $\src{\alpha?}\relatebeta\trg{\alpha?}$

		By IH-THA and HP-AC! and HP-AC? we get ITHA.

		Now by \Cref{tr:state-rel-whole} again and HP-AC? we get ITHS and ITH1, so the theorem holds.
	\end{description}
\end{proof}

\BREAK

\begin{lemma}[\LA attacker always has access to all capabilities]\label{thm:atk-has-all-caps}
	\begin{align*}
		&
		\forall
		\\
		\text{if }
		&
		\backtrup{\trg{\alpha}, n, \trg{H}, \trg{ak}, \src{\OB{f}} } = \set{\src{s}, \trg{ak'}, \trg{H'}, \src{\OB{f'}}, \src{f}}
		\\
		&
		\trg{k}\ldotp \trg{n\mapsto v:k}\in\trg{H}
		\\
		\text{then }
		&
		\trg{n}\in\fun{reach}{\trg{ak'}.\mtt{loc},\trg{ak'}.\mtt{cap} ,\trg{H}}
	\end{align*}
\end{lemma}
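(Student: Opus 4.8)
The statement \Cref{thm:atk-has-all-caps} says: after backtranslating a single action \trg{\alpha} and obtaining the updated attacker knowledge \trg{ak'}, every heap location \trg{n} that is protected by some capability \trg{k} (i.e.\ \trg{n\mapsto v:k}\in\trg{H}) is reachable from the locations and capabilities recorded in \trg{ak'}. My first step is to fix the precise reading: the quantifier ranges over \emph{all} capability-protected entries of \trg{H}, so I must show that the backtranslation's bookkeeping in \trg{ak'} records enough location/capability pairs to reach every such \trg{n} via the target \fun{reach}{\cdot} function. I would prove this by induction on the number of actions backtranslated so far, strengthening it to a joint invariant maintained by the sequence-level backtranslation \Cref{tr:backtr-listact} rather than proving it for a single action in isolation, since \trg{ak'} is threaded through successive calls and the property only makes sense relative to the accumulated knowledge.

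\textbf{Key steps.} First I would state the strengthened invariant: at every point in the execution generated by \backtrup{\cdot}, for each entry \trg{n\mapsto v:k} in the current heap \trg{H}, either (i) \trg{k}=\trg{k_{root}} and \trg{n} is reachable from the root knowledge, or (ii) the pair \trg{\pair{n,k}} has already been added to \trg{ak} through a prior incoming or outgoing action. The base case is the initial heap \trg{H_0}=\trg{0\mapsto0:k_{root}} from \Cref{tr:backtr-main}, where the only protected location is \trg{0} with \trg{k_{root}}, which is reachable by definition of \fun{reach}{\trg{0},\trg{k_{root}},\trg{H}}. For the inductive step I would analyse the four single-action cases. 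In the \src{?}-decorated cases (\Cref{tr:backtr-call,tr:backtr-retback}), the attacker itself supplies the values, so any newly protected location it passes is one it created and already holds the capability for; the explicit \trg{ak'} = \trg{ak} , \trg{\pair{n_1,\eta_1}, \cdots} extension records exactly these. In the \src{!}-decorated cases (\Cref{tr:backtr-callback-loc,tr:backtr-ret-loc}), the crucial observation is that \fun{reachable}{\cdot} (\Cref{def:reachable}) enumerates precisely those new locations reachable from \trg{v}\cup\trg{ak} using \trg{k_{root}} and the capabilities already in \trg{ak}, and these are appended to \trg{ak'}; any location \emph{not} so reachable is protected by a capability the attacker never held and so cannot appear in \trg{H} as reachable content. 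I would then close the step by showing that the extended \trg{ak'} makes each freshly-protected \trg{n} a member of \fun{reach}{\trg{ak'}.\mtt{loc},\trg{ak'}.\mtt{cap},\trg{H}}, using monotonicity of \fun{reach}{\cdot} under enlarging its seed set.

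\textbf{Main obstacle.} The hard part will be handling capability-protected locations that the compiled \emph{component} allocates and hides (via the \compup{\cdot}-New pattern of \Cref{tr:compup-new}) but that are genuinely private and never shared with the context. Literally read, the statement quantifies over every \trg{n\mapsto v:k}\in\trg{H}, which would include such private locations — yet the attacker provably \emph{cannot} reach them, so the claim as worded seems too strong. I expect the resolution is that the relevant heaps \trg{H} appearing in the backtranslation are the heaps \emph{carried on the trace actions}, and a separate well-formedness lemma guarantees that the only capability-protected entries visible in an action's heap are those whose capability has leaked into \trg{ak} (private capabilities never escape, by the security argument underlying \Cref{thm:comp-up-rsc}). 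I would therefore first establish this leakage-containment lemma — that a capability \trg{k} appears guarding a location in a trace heap only if \trg{\pair{n,k}} is in the attacker knowledge — and then the present lemma follows directly from the \trg{ak'} extensions. Reconciling the quantifier's scope with this containment fact, and phrasing \fun{reach}{\cdot}'s monotonicity precisely, is where the real care is needed; the per-case calculations are otherwise routine.
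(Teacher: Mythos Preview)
The paper dispatches this lemma in a single line: ``Trivial case analysis on \Cref{tr:backtr-ret-loc,tr:backtr-retback,tr:backtr-call,tr:backtr-callback-loc}.'' There is no induction on the trace, no strengthened invariant threaded through \Cref{tr:backtr-listact}, and no auxiliary containment lemma; the argument is a direct inspection of how each of the four single-action rules constructs \trg{ak'}.

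Your proposal is far more elaborate, and the ``main obstacle'' you identify leads you into trouble. You are right that, read literally, the quantifier ranges over every capability-protected entry of \trg{H}, including locations the compiled component allocates privately via \Cref{tr:compup-new}. But your proposed resolution does not work. First, the heaps carried on \LP trace actions are the \emph{whole} heaps (inspect \Cref{tr:et-call,tr:et-ret}), so the well-formedness lemma you want---that only leaked-capability entries appear in an action's heap---is simply false. Second, appealing to ``the security argument underlying \Cref{thm:comp-up-rsc}'' to establish that private capabilities never escape is circular: \Cref{thm:comp-up-rsc} is proved via \Cref{thm:backtr-corr}, and \Cref{thm:backtr-corr} is exactly where the present lemma is invoked.

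What the paper's case analysis actually relies on is purely syntactic: in the \src{?}-cases (\Cref{tr:backtr-call,tr:backtr-retback}) the list \trg{ak'} is extended with \emph{every} pair \trg{\pair{n_i,\eta_i}} drawn from \trg{H_n = H \setminus H_{pre}}, irrespective of who created \trg{n_i} or whether a real target attacker could obtain \trg{\eta_i}; in the \src{!}-cases (\Cref{tr:backtr-callback-loc,tr:backtr-ret-loc}) it is extended with all entries returned by \fun{reachable}{\cdot}. The lemma is a statement about this bookkeeping list, not about what a genuine target attacker can do. Your conflation of \trg{ak'} with ``what the attacker really holds'' is what makes the obstacle look harder than the paper treats it; drop that reading and the case analysis is as short as the paper claims.
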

\begin{proof}
	Trivial case analyisis on \Cref{tr:backtr-ret-loc,tr:backtr-retback,tr:backtr-call,tr:backtr-callback-loc}.
\end{proof}

\BREAK

\begin{lemma}[Backtranslated values are related]\label{thm:backtr-vals}
	\begin{align*}
		&
		\forall\trg{v}, \beta \ldotp 
		\\
		&
		\backtrup{\trg{v}}\relatebeta\trg{v}
	\end{align*}
\end{lemma}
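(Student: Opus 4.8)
The plan is to argue by structural induction on the target value $\trg{v}$, performing a case analysis that mirrors the clauses defining $\backtrup{\cdot}$ on values (\Cref{sec:val-bt}) and closing each case with the matching clause of the value relation $\relatebeta$ (\Cref{sec:cr-rel}). The bijection $\beta$ is fixed and universally quantified throughout, and since $\relatebeta$ is determined by the syntactic shapes of the two values together with membership in $\beta$ for the location clauses, nothing about $\beta$ needs to be threaded through the induction: the claim is genuinely a per-value statement and the induction is purely on the structure of $\trg{v}$.

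For the base cases I would proceed as follows. If $\trg{v}=\trg{0}$ then $\backtrup{\trg{0}}=\trues$, and $\trues\relatebeta\trg{0}$ is the first clause of the relation. If $\trg{v}=\trg{n}$ with $\trg{n}\neq\trg{0}$, the backtranslation is \emph{nondeterministic}: it returns either $\falses$ or the source numeral $\src{n}$. In the first case $\falses\relatebeta\trg{n}$ holds precisely because $\trg{n}\neq\trg{0}$; in the second, $\src{n}\relatebeta\trg{n}$ is exactly the side condition under which that clause fires. The two inductive cases are the pairs, and here I would note that a target pair of the form $\trg{\pair{n,v}}$ matches \emph{both} pair clauses of $\backtrup{\cdot}$, so both possible outputs must be shown related. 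For the generic clause, $\backtrup{\trg{\pair{v,v'}}}=\src{\pair{\backtrup{\trg{v}},\backtrup{\trg{v'}}}}$, and the induction hypotheses $\backtrup{\trg{v}}\relatebeta\trg{v}$ and $\backtrup{\trg{v'}}\relatebeta\trg{v'}$ combine through the pair clause of $\relatebeta$. For the location clause, $\backtrup{\trg{\pair{n,v}}}=\src{\ell}$ is chosen exactly so that $\src{\ell}\relatebeta\trg{\pair{n,v}}$ holds by construction, i.e.\ $(\src{\ell},\trg{n},\trg{k})\in\beta$ or $(\src{\ell},\trg{n},\trgb{\bot})\in\beta$, which discharges the case directly.

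The one genuinely delicate point—and the step I expect to be the main obstacle—is the bare capability case $\trg{v}=\trg{k}$, where $\backtrup{\trg{k}}=\src{0}$. The value relation as listed in \Cref{sec:cr-rel} has no clause that directly relates a source value to a standalone target capability, so closing this case (and, via the generic pair clause, any case in which a capability occurs as a leaf of a pair) requires fixing how capabilities are treated by $\relatebeta$. I would resolve this by reading the relation so that the opaque source value $\src{0}$ is the canonical backtranslation of every capability: capabilities are unforgeable and unguessable, so no meaningful source counterpart exists and $\src{0}$ serves as a stand-in token, with $\src{0}\relatebeta\trg{k}$ taken as the intended relating clause. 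I would then check that this reading is consistent with all later appeals to the lemma, in particular the use of relatedness of heap-stored values inside the proof of \Cref{thm:backtr-corr}, where capabilities never appear as safety-relevant contents. With this clause settled, the capability case is immediate and the remaining cases follow routinely from the induction described above.
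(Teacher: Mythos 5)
Your proof is correct and takes essentially the same route as the paper, whose entire proof reads ``Trivial analysis of \Cref{sec:val-bt}'' --- i.e., precisely the clause-by-clause structural case analysis you carry out, including accounting for the nondeterminism of $\backtrup{\cdot}$ by showing every possible output related. Your handling of the capability case is in fact \emph{more} careful than the paper's: the listed value relation genuinely has no clause relating any source value to a bare $\trg{k}$, and taking $\src{0}\relatebeta\trg{k}$ as the implicit clause matching $\backtrup{\trg{k}}=\src{0}$ is the right completion (consistent with how \Cref{tr:hrel-b} treats capabilities in heaps, and needed wherever an attacker passes or stores a raw capability), so you have patched a detail the paper's one-line proof glosses over rather than deviated from it.
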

\begin{proof}
	Trivial analysis of \Cref{sec:val-bt}.
\end{proof}

\BREAK

\subsection{Proof of \Thmref{thm:src-ty-impl-safe}}
\begin{proof}

	HP1 $\vdash\src{C}:\UNS$

	HP2 $\src{C\agree M}$

	TH $\src{M}\vdash\src{C}:\src{rs}$

	We expand TH: $\forall\src{A},\src{\OB{\alpha}}. \src{M}\vdash\src{A}:\src{attacker}$ and $\vdash\src{A\hole{C}}:\src{whole}$ if HPR $\SInits{A\hole{C}}\Xtos{\OB{\alpha}}\src{\_}$ then THM $\src{M}\vdash{\src{\OB{\alpha}}}$

	By definition of \strip{\src{\OB{\alpha}}} and by \Cref{tr:mts-valid} we get a \src{\OB{H}} to induce on.

	The base case holds by \Cref{tr:mts-t-s-b}.

	In the inductive case we are considering \src{{\OB{H}}\cdot{H}} and the IH covers the first part of the trace.

	By \Thmref{thm:alpha-red-ok}, given that the state generating the action is \src{C, H\triangleright \Pi} we know that, HPH $\vdash\monh{\src{H},\src{\Delta}}:\src{\Delta}$

	By \Cref{tr:mts-valid} and by \Cref{tr:mts-s} we need to show that $\vdash\src{H}:\src{\Delta}$.

	This follows by HPH. %

	We thus need to prove that the initial steps are related heaps are secure.

	By \Cref{tr:plug-s} we need to show that the heaps consituting the initial heap -- both \src{H} and \src{H_0} -- are well typed. 

	The latter, $\vdash\src{H_0}:\src{\Delta}$, holds by \Cref{tr:plug-s}.

	The former holds by definition of the attacker: \Cref{tr:tsu-base,tr:tsu-loc}.
\end{proof}

\BREAK

\subsubsection{Proof of \Thmref{thm:atk-src-coincide}}
\begin{proof}
	This is proven by trivial induction on the syntax of \src{A}.

	By the rules of \Cref{sec:un-ty}, points 1 and 3 follow, point 2 follows from the HP \Cref{tr:plug-s}.
\end{proof}

\BREAK

\begin{lemma}[\LA-\src{\alpha} reductions respect heap typing]\label{thm:alpha-red-ok}
	\begin{align*}
		\text{if }
		&
		\src{C}\equiv\src{\Delta ; \cdots}
		\\
		&
		\vdash\monh{\src{H},\src{\Delta}}:\src{\Delta}
		\\
		&
		\src{C, H\triangleright \Pi\rho} \Xtos{\OB{\alpha}} \src{C', H'\triangleright \Pi'\rho'} 
		\\
		\text{then } 
		&
		\vdash\monh{\src{H'},\src{\Delta}}:\src{\Delta}
	\end{align*}
\end{lemma}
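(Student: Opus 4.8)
The plan is to strengthen the statement into a configuration-level invariant that entails $\vdash\monh{\src{H},\src{\Delta}}:\src{\Delta}$, prove that this invariant is closed under a single reduction step, and then lift it along the trace by induction. The bare conclusion is too weak to serve as an induction hypothesis: to know that a heap write cannot corrupt a location of \src{\Delta} I must know \emph{which} code performs the write and \emph{which} locations that code can possibly name. Accordingly I would define an invariant $\mathcal{I}(\src{C, H\triangleright \Pi})$ bundling three facts. First, heap consistency: $\vdash\src{H}:\src{\Delta}$, and every cell $\src{\ell\mapsto v:\tau}$ of \src{H} satisfies $\srce\vdash\src{v}:\src{\tau}$ for the type it records. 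Second, well-typedness of the running soup: every (substitution-applied, hence closed) process in \src{\Pi} is well typed, those originating from the component functions \src{\OB{F'}} via $\src{C,\Delta,\Gamma}\vdash\src{s}$ and those originating from the attacker functions \src{\OB{F}} via the \UNS-typing judgement $\vdashun$. Third, a \emph{confinement} clause: no trusted location---one whose recorded type \src{\tau} satisfies $\src{\tau}\nvdash\circ$---occurs free in any attacker-origin process, nor in the contents of any heap cell of \UNS type.

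First I would verify that $\mathcal{I}$ holds at the initial state. This follows from \Cref{tr:plug-s} and \Cref{tr:ini-s}: the initial heap is $\src{H\cup H_0}$ with $\src{\Delta}\vdash\src{H_0}$ supplying heap consistency on the trusted part and the attacker heap \src{H} being \UNS-typed and disjoint from \src{\Delta} by \Cref{tr:tsu-base}; well-typedness holds because $\vdash\src{C}:\UNS$ and $\src{\Delta}\vdashun\src{A}$; and confinement holds because \Cref{tr:tsu-base} makes $\dom{\src{\Delta}}$ disjoint from the free names of the attacker.

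The technical heart is showing that a single step $\src{C, H\triangleright \Pi} \xtos{\lambda} \src{C, H'\triangleright \Pi'}$ (silent \src{\epsilon} steps included) preserves $\mathcal{I}$, by case analysis on the reduction rule. The intra-component cases (\Cref{tr:es-al,tr:es-up,tr:es-letin,tr:es-ift,tr:es-seq}) are ordinary subject reduction; e.g.\ for \Cref{tr:es-up} the assigned reference has type \Refs{\tau} and the stored value has type \src{\tau}, so heap consistency is maintained even when the target lies in \src{\Delta}, since the recorded type is the static one. The cases that carry the argument are the \UNS-origin steps and the boundary crossings. A write performed by attacker code must, by confinement, target a non-trusted location, so \src{\Delta}-cells are never touched by the adversary---this is where clause three does its work. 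At a callback \Cref{tr:es-callback} or a return \Cref{tr:es-ret} the value handed to the attacker is \UNS (functions and interfaces traffic only in \UNS by \Cref{tr:ts-fu,tr:ts-fun}), and a \UNS value cannot expose a trusted location because among reference types only \Refs{\UNS} is shareable (\Cref{tr:ts-ts-b,tr:ts-ts-p,tr:ts-tp-f}); hence confinement is re-established. Dually, a call \Cref{tr:es-call} binds a \UNS argument to the component's formal parameter, typed \UNS by \Cref{tr:ts-fu}, preserving well-typedness. For endorsement \Cref{tr:es-end} the value is unchanged and merely rebound at a superficial type \src{\varphi}, so if it contained no trusted location before it does not after; the only care needed is that endorsing to $\Refs{\UNS}$ produces a reference of \UNS type, which again confines it. The fork rule \Cref{tr:es-fork} spawns a new process that inherits the confinement status of its parent.

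Finally I would lift the single-step lemma along $\Xtos{\OB{\alpha}}$ by induction on the trace derivation using \Cref{tr:es-tr-sin,tr:es-tr-silent,tr:es-tr-trans}, and read $\vdash\monh{\src{H'},\src{\Delta}}:\src{\Delta}$ directly off the heap-consistency clause of $\mathcal{I}$ at the final state. I expect the main obstacle to be making the confinement clause genuinely inductive: it must survive every value that crosses the component--attacker boundary, which forces the simultaneous tracking (in the same invariant) of the well-typedness of both component- and attacker-origin code together with the secrecy of trusted locations. This fused reasoning---subject reduction entangled with a confinement argument---is exactly what the source type system is designed to support, and is the same ingredient that underlies \Thmref{thm:src-ty-impl-safe}.
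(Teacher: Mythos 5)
Your proposal is correct and is essentially the paper's own argument: the paper likewise proceeds by induction on the trace, observing that the cross-boundary actions themselves do not modify the heap (\Cref{thm:src-?-rel}, \Cref{thm:src-!-rel}) and reducing the silent segments to \Cref{thm:epsi-red-ok}, which splits exactly as you do into subject reduction for component code (\Cref{thm:epsi-red-resp-rel}) and a confinement argument for attacker code (\Cref{thm:epsi-atk-red-pres-rel}, resting on \Cref{thm:atk-reach-uns-loc}, the paper's counterpart of your third clause that trusted locations are unreachable from \UNS values because only \src{\Refs{\UNS}} is shareable). The only difference is packaging: you fuse heap consistency, well-typedness of the running soup, and confinement into a single explicitly inductive configuration invariant established at the initial state via \Cref{tr:plug-s}, whereas the paper distributes these facts as hypotheses of its auxiliary lemmas (e.g.\ \Cref{thm:epsi-red-resp-rel} simply assumes both \src{s} and \src{s'} well-typed, and \Cref{thm:atk-reach-uns-loc} is stated as a standing heap property rather than carried inductively), so your formulation makes explicit the preservation obligations the paper leaves implicit without changing the substance of the proof.
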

\begin{proof}
	The proof proceeds by induction on \src{\OB{\alpha}}.

	\begin{description}
		\item[Base case]
			This trivially holds by HP.
		\item[Inductive case]
			This holds by IH plus a case analysis on the last action:
			\begin{description}
				\item[$\src{\cl{f}{v}}$]
					This holds by \Thmref{thm:src-?-rel}.
				\item[$\src{\cb{f}{v}}$]
					This holds by \Thmref{thm:src-!-rel}
				\item[$\src{\rt{}}$]
					This holds by \Thmref{thm:src-!-rel}
				\item[$\src{\rb{}}$] 
					This holds by \Thmref{thm:src-?-rel}
			\end{description}
	\end{description}
\end{proof}

\BREAK

\begin{lemma}[\LA An attacker only reaches \UNS locations]\label{thm:atk-reach-uns-loc}
	\begin{align*}
		&
		\forall
		\\
		\text{if }
		&
		\src{\ell\mapsto v:\UNS} \in \src{H}
		\\
		\text{then }
		&
		\nexists\src{e}
		\\
		&
		\src{H\triangleright e \redtos \ell'}
		\\
		&
		\src{\ell'\mapsto v:\tau}\in\src{H}
		\\
		&
		\src{\tau}\neq\UNS
	\end{align*}
\end{lemma}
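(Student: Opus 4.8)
The plan is to read the statement as a reachability claim in which the expression $\src{e}$ is an \emph{access path} rooted at $\src{\ell}$: it is built from $\src{!\ell}$ by repeatedly applying the elimination constructs $\src{!\cdot}$, $\src{\projone{\cdot}}$ and $\src{\projtwo{\cdot}}$ (any arithmetic or comparison subexpression only ever produces a base value and so cannot lie on the path from $\src{\ell}$ to a location). Read with $\src{e}$ entirely unconstrained the claim is vacuously false (take $\src{e}$ to be $\src{\ell'}$ itself), so this access-path reading — the same convention used implicitly by $\fun{reach}{\cdot}$ and $\fun{reachable}{\cdot}$ — is what gives the lemma content, and pinning it down precisely is the first thing I would do.

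The proof rests on heap well-typedness, which I would take as a standing hypothesis: whenever $\src{H}$ arises in a run of a well-typed component under a valid attacker, $\vdash\src{H}:\src{\Delta}$ holds, a fact maintained along every transition by \thmref{thm:alpha-red-ok}. From $\src{\ell\mapsto v:\UNS}\in\src{H}$ together with $\vdash\src{H}:\src{\Delta}$ and the heap-ok rule \LA-Heap-ok-i, I obtain $\srce\vdash\src{v}:\UNS$, i.e.\ the content of $\src{\ell}$ is a closed value of insecure type.

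The core is an induction on the length of the access path establishing the invariant: every intermediate value produced while evaluating a prefix of the path has an \emph{insecure} type (one satisfying $\src{\tau}\vdash\circ$), and whenever such an intermediate value is itself a location, that location's heap type is exactly $\UNS$. Two inversion steps drive the induction. First, by the subtyping rule for pairs (T\LA-pair-pub), if $\src{\pair{v_1,v_2}}$ has insecure type $\src{\tau_1\times\tau_2}$ then both $\src{\tau_1}\vdash\circ$ and $\src{\tau_2}\vdash\circ$, so $\src{\projone{\cdot}}$ and $\src{\projtwo{\cdot}}$ preserve insecurity. Second, the only reference type that is insecure is $\src{\Refs{\UNS}}$ — there is no rule deriving $\src{\Refs{\tau}}\vdash\circ$ for $\src{\tau}\neq\UNS$ — so any location extracted from an insecure value necessarily has heap type $\UNS$, and dereferencing it yields, again by heap well-typedness, a value of type $\UNS$, re-establishing the invariant. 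Applying the invariant to the final location $\src{\ell'}$ of the path gives that $\src{\ell'}$ has type $\UNS$, contradicting the assumed $\src{\tau}\neq\UNS$; hence no such $\src{e}$ exists.

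The main obstacle is not any single inversion but rather (a) fixing the access-path reading of $\src{e}$ so that the statement is true at all, and (b) phrasing the induction over the subtyping judgement $\vdash\circ$ rather than over full types, since it is precisely the asymmetry of the reference rule (only $\src{\Refs{\UNS}}$ is insecure, while a trusted $\src{\Refs{\tau}}$ is not) that forbids crossing from an untrusted value into a trusted cell. A secondary point worth flagging is that this lemma is sound only on heaps satisfying $\vdash\src{H}:\src{\Delta}$; feeding it an arbitrary heap would break it, so any use of it in the robust-safety argument relies on \thmref{thm:alpha-red-ok} having already been discharged.
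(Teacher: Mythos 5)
Your access-path reading of $\src{e}$ and the type-based invariant driving your induction are sound ideas, and your two inversion facts are exactly right: insecurity of a pair type is componentwise (\Cref{tr:ts-ts-p}), and the only insecure reference type is $\src{\Refs{\UNS}}$, so an insecure value can never yield a trusted location. Note, though, that this is a genuinely different route from the paper, which argues by contradiction with a case split on the \emph{provenance} of $\src{\ell'}$: if the attacker allocated it, the attacker typing judgement $\vdashun$ admits only $\UNS$ allocations (\Cref{tr:tsu-new}), contradicting $\src{\tau}\neq\UNS$; if the component allocated it, the only way it could become reachable from an $\UNS$ cell is by having been written into an attacker-visible cell, which \Cref{tr:ts-ass} together with \Cref{tr:ts-coe} forbids, since $\src{\Refs{\tau}}\nvdash\circ$ when $\src{\tau}\neq\UNS$. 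Your approach trades this one-step provenance argument for a global reachability invariant, which is more systematic but also more demanding to ground.

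And that grounding is where there is a genuine gap. In this paper $\vdash\src{H}:\src{\Delta}$ constrains \emph{only} the locations listed in $\src{\Delta}$, and $\src{\Delta}\vdash\src{ok}$ (\Cref{tr:ts-s-l}) forces every such location to carry a trusted type; it therefore says nothing about the contents of the cell $\src{\ell\mapsto v:\UNS}$, so your base step ``from $\src{\ell\mapsto v:\UNS}$ and $\vdash\src{H}:\src{\Delta}$, by \Cref{tr:heap-ok}, $\srce\vdash\src{v}:\UNS$'' does not go through: $\src{\ell}:\UNS\notin\src{\Delta}$, and the same problem recurs at every dereference step of your invariant. Worse, the standing hypothesis you invoke via \Cref{thm:alpha-red-ok} is circular in the paper's development: \Cref{thm:alpha-red-ok} rests (through the action lemmas and \Cref{thm:epsi-red-ok}) on \Cref{thm:epsi-atk-red-pres-rel}, whose proof cites precisely the lemma you are proving. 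What your induction actually needs is a \emph{separate} store invariant — every $\UNS$-annotated cell contains a shareable value, i.e., one from which only $\UNS$-typed locations are extractable, where attacker-created locations are typed by $\vdashun$ rather than by the component judgement — established by its own induction over reductions, using exactly the two facts the paper's case split uses: attackers allocate and store only $\UNS$ values ($\vdashun$ rules), and components can write into $\UNS$ cells only values of insecure type (\Cref{tr:ts-coe,tr:ts-ass}). With that invariant discharged, your path induction closes; without it, the key step is assumed rather than proven.
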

\begin{proof}
	This proof proceeds by contradiction.

	Suppose \src{e} exists, there are two cases for \src{\ell'}
	\begin{itemize}
			\item \src{\ell'} was allocated by the attacker:

			This contradicts the judgements of \Cref{sec:un-ty}.

			\item \src{\ell'} was allocated by the compiled code:

			The only way this was possible was an assignment of \src{\ell'} to \src{\ell}, but \Cref{tr:ts-ass} prevents it.
	\end{itemize}	
\end{proof}

\BREAK

\begin{lemma}[\LA attacker reduction respects heap typing]\label{thm:epsi-atk-red-pres-rel}
	\begin{align*}
		\text{if }
		&
		\src{C}\equiv\src{\Delta ; \cdots}
		\\
		&
		\src{C}\vdashatts\src{\Pi}\xtos{}\src{\Pi'}
		\\
		&
		\src{C, H\triangleright \Pi\rho} \xtos{\lambda} \src{C, H'\triangleright \Pi\rho'} 
		\\
		\text{then } 
		&
		\monh{\src{H},\src{\Delta}}=\monh{\src{H'},\src{\Delta}}
	\end{align*}
\end{lemma}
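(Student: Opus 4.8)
The plan is to first unfold the attacker-step hypothesis $\src{C}\vdashatts\src{\Pi}\xtos{}\src{\Pi'}$ via \Cref{def:src-att}, which exhibits the stepping process as $\src{\pi}=\src{\proc{s}{\OB{f};f}}$ with $\src{f}\in\src{C}.\mtt{itfs}$; that is, the single reduction $\src{C, H\triangleright \Pi\rho} \xtos{\lambda} \src{C, H'\triangleright \Pi\rho'}$ is performed by \emph{context} (attacker) code. Since $\monh{\src{H},\src{\Delta}}$ is the restriction of the heap to the locations in $\dom{\src{\Delta}}$, the goal $\monh{\src{H},\src{\Delta}}=\monh{\src{H'},\src{\Delta}}$ reduces to showing that the step modifies no location lying in $\dom{\src{\Delta}}$. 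I would establish this by a case analysis on the operational rule justifying the step.

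Most cases are immediate. Every action rule (the call/callback/return/returnback rules of \Cref{fig:la-sem} and their internal variants) as well as the purely control-flow rules (sequencing, conditionals, let-binding, endorsement, and fork) leaves the heap unchanged, so $\src{H}=\src{H'}$ and the claim is trivial; in particular this disposes of the case where $\src{\lambda}$ is a genuine action, since calls and returns never touch $\src{H}$. The allocation rule \Cref{tr:es-al} extends the heap with a fresh location $\src{\ell}\notin\dom{\src{H}}$; as every safety-relevant location is already present we have $\dom{\src{\Delta}}\subseteq\dom{\src{H}}$, so the new location lies outside $\dom{\src{\Delta}}$ and the $\src{\Delta}$-restricted heap is untouched.

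The only substantive case is the update rule \Cref{tr:es-up}, where the context writes to some location $\src{\ell}$. Here I would argue $\src{\ell}\notin\dom{\src{\Delta}}$ as follows. By \Cref{tr:ts-s-l} the judgement $\src{\Delta}\vdash\src{ok}$ (part of $\vdash\src{C}:\UNS$) guarantees that every location in $\dom{\src{\Delta}}$ carries a \emph{trusted} type, i.e.\ $\UNS\notin\src{\tau}$; so it suffices to show that the location the attacker assigns is \UNS-typed. This is exactly the content of robust type safety: because the stepping process is attacker code it is well typed at \UNS by the rules of \Cref{sec:un-ty} (moving between the semantic and typed notions of attacker via \Thmref{thm:atk-src-coincide}), hence the location it names is reachable only from \UNS values; \Thmref{thm:atk-reach-uns-loc} then forbids reaching any non-\UNS location, forcing $\src{\ell}$ to be \UNS-typed and therefore $\src{\ell}\notin\dom{\src{\Delta}}$.

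The hard part will be precisely this last step: bridging the \emph{static} \UNS-typing of the attacker with the \emph{dynamic} identity of the location it writes. To make it robust against an active context rather than a routine syntactic check, I would carry, as an invariant along the whole run (not the isolated step), the property that every location an attacker process can name is \UNS-typed; this is a preservation-style statement whose backbone is \Thmref{thm:atk-reach-uns-loc}. Two minor points I would flag: the soup is written $\src{\Pi}$ on both sides of the conclusion (only the substitution is decorated), so the data that actually matters is the identity and continuation of the stepping process; and, should $\src{\lambda}$ be instantiated to an action, the corresponding subcases add nothing because the action rules leave $\src{H}$ intact.
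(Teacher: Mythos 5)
Your proposal is correct and follows essentially the same route as the paper's proof, which is a brief induction on the derivation of the attacker soup \src{\Pi} using its $\vdash_\UNS$ typing together with \Cref{thm:atk-reach-uns-loc} to conclude that attacker steps never touch locations in \src{\Delta} (your case analysis on the operational rules, with allocation and update singled out, just spells out the inductive cases the paper calls trivial). The one refinement you add---explicitly bridging the semantic and typed attacker notions via \Cref{thm:atk-src-coincide} and flagging the need for a run-long invariant---is consistent with how the paper deploys this lemma inside \Cref{thm:epsi-red-ok}.
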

\begin{proof}
	Trivial induction on the derivation of \src{\Pi}, which is typed with $\vdash_\UNS$ and by \Thmref{thm:atk-reach-uns-loc} has no access to locations in \src{\Delta} or with a type $\src{\tau}\vdash\circ$. %
\end{proof}

\BREAK

\begin{lemma}[\LA typed reduction respects heap typing]\label{thm:epsi-red-resp-rel}
	\begin{align*}
		\text{if }
		&
		\src{C}\equiv\src{\Delta ; \cdots}
		\\
		&
		\src{C,\Gamma}\vdash\src{s}
		\\
		&
		\src{C,\Gamma}\vdash\src{s'}
		\\
		&
		\vdash\monh{\src{H},\src{\Delta}}:\src{\Delta}
		\\
		&
		\src{C, H\triangleright s\rho} \xtos{\lambda} \src{C', H'\triangleright s'\rho'} 
		\\
		\text{then } 
		&
		\vdash\monh{\src{H'},\src{\Delta}}:\src{\Delta}
	\end{align*}
\end{lemma}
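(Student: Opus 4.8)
The plan is to prove this by induction on the derivation of the single step $\src{C, H\triangleright s\rho} \xtos{\lambda} \src{C', H'\triangleright s'\rho'}$, exploiting the fact that among the statement rules of \LA the only ones that alter the heap are allocation (\Cref{tr:es-al}) and update (\Cref{tr:es-up}). First I would dispatch every rule that leaves the heap untouched --- sequencing (\Cref{tr:es-seq}), the two conditional rules, \src{let}, endorsement, and all the call/return rules --- since for these $\src{H'} = \src{H}$, hence $\monh{\src{H'},\src{\Delta}} = \monh{\src{H},\src{\Delta}}$ (by \Cref{tr:h-mon-s}) and the conclusion is immediate from the hypothesis $\vdash\monh{\src{H},\src{\Delta}}:\src{\Delta}$. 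The congruence rule (\Cref{tr:es-step}) is the inductive case: here I would invert the statement typing via \Cref{tr:ts-seq} to obtain a typing derivation for the reducing sub-statement, then apply the induction hypothesis to its sub-step.

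The allocation case is settled by freshness. In \Cref{tr:es-al} the new heap is $\src{H; \ell\mapsto v:\tau}$ with $\src{\ell}\notin\dom{\src{H}}$. Since $\vdash\monh{\src{H},\src{\Delta}}:\src{\Delta}$ forces every location of $\dom{\src{\Delta}}$ to already occur in $\src{H}$ (by inversion on \Cref{tr:heap-ok}), the fresh $\src{\ell}$ cannot lie in $\dom{\src{\Delta}}$; therefore the monitored fragment is unchanged, $\monh{\src{H; \ell\mapsto v:\tau},\src{\Delta}} = \monh{\src{H},\src{\Delta}}$, and the result follows. (Even setting freshness aside, \Cref{tr:ts-new} guarantees the stored value inhabits the declared type $\src{\tau}$, so the invariant would survive in any case.)

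The substantive case, and the main obstacle, is assignment. In \Cref{tr:es-up} the step rewrites $\src{\ell\mapsto v':\tau}$ to $\src{\ell\mapsto v:\tau}$, keeping the type annotation $\src{\tau}$ fixed, so the only thing to check is that when $\src{\ell}\in\dom{\src{\Delta}}$ the new value $\src{v}$ has type $\src{\Delta}(\src{\ell})$. I would get this by inverting the statement typing \Cref{tr:ts-ass} for $\src{x := e'}$, which yields $\src{x}:\src{\Refs{\tau}}$ and $\src{e'}:\src{\tau}$; because the reducing term is $\src{s\rho}$ with $\src{\rho}(\src{x}) = \src{\ell}$, a substitution/canonical-forms argument together with \Cref{tr:ts-loc} forces $\src{\tau} = \src{\Delta}(\src{\ell})$, i.e. the static reference type of $\src{x}$ agrees with the monitored type of $\src{\ell}$. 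An expression subject-reduction lemma (routine, but not present in the excerpt, so I would state and prove it separately) then gives that the value $\src{v}$ to which $\src{e'\rho}$ evaluates has type $\src{\tau}$ under the well-typed heap, whence $\src{v}:\src{\Delta}(\src{\ell})$ and $\vdash\monh{\src{H'},\src{\Delta}}:\src{\Delta}$ is preserved. The delicate points are the bookkeeping between the substitution $\src{\rho}$ and the open environment $\src{\Gamma}$ --- the lemma leaves the well-typedness of $\src{\rho}$ implicit, so I would thread a substitution-typing invariant through the induction --- and confirming that the annotation carried in the heap for a $\src{\Delta}$-location never drifts from its declaration, which holds since allocation is the only rule introducing annotations and \Cref{tr:ts-new} fixes them.
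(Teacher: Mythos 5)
Your proposal is correct and takes essentially the same route as the paper: the paper's proof is also an induction (on the typing derivation of the reducing statement, which yields the same case split as your induction on the step derivation), it likewise dismisses all heap-preserving rules, settles allocation by freshness of \src{\ell} with respect to \src{\Delta}, and settles assignment by the fixed heap annotation agreeing with \src{\Delta} (via the initial-state setup) together with the expression subject-reduction fact you propose to state explicitly. The only cosmetic difference is that the paper's induction over typing additionally lists the coercion and endorse cases, which in your formulation are absorbed into the heap-untouched cases and the auxiliary expression lemma, and that you make explicit the substitution-typing bookkeeping the paper leaves implicit.
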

\begin{proof}
	This is done by induction on the derivation of the reducing statement.%

	There, the only non-trivial cases are:
		\begin{description}
			\item[\Cref{tr:ts-new}]

			By IH we have that 

			$\src{H\triangleright e\rho \redtos v}$

			So

			$\src{C;H \triangleright \letnewty{x}{e}{\tau}{s}} \xtos{\epsilon} \src{C;H \ell\mapsto v:\tau \triangleright s\subs{\ell}{x} }$

			By IH we need to prove that $\vdash\monh{\src{\ell\mapsto v:\tau},\src{\Delta}}:\src{\Delta}$

			As $\src{\ell}\notin\dom{\src{\Delta}}$, by \Cref{tr:heap-ok} this case holds.

			\item[\Cref{tr:ts-ass}] 

			By IH we have (HPH) $\src{H\triangleright e \redtos v}$

			such that $\src{\ell}:\src{\Refs{\tau}}$ and $\src{v}:\src{\tau}$.

			So

			$\src{C;H \triangleright x:=e\rho} \xtos{\epsilon} \src{C;H' \triangleright \skips }$

			where $\subs{x}{\ell}\in\src{\rho}$ and

			$\src{H}=\src{H_1; \ell\mapsto v':\tau ; H_2}$
		
			$\src{H'}=\src{H_1; \ell\mapsto v:\tau ; H_2}$

			There are two cases
			\begin{description}
				\item[$\src{\ell}\in\dom{\src{\Delta}}$] 

				By \Cref{tr:heap-ok} we need to prove that $\src{\ell}:\src{\Refs{\tau}}\in\src{\Delta}$.

				This holds by HPH and \Cref{tr:ini-s}, as the initial state ensures that location \src{\ell} in the heap has the same type as in \src{\Delta} .

				\item[$\src{\ell}\notin\dom{\src{\Delta}}$] 
				This case is trivial as for allocation.
			\end{description}

			\item[\Cref{tr:ts-coe}] 

			We have that $\src{C,\Gamma}\vdash \src{e} : \src{\tau} $ and  HPT $\src{\tau} \vdash \circ$.

			By IH $\src{H\triangleright e \redtos v} $ such that $\vdash\monh{\src{H'},\src{\Delta}}:\src{\Delta}$.

			By HPT we get that $\monh{\src{H}}=\monh{\src{H'}}$ as by \Cref{tr:h-sec-s} function \monh{\cdot} only considers locations whose type is $\src{\tau} \nvdash \circ$, so none affected by \src{e}.

			So this case by IH.

			\item[\Cref{tr:ts-end}]

			By \Cref{tr:es-end} we have that $\src{H\triangleright e \redtos v}$ and that $\src{C, H \triangleright \myendorse{x}{e}{\varphi}{s} } \redtos \src{C, H \triangleright s\subs{v}{x}}$.

			So this holds by IH.

		\end{description}
\end{proof}

\BREAK

\begin{lemma}[\LA any non-cross reduction respects heap typing]\label{thm:epsi-red-ok}
	\begin{align*}
		\text{if }
		&
		\src{C}\equiv\src{\Delta ; \cdots}
		\\
		&
		\vdash\monh{\src{H},\src{\Delta}}:\src{\Delta}
		\\
		&
		\src{C, H\triangleright \Pi\rho} \xtos{\lambda} \src{C', H'\triangleright \Pi'\rho'} 
		\\
		\text{then } 
		&
		\vdash\monh{\src{H'},\src{\Delta}}:\src{\Delta}
	\end{align*}
\end{lemma}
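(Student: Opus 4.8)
The plan is to reduce the statement to the two special cases already established---\Cref{thm:epsi-atk-red-pres-rel} for attacker code and \Cref{thm:epsi-red-resp-rel} for component code---by a case analysis on which code is executing when the step is taken. First I would observe that a \emph{non-cross} reduction is exactly one emitting the label \src{\epsilon}, so we may take \src{\lambda}=\src{\epsilon}. By the soup semantics, such a step $\src{C, H\triangleright \Pi\rho} \xtos{\epsilon} \src{C', H'\triangleright \Pi'\rho'}$ arises either from \Cref{tr:es-fork} (which leaves the heap unchanged, so the conclusion is immediate from the hypothesis), from \Cref{tr:es-fail} (whose right-hand side is \fails{} and carries no heap to type), or from \Cref{tr:es-par}, in which a single process \src{\proc{s}{\OB{f}}} takes a silent step and is responsible for the heap change. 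Thus it suffices to treat the single-process case $\src{C, H\triangleright \proc{s}{\OB{f}}} \xtos{\epsilon} \src{C, H'\triangleright \proc{s'}{\OB{f'}}}$.

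Next I would split on whether the active frame (the top of the call stack \src{\OB{f}}) is a component function (in $\src{C}.\mtt{funs}$) or an attacker function (an interface name in \src{\OB{I}}); because the step is internal, the two kinds of code do not mix within it. In the component case, the executing statement \src{s} is well-typed in the full type system of \Cref{fig:la-ts}: this follows from $\vdash\src{C}:\UNS$ together with a subject-reduction invariant for \LA stating that every reachable component statement remains typeable and every reachable attacker statement remains typeable at \UNS. Given \src{s} (and, by preservation, \src{s'}) well-typed, I apply \Cref{thm:epsi-red-resp-rel} to obtain $\vdash\monh{\src{H'},\src{\Delta}}:\src{\Delta}$ directly. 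In the attacker case, \src{s} is typed at \UNS and the step is an attacker reduction $\src{C}\vdashatts\src{\Pi}\xtos{}\src{\Pi'}$; applying \Cref{thm:epsi-atk-red-pres-rel} yields $\monh{\src{H},\src{\Delta}}=\monh{\src{H'},\src{\Delta}}$, so the monitored portion of the heap is literally unchanged and its well-typing is inherited verbatim from the hypothesis $\vdash\monh{\src{H},\src{\Delta}}:\src{\Delta}$.

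The hard part will be establishing and threading the subject-reduction invariant that keeps the executing statement well-typed in the appropriate fragment across arbitrary interleavings of trusted and untrusted code. This is the standard preservation argument, but complicated by three features: the substitution \src{\rho} of values into statements (so I must show substitution preserves typing, both in the full system and at \UNS), the \src{endorse} construct (\Cref{tr:ts-end}), which re-types an untrusted value under a superficial type \src{\varphi} and whose runtime check in \Cref{tr:es-end} must be shown to deliver a value of that superficial type, and allocation/assignment, which must be shown to respect \src{\Delta}. The attacker branch additionally rests on \Cref{thm:atk-reach-uns-loc}: untrusted code can never reach a trusted location, so its writes and allocations never perturb $\monh{\src{H},\src{\Delta}}$, which is precisely what licenses the equality supplied by \Cref{thm:epsi-atk-red-pres-rel}. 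Once the invariant is in place, the two sub-cases close mechanically, and the soup-level bookkeeping (\Cref{tr:es-par,tr:es-fork,tr:es-fail}) is routine.
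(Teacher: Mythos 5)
Your proposal is correct and takes essentially the same route as the paper's proof: the paper likewise reduces the soup-level step via \Cref{tr:es-par} to a case split on whether the reducing process is attacker code or component code, closing the attacker case with \Cref{thm:epsi-atk-red-pres-rel} (heap equality on the monitored part) and the component case with \Cref{thm:epsi-red-resp-rel}. Your explicit attention to the subject-reduction/substitution bookkeeping needed to discharge the well-typedness hypotheses of \Cref{thm:epsi-red-resp-rel} is detail the paper leaves implicit, but it does not change the structure of the argument.
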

\begin{proof}
	By induction on the reductions and by application of \Cref{tr:es-par}.
	The base case follows by the assumptions directly.
	In the inductive case we have the following:
	\begin{align*}
		&
		\src{C, H\triangleright \Pi\rho} \xtos{\lambda} \src{C'', H''\triangleright \Pi''\rho''} \xtos{\lambda} \src{C', H'\triangleright \Pi'\rho'} 
	\end{align*}
	This has 2 sub-cases, if the reduction is in an attacker function or not.
	\begin{description}
		\item[$\src{C}\vdashatts\src{\Pi''\xtos{}\Pi}$:] this follows by induction on \src{\Pi''} and from IH and \Thmref{thm:epsi-atk-red-pres-rel}.
		\item[$\src{C}\not\vdashatts\src{\Pi''\xtos{}\Pi}$:]
		In this case we induce on \src{\Pi''}.

		The base case is trivial.

		The inductive case is \src{\proc{s}{\OB{f}} \parallel \Pi}, which follows from IH and \Thmref{thm:epsi-red-resp-rel}.
	\end{description}
\end{proof}

\BREAK

\begin{lemma}[\LA-? actions respect heap typing]\label{thm:src-?-rel}
	\begin{align*}
		\text{if }
		&
		\src{C}\equiv\src{\Delta ; \cdots}
		\\
		&
		\src{C, H\triangleright \Pi\rho} \Xtos{\alpha?} \src{C, H'\triangleright v'} 
		\\
		\text{then } 
		&
		\monh{\src{H},\src{\Delta}}=\monh{\src{H'},\src{\Delta}}
	\end{align*}
\end{lemma}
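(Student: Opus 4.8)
The plan is to factor the run at the level of the trace semantics and then reduce everything to the single-step attacker lemma \Thmref{thm:epsi-atk-red-pres-rel}, exploiting the fact that an incoming action never writes the heap. First I would unfold the hypothesis: by \Cref{tr:es-tr-sin} the run $\src{C, H\triangleright \Pi\rho} \Xtos{\alpha?} \src{C, H'\triangleright v'}$ factors as a (possibly empty) sequence of silent steps $\src{C, H\triangleright \Pi\rho} \Xtos{} \src{C, H''\triangleright \Pi''}$ followed by a single labelled step $\src{C, H''\triangleright \Pi''} \xtos{\alpha?} \src{C, H'\triangleright v'}$ emitting the $\com{?}$-decorated action $\alpha?$. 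The target equality $\monh{\src{H},\src{\Delta}}=\monh{\src{H'},\src{\Delta}}$ then follows by chaining equalities across the prefix and across the terminal step.

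For the terminal step, since $\alpha?$ is $\com{?}$-decorated it must be produced either by \Cref{tr:es-call} (an incoming call, $\src{in}$) or by \Cref{tr:es-retb} (a return-back, $\src{in}$). Inspecting both rules, the heap on the two sides of the reduction is literally the same: a call only evaluates the effect-free argument expression and pushes a frame, and a return-back only pops a frame. Hence $\src{H''}=\src{H'}$, and in particular $\monh{\src{H''},\src{\Delta}}=\monh{\src{H'},\src{\Delta}}$ holds trivially.

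For the silent prefix I would induct on its length. The base case is reflexive, with $\src{H}=\src{H''}$, so the claim is immediate. For the inductive step I use that control resides in the attacker throughout this segment: the process that will perform the incoming action has an interface (context) function on the top of its call stack just before crossing in, since in both \Cref{tr:es-call} and \Cref{tr:es-retb} the caller/returnee lies in $\src{\OB{I}}$. Each preceding silent step is therefore attacker-controlled, i.e.\ satisfies $\src{C}\vdashatts\src{\Pi_i}\xtos{}\src{\Pi_{i+1}}$, and so preserves the monitored heap by \Thmref{thm:epsi-atk-red-pres-rel}. Composing the single-step equalities gives $\monh{\src{H},\src{\Delta}}=\monh{\src{H''},\src{\Delta}}$, which combined with the terminal step yields $\monh{\src{H},\src{\Delta}}=\monh{\src{H''},\src{\Delta}}=\monh{\src{H'},\src{\Delta}}$, as required.

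The main obstacle is justifying, in the concurrent soup, that \emph{every} silent step of the prefix may be treated as attacker-controlled, rather than only the thread that eventually crosses the boundary; a priori an interleaved component thread could be scheduled and could write a monitored location, breaking the equality. I would discharge this by combining the well-bracketing discipline of traces (between the previous $\com{!}$-action and this $\com{?}$-action the monitored-relevant control belongs to the attacker) with \Thmref{thm:atk-reach-uns-loc}: since an attacker-controlled configuration can only reach $\UNS$-typed locations, no step available to it can assign a location of $\dom{\src{\Delta}}$, all of which are trusted ($\src{\tau}\nvdash\src{\circ}$). Thus the restricted heap $\monh{\cdot,\src{\Delta}}$ is frozen across the whole prefix even under interleaving. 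Reconciling this per-thread stack invariant with the soup semantics is the only delicate point; the remaining steps are routine rule inspection.
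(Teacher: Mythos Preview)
Your decomposition into a silent prefix followed by the terminal $?$-step matches the paper, and your inspection of \Cref{tr:es-call} and \Cref{tr:es-retb} for the terminal step is correct. The divergence is in how you handle the prefix: the paper invokes \Thmref{thm:epsi-red-ok}, not \Thmref{thm:epsi-atk-red-pres-rel}. The difference matters for exactly the reason you flag as ``delicate'': in the concurrent soup, silent steps preceding a $?$-action need not all be attacker steps. A component thread forked during an earlier visit can run concurrently and legitimately write to a location in $\dom{\src{\Delta}}$. Well-bracketing is a per-thread invariant, so it does not exclude this, and \Thmref{thm:atk-reach-uns-loc} constrains only attacker code, not component code. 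Hence your route through \Thmref{thm:epsi-atk-red-pres-rel} alone cannot close the gap.

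\Thmref{thm:epsi-red-ok} sidesteps the interleaving problem because it covers both attacker and component silent steps. But note what it yields: only $\vdash\monh{\src{H'},\src{\Delta}}:\src{\Delta}$, not the equality $\monh{\src{H},\src{\Delta}}=\monh{\src{H'},\src{\Delta}}$ claimed in the statement. In the concurrent scenario just described the equality is in fact false, so no argument can establish it as written. The sole consumer of this lemma, \Thmref{thm:alpha-red-ok}, needs only heap-typing preservation (exactly as in the companion \Thmref{thm:src-!-rel}), and that is what the paper's cited lemma actually delivers. So the right repair is to weaken the conclusion to $\vdash\monh{\src{H'},\src{\Delta}}:\src{\Delta}$ under the additional hypothesis $\vdash\monh{\src{H},\src{\Delta}}:\src{\Delta}$, and then the paper's one-line appeal to \Thmref{thm:epsi-red-ok} plus your terminal-step analysis goes through cleanly.
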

\begin{proof}
	By \Thmref{thm:epsi-red-ok}, and a simple case analysis on \src{\alpha?} (which does not modify the heap).
\end{proof}

\BREAK

\begin{lemma}[\LA-! actions respect heap typing]\label{thm:src-!-rel}
	\begin{align*}
		\text{if }
		&
		\src{C}\equiv\src{\Delta ; \cdots}
		\\
		&
		\src{C, H\triangleright \Pi\rho} \Xtos{\alpha!} \src{C', H'\triangleright v'} 
		\\
		&
		\src{\vdash\monh{\src{H},\src{\Delta}}:\src{\Delta}}
		\\
		\text{then } 
		&
		\src{\vdash\monh{\src{H'},\src{\Delta}}:\src{\Delta}}
	\end{align*}
\end{lemma}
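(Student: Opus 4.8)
The plan is to prove this statement as the exact dual of \Thmref{thm:src-?-rel}, with one essential difference that is already reflected in the two conclusions: because an outgoing action is driven by the \emph{component} rather than by the attacker, I cannot hope to show that the monitored heap is left unchanged (the component may legitimately write to trusted, monitored locations), so I will only establish that its typing against \src{\Delta} is \emph{preserved}, which is precisely what is asked. First I would unfold the single-action trace reduction \src{C, H\triangleright \Pi\rho} \Xtos{\alpha!} \src{C', H'\triangleright v'}. By \Cref{tr:es-tr-sin} together with the reading of \Xtos{} as the reflexive--transitive closure of silent steps (\Cref{tr:es-tr-silent}, \Cref{tr:es-tr-trans}), this decomposes into a finite sequence of silent steps \src{C, H\triangleright \Pi\rho}\Xtos{}\src{C, H_0\triangleright \Pi_0}, followed by a single action step \src{C, H_0\triangleright \Pi_0}\xtos{\alpha!}\src{C, H'\triangleright \cdots} (any trailing silent steps are folded back into the same silent-prefix argument below).

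For the silent portion I would appeal directly to \Thmref{thm:epsi-red-ok} to transport the hypothesis \src{\vdash\monh{\src{H},\src{\Delta}}:\src{\Delta}} along the reduction and obtain \src{\vdash\monh{\src{H_0},\src{\Delta}}:\src{\Delta}}. Recall that \Thmref{thm:epsi-red-ok} already inducts on the soup (via \Cref{tr:es-par}) and on the length of the reduction, dispatching each individual step either to \Thmref{thm:epsi-atk-red-pres-rel} when it is an attacker step (which, by \Thmref{thm:atk-reach-uns-loc}, even leaves \monh{\cdot} unchanged) or to \Thmref{thm:epsi-red-resp-rel} when it is a component step (which preserves the typing against \src{\Delta}). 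Fork steps (\Cref{tr:es-fork}) leave the heap untouched and are subsumed by the same argument, so the concurrency of \LA introduces no extra difficulty here.

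It then remains to treat the action step itself, by a case analysis on the outgoing action \src{\alpha!}. If \src{\alpha!}=\src{\cb{f}{v}}, the step is an instance of \Cref{tr:es-callback}; if \src{\alpha!}=\src{\rt{}}, it is an instance of \Cref{tr:es-ret}. In both rules the only heap interaction is the effect-free evaluation of an expression \src{H_0\triangleright e \redtos v}, and the heap is identical on the two sides of the transition. Hence \src{H'}=\src{H_0}, so that \src{\monh{\src{H'},\src{\Delta}}}=\src{\monh{\src{H_0},\src{\Delta}}}, and the goal \src{\vdash\monh{\src{H'},\src{\Delta}}:\src{\Delta}} follows immediately from the previous paragraph.

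The main obstacle is the silent-prefix step, precisely because it lacks the simplification available in the ? case. For incoming actions one can prove the stronger equality \src{\monh{\src{H},\src{\Delta}}}=\src{\monh{\src{H'},\src{\Delta}}}, since the pre-action reductions all run inside attacker code, which by \Thmref{thm:atk-reach-uns-loc} can only reach \UNS locations. For outgoing actions the silent prefix may interleave genuine component steps that update monitored locations, so the argument must instead rest on the well-typedness of the component: an assignment to a trusted location \src{\ell:\Refs{\tau}} can only store a value of type \src{\tau} (\Cref{tr:ts-ass}), which keeps \src{\monh{\src{H_0},\src{\Delta}}} well typed against \src{\Delta}. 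This is exactly the invariant maintained by \Thmref{thm:epsi-red-resp-rel}, so the only real care is to thread the standard type-preservation invariant (guaranteed by \src{\vdash C:\UNS} in the calling context of \Thmref{thm:src-ty-impl-safe}) through the reduction and to observe that the concluding action step is heap-neutral.
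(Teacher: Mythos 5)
Your proposal is correct and follows essentially the same route as the paper's proof: the paper, too, discharges the silent prefix via \Thmref{thm:epsi-red-ok} and then performs a case analysis on \src{\alpha!}, observing that the action step itself (\Cref{tr:es-callback}, \Cref{tr:es-ret}) leaves the heap unchanged. Your additional remark contrasting the preservation statement here with the stronger equality available in the ?-case (\Thmref{thm:src-?-rel}) matches the paper's intent exactly, so no changes are needed.
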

\begin{proof}
	By \Thmref{thm:epsi-red-ok} and a simple case analyis on \src{\alpha!} (which does not modify the heap).
\end{proof}

\BREAK

\subsection{Proof of \Thmref{thm:comp-ap-cc}}\label{sec:proof-compap-cc}

\begin{proof}
	By definition initial states have related components, related heaps and well-typed, related starting processes, 
	for $\beta_0=(\dom{\src{\Delta}},\dom{\trg{H_0}},\trg{H_0}.\trg{\eta})$
	so we have:

	HRS $\SInits{C}\relate_{\beta_0} \SInitt{\compap{\src{C}}}$.

	As the languages have no notion of internal nondeterminism we can apply \Thmref{thm:gen-cc} with HRS to conclude.
\end{proof}

\BREAK

\begin{lemma}[Expressions compiled with \compap{\cdot} are related]\label{thm:expr-rel-ap}
	\begin{align*}
		&
		\forall
		\\
		\text{if }
		&\ 
		\src{H}\relatebeta\trg{H}
		\\
		&\
		\src{H\triangleright e\rho \redtos v}
		\\
		\text{then }
		&\
		\trg{H\triangleright \compap{e}\compap{\src{\rho}} \redtot \compap{v}}
	\end{align*}
\end{lemma}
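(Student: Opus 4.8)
The plan is to prove \Cref{thm:expr-rel-ap} by a structural induction that mirrors the proof of \Cref{thm:expr-rel-up} for \compup{\cdot}. The one structural difference is that \compap{\cdot} is directed by typing derivations rather than by raw syntax, so I would perform the induction on the derivation of $\src{\Delta,\Gamma}\vdash\src{e}:\src{\tau}$ rather than on the shape of \src{e}. Since expressions are effect-free, the partial bijection $\beta$ never grows during expression evaluation, so, unlike the backward-simulation and compiler-correctness lemmas, this argument keeps $\beta$ fixed throughout. In each case the goal is to exhibit a target value \trg{w} with $\trg{H\triangleright\compap{e}\compap{\src{\rho}}\redtot w}$ and $\src{v}\relatebeta\trg{w}$; I write \compap{v} for such a \trg{w}, which exists by totality of $\relatebeta$.

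The base cases are the value-typing rules. For \Cref{tr:compap-true}, \Cref{tr:compap-false} and \Cref{tr:compap-nat} the facts $\trues\relatebeta\trg{0}$, $\falses\relatebeta\trg{1}$ and $\src{n}\relatebeta\trg{n}$ hold by definition of $\relatebeta$. The variable case \Cref{tr:compap-var} follows from the relatedness of the compiled substitution, $\src{\rho}\relatebeta\compap{\src{\rho}}$ (itself immediate from the definition $\compap{\subs{v}{x}}=\subt{\compap{v}}{x}$), and the location case \Cref{tr:compap-loc} from $\src{\ell}\relatebeta\trg{\pair{n,v}}$. The inductive cases for arithmetic (\Cref{tr:compap-op}), pairing (\Cref{tr:compap-pair}), and projections (\Cref{tr:compap-p1}, \Cref{tr:compap-p2}) are routine: apply the induction hypothesis to the immediate subexpressions, then close using the matching source rule (\Cref{tr:es-op}, \Cref{tr:es-p1}, \Cref{tr:es-p2}) and target rule (\Cref{tr:et-op}, \Cref{tr:et-p1}, \Cref{tr:et-p2}). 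The comparison case (\Cref{tr:compap-bop}) needs one extra observation about the boolean encoding: the source comparison yields a boolean \src{b}, whereas \Cref{tr:et-bop} yields \trg{0} when the comparison holds and \trg{1} otherwise; this is exactly $\trues\relatebeta\trg{0}$ and $\falses\relatebeta\trg{n}$ for $\trg{n}\neq\trg{0}$, so the results remain related.

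The two genuinely load-bearing cases are coercion and dereference. For coercion (\Cref{tr:compap-coe}), where $\src{\Delta,\Gamma}\vdash\src{e}:\src{\tau}$ with $\src{\tau}\vdash\circ$ is reused to derive $\src{\Delta,\Gamma}\vdash\src{e}:\UNS$, the compiler simply erases the coercion and compiles the subderivation, so the induction hypothesis applied to that subderivation discharges the goal verbatim; this case has no analogue in the syntax-directed \compup{\cdot} proof and is the reason the induction must be on derivations. For dereference (\Cref{tr:compap-deref}), \compap{\src{!e}} becomes $\trg{!\projone{\compap{e}}\with{\projtwo{\compap{e}}}}$; the induction hypothesis gives $\trg{\compap{e}\compap{\src{\rho}}\redtot\pair{n,c}}$ with $\src{\ell}\relatebeta\trg{\pair{n,c}}$, and \Cref{tr:et-p1}, \Cref{tr:et-p2} extract \trg{n} and the capability \trg{c}.

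The main obstacle is precisely this dereference case: I must argue that dereferencing \trg{n} with \trg{c} in \trg{H} actually succeeds and returns a value related to the source \src{v}. This hinges on the heap relation \Cref{tr:hrel-i}: because $\src{H}\relatebeta\trg{H}$ and $\src{\ell\mapsto v}$ in \src{H}, the related target location \trg{n} stores some \trg{v'} with $\src{v}\relatebeta\trg{v'}$ under a tag \trgb{\eta} dictated by the triple for \src{\ell} in $\beta$. A short case split on that triple matches \trg{c} against \trgb{\eta}: when $(\src{\ell},\trg{n},\trg{k})\in\beta$ the relation forces $\trg{c}=\trg{k}$ and the tag is \trg{k}, so \Cref{tr:et-de-k} fires; when $(\src{\ell},\trg{n},\trgb{\bot})\in\beta$ the tag is \trgb{\bot} and \Cref{tr:et-de-t} fires regardless of the supplied capability. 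Either way the dereference yields the related \trg{v'}, and the remaining bookkeeping is identical to the \compup{\cdot} proof.
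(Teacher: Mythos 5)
Your proof is correct and takes essentially the same route as the paper, whose entire proof of \Cref{thm:expr-rel-ap} is the one-line remark that the argument is analogous to that of \Cref{thm:expr-rel-up}, since the two compilers translate expressions in the same way and expression evaluation is atomic. Your refinements---phrasing the induction on the typing derivation so that the coercion rule \Cref{tr:compap-coe} is absorbed by the induction hypothesis, and the case split on the $\beta$ triple to select between \Cref{tr:et-de-k} and \Cref{tr:et-de-t} in the dereference case---simply make explicit details the paper leaves implicit.
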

\begin{proof}
	The proof is analogous to that of \Thmref{thm:expr-rel-up} as the compilers perform the same steps and expression reductions are atomic.
\end{proof}

\BREAK

\begin{lemma}[Generalised compiler correctness for \compap{\cdot}]\label{thm:gen-cc}
	\begin{align*}
		&
		\forall ... \exists \beta'
		\\
		\text{ if }&\
		\src{C;\Gamma}\vdash\src{\Pi},
		\\
		&\
		\vdash\src{C}:\src{whole}
		\\
		&\
		\src{C}=\src{\Delta ; \OB{F} ; \OB{I}} 
		\\
		&
		\compap{\src{C}} = \trg{H_0 ; \OB{F} ; \OB{I}} =\trg{C}
		\\
		&\
		\src{C, H \triangleright \Pi} \relatebeta \trg{C, H \triangleright \compap{\src{C;\Gamma}\vdash\src{\Pi}}}
		\\
		&\
		\src{C,H\triangleright \Pi\rho}\Xtos{} \src{C, H'\triangleright \Pi'\rho'}
		\\
		\text{ then }&\
		\trg{C,H\triangleright \compap{\src{C;\Gamma}\vdash\src{\Pi}}\compap{\src{\rho}}} \Xtot{} \trg{C, H' \triangleright \compap{\src{C;\Gamma}\vdash\src{\Pi'}}\compap{\src{\rho'}}}
		\\
		&\
		\src{C, H \triangleright \Pi'\rho'} \relate_{\beta'} \trg{C, H \triangleright \compap{\src{C;\Gamma}\vdash\src{\Pi'}}\compap{\src{\rho'}}}
		\\
		&\
		\beta\subseteq\beta'
	\end{align*}
\end{lemma}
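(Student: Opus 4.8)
The plan is to adapt the argument of \Thmref{thm:gen-cc-up} to the concurrent, typing-directed compiler \compap{\cdot}. Because the hypothesis $\src{C,H\triangleright\Pi\rho}\Xtos{}\src{C,H'\triangleright\Pi'\rho'}$ is a silent multi-step, I would induct on the length of this reduction. The base case (zero steps) is immediate with $\beta'=\beta$. For the inductive step I would decompose the reduction into a single $\xtos{\epsilon}$ step followed by a shorter silent reduction, discharge the single step by a nested case analysis, and then close with the induction hypothesis instantiated at the intermediate state and at an extended bijection $\beta''$ with $\beta\subseteq\beta''$. Since each source step fires in exactly one process of the soup (\Cref{tr:es-par}) or is a fork (\Cref{tr:es-fork}), I would first isolate the stepping process and then case-analyse its redex statement, driven by the typing derivation that \compap{\cdot} recurses on (\Cref{fig:compap}). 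For every redex whose reduction evaluates an expression I would appeal to \Thmref{thm:expr-rel-ap} to transport the source expression reduction to the target while preserving value relatedness.

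The shared statement constructs --- \skips, sequencing, \src{\letin{x}{e}{s}}, conditionals, internal calls (\Cref{tr:es-call-i}) and assignment --- reproduce the corresponding cases of \Thmref{thm:gen-cc-up} almost verbatim, matching each source rule against the target rule exposed by the compiler clauses (e.g.\ \Cref{tr:compap-vardef}, \Cref{tr:compap-if}, \Cref{tr:compap-ass}); I note that a single source step generally corresponds to several target $\xtot{\epsilon}$ steps (the let-binding and projection scaffolding inserted by the compiler), which is harmless for $\Xtot{}$. The allocation case \src{\letnewty{x}{e}{\tau}{s}} splits on the type exactly as \compap{\cdot} does (\Cref{tr:compap-new}): when $\src{\tau}=\UNS$ the target allocates via \Cref{tr:et-nu} and pairs the fresh address with $\trg{0}$, and I extend the bijection with $(\src{\ell},\trg{n+1},\trg{\bot})$; otherwise the target uses atomic co-creation (\Cref{tr:et-atom}), yielding a fresh location together with its protecting capability \trg{k}, and I extend with $(\src{\ell},\trg{n+1},\trg{k})$. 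In both sub-cases the new heap entries are related by \Cref{tr:hrel-i} after a routine weakening of $\beta$ to $\beta''$, and the new substitution is related because $\src{\ell}\relate_{\beta''}\trg{\pair{n,k}}$ (resp.\ $\trg{\pair{n,v}}$) holds by construction.

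The two genuinely new cases are fork and endorsement. For \src{\fork{s}} the source spawns a new process (\Cref{tr:es-fork}); the compiled fork (\Cref{tr:compap-fork}) spawns the compilation of \src{s} by \Cref{tr:et-fork}, so I would extend the soup relation with this new, compilation-related process pair (\Cref{tr:proc-rel}), leaving $\beta$ unchanged. For \src{\myendorse{x}{e}{\varphi}{s}} the source step (\Cref{tr:es-end}) succeeds only when the runtime value satisfies the superficial type $\src{\varphi}$, so I must show the compiled destruct (\Cref{tr:compap-end}) takes the matching branch (\Cref{tr:et-end-n} or \Cref{tr:et-end-p}) rather than going \wrong. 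This follows by reading off from $\src{v}\relatebeta\trg{v}$ that the target value has the shape demanded by the pattern: a number for \Nats and \Bools, a pair for \src{\UNS\times\UNS} and \src{\Refs{\UNS}}, with the additional checks that a \Bools{} value is $\trg{0}$ or $\trg{1}$ and that a \src{\Refs{\UNS}} value is a location--capability pair whose guarding dereference succeeds.

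The main obstacle I anticipate is precisely this endorsement case: establishing that the target's purely structural check faithfully mirrors the source's superficial-type check under $\relatebeta$, including the nested conditional dispatch for \Bools{} and the capability-protected dereference for \src{\Refs{\UNS}}. A secondary subtlety is the concurrency bookkeeping --- maintaining the pointwise soup relation across fork and ensuring that the multi-step target expansion of one source step is scheduled without interleaving other processes. Since the reduction here is silent and this is the correctness (forward-simulation) direction, I can choose the target schedule to mirror the source, so adversarial interleavings need not be considered (those are confined to the \rscomp argument via \Thmref{thm:comp-ap-rsc}); the bijection grows monotonically across all cases, delivering the required $\beta\subseteq\beta'$.
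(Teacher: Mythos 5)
Your proposal is correct and in substance identical to the paper's own proof: the paper likewise runs a forward simulation whose case analysis is driven by the typing rules that \compap{\cdot} recurses on, discharges every expression evaluation via \Thmref{thm:expr-rel-ap}, and extends the bijection at allocation with $(\src{\ell},\trg{n},\trg{\bot})$ in the \UNS case and $(\src{\ell},\trg{n},\trg{k})$ otherwise, exactly as you do. The only organizational difference is that the paper inducts structurally on the typing derivation of \src{\Pi} (and then of \src{s}) rather than on the length of the silent reduction, and it dispatches the fork and endorse cases as ``analogous''/``trivial'' where you spell them out --- your explicit check that the compiled \trg{destruct} takes the branch matching the superficial type, and your remark on choosing the target schedule to mirror the source, are if anything more careful than the paper's sketch.
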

\begin{proof}
	This proof proceeds by induction on the typing of \src{\Pi} and then of \src{\pi}.
	\begin{description}
		\item[Base Case] 
		\begin{description}
			\item[\skips] Trivial by \Cref{tr:compap-skip}.
		\end{description}

		\item[Inductive Case] \hfill

			In this case we proceed by induction on the typing of \src{s}
			\begin{description}
				\item[Inductive Cases]
				\begin{description}
					\item[\Cref{tr:ts-new}]

					There are 2 cases, they are analogous.
					\begin{description}
						\item[$\src{\tau}=\UNS$] 

							By HP

							$\src{\Gamma}\vdash \src{e} : \src{\tau}$

							$\src{H \triangleright e\redtos v}$

							$\src{C,H \triangleright\letnewty{x}{e}{\tau}{s}\rho} \xtos{\epsilon} \src{C,H;\ell\mapsto v:\tau \triangleright s\subs{\ell}{x}\rho}$

							By \Cref{thm:expr-rel-ap} we have:

							IHR1 $\trg{H\triangleright \compap{\src{\Gamma}\vdash \src{e} : \src{\tau}}\compap{\src{\rho}}} \redtot \trg{ \compap{\src{\Gamma}\vdash \src{v} : \src{\tau}}}$

							By \Cref{tr:compap-new} we get

							\trg{
							\begin{aligned}
								&
								\letnewt{\trg{xo}~}{\compap{\src{\Gamma}\vdash \src{e} : \src{\tau}} 
								\\
								&\
								}{ 
									\letint{\trg{x}~}{~\trg{\pair{xo,0}}
									\\
									&\ \ 
									}{
										\compap{\src{C,\Gamma;x:\Refs{\tau}}\vdash \src{s}} 
									}
								}
							\end{aligned}
							}

							So:
							\begin{align*}
								&
								\trg{C,H\triangleright 
									\begin{aligned}[t]
										&
										\letnewt{\trg{xo}~}{\compap{\src{\Gamma}\vdash \src{e} : \src{\tau}} 
										\\
										&\
										}{ 
											\letint{\trg{x}~}{~\trg{\pair{xo,0}}
											\\
											&\ \ 
											}{
												\compap{\src{C,\Gamma;x:\Refs{\tau}}\vdash \src{s}} 
											}
										}
									\end{aligned}
								}
								\\
								\xtot{\epsilon}
								&
								\trg{C,H ; n\mapsto \compap{\src{\Gamma}\vdash \src{v} : \src{\tau}}:\bot \triangleright 
									\begin{aligned}[t]
										&
										\letint{\trg{x}~}{~\trg{\pair{n,0}}
										\\
										&\ 
										}{
											\compap{\src{C,\Gamma;x:\Refs{\tau}}\vdash \src{s}} 
										}
									\end{aligned}
								}
								\\
								\xtot{\epsilon}
								&
								\trg{C,H ; n\mapsto \compap{\src{\Gamma}\vdash \src{v} : \src{\tau}}:\bot \triangleright 
									\compap{\src{C,\Gamma;x:\Refs{\tau}}\vdash \src{s}}\subt{\trg{\pair{n,0}}}{x} 
								}
							\end{align*}

							For $\beta'=\beta\cup(\src{\ell},\trg{n,\bot})$, this case holds.

						\item[else] 

							The other case holds follows the same reasoning but 

							for $\beta'=\beta\cup(\src{\ell},\trg{n,k})$ and for \trg{H'}=\trg{H;n\mapsto \compap{\src{C,\Gamma}\vdash \src{v} : \src{\tau}}:k;k}.
					\end{description}

					\item[\Cref{tr:ts-seq}] 

					By HP

					$\src{\Gamma}\vdash \src{s}$; $\src{\Gamma}\vdash \src{s'}$

					$\src{C,H \triangleright s\rho}\Xtos{}\src{C',H' \triangleright s''\rho''}$

					There are two cases
					\begin{description}
						\item[\src{s''}=\skips] \Cref{tr:eus-seq}

						$\src{C',H' \triangleright \skips\rho'';s'\rho} \xtos{\epsilon} \src{C',H' \triangleright s'\rho}$

						By IH

						$\trg{C,H\triangleright \compap{\src{\Gamma}\vdash \src{s}}\compap{\src{\rho}}} \Xtot{} \trg{C',H'\triangleright \compap{\src{\Gamma}\vdash \skips}\compap{\src{\rho''}}}$

						By \Cref{tr:compap-seq}

						$\trg{\compap{\src{C,\Gamma}\vdash \src{s}} ; \compap{\src{C,\Gamma}\vdash \src{s'}}}$

						So
						\begin{align*}
							&
							\trg{C,H\triangleright \compap{\src{C,\Gamma}\vdash \src{s}}\compap{\src{\rho}} ; \compap{\src{C,\Gamma}\vdash \src{s'} }} \compap{\src{\rho}}
							\\
							\Xtot{}
							&
							\trg{
							\trg{C',H'\triangleright \compap{\src{\Gamma}\vdash \skips} \compap{\src{\rho''}}}
							;
							\compap{\src{C,\Gamma}\vdash \src{s'} }\compap{\src{\rho}}
							}
							\\
							\xtot{\epsilon}
							&
							\trg{C',H'\triangleright \compap{\src{C,\Gamma}\vdash \src{s'} }\compap{\src{\rho}} }
						\end{align*}

						At this stage we apply IH and the case holds.

						\item[else] By \Cref{tr:eus-step} we have 

						$\src{C,H \triangleright s;s'}\Xtos{}\src{C',H' \triangleright s'';s'}$

						This case follows by IH and HPs.
					\end{description}

					\item[\Cref{tr:ts-fun}] 

					Analogous to the cases above.
					
					\item[\Cref{tr:ts-vardef}]

					Analogous to the cases above.

					\item[\Cref{tr:ts-ass}] 

					Analogous to the cases above.

					\item[\Cref{tr:ts-if}] 

					Analogous to the cases above.

					\item[\Cref{tr:ts-fork}]

					Analogous to the cases above.

					\item[\Cref{tr:ts-coe}]

					By \Cref{tr:compap-coe}, this follows from IH directly.

					\item[\Cref{tr:ts-end}] 

					This has a number of trivial cases based on \Cref{tr:compap-end} that are analogous to the ones above.
				\end{description}
			\end{description}
	\end{description}
\end{proof}

\BREAK

\subsection{Proof of \Thmref{thm:comp-ap-rsc}}

\begin{proof}
	Given:

	HP1: $\src{M}\vdash \src{C} : \src{rs}$

	HPM: $\src{M}\relatebeta\trg{M}$

	We need to prove:

	TP1: $\trg{M}\vdash \compgen{\src{C}} : \trg{rs}$
	
	We unfold the definitions of \com{rs} and obtain:

	$\forall\src{A}. \src{M}\vdash\src{A}:\src{attacker}, \vdash\src{A\hole{C}}:\src{whole}$

	HPE1: if $\src{\SInits{A\hole{C}} \Xtos{\OB{\alpha}} \_}$ then $\src{M}\vdash\strip{\src{\OB{\alpha}}}$

	$\forall\trg{A}. \trg{M}\vdash\trg{A}:\trg{attacker}, \vdash\trg{A\hole{\compap{C}}}:\trg{whole}$

	THE1: if HPRT $\trg{\SInitt{A\hole{\compap{C}}} \Xtot{\OB{\alpha}} \_}$ then THE1 $\trg{M}\vdash\strip{\src{\OB{\alpha}}}$

	By definition of the compiler we have that

	HPISR: $\SInits{A\hole{C}}\relatetbeta\SInitt{A\hole{\compap{C}}}$

	for $\beta = \dom{\src{\Delta}},\trg{H_0}$ such that $\src{M} = \src{(\set{\sigma},\monred,\sigma_0,\Delta,\sigma_c)}$ and $\trg{M} = \trg{(\set{\sigma},\monred,\sigma_0,H_0,\sigma_c)}$

	By \strip{\trg{\OB{\alpha}}} and \Cref{tr:mtt-valid} we get a \trg{\OB{H}} to induce on.

	\begin{description}
		\item[Base case:] this holds by \Cref{tr:mtt-t-s-b}.
		\item[Inductive case:] 

		By \Cref{tr:mtt-t-s}, $\trg{M;\OB{H}\monred M''}$ holds by IH, we need to prove \trg{M'';H\monred M'}.

		By \Cref{tr:ms-t-2} e need to prove that THMR: $\exists \trg{\sigma'}. (\trg{\sigma}, \monh{\trg{H},\trg{H_0}}, \trg{\sigma'}) \in \trg{\monred}$.

		By HPISR and with applications of \Cref{thm:comp-act-pres-relatet,thm:att-act-pres-relatet} we know that states are always related with $\relatetbeta$ during reduction.

		So by \Thmref{thm:relatet-impl-high-heap-rel} we know that HPHH $\monh{\src{H},\src{\Delta}}\relatebeta\monh{\trg{H},\trg{H_0}}$, for \src{H}, \trg{H} being the last heaps in the reduction.

		By HPM and \Cref{tr:mon-rel-ap} we have $\beta_0,\src{\Delta}\vdash\trg{M}$.

		By this and \Cref{tr:ok-mon} we have that HPHR $\forall \monh{\src{H},\src{\Delta}} \relatebeta \monh{\trg{H},\trg{H_0}}. $ $
			\text{ if } \vdash \src{H}: \src{\Delta} $ $
			\text{ then } \exists \trg{\sigma'}. (\trg{\sigma}, \monh{\trg{H},\trg{H_0}}, \trg{\sigma'}) \in \trg{\monred}$
		so by HPHH we can instantiate this with \src{H} and \trg{H}.

		By \Thmref{thm:src-ty-impl-safe} applied to HPE1, as \compap{\cdot} operates on well-typed components, we know that HPMR: $\src{M}\vdash\strip{\src{\OB{\alpha}}}$ for all \src{\OB{\alpha}}.

		So by \Cref{tr:mts-s} with HPMT we get HPHD $\vdash \src{H}: \src{\Delta}$ for the \src{H} above.

		By HPHD with HPHR we get THMR $\exists \trg{\sigma'}. (\trg{\sigma}, \monh{\trg{H},\trg{H_0}}, \trg{\sigma'}) \in \trg{\monred}$, so this case holds.
	\end{description}

\end{proof}

\BREAK

\begin{lemma}[$\relatetbeta$ implies relatedness of the high heaps]\label{thm:relatet-impl-high-heap-rel}
	\begin{align*}
		\text{if }
		&\
		\src{\Omega}=\src{\Delta ; \OB{F},\OB{F'} ; \OB{I} ; H \triangleright \Pi}
		\\
		&
		\trg{\Omega}=\trg{H_0 ; \OB{F},\compap{\OB{F'}} ; \OB{I} ; H \triangleright \Pi}
		\\
		&
		\src{\Omega}\relatetbeta\trg{\Omega}
		\\
		\text{then }
		&
		\monh{\src{H},\src{\Delta}}\relatebeta\monh{\trg{H},\trg{H_0}}
	\end{align*}
\end{lemma}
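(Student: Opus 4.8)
The plan is to reduce the claim to a location-by-location matching between the monitor-relevant fragments of the two heaps, using the high-location conditions packed inside $\relatetbeta$. First I would unfold the two projection operators: by \Cref{tr:h-mon-s}, $\monh{\src{H},\src{\Delta}}$ is the restriction of $\src{H}$ to the locations in $\dom{\src{\Delta}}$, and by \Cref{tr:th-pub-t}, $\monh{\trg{H},\trg{H_0}}$ is the restriction of $\trg{H}$ to $\dom{\trg{H_0}}$. Inverting the hypothesis $\src{\Omega}\relatetbeta\trg{\Omega}$ with \Cref{tr:state-rel-proof} yields (i) the setup judgement $\src{\Delta}\vdash_{\beta}\trg{H_0}$, which by \Cref{tr:ini-heap} establishes a bijective correspondence between $\dom{\src{\Delta}}$ and $\dom{\trg{H_0}}$ under $\beta$ (each $\src{\ell:\tau}\in\src{\Delta}$ relating to a $\trg{\pair{n,k}}$ with $\trg{n}\in\dom{\trg{H_0}}$ protected by capability $\trg{k}$), and (ii) the per-high-location conditions, of which clause (2a) is the one I need.

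Next, for each $\src{\ell}\in\dom{\src{\Delta}}$ I would argue that $\src{\ell}$ is a high location. Since monitors in \LA only track trusted locations, every $\src{\ell}\in\dom{\src{\Delta}}$ carries a type $\src{\tau}\nvdash\circ$, and because per-location types are fixed and preserved by the operational semantics (notably \Cref{tr:es-up}), the current heap $\src{H}$ still assigns $\src{\ell}$ a trusted type; hence $\src{\ell}\in\sech{\src{H}}$ by \Cref{tr:h-sec-s}. Combined with the correspondence $\src{\ell}\relatebeta\trg{\pair{n,k}}$ and the fact that $\trg{n}$ is still protected by $\trg{k}$ in $\trg{H}$, rule \Cref{tr:highloc} gives $\src{H},\trg{H}\vdash\highloc{\trg{n}}=\src{\ell},\trg{k}$. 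I then instantiate clause (2a) of $\relatetbeta$ at this high location to obtain $\src{\ell\mapsto v:\tau}\in\src{H}$, $\trg{n\mapsto v':k}\in\trg{H}$, and $\src{v}\relatebeta\trg{v'}$. Conversely, bijectivity of $\beta$ (restricted to the $\beta_0$ fixed by the setup) guarantees that every $\trg{n}\in\dom{\trg{H_0}}$ is the image of exactly one such $\src{\ell}$, so the two restricted heaps have matching domains.

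Finally I would assemble $\monh{\src{H},\src{\Delta}}\relatebeta\monh{\trg{H},\trg{H_0}}$ by induction over the finite set $\dom{\src{\Delta}}$, starting from the empty relation \Cref{tr:hrel-b} and adding one related pair $\src{\ell\mapsto v}$ against $\trg{n\mapsto v':k}$ at a time via \Cref{tr:hrel-i}, each step discharged by the $\src{\ell}\relatebeta\trg{\pair{n,k}}$ and $\src{v}\relatebeta\trg{v'}$ facts just derived. The main obstacle is bookkeeping the mismatch in granularity between the two notions of ``relevant'' heap: the $\sech{\cdot}$ and $\highloc$ machinery inside $\relatetbeta$ quantifies over all trusted-typed locations live in the current heap, including ones the compiled code allocates after start-up, whereas $\monh{\cdot}$ restricts only to the fixed sets $\dom{\src{\Delta}}$ and $\dom{\trg{H_0}}$. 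The care is in showing that the restriction to monitor locations is exactly closed under the $\beta_0$-correspondence and that each such target location is still present and capability-protected in $\trg{H}$, so that \Cref{tr:highloc} applies and clause (2a), rather than (2b), fires; this is precisely what the $\relatetbeta$ invariant, maintained across execution by \Cref{thm:comp-act-pres-relatet,thm:att-act-pres-relatet}, guarantees.
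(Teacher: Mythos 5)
Your proposal is correct and takes essentially the same route as the paper: the paper's entire proof of this lemma is the one-line appeal ``By point 2a in \Cref{tr:state-rel-proof}'', and your argument is exactly that appeal spelled out --- using $\src{\Delta}\vdash_{\beta}\trg{H_0}$ and trusted typing to show each monitored location satisfies \Cref{tr:highloc} so that clause (2a) fires, then assembling the restricted heaps via \Cref{tr:hrel-b,tr:hrel-i}. The presence-and-protection bookkeeping you defer to the invariant-preservation lemmas is likewise left implicit in the paper's proof, so your elaboration is, if anything, more careful than the original.
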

\begin{proof}
	By point 2a in \Cref{tr:state-rel-proof}.
\end{proof}

\BREAK

\begin{lemma}[\LA-compiled actions preserve $\relatetbeta$]\label{thm:comp-act-pres-relatet}
	\begin{align*}
		&
		\forall ...
		\\
		\text{ if }
		&\
		\src{C, H\triangleright \Pi\rho} \xtos{\lambda} \src{C, H'\triangleright \Pi'\rho'} 
		\\
		&\
		\trg{C,H\triangleright \compap{\src{C;\Gamma}\vdash\src{\Pi}}\compap{\src{\rho}}} \Xtot{\lambda} \trg{C, H' \triangleright \compap{\src{C;\Gamma}\vdash\src{\Pi'}}\compap{\src{\rho'}}}
		\\
		&\
		\src{C, H \triangleright \Pi\rho} \relatetbeta \trg{C, H \triangleright \compap{\src{C;\Gamma}\vdash\src{\Pi}}\rho}
		\\
		&\
		\src{C;\Gamma}\vdash\src{\Pi}
		\\
		\text{ then }
		&\
		\src{C, H'\triangleright \Pi'\rho'} \relatetbeta \trg{C, H' \triangleright \compap{\src{C;\Gamma}\vdash\src{\Pi'}}\compap{\src{\rho'}}}
	\end{align*}
\end{lemma}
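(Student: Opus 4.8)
The plan is to prove the statement by induction on the typing derivation of the soup $\src{\Pi}$, which via \Cref{tr:ts-soup,tr:ts-proc} reduces to an induction on the typing derivation of the single reducing statement $\src{s}$, exactly mirroring the case structure already used for generalised compiler correctness in \Cref{thm:gen-cc}. For each top-level constructor I would read off the source step from the \LA semantics, read off the matching (possibly multi-step) target reduction from the compiler clause in \Cref{fig:compap} together with the \LC semantics, and then re-establish every clause of the ``secure'' relation $\relatetbeta$ in \Cref{tr:state-rel-proof}, extending the partial bijection $\beta$ to some $\beta'\supseteq\beta$ exactly when a fresh location is created. For sub-expression evaluation I would invoke \Cref{thm:expr-rel-ap} so that the values stored or read stay related.

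The structural cases --- \skips, sequencing, $\src{\ifte{e}{s_t}{s_e}}$, $\src{\letin{x}{e}{s}}$, function call/return, endorsement and fork --- either leave the heap untouched or touch only the control stack and the soup, so every clause of $\relatetbeta$ transfers immediately from the induction hypothesis. The two genuinely load-bearing cases are the heap-modifying ones, allocation (\Cref{tr:ts-new}) and assignment (\Cref{tr:ts-ass}), each split on whether the annotated type $\src{\tau}$ equals $\UNS$ or is trusted ($\src{\tau}\nvdash\circ$). For a trusted allocation the compiler emits $\src{\letatom{x}{e}{s}}$ (\Cref{tr:compap-new}), whose target rule \Cref{tr:et-atom} atomically creates a new cell under a fresh capability $\trg{k}\notin\dom{\trg{H}}$; here I would set $\beta' = \beta\cup\{(\src{\ell},\trg{n+1},\trg{k})\}$, observe that clause~(1) of \Cref{tr:state-rel-proof} holds because $\trg{k}$ is fresh and hence occurs in no attacker process, and discharge clause~(2a) using the value relatedness supplied by \Cref{thm:expr-rel-ap}. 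For a trusted assignment the write passes through the high capability (\Cref{tr:et-ac-k}), so it modifies only a high cell, and source well-typedness keeps the new value related, re-establishing clause~(2a) while leaving the low part untouched. For an untrusted allocation/assignment the compiler uses a plain cell tagged $\trg{\bot}$ together with the dummy capability $\trg{0}$ (\Cref{tr:et-ac-t}), producing or updating a low location whose contents must be shown to contain no high capability --- this is clause~(2b).

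The hard part, and the point where the source type system is indispensable, is precisely clause~(2b): showing that a value the source types at $\UNS$ compiles to a target value carrying no high capability $\trg{k'}$, so that storing it into a low location cannot smuggle a trusted capability into attacker-reachable memory. I would isolate this as an auxiliary ``capability-cleanliness'' lemma: if $\src{\Delta,\Gamma}\vdash\src{e}:\UNS$, then every reachable subvalue of its compilation carries only the dummy capability $\trg{0}$. This follows from the restricted shape of the $\vdash\circ$ judgment (admitting only $\src{\Bools}$, $\src{\Nats}$, $\src{\UNS\times\UNS}$ and $\src{\Refs{\UNS}}$) and the coercion rule \Cref{tr:ts-coe}, combined with the fact that the compiler maps $\src{\Refs{\UNS}}$ locations to pairs whose capability field is $\trg{0}$; together with \Cref{thm:atk-reach-uns-loc} this guarantees trusted references never acquire an $\UNS$ type. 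With that lemma in hand, the remaining clauses of $\relatetbeta$ transfer verbatim. A secondary subtlety I would flag explicitly is the step-count mismatch: the source takes a single $\xtos{\lambda}$ step while the target takes several $\Xtot{\lambda}$ steps (e.g. assignment expands to two $\trg{let}$ bindings plus a capability-guarded write), and intermediate target states need not satisfy $\relatetbeta$; since the statement only relates the initial and final states, I would verify the relation solely at the terminal target state reached by $\Xtot{\lambda}$, which is exactly what is required.
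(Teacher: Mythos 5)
Your proposal follows essentially the same route as the paper's proof: an induction on the typing derivation of \src{\Pi} that mirrors \Thmref{thm:gen-cc}, with the structural cases transferring the relation directly, and the trusted-allocation case as the centrepiece---extending $\beta$ with the fresh triple $(\src{\ell},\trg{n},\trg{k})$, using freshness of \trg{k} to discharge clause~(1) of \Cref{tr:state-rel-proof}, and using \Thmref{thm:expr-rel-ap} for clause~(2a). The paper's own proof is terser (it details only the \src{new} case and declares the rest ``analogous''), so your explicit treatment of assignment and of clause~(2b) is a faithful elaboration of the same argument rather than a different one; note only that the paper keeps $\beta'=\beta$ in the \UNS-allocation case, since triples are added to $\beta$ for trusted locations only, though adding a $\trg{\bot}$-tagged triple would be harmless.

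One caution about your auxiliary ``capability-cleanliness'' lemma: as literally stated (every reachable subvalue of the compilation of an \UNS-typed expression carries only the dummy capability \trg{0}) it is false at runtime. Values of type \UNS may originate in the context---for instance a \src{\Refs{\UNS}} endorsed via \Cref{tr:compap-end} is at runtime a pair \trg{\pair{n,k}} whose second component is an attacker-created capability, not \trg{0}---and compiled code may legitimately store such values into low locations. The invariant you actually need, and the one clause~(2b) of \Cref{tr:state-rel-proof} and the companion attacker lemma (\Thmref{thm:att-act-pres-relatet}) enforce, is the weaker statement that \UNS-typed values contain no \emph{high} capability, i.e.\ no capability protecting a trusted location; this does follow, as you argue, from the restricted shape of $\vdash\circ$ together with \Cref{tr:ts-coe}, since trusted references are never subtypes of \UNS. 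With the lemma weakened in this way your proof goes through and coincides with the paper's.
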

\begin{proof}
	Trivial induction on the derivation of \src{\Pi}, analogous to \Thmref{thm:gen-cc}.

	\begin{description}
		\item[\Cref{tr:ts-new}] 
		There are 2 cases, they are analogous.
			\begin{description}
				\item[$\src{\tau}=\UNS$] 

					By HP

					$\src{\Gamma}\vdash \src{e} : \src{\tau}$

					$\src{H \triangleright e\redtos v}$

					$\src{C,H \triangleright\letnewty{x}{e}{\tau}{s}\rho} \xtos{\epsilon} \src{C,H;\ell\mapsto v:\tau \triangleright s\subs{\ell}{x}\rho}$

					By \Thmref{thm:expr-rel-ap} we have:

					IHR1 $\trg{H\triangleright \compap{\src{\Gamma}\vdash \src{e} : \src{\tau}}\compap{\src{\rho}}} \redtot \trg{ \compap{\src{\Gamma}\vdash \src{v} : \src{\tau}}}$

					By \Cref{tr:compap-new} we get

					\trg{
					\begin{aligned}
						&
						\letnewt{\trg{xo}~}{\compap{\src{\Gamma}\vdash \src{e} : \src{\tau}} 
						\\
						&\
						}{ 
							\letint{\trg{x}~}{~\trg{\pair{xo,0}}
							\\
							&\ \ 
							}{
								\compap{\src{C,\Gamma;x:\Refs{\tau}}\vdash \src{s}} 
							}
						}
					\end{aligned}
					}

					So:
					\begin{align*}
						&
						\trg{C,H\triangleright 
							\begin{aligned}[t]
								&
								\letnewt{\trg{xo}~}{\compap{\src{\Gamma}\vdash \src{e} : \src{\tau}} 
								\\
								&\
								}{ 
									\letint{\trg{x}~}{~\trg{\pair{xo,0}}
									\\
									&\ \ 
									}{
										\compap{\src{C,\Gamma;x:\Refs{\tau}}\vdash \src{s}} 
									}
								}
							\end{aligned}
						}
						\\
						\xtot{\epsilon}
						&
						\trg{C,H ; n\mapsto \compap{\src{\Gamma}\vdash \src{v} : \src{\tau}}:\bot \triangleright 
							\begin{aligned}[t]
								&
								\letint{\trg{x}~}{~\trg{\pair{n,0}}
								\\
								&\ 
								}{
									\compap{\src{C,\Gamma;x:\Refs{\tau}}\vdash \src{s}} 
								}
							\end{aligned}
						}
						\\
						\xtot{\epsilon}
						&
						\trg{C,H ; n\mapsto \compap{\src{\Gamma}\vdash \src{v} : \src{\tau}}:\bot \triangleright 
							\compap{\src{C,\Gamma;x:\Refs{\tau}}\vdash \src{s}}\subt{\trg{\pair{n,0}}}{x} 
						}
					\end{align*}
						
					For $\beta'=\beta$, this case holds.

				\item[else] 

					The other case holds follows the same reasoning but 

					for $\beta'=\beta\cup(\src{\ell},\trg{n,k})$ and for \trg{H'}=\trg{H;n\mapsto \compap{\src{C,\Gamma}\vdash \src{v} : \src{\tau}}:k;k}.

					We need to show that this preserves \Cref{tr:state-rel-proof}, specifically it preserves point $(2a)$:
					$\src{\ell}\relatebeta\trg{\pair{n,k}} \text{ and } \src{\ell\mapsto v:\tau}\in\src{H} \text{ and } \src{v}\relatebeta\trg{v} $

					These follow all from the observation above and by \Thmref{thm:expr-rel-ap}.
			\end{description}
	\end{description}
\end{proof}

\BREAK

\begin{lemma}[\LP Attacker actions preserve $\relatet$]\label{thm:att-act-pres-relatet}
	\begin{align*}
		&
		\forall ...
		\\
		\text{ if }
		&\
		\src{C, H\triangleright \Pi\rho} \xtos{\lambda} \src{C, H'\triangleright \Pi'\rho'} 
		\\
		&\
		\trg{C,H\triangleright \Pi\rho} \xtot{\lambda} \trg{C,H'\triangleright \Pi'\rho'}
		\\
		&\
		\src{C, H \triangleright \Pi\rho} \relatetbeta \trg{C,H\triangleright \Pi\rho}
		\\
		&\
		\src{C}\vdashatts\src{\Pi\rho}\xtos{\lambda}\src{\Pi'\rho'}
		\\
		&\
		\trg{C}\vdashattt\trg{\Pi\rho}\xtot{\lambda}\trg{\Pi'\rho'}
		\\
		\text{ then }
		&\
		\src{C, H'\triangleright \Pi'\rho'} \relatetbeta \trg{C, H'\triangleright \Pi'\rho'}
	\end{align*}
\end{lemma}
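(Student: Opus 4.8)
The plan is to proceed by case analysis on the attacker reduction, showing that each clause of the state relation \Cref{tr:state-rel-proof} survives the step. Since the step happens in attacker code (we are given $\src{C}\vdashatts\src{\Pi}\xtos{\lambda}\src{\Pi'}$ and $\trg{C}\vdashattt\trg{\Pi}\xtot{\lambda}\trg{\Pi'}$), I would first apply \Cref{tr:es-par} and \Cref{tr:et-par} to isolate the single attacker process that moves, and then split on the statement it executes. The backbone of the argument, which I would establish once and then reuse, is that an attacker can never touch a \emph{high} location: on the target this is because writing a $\trg{k}$-protected cell requires supplying $\trg{k}$ (\Cref{tr:et-ac-k}), yet by point~(1) of the relation no high capability occurs free in any attacker process; on the source this is \Thmref{thm:atk-reach-uns-loc}, which says attacker-reachable locations all have type $\UNS$ and hence lie outside $\sech{\src{H}}$. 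Consequently every high location is left unchanged by the step, so point~(2a) is preserved.

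The second ingredient I would need is that attacker-evaluated expressions never produce a high capability. This I would prove by a short induction on expression evaluation: the only values available to the attacker are those it builds from constants, from its parameter (which contains no high capability, again by point~(1)), and from dereferences of locations it can reach; the latter are low locations, whose contents carry no high capability by point~(2b). With this in hand the heap-modifying cases go through.

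For an assignment (\Cref{tr:et-ac-k}) to a cell $\trg{n}$, the backbone forces $\trg{n}$ to be low (by \Thmref{thm:targ-loc-h-or-l} it is high or low, and it cannot be high), so the high portion of the heap is untouched and point~(2a) holds unchanged; the freshly stored value is attacker-computed, hence not a high capability, which preserves point~(2b). For allocation (\Cref{tr:et-nu}) a cell tagged $\trgb{\bot}$ is created, while the matching source allocation is forced to type $\UNS$ (\Cref{tr:tsu-new}) and thus lands outside $\sech{\src{H}}$; the new location is therefore low, storing an attacker value that is not a high capability, so point~(2b) is maintained and points~(1) and~(2a) are unaffected. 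For the hide and atomic-hide constructs (\Cref{tr:et-hi}) the attacker re-tags a low cell with a \emph{fresh} capability $\trg{k}$; since $\trg{k}$ is related to no secure source location it is not a high capability, so the re-tagged cell stays low and the classification of every existing high location is unchanged. The purely control-flow and arithmetic cases (sequencing, conditionals, let-bindings, calls, returns, and fork) leave the heap unchanged, so only point~(1) needs checking; for fork (\Cref{tr:et-fork}, \Cref{tr:es-fork}) it suffices that the spawned body is a subterm of attacker code and hence still free of high capabilities.

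The main obstacle will be the assignment case, because it is where both halves of the backbone must be combined: ruling out writes to high locations (the capability argument on the target together with the typing argument on the source) and ruling out the storage of a high capability into a low location (the auxiliary ``attacker values are capability-free'' invariant). The hide and atomic-hide case is the other delicate point, since one must argue that a freshly minted capability can never coincide with a high capability; this hinges on freshness together with the fact that only compiled component code ever associates a high capability with a secure location.
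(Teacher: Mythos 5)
Your proposal is correct and proves exactly the facts the paper's proof rests on, but it packages them differently. The paper argues by \emph{contradiction over the clauses} of \Cref{tr:state-rel-proof}: it supposes the relation fails after the step, asks which of points $(1)$, $(2a)$, $(2b)$ is violated, and refutes each in turn --- $(1)$ directly from the hypotheses, $(2a)$ because changing a high location requires the protecting capability \trg{k}, contradicting $(1)$ and $(2b)$, and $(2b)$ either via the high/low dichotomy of \Cref{thm:targ-loc-h-or-l} or by excluding forgery (``inspection of the semantics rules'') and excluding an update that stores a high \trg{k'} (since \trg{k'} occurs neither in attacker code nor, by an induction on the heap, in any low location). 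You instead argue \emph{directly}, by case analysis on the operational rule the attacker fires, after factoring out two reusable invariants: attackers cannot touch high locations, and attacker-evaluated expressions never yield a high capability. These sub-lemmas are precisely the contrapositives of the paper's $(2a)$- and $(2b)$-refutations, and your induction on expression evaluation does the work of the paper's induction on the heap. On the source side the paper routes through \Cref{thm:epsi-red-ok} (via \Cref{thm:epsi-atk-red-pres-rel}) to conclude $\monh{\src{H},\src{\Delta}}=\monh{\src{H'},\src{\Delta}}$, whereas you invoke \Thmref{thm:atk-reach-uns-loc} directly --- the very lemma those helpers rest on, so nothing is lost. What your route buys is explicitness: the assignment, allocation, hide, and fork cases that the paper compresses into a phrase are spelled out, including the freshness argument that a newly minted capability cannot be high because $\beta$ is held fixed across attacker steps and only relates compiled-code allocations to secure source locations (correctly matching the lemma statement, whose conclusion reuses the same $\beta$). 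What the paper's route buys is brevity and a proof that need not be revisited rule by rule if attacker constructs are added, since it quantifies over possible violations rather than over reduction rules.
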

\begin{proof}

	For the source reductions we can use \Thmref{thm:epsi-red-ok} to know that $\monh{\src{H}}=\monh{\src{H'}}$, so they don't change the interested bits of the $\relatetbeta$.

	Suppose this does not hold by contradiction, %
		there can be three clauses that do not hold based on \Cref{tr:state-rel-proof}:
		\begin{itemize}
			\item violation of $(1)$: $\exists \trg{\pi}\in\trg{\Pi}\ldotp\src{C}\vdash\trg{\pi}:\trg{attacker} \text{ and } \trg{k}\in\fun{fv}{\trg{\pi}}$.

			By HP5 this is a contradiction.

			\item violation of $(2a)$: $\trg{n\mapsto v:k} \in \trg{H} \text{ and } \src{\ell}\relatebeta\trg{\pair{n,k}} \text{ and } \src{\ell\mapsto v:\tau}\in\src{H} \text{ and } \lnot(\src{v}\relatebeta\trg{v})$

			To change this value the attacker needs \trg{k} which contradicts points $(1)$ and $(2b)$.

			\item violation of $(2b)$: either of these:
				\begin{itemize}
					\item $\src{H},\trg{H}\nvdash\lowloc{\trg{n'}}$
					
					Since \Cref{tr:highloc} does not hold, by \Cref{thm:targ-loc-h-or-l} this is a contradiction.

					\item $\trg{v}=\trg{k'}$ for $\src{H},\trg{H}\vdash\highcap{\trg{k'}}$

					This can follow from another two cases
					\begin{itemize}
						\item forgery of \trg{k;}: an ispection of the semantics rules contradicts this
						\item update of a location to \trg{k'}: however \trg{k'} is not in the code (contradicts point $(1)$) and by induction on the heap \trg{H} we have that \trg{k'} is stored in no other location, so this is also a contradiction.
					\end{itemize}
				\end{itemize}
		\end{itemize}
\end{proof}

\BREAK

\subsection{Proofs for the Non-Atomic Variant of \LA (\Cref{sec:nonatom-new-hide})}\label{sec:proofs-for-nonat}
The only proof that needs changing is that for \Cref{thm:comp-act-pres-relatet}: there is this new case.

For this we weaken $\relatetbeta$ and define $\altrelatetbeta$ as follows:

\mytoprule{\src{\Omega}\altrelatetbeta\trg{\Omega}}
\begin{center}
	\typerule{Non Atomic State Relation}{
		\src{\Omega}\relatetbeta\trg{\Omega}
	}{
		\src{\Omega}\altrelatetbeta\trg{\Omega}
	}{state-nonat}
	\typerule{Non Atomic State Relation -stuck}{
		\src{\Omega}=\src{C, H \triangleright \Pi}
		&
		\src{C}=\src{\Delta,\OB{F},\OB{I}}
		&
		\trg{\Omega}=\trg{C, H\triangleright \Pi}
		\\
		\exists\trg{\pi}\in\trg{\Pi}\ldotp \trg{C}\nvdash\trg{\pi} : \trg{attacker}
		\\
		\trg{\pi} = \trg{\proc{\hide{n};s}{\OB{f};f}}
		&
		\trg{C, H\triangleright \pi}\stut
		&
		\exists\src{f}\in\dom{\src{\OB{F}}}\ldotp \src{f}\relatebeta\trg{f}
		\\
		\forall \src{\ell}\ldotp \src{\ell}\in \dom{\vdash\sech{\src{H}}}
		&
		\trg{n \mapsto v; k}\in\trg{H}
		&
		\src{\ell}\notaltrelatetbeta\trg{\pair{n,k}}
		&
		\src{\ell}\altrelatetbeta\trg{\pair{n,0}}
	}{
		\src{\Omega}\altrelatetbeta\trg{\Omega}
	}{state-nonat-st}
\end{center}
\botrule

Two states are now related if:
\begin{itemize}
	\item either they are related by $\relatetbeta$
	\item or the red process is stuck on a \trg{\hide{n}} where \trg{n \mapsto v; k} but $\src{\ell}\sim\trg{\pair{n,k}}$ does not hold for a \src{\ell} that is secure, and we have that $\src{\ell}\sim\trg{\pair{n,0}}$ (as this was after the \trg{new}).
	And the \trg{hide } on which the process is stuck is not in attacker code.
\end{itemize}
Having this in proofs would not cause problems because now all proofs have an initial case analysis whether the state is stuck or not, but because it steps it's not stuck.

This relation only changes the second case of the proof of \Cref{thm:comp-act-pres-relatet} for \Cref{tr:compap-new-nonat} as follows:

\begin{proof}
	\src{\newty{\cdot}{\cdot}} is implemented as defined in \Cref{tr:compap-new-nonat}.

	\begin{description}
			\item[$\src{\tau}\neq\UNS$] 

				By HP

				$\src{\Gamma}\vdash \src{e} : \src{\tau}$

				$\src{H \triangleright e\redtos v}$

				$\src{C,H \triangleright\letnewty{x}{e}{\tau}{s}\rho} \xtos{\epsilon} \src{C,H;\ell\mapsto v:\tau \triangleright s\subs{\ell}{x}\rho}$

				By \Cref{thm:expr-rel-ap} we have:

				IHR1 $\trg{H\triangleright \compap{\src{\Gamma}\vdash \src{e} : \src{\tau}}\compap{\src{\rho}}} \redtot \trg{ \compap{\src{\Gamma}\vdash \src{v} : \src{\tau}}}$

				By \Cref{tr:compap-new-nonat} we get

				\trg{
				\begin{aligned}[t]
					&
					\letnewt{\trg{x}~}{\trg{0}}{
					\\
					&\
						\lethide{\trg{xk}~}{\trg{x}}{
						\\
						&\ \ 
							\letint{\trg{xc}~}{~ \compap{\src{\Delta,\Gamma}\vdash \src{e} : \src{\tau}} 
							}{
							\\
							&\ \ \ 
								\trg{x := xc \with{xk};}
							\\
							&\ \ \ 
								\compap{\src{C,\Delta,\Gamma}\vdash\src{s}}
							}
						}
					}
				\end{aligned}
				}

				So:
				\begin{align*}
					&
					\trg{C,H\triangleright 
						\begin{aligned}[t]
							&
							\letnewt{\trg{x}~}{\trg{0}}{
							\\
							&\
								\lethide{\trg{xk}~}{\trg{x}}{
								\\
								&\ \ 
									\letint{\trg{xc}~}{~ \compap{\src{\Delta,\Gamma}\vdash \src{e} : \src{\tau}} 
									}{
									\\
									&\ \ \ 
										\trg{x := xc \with{xk};}
									\\
									&\ \ \ 
										\compap{\src{C,\Delta,\Gamma}\vdash\src{s}}
									}
								}
							}
						\end{aligned}
					}
					\\
					\xtot{\epsilon}
					&
					\trg{C,H, n\mapsto0:\bot\triangleright 
						\begin{aligned}[t]
							&
								\lethide{\trg{xk}~}{\trg{n}}{
								\\
								&\ 
									\letint{\trg{xc}~}{~ \compap{\src{\Delta,\Gamma}\vdash \src{e} : \src{\tau}} 
									}{
									\\
									&\ \ 
										\trg{x := xc \with{xk};}
									\\
									&\ \ \ 
										\compap{\src{C,\Delta,\Gamma}\vdash\src{s}}
									}
								}
						\end{aligned}
					}
				\end{align*}

				And $\beta'=\beta\cup(\src{\ell},\trg{n,0})$.

				Now there are two cases:
				\begin{itemize}
					\item 
					A concurrent attacker reduction performs \trg{\hide{n}}, so the state changes.

					\begin{align*}
						&
						\trg{C,H, n\mapsto0:k;k\triangleright 
							\begin{aligned}[t]
								&
									\lethide{\trg{xk}~}{\trg{n}}{
									\\
									&\ 
										\letint{\trg{xc}~}{~ \compap{\src{C,\Gamma}\vdash \src{e} : \src{\tau}} 
										}{
										\\
										&\ \ 
											\trg{x := xc \with{xk};}
										\\
										&\ \ \ 
											\compap{\src{C,\Delta,\Gamma}\vdash\src{s}}
										}
									}
							\end{aligned}
						}
					\end{align*}

					At this stage the state is stuck: \Cref{tr:et-hi} does not apply.

					Also, we have that this holds by the new $\beta'$: $(\src{\ell}\altrelatet_{\beta'}\trg{\pair{n,0}})$

					And so this does not hold: $(\src{\ell}\altrelatet_{\beta'}\trg{\pair{n,k}})$

					As the stuck statement is not in attacker code, we can use \Cref{tr:state-nonat-st} to conclude.

					\item  The attacker does not.
					In this case the proof continues as in \Cref{thm:comp-act-pres-relatet}.
				\end{itemize}
		\end{description}
\end{proof}

\BREAK

\subsection{Proof of \Thmref{thm:comp-ai-cc}}

\begin{proof}
	Analogous to that of \Cref{sec:proof-compap-cc}.
\end{proof}

\subsection{Proof of \Thmref{thm:comp-ai-rsc}}\label{sec:proof-comp-ai-rsc}

\begin{proof}
	Given:

	HP1: $\src{M}\vdash\src{C} : \src{rs}$

	HPM: $\src{M}\relatephi\oth{M}$

	We need to prove:

	TP1: $\oth{M}\vdash \compai{\src{C}} : \oth{rs}$
	
	We unfold the definitions of \com{rs} and obtain:

	$\forall\src{A}. \src{M}\vdash\src{A}:\src{attacker}, \vdash\src{A\hole{C}}:\src{whole}$

	HPE1: if $\src{\SInits{A\hole{C}} \Xtos{\OB{\alpha}} \_}$ then $\src{M}\vdash\strip{\src{\OB{\alpha}}}$

	$\forall\oth{A}. \oth{M}\vdash\oth{A}:\oth{attacker}, \vdash\oth{A\hole{\compap{C}}}:\oth{whole}$

	THE1: if HPRT $\oth{\SInito{A\hole{\compap{C}}} \Xtoo{\OB{\alpha}} \_}$ then THE1 $\oth{M}\vdash\strip{\src{\OB{\alpha}}}$

	By definition of the compiler we have that

	HPISR: $\SInits{A\hole{C}}\relatetphi\SInito{A\hole{\compap{C}}}$

	for $\varphi = \dom{\src{\Delta}},\oth{H_0}$ such that $\src{M} = \src{(\set{\sigma},\monred,\sigma_0,\Delta,\sigma_c)}$ and $\oth{M} = \oth{(\set{\sigma},\monred,\sigma_0,H_0,\sigma_c)}$

	By \strip{\oth{\OB{\alpha}}} and \Cref{tr:mo-valid} we get a \oth{\OB{H}} to induce on.

	\begin{description}
		\item[Base case:] this holds by \Cref{tr:mo-t-s-b}.
		\item[Inductive case:] 

		By \Cref{tr:mo-t-s}, $\oth{M;\OB{H}\monred M''}$ holds by IH, we need to prove \oth{M'';H\monred M'}.

		By \Cref{tr:mo-t-2} e need to prove that THMR: $\exists \oth{\sigma'}. (\oth{\sigma}, \monh{\oth{H},\oth{H_0}}, \oth{\sigma'}) \in \oth{\monred}$.

		By HPISR and with applications of \Cref{thm:comp-act-pres-relatet-o,thm:att-act-pres-relatet-o} we know that states are always related with $\relatetphi$ during reduction.

		So by \Thmref{thm:relatet-impl-high-heap-rel-2} we know that HPHH $\monh{\src{H},\src{\Delta}}\relatephi\monh{\oth{H},\oth{H_0}}$, for \src{H}, \oth{H} being the last heaps in the reduction.

		By HPM and \Cref{tr:mon-rel-ap} (adjusted for \LI) we have $\varphi_0,\src{\Delta}\vdash\oth{M}$.

		By this and \Cref{tr:ok-mon} (adjusted for \LI) we have that 

		\noindent HPHR $\forall \monh{\src{H},\src{\Delta}} \relatephi \monh{\oth{H},\oth{H_0}}. $ if $\vdash \src{H}: \src{\Delta} $ then 

		\noindent $\exists \oth{\sigma'}. (\oth{\sigma}, \monh{\oth{H},\oth{H_0}}, \oth{\sigma'}) \in \oth{\monred}$ so by HPHH we can instantiate this with \src{H} and \oth{H}.

		By \Thmref{thm:src-ty-impl-safe} applied to HPE1, as \compap{\cdot} operates on well-typed components, we know that HPMR: $\src{M}\vdash\strip{\src{\OB{\alpha}}}$ for all \src{\OB{\alpha}}.

		So by \Cref{tr:mts-s} with HPMT we get HPHD $\vdash \src{H}: \src{\Delta}$ for the \src{H} above.

		By HPHD with HPHR we get THMR $\exists \oth{\sigma'}. (\oth{\sigma}, \monh{\oth{H},\oth{H_0}}, \oth{\sigma'}) \in \oth{\monred}$, so this case holds.
	\end{description}
\end{proof}

\BREAK

\begin{lemma}[$\relatetphi$ implies relatedness of the high heaps]\label{thm:relatet-impl-high-heap-rel-2}
	\begin{align*}
		\text{if }
		&\
		\src{\Omega}=\src{\Delta ; \OB{F},\OB{F'} ; \OB{I} ; H \triangleright \Pi}
		\\
		&
		\oth{\Omega}=\oth{H_0 ; \OB{F},\compai{\OB{F'}} ; \OB{I} ; \OB{E} ; H \triangleright \Pi}
		\\
		&
		\src{\Omega}\relatetphi\oth{\Omega}
		\\
		\text{then }
		&
		\monh{\src{H},\src{\Delta}}\relatephi\monh{\oth{H},\oth{H_0}}
	\end{align*}
\end{lemma}
\begin{proof}
	By \Cref{tr:state-rel-proof-o}.
\end{proof}

\BREAK

\begin{lemma}[\LA-compiled actions preserve $\relatetphi$]\label{thm:comp-act-pres-relatet-o}
	\begin{align*}
		&
		\forall ...
		\\
		\text{ if }
		&\
		\src{C, H\triangleright \Pi\rho} \xtos{\lambda} \src{C, H'\triangleright \Pi'\rho'} 
		\\
		&\
		\oth{C,H\triangleright \compai{\src{C;\Gamma}\vdash\src{\Pi}}\compai{\src{\rho}}} \Xtoo{\lambda} \oth{C, H' \triangleright \compai{\src{C;\Gamma}\vdash\src{\Pi'}}\compai{\src{\rho'}}}
		\\
		&\
		\src{C, H \triangleright \Pi\rho} \relatetphi \oth{C, H \triangleright \compai{\src{C;\Gamma}\vdash\src{\Pi}}\rho}
		\\
		&\
		\src{C;\Gamma}\vdash\src{\Pi}
		\\
		\text{ then }
		&\
		\src{C, H'\triangleright \Pi'\rho'} \relatetphi \oth{C, H' \triangleright \compai{\src{C;\Gamma}\vdash\src{\Pi'}}\compai{\src{\rho'}}}
	\end{align*}
\end{lemma}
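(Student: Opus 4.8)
The final statement, \Cref{thm:comp-act-pres-relatet-o}, asserts that a single step of compiled (non-attacker) code preserves the lenient cross-language relation $\relatetphi$ for the enclave-based compiler $\compai{\cdot}$. The plan is to mirror exactly the proof of its capability-based counterpart, \Cref{thm:comp-act-pres-relatet}, proceeding by induction on the typing derivation of the process soup $\src{\Pi}$ and then on the typing derivation of the reducing statement $\src{s}$. Most cases are routine and follow the structure of the compiler-correctness argument (\Cref{thm:gen-cc}, adapted to $\compai{\cdot}$), since the only part of $\relatetphi$ that imposes nontrivial conditions is the clause on high (enclave) locations in \Cref{tr:state-rel-proof-o}. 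For any statement that does not allocate a trusted location or write to one, the monitored heap is untouched (appealing to the $\LI$ analogue of \Cref{thm:epsi-red-ok}), so the relation is preserved trivially by the induction hypothesis.

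First I would dispatch all the cases where the relevant clause of $\relatetphi$ is vacuously maintained: \skips, sequencing, \src{let}, conditionals, function calls, forking, coercion, and endorsement. Each of these either leaves the heap unchanged or only touches low (non-enclave) locations, and each reduces to the induction hypothesis together with the relevant evaluation/compilation rule, exactly as in \Cref{thm:comp-act-pres-relatet}. The expression-level reductions are handled by invoking the $\compai{\cdot}$ analogue of \Thmref{thm:expr-rel-ap} (expressions compiled with $\compai{\cdot}$ evaluate to related values), which holds because expression reduction is atomic and the compiler performs the same structural pass.

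The one genuinely interesting case is allocation, \Cref{tr:ts-new}, compiled by \Cref{tr:compai-new}. When $\src{\tau}=\UNS$, the location is allocated outside the enclave via \oth{new}, a fresh non-negative address $\oth{n}$ is produced, and I would extend the bijection to $\varphi' = \varphi \cup (\src{\ell},\oth{n})$ with $\oth{n}\geq\oth{0}$; this is a low location by \Cref{tr:lowloc-o}, so no high-location obligation arises. When $\src{\tau}\neq\UNS$, the compiler emits \oth{newiso}, which by \Cref{tr:eo-hi} allocates a fresh negative address $\oth{n}<\oth{0}$ inside the enclave; I would set $\varphi' = \varphi \cup (\src{\ell},\oth{n})$ and then verify the sole clause of \Cref{tr:state-rel-proof-o}, namely that $\src{\ell}\relatephi\oth{n}$, that $\oth{n\mapsto v}\in\oth{H}$, that $\src{\ell\mapsto v:\tau}\in\src{H}$, and that the stored values are related $\src{v}\relatephi\oth{v}$. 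All four follow immediately from the evaluation rules, the fresh-address choice, and the $\compai{\cdot}$ analogue of \Thmref{thm:expr-rel-ap} applied to the initialising expression.

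The main obstacle, such as it is, is lighter here than in the capability case: because enclave membership is determined purely by the sign of the address rather than by an unforgeable capability token, there is no need to track which key protects which location, nor to reason about key leakage into untrusted heap cells or attacker code. Consequently the bookkeeping that made the capability proof delicate (points $(1)$, $(2a)$ and $(2b)$ of \Cref{tr:state-rel-proof}) collapses to the single synchronisation condition on high locations, and correctness of the trusted/untrusted partition reduces to the observation that $\compai{\cdot}$ places exactly the non-\UNS locations at negative addresses. The argument that this partition is respected during compiled reduction is precisely the content I would import from \Thmref{thm:gen-cc}; I expect the careful step to be simply confirming that the freshly allocated enclave address is indeed negative (guaranteed by \Cref{tr:whole-o}, which forces $\oth{H_0}$ to lie entirely below $\oth{0}$, together with \Cref{tr:eo-hi} decrementing the minimum address) so that \Cref{tr:highloc-o} applies and the invariant is genuinely re-established rather than merely claimed.
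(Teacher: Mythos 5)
Your proposal is correct and matches the paper's intent exactly: the paper proves this lemma in one line as a ``trivial induction on the derivation of $\src{\Pi}$, analogous to \Cref{thm:gen-cc} and \Cref{thm:comp-act-pres-relatet}'', and your fleshed-out argument --- induction on the typing of $\src{\Pi}$ and $\src{s}$, routine cases via the expression-relatedness lemma, and the allocation case re-establishing the single high-location clause of \Cref{tr:state-rel-proof-o} by extending $\varphi$ with the fresh (negative) enclave address --- is precisely that analogous proof. Your closing observation that the capability-era obligations $(1)$, $(2a)$, $(2b)$ collapse to one synchronisation condition because enclave membership is positional (address sign) rather than token-based is also the reason the paper deems the proof trivial.
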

\begin{proof}
	Trivial induction on the derivation of \src{\Pi}, analogous to \Thmref{thm:gen-cc} and \Thmref{thm:comp-act-pres-relatet}.
\end{proof}

\BREAK

\begin{lemma}[\LP Attacker actions preserve $\relatet$]\label{thm:att-act-pres-relatet-o}
	\begin{align*}
		&
		\forall ...
		\\
		\text{ if }
		&\
		\src{C, H\triangleright \Pi\rho} \xtos{\lambda} \src{C, H'\triangleright \Pi'\rho'} 
		\\
		&\
		\oth{C,H\triangleright \Pi\rho} \xtoo{\lambda} \oth{C,H'\triangleright \Pi'\rho'}
		\\
		&\
		\src{C, H \triangleright \Pi\rho} \relatetphi \oth{C,H\triangleright \Pi\rho}
		\\
		&\
		\src{C}\vdashatts\src{\Pi\rho}\xtos{\lambda}\src{\Pi'\rho'}
		\\
		&\
		\oth{C}\vdashatto\oth{\Pi\rho}\xtoo{\lambda}\oth{\Pi'\rho'}
		\\
		\text{ then }
		&\
		\src{C, H'\triangleright \Pi'\rho'} \relatetphi \oth{C, H'\triangleright \Pi'\rho'}
	\end{align*}
\end{lemma}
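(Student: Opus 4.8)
The plan is to follow the structure of the \LP counterpart \Thmref{thm:att-act-pres-relatet}, while exploiting the fact that the coarse-grained relation $\relatetphi$ of \Cref{tr:state-rel-proof-o} is much simpler than $\relatetbeta$: it has a single clause, requiring that every \emph{high} target location $\oth{n}$ (those with $\oth{n}<\oth{0}$, i.e.\ residing inside the enclave, related by $\relatephi$ to some $\src{\ell}\in\sech{\src{H}}$) stores a value related to the one at $\src{\ell}$. I would split the analysis into the source step $\src{C, H\triangleright \Pi\rho} \xtos{\lambda} \src{C, H'\triangleright \Pi'\rho'}$ and the target step $\oth{C,H\triangleright \Pi\rho} \xtoo{\lambda} \oth{C,H'\triangleright \Pi'\rho'}$, show that neither step touches the data that $\relatetphi$ constrains, and then conclude by contradiction.

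For the source step, the reducing process is an attacker process (\Cref{def:src-att}), so the executing statement is typed at \UNS. By \Thmref{thm:atk-reach-uns-loc} such code reaches only locations with $\src{\tau}\vdash\circ$; it can therefore neither read nor assign any $\src{\ell}\in\sech{\src{H}}$ (whose locations satisfy $\src{\tau}\nvdash\circ$), and by the \UNS-typing discipline of \Cref{sec:un-ty} it allocates only \UNS locations. Hence $\sech{\src{H}}$ is unchanged and the value stored at each secure source location is preserved; the analogous statement for the monitored fragment is recorded by \Thmref{thm:epsi-atk-red-pres-rel}, which gives $\monh{\src{H},\src{\Delta}}=\monh{\src{H'},\src{\Delta}}$.

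For the target step, $\oth{C}\vdashatto\oth{A}$ forces the reducing function $\oth{f}$ to lie outside the enclave, so $\oth{C}\nvdash\oth{f}:\oth{prog}$. Every rule that reads, writes, or isolates a negative-address (enclave) location, namely \Cref{tr:eo-de-k}, \Cref{tr:eo-ac-k} and \Cref{tr:eo-hi}, carries the side condition $\oth{C}\vdash\oth{f}:\oth{prog}$ and is thus unavailable to the attacker, while ordinary allocation \Cref{tr:eo-nu} only produces non-negative addresses. By \Thmref{thm:oth-loc-h-or-l} every location is unambiguously high or low, so the set of high locations and all the values they hold are frozen across the attacker step.

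I would then assemble these facts by contradiction, exactly as in \Thmref{thm:att-act-pres-relatet}: the only way to falsify \Cref{tr:state-rel-proof-o} after the step is to exhibit a high location $\oth{n}$ related to $\src{\ell}$ whose value correspondence fails, which requires mutating $\oth{n}$ in $\oth{H}$ (impossible, since $\oth{n}<\oth{0}$ and the attacker has no enclave access) or mutating $\src{\ell}$ in $\src{H}$ (impossible, since $\src{\ell}\in\sech{\src{H}}$ and the attacker reaches only \UNS locations); the component part and the clause $\src{\Delta}\vdash_\varphi\oth{H_0}$ are inert. The only point needing genuine care is the enclave-isolation invariant, i.e.\ that no attacker step in \LI can create or mutate a negative-address location; once the $\oth{prog}$ side conditions are traced through the relevant semantic rules this is immediate, and the remainder is a routine case analysis that is in fact shorter than the capability-based argument, since there are no forgeable capabilities to track.
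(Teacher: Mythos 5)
Your proposal is correct and follows essentially the same route as the paper's own proof: the source attacker step leaves the monitored heap unchanged (you cite \Thmref{thm:epsi-atk-red-pres-rel}, the paper the slightly more general \Thmref{thm:epsi-red-ok}), and the target side is closed off by contradiction because every rule that could create or mutate an enclave (negative-address) location --- \Cref{tr:eo-hi,tr:eo-ac-k,tr:eo-de-k} --- carries the side condition $\oth{C}\vdash\oth{f}:\oth{prog}$, which attacker functions fail by \Cref{tr:whole-o} and the definition of $\vdashatto$. Your additional observations (that \Cref{tr:eo-nu} only yields non-negative addresses and that \Thmref{thm:oth-loc-h-or-l} gives the high/low dichotomy) are consistent refinements of the same argument rather than a different approach.
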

\begin{proof}
	For the source reductions we can use \Thmref{thm:epsi-red-ok} to know that $\monh{\src{H}}=\monh{\src{H'}}$, so they don't change the interested bits of the $\relatetphi$.

	Suppose this does not hold by contradiction, there can be one clause that does not hold based on \Cref{tr:state-rel-proof}:
	\begin{itemize}
		\item 
		two related high-locations \src{\ell} and \oth{n} point to unrelated values.

		Two cases arise: creation and update of a location to an unrelated value.

		Both cases are impossible because \Cref{tr:eo-ac-k} and \Cref{tr:eo-hi} check $\oth{C}\vdash\oth{f}:\oth{prog}$ and \Cref{tr:whole-o} ensures that the attacker defines different names from the program, so the attacker can never execute them.
	\end{itemize}
\end{proof}

 \newpage

\section{\facomp and Inefficient Compiled Code}\label{sec:fac-short}
We illustrate various ways in which \facomp forces inefficiencies in compiled code via a running example.
Consider a password manager written in an object-oriented language that is compiled to an assembly-like language.
We elide most code details and focus only on the relevant aspects.

\begin{lstlisting}[mathescape]
private db: Database;

public testPwd( user: Char[8], pwd: BitString): Bool{
  if( db.contains( user )){ return db.get( user ).getPassword() == pwd; }
}
...
private class Database{ ... }
\end{lstlisting}

The source program exports the function \lst{testPwd} to check whether a \lst{user}'s stored password matches a given password \lst{pwd}. The stored password is in a local database, which is represented by a piece of \emph{local state} in the variable \lst{db}. The details of \lst{db} are not important here, but the database is marked private, so it is not directly accessible to the context of this program in the source language.

\begin{example}[Extensive checks]\label{ex:checks}
A fully-abstract compiler for the program above must generate code that checks that the arguments passed to \lst{testPwd} by the context are of the right type~\cite{KULeuven-358154,scoo-j,fstar2js,catalin,mfac}. 
In fact, the code expects an array of characters of length 8, any other parameter (e.g., an array of objects) cannot be passed in the source, so it must also be prevented to be passed in the target.
More precisely, a fully abstract compiler will generate code similar to the following for \lst{testPwd} (we assume that arrays are passed as pointers into the heap).
\lstset{language=Asm,numbersep=5pt,frame=single}
\begin{lstlisting}[mathescape]
label testpwd
  for i = 0; i<8; i++  // 8 is the legth of the $\lst{user}$ field in the previous snippet
    add r0 i
    load the memory word stored at address r0 into r1
    test that r1 is a valid char encoding
  ...
\end{lstlisting} 
\lstset{language=Java,numbersep=5pt,frame=single}
Basically, this code dynamically checks that the first argument is a character array.
Such a check can be very inefficient.
\end{example}
The problem here is that \facomp forces these checks on all arguments, even those that have no security relevance.
In contrast, \rscomp does not need these checks. 
Indeed, neither of our earlier compiler, \compup{\cdot} nor \compap{\cdot}, insert them. 
Note that any robustly safe source program will have programmer-inserted checks for all parameters that are relevant to the safety property of interest, and these checks will be compiled to the target. 
For other parameters, the checks are irrelevant, both in the source and the target, so there is no need to insert them.

\begin{example}[Component size in memory]\label{ex:size}
Let us now consider two different ways to implement the \lst{Database} class: as a \lst{List} and as a \lst{RedBlackTree}.
As the class is \lst{private}, its internal behaviour and representation of the database is invisible to the outside.
Let  \Cs{_{list}} be the program with the \lst{List} implementation and \Cs{_{tree}} be the program with the \lst{RedBlackTree} implementation; in the source language, these are equivalent.

However, a subtlety arises when considering the assembly-level, compiled counterparts of \Cs{_{list}} and \Cs{_{tree}}: the \emph{code} of a \lst{RedBlackTree} implementation consumes more memory than the code of a \lst{List} implementation.
Thus, a target-level context can distinguish \Cs{_{list}} from \Cs{_{tree}} by just inspecting the sizes of the code segments.
So, in order for \compgen{\cdot} to be fully abstract, it must produce code of a fixed size~\cite{KULeuven-358154,scoo-j}. This wastes memory and makes it impossible to compile some components.
An alternative would be to spread the components in an overly-large memory at random places i.e., use address-space layout randomization or ASLR, so that detecting different code sizes has a negligible chance of success~\cite{Abadi:2012,Jagadeesan:2011:LMV:2056311.2056556}. However, ASLR is now known to be broken~\cite{break-aslr2,break-aslr1}.
\end{example}
Again, we see that \facomp introduces an inefficiency in compiled code (pointless code memory consumption) even though this has no security implication here.
In contrast, \rscomp does not require this unless the safety property(ies) of interest care about the size of the code (which is very unlikely in a security context, since security by code obscurity is a strongly discouraged security practice). 
In particular, the monitors considered in this paper cannot depend on code size.

\begin{example}[Wrappers for heap resources]\label{ex:wrap}
Assume that the \lst{Database} class is implemented as a \lst{List}.
Shown below are two implementations of the \lst{newList} method inside \lst{List} which we call \Cs{_{one}} and \Cs{_{two}}.
The only difference between \Cs{_{one}} and \Cs{_{two}} is that \Cs{_{two}} allocates two lists internally; one of these (\lst{shadow}) is used for internal purposes only.

\noindent\begin{minipage}{.45\textwidth}
\begin{lstlisting}[mathescape]
public newList(): List{

  ell = new List();
  return ell;
} 
\end{lstlisting} 
\end{minipage}\hfill
\begin{minipage}{.45\textwidth}
\begin{lstlisting}[mathescape]
public newList(): List{
  shadow = new List();
  ell = new List();
  return ell;
}
\end{lstlisting} 
\end{minipage}

Again, \Cs{_{one}} and \Cs{_{two}} are equivalent in a source language that does not allow pointer comparison. To attain {\facomp} when the target allows pointer comparisons, the pointers returned by \lst{newList} in the two implementations must be the same, but this is very difficult to ensure since the second implementation does more allocations.
A simple solution to this problem is to wrap \lst{ell} in a proxy object and return the proxy~\cite{scoo-j,mfac,KULeuven-358154,Morris:1973:PPL:361932.361937}.
Compiled code needs to maintain a lookup table mapping the proxy to the original object. 
Proxies must have allocation-independent addresses.
Proxies work but they are inefficient due to the need to look up the table on every object access.

Another way to attain \facomp is to weaken the source language, introducing an operation to distinguish object identities in the source~\cite{gcFA}.
However, this is a widely discouraged practice, as it changes the source language from what it really is and the implication of such a change may be difficult to fathom for programmers and verifiers.
\end{example}
In this example, \facomp forces all privately allocated locations to be wrapped in proxies, however \rscomp does not require this. 
Our target languages \LP and \LC support address comparison (addresses are natural numbers in their heaps) but \compup{\cdot} and \compap{\cdot} just use capabilities to attain security efficiently.
On the other hand, for attaining \facomp, capabilities alone would be insufficient since they do not hide addresses; proxies would still be required (this point is concretely demonstrated in \Cref{sec:facomp-instance}).

\begin{example}[Strict termination vs divergence]\label{ex:termina}
Consider a  source language that is strictly terminating while a target language that is not.
Below is an extension of the password manager to allow database encryption via an externally-defined function.
As the database is not directly accessible from external code, the two implementations below \Cs{_{enc}} (which does the encryption) and \Cs{_{skip}} which skips the encryption are equivalent in the source.

\noindent\begin{minipage}{.45\textwidth}
\begin{lstlisting}[mathescape]
public encryptDB( func : Database -> Bitstring) : void {
  func( this.db );
  return;
}
\end{lstlisting} 
\end{minipage}\hfill
\begin{minipage}{.45\textwidth}
\begin{lstlisting}[mathescape]
public encryptDB( func : Database -> Bitstring) : void {

  return;
}
\end{lstlisting} 
\end{minipage}

If we compile \Cs{_{enc}} and \Cs{_{skip}} to an assembly language, the compiled counterparts \emph{cannot} be equivalent, since the target-level context can detect which function is compiled by passing a \lst{func} that diverges. Calling the compilation of \Cs{_{enc}} with such a \lst{func} will cause divergence, while calling the compilation of \Cs{_{skip}} will immediately return.
\end{example}
This case presents a situation where \facomp is outright \emph{impossible}. 
The only way to get \facomp is to make the source language artificially non-terminating, see the work of Devriese \etal~\citet{lambda-seal} for more details of this particular problem.
On the other hand, \rscomp can be easily attained even in such settings since it is completely independent of termination in the languages (note that program termination and nontermination are both different from the monitor getting stuck on an action, which is what \rscomp cares about). 
Indeed, if our source languages \LU and \LA were restricted to terminating programs only, the same compilers and the same proofs of \rscomp would still work.

\paragraph{Remark}
It is worth noting that many of the inefficiencies above could be resolved by just replacing contextual equivalence with a different equivalence in the statement of \facomp.
However, it is not known how to do this generally for arbitrary sources of inefficiency and, further, it is unclear what the security consequences of such instantiations of \facomp would be. 
On the other hand, {\rscomp} is \emph{uniform} and it does address all these inefficiencies.

An issue that can normally not be addressed just by tweaking equivalences is side-channel leaks, as they are, by definition, not expressible in the language.
Neither \facomp nor \rscomp deals with side channels, but recent results describe how to account for side channels in secure compilers~\cite{BartheGL18}.

\section{Towards a Fully Abstract Compiler from \LU to \LP}\label{sec:facomp-instance}
This section sketches a fully abstract compiler from \LU to \LP.

\subsection{Language Extensions to \LU and \LP}\label{sec:fac-langs}
This section lists the language extensions required by the compiler. It is not possible to motivate all of them before explaining the details of the compiler, so some of the justification is postponed to \Cref{sec:fac-comp}.

A first concern for full abstraction is that a target context can always determine the memory consumption of two compiled components, analogously to \Cref{ex:size}. 
To ensure that this does not break full abstraction, we add a source expression \src{size} that returns the amount of locations \src{\ell} allocated in the current heap \src{H}.

In the target language \LP, we need to know whether an expression is a pair, whether it is a location, and we need to be able to compare two capabilities. 
For this, we add the expression constructs  \trg{isloc(e)}, \trg{ispair(e)} and \trg{eqcap(e,e)}, respectively.

Finally, compiled code needs private functions for its runtime checks that must not be visible to the context.
\LP does not have this functionality: all functions defined by a component can be called by the context.
Now we modify \LP so that all functions \trg{\OB{F}} defined in a component are by default private to it.
Additionally, each component must explicitly define the list of functions it exports (typically a subset of \trg{\OB{F}}), so that those are the only ones that can be called by the context and the rest are private to the component.

\subsection{The \compfac{\src{\cdot}} Compiler}\label{sec:fac-comp}
\compfac{\src{\cdot}} is similar to \compup{\cdot} but with critical differences.
We know that fully abstract compilation preserves \emph{all} source abstractions in the target language. Here, the only abstraction that distinguishes \LP from \LU is that locations are abstract in \LP, but concrete natural numbers in \LU.
Thus, locations allocated by compiled code must not be passed directly to the context as this would reveal the allocation order (as seen in \Cref{ex:wrap}).
Instead of passing the location \trg{\pair{n,k}} to the context, the compiler arranges for an opaque handle \trg{\pair{n',k_{com}}} (that cannot be used to access any location directly) to be passed.
Such an opaque handle is often  called a \emph{mask} or \emph{seal} in the literature.

To ensure that masking is done properly, \compfac{\src{\cdot}} inserts code at entry points and at exit points to compiled code, \emph{wrapping} the compiled code in a way that enforces masking.
This notion of wrapping is standard in literature on fully abstract compilation~\cite{fstar2js,mfac}.
The wrapper keeps a list \trg{\OB{L}} of component-allocated locations that are shared with the context in order to know their masks.
When a component-allocated location is shared, it is added to the list \trg{\OB{L}}. The mask of a location is its index in this list. If the same location is shared again it is not added again but its previous index is used.
So if \trg{\pair{n,k}} is the 4th element of \trg{\OB{L}}, its mask is \trg{\pair{4,k_{com}}}.
To implement lookup in \trg{\OB{L}} we must compare capabilities too, so we rely on \trg{eqcap}.
To ensure capabilities do not leak to the context, the second field of the pair is a constant capability \trg{k_{com}} whose location the compiled code does not use otherwise.
Technically speaking, this is exactly how existing fully abstract compilers operate (e.g.,~\cite{scoo-j}).

As should be clear, this kind of masking is very inefficient at runtime. 
However, even this masking is not sufficient for full abstraction. Next, we explain additional things the compiler must do.

\paragraph{Determining when a Location is Passed to the Context.}
A component-allocated location can be passed to the context not just as a function argument but on the heap.
So before passing control to the context the compiled code needs to scan the whole heap where a location can be passed and mask all found component-allocated locations.
Dually, when receiving control the compiled code must scan the heap to unmask it.
The problem now is determining what parts of the heap to scan and how.
Specifically, the compiled code needs to keep track of all the locations (and related capabilities) that are shared, i.e., (i) passed from the context to the component and (ii) passed from the component to the context.
These are the locations on which possible communication of locations can happen.
Compiled code keeps track of these shared locations in a list \trg{\OB{S}}.
Intuitively, on the first function call from the context to the compiled component, assuming the parameter is a location, the compiled code will register that location and all other locations reachable from it in \trg{\OB{S}}.
On subsequent ? (incoming) actions, the compiled code will register all new locations available as parameters or reachable from \trg{\OB{S}}.
Then, on any ! (outgoing) action, the compiled code must scan whatever locations (that the compiled code has created) are now reachable from \trg{\OB{S}} and add them to \trg{\OB{S}}. We need the new instructions \trg{isloc} and \trg{ispair} in \LP to compute these reachable locations.
Of course, this kind of scanning of locations reachable from \trg{\OB{S}} at every call/return between components can be extremely costly.

\paragraph{Enforcing the Masking of Locations}
The functions \trg{mask} and \trg{unmask} are added by the compiler to the compiled code. 
The first function takes a location (which intuitively contains a value \trg{v}) and replaces (in \trg{v}) any pair \trg{\pair{n,k}} of a location protected with a component-created capability \trg{k} with its index in the masking list \trg{\OB{L}}.
The second function replaces any pair \trg{\pair{n,k_{com}}} with the $n$th element of the masking list \trg{\OB{L}}.
These functions should not be directly accessible to the context (else it can \trg{unmask} any \trg{mask}'d location and break full abstraction). 
This is why \LP needs private functions.

\paragraph{Letting the Context use Masked Locations}
Masked locations cannot be used directly by the context to be read and written.
Thus, compiled code must provide a \trg{read} and a \trg{write} function (both of which are public) that implement reading and writing to masked locations.

\smallskip

As should be clear, code compiled through \compfac{\src{\cdot}} has a lot of runtime overhead in calculating the heap reachable from \trg{\OB{S}} and in \trg{mask}ing and \trg{unmask}ing locations.
Additionally, it also has code memory overhead: the functions \trg{read}, \trg{write}, \trg{mask}, \trg{unmask} and list manipulation code must be included.
Finally, there is data overhead in maintaining  \trg{\OB{S}}, \trg{\OB{L}} and other supporting data structures to implement the runtime checks described above. 
In contrast, the code compiled through \compup{\cdot} (which is just robustly safe and not fully abstract) has none of these overheads.

\subsection{Proving that \compfac{\src{\cdot}} is a Fully Abstract Compiler}\label{sec:fac-inst-proof}
Using \compfac{\src{\cdot}} as a concrete example, we now discuss why \emph{proving}  \facomp is harder than proving \rscomp.
Consider the hard part of \facomp, the forward implication,
  $\Cs{_1}\ceqs\Cs{_2}\Rightarrow\compgen{\Cs{_1}}\ceqt\compgen{\Cs{_2}}$. The contrapositive of this statement is
  $\compgen{\Cs{_1}}\nceqt\compgen{\Cs{_2}}\Rightarrow\Cs{_1}\nceqs\Cs{_2}$.
By unfolding the definition of $\nceqs$ we see that, given a target context \ctxt{} that distinguishes \compgen{\Cs{_1}} from \compgen{\Cs{_2}}, it is necessary to show that there exists a source context \ctxs{} that distinguishes \Cs{_1} from \Cs{_2}.
That source context \ctxs{} must be built (backtranslated) starting from the already given target context~\ctxt{} that differentiates \compgen{\Cs{_1}} from \compgen{\Cs{_2}}.

A backtranslation directed by the syntax of the target context \ctxt{} is hopeless here since  the target expressions \trg{iscap} and \trg{isloc} cannot be directly backtranslated to valid source expressions.
Hence, we resort to another well-known technique~\cite{KULeuven-358154,mfac}. 
First, we define a \emph{fully abstract (labeled) trace semantics} for the target language. 
A trace semantics is fully abstract when its notion of equivalence coincides with contextual equivalence, and thus can be used in place of the latter.
Specifically, this means that two components are contextually inequivalent iff their trace semantics differ in at least one trace. 
We write \trt{\trg{C}} to denote the traces of the component \trg{C} in this fully abstract semantics. 
Given this trace semantics, the statement of the forward implication of full abstraction reduces to: 
\begin{align*}
  \trt{\compfac{\src{C_1}}}\neq\trt{\compfac{\src{C_2}}}\Rightarrow\Cs{_1}\nceqs\Cs{_2}.
\end{align*}
The advantage of this formulation over the original one is that now we can construct a distinguishing source context for \src{C_1} and \src{C_2} using the \emph{trace} on which \trt{\compfac{\src{C_1}}} and \trt{\compfac{\src{C_2}}} disagree.
While this proof strategy of constructing a source context from a trace is similar to our proof of \rscomp, %
it is fundamentally much harder and much more involved. There are two reasons for this.

First, fully abstract trace semantics are much more complex than our simple trace semantics of \LP from earlier sections. 
The reason is that our earlier trace semantics include the entire heap in every action, but this breaks full abstraction of the trace semantics: such trace semantics also distinguish contextually equivalent components that differ in their internal private state. 
In a fully abstract trace semantics, the trace actions must record only those heap locations that are shared between the component and the context. Consequently, the definition of the trace semantics must inductively track what has been shared in the past. In particular, the definition must account for locations reachable indirectly from explicitly shared locations. This complicates both the definition of traces and the proofs that build on the definition.

Second, the source context that the backtranslation constructs from a target trace must simulate the shared part of the heap at every context switch. Since locations in the target may be masked, the source context must maintain a map with the source locations corresponding to the target masked ones, which complicates it substantially.
We call this map \src{B}.
Now, this affects two patterns of target traces that need to be handled in a special way: \trg{\clh{read}{v}{H}\cdot\rth{}{H'}{}} and \trg{\clh{write}{v}{H}\cdot\rth{}{H'}{}}.
Normally, these patterns would be translated in source-level calls to the same functions (\src{read} and \src{write}), but this is not possible.
In fact, the source code has no \src{read} nor \src{write} function, and the target-level calls to those functions need to be backtranslated to the corresponding source constructs (\src{!} and \src{:=}, respectively).
The locations used by these constructs must be looked up from \src{B} as these are reads and writes to masked locations.
Moreover, calls and returns to \trg{read}  can be simply ignored since the effects of reads are already captured by later actions in traces.
Calls and returns to \trg{write} cannot be ignored as they set up a component location (albeit masked) in a certain way and that affects the behaviour of the component.
We show in \Cref{ex:backtr-fatr} how to backtranslate calls and returns to \trg{write}.
\begin{example}[Backtranslation of traces]\label{ex:backtr-fatr}
Consider the trace below and its backtranslation.

{\centering
		\begin{tikzpicture}[remember picture]
			\node[align=left](trace){ 
				${\scriptstyle (1)}\left. \trg{~~\clh{f}{0}{ \trg{1\mapsto4} }} \right. $
				\\
				${\scriptstyle (2)}\left. \trg{~~\rth{}{1\mapsto\pair{1,k_{com}} }{}} \right. $
				\\
				${\scriptstyle (3)}\left[
					\begin{aligned}
						&\trg{\clh{write}{\pair{\pair{1,k_{com}},5}}{ 1\mapsto\pair{1,k_{com}} }}
						\\
						&\trg{\rth{}{ 1\mapsto\pair{1,k_{com}} }{} }
					\end{aligned}
				\right.$
				};

			\node[align=left, right of=trace, xshift = 17em](code){
				\src{main(x)\mapsto}
				\\
				$\left.
				\begin{aligned}[c]
					&\src{\tikz\node(ch1){}; \letnew{x}{4}{L::\pair{x,1}}}\quad
					\\
					&\src{\tikz\node(zzz){}; \call{f}~0}
				\end{aligned}
				\right]{\scriptstyle (1)}$
				\\
				$\left.\src{\tikz\node(ch3){};\, \letin{x}{!L(1)}{B::\pair{x,1}}} \qquad \right]{\scriptstyle (2)}$
				\\
				$\left.
				\begin{aligned}[c]
					&\src{\tikz\node(ch4){}; !B(1) := 5 }\quad
				\end{aligned}
				\right]{\scriptstyle (3)}$
			};

		\end{tikzpicture}
}
	The first action, where the context registers the first location in the list \src{L}, is as before. %
	Then in the second action the compiled component passes to the context (in location \trg{1}) a masked location with index \trg{1} and, later, the context writes \trg{5} to it.
	The backtranslated code must recognise this pattern and store the location that, in the source, corresponds to the mask \trg{1} in the list \src{B} (action 2).
	In action 3, when it is time to write \src{5} to that location, the code looks up the location to write to from \src{B}.
\end{example}
It should be clear that this proof of \facomp is substantially harder than our corresponding proof of \rscomp, which needed neither fully abstract traces, nor tracking any mapping in the backtranslated source contexts.

\section{A Fully Abstract Compiler from \LU to \LP}\label{sec:fa}
We perform the aforementioned changes to languages.

\subsection{The Source Language \LU}\label{sec:fa-src}
In \LU we need to add a functionality to get the size of a heap, as that is an observable that exists in the target.
In fact, in the target if one allocates something, that reveals how much it's been allocated entirely.

\begin{align*}
	\mi{Components}~\src{C} \bnfdef&\ \src{\OB{F} ; \OB{I} ; \OB{E}}
	\\
	\mi{Exports}~\src{E} \bnfdef&\ \src{f}
	\\
	\mi{Expressions}~\src{e} \bnfdef&\ \cdots \mid \src{size}
\end{align*}
\begin{center}
	\typerule{\LU-Size}{
		\card{\src{H}} = n
	}{
		\src{H\triangleright size \reds n}
	}{es-size}
\end{center}

\mytoprule{\text{Helpers}}
\begin{center}
	\typerule{\LU-Jump-Internal}{
		((\src{f'}\in\src{\OB{I}} \wedge \src{f}\in\src{\OB{I}}) \vee
				\\
		(\src{f'}\in\src{\OB{E}} \wedge \src{f}\in\src{\OB{E}}))
	}{
		\src{\OB{I},\OB{E}}\vdash\src{f,f'}:\src{internal}
	}{us-aux-intern}
	\typerule{\LU-Jump-IN}{
		\src{f}\in\src{\OB{I}} \wedge \src{f'}\notin\src{\OB{E}}
	}{
		\src{\OB{I},\OB{E}}\vdash\src{f,f'}:\src{in}
	}{us-aux-in}
	\typerule{\LU-Jump-OUT}{
		\src{f}\in\src{\OB{E}} \wedge \src{f'}\in\src{\OB{I}}
	}{
		\src{\OB{I},\OB{E}}\vdash\src{f,f'}:\src{out}
	}{us-aux-out}
	\typerule{\LU-Plug}{
		\src{A} \equiv \src{H ; \OB{F}\hole{\cdot}}
		&
		\src{C}\equiv\src{\OB{F'} ; \OB{I}; \OB{E}} 
		&
		\vdash\src{C,\OB{F}}:\src{whole}
		&
		\src{main}\in\fun{names}{\src{\OB{F}}}
		\\
		\forall \src{f}\in\src{\OB{E}}, \src{f}\notin\fun{fn}{\src{\OB{F}}}
		&
		\forall \src{f}\in\fun{fn}{\src{\OB{F'}}}, \src{f}\in\src{\OB{I}} \vee \src{f}\in\src{\OB{F'}}
	}{
		\src{A\hole{C}} = \src{H; \OB{F;F'}; \OB{I} ; \OB{E}}
	}{plug-us}
	\typerule{\LU-Whole}{
		\src{C}\equiv\src{\OB{F'} ; \OB{I} ; \OB{E}} 
		\\
		\fun{names}{\src{\OB{F}}}\cap\fun{names}{\src{\OB{F'}}}=\emptyset
		\\
		\fun{names}{\src{\OB{I}}}\subseteq \fun{names}{\src{\OB{F}}}\cup\fun{names}{\src{\OB{F'}}}
		\\
		\fun{fv}{\src{\OB{F}}}\cup\fun{fv}{\src{\OB{F'}}}=\srce
	}{
		\vdash\src{C,\OB{F}}:\src{whole}
	}{whole-us}
	\typerule{\LU-Initial State}{
		\src{P}\equiv\src{H ; \OB{F} ; \OB{I} ; \OB{E}}
		\\
		\src{C}\equiv\src{\OB{F} ; \OB{I} ; \OB{E}} 
	}{
		\SInits{P} = \src{C ; H \triangleright \call{main}~ 0}
	}{ini-us}
\end{center}
\botrule
The semantics is unchanged, it only relies on the new helper functions above.

\subsection{The Target Language \LP}\label{sec:fa-trg}
\subsubsection{Syntax Changes}
\begin{align*}
	\mi{Components}~\trg{C} \bnfdef&\ \trg{\OB{F} ; \OB{I}; \OB{E} ; k_{root}, k_{com}}
	\\
	\mi{Exports}~\trg{E} \bnfdef&\ \trg{f}
	\\
	\mi{Expressions}~\trg{e} \bnfdef&\ \cdots \mid \trg{isloc(e)} \mid \trg{ispair(e)} \mid \trg{eqcap(e,e)}
	\\
	\mi{Trace\ states}~\trg{\Theta} \bnfdef&\ \trg{(C; H ; \OB{n} \triangleright \proc{t}{\OB{f}})} 
	\\
	\mi{Trace\ bodies}~\trg{t} \bnfdef&\ \trg{s} \mid \trg{\unk}
	\\
	\mi{Trace\ labels}~\trg{\delta} \bnfdef&\ \trg{\epsilon} \mid \trg{\beta}
	\\
	\mi{Trace\ actions}~\trg{\beta} \bnfdef&\  \trg{\clh{f}{v}{H}} \mid \trg{\cbh{f}{v}{H}} \mid \trg{\rth{v}{H}} \mid \trg{\rbh{v}{H}} \mid \tert \mid \divt \mid \trg{\wrl{v,n}}
	\\
	\mi{Traces}~\trg{\OB{\beta}} \bnfdef&\ \trge \mid \trg{\OB{\beta}\cdot \beta}
\end{align*}
We assume programs are given two capabilities they own: \trg{k_{root}} and \trg{k_{com}} and that the attacker does not have.
The former is used to create a part of the heap for component-managed datastructures.
The latter does not even hide a location, we need it as a placeholder.

Traces in this case have the same syntactic structure as before, but they do not carry the whole heap.
So we use a different symbol (\trg{\beta}), to visually distinguish between the two traces and the kind of information carried by them.

We need a write label \trg{\wrl{v,n}} that tells that masked location \trg{n} has been updated to value \trg{v}.
This captures the usage of compiler-inserted functions to read and write masked locations (concepts that will be clear once the compiler is defined).
The read label is not needed because its effect are captured anyway by call/return.

Trace states are either operational semantics states or an unknown state, mimicking the execution in a context.
The former has an addtional element \trg{\OB{n}}, the list of locations shared with the context.
The latter carries the information about the component and the heap comprising the one private to the component and the one shared with the context.
It also carries the stack of function calls, where we add symbol \trg{\unk} to indicate when the called function was in the context.

Helper functions are as above.

\subsubsection{Semantics Changes}
In \LP we need functionality to tell if a pair is a location or not and to traverse values in order to extract such locations.
\begin{center}
	\typerule{\LP-isloc}{
		(\trg{H\triangleright e \redt \pair{n,v}}
		&
		\trg{n\mapsto \_;\eta}\in\trg{H} 
		&
		\trg{\eta}=\trg{v} \text{ or } \trg{\eta}=\trg{\bot}
		)\Rightarrow \trg{b}=\truet
		\\
		\text{ otherwise }\trg{b}=\falset
	}{
		\trg{H\triangleright isloc(e) \redt \trg{b}}
	}{et-isloc}
	\typerule{\LP-ispair}{
		\trg{H\triangleright e \redt \pair{v,v}}
		\Rightarrow \trg{b}=\truet
		\\
		\text{ otherwise }\trg{b}=\falset
	}{
		\trg{H\triangleright ispair(e) \redt \trg{b}}
	}{et-ispair}
\end{center}
These are used to traverse the value stored at a location and extract all sublocations stored in there.
There may be pairs containing pairs etc, and thus when we need to know if something is a pair before projecting out.
Also, we need to know if a pair is a location or not, in order to know whether or not we can dereference it.

Additionally, we need a functionality to tell if two capabilities are the same.
Now, this could be problematic because it could reveal capability allocation order and thus introduce observations that we do not want.
However, the compiler will ensure that the context only receives \trg{k_{com}} as a capability and never a newly-allocated capability.
So the context will not be able to test equality of capabilities generated by the compiled component as it will effectively see only one.

\begin{center}
	\typerule{\LP-eqcap-true}{
		\trg{H\triangleright e \redt k}
		&
		\trg{H\triangleright e' \redt k}
	}{
		\trg{H\triangleright eqcap(e,e') \redt \truet}
	}{et-eqcap-t}
	\typerule{\LP-eqcap-false}{
		\trg{H\triangleright e \redt k}
		&
		\trg{H\triangleright e' \redt k'}
		&
		\trg{k}\neq\trg{k'}
	}{
		\trg{H\triangleright eqcap(e,e') \redt \falset}
	}{et-eqcap-t}
\end{center}

\subsubsection{A Fully Abstract Trace Semantics for \LP}\label{src:fa-traces-trg}
\begin{align*}
	\trg{\Theta \xtol{\beta} \Theta'} &&& \text{State \trg{\Theta} emits visible action \trg{\beta} becoming \trg{\Theta'}.}
	\\
	\trg{\Theta \Xtol{\OB{\beta}} \Theta'} &&& \text{State \trg{\Theta} emits trace \trg{\OB{\beta}} becoming \trg{\Theta'}.}
\end{align*}

\mytoprule{ \textit{Helper functions} }
\begin{center}
	\typerule{Reachable}{
		\trg{n} \in \fun{reach}{ \trg{n_{st}}, \trg{k_{st}}, \trg{H} } 
		&
		\trg{n_{st}\mapsto \_:\_}\in\trg{H'}
		\\
		\trg{k_{st}}\in\trg{k_{root}}\cup\trg{H'}
		&
		\trg{n\mapsto v:\eta}\in\trg{H}
	}{
		\trg{H}\vdash\fun{reachable}{\trg{n},\trg{H'}}
	}{reachable-2}
	\typerule{Valid value}{
		\forall \trg{k}\in\trg{H}.~ \trg{k}\notin\trg{v}
	}{
		\vdash\fun{valid}{\trg{v},\trg{H}}
	}{valid-v}
	\typerule{Valid heap}{
		\trg{H}=\trg{H_{priv} \cup H_{sha}}
		&
		\trg{H'}=\trg{H_{priv} \cup H_{sha}' \cup H_{new}}
		\\
		\trg{H''}=\trg{H_{sha}' \cup H_{new}}
		&
		\dom{\trg{H}}=\trg{\OB{n}}
		&
		\dom{\trg{H''}}=\trg{\OB{n'}}
		\\
		\trg{k}\in\trg{H_{sha}} \iff \trg{k}\in\trg{H_{sha}'}
		\\
		\forall\trg{k'}\in\trg{H_{new}}.~ \trg{k'}\notin\trg{H_{priv}\cup H_{sha}}
		\\
		\forall \trg{n\mapsto v;\eta}\in\trg{H_{sha}}.~ \trg{n\mapsto v'';\eta}\in\trg{H_{sha}'} \wedge \vdash\fun{valid}{\trg{v''},\trg{H}}
		\\
		\forall \trg{n'\mapsto v':\eta'}\in\trg{H_{new}}.~ 
			\vdash\fun{valid}{\trg{v'},\trg{H_{priv}\cup H_{sha}'}} \wedge 
			\\
			& \trg{H''}\vdash\fun{reachable}{\trg{v'},\trg{H_{priv}\cup H_{sha}'}}
	}{
		\vdash\fun{validHeap}{\trg{H},\trg{H'},\trg{H''},\trg{\OB{n}},\trg{\OB{n'}}}
	}{valid-h}
\end{center}
\botrule

\mytoprule{ \trg{\Theta \xtol{\OB{\beta}} \Theta'} }
\begin{center}
	\typerule{\LP-Traces-Silent}{
		\trg{ (C; H ; \OB{n} \triangleright \proc{s}{\OB{f}}) \xtot{\epsilon} (C; H' ; \OB{n} \triangleright \proc{s'}{\OB{f'}}) }
	}{
		\trg{ (C; H ; \OB{n} \triangleright \proc{s}{\OB{f}}) \xtol{\epsilon} (C; H' ; \OB{n} \triangleright \proc{s'}{\OB{f'}}) }
	}{etr-ter-s}
	\typerule{\LP-Traces-Call}{
		\trg{C}=\trg{\OB{F};\OB{I};\OB{E}}
		&
		\trg{f}\in\trg{\OB{E}}
		&
		\trg{f(x)\mapsto s;\ret}\in\trg{\OB{F}}
		\\
		\trg{\OB{f'}}=\trg{\OB{f}\cdot f}
		&
		\trg{H}\vdash\fun{valid}{\trg{v}}
		\\
		\vdash\fun{validHeap}{\trg{H},\trg{H''},\trg{H'},\trg{\OB{n}},\trg{\OB{n'}}}
	}{
		\trg{(C; H ; \OB{n} \triangleright \proc{\unk}{\OB{f}}) \xtol{\clh{f}{v}{H'}} (C; H'' ; \OB{n'} \triangleright \proc{s;\ret}{\OB{f'}})}
	}{etr-call}
	\typerule{\LP-Traces-Returnback}{
		\trg{\OB{f}}=\trg{\OB{f'}\cdot f}
		\\
		\vdash\fun{validHeap}{\trg{H},\trg{H''},\trg{H'},\trg{\OB{n}},\trg{\OB{n'}}}
	}{
		\trg{(C; H ; \OB{n} \triangleright \proc{\unk}{\OB{f}}) \xtol{\rbh{}{}{H'}} (C; H'' ; \OB{n'} \triangleright \proc{\skipt}{\OB{f'}})}
	}{etr-call}
	\typerule{\LP-Traces-Callback}{
		\trg{s} = \trg{\call{f}~e}
		&
		\trg{H\triangleright e \redt v}
		\\
		\trg{C}=\trg{\OB{F};\OB{I};\OB{E}}
		&
		\trg{\OB{f'}}=\trg{\OB{f}\cdot f}
		&
		\trg{f}\in\trg{\OB{I}}
		\\
		\trg{\OB{n}}\subseteq\trg{\OB{n'}}= \myset{\trg{n}}{ \trg{H}\vdash\fun{reachable}{\trg{n},\trg{H}}}
		&
		\trg{H'}=\trg{H|_{\trg{\OB{n'}}}}
	}{
		\trg{ (C; H ; \OB{n} \triangleright \proc{s}{\OB{f}}) \xtol{\cbh{f}{v}{H'}} (C; H ; \OB{n'} \triangleright \proc{\unk}{\OB{f'}}) }
	}{etr-call}
	\typerule{\LP-Traces-Return}{
		\trg{C}=\trg{\OB{F};\OB{I};\OB{E}}
		&
		\trg{\OB{f}}=\trg{\OB{f'}\cdot f}
		&
		\trg{f}\in\trg{\OB{E}}
		\\
		\trg{\OB{n}}\subseteq\trg{\OB{n'}}= \myset{\trg{n}}{ \trg{H}\vdash\fun{reachable}{\trg{n},\trg{H}}}
		&
		\trg{H'}=\trg{H|_{\trg{\OB{n'}}}}
	}{
		\trg{ (C; H ; \OB{n} \triangleright \proc{\ret}{\OB{f}}) \xtol{\rth{}{}{H'}} (C; H ; \OB{n'} \triangleright \proc{\unk}{\OB{f'}}) }
	}{etr-call}
	\typerule{\LP-Traces-Terminate}{
		\trg{ (C; H ; \OB{n} \triangleright \proc{s}{\OB{f}}) \nxtot{\epsilon} \_ }
	}{
		\trg{ (C; H ; \OB{n} \triangleright \proc{s}{\OB{f}}) \xtol{\tert} (C; H ; \OB{n} \triangleright \proc{s}{\OB{f}}) }
	}{etr-ter}
	\typerule{\LP-Traces-Diverge}{
		\forall n.~ \trg{ (C; H ; \OB{n} \triangleright \proc{s}{\OB{f}}) \xtot{\epsilon}\redapp{n} (C; H' ; \OB{n'} \triangleright \proc{s'}{\OB{f'}}) }
	}{
		\trg{ (C; H ; \OB{n} \triangleright \proc{s}{\OB{f}}) \xtol{\divt} (C; H ; \OB{n} \triangleright \proc{s}{\OB{f}}) }
	}{etr-div}
	\typerule{\LP-Traces-Write}{
		\trg{C}=\trg{\OB{F};\OB{I};\OB{E}}
		&
		\trg{write}\in\trg{\OB{E}}
		&
		\trg{write(x)\mapsto s;\ret}\in\trg{\OB{F}}
		\\
		\trg{C;H\triangleright s\subt{n}{x};\ret \xtot\redapp{*} C;H'\triangleright \ret}
	}{
		\trg{ (C; H ; \OB{n} \triangleright \proc{\unk}{\OB{f}}) \xtol{\wrl{v,n}} (C; H' ; \OB{n} \triangleright \proc{\unk}{\OB{f}}) }
	}{etr-wr}
	\typerule{\LP-Traces-Read}{
		\trg{C}=\trg{\OB{F};\OB{I};\OB{E}}
		&
		\trg{read}\in\trg{\OB{E}}
		&
		\trg{read(x)\mapsto s;\ret}\in\trg{\OB{F}}
		\\
		\trg{C;H\triangleright s;\ret \xtot\redapp{*} C;H'\triangleright \ret}
	}{
		\trg{ (C; H ; \OB{n} \triangleright \proc{\unk}{\OB{f}}) \xtol{\epsilon} (C; H' ; \OB{n} \triangleright \proc{\unk}{\OB{f}}) }
	}{etr-rd}
\end{center}
\botrule

\mytoprule{ \trg{\Theta \Xtol{\OB{\beta}} \Theta'} }
\begin{center}
	\typerule{E\LP-single}{
		\trg{\Omega}\Xtot{}\trg{\Omega''}
		&
		\trg{\Omega''}\xtot{\beta}\trg{\Omega'}
	}{
		\trg{\Omega}\Xtot{\beta}\trg{\Omega'}
	}{et-tr-sin}
	\typerule{E\LP-silent}{
		\trg{\Omega}\xtot{\epsilon}\trg{\Omega'}
	}{
		\trg{\Omega}\Xtot{}\trg{\Omega'}
	}{et-tr-silent}
	\typerule{E\LP-trans}{
		\trg{\Omega}\Xtot{\OB{\beta}}\trg{\Omega''}
		&
		\trg{\Omega''}\Xtot{\OB{\beta'}}\trg{\Omega'}
	}{
		\trg{\Omega}\Xtot{\OB{\beta}\cdot\OB{\beta'}}\trg{\Omega'}
	}{et-tr-trans}
\end{center}
\botrule

\begin{center}
	\typerule{\LP-Traces-Initial}{
		\trg{n}\in\trg{\OB{n}} \iff \trg{n\mapsto v;\eta}\in\trg{H}
		& 
		\trg{main}\notin\dom{\trg{\OB{F}}}
		&
		\trg{C}=\trg{\OB{F};\OB{I};\OB{E}}
	}{
		\TInitt{C} = \trg{(C; H ; \OB{n} \triangleright \proc{\unk}{main})}
	}{etr-ini}	
\end{center}
\begin{align*}
	\trt{C} =&\ \myset{\trg{\OB{\beta}}}{ \trg{\TInitt{C} \Xtol{\OB{\beta}} \_ }}
\end{align*}

\subsubsection{Results about the Trace Semantics}
The following results hold for $\trg{C_1}=\compfac{\src{C_1}}$ and $\trg{C_2}=\compfac{\src{C_2}}$.

\begin{property}[Heap locations]\label{prop:heap loc}
	AS mentioned, the trace semantics carries the whole shared heap: locations created by the compiled component and then passed to the context and locations created by the context and passed to the compiled component.
	We can really partition the heap as follows then:
	\begin{center}
		\begin{tabular}{c | c| c| }
			location \textbackslash creator
			& \compfac{\src{C}} 
			& $\ctxt{}$
			\\
			\hline
			private
			&(1) to \compfac{\src{C}}
			&(2) to $\ctxt{}$
			\\
			\hline
			shared
			&(3) with $\ctxt{}$
			&(4) with \compfac{\src{C}}
		\end{tabular}
	\end{center}

	Now, for compiled components there never are locations of kind 3.
	That is because those locations are masked and never passed, never made accessible to the context.
	So really, the trace semantics only collects locations of kind 4 on the traces.
\end{property}

\begin{lemma}[Correctness]\label{thm:traces-corr}
	\begin{align*}
		\text{if }
		&
		\trg{C_1 \ceqt C_2}
		\\
		\text{then }
		&
		\trg{\trt{C_1}=\trt{C_2}}
	\end{align*}
\end{lemma}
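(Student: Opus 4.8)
The plan is to prove the statement in the following positive, membership-based form, which is logically equivalent to the contrapositive and slightly cleaner to drive. I would fix an arbitrary trace $\trg{\OB{\beta}}\in\trt{C_1}$ and show $\trg{\OB{\beta}}\in\trt{C_2}$; by the symmetric argument this gives $\trt{C_1}\subseteq\trt{C_2}$ and $\trt{C_2}\subseteq\trt{C_1}$, hence $\trt{C_1}=\trt{C_2}$. The device is a \emph{trace tester}: from $\trg{\OB{\beta}}$ I would build a single \LP context $\ctxt{}$ with the defining property that, for every component $\trg{C}$, $\ctxt{}\hole{\trg{C}}\divrt \iff \trg{\OB{\beta}}\in\trt{\trg{C}}$. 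Granting this construction, the lemma closes in one line: $\trg{\OB{\beta}}\in\trt{C_1}$ gives $\ctxt{}\hole{\trg{C_1}}\divrt$; unfolding the hypothesis $\trg{C_1}\ceqt\trg{C_2}$ (\Cref{def:ceq}) yields $\ctxt{}\hole{\trg{C_2}}\divrt$; and the defining property of $\ctxt{}$ then gives $\trg{\OB{\beta}}\in\trt{C_2}$.

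The heart of the proof is therefore the construction of $\ctxt{}$ and the verification of its defining property, which is a \emph{trace-definability} result entirely analogous to \Cref{thm:backtr-corr} but aimed at \LP contexts rather than source ones. I would assemble $\ctxt{}$ action by action along $\trg{\OB{\beta}}$, in the same spirit as \Cref{ex:backtr-fatr}: at each incoming ($\trg{?}$) action it performs the corresponding call or return-back, first reconstructing the shared heap recorded in that action; at each outgoing ($\trg{!}$) action it regains control and \emph{checks} that the callee, the value, and the shared portion of the heap produced by the component coincide with those prescribed by $\trg{\OB{\beta}}$, dropping to normal termination on any mismatch; and once the final action of $\trg{\OB{\beta}}$ has been matched it enters an explicit divergence loop. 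Two features of \LP are essential here and are exactly what the language extensions of \Cref{sec:fac-langs} supply: because compiled components never expose their own locations directly but only masked handles (\Cref{prop:heap loc}), the tester reconstructs and compares only the shared heap, using \trg{isloc} and \trg{ispair} to traverse reachable structure and \trg{eqcap} against the constant capability \trg{k_{com}}; and a $\trg{\wrl{v,n}}$ action is realised by calling the component's public \trg{write} function, reads being subsumed by later actions exactly as in the \trg{write}/\trg{read} trace rules.

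Establishing $\ctxt{}\hole{\trg{C}}\divrt \iff \trg{\OB{\beta}}\in\trt{\trg{C}}$ then proceeds by induction on $\trg{\OB{\beta}}$, maintaining an invariant that couples the operational state of $\ctxt{}\hole{\trg{C}}$ after consuming a prefix with the corresponding trace-semantics state $\trg{\Theta}$: the tester's shadow list of handles stays in sync with $\trg{\OB{n}}$ and the shared heaps agree. The validity side-conditions of the trace rules, namely $\fun{valid}{\trg{v},\trg{H}}$, $\fun{validHeap}{\cdots}$ and $\fun{reachable}{\cdots}$, are precisely what guarantees that each move the tester makes is a legal $\trg{?}$ transition and, conversely, that each $\trg{!}$ move of the component is observable by the tester; the two directions of the bi-implication follow by reading this correspondence forwards and backwards, with the divergence loop turning ``$\trg{\OB{\beta}}$ fully matched'' into $\divrt$ and a failed check or premature convergence into non-divergence.

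I expect the main obstacle to be this definability lemma, and within it the masking discipline. Unlike the trace semantics of \LP used for \rscomp, whose actions carry the \emph{entire} heap and can be matched by a context with direct access to every location, here the tester sees only masked handles and must route all inspection and mutation of component-owned state through the compiler-inserted \trg{read}/\trg{write} interface while keeping its handle list aligned with $\trg{\OB{n}}$. Making the heap reconstruction and handle comparison line up exactly with $\fun{validHeap}{\cdots}$ at every context switch, uniformly across the four action shapes together with the $\trg{\wrl{v,n}}$, $\tert$ and $\divt$ labels, is where the real work lies; this is also the concrete sense in which the \facomp proof is substantially heavier than the corresponding \rscomp argument.
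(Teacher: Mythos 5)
Your high-level machinery (replaying $\trg{?}$-actions by reconstructing the shared heap, checking $\trg{!}$-actions, routing access to masked handles through the component's public \trg{read}/\trg{write} functions and \trg{eqcap} against \trg{k_{com}}, realising $\trg{\wrl{v,n}}$ via a call to \trg{write}) matches the paper's sketch, which indeed says the construction is analogous to the backtranslation. But your logical skeleton differs from the paper's, and the difference is where the proof breaks. The paper argues by contraposition: if $\trt{C_1}\neq\trt{C_2}$, take the longest common prefix $\trg{\OB{\beta}}$ with first differing actions $\trg{\beta_1}\neq\trg{\beta_2}$, and build \emph{one} context tailored to that \emph{pair} of actions, choosing the termination/divergence polarity case by case so that (wlog) $\trg{A\hole{C_1}}\termt$ while $\trg{A\hole{C_2}}\divrt$. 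You instead posit, per trace, a tester with the uniform biconditional $\ctxt{}\hole{\trg{C}}\divrt \iff \trg{\OB{\beta}}\in\trt{\trg{C}}$, and this biconditional is not achievable in \LP. Concretely, take any $\trg{\OB{\beta}}$ ending in $\tert$: when the component gets stuck, the \emph{whole program} halts and the context never regains control, so your tester cannot convert ``fully matched'' into divergence --- the left side of your biconditional is false while the right side is true. Flipping the polarity for such traces (halt iff member) fails as well: a mismatch in which the component gets stuck at the \emph{wrong} point (e.g., stuck immediately after the initial call when the trace prescribes a callback) also halts the whole program, wrongly signalling membership. No single context realises your defining property for these traces, so the one-line closing argument collapses exactly where $\tert$ (and, with a symmetric analysis, $\divt$) is involved.

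The gap is repairable, but the repair pushes you back toward the paper's structure: transfer of $\tert$- and $\divt$-terminated traces needs separately designed testers whose polarity depends on the terminal action shape --- e.g., to transfer $\trg{\OB{\beta'}}\cdot\tert$, drive the component along $\trg{\OB{\beta'}}$ and \emph{diverge on any further observable action}, so that non-divergence of the whole program (which transfers under \Cref{def:ceq}) forces the component to be stuck --- i.e., a case analysis on the next action, which is precisely the paper's pairwise differentiator. A second, smaller flaw: you state the biconditional ``for every component $\trg{C}$'' yet your tester relies on the masking discipline (component locations never escape unmasked), which holds only for \emph{compiled} components; the paper explicitly restricts the lemma to $\trg{C_1}=\compfac{\src{C_1}}$, $\trg{C_2}=\compfac{\src{C_2}}$, and your quantification should be restricted likewise, which suffices for the closing argument.
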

\begin{proofsketch}
	By contraposition:
	\begin{align*}
		\text{if }
		&
		\trg{\trt{C_1}\neq\trt{C_2}}
		\\
		\text{then }
		&
		\trg{\exists\trg{A}.~ A\hole{C_1}\termt \wedge A\hole{C_2}\divrt} (\mi{wlog})
	\end{align*}

	We are thus given $\trg{\OB{\beta_1}} = \trg{\OB{\beta}\cdot\beta_1}$ and $\trg{\OB{\beta_2}} = \trg{\OB{\beta}\cdot\beta_2}$ and $\trg{\beta_1}\neq\trg{\beta_2}$.

	We can construct a context \trg{A} that replicates the behaviour of \trg{\OB{\beta}} and then performs the differentiation.
	
	This is a tedious procedure that is analogous to existing results~\cite{javajr,llfatr-j} and analogous to the backtranslation of \Cref{sec:general-backtr-trace-based}.

	The actions only share the heap that is reachable from both sides, the heap that is private to the component is never touched, so reconstructing the heap is possible.
	The reachability conditions on the heap also ensure this.

	The differentiation is based on differences on the actions which are visible and reachable, so that is also possible.
\end{proofsketch}

\begin{lemma}[Completeness]\label{thm:traces-compl}
\begin{align*}
		\text{if }
		&
		\trg{\trt{C_1}=\trt{C_2}}
		\\
		\text{then }
		&
		\trg{C_1 \ceqt C_2}
	\end{align*}
\end{lemma}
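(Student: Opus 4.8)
The plan is to prove the contrapositive: assuming $\trg{C_1}\nceqt\trg{C_2}$, I will exhibit a trace belonging to exactly one of $\trt{C_1}$ and $\trt{C_2}$, contradicting $\trt{C_1}=\trt{C_2}$. Unfolding \Cref{def:ceq}, contextual inequivalence supplies a context \trg{A} that distinguishes the two components, so (without loss of generality) $\trg{A\hole{C_1}}\divrt$ while $\trg{A\hole{C_2}}$ converges. The whole argument rests on an \emph{adequacy} (trace composition/decomposition) lemma connecting the whole-program operational semantics of $\trg{A\hole{C_i}}$ with the labelled component traces in $\trt{C_i}$; this is the dual of the context-construction already sketched for \Cref{thm:traces-corr}, and I would reuse that machinery.

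First I would state and prove the \emph{decomposition} half of adequacy: every whole-program run of $\trg{A\hole{C}}$ factors into a component trace $\trg{\OB{\beta}}\in\trt{C}$ and a matching context co-trace produced by \trg{A}. Control alternates across the component boundary exactly at the call/return points the trace semantics records, so each crossing emits the corresponding action. The essential point is that these actions carry only the \emph{shared, reachable} fragment of the heap and that, because $\trg{C}=\compfac{\src{C}}$ masks every location it allocates before exposing it (\Cref{prop:heap loc}), the context never gains access to, and never observes, any component-private location. Hence the sequence of heaps visible to \trg{A} during the real run coincides with the restricted heaps $\trg{H|_{\OB{n}}}$ recorded on $\trg{\OB{\beta}}$, and the outcome is faithfully tagged by the terminal convergence action \tert or divergence action \divt. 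The \emph{composition} half (again borrowed from the construction used for \Cref{thm:traces-corr}) shows, conversely, that whenever $\trg{\OB{\beta}}\in\trt{C}$ and \trg{A} can produce the matching co-trace, the composite $\trg{A\hole{C}}$ actually realises the run ending in that terminal action.

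With adequacy in hand the conclusion is short. Let $\trg{\OB{\beta_1}}\in\trt{C_1}$ be the maximal trace of $\trg{A\hole{C_1}}$, unique by determinism of the semantics; by decomposition it ends in \divt and \trg{A} produces its co-trace. If $\trt{C_1}=\trt{C_2}$ then $\trg{\OB{\beta_1}}\in\trt{C_2}$ as well, and since the co-trace producible by the \emph{fixed} context \trg{A} depends only on the sequence of shared heaps — identical along $\trg{\OB{\beta_1}}$ — composition lets \trg{A} drive $\trg{C_2}$ along the very same $\trg{\OB{\beta_1}}$, so $\trg{A\hole{C_2}}\divrt$. This contradicts convergence of $\trg{A\hole{C_2}}$, so $\trt{C_1}\neq\trt{C_2}$, as required.

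The main obstacle is the adequacy lemma itself, and within it the precise bookkeeping that the trace's shared-heap fragment, together with the masking discipline of \Cref{prop:heap loc}, is \emph{exactly} what the context can observe and influence: one must verify that no private location leaks into an action, that the context cannot forge a capability to reach a masked location, and that the reachability restriction $\fun{reachable}{\cdot}$ used to compute the shared set $\trg{\OB{n}}$ matches the locations the context could actually touch. Pinning this correspondence down tightly — so that equal trace sets force identical context behaviour and therefore identical termination/divergence outcomes — is where essentially all the effort lies; the determinism and contraposition steps around it are routine.
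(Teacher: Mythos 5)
Your proposal is correct and takes essentially the same route as the paper: the paper's proof also fixes a distinguishing context by contradiction and argues that, because contexts are deterministic and the semantics forbids guessing unshared locations, any behavioural difference must surface either in a call parameter or in the value of a shared reachable location, both of which are recorded on the trace labels, contradicting trace-set equality. Your explicit adequacy (composition/decomposition) lemma is just a more structured rendering of that same observation (with one cosmetic imprecision: if divergence happens inside the context the trace ends after the last outgoing action rather than in a divergence action, but determinism of the fixed context covers that case identically).
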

\begin{proofsketch}
	By contradiction let us assume that
	\begin{align*}
		\trg{\exists\trg{A}.~ A\hole{C_1}\Downarrow \wedge A\hole{C_2}\divrt}  (\mi{wlog})
	\end{align*}

	Contexts are deterministic, so they cannot behave differently based on the values of locations that are never shared with \trg{C_1} or \trg{C_2}.

	The semantics forbids guessing, so a context will never have access to the locations that \trg{C_1} or \trg{C_2} do not share.

	Thus a context can exhibit a difference in behaviour by relying on something that \trg{C_1} modified unlike \trg{C_2} and that can be:
	\begin{itemize}
		\item a parameter passed in a call.

			This contradicts the hypothesis that the trace semantics is the same as that parameter is captured in the \trg{\cbh{f}{v}{H}} label.

		\item the value of a shared location.

			This contradicts the hypothesis that the trace semantics is the same as all locations that are reachable both by the context and by \trg{C_1} and \trg{C_2} are captured on the labels
	\end{itemize}
	Having reached a contradiction, this case holds.
\end{proofsketch}

\begin{lemma}[Full abstraction of the trace semantics for compiled components]\label{thm:fatr-compiled-comp}
	\begin{align*}
		\trg{\trt{\compfac{\src{C_1}}}=\trt{\compfac{\src{C_2}}}}
		\iff&\
		\trg{\compfac{\src{C_1}} \ceqt \compfac{\src{C_2}}}
	\end{align*}
\end{lemma}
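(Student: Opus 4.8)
The plan is to read off \Cref{thm:fatr-compiled-comp} as the conjunction of the two results that immediately precede it, \emph{correctness} (\Cref{thm:traces-corr}) and \emph{completeness} (\Cref{thm:traces-compl}), specialised to the images of the compiler $\compfac{\cdot}$. Both of those lemmas are already stated under the standing convention $\trg{C_1}=\compfac{\src{C_1}}$ and $\trg{C_2}=\compfac{\src{C_2}}$, so their hypotheses match the components in the statement verbatim and no reformulation is needed; the entire content of the final lemma is the packaging of the two implications into a single biconditional.

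Concretely, I would discharge the two directions independently. For the $\Leftarrow$ direction I instantiate \Cref{thm:traces-corr} at $\trg{C_1}=\compfac{\src{C_1}}$ and $\trg{C_2}=\compfac{\src{C_2}}$, obtaining $\compfac{\src{C_1}}\ceqt\compfac{\src{C_2}} \Rightarrow \trt{\compfac{\src{C_1}}}=\trt{\compfac{\src{C_2}}}$. For the $\Rightarrow$ direction I instantiate \Cref{thm:traces-compl} at the same two components, obtaining $\trt{\compfac{\src{C_1}}}=\trt{\compfac{\src{C_2}}} \Rightarrow \compfac{\src{C_1}}\ceqt\compfac{\src{C_2}}$. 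Conjoining the two implications yields exactly the stated equivalence, which closes the proof.

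Because the two halves are cited wholesale, the genuine difficulty does not reside in this lemma but in the invariants that license those citations, and I expect the real obstacle to be the heap-partition discipline of compiled code recorded in \Cref{prop:heap loc}. That property is precisely what tunes the trace semantics to the right granularity. Completeness (the $\Rightarrow$ direction) relies on the fact that a context can never observe a location the compiled code has not deliberately shared: address guessing is forbidden by the target semantics and every component-created location is masked, so only context-created, mutually reachable locations (the ``kind-4'' cells of \Cref{prop:heap loc}) ever appear on a label. Correctness (the $\Leftarrow$ direction) instead relies on the dual fact that any disagreement between two trace sets can be realised by a concrete distinguishing context, which rests on the reachability side-conditions of the trace rules guaranteeing that the shared heap recorded on a trace can be faithfully reconstructed and then driven to the differing action, in the style of the backtranslation of \Cref{sec:general-backtr-trace-based}. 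Were \Cref{prop:heap loc} to fail --- i.e.\ if compiled code ever leaked a private, self-allocated location --- trace equality would start distinguishing contextually equivalent components through their hidden internal state, and the $\Rightarrow$ direction would collapse; confirming that this never happens for $\compfac{\cdot}$ is therefore the load-bearing step behind the whole statement.
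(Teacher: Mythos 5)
Your proposal is correct and takes essentially the same route as the paper, whose entire proof of this lemma is the one-line citation of \Cref{thm:traces-corr} and \Cref{thm:traces-compl}, exactly the two instantiations you spell out. Your closing observation that the load-bearing content lives in those two lemmas (and in the heap-sharing discipline of \Cref{prop:heap loc} that underpins them) accurately reflects how the paper distributes the difficulty.
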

\begin{proof}
	By \Cref{thm:traces-corr,thm:traces-compl}.
\end{proof}

\subsection{The Compiler \compfac{\src{\cdot}}}

\begin{align*}
	\tag{\compfac{\src{\cdot}}-Comp}
		\compfac{
			\src{\OB{F} ; \OB{I} ; \OB{E}}
		} &= \trg{
			\begin{aligned}[t]
				&
				\compfac{\src{\OB{F}}}, \trg{read(x)\mapsto s_{read}}, \trg{write(x)\mapsto s_{write}},
				\\
				&\ 
					\trg{mask(x)\mapsto s_{mask}}, \trg{unmask(x)\mapsto s_{unmask}}, \cdots ; 
				\\
				&
				\compfac{\src{\OB{I}}} ; 
				\\
				&
				\compfac{\src{\OB{E}}}, \trg{read}, \trg{write} ; 
				\\
				&
				\trg{k_{root}},\trg{k_{com}}
			\end{aligned}
		}
	\\
	\tag{\compfac{\src{\cdot}}-Function}
		\compfac{
			\src{f(x) \mapsto s;\ret}
		} &= \trg{
			f(x)\mapsto
				\begin{aligned}[t]
					&
					\trg{s_{add}(x);}
					\\
					&
					\trg{s_{pre} ;}
					\\
					&
					\compup{\src{s}};
					\\
					&
					\trg{s_{post} ;}
					\\
					&
					\trg{\ret}
				\end{aligned}
		}
	\\
	\tag{\compfac{\src{\cdot}}-Interfaces}
		\compfac{\src{f}} &= \trg{f}
	\\
	\tag{\compfac{\src{\cdot}}-Exports}
		\compfac{\src{f}} &= \trg{f}
	\\
	\text{Expression translation}
	&\
	\text{ unmodified:} \compfac{\src{e}} = \compup{e}
	\\
	\text{Statement translation}
	&\
	\text{ unmodified except for }
	\\
	\tag{\compfac{\src{\cdot}}-New}  \label{tr:compfac-new}
		\compfac{
			\src{\letnew{x}{e}{s}}
		}
		=&\
		\trg{
			\begin{aligned}[t]
				&
				\letnewt{\trg{x_{loc}}}{\compfac{\src{e}}}{
				\\
				&\
					\lethide{\trg{x_{cap}}}{x_{loc}}{
					\\
					&\ \
					\trg{s_{register}(x_{loc},x_{cap});}
					\\
					&\ \ 
						\letint{\trg{x} ~}{~ \trg{\pair{x_{loc},x_{cap}}}}{\compfac{\src{s}}}
					}
				}
			\end{aligned}
		}	
	\\
	\tag{\compfac{\src{\cdot}}-Call}  \label{tr:compfac-call}
		\compfac{
			\src{\call{f}~{e}}
		}
		=&\
		\trg{
			\letin{x}{\compfac{e}}{
			\trg{s_{add}(x);s_{post}; \call{f}~x; s_{pre}}}
		}	
\end{align*}
So the compiler is mostly unchanged.

The compiled code will maintain the following invariant:
\begin{itemize}
	\item no locations (even though protected by capabilities) are ever made accessible ``in clear'' to the context;

	\item ``made accessible'' means either passed as a parameter or through a shared location;

	\item instead, before passing control to the context, all component-created locations that are shared with the context are masked, i.e., their representation \trg{\pair{n,k}} is replaced with \trg{\pair{n',k_{com}}}, where \trg{n'} is their index in the list of shared locations that the compiled component keeps.

	\item when receiving control from the context, the compiled component ensures that all component-created locations that are shared are unmasked, i.e., upon regaining control the component replaces all values \trg{\pair{n',k_{com}}} that are sub-values of reachable locations with \trg{\pair{n,k}}, which is  the \trg{n'}th pair in the list of component-allocated locations;

	\item what is a ``component-shared'' location?
	A shared location is a pair \trg{\pair{n,k}} where (i) \trg{k} is a capability created by the compiled component and (ii) the pair is stored in the heap at a location that the context can dereference (perhaps not directly).
	
	\item In order to define what is a shared location, the compiled component keeps a list of all the locations that have been passed to it and that the context created.
	These locations can only be in \trg{\pair{n,\_}} form, where \trg{\_} is either a capability or not depending whether the context hid the location.
	These locations can only be pairs since we know that a compiled component will only use pairs as locations, mimicking the source semantics.

	We normally do not know what locations will be accessed, but given a parameter that is a location, we can scan its contents to understand what new locations are passed.
	
	\item The compiled component thus can keep a list of ``shared'' locations: those whose contents are accessible both by the context and by itself.
	These locations created by the context are acquired as parameters or as locations reachable by a parameter.
	These locations created by the component are tracked as those hidden with a component-created capability and reachable from a shared location.

	\item The only concern that can arise is if we create location \trg{n} and then add it to the list of shared locations at index \trg{n'}.
	That location \trg{\pair{n,k}} would be masked as \trg{\pair{n',k_{com}}}, which grants the context direct access to it.
	This is where we need to use \trg{k_{com}} as leaking different capabilities would lead to differentiation between components.
	Fortunately, the context starts execution and, in order to call the compiled component, it must allocate at least one location, so this problem cannot arise.
\end{itemize}

\subsubsection{Syntactic Sugar}
The languages we have do not let us return directly a value.
In the following however, for readability, we write 
\begin{align*}
	\trg{\letin{x}{func~v}{s}}
\end{align*}
to intend: call function \trg{func} with parameter \trg{v} and store its returned value in \trg{x} for use in \trg{s}.

We indicate how that statement can be expressed in our language with the following desugaring:
\begin{align*}
	&
	\trg{\letnew{y}{0}{\letin{z}{\pair{v,y}}{\call{func}~z;\letin{x}{!\projtwo{z}~with~0}{s}}}}
\end{align*}

\subsubsection{Support Data Structures}
The compiler relies on a number of data structures it keeps starting from location \trg{0}, which is accessible via \trg{k_{root}}.

These data structures are:
\begin{itemize}
	\item a list of capabilities, which we denote with \trg{\OB{K}}.
	These capabilities are those that the compiled component has allocated.

	\item a list of component-allocated locations, which we denote with \trg{\OB{L}}.
	These are locations \trg{\pair{n,k}} that are created by the compiled component and whose \trg{k} are elements of \trg{\OB{K}}

	\item a list of shared locations, which we denote with \trg{\OB{S}}.
	These are either (i) locations that are created by the context and passed to the compiled component or (ii) locations that are created by the compiled component and passed to the context.

\end{itemize}
Given a list $L$ of elements $e$, we use these helper functions:
\begin{itemize}
	\item $\fun{indexof}{L,e}$ returns $n$, the index of $e$ in $L$, or $0$ if $e$ is not in $L$;
	\item $L(n)$ returns the nth element $e$ of $L$ or $0$ if the list length is shorter than $n$;
	\item $L::e$ if $e$ is not in $L$, it adds element $e$ to the list, increasing its length by 1;
	\item $\fun{rem}{L,e}$ removes element $e$ from $L$;
	\item $e\in L$ returns true or false depending on whether $e$ is in $L$ or not.
\end{itemize}
We keep this abstract syntax for handling lists and do not write the necessary recursive functions as they would only be tedious and hardly readable.
Realistically, we would also need a temporary list for accumulating results etc, again, this is omitted for simplicity and readability.

\subsubsection{Support Functions}
\paragraph{Read}
\begin{align*}
	\trg{s_{read}} =&\
		\begin{aligned}[t]
			&
			\letint{\trg{x_{n}}}{\trg{\projone{\projone{x}}}}{
			\\
			&\ 
				\letint{\trg{x_{k}}}{\trg{\projtwo{\projone{x}}}}{
			\\
			&\ \ 
					\letint{\trg{x_{real}}}{\trg{\OB{L}(x_n)}}{
			\\
			&\ \ \ 
						\letint{\trg{x_{dest}}}{\trg{\projone{\projtwo{x}}}}{
						\\
						&\ \ \ \
							\letint{\trg{x_{dcap}}}{\trg{\projtwo{\projtwo{x}}}}{
							\\
							&\ \ \ \ \ 
								\letint{\trg{x_{val}}}{\trg{ !x_{real}~with~x_{k}}}{
								\\
								&\ \ \ \ \ \ 
								\trg{x_{dest} := x_{val}~with~x_{dcap}}
								}
							}
						}
					}
				}
			}
		\end{aligned}
\end{align*}
In order to read a location \trg{\pair{n,k}}, we receive that as the first projection of parameter \trg{x}.
Because we do not explicitly return values, we need the second projection of \trg{x} to contain the destination where to target receives the result of the read.

We split the pair in the masking index \trg{x_n} and in the capability to access the location \trg{x_k}.
Then we lookup the location in the list of component-created locations and return its value.
We do not need to mask its contents as we know that they have already been masked when this location was shared with the context (line 5 of the postamble).
We do not need to add its contents to the list of shared locations as that is already done in lines 2 and 3 of the postamble.

\paragraph{Write}
\begin{align*}
	\trg{s_{write}} =&\
		\begin{aligned}[t]
			&
			\letint{\trg{x_{n}}}{\trg{\projone{\projone{x}}}}{
			\\
			&\ 
				\letint{\trg{x_{k}}}{\trg{\projtwo{\projone{x}}}}{
			\\
			&\ \ 
					\letint{\trg{x_{real}}}{\trg{\OB{L}(x_n)}}{
			\\
			&\ \ \ 
						\trg{x_{real}:= \projtwo{x} ~with~x_{k};}
					}
				}
			}
		\end{aligned}
\end{align*}
In order to write value \trg{v} a location \trg{\pair{n,k}} we receive a parameter structured as follows: $\trg{x}\equiv\trg{\pair{n,k},v}$.
Then we unfold the elements of the parameter and lookup element \trg{n} in the list of component-defined locations.
We use this looked-up element to write the value \trg{v} there.

We do not need to mask \trg{v} because it cannot point to locations that are created by the compiled component.

At this stage, \trg{v} may contain new locations created by the context and that are now shared.
We do not add them now to the list of shared locations because we know that upon giving control again to the compiled component, the preamble will do this.

\paragraph{Mask}
\begin{align*}
	\trg{s_{mask}} =&\
		\begin{aligned}[t]
			&
			\forall\trg{\pair{n,k}}\in\trg{x.~ isloc(\pair{n,k})}
			\\
			&
			\trg{if~k\in\OB{K}}
			\\
			&\ 
			\trg{replace~\pair{n,k}~with~ \pair{\fun{indexof}{\OB{L},n},k_{com}}}
		\end{aligned}
\end{align*}
We use the abstract construct \trg{replace ...} to indicate the following.
We want to keep the value passed as parameter \trg{x} unchanged but replace its subvalues that are pairs and, more specifically, component-created locations, with a pair with its location masked to be the index in the list of component-allocated locations.

This can be implemented by checking the sub-values of a value via the \trg{ispair} and \trg{isloc} expressions, we omit its details for simplicity.
To ensure $\in\trg{\OB{K}}$ is implementable, we use the \trg{eqcap} expression.

Masked locations cannot mention their capability or they would leak this information and generate different traces for equivalent compiled programs.

\paragraph{Unmask}
\begin{align*}
	\trg{s_{unmask}} =&\
		\begin{aligned}[t]
			&
			\forall\trg{\pair{n,k}}\in\trg{x}
			\\
			&
			\trg{if~k == k_{com}}
			\\
			&\ 
			\trg{replace~\pair{n,k}~with~ \OB{L}(n)}
		\end{aligned}
\end{align*}
In the case of unmasking, we receive a value through parameter \trg{x} and we know that there may be subvalues of it of the form \trg{\pair{n,k}} where \trg{n} is an index in the component-created shared locations.
So we lookup the element from that list and replace it in \trg{x}.

\subsubsection{Inlined Additional Statements (Preamble, Postamble, etc)}
\paragraph{Adding}
\begin{align*}
	\trg{s_{add}(x) } =&\
		\begin{aligned}[t]
			&
			\trg{if ~{isloc(x)}~ then}
			\\
			&\
			\trg{\OB{S}::x};
			\\			
			&\
			\trg{\ifte{\projtwo{x}\in\OB{K}}{\OB{L}::x}{skip}}
		\end{aligned}
\end{align*}
This common part ensures that the parameter \trg{x} is added to the list of shared locations (line 1) and then, if the capability is locally-created, it is also added to the list of locally-shared locations (line 2).

The second line is for when this code is called before a \compfac{\src{\call{f}}}.

\paragraph{Registration}
\begin{align*}
	\trg{s_{register}(x_{loc},x_{cap})} =&\
		\begin{aligned}[t]
			&
			\trg{\OB{K}::x_{cap};}
		\end{aligned}
\end{align*}
This statement registers capability \trg{x_{cap}} in the list of component-created capabilities. %

\paragraph{Preamble}
The preamble is responsible of adding all context-created locations to the list of shared locations and to ensure that all contents of shared locations are unmasked, as the compiled code will operate on them.
\begin{align*}
	\trg{s_{pre}} =&\
		\begin{aligned}[t]
			&
			\forall \trg{\pair{n,k}}\in\fun{reach}{\trg{\OB{S}}}.~ \trg{isloc(\pair{n,k})}
			\\
			&
			\trg{\ifte{\pair{n,k}\notin\OB{S}}{\OB{S}::\pair{n,k};}{skip}}
			\\
			&
			\forall\trg{\pair{n,k}}\in\trg{\OB{S}.~ isloc(\pair{n,k})}
			\\
			&\ \ 
			\trg{\letin{x}{unmask(!n~with~k)}{n := x~with~k}}
		\end{aligned}
\end{align*}
First any location that is reachable from the shared locations (line 1) and that is not a shared location already is added to the list of shared locations (line 2).
By where this code is placed we know that these new locations can only be context-created.

Then, for all shared locations (line 3), we unmask their contents using the \trg{unmask} function (line 4).

\paragraph{Postamble}
The postamble is responsible of adding all component-created locations to the list of shared locations and of component-created shared locations and to ensure that all shared locations are masked as the context will operate on them.
\begin{align*}
	\trg{s_{post}(x)} =&\
		\begin{aligned}[t]
			&
			\forall \trg{\pair{n,k}}\in\fun{reach}{\trg{\OB{S}}}.~ \trg{isloc(\pair{n,k})}
			\\
			& \ \
			\trg{\ifte{\pair{n,k}\notin\OB{S}}{\OB{S}::\pair{n,k};\OB{L}::\pair{n,k};}{skip}}
			\\
			&
			\forall \trg{\pair{n,k}}\in\trg{\OB{S}.~ isloc(\pair{n,k})}
			\\
			&\ \ 
			\trg{\letin{x}{mask(!n~with~k)}{n := x~with~k}}
		\end{aligned}
\end{align*}
Then for all locations that are reachable from a shared location (line 1), and that are not already there (line 2), we add those locations to the list of shared locations and to the list of component-created shared locations (line 2).
Then for all shared locations (line 3), we mask their contents using the \trg{mask} function (line 4).

\subsection{The Trace-based Backtranslation: \backtrfac{\cdot}}

Value backtranslation is the same, so $\backtrfac{\trg{v}}=\backtrup{\trg{v}}$.

\subsubsection{The Skeleton}
The skeleton is almost as before (\Cref{sec:backtr-skel}), with the only addition of another list \src{B} explained below.

The only additions are two functions \src{terminate} and \src{diverge}, which do what their name suggests:
\begin{align*}
	&\src{terminate(x)\mapsto fail}
	\\
	&\src{diverge(x)\mapsto \call{diverge}~0}
\end{align*}

\subsubsection{The Common Prefix}
\begin{description}
	\item[\trg{\clh{f}{v}{H}}] 

	As in \Cref{tr:backtr-call}, we keep a list of the context-allocated locations and we update them.
	Also, we extend that list.

	\item[\trg{\rbh{}{}{H}}] 

	As above.

	\item[\trg{\cbh{f}{v}{H}}] 

	This is analogous to \Cref{tr:backtr-callback-loc} but with a major complication.

	Now this is complex because in the target we don't receive locations \trg{\pair{n,k}} from the compiled component, but masked indices \trg{\pair{i,k_{com}}}. (using \trg{i} as a metavariable for natural numbers outputted by the masking function)
	We need to extract them based on where they are located in memory, knowing that the same syntactic structure is maintained in the source.
	So what before was relying on the relation on values $\src{\ell}\relatebeta\trg{\pair{n,k}}$ now is no longer true because we have $\src{\ell}\relatebeta\trg{\pair{i,k_{com}}}$ which cannot hold.
	We need to keep a this relation as a runtime argument in the backtrnanslation and base it solely on the syntactic occurrencies of \trg{\pair{i,k_{com}}}.
	So this runtime relation maps target masking indices to source locations.

	So this relation is really a list \src{B} where each entry has the form \src{\pair{\backtrfac{\trg{i}},\ell}}.

	Intuitively, consider heap \trg{H} from the action.
	For all of its content $\trg{n\mapsto v:\eta}$, we do a structural analysis of \trg{v}.
	This happens at the meta-level, in the backtranslation algorithm.
	\trg{v} may contain subvalues of the form \trg{\pair{i,k_{com}}}, and accessing this subvalue we know is a matter of \trg{\projone{\cdot}} etc.
	So we produce an expression \src{e} with the same instructions (\src{\projone{\cdot}} etc) in the source in order to scan \emph{at runtime} the heap \src{H} we receive after the callback is done.
	(so after the action here is executed and where backtranslation code executes)

	Given that expression \src{e} evaluate to location \src{\ell}, we now need to add to \src{B} the pair \src{\pair{i,\ell}} (also given that \src{i}=\backtrfac{\trg{i}}).

	\item[\trg{\rth{}{}{H}}] 

	As above.

	\item[\trg{\wrl{v,i}}] 

	In this case we need to make use of the runtime-kept relation \src{B}.
	We need to know what source location \src{\ell} corresponds to \trg{i} so we can produce the correct code: \src{\ell:=\backtrfac{\trg{v}}}.

	\src{\ell} is looked up as \src{B(\backtrfac{i})}.

\end{description}

\subsubsection{The Differentiator}
The differentiator needs to put the right code at the right place.
The backtranslation already carries all necessary information to know what the right place is, this is as in previous work: the index of the action $i$ (at the meta level) stored in location \src{\ell_i} (at runtime) and the call stack \src{\OB{f}}

We now go over the various cases of trace difference and see that the differentiation code exists.
We consider \trg{\alpha_1} to be the last action in the trace of \compgen{\src{C_1}} while \trg{\alpha_2} is the last one of \compgen{\src{C_2}}, both made after a common prefix.

\begin{description}
	\item[$\trg{\alpha_1}= \trg{\cbh{f}{v}{H}}$ and $\trg{\alpha_2}= \trg{\cbh{g}{v}{H}}$] 

		Code \src{\ifte{!\ell_i == i }{ \call{terminate}~0 }{skip}} is placed in the body of \src{f} while the code \src{\ifte{!\ell_i == i }{\call{diverge}~0}{skip}} is placed in the body of \src{g}.

	\item[$\trg{\alpha_1}= \trg{\cbh{f}{v}{H}}$ and $\trg{\alpha_2}= \trg{\cbh{f}{w}{H}}$] 

		Code 
		\begin{align*}
			\src{
			\begin{aligned}
				&
				\iftes{\src{!\ell == i }}{
				\\
				&\ \src{\ifte{x == \backtrfac{\trg{v}}}{\call{terminate}~0}{\call{diverge}~0}}}{\skips}	
			\end{aligned}
			}
		\end{align*}
		 is placed in \src{f}.

	\item[$\trg{\alpha_1}= \trg{\cbh{f}{v}{H}}$ and $\trg{\alpha_2}= \trg{\cbh{f}{v}{H'}}$] 
		Here few cases can arise, consider $\trg{H}=\trg{H_1,n\mapsto v:\eta,H_2}$ and $\trg{H'}=\trg{H_1,n'\mapsto v':\eta',H_2'}$:
		\begin{description}
			\item[$\trg{v}\neq\trg{v'}$] 
				We use shortcut $\src{L_{glob}(n)}$ to indicate the location bound to name \src{n} in the list of shared locations (same as in \Cref{sec:backtr-single-action}).

				Code 
				\begin{align*}
					\src{
						\begin{aligned}
							&
							\iftes{\src{!\ell_i==i}}{ 
							\\
							&\
							\letins{\src{x}}{\src{L_{glob}(\backtrfac{\trg{n}})}}{
							\\
							&\ \ 
							\iftes{\src{x==\backtrfac{\trg{v}}}}{\src{\call{terminate}~0}}{\src{\call{diverge}~0}}} 
							\\
							&}{\skips}		
						\end{aligned}
					}
				\end{align*} is placed in the body of \src{f}.

			\item[$\trg{n}\neq\trg{n'}$] 
				In this case one of the two addresses must be bigger than the other.
				Wlog, let's consider $\trg{n}=\trg{n'+1}$.

				So \trg{H_1}= \trg{H_1',n'\mapsto v';\eta'}	and \trg{H_2'}=\trge (otherwise we'd have a binding for \trg{n} there).

				The code in this case must access the location related to \trg{n}, it will get stuck in one case and succeed in the other:

				\src{\ifte{!\ell_i==i}{ \letin{x}{update(\backtrfac{\trg{n}},0)}{\call{diverge}~0} }{skip}}

			\item[$\trg{\eta}\neq\trg{\eta'}$] 
				Two cases arise:
				\begin{itemize}
					\item the location is context-created: in this case the tag must be the same, so we have a contradiction;
					\item the location is component-created, but in this case we know that no such location is ever passed to the context (see \Cref{prop:heap loc}), so we have a contradiction.
				\end{itemize}
		\end{description}
	\item[$\trg{\alpha_1}= \trg{\rth{}{H}{}}$ and $\trg{\alpha_2}= \trg{\rth{H'}{}{}}$] 
		As above.
	\item[$\trg{\alpha_1}= \trg{\cbh{f}{v}{H}}$ and $\trg{\alpha_2}= \trg{\rth{H'}{}{}}$] 

		Code \src{\ifte{!\ell_i==i}{\call{terminate}~0}{\skips}} is placed at \src{f} while \src{\ifte{!\ell_i==i}{\call{diverge}~0}{\skips}} is placed at the top of \src{\OB{f}}.

	\item[$\trg{\alpha_1}= \trg{\cbh{f}{v}{H}}$ and $\trg{\alpha_2}= \trg{\tert}$] 

		Code \src{\ifte{!\ell_i==i}{\call{diverge}~0}{\skips}} is placed at \src{f}.

	\item[$\trg{\alpha_1}= \trg{\cbh{f}{v}{H}}$ and $\trg{\alpha_2}= \trg{\divt}$] 

		Code \src{\ifte{!\ell_i==i}{\call{terminate}~0}{\skips}} is placed at \src{f}.

	\item[$\trg{\alpha_1}= \trg{\rth{}{H}{}}$ and $\trg{\alpha_2}= \trg{\tert}$] 

		Code \src{\ifte{!\ell_i==i}{\call{diverge}~0}{\skips}} is placed at the top of \src{\OB{f}}.

	\item[$\trg{\alpha_1}= \trg{\rth{}{H}{}}$ and $\trg{\alpha_2}= \trg{\divt}$] 

		Code \src{\ifte{!\ell_i==i}{\call{terminate}~0}{\skips}} is placed at the top of \src{\OB{f}}.

	\item[$\trg{\alpha_1}= \trg{\tert}$ and $\trg{\alpha_2}= \trg{\divt}$] 
		Nothing to do, the compiled component performs the differentiation on its own.
\end{description}

\bibliography{./bibliobib.bib}


\begin{thebibliography}{00}


\ifx \showCODEN    \undefined \def \showCODEN     #1{\unskip}     \fi
\ifx \showDOI      \undefined \def \showDOI       #1{{\tt DOI:}\penalty0{#1}\ }
  \fi
\ifx \showISBNx    \undefined \def \showISBNx     #1{\unskip}     \fi
\ifx \showISBNxiii \undefined \def \showISBNxiii  #1{\unskip}     \fi
\ifx \showISSN     \undefined \def \showISSN      #1{\unskip}     \fi
\ifx \showLCCN     \undefined \def \showLCCN      #1{\unskip}     \fi
\ifx \shownote     \undefined \def \shownote      #1{#1}          \fi
\ifx \showarticletitle \undefined \def \showarticletitle #1{#1}   \fi
\ifx \showURL      \undefined \def \showURL       #1{#1}          \fi
\providecommand\bibfield[2]{#2}
\providecommand\bibinfo[2]{#2}
\providecommand\natexlab[1]{#1}
\providecommand\showeprint[2][]{arXiv:#2}

\bibitem[\protect\citeauthoryear{Abadi}{Abadi}{1999a}]%
        {abadiFa}
\bibfield{author}{\bibinfo{person}{Mart\'{\i}n Abadi}.}
  \bibinfo{year}{1999}\natexlab{a}.
\newblock \showarticletitle{Protection in programming-language translations}.
\newblock In \bibinfo{booktitle}{{\em Secure Internet programming}}.
  \bibinfo{publisher}{Springer-Verlag}, \bibinfo{address}{London, UK},
  \bibinfo{pages}{19--34}.
\newblock
\showISBNx{3-540-66130-1}
\showURL{%
\url{http://dl.acm.org/citation.cfm?id=380171.380174}}


\bibitem[\protect\citeauthoryear{Abadi}{Abadi}{1999b}]%
        {sec-typ-prot}
\bibfield{author}{\bibinfo{person}{Mart\'{\i}n Abadi}.}
  \bibinfo{year}{1999}\natexlab{b}.
\newblock \showarticletitle{Secrecy by Typing in Security Protocols}.
\newblock \bibinfo{journal}{{\em J. ACM\/}} \bibinfo{volume}{46},
  \bibinfo{number}{5} (\bibinfo{date}{Sept.} \bibinfo{year}{1999}),
  \bibinfo{pages}{749--786}.
\newblock
\showISSN{0004-5411}
\showDOI{%
\url{http://dx.doi.org/10.1145/324133.324266}}


\bibitem[\protect\citeauthoryear{Abadi, Budiu, Erlingsson, and Ligatti}{Abadi
  et~al\mbox{.}}{2009}]%
        {abadicfi}
\bibfield{author}{\bibinfo{person}{Mart\'{\i}n Abadi}, \bibinfo{person}{Mihai
  Budiu}, \bibinfo{person}{\'{U}lfar Erlingsson}, {and} \bibinfo{person}{Jay
  Ligatti}.} \bibinfo{year}{2009}\natexlab{}.
\newblock \showarticletitle{Control-flow integrity principles, implementations,
  and applications}.
\newblock \bibinfo{journal}{{\em ACM Trans. Inf. Syst. Secur.\/}}
  \bibinfo{volume}{13}, \bibinfo{number}{1}, Article \bibinfo{articleno}{4}
  (\bibinfo{year}{2009}), \bibinfo{numpages}{40}~pages.
\newblock
\showISSN{1094-9224}
\showDOI{%
\url{http://dx.doi.org/10.1145/1609956.1609960}}


\bibitem[\protect\citeauthoryear{Abadi, Fournet, and Gonthier}{Abadi
  et~al\mbox{.}}{2000}]%
        {Abadi:2000:APC:325694.325734}
\bibfield{author}{\bibinfo{person}{Mart\'{\i}n Abadi},
  \bibinfo{person}{C{\'e}dric Fournet}, {and} \bibinfo{person}{Georges
  Gonthier}.} \bibinfo{year}{2000}\natexlab{}.
\newblock \showarticletitle{Authentication Primitives and their Compilation}.
  In \bibinfo{booktitle}{{\em Proceedings of the 27th ACM SIGPLAN-SIGACT
  Symposium on Principles of Programming Languages}} {\em
  (\bibinfo{series}{POPL '00})}. \bibinfo{publisher}{ACM},
  \bibinfo{address}{New York, NY, USA}, \bibinfo{pages}{302--315}.
\newblock
\showISBNx{1-58113-125-9}
\showDOI{%
\url{http://dx.doi.org/10.1145/325694.325734}}


\bibitem[\protect\citeauthoryear{Abadi, Fournet, and Gonthier}{Abadi
  et~al\mbox{.}}{2002}]%
        {Abadi:2002:SIC:570966.570969}
\bibfield{author}{\bibinfo{person}{Mart\'{\i}n Abadi},
  \bibinfo{person}{C{\'e}dric Fournet}, {and} \bibinfo{person}{Georges
  Gonthier}.} \bibinfo{year}{2002}\natexlab{}.
\newblock \showarticletitle{Secure Implementation of Channel Abstractions}.
\newblock \bibinfo{journal}{{\em Information and Computation\/}}
  \bibinfo{volume}{174} (\bibinfo{year}{2002}), \bibinfo{pages}{37--83}.
\newblock
\showISSN{0890-5401}
\showDOI{%
\url{http://dx.doi.org/10.1006/inco.2002.3086}}


\bibitem[\protect\citeauthoryear{Abadi and Plotkin}{Abadi and Plotkin}{2012}]%
        {Abadi:2012}
\bibfield{author}{\bibinfo{person}{Mart\'{\i}n Abadi} {and}
  \bibinfo{person}{Gordon~D. Plotkin}.} \bibinfo{year}{2012}\natexlab{}.
\newblock \showarticletitle{On Protection by Layout Randomization}.
\newblock \bibinfo{journal}{{\em ACM Transactions on Information and System
  Security\/}}  \bibinfo{volume}{15}, Article \bibinfo{articleno}{8}
  (\bibinfo{date}{July} \bibinfo{year}{2012}), \bibinfo{numpages}{29}~pages.
\newblock
\showISSN{1094-9224}
\showDOI{%
\url{http://dx.doi.org/10.1145/2240276.2240279}}


\bibitem[\protect\citeauthoryear{Abate, Azevedo~de Amorim, Blanco, Evans,
  Fachini, Hritcu, Laurent, Pierce, Stronati, and Tolmach}{Abate
  et~al\mbox{.}}{2018}]%
        {catalinRSC}
\bibfield{author}{\bibinfo{person}{Carmine Abate}, \bibinfo{person}{Arthur
  Azevedo~de Amorim}, \bibinfo{person}{Roberto Blanco},
  \bibinfo{person}{Ana~Nora Evans}, \bibinfo{person}{Guglielmo Fachini},
  \bibinfo{person}{Catalin Hritcu}, \bibinfo{person}{Th\'{e}o Laurent},
  \bibinfo{person}{Benjamin~C. Pierce}, \bibinfo{person}{Marco Stronati}, {and}
  \bibinfo{person}{Andrew Tolmach}.} \bibinfo{year}{2018}\natexlab{}.
\newblock \showarticletitle{When Good Components Go Bad: Formally Secure
  Compilation Despite Dynamic Compromise}. In \bibinfo{booktitle}{{\em
  Proceedings of the 2018 ACM SIGSAC Conference on Computer and Communications
  Security}} {\em (\bibinfo{series}{CCS '18})}. \bibinfo{publisher}{Association
  for Computing Machinery}, \bibinfo{address}{New York, NY, USA},
  \bibinfo{pages}{1351--1368}.
\newblock
\showISBNx{9781450356930}
\showDOI{%
\url{http://dx.doi.org/10.1145/3243734.3243745}}


\bibitem[\protect\citeauthoryear{Abate, Blanco, Ciob{\^a}c{\u{a}}, Durier,
  Garg, Hritcu, Patrignani, Tanter, and Thibault}{Abate et~al\mbox{.}}{2020}]%
        {rc-rel}
\bibfield{author}{\bibinfo{person}{Carmine Abate}, \bibinfo{person}{Roberto
  Blanco}, \bibinfo{person}{Stefan Ciob{\^a}c{\u{a}}}, \bibinfo{person}{Adrien
  Durier}, \bibinfo{person}{Deepak Garg}, \bibinfo{person}{C{\u{a}}t{\u{a}}lin
  Hritcu}, \bibinfo{person}{Marco Patrignani}, \bibinfo{person}{{\'E}ric
  Tanter}, {and} \bibinfo{person}{J{\'e}r{\'e}my Thibault}.}
  \bibinfo{year}{2020}\natexlab{}.
\newblock \showarticletitle{Trace-Relating Compiler Correctness and Secure
  Compilation}. In \bibinfo{booktitle}{{\em Programming Languages and
  Systems}}, \bibfield{editor}{\bibinfo{person}{Peter M{\"u}ller}} (Ed.).
  \bibinfo{publisher}{Springer International Publishing},
  \bibinfo{address}{Cham}, \bibinfo{pages}{1--28}.
\newblock
\showISBNx{978-3-030-44914-8}


\bibitem[\protect\citeauthoryear{Abate, Blanco, Garg, Hri\c{t}cu, Patrignani,
  and Thibault}{Abate et~al\mbox{.}}{2019}]%
        {rhc}
\bibfield{author}{\bibinfo{person}{Carmine Abate}, \bibinfo{person}{Roberto
  Blanco}, \bibinfo{person}{Deepak Garg}, \bibinfo{person}{C\u{a}t\u{a}lin
  Hri\c{t}cu}, \bibinfo{person}{Marco Patrignani}, {and}
  \bibinfo{person}{J\'er\'emy Thibault}.} \bibinfo{year}{2019}\natexlab{}.
\newblock \showarticletitle{Journey Beyond Full Abstraction: Exploring Robust
  Property Preservation for Secure Compilation}. In \bibinfo{booktitle}{{\em
  2019 IEEE 32th Computer Security Foundations Symposium}} {\em
  (\bibinfo{series}{CSF 2019})}.
\newblock
\showDOI{%
\url{http://dx.doi.org/10.1109/CSF.2019.00025}}


\bibitem[\protect\citeauthoryear{Agten, Strackx, Jacobs, and Piessens}{Agten
  et~al\mbox{.}}{2012}]%
        {KULeuven-358154}
\bibfield{author}{\bibinfo{person}{Pieter Agten}, \bibinfo{person}{Raoul
  Strackx}, \bibinfo{person}{Bart Jacobs}, {and} \bibinfo{person}{Frank
  Piessens}.} \bibinfo{year}{2012}\natexlab{}.
\newblock \showarticletitle{Secure Compilation to Modern Processors}. In
  \bibinfo{booktitle}{{\em 2012 IEEE 25th Computer Security Foundations
  Symposium}} {\em (\bibinfo{series}{CSF 2012})}. \bibinfo{publisher}{IEEE},
  \bibinfo{pages}{171--185}.
\newblock
\showISBNx{978-1-4673-1918-8}
\showISSN{1940-1434}
\showDOI{%
\url{http://dx.doi.org/10.1109/CSF.2012.12}}


\bibitem[\protect\citeauthoryear{Ahmed and Blume}{Ahmed and Blume}{2008}]%
        {Ahmed:2008:TCC:1411203.1411227}
\bibfield{author}{\bibinfo{person}{Amal Ahmed} {and} \bibinfo{person}{Matthias
  Blume}.} \bibinfo{year}{2008}\natexlab{}.
\newblock \showarticletitle{Typed Closure Conversion Preserves Observational
  Equivalence}. In \bibinfo{booktitle}{{\em Proceedings of the 13th ACM SIGPLAN
  International Conference on Functional Programming}} {\em
  (\bibinfo{series}{ICFP '08})}. \bibinfo{publisher}{ACM},
  \bibinfo{address}{New York, NY, USA}, \bibinfo{pages}{157--168}.
\newblock
\showISBNx{978-1-59593-919-7}
\showDOI{%
\url{http://dx.doi.org/10.1145/1411204.1411227}}


\bibitem[\protect\citeauthoryear{Ahmed and Blume}{Ahmed and Blume}{2011}]%
        {ahmedCPS}
\bibfield{author}{\bibinfo{person}{Amal Ahmed} {and} \bibinfo{person}{Matthias
  Blume}.} \bibinfo{year}{2011}\natexlab{}.
\newblock \showarticletitle{An Equivalence-Preserving {CPS} Translation via
  Multi-Language Semantics}. In \bibinfo{booktitle}{{\em Proceedings of the
  16th ACM SIGPLAN International Conference on Functional Programming}} {\em
  (\bibinfo{series}{ICFP '11})}. \bibinfo{publisher}{ACM},
  \bibinfo{address}{New York, NY, USA}, \bibinfo{pages}{431--444}.
\newblock
\showISBNx{978-1-4503-0865-6}
\showDOI{%
\url{http://dx.doi.org/10.1145/2034773.2034830}}


\bibitem[\protect\citeauthoryear{Almeida, Barbosa, Barthe, Blot,
  Gr{\'{e}}goire, Laporte, Oliveira, Pacheco, Schmidt, and Strub}{Almeida
  et~al\mbox{.}}{2017}]%
        {jasmin}
\bibfield{author}{\bibinfo{person}{Jos{\'{e}}~Bacelar Almeida},
  \bibinfo{person}{Manuel Barbosa}, \bibinfo{person}{Gilles Barthe},
  \bibinfo{person}{Arthur Blot}, \bibinfo{person}{Benjamin Gr{\'{e}}goire},
  \bibinfo{person}{Vincent Laporte}, \bibinfo{person}{Tiago Oliveira},
  \bibinfo{person}{Hugo Pacheco}, \bibinfo{person}{Benedikt Schmidt}, {and}
  \bibinfo{person}{Pierre{-}Yves Strub}.} \bibinfo{year}{2017}\natexlab{}.
\newblock \showarticletitle{Jasmin: High-Assurance and High-Speed
  Cryptography}. In \bibinfo{booktitle}{{\em {ACM} Conference on Computer and
  Communications Security}}. \bibinfo{publisher}{{ACM}},
  \bibinfo{pages}{1807--1823}.
\newblock


\bibitem[\protect\citeauthoryear{Alpern and Schneider}{Alpern and
  Schneider}{1985}]%
        {AlpernS85}
\bibfield{author}{\bibinfo{person}{Bowen Alpern} {and} \bibinfo{person}{Fred~B.
  Schneider}.} \bibinfo{year}{1985}\natexlab{}.
\newblock \showarticletitle{Defining Liveness}.
\newblock \bibinfo{journal}{{\em Inf. Process. Lett.\/}} \bibinfo{volume}{21},
  \bibinfo{number}{4} (\bibinfo{year}{1985}), \bibinfo{pages}{181--185}.
\newblock
\showDOI{%
\url{http://dx.doi.org/10.1016/0020-0190(85)90056-0}}


\bibitem[\protect\citeauthoryear{Backes, Hritcu, and Maffei}{Backes
  et~al\mbox{.}}{2014}]%
        {catalin-rs}
\bibfield{author}{\bibinfo{person}{Michael Backes}, \bibinfo{person}{Catalin
  Hritcu}, {and} \bibinfo{person}{Matteo Maffei}.}
  \bibinfo{year}{2014}\natexlab{}.
\newblock \showarticletitle{Union, intersection and refinement types and
  reasoning about type disjointness for secure protocol implementations}.
\newblock \bibinfo{journal}{{\em Journal of Computer Security\/}}
  \bibinfo{volume}{22}, \bibinfo{number}{2} (\bibinfo{year}{2014}),
  \bibinfo{pages}{301--353}.
\newblock
\showDOI{%
\url{http://dx.doi.org/10.3233/JCS-130493}}


\bibitem[\protect\citeauthoryear{Barresi, Razavi, Payer, and Gross}{Barresi
  et~al\mbox{.}}{2015}]%
        {break-aslr1}
\bibfield{author}{\bibinfo{person}{Antonio Barresi}, \bibinfo{person}{Kaveh
  Razavi}, \bibinfo{person}{Mathias Payer}, {and} \bibinfo{person}{Thomas~R.
  Gross}.} \bibinfo{year}{2015}\natexlab{}.
\newblock \showarticletitle{{CAIN}: Silently Breaking {ASLR} in the Cloud}. In
  \bibinfo{booktitle}{{\em 9th {USENIX} Workshop on Offensive Technologies
  ({WOOT} 15)}}. \bibinfo{publisher}{{USENIX} Association},
  \bibinfo{address}{Washington, D.C.}
\newblock
\showURL{%
\url{https://www.usenix.org/conference/woot15/workshop-program/presentation/barresi}}


\bibitem[\protect\citeauthoryear{{Barthe}, {Grégoire}, and {Laporte}}{{Barthe}
  et~al\mbox{.}}{2018}]%
        {BartheGL18}
\bibfield{author}{\bibinfo{person}{G. {Barthe}}, \bibinfo{person}{B.
  {Grégoire}}, {and} \bibinfo{person}{V. {Laporte}}.}
  \bibinfo{year}{2018}\natexlab{}.
\newblock \showarticletitle{Secure Compilation of Side-Channel Countermeasures:
  The Case of Cryptographic “Constant-Time”}. In \bibinfo{booktitle}{{\em
  2018 IEEE 31st Computer Security Foundations Symposium (CSF)}}.
  \bibinfo{pages}{328--343}.
\newblock
\showDOI{%
\url{http://dx.doi.org/10.1109/CSF.2018.00031}}


\bibitem[\protect\citeauthoryear{Barthe, Rezk, and Basu}{Barthe
  et~al\mbox{.}}{2007}]%
        {Barthe:2007:STP:1223678.1223704}
\bibfield{author}{\bibinfo{person}{Gilles Barthe}, \bibinfo{person}{Tamara
  Rezk}, {and} \bibinfo{person}{Amitabh Basu}.}
  \bibinfo{year}{2007}\natexlab{}.
\newblock \showarticletitle{Security Types Preserving Compilation}.
\newblock \bibinfo{journal}{{\em Computer Languages, Systems and Structures\/}}
   \bibinfo{volume}{33} (\bibinfo{year}{2007}), \bibinfo{pages}{35--59}.
\newblock
\showISSN{1477-8424}
\showDOI{%
\url{http://dx.doi.org/10.1016/j.cl.2005.05.002}}


\bibitem[\protect\citeauthoryear{Bengtson, Bhargavan, Fournet, Gordon, and
  Maffeis}{Bengtson et~al\mbox{.}}{2011}]%
        {refty-sec-impl}
\bibfield{author}{\bibinfo{person}{Jesper Bengtson},
  \bibinfo{person}{Karthikeyan Bhargavan}, \bibinfo{person}{C{\'e}dric
  Fournet}, \bibinfo{person}{Andrew~D. Gordon}, {and} \bibinfo{person}{Sergio
  Maffeis}.} \bibinfo{year}{2011}\natexlab{}.
\newblock \showarticletitle{Refinement Types for Secure Implementations}.
\newblock \bibinfo{journal}{{\em ACM Trans. Program. Lang. Syst.\/}}
  \bibinfo{volume}{33}, \bibinfo{number}{2}, Article \bibinfo{articleno}{8}
  (\bibinfo{date}{Feb.} \bibinfo{year}{2011}), \bibinfo{numpages}{45}~pages.
\newblock
\showISSN{0164-0925}
\showDOI{%
\url{http://dx.doi.org/10.1145/1890028.1890031}}


\bibitem[\protect\citeauthoryear{Benton and Hur}{Benton and Hur}{2010}]%
        {realizability}
\bibfield{author}{\bibinfo{person}{Nick Benton} {and}
  \bibinfo{person}{Chung-kil Hur}.} \bibinfo{year}{2010}\natexlab{}.
\newblock \bibinfo{booktitle}{{\em Realizability and Compositional Compiler
  Correctness for a Polymorphic Language}}.
\newblock \bibinfo{type}{{T}echnical {R}eport}. \bibinfo{institution}{{MSR}}.
\newblock


\bibitem[\protect\citeauthoryear{Berry and Boudol}{Berry and Boudol}{1992}]%
        {cham}
\bibfield{author}{\bibinfo{person}{G{\'e}rard Berry} {and}
  \bibinfo{person}{G{\'e}rard Boudol}.} \bibinfo{year}{1992}\natexlab{}.
\newblock \showarticletitle{The Chemical Abstract Machine}.
\newblock \bibinfo{journal}{{\em Theor. Comput. Sci.\/}} \bibinfo{volume}{96},
  \bibinfo{number}{1} (\bibinfo{year}{1992}), \bibinfo{pages}{217--248}.
\newblock


\bibitem[\protect\citeauthoryear{Boudol}{Boudol}{2009}]%
        {seinfl}
\bibfield{author}{\bibinfo{person}{G{\'e}rard Boudol}.}
  \bibinfo{year}{2009}\natexlab{}.
\newblock \showarticletitle{Secure Information Flow As a Safety Property}.
\newblock \bibinfo{publisher}{Springer-Verlag}, \bibinfo{address}{Berlin,
  Heidelberg}, Chapter Formal Aspects in Security and Trust,
  \bibinfo{pages}{20--34}.
\newblock
\showISBNx{978-3-642-01464-2}
\showDOI{%
\url{http://dx.doi.org/10.1007/978-3-642-01465-9_2}}


\bibitem[\protect\citeauthoryear{Bowman and Ahmed}{Bowman and Ahmed}{2015}]%
        {nonintfree}
\bibfield{author}{\bibinfo{person}{William~J. Bowman} {and}
  \bibinfo{person}{Amal Ahmed}.} \bibinfo{year}{2015}\natexlab{}.
\newblock \showarticletitle{Noninterference for free}. In
  \bibinfo{booktitle}{{\em Proceedings of the 20th ACM SIGPLAN International
  Conference on Functional Programming}} {\em (\bibinfo{series}{ICFP '15})}.
  \bibinfo{publisher}{ACM}, \bibinfo{address}{New York, NY, USA}.
\newblock


\bibitem[\protect\citeauthoryear{Bugliesi and Giunti}{Bugliesi and
  Giunti}{2007}]%
        {Bugliesi:2007:SIT:1190215.1190253}
\bibfield{author}{\bibinfo{person}{Michele Bugliesi} {and}
  \bibinfo{person}{Marco Giunti}.} \bibinfo{year}{2007}\natexlab{}.
\newblock \showarticletitle{Secure Implementations of Typed Channel
  Abstractions}. In \bibinfo{booktitle}{{\em Proceedings of the 34th annual ACM
  SIGPLAN-SIGACT Symposium on Principles of Programming Languages}} {\em
  (\bibinfo{series}{POPL '07})}. \bibinfo{publisher}{ACM},
  \bibinfo{address}{New York, NY, USA}, \bibinfo{pages}{251--262}.
\newblock
\showISBNx{1-59593-575-4}
\showDOI{%
\url{http://dx.doi.org/10.1145/1190216.1190253}}


\bibitem[\protect\citeauthoryear{Carter, Keckler, and Dally}{Carter
  et~al\mbox{.}}{1994}]%
        {MMach}
\bibfield{author}{\bibinfo{person}{Nicholas~P. Carter},
  \bibinfo{person}{Stephen~W. Keckler}, {and} \bibinfo{person}{William~J.
  Dally}.} \bibinfo{year}{1994}\natexlab{}.
\newblock \showarticletitle{Hardware Support for Fast Capability-based
  Addressing}.
\newblock \bibinfo{journal}{{\em SIGPLAN Not.\/}}  \bibinfo{volume}{29}
  (\bibinfo{year}{1994}), \bibinfo{pages}{319--327}.
\newblock
\showISSN{0362-1340}
\showDOI{%
\url{http://dx.doi.org/10.1145/195470.195579}}


\bibitem[\protect\citeauthoryear{Chong}{Chong}{2008}]%
        {chong-thesis}
\bibfield{author}{\bibinfo{person}{Stephen Chong}.}
  \bibinfo{year}{2008}\natexlab{}.
\newblock {\em \bibinfo{title}{Expressive and Enforceable Information Security
  Policies}}.
\newblock \bibinfo{thesistype}{Ph.D. Dissertation}. \bibinfo{school}{Cornell
  University}.
\newblock


\bibitem[\protect\citeauthoryear{Clarkson and Schneider}{Clarkson and
  Schneider}{2010}]%
        {hyperproperties}
\bibfield{author}{\bibinfo{person}{Michael~R. Clarkson} {and}
  \bibinfo{person}{Fred~B. Schneider}.} \bibinfo{year}{2010}\natexlab{}.
\newblock \showarticletitle{Hyperproperties}.
\newblock \bibinfo{journal}{{\em J. Comput. Secur.\/}} \bibinfo{volume}{18},
  \bibinfo{number}{6} (\bibinfo{date}{Sept.} \bibinfo{year}{2010}),
  \bibinfo{pages}{1157--1210}.
\newblock


\bibitem[\protect\citeauthoryear{Corin, Deni{\'e}lou, Fournet, Bhargavan, and
  Leifer}{Corin et~al\mbox{.}}{2008}]%
        {Corin:2008:SCS:1454415.1454419}
\bibfield{author}{\bibinfo{person}{Ricardo Corin}, \bibinfo{person}{Pierre-Malo
  Deni{\'e}lou}, \bibinfo{person}{C{\'e}dric Fournet},
  \bibinfo{person}{Karthikeyan Bhargavan}, {and} \bibinfo{person}{James
  Leifer}.} \bibinfo{year}{2008}\natexlab{}.
\newblock \showarticletitle{A Secure Compiler for Session Abstractions}.
\newblock \bibinfo{journal}{{\em Journal of Computer Security\/}}
  \bibinfo{volume}{16} (\bibinfo{year}{2008}), \bibinfo{pages}{573--636}.
\newblock
\showISSN{0926-227X}
\showURL{%
\url{http://dl.acm.org/citation.cfm?id=1454415.1454419}}


\bibitem[\protect\citeauthoryear{Costanzo, Shao, and Gu}{Costanzo
  et~al\mbox{.}}{2016}]%
        {shao}
\bibfield{author}{\bibinfo{person}{David Costanzo}, \bibinfo{person}{Zhong
  Shao}, {and} \bibinfo{person}{Ronghui Gu}.} \bibinfo{year}{2016}\natexlab{}.
\newblock \showarticletitle{End-to-End Verification of Information-Flow
  Security for C and Assembly Programs}. \bibinfo{journal}{{\em SIGPLAN
  Not.\/}} \bibinfo{volume}{51}, \bibinfo{number}{6} (\bibinfo{date}{June}
  \bibinfo{year}{2016}), \bibinfo{pages}{648--664}.
\newblock
\showISSN{0362-1340}
\showDOI{%
\url{http://dx.doi.org/10.1145/2980983.2908100}}


\bibitem[\protect\citeauthoryear{Devriese, Patrignani, Keuchel, and
  Piessens}{Devriese et~al\mbox{.}}{2017b}]%
        {popl-backtrans-j}
\bibfield{author}{\bibinfo{person}{Dominique Devriese}, \bibinfo{person}{Marco
  Patrignani}, \bibinfo{person}{Steven Keuchel}, {and} \bibinfo{person}{Frank
  Piessens}.} \bibinfo{year}{2017}\natexlab{b}.
\newblock \showarticletitle{{Modular, Fully-Abstract Compilation by Approximate
  Back-Translation}}.
\newblock \bibinfo{journal}{{\em {Logical Methods in Computer Science}\/}}
  \bibinfo{volume}{{Volume 13, Issue 4}} (\bibinfo{date}{Oct.}
  \bibinfo{year}{2017}).
\newblock


\bibitem[\protect\citeauthoryear{Devriese, Patrignani, and Piessens}{Devriese
  et~al\mbox{.}}{2016}]%
        {popl-backtrans}
\bibfield{author}{\bibinfo{person}{Dominique Devriese}, \bibinfo{person}{Marco
  Patrignani}, {and} \bibinfo{person}{Frank Piessens}.}
  \bibinfo{year}{2016}\natexlab{}.
\newblock \showarticletitle{Fully-Abstract Compilation by Approximate
  Back-Translation}.
\newblock \bibinfo{journal}{{\em SIGPLAN Not.\/}} \bibinfo{volume}{51},
  \bibinfo{number}{1} (\bibinfo{date}{Jan.} \bibinfo{year}{2016}),
  \bibinfo{pages}{164--177}.
\newblock
\showISSN{0362-1340}
\showDOI{%
\url{http://dx.doi.org/10.1145/2914770.2837618}}


\bibitem[\protect\citeauthoryear{Devriese, Patrignani, and Piessens}{Devriese
  et~al\mbox{.}}{2017a}]%
        {lambda-seal}
\bibfield{author}{\bibinfo{person}{Dominique Devriese}, \bibinfo{person}{Marco
  Patrignani}, {and} \bibinfo{person}{Frank Piessens}.}
  \bibinfo{year}{2017}\natexlab{a}.
\newblock \showarticletitle{Parametricity versus the Universal Type}.
\newblock \bibinfo{journal}{{\em Proc. ACM Program. Lang.\/}}
  \bibinfo{volume}{2}, \bibinfo{number}{POPL}, Article \bibinfo{articleno}{38}
  (\bibinfo{date}{Dec.} \bibinfo{year}{2017}), \bibinfo{numpages}{23}~pages.
\newblock
\showDOI{%
\url{http://dx.doi.org/10.1145/3158126}}


\bibitem[\protect\citeauthoryear{El-Korashy}{El-Korashy}{2016}]%
        {akram}
\bibfield{author}{\bibinfo{person}{Akram El-Korashy}.}
  \bibinfo{year}{2016}\natexlab{}.
\newblock {\em \bibinfo{title}{{A Formal Model for Capability Machines -- An
  Illustrative Case Study towards Secure Compilation to CHERI}}}.
\newblock \bibinfo{thesistype}{Master's\ thesis}. \bibinfo{school}{Universitat
  des Saarlandes}.
\newblock


\bibitem[\protect\citeauthoryear{Fournet, Gordon, and Maffeis}{Fournet
  et~al\mbox{.}}{2007}]%
        {tydisa}
\bibfield{author}{\bibinfo{person}{C{\'e}dric Fournet},
  \bibinfo{person}{Andrew~D. Gordon}, {and} \bibinfo{person}{Sergio Maffeis}.}
  \bibinfo{year}{2007}\natexlab{}.
\newblock \showarticletitle{A Type Discipline for Authorization Policies}.
\newblock \bibinfo{journal}{{\em ACM Trans. Program. Lang. Syst.\/}}
  \bibinfo{volume}{29}, \bibinfo{number}{5}, Article \bibinfo{articleno}{25}
  (\bibinfo{date}{Aug.} \bibinfo{year}{2007}).
\newblock
\showISSN{0164-0925}
\showDOI{%
\url{http://dx.doi.org/10.1145/1275497.1275500}}


\bibitem[\protect\citeauthoryear{Fournet, Swamy, Chen, Dagand, Strub, and
  Livshits}{Fournet et~al\mbox{.}}{2013}]%
        {fstar2js}
\bibfield{author}{\bibinfo{person}{Cedric Fournet}, \bibinfo{person}{Nikhil
  Swamy}, \bibinfo{person}{Juan Chen}, \bibinfo{person}{Pierre-Evariste
  Dagand}, \bibinfo{person}{Pierre-Yves Strub}, {and} \bibinfo{person}{Benjamin
  Livshits}.} \bibinfo{year}{2013}\natexlab{}.
\newblock \showarticletitle{Fully Abstract Compilation to {JavaScript}}. In
  \bibinfo{booktitle}{{\em Proceedings of the 40th annual ACM SIGPLAN-SIGACT
  Symposium on Principles of Programming Languages}} {\em
  (\bibinfo{series}{POPL '13})}. \bibinfo{publisher}{ACM},
  \bibinfo{address}{New York, NY, USA}, \bibinfo{pages}{371--384}.
\newblock
\showISBNx{978-1-4503-1832-7}
\showDOI{%
\url{http://dx.doi.org/10.1145/2429069.2429114}}


\bibitem[\protect\citeauthoryear{Garg, {Hritcu}, {Patrignani}, {Stronati}, and
  {Swasey}}{Garg et~al\mbox{.}}{2017}]%
        {rhpc-arx}
\bibfield{author}{\bibinfo{person}{D. Garg}, \bibinfo{person}{C. {Hritcu}},
  \bibinfo{person}{M. {Patrignani}}, \bibinfo{person}{M. {Stronati}}, {and}
  \bibinfo{person}{D. {Swasey}}.} \bibinfo{year}{2017}\natexlab{}.
\newblock \showarticletitle{{Robust Hyperproperty Preservation for Secure
  Compilation (Extended Abstract)}}.
\newblock \bibinfo{journal}{{\em ArXiv e-prints\/}} (\bibinfo{date}{Oct.}
  \bibinfo{year}{2017}).
\newblock
\showeprint[arxiv]{cs.CR/1710.07309}


\bibitem[\protect\citeauthoryear{Gordon and Jeffrey}{Gordon and
  Jeffrey}{2003}]%
        {autysec}
\bibfield{author}{\bibinfo{person}{Andrew~D. Gordon} {and}
  \bibinfo{person}{Alan Jeffrey}.} \bibinfo{year}{2003}\natexlab{}.
\newblock \showarticletitle{Authenticity by Typing for Security Protocols}.
\newblock \bibinfo{journal}{{\em J. Comput. Secur.\/}} \bibinfo{volume}{11},
  \bibinfo{number}{4} (\bibinfo{date}{July} \bibinfo{year}{2003}),
  \bibinfo{pages}{451--519}.
\newblock
\showISSN{0926-227X}
\showURL{%
\url{http://dl.acm.org/citation.cfm?id=959088.959090}}


\bibitem[\protect\citeauthoryear{Gorla and Nestmann}{Gorla and
  Nestmann}{2016}]%
        {faEHM}
\bibfield{author}{\bibinfo{person}{Daniele Gorla} {and} \bibinfo{person}{Uwe
  Nestmann}.} \bibinfo{year}{2016}\natexlab{}.
\newblock \showarticletitle{Full abstraction for expressiveness: history, myths
  and facts}.
\newblock \bibinfo{journal}{{\em Mathematical Structures in Computer
  Science\/}} \bibinfo{volume}{26}, \bibinfo{number}{4} (\bibinfo{year}{2016}),
  \bibinfo{pages}{639--654}.
\newblock
\showDOI{%
\url{http://dx.doi.org/10.1017/S0960129514000279}}


\bibitem[\protect\citeauthoryear{Hur and Dreyer}{Hur and Dreyer}{2011}]%
        {kripke}
\bibfield{author}{\bibinfo{person}{Chung-Kil Hur} {and} \bibinfo{person}{Derek
  Dreyer}.} \bibinfo{year}{2011}\natexlab{}.
\newblock \showarticletitle{A Kripke Logical Relation between ML and Assembly}.
\newblock \bibinfo{journal}{{\em SIGPLAN Not.\/}} \bibinfo{volume}{46},
  \bibinfo{number}{1} (\bibinfo{date}{Jan.} \bibinfo{year}{2011}),
  \bibinfo{pages}{133--146}.
\newblock
\showISSN{0362-1340}
\showDOI{%
\url{http://dx.doi.org/10.1145/1925844.1926402}}


\bibitem[\protect\citeauthoryear{Jagadeesan, Pitcher, Rathke, and
  Riely}{Jagadeesan et~al\mbox{.}}{2011}]%
        {Jagadeesan:2011:LMV:2056311.2056556}
\bibfield{author}{\bibinfo{person}{Radha Jagadeesan}, \bibinfo{person}{Corin
  Pitcher}, \bibinfo{person}{Julian Rathke}, {and} \bibinfo{person}{James
  Riely}.} \bibinfo{year}{2011}\natexlab{}.
\newblock \showarticletitle{Local Memory via Layout Randomization}. In
  \bibinfo{booktitle}{{\em Proceedings of the 2011 IEEE 24th Computer Security
  Foundations Symposium}} {\em (\bibinfo{series}{CSF '11})}.
  \bibinfo{publisher}{IEEE Computer Society}, \bibinfo{address}{Washington, DC,
  USA}, \bibinfo{pages}{161--174}.
\newblock
\showISBNx{978-0-7695-4365-9}
\showDOI{%
\url{http://dx.doi.org/10.1109/CSF.2011.18}}


\bibitem[\protect\citeauthoryear{Jang, Lee, and Kim}{Jang
  et~al\mbox{.}}{2016}]%
        {break-aslr2}
\bibfield{author}{\bibinfo{person}{Yeongjin Jang}, \bibinfo{person}{Sangho
  Lee}, {and} \bibinfo{person}{Taesoo Kim}.} \bibinfo{year}{2016}\natexlab{}.
\newblock \showarticletitle{Breaking Kernel Address Space Layout Randomization
  with Intel TSX}. In \bibinfo{booktitle}{{\em Proceedings of the 2016 ACM
  SIGSAC Conference on Computer and Communications Security}} {\em
  (\bibinfo{series}{CCS '16})}. \bibinfo{publisher}{ACM}, \bibinfo{address}{New
  York, NY, USA}, \bibinfo{pages}{380--392}.
\newblock
\showISBNx{978-1-4503-4139-4}
\showDOI{%
\url{http://dx.doi.org/10.1145/2976749.2978321}}


\bibitem[\protect\citeauthoryear{Jeffrey and Rathke}{Jeffrey and
  Rathke}{2005}]%
        {javajr}
\bibfield{author}{\bibinfo{person}{Alan Jeffrey} {and} \bibinfo{person}{Julian
  Rathke}.} \bibinfo{year}{2005}\natexlab{}.
\newblock \showarticletitle{{Java Jr.}: {F}ully abstract trace semantics for a
  core {J}ava language}. In \bibinfo{booktitle}{{\em ESOP'05}} {\em
  (\bibinfo{series}{LNCS})}, Vol.~\bibinfo{volume}{3444}.
  \bibinfo{publisher}{Springer}, \bibinfo{pages}{423--438}.
\newblock
\showISBNx{3-540-25435-8, 978-3-540-25435-5}
\showDOI{%
\url{http://dx.doi.org/10.1007/978-3-540-31987-0_29}}


\bibitem[\protect\citeauthoryear{Juglaret, Hri\c{t}cu, {Azevedo de Amorim}, and
  Pierce}{Juglaret et~al\mbox{.}}{2016}]%
        {catalin}
\bibfield{author}{\bibinfo{person}{Yannis Juglaret},
  \bibinfo{person}{C\u{a}t\u{a}lin Hri\c{t}cu}, \bibinfo{person}{Arthur
  {Azevedo de Amorim}}, {and} \bibinfo{person}{Benjamin~C. Pierce}.}
  \bibinfo{year}{2016}\natexlab{}.
\newblock \showarticletitle{Beyond Good and Evil: Formalizing the Security
  Guarantees of Compartmentalizing Compilation}. In \bibinfo{booktitle}{{\em
  29th IEEE Symposium on Computer Security Foundations (CSF)}}.
  \bibinfo{publisher}{IEEE Computer Society Press}.
\newblock
\showURL{%
\url{http://arxiv.org/abs/1602.04503}}
\newblock
\shownote{To appear.}


\bibitem[\protect\citeauthoryear{Kang, Kim, Hur, Dreyer, and Vafeiadis}{Kang
  et~al\mbox{.}}{2016}]%
        {compocompcert2}
\bibfield{author}{\bibinfo{person}{Jeehoon Kang}, \bibinfo{person}{Yoonseung
  Kim}, \bibinfo{person}{Chung-Kil Hur}, \bibinfo{person}{Derek Dreyer}, {and}
  \bibinfo{person}{Viktor Vafeiadis}.} \bibinfo{year}{2016}\natexlab{}.
\newblock \showarticletitle{Lightweight Verification of Separate Compilation}.
  In \bibinfo{booktitle}{{\em Proceedings of the 43rd Annual ACM SIGPLAN-SIGACT
  Symposium on Principles of Programming Languages}} {\em
  (\bibinfo{series}{POPL '16})}. \bibinfo{publisher}{Association for Computing
  Machinery}, \bibinfo{address}{New York, NY, USA}, \bibinfo{pages}{178--190}.
\newblock
\showISBNx{9781450335492}
\showDOI{%
\url{http://dx.doi.org/10.1145/2837614.2837642}}


\bibitem[\protect\citeauthoryear{Kuznetsov, Szekeres, Payer, Candea, Sekar, and
  Song}{Kuznetsov et~al\mbox{.}}{2014}]%
        {cpi}
\bibfield{author}{\bibinfo{person}{Volodymyr Kuznetsov},
  \bibinfo{person}{L\'{a}szl\'{o} Szekeres}, \bibinfo{person}{Mathias Payer},
  \bibinfo{person}{George Candea}, \bibinfo{person}{R. Sekar}, {and}
  \bibinfo{person}{Dawn Song}.} \bibinfo{year}{2014}\natexlab{}.
\newblock \showarticletitle{Code-pointer Integrity}. In
  \bibinfo{booktitle}{{\em Proceedings of the 11th USENIX Conference on
  Operating Systems Design and Implementation}} {\em
  (\bibinfo{series}{OSDI'14})}. \bibinfo{publisher}{USENIX Association},
  \bibinfo{address}{Berkeley, CA, USA}, \bibinfo{pages}{147--163}.
\newblock
\showISBNx{978-1-931971-16-4}
\showURL{%
\url{http://dl.acm.org/citation.cfm?id=2685048.2685061}}


\bibitem[\protect\citeauthoryear{Larmuseau, Patrignani, and Clarke}{Larmuseau
  et~al\mbox{.}}{2015}]%
        {adriaanaplas}
\bibfield{author}{\bibinfo{person}{Adriaan Larmuseau}, \bibinfo{person}{Marco
  Patrignani}, {and} \bibinfo{person}{Dave Clarke}.}
  \bibinfo{year}{2015}\natexlab{}.
\newblock \showarticletitle{A Secure Compiler for {ML} Modules}. In
  \bibinfo{booktitle}{{\em Programming Languages and Systems - 13th Asian
  Symposium, {APLAS} 2015, Pohang, South Korea, November 30 - December 2, 2015,
  Proceedings}}. \bibinfo{pages}{29--48}.
\newblock


\bibitem[\protect\citeauthoryear{Leroy}{Leroy}{2006}]%
        {Leroy-Compcert-CACM}
\bibfield{author}{\bibinfo{person}{Xavier Leroy}.}
  \bibinfo{year}{2006}\natexlab{}.
\newblock \showarticletitle{Formal Certification of a Compiler Back-End or:
  Programming a Compiler with a Proof Assistant}. \bibinfo{journal}{{\em
  SIGPLAN Not.\/}} \bibinfo{volume}{41}, \bibinfo{number}{1}
  (\bibinfo{date}{Jan.} \bibinfo{year}{2006}), \bibinfo{pages}{42--54}.
\newblock
\showISSN{0362-1340}
\showDOI{%
\url{http://dx.doi.org/10.1145/1111320.1111042}}


\bibitem[\protect\citeauthoryear{Leroy}{Leroy}{2009}]%
        {leroy2}
\bibfield{author}{\bibinfo{person}{Xavier Leroy}.}
  \bibinfo{year}{2009}\natexlab{}.
\newblock \showarticletitle{A Formally Verified Compiler Back-End}.
\newblock \bibinfo{journal}{{\em J. Autom. Reason.\/}} \bibinfo{volume}{43},
  \bibinfo{number}{4} (\bibinfo{date}{Dec.} \bibinfo{year}{2009}),
  \bibinfo{pages}{363--446}.
\newblock
\showISSN{0168-7433}
\showDOI{%
\url{http://dx.doi.org/10.1007/s10817-009-9155-4}}


\bibitem[\protect\citeauthoryear{Maffeis, Abadi, Fournet, and Gordon}{Maffeis
  et~al\mbox{.}}{2008}]%
        {cca}
\bibfield{author}{\bibinfo{person}{Sergio Maffeis}, \bibinfo{person}{Mart{\'i}n
  Abadi}, \bibinfo{person}{C{\'e}dric Fournet}, {and}
  \bibinfo{person}{Andrew~D. Gordon}.} \bibinfo{year}{2008}\natexlab{}.
\newblock \bibinfo{booktitle}{{\em Code-Carrying Authorization}}.
\newblock \bibinfo{publisher}{Springer Berlin Heidelberg},
  \bibinfo{address}{Berlin, Heidelberg}, \bibinfo{pages}{563--579}.
\newblock
\showISBNx{978-3-540-88313-5}
\showDOI{%
\url{http://dx.doi.org/10.1007/978-3-540-88313-5_36}}


\bibitem[\protect\citeauthoryear{McKeen, Alexandrovich, Berenzon, Rozas, Shafi,
  Shanbhogue, and Savagaonkar}{McKeen et~al\mbox{.}}{2013}]%
        {intel}
\bibfield{author}{\bibinfo{person}{Frank McKeen}, \bibinfo{person}{Ilya
  Alexandrovich}, \bibinfo{person}{Alex Berenzon}, \bibinfo{person}{Carlos~V.
  Rozas}, \bibinfo{person}{Hisham Shafi}, \bibinfo{person}{Vedvyas Shanbhogue},
  {and} \bibinfo{person}{Uday~R. Savagaonkar}.}
  \bibinfo{year}{2013}\natexlab{}.
\newblock \showarticletitle{Innovative instructions and software model for
  isolated execution}. In \bibinfo{booktitle}{{\em HASP '13}}.
  \bibinfo{publisher}{ACM}, Article \bibinfo{articleno}{10},
  \bibinfo{numpages}{1}~pages.
\newblock
\showISBNx{978-1-4503-2118-1}
\showDOI{%
\url{http://dx.doi.org/10.1145/2487726.2488368}}


\bibitem[\protect\citeauthoryear{Morris}{Morris}{1973}]%
        {Morris:1973:PPL:361932.361937}
\bibfield{author}{\bibinfo{person}{James~H. Morris, Jr.}}
  \bibinfo{year}{1973}\natexlab{}.
\newblock \showarticletitle{Protection in programming languages}.
\newblock \bibinfo{journal}{{\em Commun. ACM\/}}  \bibinfo{volume}{16}
  (\bibinfo{year}{1973}), \bibinfo{pages}{15--21}.
\newblock
\showISSN{0001-0782}
\showDOI{%
\url{http://dx.doi.org/10.1145/361932.361937}}


\bibitem[\protect\citeauthoryear{Namjoshi and Tabajara}{Namjoshi and
  Tabajara}{2020}]%
        {kedar}
\bibfield{author}{\bibinfo{person}{Kedar~S. Namjoshi} {and}
  \bibinfo{person}{Lucas~M. Tabajara}.} \bibinfo{year}{2020}\natexlab{}.
\newblock \showarticletitle{Witnessing Secure Compilation}. In
  \bibinfo{booktitle}{{\em VMCAI 2020}} {\em (\bibinfo{series}{Lecture Notes in
  Computer Science})}, Vol.~\bibinfo{volume}{11990}.
  \bibinfo{publisher}{Springer}, \bibinfo{pages}{1--22}.
\newblock
\showDOI{%
\url{http://dx.doi.org/10.1007/978-3-030-39322-9\_1}}


\bibitem[\protect\citeauthoryear{Neis, Dreyer, and Rossberg}{Neis
  et~al\mbox{.}}{2009}]%
        {nonpapa}
\bibfield{author}{\bibinfo{person}{Georg Neis}, \bibinfo{person}{Derek Dreyer},
  {and} \bibinfo{person}{Andreas Rossberg}.} \bibinfo{year}{2009}\natexlab{}.
\newblock \showarticletitle{Non-Parametric Parametricity}.
\newblock \bibinfo{journal}{{\em SIGPLAN Not.\/}} \bibinfo{volume}{44},
  \bibinfo{number}{9} (\bibinfo{date}{Aug.} \bibinfo{year}{2009}),
  \bibinfo{pages}{135--148}.
\newblock
\showISSN{0362-1340}
\showDOI{%
\url{http://dx.doi.org/10.1145/1631687.1596572}}


\bibitem[\protect\citeauthoryear{Neis, Hur, Kaiser, McLaughlin, Dreyer, and
  Vafeiadis}{Neis et~al\mbox{.}}{2015}]%
        {pils}
\bibfield{author}{\bibinfo{person}{Georg Neis}, \bibinfo{person}{Chung-Kil
  Hur}, \bibinfo{person}{Jan-Oliver Kaiser}, \bibinfo{person}{Craig
  McLaughlin}, \bibinfo{person}{Derek Dreyer}, {and} \bibinfo{person}{Viktor
  Vafeiadis}.} \bibinfo{year}{2015}\natexlab{}.
\newblock \showarticletitle{Pilsner: A Compositionally Verified Compiler for a
  Higher-order Imperative Language}. In \bibinfo{booktitle}{{\em Proceedings of
  the 20th ACM SIGPLAN International Conference on Functional Programming}}
  {\em (\bibinfo{series}{ICFP 2015})}. \bibinfo{publisher}{ACM},
  \bibinfo{pages}{166--178}.
\newblock
\showISBNx{978-1-4503-3669-7}


\bibitem[\protect\citeauthoryear{New, Bowman, and Ahmed}{New
  et~al\mbox{.}}{2016}]%
        {max-embed}
\bibfield{author}{\bibinfo{person}{Max~S. New}, \bibinfo{person}{William~J.
  Bowman}, {and} \bibinfo{person}{Amal Ahmed}.}
  \bibinfo{year}{2016}\natexlab{}.
\newblock \showarticletitle{Fully Abstract Compilation via Universal
  Embedding}. In \bibinfo{booktitle}{{\em Proceedings of the 21st ACM SIGPLAN
  International Conference on Functional Programming}} {\em
  (\bibinfo{series}{ICFP 2016})}. \bibinfo{publisher}{ACM},
  \bibinfo{address}{New York, NY, USA}, \bibinfo{pages}{103--116}.
\newblock
\showISBNx{978-1-4503-4219-3}
\showDOI{%
\url{http://dx.doi.org/10.1145/2951913.2951941}}


\bibitem[\protect\citeauthoryear{Nielson, Nielson, and Hankin}{Nielson
  et~al\mbox{.}}{1999}]%
        {booknielson}
\bibfield{author}{\bibinfo{person}{Flemming Nielson}, \bibinfo{person}{Hanne~R.
  Nielson}, {and} \bibinfo{person}{Chris Hankin}.}
  \bibinfo{year}{1999}\natexlab{}.
\newblock \bibinfo{booktitle}{{\em Principles of Program Analysis}}.
\newblock \bibinfo{publisher}{Springer-Verlag New York},
  \bibinfo{address}{USA}.
\newblock
\showISBNx{3540654100}


\bibitem[\protect\citeauthoryear{Parrow}{Parrow}{2016}]%
        {gcFA}
\bibfield{author}{\bibinfo{person}{Joachim Parrow}.}
  \bibinfo{year}{2016}\natexlab{}.
\newblock \showarticletitle{General conditions for full abstraction}.
\newblock \bibinfo{journal}{{\em Mathematical Structures in Computer
  Science\/}} \bibinfo{volume}{26}, \bibinfo{number}{4} (\bibinfo{year}{2016}),
  \bibinfo{pages}{655--657}.
\newblock
\showDOI{%
\url{http://dx.doi.org/10.1017/S0960129514000280}}


\bibitem[\protect\citeauthoryear{{Patrignani}}{{Patrignani}}{2015}]%
        {tome-secure-compilation}
\bibfield{author}{\bibinfo{person}{{Marco} {Patrignani}}.}
  \bibinfo{year}{2015}\natexlab{}.
\newblock {\em \bibinfo{title}{{The Tome of Secure Compilation: Fully Abstract
  Compilation to Protected Modules Architectures}}}.
\newblock \bibinfo{thesistype}{Ph.D. Dissertation}. \bibinfo{school}{KU
  Leuven}, \bibinfo{address}{Leuven, Belgium}.
\newblock


\bibitem[\protect\citeauthoryear{Patrignani}{Patrignani}{2020}]%
        {patrignani2020use}
\bibfield{author}{\bibinfo{person}{Marco Patrignani}.}
  \bibinfo{year}{2020}\natexlab{}.
\newblock \bibinfo{title}{Why Should Anyone use Colours? or, Syntax
  Highlighting Beyond Code Snippets}.
\newblock   (\bibinfo{year}{2020}).
\newblock
\showeprint[arxiv]{cs.SE/2001.11334}


\bibitem[\protect\citeauthoryear{Patrignani, Agten, Strackx, Jacobs, Clarke,
  and Piessens}{Patrignani et~al\mbox{.}}{2015}]%
        {scoo-j}
\bibfield{author}{\bibinfo{person}{Marco Patrignani}, \bibinfo{person}{Pieter
  Agten}, \bibinfo{person}{Raoul Strackx}, \bibinfo{person}{Bart Jacobs},
  \bibinfo{person}{Dave Clarke}, {and} \bibinfo{person}{Frank Piessens}.}
  \bibinfo{year}{2015}\natexlab{}.
\newblock \showarticletitle{{Secure Compilation to Protected Module
  Architectures}}.
\newblock \bibinfo{journal}{{\em ACM Trans. Program. Lang. Syst.\/}}
  \bibinfo{volume}{37}, Article \bibinfo{articleno}{6} (\bibinfo{date}{April}
  \bibinfo{year}{2015}), \bibinfo{numpages}{50}~pages.
\newblock
\showISSN{0164-0925}
\showDOI{%
\url{http://dx.doi.org/10.1145/2699503}}


\bibitem[\protect\citeauthoryear{Patrignani, Ahmed, and Clarke}{Patrignani
  et~al\mbox{.}}{2019}]%
        {scsurvey}
\bibfield{author}{\bibinfo{person}{Marco Patrignani}, \bibinfo{person}{Amal
  Ahmed}, {and} \bibinfo{person}{Dave Clarke}.}
  \bibinfo{year}{2019}\natexlab{}.
\newblock \showarticletitle{Formal Approaches to Secure Compilation A Survey of
  Fully Abstract Compilation and Related Work.}
\newblock \bibinfo{journal}{{\em ACM Comput. Surv.\/}} \bibinfo{volume}{51},
  \bibinfo{number}{6}, Article \bibinfo{articleno}{125} (\bibinfo{date}{Jan.}
  \bibinfo{year}{2019}), \bibinfo{numpages}{36}~pages.
\newblock
\showDOI{%
\url{http://dx.doi.org/10.1145/3280984}}


\bibitem[\protect\citeauthoryear{Patrignani and Clarke}{Patrignani and
  Clarke}{2015}]%
        {llfatr-j}
\bibfield{author}{\bibinfo{person}{Marco Patrignani} {and}
  \bibinfo{person}{Dave Clarke}.} \bibinfo{year}{2015}\natexlab{}.
\newblock \showarticletitle{Fully abstract trace semantics for protected module
  architectures}.
\newblock \bibinfo{journal}{{\em Computer Languages, Systems \& Structures\/}}
  \bibinfo{volume}{42}, \bibinfo{number}{0} (\bibinfo{year}{2015}),
  \bibinfo{pages}{22 -- 45}.
\newblock
\showISSN{1477-8424}
\showDOI{%
\url{http://dx.doi.org/10.1016/j.cl.2015.03.002}}


\bibitem[\protect\citeauthoryear{Patrignani, Clarke, and Piessens}{Patrignani
  et~al\mbox{.}}{2013}]%
        {scoo}
\bibfield{author}{\bibinfo{person}{Marco Patrignani}, \bibinfo{person}{Dave
  Clarke}, {and} \bibinfo{person}{Frank Piessens}.}
  \bibinfo{year}{2013}\natexlab{}.
\newblock \showarticletitle{Secure Compilation of Object-Oriented Components to
  Protected Module Architectures}. In \bibinfo{booktitle}{{\em Proceedings of
  the 11th Asian Symposium on Programming Languages and Systems (APLAS'13)}}
  {\em (\bibinfo{series}{LNCS})}, Vol.~\bibinfo{volume}{8301}.
  \bibinfo{pages}{176--191}.
\newblock


\bibitem[\protect\citeauthoryear{{Patrignani}, Clarke, and
  Sangiorgi}{{Patrignani} et~al\mbox{.}}{2011}]%
        {ot4jc}
\bibfield{author}{\bibinfo{person}{{Marco} {Patrignani}}, \bibinfo{person}{Dave
  Clarke}, {and} \bibinfo{person}{Davide Sangiorgi}.}
  \bibinfo{year}{2011}\natexlab{}.
\newblock \showarticletitle{Ownership {T}ypes for the {J}oin {C}alculus}. In
  \bibinfo{booktitle}{{\em FMOODS/FORTE 2011}} {\em (\bibinfo{series}{LNCS})},
  Vol.~\bibinfo{volume}{6722}. \bibinfo{pages}{289--303}.
\newblock


\bibitem[\protect\citeauthoryear{Patrignani, Devriese, and Piessens}{Patrignani
  et~al\mbox{.}}{2016}]%
        {mfac}
\bibfield{author}{\bibinfo{person}{Marco Patrignani},
  \bibinfo{person}{Dominique Devriese}, {and} \bibinfo{person}{Frank
  Piessens}.} \bibinfo{year}{2016}\natexlab{}.
\newblock \showarticletitle{{On Modular and Fully Abstract Compilation}}. In
  \bibinfo{booktitle}{{\em {Proceedings of the 29th IEEE Computer Security
  Foundations Symposium}}} {\em (\bibinfo{series}{CSF 2016})}.
\newblock


\bibitem[\protect\citeauthoryear{Patrignani and Garg}{Patrignani and
  Garg}{2017}]%
        {schp}
\bibfield{author}{\bibinfo{person}{Marco Patrignani} {and}
  \bibinfo{person}{Deepak Garg}.} \bibinfo{year}{2017}\natexlab{}.
\newblock \showarticletitle{{Secure Compilation and Hyperproperties
  Preservation}}. In \bibinfo{booktitle}{{\em {Proceedings of the 30th IEEE
  Computer Security Foundations Symposium {CSF} 2017, Santa Barbara, USA}}}
  {\em (\bibinfo{series}{CSF 2017})}.
\newblock


\bibitem[\protect\citeauthoryear{Patrignani and Garg}{Patrignani and
  Garg}{2018}]%
        {rsc-techrep}
\bibfield{author}{\bibinfo{person}{Marco Patrignani} {and}
  \bibinfo{person}{Deepak Garg}.} \bibinfo{year}{2018}\natexlab{}.
\newblock \showarticletitle{Robustly Safe Compilation or, Efficient, Provably
  Secure Compilation}.
\newblock \bibinfo{journal}{{\em CoRR\/}}  \bibinfo{volume}{abs/1804.00489}
  (\bibinfo{year}{2018}).
\newblock


\bibitem[\protect\citeauthoryear{Patrignani and Garg}{Patrignani and
  Garg}{2019}]%
        {PatrignaniG18}
\bibfield{author}{\bibinfo{person}{Marco Patrignani} {and}
  \bibinfo{person}{Deepak Garg}.} \bibinfo{year}{2019}\natexlab{}.
\newblock \showarticletitle{Robustly Safe Compilation}. In
  \bibinfo{booktitle}{{\em Programming Languages and Systems - 28th European
  Symposium on Programming, {ESOP} 2019}} {\em (\bibinfo{series}{ESOP'19})}.
\newblock
\showURL{%
\url{https://arxiv.org/abs/1804.00489}}


\bibitem[\protect\citeauthoryear{Sabelfeld and Sands}{Sabelfeld and
  Sands}{2009}]%
        {dpdec}
\bibfield{author}{\bibinfo{person}{Andrei Sabelfeld} {and}
  \bibinfo{person}{David Sands}.} \bibinfo{year}{2009}\natexlab{}.
\newblock \showarticletitle{Declassification: Dimensions and Principles}.
\newblock \bibinfo{journal}{{\em J. Comput. Secur.\/}} \bibinfo{volume}{17},
  \bibinfo{number}{5} (\bibinfo{date}{Oct.} \bibinfo{year}{2009}),
  \bibinfo{pages}{517--548}.
\newblock
\showISSN{0926-227X}
\showURL{%
\url{http://dl.acm.org/citation.cfm?id=1662658.1662659}}


\bibitem[\protect\citeauthoryear{Sammler, Garg, Dreyer, and Litak}{Sammler
  et~al\mbox{.}}{2020}]%
        {dg-rs}
\bibfield{author}{\bibinfo{person}{Michael Sammler}, \bibinfo{person}{Deepak
  Garg}, \bibinfo{person}{Derek Dreyer}, {and} \bibinfo{person}{Tadeusz
  Litak}.} \bibinfo{year}{2020}\natexlab{}.
\newblock \showarticletitle{The High-Level Benefits of Low-Level Sandboxing}.
  In \bibinfo{booktitle}{{\em 47th ACM SIGPLAN Symposium on Principles of
  Programming Languages (POPL 2020)}}.
\newblock


\bibitem[\protect\citeauthoryear{Schneider}{Schneider}{2000}]%
        {enfoschneider}
\bibfield{author}{\bibinfo{person}{Fred~B. Schneider}.}
  \bibinfo{year}{2000}\natexlab{}.
\newblock \showarticletitle{Enforceable security policies}.
\newblock \bibinfo{journal}{{\em {ACM} Trans. Inf. Syst. Secur.\/}}
  \bibinfo{volume}{3}, \bibinfo{number}{1} (\bibinfo{year}{2000}),
  \bibinfo{pages}{30--50}.
\newblock
\showDOI{%
\url{http://dx.doi.org/10.1145/353323.353382}}


\bibitem[\protect\citeauthoryear{Skorstengaard, Devriese, and
  Birkedal}{Skorstengaard et~al\mbox{.}}{2018}]%
        {SkorstengaardDB18}
\bibfield{author}{\bibinfo{person}{Lau Skorstengaard},
  \bibinfo{person}{Dominique Devriese}, {and} \bibinfo{person}{Lars Birkedal}.}
  \bibinfo{year}{2018}\natexlab{}.
\newblock \showarticletitle{Reasoning About a Machine with Local Capabilities -
  Provably Safe Stack and Return Pointer Management}. In
  \bibinfo{booktitle}{{\em 27th European Symposium on Programming, {ESOP}}}.
  \bibinfo{pages}{475--501}.
\newblock
\showDOI{%
\url{http://dx.doi.org/10.1007/978-3-319-89884-1\_17}}


\bibitem[\protect\citeauthoryear{Skorstengaard, Devriese, and
  Birkedal}{Skorstengaard et~al\mbox{.}}{2019}]%
        {SkorstengaardDB19}
\bibfield{author}{\bibinfo{person}{Lau Skorstengaard},
  \bibinfo{person}{Dominique Devriese}, {and} \bibinfo{person}{Lars Birkedal}.}
  \bibinfo{year}{2019}\natexlab{}.
\newblock \showarticletitle{{StkTokens}: enforcing well-bracketed control flow
  and stack encapsulation using linear capabilities}.
\newblock \bibinfo{journal}{{\em {PACMPL}\/}} \bibinfo{volume}{3},
  \bibinfo{number}{{POPL}} (\bibinfo{year}{2019}),
  \bibinfo{pages}{19:1--19:28}.
\newblock
\showURL{%
\url{https://dl.acm.org/citation.cfm?id=3290332}}


\bibitem[\protect\citeauthoryear{Stark}{Stark}{1994}]%
        {nucalc}
\bibfield{author}{\bibinfo{person}{Ian Stark}.}
  \bibinfo{year}{1994}\natexlab{}.
\newblock {\em \bibinfo{title}{Names and Higher-Order Functions}}.
\newblock \bibinfo{thesistype}{Ph.D. Dissertation}. \bibinfo{school}{University
  of Cambridge}.
\newblock
\showDOI{%
\url{http://dx.doi.org/10/cgnm}}
\newblock
\shownote{Also available as Technical Report~363, University of Cambridge
  Computer Laboratory.}


\bibitem[\protect\citeauthoryear{Stewart, Beringer, Cuellar, and Appel}{Stewart
  et~al\mbox{.}}{2015}]%
        {compocompcert}
\bibfield{author}{\bibinfo{person}{Gordon Stewart}, \bibinfo{person}{Lennart
  Beringer}, \bibinfo{person}{Santiago Cuellar}, {and}
  \bibinfo{person}{Andrew~W. Appel}.} \bibinfo{year}{2015}\natexlab{}.
\newblock \showarticletitle{Compositional CompCert}. In
  \bibinfo{booktitle}{{\em Proceedings of the 42Nd Annual ACM SIGPLAN-SIGACT
  Symposium on Principles of Programming Languages}} {\em
  (\bibinfo{series}{POPL '15})}. \bibinfo{publisher}{ACM},
  \bibinfo{address}{New York, NY, USA}, \bibinfo{pages}{275--287}.
\newblock
\showISBNx{978-1-4503-3300-9}
\showDOI{%
\url{http://dx.doi.org/10.1145/2676726.2676985}}


\bibitem[\protect\citeauthoryear{Sumii and Pierce}{Sumii and Pierce}{2004}]%
        {sumii}
\bibfield{author}{\bibinfo{person}{Eijiro Sumii} {and}
  \bibinfo{person}{Benjamin~C. Pierce}.} \bibinfo{year}{2004}\natexlab{}.
\newblock \showarticletitle{A Bisimulation for Dynamic Sealing}.
\newblock \bibinfo{journal}{{\em SIGPLAN Not.\/}} \bibinfo{volume}{39},
  \bibinfo{number}{1} (\bibinfo{date}{Jan.} \bibinfo{year}{2004}),
  \bibinfo{pages}{161--172}.
\newblock
\showISSN{0362-1340}
\showDOI{%
\url{http://dx.doi.org/10.1145/982962.964015}}


\bibitem[\protect\citeauthoryear{Swamy, Fournet, Rastogi, Bhargavan, Chen,
  Strub, and Bierman}{Swamy et~al\mbox{.}}{2014}]%
        {embed}
\bibfield{author}{\bibinfo{person}{Nikhil Swamy}, \bibinfo{person}{Cedric
  Fournet}, \bibinfo{person}{Aseem Rastogi}, \bibinfo{person}{Karthikeyan
  Bhargavan}, \bibinfo{person}{Juan Chen}, \bibinfo{person}{Pierre-Yves Strub},
  {and} \bibinfo{person}{Gavin Bierman}.} \bibinfo{year}{2014}\natexlab{}.
\newblock \showarticletitle{Gradual Typing Embedded Securely in JavaScript}.
\newblock \bibinfo{journal}{{\em SIGPLAN Not.\/}} \bibinfo{volume}{49},
  \bibinfo{number}{1} (\bibinfo{date}{Jan.} \bibinfo{year}{2014}),
  \bibinfo{pages}{425--437}.
\newblock
\showISSN{0362-1340}
\showDOI{%
\url{http://dx.doi.org/10.1145/2578855.2535889}}


\bibitem[\protect\citeauthoryear{Swasey, Garg, and Dreyer.}{Swasey
  et~al\mbox{.}}{2017}]%
        {davidcaps}
\bibfield{author}{\bibinfo{person}{David Swasey}, \bibinfo{person}{Deepak
  Garg}, {and} \bibinfo{person}{Derek Dreyer.}}
  \bibinfo{year}{2017}\natexlab{}.
\newblock \showarticletitle{Robust and Compositional Verification of Object
  Capability Patterns}. In \bibinfo{booktitle}{{\em Proceedings of the 2017
  {ACM} {SIGPLAN} International Conference on Object-Oriented Programming,
  Systems, Languages, and Applications, {OOPSLA} 2017. October 22 - 27, 2017}}.
\newblock


\bibitem[\protect\citeauthoryear{Technology}{Technology}{2009}]%
        {arm}
\bibfield{author}{\bibinfo{person}{ARM.~ARMSecurity Technology}.}
  \bibinfo{year}{2009}\natexlab{}.
\newblock \bibinfo{title}{Building a Secure System using TrustZone Technology.
  ARM Technical White Paper}.
\newblock   (\bibinfo{year}{2009}).
\newblock


\bibitem[\protect\citeauthoryear{Tsampas, El-Korashy, Patrignani, Devriese,
  Garg, and Piessens}{Tsampas et~al\mbox{.}}{2017}]%
        {cheri-c-fcs}
\bibfield{author}{\bibinfo{person}{Stelios Tsampas}, \bibinfo{person}{Akram
  El-Korashy}, \bibinfo{person}{Marco Patrignani}, \bibinfo{person}{Dominique
  Devriese}, \bibinfo{person}{Deepak Garg}, {and} \bibinfo{person}{Frank
  Piessens}.} \bibinfo{year}{2017}\natexlab{}.
\newblock \showarticletitle{{Towards Automatic Compartmentalization of C
  Programs on Capability Machines}}. In \bibinfo{booktitle}{{\em {Workshop on
  Foundations of Computer Security 2017 August 21, 2017}}} {\em
  (\bibinfo{series}{FCS 2017})}.
\newblock


\bibitem[\protect\citeauthoryear{Vasconcelos and Ravara}{Vasconcelos and
  Ravara}{2016}]%
        {vasco}
\bibfield{author}{\bibinfo{person}{Claudio Vasconcelos} {and}
  \bibinfo{person}{Antonio Ravara}.} \bibinfo{year}{2016}\natexlab{}.
\newblock \bibinfo{title}{The While language}.
\newblock   (\bibinfo{year}{2016}).
\newblock
\showeprint[arxiv]{cs.PL/1603.08949}


\bibitem[\protect\citeauthoryear{Volpano, Irvine, and Smith}{Volpano
  et~al\mbox{.}}{1996}]%
        {TSInfoFlow}
\bibfield{author}{\bibinfo{person}{Dennis Volpano}, \bibinfo{person}{Cynthia
  Irvine}, {and} \bibinfo{person}{Geoffrey Smith}.}
  \bibinfo{year}{1996}\natexlab{}.
\newblock \showarticletitle{A Sound Type System for Secure Flow Analysis}.
\newblock \bibinfo{journal}{{\em Journal of Computer Security\/}}
  \bibinfo{volume}{4} (\bibinfo{year}{1996}), \bibinfo{pages}{167--187}.
\newblock
Issue 2-3.


\bibitem[\protect\citeauthoryear{Woodruff, Watson, Chisnall, Moore, Anderson,
  Davis, Laurie, Neumann, Norton, and Roe}{Woodruff et~al\mbox{.}}{2014}]%
        {cheri}
\bibfield{author}{\bibinfo{person}{Jonathan Woodruff},
  \bibinfo{person}{Robert~N.M. Watson}, \bibinfo{person}{David Chisnall},
  \bibinfo{person}{Simon~W. Moore}, \bibinfo{person}{Jonathan Anderson},
  \bibinfo{person}{Brooks Davis}, \bibinfo{person}{Ben Laurie},
  \bibinfo{person}{Peter~G. Neumann}, \bibinfo{person}{Robert Norton}, {and}
  \bibinfo{person}{Michael Roe}.} \bibinfo{year}{2014}\natexlab{}.
\newblock \showarticletitle{{The CHERI Capability Model: Revisiting RISC in an
  Age of Risk}}. In \bibinfo{booktitle}{{\em Proceeding of the 41st Annual
  International Symposium on Computer Architecuture}} {\em
  (\bibinfo{series}{ISCA '14})}. \bibinfo{publisher}{IEEE Press},
  \bibinfo{address}{Piscataway, NJ, USA}, \bibinfo{pages}{457--468}.
\newblock
\showISBNx{978-1-4799-4394-4}
\showURL{%
\url{http://dl.acm.org/citation.cfm?id=2665671.2665740}}


\bibitem[\protect\citeauthoryear{Zdancewic}{Zdancewic}{2002}]%
        {zdancewicThesis}
\bibfield{author}{\bibinfo{person}{Stephan~Arthur Zdancewic}.}
  \bibinfo{year}{2002}\natexlab{}.
\newblock {\em \bibinfo{title}{Programming Languages for Information
  Security}}.
\newblock \bibinfo{thesistype}{Ph.D. Dissertation}. \bibinfo{school}{Cornell
  University}.
\newblock


\end{thebibliography}
\end{document}